\documentclass[12pt,letterpaper]{article}
\usepackage{amsmath,amssymb,mathrsfs,amsthm}
\usepackage{microtype}
\usepackage{booktabs}
\usepackage{graphicx}
\usepackage{enumerate}
\usepackage{bbm}
\usepackage{blkarray}

\usepackage{bm}
\usepackage{listings}\usepackage{caption}
\usepackage{subcaption}
\usepackage{econometrics}
\usepackage{pdflscape}

\usepackage{setspace}
\usepackage{multirow}

\usepackage{xr}

\usepackage[authoryear]{natbib}

\usepackage[colorlinks=true,linkcolor=blue,citecolor=blue,backref=page]{hyperref}
\let \backreforig \backref
\renewcommand*{\backref}[1]{[\backreforig{#1}]}

\usepackage[toc,title,titletoc,header]{appendix}

\allowdisplaybreaks

\usepackage[left=1in,right=1in,top=0.8in,bottom=1in]{geometry}

\theoremstyle{definition}

\usepackage{lscape}

\newtheorem{example}{Example}[section]
\newtheorem{definition}{Definition} 
\newtheorem{theorem}{Theorem}[section]
\newtheorem{corollary}{Corollary}[section]
\newtheorem{lemma}{Lemma}[section]

\newtheorem{remark}{Remark}[section]

\newtheorem{assumption}{Assumption}

\newcommand{\indep}{\perp \!\!\! \perp}

\newcommand{\V}{\operatorname{Var}}
\newcommand{\clt}{\stackrel{d}{\rightarrow}}

\usepackage{lineno}
\setpagewiselinenumbers

\usepackage{rotating}

\usepackage{tikz}

\interfootnotelinepenalty=10000

\usetikzlibrary{shapes,decorations,arrows,calc,arrows.meta,fit,positioning}
\tikzset{
-Latex,auto,node distance =1 cm and 1 cm,semithick,
state/.style ={ellipse, draw, minimum width = 0.7 cm},
point/.style = {circle, draw, inner sep=0.04cm,fill,node contents={}},
bidirected/.style={Latex-Latex,dashed},
el/.style = {inner sep=2pt, align=left, sloped}
}

\title{Endogenous Interference in Randomized Experiments }
\author{\large Mengsi Gao\thanks{%
*Department of Economics, University of California Berkeley.
E-mail: mengsi.gao@berkeley.edu.
I am very grateful to my advisors Bryan Graham, Peng Ding, Michael Jansson and Demian Pouzo for their generous support and advice.
I also thank Eric Auerbach, Kirill Borusyak, Yong Cai, Kevin Chen, Patrick Kline, Shuangning Li, Yassine Sbai Sassi and Christopher Walters for valuable comments and discussions. All errors are my own.} \\
\date{
\today 
\medskip  \\
\href{https://mengsigao.github.io/files/JMP_MengsiGao.pdf}{\normalsize \textcolor{blue}{(Click here for the most recent version)}}
}
}

\begin{document}

\maketitle

\onehalfspacing

\thispagestyle{plain}

\pagenumbering{arabic}

\begin{spacing}{1.3}
\begin{abstract}

This paper investigates the identification and inference of treatment effects in randomized controlled trials with social interactions. 
Two key network features characterize the setting and introduce endogeneity: 
(1) latent variables may affect both network formation and outcomes,
and (2) the intervention may alter network structure, mediating treatment effects.
I make three contributions. 
First, I define parameters within a post-treatment network framework, distinguishing direct effects of treatment from indirect effects mediated through changes in network structure. 
I provide a causal interpretation of the coefficients in a linear outcome model.
For estimation and inference, I focus on a specific form of peer effects, represented by the fraction of treated friends.
Second, in the absence of endogeneity, I establish the consistency and asymptotic normality of ordinary least squares estimators. 
Third, if endogeneity is present, I propose addressing it through shift-share instrumental variables, demonstrating the consistency and asymptotic normality of instrumental variable estimators in relatively sparse networks.
For denser networks, I propose a denoised estimator based on eigendecomposition to restore consistency.
Finally, I revisit \cite{Prina2015} as an empirical illustration, demonstrating that treatment can influence outcomes both directly and through network structure changes.

\end{abstract}
\end{spacing}

\medskip
\noindent KEYWORDS: Causal inference, endogeneity, interference, mediation analysis, peer effect, random graph, shift-share instrument variable.

\newpage

\section{Introduction}

Peer effects have been extensively studied in the economics literature. However, identifying these effects can be challenging without randomized experiments. 
Recent research has integrated peer effects with randomized controlled trials (RCTs) to improve their identification across various fields, including education, microfinance, public health, agriculture, and social psychology.\footnote{See, for example, \cite{Sacerdote2001a}, \cite{MiguelKremer2004}, \cite{Sobel2006a}, \cite{BanerjeeChandrasekharDuflo2013}, \cite{CaiJanvrySadoulet2015} and \cite{PaluckShepherdAronow2016}}
These studies go beyond direct effects, exploring how interventions spread through social interaction, potentially amplifying or dampening their impact. This introduces a phenomenon known as ``interference.''\footnote{The term ``interference'' originates from the assumption of no interference between individuals in the potential outcomes framework for causal inference \citep{Cox1958, Rubin1980}. It initially referred to how spillover effects disrupted or ``interfered with'' the standard comparison between treated and untreated groups, complicating the estimation of direct treatment effects. However, the concept of interference has evolved, and in many contexts, it now encompasses situations where spillover effects are not just nuisances but are themselves of primary interest.}
The literature on spillover effects often assumes exogenous networks, where no latent variables influence both network formation and outcomes, and the network remains unchanged after the intervention. 
However, empirical evidence suggests that networks can be endogenous even in randomized experiments, as latent variables may influence both network formation and outcomes, and treatment can alter the network structure.
Motivated by these concerns, this paper investigates the identification and inference of causal effects in RCTs with interference, accounting for both sources of endogeneity.

Many empirical studies suggest that network structures are influenced by latent variables and evolve in response to interventions.
For example, \cite{Prina2015} conducted an RCT offering savings accounts to villagers in Nepal and found significant network changes before and after the intervention.
Similarly, \cite{BanerjeeBrezaChandrasekhar2023} found that introducing formal financial institutions in Indian villages reduced social connections, as access to formal institutions diminished the need for social ties, demonstrated through both an observational study and an RCT.
\cite{BarnhardtFieldPande2017} analyzed a housing lottery program designed to improve housing situations but found that it increased isolation from family and exacerbated financial insecurity. 
Likewise, \cite{CarrellSacerdoteWest2013} conducted a group formation experiment to improve low-ability students' performance, but the intervention backfired as low-ability students formed stronger bonds with similar peers, worsening their outcomes.

In empirical studies, researchers often rely on pre-intervention network data to fit regressions or control for unobserved confounders \citep{CarterLaajajYang2021a}. 
However, this approach can be problematic. 
When post-intervention networks ultimately influence outcomes, relying on pre-intervention networks introduces measurement error. 
Conversely, using post-intervention networks, which are shaped by both treatment and latent variables, introduces endogeneity issues, even in randomized experiments.
Specifically, treatment-induced changes in the network make it challenging to disentangle the direct effects of treatment from the indirect effects mediated by changes in network structure.
Separating these effects is crucial for understanding the mechanisms of the intervention and designing effective policies. 
Furthermore, latent variables complicate the identification and estimation of causal effects, potentially leading to biased and inconsistent estimates of causal effects.
This paper demonstrates a novel approach to utilizing panel network data for separating and consistently estimating these effects by fitting the regression with post-intervention network data while constructing instrumental variables from pre-intervention network data.

To account for network evolution induced by the intervention, I apply the mediation analysis framework \citep{Pearl2001, HeckmanPinto2015} to distinguish the direct effects of treatment from the indirect effects mediated by changes in network structure.
Researchers often use {effective treatment}, also known as {exposure mapping}, for dimensionality reduction \citep{Manski2013, AronowSamii2017}. 
This low-dimensional statistic captures interference patterns by mapping units, treatment assignments, and network structures to the exposures each unit receives.
In my context, since the network is a post-treatment variable, I treat the exposure mapping as a mediator transmitting the indirect effect of the treatment on the outcome.
I define the causal parameters within this framework, distinguishing between the direct effect of the treatment on the outcome and the indirect effect mediated by changes in the network. 
I assume a linear outcome model that incorporates both the treatment and the network mediator while allowing for additive and flexible forms of unobserved confounding.
In this model, the coefficients have clear causal interpretations, and the framework can be applied to any mediator. 

For estimation and inference, I focus on a specific mediator defined as the fraction of treated friends.
This approach aligns with the anonymous interference assumption from \cite{HudgensHalloran2008a} and \cite{Manski2013}.
Consequently, the linear model considered in this paper represents a special case of the linear-in-means (LIM) model \citep{Manski1993a, BlumeBrockDurlauf2015a}, focusing on contextual peer effects, which captures the mean impact of friends' treatments.
Within this LIM framework, the endogeneity issue arises only when both conditions are met: the network depends on the treatment but is not mean-independent, and an unobserved confounder is present.
Although this paper focuses on a specific form of peer effects, the discussion of the SSIV relevance condition offers valuable insights into other types of peer effects.
 
When no endogeneity is present, I propose using OLS estimation with post-intervention network data.
I demonstrate the consistency and asymptotic normality of the OLS estimators across different levels of network sparsity.
With the fraction mediator, I handle the diminishing variation of the regressor in denser networks and find that treatment-induced network changes can increase sample variation in the fraction and enhance the convergence rate.
Additionally, I demonstrate that the standard heteroskedasticity-consistent variance estimator ensures valid inference, even with dependent regressors and nonstandard convergence rates in estimation.

When unobserved confounders are present, I address endogeneity using shift-share (or ``Bartik'') instrumental variables (SSIV) \citep{Bartik1991}. 
The SSIV mitigates endogeneity by combining a set of shocks using exposure share weights.
This paper extends the shift-share design to the network setting by constructing the IV as a combination of the random shock (others' treatment assignments) and non-exogenous exposure (pre-intervention network structure), following \citet{BorusyakHull2023}. 
I provide a thorough examination of the conditions required for a valid IV, with particular emphasis on the relevance condition across varying levels of sparsity.
I demonstrate that the relevance condition for SSIV fails as networks become denser, as increased overlap in friendships reduces variation in SSIV across units, which limits its effectiveness as an instrument for the endogenous network mediator.
Specifically, I show that IV estimators using SSIV are consistent in relatively sparse networks. 
In denser networks, where IV estimators using SSIV may be inconsistent, I adapt the approach of \cite{LiWager2022} to propose an eigendecomposition-based denoised SSIV estimator for consistent estimation.
Furthermore, I establish the asymptotic normality of the IV estimators using both SSIV and modified SSIV and provide consistent variance estimators that account for unit dependencies introduced by the SSIV, ensuring valid inference.

I present simulation evidence supporting my theoretical results for both OLS and IV estimators using SSIV and modified SSIV. 
These results are derived in an asymptotic framework where networks are modeled as random graphs with varying sparsity rates. 
The estimation results across various sample sizes and sparsity levels confirm the asymptotic results of my theoretical analysis.

For empirical illustration, I revisit \cite{Prina2015}, which offered access to formal savings accounts to a random sample of female household heads in 19 villages in Nepal. My results demonstrate that IV estimation effectively addresses the endogeneity issue arising from unobserved confounders. 
Results from IV estimation and the OLS estimates in \citet{Prina2015} can differ in both sign and significance when the indirect effect is significant. 
Additionally, I find that the intervention influences various outcomes through multiple channels: some are directly impacted by the treatment, while others operate through changes in the network mediator, which captures patterns of interference.

\subsection*{Related Literature}

First, it builds on the peer effects literature, which explores how individuals' outcomes are shaped by their peers' behaviors, actions, or characteristics. The standard empirical framework for peer effects is the LIM model, which assumes that agents are influenced by the average actions of their peers. The seminal work of \citet{Manski1993a} formalizes the reflection problem, highlighting the difficulty of disentangling endogenous peer effects from correlated and contextual influences within the LIM framework.
\citet{BramoulleDjebbariFortin2009b} extend this line of inquiry by characterizing the identification conditions for peer effects in network settings, providing crucial insights for empirical applications. In a critical review of the literature, \citet{Angrist2014} scrutinizes various econometric methods and empirical studies on peer effects, offering sharp critiques of existing approaches and underscoring key challenges in this field.

The literature outlines four broad strategies for identifying peer effects while accounting for correlated effects: random peers, random shocks, structural endogeneity, and panel data
\citep{BramoulleDjebbariFortin2020}.
Researchers estimate causal peer effects by studying contexts with randomly assigned peers, through natural or artificial experiments \citep{Sacerdote2001a, CarrellSacerdoteWest2013, DeGiorgiPellizzariRedaelli2010a}. 
Identification relies on the fact that random assignment ensures an agent's characteristics, both observed and unobserved, are uncorrelated with those of her peers.
When peers are not random, researchers seek to identify peer effects using other sources of exogenous variation, such as randomized interventions or quasi-random experiments \citep{DieyeDjebbariBarrera-Osorio2014, NicolettiSalvanesTominey2018, DeGiorgiFrederiksenPistaferri2020, ArduiniPatacchiniRainone2020}.
In the absence of a clear source of exogenous variation, researchers have developed structural frameworks to address correlated effects and identify peer effects in networks \citep{Goldsmith-PinkhamImbens2013b, Graham2015, Graham2017, HsiehLee2016a, HsiehLeeBoucher2019, JohnssonMoon2019}, employing methods such as Bayesian approaches and control function techniques.
Combined with panel data, the inclusion of individual fixed effects enables researchers to control for agents' time-invariant unobserved characteristics, helping to address issues of correlated effects \citep{ArcidiaconoFosterGoodpaster2012, DeGiorgiFrederiksenPistaferri2020, ComolaPrina2021, dePaulaRasulSouza2023}.

This paper contributes to the peer effects literature by leveraging randomized treatment and panel network data in a novel way for identification and estimation, accounting for treatment-induced network changes. 
Specifically, I combine randomized treatments and pre-intervention networks to construct SSIV, instrumenting endogenous variables from the post-intervention network.
This approach is conceptually similar to the use of lagged variables as instruments in panel data models, as developed by \cite{ArellanoBond1991} and later expanded by \cite{BunSarafidis2015}. 
By incorporating pre-intervention networks into the SSIV framework, I aim to address potential endogeneity issues while capturing the dynamic evolution of network structures over time.
The most related work is \citet{ComolaPrina2021}, who study interventions that impact network structure.
They allow for endogenous peer effects but assume conditional exogeneity of pre- and post-intervention networks, addressing different endogeneity sources.
They adapt ``lagged'' partner characteristics as instruments, focusing on identification conditions rather than asymptotic properties.
Other recent studies explore dynamic networks, often focusing on edge-level variables. \citet{Auerbach2022} develops a test to assess how treatments like social programs or trade shocks impact network link formation, while \citet{AuerbachCai2023} examine social disruption by analyzing the formation and dissolution of network connections in response to policy using a random or quasi-random assignment framework.

Second, I contribute to the shift-share IV literature by demonstrating the relevance condition of SSIV across a broad range of network sparsity regimes.
Shift-share specifications are increasingly common in many contexts, including labor, public, development, macroeconomics, international trade, and finance; see \cite{Card2009}, \cite{AutorDornHanson2013}, \cite{NakamuraSteinsson2014}, \cite{BourveauSheZaldokas2020} and \cite{Breuer2022}.
There are two main approaches to identification in the SSIV literature.
\cite{Bartik1991} and \cite{Goldsmith-PinkhamSorkinSwift2020} suggest identification based on the exogeneity of the exposure shares.
In contrast, \cite{BorusyakHullJaravel2022} suggest identification through the quasi-random assignment of shocks, which allows for endogenous exposure shares, as do \cite{AdaoKolesarMorales2019} and \cite{BorusyakHull2023}.
\cite{AdaoKolesarMorales2019} investigate inference in shift-share regression designs and develop new results for the inference that remain valid even in the presence of arbitrary cross-regional correlation in the regression residuals.
Their findings suggest that cluster-robust standard errors, commonly reported in such settings, may lead to the overrejection of the null hypothesis.
\cite{BorusyakHull2023} extend the SSIV idea to the nonlinear case, where multiple sources of variation are combined according to a known formula. 
However, these papers impose a key condition for the relevance of SSIV to hold. 
Specifically, most observations are primarily exposed to a small number of shocks influencing treatment. 
In this paper, I characterize the regimes in which the relevance condition holds, ensuring that IV estimators with SSIV lead to consistent estimation.
I also establish a connection to the work of \cite{LiWager2022}, noting that their estimator for indirect effects fundamentally employs the SSIV approach to address the endogeneity issue of unobserved confounding. 

Third, I contribute to the interference literature by accounting for stochastic and treatment-induced network.
The concept of ``interference'' challenges the ``stable unit treatment value assumption'' (SUTVA), a foundational principle of classical causal inference that assumes no interference between units \citep{Cox1958, Rubin1980, ImbensRubin2015}.
The existing literature on estimating treatment effects under interference primarily follows a design-based approach \citep{HudgensHalloran2008a, AronowSamii2017, AbadieAtheyImbens2020, Leung2022, GaoDing2023}.
It makes no assumptions about outcome models and network formation, and inference is based on random treatment assignment. 
Given the stochastic nature of networks, it is natural to question how this affects inference.
Recent studies have started exploring the impact of network stochasticity by modeling network graphs as realizations from an (unknown) graphon.
For example, \cite{Leung2020} analyzes nonparametric and regression estimators for treatment and spillover effects in sparse networks, while \cite{LiWager2022} examines the asymptotics of treatment effect estimation under network interference, allowing for arbitrary dependencies on unobserved latent variables.
However, these studies do not consider how treatment-induced changes in network structure further influence causal effects.
Many studies on interference assume partial interference, with units divided into clusters where interference occurs only within each cluster \citep{Sobel2006a, HudgensHalloran2008a, TchetgenVanderWeele2012, LiuHudgensSaul2019}. 
In contrast, this paper examines a single large network, allowing for arbitrary interference patterns.

Our paper also relates to the literature on mediation analysis \citep{VanderweeleHongJones2013, Pearl2001, HeckmanPinto2015, ChengGuoLiu2022} by treating exposure mapping as a mediator, accounting for post-intervention network changes. 
Previous studies propose regression estimators for latent mediation but often lack rigorous asymptotic theory or causal interpretation \citep{CheJinZhang2021, LiuJinZhang2021, DiMariaAbbruzzoLovison2022}. \cite{HayesFredricksonLevin2023} consider latent network positions while excluding interference, while \cite{Sweet2018}, \cite{SweetAdhikari2020}, and \cite{GuhaRodriguez2021} treat entire networks as mediators.

\paragraph*{Organization of the paper}
In Section \ref{sec:setup}, I introduce the framework, define the parameters of interest, and provide the identification results.
In Section \ref{sec:OLS}, I discuss the endogeneity issue associated with the fraction mediator and the common practice of using pre-intervention network data to mitigate it. 
In Section \ref{sec:IV}, I explore estimation across various cases.
I first present the asymptotic properties of OLS estimators in the absence of endogeneity.
I then account for unobserved confounders and employ SSIV for estimation, analyzing its asymptotic properties. 
Additionally, I propose a modification to SSIV for cases where the network gets denser.
Section \ref{sec:MC} presents the results of the Monte Carlo simulations. 
Section \ref{sec:app} illustrates the results in an empirical application based on the RCT in \cite{Prina2015}.
Section \ref{sec:conclusion} offers the concluding remarks. 
The appendix of the paper collects all of the proofs, as well as some intermediate results.

\paragraph*{Notation} 
I use $O(), O_{\mathbb{P}}(), o_{\mathbb{P}}(), \asymp, \succ, \prec, \succcurlyeq, \preccurlyeq$ in the following sense: 
$a_n = O(b_n)$ if $|a_n| \leq C b_n$ for $n$ large enough; 
$X_n = O_{\mathbb{P}}(b_n)$, if for any $\delta > 0$, there exists $M, N > 0$, s.t. $\mathbb{P}[|X_n| \geq Mb_n] \leq \delta$ for any $n > N$; 
$X_n = o_{\mathbb{P}}(b_n)$, if $\lim \mathbb{P}[|X_n| \geq \varepsilon b_n] \rightarrow 0$ for any  $\varepsilon > 0$;
$a_n \asymp b_n$ if there exists $k_1, k_2>0$ and $n_0$, s.t. for all $n > n_0$, $k_1 a_n \leq b_n \leq k_2 a_n$;
$a_n \succ b_n$ if $\lim a_n/b_n = \infty$;
$a_n \prec b_n$ if $\lim a_n/b_n = 0$;
$a_n \succcurlyeq b_n$ if $b_n = O(a_n)$;
$a_n \preccurlyeq b_n$ if $a_n = O(b_n)$.
Let $1_{n-1}$ represent a vector of ones with $n - 1$ elements, and $0_{n-1}$ represent a vector of zeros with $n - 1$ elements.
Let $\| \cdot \|_{\textup{op}}$ denote the operator norm. Define $E_{w_i}(\cdot)$ as the expectation taken over the marginal distribution of $w_i$. Let $X_{-i}$ represent the set $\{X_j\}_{j \neq i}$. The abbreviation ``i.i.d.'' stands for ``independent and identically distributed.''

\section{Setup} \label{sec:setup}

\subsection{Framework and Notation}\label{sec:Framework}
I consider an RCT with $n$ participants. 
For each participant $i\in\{1,\dots,n\}$, let $Y_i \in \mathbb{R}$ denote the observed outcome of interest, $T_i \in \{0,1\}$ denote the treatment assignment where $T_i \overset{}{} \sim \text{Bernoulli}(\pi)$ for some $0<\pi<1$, and $w_i$ denote the unobserved covariates.
I consider a single large network, allowing for an arbitrary form of interference within it. 
The network structure is represented by an adjacency matrix $A = \{A_{ij}\}_{i,j=1}^n$, where the $(i,j)$th entry $A_{ij} \in \{0,1\}$ indicates whether units $i$ and $j$ are connected. 
The adjacency matrix is assumed to be undirected (symmetric), unweighted (binary values), and has no self-links ($A_{ii} = 0$).
I differentiate between two network structures: 
$A^\text{pre}$, which represents the network observed before the intervention, and $A^\text{post}$, which corresponds to the network observed after the intervention. I assume the researchers observe both $A^\text{pre}$ and $A^\text{post}$.

Figure \ref{fig:model_dag} illustrates the causal mechanism considered in this paper.\footnote{I use the graph for illustration without using formal graphical language.}
I assume that the post-intervention network $A^{\text{post}}$, influenced by the treatment vector, plays a critical role in determining the outcome. 
Consequently, the exposure mapping, also known as the effective treatment, which maps $\{T_i\}_{i=1}^n$ and $A^{\text{post}}$ to some low-dimensional statistics, serves as a mediator through which the treatment indirectly affects the outcome.
I denote this mediator as $M_i$.
Specifically, I assume that $T_i$ has both a direct effect on $Y_i$ and an indirect effect through changing network mediator $M_i$. 
The network $A^{\text{post}}$ is shaped by both the treatments and the latent variable $w_i$, which may also influence the outcome. 
As a result, $w_i$ acts as a confounder between the mediator and the outcome. 
The treatments of other units $T_{-i}$ affect $Y_i$ exclusively through the network mediator $M_i$.

\begin{figure}
\centering
\begin{tikzpicture}
\node[draw, circle, minimum size=1cm, text centered] (t1) {$T_i$};
\node[draw, circle, minimum size=1cm, above = 0.8 of t1, text centered] (t2) {$T_{-i}$};
\node[draw, circle,  minimum size=1cm, above = 2 of t1, text centered] (wj) {$w_{-i}$};
\node[draw, circle,  minimum size=1cm, right = 2.5 of t1, text centered] (m)
{$M_i$};
\node[draw, circle, minimum size=1cm, right = 2.5 of m, text centered] (y) {$Y_i$};
\node[draw, circle, minimum size=1cm, above = 2 of m, text centered] (x) {$w_i$};
\draw[->, line width=1] (t1) -- node[above, font=\footnotesize]{} (m);
\draw[-> , line width= 1] (t2) -- node[above, font=\footnotesize]{} (m);
\draw[-> , line width= 1 ] (wj) -- node[above, font=\footnotesize]{ } (m);
\draw [->, line width= 1] (m) -- node[above, font=\footnotesize]{ } (y);
\draw[->,line width= 1] (x) --node[right, font=\footnotesize]{}(m);
\draw[->,line width= 1] (x) -- node[right, font=\footnotesize]{}(y);
\draw[-> , line width=1 ] (t1) to [out=315,in=225, looseness=0.5] node[above, font=\footnotesize]{ } (y) ;
\end{tikzpicture}
\caption{Causal mechanism.}
\label{fig:model_dag}
\end{figure}
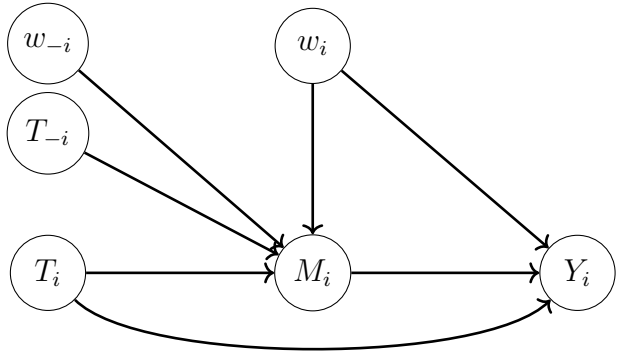

Mediators can take various forms. For example:
\begin{equation}
M_i = \frac{\sum_{j=1}^n A_{ij}^{\text{post}} T_j}{\sum_{j=1}^n A_{ij}^{\text{post}}}, 
\label{eq:Mi}
\end{equation}
which measures the fraction of treated friends after the intervention. This mediator depends solely on the number of (treated) friends, regardless of their identity. 
It is a specific form of the anonymous interference assumption proposed by \cite{HudgensHalloran2008a}, also referred to as the anonymous interactions assumption by \cite{Manski2013}.
Other examples of anonymous interference include:
\begin{enumerate}[(1)]
\item $M_i = \sum_{j=1}^n A_{ij}^{\text{post}} T_j$, which measures the total number of treated friends after the intervention;
\item $M_i = 1\{ \sum_{j=1}^n A_{ij}^{\text{post}} T_j > 0\}$, which measures whether there is at least one treated friend after the intervention.
\end{enumerate}

\subsubsection*{Potential Value Notation}
To define the causal parameters of interest, I introduce the potential value notations for the mediator and the outcome, following the mediation analysis literature \citep{RobinsGreenland1992, Pearl2001, HeckmanPinto2015}. 
First, I consider a hypothetical intervention on the treatment vector $t$ and define the potential values of the mediators for unit $i$ corresponding to this intervention as:
\begin{equation*}
\left\{ M_i(t): t \in \{0,1\}^n \right\},
\end{equation*}
Note that potential mediator $M_i(t)$ depends on the treatment vector $t$ in two ways: it is a direct function of $t$ and the network structure $A^{\text{post}}$, which implicitly depends on $t$ as well.
I then consider a hypothetical intervention on both $t_i$ and $m_i$. 
Define the potential outcomes corresponding to the interventions on $t_i$ and $m_i = M_i(t_i, t_{-i})$ for unit $i$ as:
\begin{equation*}
\left\{ Y_i(t_i, m_i): t_i \in \{0,1\}, m_i \in \mathcal{M} \right\},
\end{equation*}
where $\mathcal{M}$ contains all possible values of $m_i$.
This notation reflects that the potential outcome $Y_i(t_i, m_i)$ depends not only on $t_i$, the treatment assignment of unit $i$, but also on the network changes induced by the intervention, as captured by the function $M_i$. 
Additionally, it indicates that the treatment assignments of others influence the potential outcome of individual $i$ only indirectly through the mediator $M_i$.
I also define the nested potential outcomes corresponding to an intervention on $t_i$ and $m_i = M_i(t_i', t_{-i})$ as:
\begin{equation*}
\left\{ Y_i(t_i, M_i(t_i', t_{-i})): t_i \in \{0,1\}, t_i' \in \{0,1\}, t_{-i} \in \{0,1\}^{n-1} \right\}.
\end{equation*}
The notation $Y_i(t_i, M_i(t_i', t_{-i}))$ represents the hypothetical outcome when the treatment is set to level $t_i$ and the mediator is set to its potential value $M_i(t_i', t_{-i})$, corresponding to the treatments $t_i'$ for unit $i$ and $t_{-i}$ for the other units.
I allow $t_i$ and $t_i'$ to differ to define counterfactual outcomes where either the treatment level or the mediator level changes, while the other remains fixed. 
This approach enables the separation of the two channels through which the treatment affects the outcome, either directly or indirectly.
The observed values of the mediator $M_i$ and the outcome $Y_i$  are related to their potential values as follows:
\[
M_i = M_i(t_i, t_{-i}) \text{ if } T_i = t_i \text{ and } T_{-i} = t_{-i},
\]
and
\[
Y_i = Y_i(t_i, m_i) \text{ if } T_i = t_i \text{ and } M_i = m_i.
\]

\begin{remark}\label{remark:Y}
The interference literature typically defines potential outcomes as a function of the treatment vector \citep{HudgensHalloran2008a}:
\begin{equation*}
\left\{ Y_i(t_i, t_{-i}): t_i \in \{0,1\}; t_{-i} \in \{0,1\}^{n-1} \right\},
\end{equation*}
which does not account for how an individual's treatment affects their exposure to others in the network, thereby indirectly influencing the outcome.
Specifically, it cannot capture the scenario where the treatment or mediator levels change while the other remains fixed, i.e., when $t_i \ne t_i'$.
The expression $Y_i(t_i, M_i(t_i', t_{-i}))$ nests $Y_i(t_i, t_{-i})$ as a special case when $t_i = t_i'$ and $M_i = t_{-i}$.
In the causal graph shown in Figure \ref{fig:model_dag}, the arrow from $T_i$ to $M_i$ would be omitted if the treatment does not affect the network.
\end{remark}

\subsubsection*{Data-Generating Process for Networks}
I specify that the relationship between unit $i$ and $j$ before and after the intervention is determined by the following random graph models:
\begin{align}
A_{ij}^\text{pre} 
&= {1}\left\{\eta_{ij} \leq q_n^\text{pre} g^\text{pre}(w_i, w_j) \right\} 
{1}\{j \neq i\}, 
\label{eq:pre}  \\
A_{ij}^\text{post} 
&= {1}\left\{\eta_{ij} \leq q_n^\text{post} g^\text{post}(w_i, w_j, T_i, T_j)\right\} 
{1}\{j \neq i\},
\label{eq:post}
\end{align}
where $\{w_i\}_{i=1}^n$ are the i.i.d. latent variables and 
$\{\eta_{ij}\}_{i,j=1}^n$ is a symmetric matrix of unobserved scalar disturbances with upper diagonal entries that are mutually independent. 
I allow the sparsity parameters, $q_n^\text{pre}$ and $q_n^\text{post}$, to differ after the intervention for generality. 
The pre-intervention graphon, $g^\text{pre}$, depends on the latent variable of the pairs, $w_i$ and $w_j$, while the post-intervention graphon, $g^\text{post}$, also incorporates the treatment assignments $T_i$ and $T_j$ in forming links. 
Since the idiosyncratic term $\eta_{ij}$ and the latent variable $w_i$ are not time-varying, the framework to account for cases where the network remains unchanged after the intervention, with $g^\text{pre} = g^\text{post}$.\footnote{The assumption that $\eta_{ij}$ is not time-varying can be relaxed. Specifically, $\eta_{ij}$ in the pre- and post-intervention networks can be treated as independent draws from the same underlying distribution or as draws from correlated distributions. The conjecture is that the SSIV constructed based on $A^{\text{pre}}$ may be weaker, as additional exogenous shocks reduce its relevance.}
Changes in the network are primarily attributed to the intervention, while exogenous shocks unrelated to the intervention are reflected in the difference between $g^\text{pre}$ and $g^\text{post}$.
Thus, the network formation described in \eqref{eq:pre} and \eqref{eq:post} captures the features that network formation is driven by latent variables and may evolve over time, with changes induced by the intervention.

\begin{remark}
The latent variable $w_i$ may be either a scalar or a vector, with the key requirement being a common or correlated component that influences both the pre- and post-intervention networks. This relationship ensures that the pre-intervention network is predictive of the post-intervention network, allowing it to serve as an instrument.
\end{remark}

The distribution of \(\eta_{ij}\) is not separately identified from the graphons and sparsity parameters, so it is typically normalized to follow a standard uniform distribution. 
As a result, $q_n^{\diamond} g^{\diamond}$ represents the probability that a given pair is connected, for $\diamond \in \{\textup{pre}, \textup{post}\}$.
The graphon-based network model in \eqref{eq:pre} and \eqref{eq:post}, which takes the form of an inhomogeneous Erdős-Rényi graph, is motivated by Aldous-Hoover Theorem on exchangeable arrays \citep{Aldous1981, LovaszSzegedy2006,BickelChen2009} and has recently gained considerable attention in econometrics \citep{GaoLuZhou2015, ZhangLevinaZhu2017, Graham2020b, PariseOzdaglar2023}. 
Recent studies have also employed this model to account for randomness in network formation; 
see \cite{Auerbach2022a}, \cite{Cai2022}, and \cite{LiWager2022}. 
The sparsity parameters $q_n^{\diamond}$, for $\diamond \in \{\textup{pre}, \textup{post}\}$, serve as theoretical tools, (possibly) driving the probability of any pair being connected to zero as $n \to \infty $. 
The goal of this paper is to derive theoretical results across a broad range of sparsity rates, encompassing the following three cases:
\begin{enumerate}[(a)]
\item $q_n^{\diamond} \asymp n^{-1}$ (bounded degree graph): 
the total number of friends for each unit remains bounded and not vanishing in expectation;

\item $\lim_{n\to\infty} q_n^{\diamond} = 0$ and $\lim_{n\to\infty} n q_n^{\diamond} = \infty$ (sparse graph): the probability of any pair forming a connection decreases as the sample size increases, while the total number of friends for each unit would increase in expectation;

\item $q_n^{\diamond} \asymp 1$ (dense graph): 
the total number of friends for each unit is approximately of order $n$ in expectation, growing at the same rate as the sample size.
\end{enumerate}

I summarize the assumption on the network models below.

\begin{assumption}\label{asu:network}
The latent variables $\{w_i\}_{i=1}^n$ are i.i.d. and independent of $\{\eta_{ij}\}_{i,j=1}^n$, where $\eta_{ij} \overset{\text{i.i.d.}}{\sim } U[0,1]$ for $j>i$ and $\eta_{ij} = \eta_{ji}$.
The networks are randomly generated according to equations \eqref{eq:pre} and \eqref{eq:post}, where $g^\text{pre}$ is a symmetric measurable function of $w_i$ and $w_j$, and $g^\text{post}$ is a symmetric measurable function of $(T_i, w_i)$ and $(T_j, w_j)$, both mapping into $[0,1]$.
The sparsity parameters satisfy 
$n^{-1} \preccurlyeq q_n^{\diamond} \preccurlyeq 1$,
for $\diamond \in \{\textup{pre}, \textup{post}\}$.
\end{assumption}

\subsection{Parameters of Interest and Identification}

Now I define the causal parameters of interest in my setting.
\begin{definition}\label{def:parameter}
\begin{enumerate}[(1)]
For $t\in \{0,1\}$, define
\item the average \textit{total effect} $\textup{ToE}$ of unit $i$'s treatment as
\begin{equation}
\textup{ToE} = E\left[Y_i\left(1, M_i\left(1, T_{-i}\right) \right) - Y_i\left(0, M_i\left(0, T_{-i}\right) \right)\right];
\label{eq:TE}
\end{equation}
\item the average \textit{direct effect} $\textup{DE}(t)$ of unit $i$'s treatment as
\begin{equation}
\textup{DE}(t) 
= E\left[Y_i\left(1, M_i\left(t, T_{-i}\right) \right)-Y_i\left(0, M_i\left(t, T_{-i}\right) \right)\right];
\label{eq:DE}
\end{equation}
\item the average \textit{indirect effect} $\textup{IE}(t)$ of unit $i$'s treatment as
\begin{equation}
\textup{IE}(t) 
= E\left[Y_i\left(t, M_i\left(1, T_{-i}\right) \right) - Y_i\left(t, M_i\left(0, T_{-i}\right)  \right)\right];
\label{eq:IE}
\end{equation}
\item the average \textit{spillover effect} $\textup{SE}(t)$ as
\begin{equation}
\textup{SE}(t)
= E\left[Y_i\left(t, M_i\left(t_i = t, t_{-i} = 1_{n-1} \right)  \right) - Y_i\left(t, M_i\left(t_i = t, t_{-i} = 0_{n-1} \right) \right)\right]. 
\label{eq:SE}
\end{equation}
\end{enumerate}
\end{definition}

The average \textit{total effect} $\textup{ToE}$ measures the average overall effect of changing the treatment of unit $i$ on that unit's outcome.
I then decompose the $\textup{ToE}$ into two channels: one without changes in the mediator and the other through changes in the mediator.
The average \textit{direct effect} $\textup{DE}(t)$ captures the former, measuring the average effect of changing the treatment of unit $i$, while keeping the mediator at the level it would have taken if $T_i$ had been set to $t$.
The average \textit{indirect effect} $\textup{IE}(t)$ captures the latter, measuring the average effect of changing the treatment of unit $i$ solely through its impact on the mediator while keeping the intervention on unit $i$ at level $t$.
The average \textit{spillover effect} $\textup{SE}(t)$ measures the effect on unit $i$'s outcome of assigning everyone else to the treatment group versus the control group, while holding unit $i$'s treatment status fixed at level $t$.\footnote{The $\textup{SE}(t)$ in \eqref{eq:SE} is one way to define spillover effects. 
Alternatively, spillover effects can be defined as the change in unit $i$'s outcome resulting from differences between any two distinct intervention programs through variations in the treatment of others.}
In defining these causal effects, the expectation is taken over all sources of randomness, including potential outcomes and the treatment assignments of others.

\begin{remark}
The existing literature on interference typically defines the direct and indirect effects of a binary treatment, based on the potential outcome notation in Remark \ref{remark:Y}. For example, \cite{HuLiWager2022b} defined the average direct effect of a binary treatment as:
\[
\tau_{\textsc{ade}} 
= \frac{1}{n} \sum_{i=1}^n 
{E} \left[ Y_i(t_i = 1, T_{-i}) - Y_i(t_i = 0, T_{-i}) \right],
\]
and the average indirect effect as:
\begin{equation}
\tau_{\textsc{aie}} 
= \frac{1}{n} \sum_{i=1}^n \sum_{j \neq i} 
{E}\left[ Y_j(t_i = 1, T_{-i}) - Y_j(t_i = 0, T_{-i}) \right].  
\label{eq:indirect}
\end{equation}
The estimand $\tau_{\textsc{ade}}$ corresponds to the $\textup{ToE}$ in \eqref{eq:TE}. To better understand the intervention mechanism, I further decompose the effect of $T_i$ on $Y_i$ into the $\textup{DE}(t)$ in \eqref{eq:DE} and the $\textup{IE}(t)$ in \eqref{eq:IE}, which operates through changes in the mediator.
\end{remark}

The mediation formula relies on the following assumption.
\begin{assumption}\label{ass:uncon}
For all $i=1,\cdots,n$,
\begin{enumerate}[{(a)}]
\item $ (T_i, T_{-i}) \indep Y_{i}\left(t_i, m_i \right)$ for all $t_i \in \{0,1\}$ and $m_i \in \mathcal{M}$;
\item $ (T_i, T_{-i}) \indep M_i(t_i, t_{-i}) $ for all $t_i \in \{0,1\}$ and $t_{-i} \in \{0,1\}^{n-1}$;
\item $M_i \indep Y_i(t_i, m_i) \mid w_i$ for all $t_i \in \{0,1\}$ and $m_i \in \mathcal{M}$;
\item $M_i(t_i', t_{-i}) \indep   Y_{i}\left(t_i, m_i \right)  \mid w_i$ for all $t_i\in \{0,1\}$, $t_i'\in \{0,1\}$, $t_{-i} \in \{0,1\}^{n-1}$ and $m_i \in \mathcal{M}$.
\end{enumerate}
\end{assumption}
Assumption \ref{ass:uncon} is standard in the literature of mediation analysis \citep{Pearl2001}. 
Assumptions \ref{ass:uncon}(a)-(b) assume no treatment-outcome confounding and no treatment-mediator confounding, respectively.
Assumptions \ref{ass:uncon}(a)-(b) hold under experiments with randomized treatment.
Assumption \ref{ass:uncon}(c) assumes that the variable $w_i$ captures all the confounding factors between the mediator and the outcome. 
Assumption \ref{ass:uncon}(d) assumes the cross-world independence between the potential outcomes and potential mediators.
However, Assumption \ref{ass:uncon}(d) can never be validated, as it is impossible to observe both $M_i(t_i', t_{-i})$ and $Y_{i}\left(t_i, m_i \right)$ in any experiment if $t_i \ne t_i'$.

Theorem \ref{thm:identifiaction} expresses the parameters of interest in terms of the observed data up to the unknown distribution of $w_i$.
\begin{theorem}\label{thm:identifiaction}
Under Assumption \ref{ass:uncon},
\begin{enumerate}[(1)] 
\item
$\textup{ToE} 
= E(Y_i \mid T_i=1) - E(Y_i \mid T_i=0)$;
\item 
$\textup{DE}(t)
= E_{w_i}
\left\{ 
E\left[ E\left(Y_i \mid M_i, T_i = 1, w_i \right) - E\left(Y_i \mid M_i, T_i = 0, w_i \right) \mid T_i = t, w_i \right] 
\right\}$;
\item 
$\textup{IE}(t)
= E_{w_i}
\left\{ 
\begin{array}{c}
E\left[ E\left(Y_i \mid M_i, T_i = t, w_i \right) \mid T_i = 1, w_i \right] \\
- E\left[ E\left(Y_i \mid M_i, T_i = t, w_i \right) \mid T_i = 0, w_i \right]
\end{array}\right\} $;
\item 
$\textup{SE}(t) 
= E_{w_i} 
\left\{
\begin{array}{c}
E\left[ E\left(Y_i \mid M_i, T_i = t, w_i \right) \mid T_i = t, T_{-i} = 1_{n-1} , w_i \right] \\
- E\left[ E\left(Y_i \mid M_i, T_i = t, w_i \right) \mid T_i = t, T_{-i} = 0_{n-1}, w_i \right] 
\end{array}
\right\}$.
\end{enumerate}
\end{theorem}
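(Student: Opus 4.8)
The plan is to reduce all four identities to a single \emph{master identity} and then read off each effect by specializing the treatment levels. Throughout I use the \emph{consistency} relations stated after the potential-value definitions: $M_i = M_i(T_i,T_{-i})$ and $Y_i = Y_i(T_i,M_i)$, so that on the event $\{T_i=a\}$ the observed outcome coincides with the nested potential outcome $Y_i\bigl(a,M_i(a,T_{-i})\bigr)$. I also use that full randomization makes the treatment vector $(T_i,T_{-i})$ jointly independent of the latent $w_i$ and of every potential quantity, and that $T_i \indep T_{-i}$ with the $T_j$ being i.i.d.\ Bernoulli$(\pi)$; Assumptions \ref{ass:uncon}(a)--(b) are the pointwise expression of this joint independence.

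\textbf{Master identity.} For any outer level $a\in\{0,1\}$ and inner level $b\in\{0,1\}$ I would establish, conditioning on $w_i$,
\[
E\bigl[Y_i\bigl(a,M_i(b,T_{-i})\bigr)\mid w_i\bigr]
= E\bigl[\,E(Y_i\mid M_i,T_i=a,w_i)\mid T_i=b,w_i\,\bigr].
\]
The derivation proceeds through two ``bridges.'' Conditioning on the realized value $m$ of $M_i(b,T_{-i})$ and invoking the cross-world independence Assumption \ref{ass:uncon}(d) to detach $Y_i(a,m)$ from $M_i(b,T_{-i})$ given $w_i$ yields
\[
E\bigl[Y_i\bigl(a,M_i(b,T_{-i})\bigr)\mid w_i\bigr]
=\sum_{m} E\bigl[Y_i(a,m)\mid w_i\bigr]\,P\bigl(M_i(b,T_{-i})=m\mid w_i\bigr).
\]
The \emph{outcome bridge} uses consistency together with Assumption \ref{ass:uncon}(c) (to drop the conditioning on $M_i$) and randomization of $T_i$ (to drop the conditioning on $T_i=a$) to show $E[Y_i(a,m)\mid w_i]=E(Y_i\mid M_i=m,T_i=a,w_i)$. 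The \emph{mediator bridge} uses consistency and Assumption \ref{ass:uncon}(b), after averaging $T_{-i}$ over its marginal (here $T_{-i}\indep(T_i,w_i)$ is what lets me re-insert the conditioning event $T_i=b$), to show $\sum_{m}P(M_i(b,T_{-i})=m\mid w_i)$ recombines the right-hand side into the stated nested conditional expectation.

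With the master identity in hand, each part follows by a choice of $(a,b)$ followed by the outer expectation $E_{w_i}$. Part (1): the outer and inner levels coincide, so $Y_i(1,M_i(1,T_{-i}))$ is exactly the observed $Y_i$ on $\{T_i=1\}$ by consistency, and randomization gives $E[Y_i(1,M_i(1,T_{-i}))]=E(Y_i\mid T_i=1)$, and likewise for $t=0$; no $w_i$ bookkeeping is needed. Part (2) is the difference of the master identity at $(a,b)=(1,t)$ and $(0,t)$; Part (3) is the difference at $(t,1)$ and $(t,0)$; Part (4) is the analogue of Part (3) in which the inner treatment of others is held at the degenerate vectors $1_{n-1}$ and $0_{n-1}$ rather than drawn from $T_{-i}$, so the mediator bridge terminates at $P(M_i=m\mid T_i=t,T_{-i}=1_{n-1},w_i)$ (resp.\ $0_{n-1}$) and the outer conditional expectation is taken on $T_{-i}=1_{n-1}$ (resp.\ $0_{n-1}$).

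The main obstacle is the careful justification of the two bridges in the presence of the nested, cross-world structure. The factorization step genuinely requires Assumption \ref{ass:uncon}(d), the non-testable cross-world condition, while the same-world detaching inside the outcome bridge rests on Assumption \ref{ass:uncon}(c); keeping these two uses distinct is where mistakes are easy. The mediator bridge requires threading the marginalization over the high-dimensional $T_{-i}$ through both consistency and Assumption \ref{ass:uncon}(b) while leaving the $w_i$-conditioning intact. The bookkeeping that $(T_i,T_{-i})$ is independent of $(w_i,\text{potential outcomes},\text{potential mediators})$ \emph{jointly}---not merely pointwise as written in Assumption \ref{ass:uncon}(a)--(b)---is what legitimizes interchanging the summation over $t_{-i}$ with the conditional expectations, and this is the step I would write out most carefully.
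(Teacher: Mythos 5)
Your proposal is correct and follows essentially the same route as the paper's proof: the paper's equation \eqref{eq:Y_t_t'} is exactly your master identity $E[Y_i(t_i,M_i(t_i',T_{-i}))\mid w_i]=E\{E(Y_i\mid M_i,T_i=t_i,w_i)\mid T_i=t_i',w_i\}$, derived by the same expansion over mediator values, the same use of Assumption \ref{ass:uncon}(d) to factor, of (a)/(c) and (b) for your outcome and mediator bridges, and the same marginalization over $T_{-i}$ (held at $1_{n-1}$ or $0_{n-1}$ for the spillover effect). The four parts are then read off by the same specializations of $(t_i,t_i')$, so no substantive difference remains.
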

If the parameter of interest is solely $\textup{ToE}$, it can be identified, and consistent estimators can be obtained by regressing the outcome on the treatment indicator with an intercept, or equivalently, by using the difference-in-means estimator, provided the treatment is randomly assigned.
However, if the focus is on distinguishing between the direct and indirect effects, or when the spillover effect is of interest, these causal effects are identified only up to the unknown distribution of $w_i$.

Following the classical Baron-Kenny method \citep{BaronKenny1986}, I assume that the potential outcome is linear in the treatment and mediator, and additive with respect to any form of $w_i$, as stated in Assumption \ref{asu:linearY}, which implies Assumptions \ref{ass:uncon}(c)-(d). However, I relax the assumption that the mediator $M_i$ is a linear function of $T_i$ and $w_i$.

\begin{assumption}\label{asu:linearY}
Assume the following (partially) linear model for the potential outcome:
\begin{equation}
Y_i(t_i, m_i) = \beta_0 + \beta_1 t_i + \beta_2 m_i + \lambda(w_i) + \varepsilon_i, 
\label{eq:linearY}
\end{equation}
where $\lambda(\cdot)$ is an unknown measurable function with $E(\lambda(w_i)) = 0$. 
Additionally, assume that $\{\varepsilon_i\}_{i=1}^n$ are i.i.d., $E(\varepsilon_i^2) < \infty$, and $E(\varepsilon_i \mid T_i, T_{-i}, A^{\text{post}}) = 0$.
\end{assumption}

Under Assumption \ref{asu:linearY}, the observed outcome also follows the linear model:
\[
Y_i = \beta_0 + \beta_1 T_i + \beta_2 M_i 
+ \lambda(w_i) + \varepsilon_i.
\]
Let $u_i$ denote the error term in the outcome model: $u_i = \lambda(w_i) + \varepsilon_i$.

Although the framework in Section \ref{sec:setup} applies to general forms of mediators, Sections \ref{sec:OLS} and \ref{sec:IV} focus on the estimation and inference with the mediator specified in \eqref{eq:Mi}.
I follow the convention of $0/0=0$.
I focus on $M_i$  in \eqref{eq:Mi} for several reasons. First, different forms of mediators exhibit varying levels of dependency and require different proofs. 
The magnitude of $M_i$  in \eqref{eq:Mi} is normalized and remains bounded as networks become denser and sample sizes increase, ensuring the outcome does not diverge to infinity under a constant treatment effect.
Second, with this fraction as the mediator, the linear outcome model introduced in Assumption \ref{asu:linearY} becomes a special case of the linear-in-means model \citep{Manski2013}. 
Incorporating endogenous peer effects is also of interest and is left for future work. 

\begin{remark}\label{re:Auerbach}
\cite{Auerbach2022} investigates the identification and estimation of the following partially linear model, with the parameters of interest being $\beta$ and $\lambda(w_i)$: 
\begin{align} 
Y_i 
&= \beta^\top X_i + \lambda(w_i) + \varepsilon_i, \label{eq:Auerbach2022}
\end{align}
where $w_i$ is some unobserved latent variable that also drives the network formation:
\[
A_{ij} 
= {1}\left\{\eta_{ij} \leq g(w_i, w_j)\right\} 
{1}\{j \neq i\}. 
\]
\cite{Auerbach2022a} focuses on the dense network, where the sparsity parameter equals $1$.
The regressor $X_i$ in \eqref{eq:Auerbach2022} corresponds to $(1, T_i, M_i)$ in my setting.
\cite{Auerbach2022a} uses a matching approach based on network data $A_{ij}$ to identify and estimate $\beta$ and $\lambda(w_i)$. However, this method relies on dense networks and sufficient variation in the regressors after controlling for the latent variable.
In my context, which involves exogenous shocks from the randomized treatment and does not prioritize identifying the latent variable part $\lambda(w_i)$, I tackle the endogeneity issue from the unobserved confounder using the IV method with SSIV.
\end{remark}

Corollary \ref{cor:identifiaction} simplifies the statement of Theorem \ref{thm:identifiaction} under Assumption \ref{asu:linearY}, and provides the causal interpretation of the coefficients, which is analogous to the Baron-Kenny formulas for mediation \citep{BaronKenny1986}.

\begin{corollary}\label{cor:identifiaction}
Under Assumption \ref{asu:linearY}, then
\begin{enumerate}[(1)]
\item 
$\textup{DE}(1) = \textup{DE}(0) = \textup{DE} = \beta_1$;
\item 
$\textup{IE}(1) = \textup{IE}(0) = \textup{IE} = \beta_2 \cdot \left\{ E(M_i \mid T_i=1) - E(M_i \mid T_i=0) \right\}$;
\item 
$\textup{ToE} = \beta_1 + \beta_2 \cdot \left\{ E(M_i \mid T_i=1) - E(M_i \mid T_i=0)\right\}$;
\item 
$\textup{SE}(t) 
= \beta_2 \cdot \left\{ E(M_i \mid T_i=t, T_{-i} = 1_{n-1}) - E(M_i \mid T_i=t, T_{-i} = 0_{n-1})\right\}$.
\end{enumerate}
\end{corollary}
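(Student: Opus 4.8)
The plan is to work directly from Definition~\ref{def:parameter}, substituting the potential-outcome model \eqref{eq:linearY} into each difference of potential outcomes. The decisive structural feature is that $\beta_0$, $\lambda(w_i)$, and $\varepsilon_i$ enter \eqref{eq:linearY} additively and do not depend on the hypothetical intervention $(t_i,m_i)$; hence they cancel pairwise inside every difference defining a causal effect, \emph{before} any expectation is taken. This is precisely what makes the corollary a ``simplification'' of Theorem~\ref{thm:identifiaction}, and it is cleaner to exploit the cancellation at the level of potential outcomes than to push the linear model through the nested conditional expectations of Theorem~\ref{thm:identifiaction}, which would force me to control $E(\varepsilon_i \mid M_i, T_i, w_i)$ rather than the given $E(\varepsilon_i \mid T_i,T_{-i},A^{\textup{post}})=0$.

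First I would dispatch the direct effect. Plugging \eqref{eq:linearY} into \eqref{eq:DE}, both potential outcomes carry the same mediator argument $m_i = M_i(t,T_{-i})$, so in addition to the confounders the term $\beta_2 m_i$ also cancels, leaving $Y_i(1,M_i(t,T_{-i})) - Y_i(0,M_i(t,T_{-i})) = \beta_1$ identically for every $t$; taking expectations gives $\textup{DE}(1)=\textup{DE}(0)=\beta_1$. For the indirect effect, substituting into \eqref{eq:IE} cancels $\beta_1 t$ and the confounders and leaves $\beta_2\,[M_i(1,T_{-i}) - M_i(0,T_{-i})]$, whence $\textup{IE}(t) = \beta_2\, E[M_i(1,T_{-i}) - M_i(0,T_{-i})]$, manifestly independent of $t$. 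The total effect then follows by adding the two (or directly from \eqref{eq:TE}), and the spillover effect from \eqref{eq:SE} reduces analogously to $\beta_2\, E[M_i(t,1_{n-1}) - M_i(t,0_{n-1})]$.

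What remains in each case is to re-express an expected \emph{potential} mediator as a conditional mean of the \emph{observed} mediator, and this is the only place where the assumptions do real work. For the spillover effect the step is immediate: since the treatment profile is fixed at $T_{-i}=1_{n-1}$ (resp.\ $0_{n-1}$), Assumption~\ref{ass:uncon}(b) together with the consistency relation $M_i = M_i(t_i,t_{-i})$ on $\{T_i=t_i,T_{-i}=t_{-i}\}$ yields $E[M_i(t,1_{n-1})] = E(M_i \mid T_i=t, T_{-i}=1_{n-1})$, giving part~(4). The indirect and total effects need the identity $E[M_i(t_i,T_{-i})] = E(M_i \mid T_i=t_i)$ with $T_{-i}$ \emph{random}, which I regard as the main obstacle. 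I would prove it by conditioning on $T_{-i}$: Assumption~\ref{ass:uncon}(b) replaces $M_i(t_i,t_{-i})$ on the event $\{T_i=t_i,T_{-i}=t_{-i}\}$ by its marginal mean $E[M_i(t_i,t_{-i})]$, and then the i.i.d.\ Bernoulli design ($T_i \indep T_{-i}$) makes the conditional law of $T_{-i}$ given $T_i=t_i$ coincide with its marginal, so averaging recovers the unconditional $E[M_i(t_i,T_{-i})]$. Substituting these identifications back yields parts~(2) and~(3), and the representation $\textup{ToE} = \textup{DE} + \textup{IE}$ provides a consistency check on the algebra.
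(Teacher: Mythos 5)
Your proof is correct, but it takes a genuinely different route from the paper's. The paper proves the corollary by pushing the linear model through the nested conditional-expectation formulas of Theorem~\ref{thm:identifiaction}: it computes the conditional effects $\textup{DE}(w)$, $\textup{IE}(w)$, $\textup{ToE}(w)$, $\textup{SE}(t;w)$ by substituting $E(Y_i\mid M_i=m_i,T_i=t_i,w_i=w)=\beta_0+\beta_1 t_i+\beta_2 m_i+\lambda(w)$ into the sums over $m_i$, observes that the $\beta_0+\lambda(w)$ terms cancel between the two probability measures being differenced, and then marginalizes over the distribution of $w$. You instead cancel $\beta_0$, $\lambda(w_i)$, and $\varepsilon_i$ (and, for the direct effect, $\beta_2 m_i$) pathwise at the level of potential outcomes in Definition~\ref{def:parameter}, before any expectation is taken, and only afterwards invoke Assumption~\ref{ass:uncon}(b), the consistency relation $M_i=M_i(t_i,t_{-i})$ on $\{T_i=t_i,T_{-i}=t_{-i}\}$, and the independence of $T_i$ and $T_{-i}$ to convert $E[M_i(t_i,T_{-i})]$ into $E(M_i\mid T_i=t_i)$ (both sides equal $\sum_{t_{-i}}E[M_i(t_i,t_{-i})]\,\textup{Pr}(T_{-i}=t_{-i})$, so that step is sound). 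What each approach buys: the paper's derivation reuses Theorem~\ref{thm:identifiaction} wholesale and so literally reads as a "simplification" of an already-established identification result; yours bypasses Theorem~\ref{thm:identifiaction} entirely and makes transparent that, once the outcome is additively separable in $(t_i,m_i)$ and the confounders, only the treatment--mediator unconfoundedness in Assumption~\ref{ass:uncon}(b) and the randomized design do any work --- the outcome-side conditions \ref{ass:uncon}(a), (c), (d) are never needed. The one presentational caveat is that the corollary as stated lists only Assumption~\ref{asu:linearY} as a hypothesis, so you should make explicit (as the paper does implicitly by routing through Theorem~\ref{thm:identifiaction}) that the randomization of $T$ and Assumption~\ref{ass:uncon}(b) are also being used in the mediator-identification step.
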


Under Assumption \ref{asu:linearY}, where the treatment $T_i$ does not interact with the mediator $M_i$, $\textup{DE}(t)$ and $\textup{IE}(t)$ do not depend on the value of $t$.
Furthermore, Corollary \ref{cor:identifiaction} shows that $\beta_1$ captures the direct effect of the treatment, while the indirect and spillover effects depend on $\beta_2$, scaled by the magnitude of changes in the mediator in response to changes in $T_i$ and $T_{-i}$.
When the post-intervention network $A^{\text{post}}$ is uncorrelated with the treatment, the mediator does not respond to changes in $T_i$, resulting in a zero indirect effect and making the total effect equivalent to the direct effect.
However, the spillover effect would still be present.
Additionally, with $M_i$ in \eqref{eq:Mi}, $ E(M_i \mid T_i=t, T_{-i} = 1_{n-1}) - E(M_i \mid T_i=t, T_{-i} = 0_{n-1}) = 1$, indicating that $\beta_2$ captures the spillover effect $\textup{SE}(t)$ defined in \eqref{eq:SE}. 
Corollary \ref{cor:identifiaction} justifies the emphasis on the estimation and inference of $\beta_1$ and $\beta_2$ in this paper.

To summarize, I account for two key features of networks when estimating $\beta$'s:
\begin{enumerate}[(1)]
\item There exists some latent variable $w_i$ that influences both network formation, and consequently the mediator, as well as the outcome of interest.

\item Treatment assignments affect how units are exposed to others in the network, with the exposure mapping serving as a mediator for the treatment’s effect on the outcome. 

\end{enumerate}
As a result, even in a randomized setting, if the causal effects of interest involve the network, such as the $\textup{IE}(t)$ in \eqref{eq:IE} and the $\textup{SE}(t)$ in \eqref{eq:SE}, endogeneity becomes a concern, potentially biasing the estimation of $\beta_2$. 
Given the correlation between $T_i$ and $M_i$, this bias in estimating $\beta_2$ can further affect the estimation of $\beta_1$.

\subsection{Motivating Examples}

I present three empirical examples illustrating how social network structures can endogenously evolve in response to interventions, potentially undermining their intended effects.

\begin{example}\label{ex:Prina2015}
\cite{Prina2015} conducted an RCT to assess the impact of offering access to formal savings on households' financial situations in 19 villages in Nepal with 915 households.
Half of the female household heads were randomly offered the savings accounts, while the other half were not.
The financial links were measured by asking survey questions such as ``who did you exchange loans or gifts with?''
The network experienced considerable reshuffling, despite the total number of links remaining nearly constant (328 at baseline, 329 at endline). 
Of these, only 73 links persisted from baseline, while 255 links were broken after the intervention, and 256 new links were formed by the endline.
Importantly, the distribution of treat-treat, control-control, and treat-control pairs among these link types was imbalanced. 
Of the persisting links, 26 were treat-treat pairs, 17 were control-control pairs, and 30 were treat-control pairs. 
Among the broken links, 68 were treat-treat, 74 were control-control, and 113 were treat-control. 
In contrast, among the newly formed links, 78 were treat–treat, 53 were control-control, and 125 were treat-control.
I will revisit this study in Section \ref{sec:app} as an empirical illustration.
\end{example}

\begin{example}\label{ex:Banerjee2023}
\cite{BanerjeeBrezaChandrasekhar2023} investigated the effect of introducing formal financial institutions on informal lending practices and social networks through studies in two distinct settings. 
In their first study, they analyzed the non-random introduction of microfinance in 43 out of 75 villages in Karnataka, India. 
They found that social networks contracted more in villages where microfinance was introduced. 
To validate these findings, they conducted an RCT in Hyderabad, India, where 52 out of 104 neighborhoods were randomly selected for microfinance introduction. 
Similar to the first setting, these neighborhoods also experienced a reduction in social connections, even among households initially unlikely to borrow.
Together, these studies suggest that access to microfinance reduces the incentive to maintain and form social links, affecting both borrowers and non-borrowers alike.
\end{example}

\begin{example}\label{ex:Carrell2013}

\cite{CarrellSacerdoteWest2013} examined the impact of group randomization on academic performance at the United States Air Force Academy, with the hypothesis that low-ability students would benefit from exposure to high-ability peers.
Incoming freshmen were randomly assigned to treatment or control groups. The control group followed the usual distribution of abilities, while the treatment group paired low-ability students with high-ability peers and placed middle-ability students in separate squadrons.
Contrary to expectations, the intervention had a significantly negative effect on the academic performance of low-ability students. 
While several factors could explain this outcome, the study found that the endogenous sorting of roommates, study partners, and friends evolved differently in the treatment group compared to the control group. 
Specifically, low-ability students in the treatment group saw a notable increase in the number of low-ability study partners and friends. 
\end{example}

Examples \ref{ex:Prina2015}-\ref{ex:Carrell2013} also highlight concerns about latent variables that influence network formation.
Example \ref{ex:Prina2015} demonstrates that transfers were linked to treatment households having more assets and greater financial inclusion at baseline, suggesting that households with more resources and higher socioeconomic status may also increase transfers to others.
Example \ref{ex:Banerjee2023} suggests that access to microfinance reduces borrowers' incentives to maintain connections, leading even non-participants to scale back efforts to form links. This is partly because individuals connected to potential borrowers see diminished returns from these relationships.
Example \ref{ex:Carrell2013} indicates that the increase in low-ability study partners and friends was not merely due to the higher proportion of low-ability peers in the treatment group but also reflected a pattern of homophily, as students gravitated toward others with similar abilities.

\medskip

\section{OLS estimation} \label{sec:OLS}
In this section, I begin with discussing the properties of $M_i$ and the conditions under which endogeneity arises in Section \ref{sec:Mi}. 
In Section \ref{sec:ovb}, I  discuss the bias resulting from the common practice in empirical studies of collecting and using only pre-intervention network data.
When no endogeneity arises, I explore the asymptotic properties of OLS estimators in Section \ref{sec:ols}. 

\subsection{Property of $M_i$}\label{sec:Mi}
Combined with the form of mediator in \eqref{eq:Mi}, the outcome model of interest is
\[
Y_i = \beta_0 + \beta_1 T_i + \beta_2 \frac{\sum_{j=1}^n A_{ij}^\text{post} T_j }{\sum_{j=1}^n A_{ij}^\text{post} } + \lambda(w_i) + \varepsilon_i.
\]
In addition to assuming anonymous interference, this model assumes ``no second- or higher-order effects,'' meaning the treatment of units beyond direct connections has no impact.

The properties of the estimator depend critically on the characteristics of $M_i$.
The regressor $M_i$ is dependent across units, and its fractional form raises natural concerns about whether it provides sufficient variation, potentially leading to (near) multicollinearity issues.
To better understand the properties of $M_i$, I decompose it as $M_i = \xi_i + r^*_i$, where $\xi_i$ is defined as the ratio of the conditional expectations of the numerator and denominator of $M_i$ in \eqref{eq:Mi}:
\begin{equation}
\xi_i = \frac{\mathbb{E}(A_{ij}^\text{post} T_j \mid T_i, w_i)}{\mathbb{E}(A_{ij}^\text{post} \mid T_i, w_i)}
= P(T_j = 1 \mid A_{ij}^\text{post} = 1, T_i, w_i)
\label{eq:xi}
\end{equation}
and $r^*_i$ is some remainder term.
Note that $\xi_i$ represents the conditional probability that a neighbor of individual $i$ (i.e., a connected individual $j$) is treated, given $i$'s own characteristics $w_i$ and treatment status $T_i$.
Throughout the paper, I present the results based on the following two cases, depending on whether $\xi_i$ is degenerate or not:
\begin{enumerate}[(a)]
\item $\xi_i$ is an i.i.d. variable across $i$ with constant variance, implying $\V(\xi_i)>0$;\footnote{The term $\xi_i$ does not depend on $n$ because the sparsity rate $q_n^\text{post}$ cancels out between the numerator and denominator.}

\item $\xi_i = \pi$, implying $\V(\xi_i)=0$.
\end{enumerate}

Case (b) occurs  when $A_{ij}^\text{post}$ is mean independent of $T_j$ given $T_i$ and $ w_i$, i.e., $E(T_j \mid A_{ij}^\text{post}, T_i, w_i) = E(T_j \mid T_i, w_i) = \pi$.
This case encompasses scenarios where the network remains unchanged after the intervention (i.e., $A^{\text{pre}} = A^{\text{post}}$), undergoes exogenous changes unrelated to the treatment, or when $A_{ij}^{\text{post}}$ depends on $T_j$ but is mean-independent.
It has two important implications for discussion.
First, in Case (b), no endogeneity issue arises, even with the presence of an unobserved confounder:
\[
E\left[ M_i u_i \right] 
= E\left[ \frac{\sum_{j=1}^n A_{ij}^{\text{post}} T_j}{\sum_{j=1}^n A_{ij}^{\text{post}}} u_i \right]
= \pi E\left[ \frac{\sum_{j=1}^n A_{ij}^{\text{post}}}{\sum_{j=1}^n A_{ij}^{\text{post}}} u_i \right]
= 0.
\]
The equality holds because $A_{ij}^\text{post}$ is conditionally mean independent of $T_j$, and Assumption \ref{asu:network} implies that $A_{ik}^\text{post}$ is independent of $T_j$ for any $k \neq j$.
This property, resulting from the row normalization of the mediator, is also documented by \cite{DieyeDjebbariBarrera-Osorio2014} and applied in the construction of the normalized SSIV \citep{Card2009, AutorDornHanson2013, PeriShihSparber2015, AdaoKolesarMorales2019, Goldsmith-PinkhamSorkinSwift2020, BorusyakHullJaravel2022}.
Second, the consistency condition and convergence rate of OLS estimators depend on whether $\xi_i$ is degenerate. 
Treatment-induced network changes introduce greater variation in the fraction mediator, potentially improving the convergence rate.
By comparing the results of Cases (a) and (b), I evaluate how the network's response to the intervention influences inference.

\subsection{Bias from OLS estimation with pre-intervention network} \label{sec:ovb}
In empirical studies, researchers often assume the network remains unchanged and use only pre-intervention data for estimation. Alternatively, even when acknowledging possible network changes, they avoid post-intervention data due to endogeneity concerns and rely on pre-intervention data to control for unobserved confounders \citep{CarterLaajajYang2021a}.
However, using $A^\text{pre}$ instead of $A^\text{post}$ in the regression model introduces omitted variable bias (OVB) and impedes the identification and consistent estimation of direct, indirect, and spillover effects.

Define ${M}^\text{pre}_i$ as the mediator calculated using $A^\text{pre}$, i.e., 
${M}^\text{pre}_i = \sum_{j=1}^n A_{ij}^\text{pre} T_j / \sum_{j=1}^n A_{ij}^\text{pre}$. 
Researchers typically estimate a linear regression with the following regressors:
\[
{X}^\text{pre}_i = (1, T_i, {M}^\text{pre}_i).
\]
Let $\beta^\text{pre}$ denote the vector of population coefficients from the above OLS fit.
By the argument of OVB,
\begin{align*}
\beta_1^\text{pre} 
=& \beta_1 + \beta_2 \cdot \frac{ \textup{Cov}(T_i, M_i) }{ \textup{Var}(T_i)}
= \textup{ToE}
\text{ and }
\beta_2^\text{pre} 
= \beta_2 \cdot \frac{ \textup{Cov}({M}^\text{pre}_i, M_i) }{ \textup{Var}({M}^\text{pre}_i) }.
\end{align*}
This implies that if the mediator $M_i$ is uncorrelated with the treatment $T_i$, there is no OVB in $\beta_1^\text{pre}$. 
However, when a correlation exists between $T_i$ and $M_i$, $\beta_1^\text{pre}$ captures the total effect of $T_i$ on $Y_i$, encompassing both the direct effect and the indirect effect through changes in the mediator.
The coefficient $\beta_2^\text{pre}$ represents the attenuated indirect effect, which remains uncontaminated by $\beta_1$.
If researchers are only interested in the total effect $\textup{ToE}$, fitting an OLS regression with ${X}^\text{pre}_i$ using the pre-intervention network can still yield the desired result. 
However, this approach cannot distinguish between direct and indirect effect channels and fails to recover the spillover effect.

\subsection{Asymptotic properties of OLS estimators}\label{sec:ols}

As discussed in Section \ref{sec:Mi}, no endogeneity arises when either $\V\left( \xi_i \right) > 0$ with $\lambda(w_i) = 0$, or $\V\left( \xi_i \right) = 0$.
In such cases, I propose using OLS estimation with the regressor vector $X_i$ to estimate $\beta$:
\[
X_i = (1, T_i, M_i).
\]
Let $\hat{\beta}^{\textsc{ols}}$ denote the vector of coefficients obtained from the above OLS fit.

\begin{theorem}\label{thm:consistency_ratio}
Under Assumptions \ref{asu:network} and \ref{asu:linearY}, 
\begin{enumerate}[(a)]
\item if $\V\left( \xi_i \right) > 0$ and $\lambda(w_i)=0$, then 
$\hat{\beta}^{\textsc{ols}} - \beta = O_{\mathbb{P}}(n^{-1/2})$;
\item if $\V\left( \xi_i \right) = 0$, then 
$\hat{\beta}^{\textsc{ols}}_0 - \beta_0 = O_{\mathbb{P}}\left(\sqrt{q_n^\text{post}}\right)$, $\hat{\beta}^{\textsc{ols}}_1 - \beta_1 = O_{\mathbb{P}}(n^{-1/2})$ and $\hat{\beta}^{\textsc{ols}}_2 - \beta_2 = O_{\mathbb{P}}\left(\sqrt{q_n^\text{post}}\right)$.
\end{enumerate}
\end{theorem}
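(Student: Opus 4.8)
The plan is to write the OLS error in sandwich form $\hat\beta^{\textsc{ols}} - \beta = Q_n^{-1} g_n$, where $Q_n = \tfrac{1}{n}\sum_{i} X_i X_i^\top$ is the Gram matrix and $g_n = \tfrac{1}{n}\sum_i X_i u_i$ is the score, and to analyze the two factors separately. The central device is the decomposition $M_i = \xi_i + r_i^*$ around \eqref{eq:xi}, which linearizes the ratio regressor: $\xi_i$ is a function of $(T_i,w_i)$ alone and carries the non-vanishing (case (a)) or degenerate (case (b)) part of the variation, while the remainder $r_i^* = M_i - \xi_i$ is a mean-zero fluctuation of order $\V(r_i^*)^{1/2}\asymp (n q_n^\text{post})^{-1/2}$ driven by the realized treatments of $i$'s neighbors. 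In both cases the score has mean zero: under case (a) because $\lambda(w_i)=0$ makes $u_i=\varepsilon_i$ conditionally mean-independent of $(T_i,M_i)$ by Assumption \ref{asu:linearY}; under case (b) by the row-normalization argument of Section \ref{sec:Mi} giving $\mathbb{E}[M_i u_i]=0$ even when $\lambda(w_i)\neq 0$.

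For case (a) I would first show $Q_n \lln Q$ with $Q$ positive definite. Since $\{\xi_i\}$ are i.i.d. with $\V(\xi_i)>0$, the regressor $M_i$ keeps non-degenerate variation in the limit, so $Q$ is invertible and $Q_n^{-1}=O_{\mathbb{P}}(1)$, while $r_i^*$ contributes only $o_{\mathbb{P}}(1)$ to the Gram matrix. It then remains to bound $g_n = O_{\mathbb{P}}(n^{-1/2})$. Here I would control the variance of each coordinate of $\tfrac{1}{n}\sum_i X_i u_i$ using the conditional independence structure: the summands form a locally dependent, exchangeable array whose dependence runs only through shared network neighbors, so the number of correlated pairs is controlled and $\V(g_n)=O(n^{-1})$. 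Combining gives $\hat\beta^{\textsc{ols}}-\beta = O_{\mathbb{P}}(n^{-1/2})$.

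For case (b) the difficulty is that $\xi_i\equiv\pi$ makes $M_i=\pi+r_i^*$ nearly collinear with the intercept, since $\V(M_i)\asymp (nq_n^\text{post})^{-1}\to 0$. I would handle this by reparametrizing $(\beta_0,\beta_2)\mapsto(\beta_0+\pi\beta_2,\ \beta_2)$: the coefficient on the constant, $\beta_0+\pi\beta_2$, is estimated at the ordinary $n^{-1/2}$ rate, whereas $\beta_2$ is identified only through the shrinking variation of $r_i^*$. By Frisch–Waugh, $\hat\beta_2-\beta_2 = \frac{\sum_i \tilde M_i u_i}{\sum_i \tilde M_i^2}$, where $\tilde M_i$ is $M_i$ residualized on $(1,T_i)$; since residualization removes only a (near-)constant, the denominator obeys $\sum_i \tilde M_i^2 \asymp \sum_i (r_i^*)^2 \asymp 1/q_n^\text{post}$. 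The coefficient $\beta_1$ behaves differently: $T_i$ has non-vanishing variance and is asymptotically orthogonal to the nearly-constant $M_i$, so $\hat\beta_1-\beta_1$ reduces to a randomized-treatment contrast converging at $n^{-1/2}$. Back-substitution then yields the stated $\sqrt{q_n^\text{post}}$ rate for $\hat\beta_0$ and $\hat\beta_2$ and $n^{-1/2}$ for $\hat\beta_1$ (in the dense regime $q_n^\text{post}\asymp 1$ this is merely $O_{\mathbb{P}}(1)$, signalling the loss of consistency).

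The main obstacle is bounding the numerator $\sum_i \tilde M_i u_i$ under network-induced dependence. The key manoeuvre is to reindex by the fundamental independent shocks, the neighbors' treatments: writing $r_i^* = \sum_j (A_{ij}^\text{post}/D_i)(T_j-\pi)$ with $D_i=\sum_j A_{ij}^\text{post}$ and exchanging the order of summation gives $\sum_i r_i^* u_i = \sum_j (T_j-\pi)\, h_j$ with $h_j = \sum_i (A_{ij}^\text{post}/D_i)\, u_i$. Because the $T_j$ are mutually independent and (in case (b)) enter the network only in a mean-independent way, the variance collapses to $\pi(1-\pi)\sum_j \mathbb{E}[h_j^2]$ up to lower-order corrections, turning a messy quadratic form into a sum over independent shocks. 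The delicate step is the moment bound $\mathbb{E}[h_j^2]\asymp (nq_n^\text{post})^{-1}$, which needs degree concentration ($D_i\asymp nq_n^\text{post}$) and graphon integrability to show that the systematic, confounder-driven component of $h_j$ transmitted through co-neighbor overlap does not dominate the idiosyncratic component — this is precisely where the no-endogeneity structure of case (b) must be exploited. Granting this bound, $\sum_i r_i^* u_i = O_{\mathbb{P}}\big((q_n^\text{post})^{-1/2}\big)$, and dividing by the denominator $\asymp 1/q_n^\text{post}$ delivers $\hat\beta_2-\beta_2 = O_{\mathbb{P}}(\sqrt{q_n^\text{post}})$. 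I expect this variance calculation, together with verifying that the denominator concentrates at order $1/q_n^\text{post}$, to be the crux; the Gram-matrix and reparametrization steps are comparatively routine.
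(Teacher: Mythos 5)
Your architecture matches the paper's: the decomposition $M_i=\xi_i+r_i^*$ around \eqref{eq:xi}, an invertible limiting Gram matrix plus a $\sqrt n$-rate score in case (a), and in case (b) an explicit treatment of the near-singular design. Your Frisch--Waugh reparametrization is an equivalent (and cleaner) packaging of what the paper does by writing out the cofactors of $(X^\top X)^{-1}$ and tracking $a_{22}\asymp (nq_n^{\text{post}})^{-1}$, $a_{23}=O_{\mathbb{P}}(n^{-1}(q_n^{\text{post}})^{-1/2})$; your reindexing $\sum_i r_i^* u_i=\sum_j (T_j-\pi)h_j$ is an exact identity in case (b) and is the same device the paper uses for the SSIV in Lemma \ref{lemma:Ziphii}. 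Two secondary imprecisions: $r_i^*$ is not exactly mean-zero given $(T_i,w_i)$ (the paper centers the second-order Taylor remainder by $\mu_{r_1,i}=O_{\mathbb{P}}((nq_n^{\text{post}})^{-1})$, which is harmless for rates but needed for the bookkeeping), and the identity $\V\bigl(\sum_j(T_j-\pi)h_j\bigr)=\pi(1-\pi)\sum_j E[h_j^2]$ requires more than case (b)'s mean-independence, since $h_j$ can still depend on $T_j$ through $A^{\text{post}}$ and the $h_j$'s share the $u_i$'s.

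The genuine gap is the step you yourself flag as the crux and then assert: $E[h_j^2]\asymp (nq_n^{\text{post}})^{-1}$. Writing $h_j=\sum_i (A_{ij}^{\text{post}}/D_i)u_i$, the diagonal contribution is indeed of that order, but the off-diagonal contribution over pairs $i\neq k$ of common neighbors of $j$ is of order $n^2\cdot (q_n^{\text{post}})^2/(nq_n^{\text{post}})^2\cdot E\bigl[\lambda(w_i)\lambda(w_k)\,g^{\text{post}}(i,j)g^{\text{post}}(k,j)\bigr]=O(1)$, and the expectation here is a positive-semidefinite quadratic form in $\lambda$ that is generically strictly positive whenever $\lambda\neq 0$ and the graphon is correlated with it. Case (b)'s defining property is mean-independence of $A^{\text{post}}_{ij}$ from $T_j$; it says nothing about the correlation between the network and $\lambda(w)$, so it cannot kill this term. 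Consequently your claimed bound holds as stated only when $\lambda(w_i)=0$ (then $u_i=\varepsilon_i$ is independent of the network and the cross terms vanish exactly); for general $\lambda$ you must either supply the cancellation argument for the co-neighbor term or restrict the claim. This second-moment control of $\sum_i r_i^* u_i$ under network dependence is precisely the content of the paper's Lemmas \ref{lemma:airi_ri^2}, \ref{lemma:Eri^c} and \ref{lemma:Cov_order} --- i.e.\ the bulk of the actual proof --- so deferring it leaves the hardest part of the theorem unproved.
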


Theorem \ref{thm:consistency_ratio} establishes the consistency conditions of $\hat{\beta}^{\textsc{ols}}$ in the absence of endogeneity under Cases (a) and (b), respectively.
Under Case (a), the convergence rate is the standard $\sqrt{n}$ and does not depend on the sparsity rate $q_n^\text{post}$.
The intuition is that, by the decomposition, $\xi_i$ represents the key term in $M_i$. 
When $\xi_i$ is non-degenerate and thus i.i.d., the estimation reduces to the standard case with i.i.d. data.
Under Case (b), when $\xi_i$ is degenerate, the variation of $M_i$ comes from the remainder term $r_i^*$. 
As the network becomes denser, the dependence among units increases, and the variation of $M_i$ across units decreases, which slows the convergence rate.
This explains why $\hat{\beta}^{\textsc{ols}}_0$ and $\hat{\beta}^{\textsc{ols}}_2$ are consistent only when the network is not dense (i.e., $q_n^\text{post} \prec 1$), with a (possibly) slower convergence rate than the usual rate $\sqrt{n}$, scaled by $\sqrt{ n q_n^\text{post}}$. 
The convergence rate of $\hat{\beta}^{\textsc{ols}}_1$ remains the standard $\sqrt{n}$, since $T_i$ is uncorrelated with the other regressor $M_i$ and thus unaffected by the dependency of $M_i$ across $i$, as in the i.i.d. setting.

\medskip

Next, I study the asymptotic distribution of $\hat{\beta}^{\textsc{ols}}$, focusing on the regimes where it is consistent.
Define $\hat{u}_i^\textsc{ols}$ as the residual from the above OLS fit and stack $X_i$ to form the $n\times 3$ design matrix $X$. 
The variance estimator $\hat{V}^{\textsc{ols}}$ is the standard heteroskedasticity-consistent (HC) variance estimator, defined as:
\begin{equation}
\hat{V}^{\textsc{ols}} = (X^\top X)^{-1} \hat{V}^{\textsc{ols}}_{\text{num}} (X^\top X)^{-1},
\label{eq:HC_ratio}
\end{equation}
where the middle term, $\hat{V}^{\textsc{ols}}_{\text{num}}$, is given by
\[
\hat{V}^{\textsc{ols}}_{\text{num}} = X^\top \text{diag}\left\{(\hat{u}_i^{\textsc{ols}})^2, i = 1, \dots, n\right\} X.
\]
Theorem \ref{thm:asymnormal_ratio} focuses on the regimes of $q_n^{\text{post}}$ under which the OLS estimators are consistent. 

\begin{theorem}\label{thm:asymnormal_ratio}
Suppose either $\V\left( \xi_i \right) > 0$ with $\lambda(w_i)=0$ or $\V\left( \xi_i \right) = 0$ with $q_n^{\text{post}} \prec 1$.
Under Assumptions \ref{asu:network} and \ref{asu:linearY}, then
\[
\left( \hat{V}^{\textsc{ols}} \right)^{-1/2} 
\left(\hat{\beta}^{\textsc{ols}} - \beta \right) \clt \mathcal{N}(0, I_3).
\]
\end{theorem}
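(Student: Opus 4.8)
The plan is to start from the exact least-squares identity $\hat\beta^{\textsc{ols}}-\beta=(X^\top X)^{-1}S_n$, where $S_n=X^\top u=\sum_{i=1}^n X_iu_i$ and $u_i=\lambda(w_i)+\varepsilon_i$, and to reduce the studentized statement to three ingredients: (i) a central limit theorem for the score, $\Omega_n^{-1/2}S_n\clt\mathcal N(0,I_3)$ with $\Omega_n=\V(S_n)$; (ii) a law of large numbers for the Gram matrix, i.e.\ convergence of $D_n^{-1}(X^\top X)D_n^{-1}$ to a nonsingular limit, where $D_n$ collects the coordinate-wise scales already identified in Theorem \ref{thm:consistency_ratio} ($\sqrt n$ in the intercept/$T_i$ directions, and the slower, sparsity-dependent scale in the $M_i$ direction under Case (b)); and (iii) consistency of the meat estimator, $\Omega_n^{-1/2}\hat V^{\textsc{ols}}_{\text{num}}\Omega_n^{-1/2}\lln I_3$. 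Given these, I would finish by the algebraic observation that, writing $B=X^\top X$ and $V_n=B^{-1}\Omega_n B^{-1}$, the statistic $(\hat V^{\textsc{ols}})^{-1/2}(\hat\beta^{\textsc{ols}}-\beta)$ is asymptotically equivalent to $V_n^{-1/2}B^{-1}S_n$, and applying the (deterministic, or probability-limiting) linear map $V_n^{-1/2}B^{-1}$ to the asymptotically normal vector $S_n$ produces a limit whose covariance is exactly $V_n^{-1/2}B^{-1}\Omega_nB^{-1}V_n^{-1/2}=I_3$. This sidesteps any non-commutation of the matrix square roots, and it is precisely the self-normalization that erases the differing coordinate rates of Case (b) and yields the clean identity covariance. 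Ingredients (ii)–(iii) together with Slutsky's theorem then close the argument.

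The heart of the proof, and the main obstacle, is ingredient (i), which I would handle case by case using the decomposition $M_i=\xi_i+r_i^*$ from \eqref{eq:xi} and by separating the two sources of error in $u_i$. The $\varepsilon$-contribution $\sum_iX_i\varepsilon_i$, conditional on $(T,A^{\text{post}},\{w_i\})$, is a sum of independent mean-zero vectors (Assumption \ref{asu:linearY}), so a conditional Lindeberg/martingale CLT applies once the Lindeberg ratio is verified using $E(\varepsilon_i^2)<\infty$ and concentration of the degrees $\sum_jA_{ij}^{\text{post}}$. In Case (a) we have $\lambda(w_i)\equiv0$, so $u_i=\varepsilon_i$, the leading regressor term is the i.i.d.\ variable $\xi_i$, and the score behaves like an essentially i.i.d.\ sum at the $\sqrt n$ scale, with the remainder contribution $\sum_ir_i^*\varepsilon_i$ shown to be lower order via the bounds on $r_i^*$ developed for Theorem \ref{thm:consistency_ratio}. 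In Case (b), $\xi_i=\pi$ is degenerate, so the only variation in the $M_i$-coordinate comes from $\sum_ir_i^*u_i$; here the $\lambda$-part decomposes as $\sum_i(1,T_i,\pi)^\top\lambda(w_i)$, which is an i.i.d.\ sum, plus the genuinely network-dependent term $\sum_ir_i^*\lambda(w_i)$. For this last term I would condition on the i.i.d.\ latent types $(w_i,T_i)$ and the independent edge disturbances $\eta_{ij}$ (Assumption \ref{asu:network}) and run a Hoeffding/Hájek decomposition: the first-order projections vanish because $E[r_i^*\mid w_i,T_i]=\xi_i-\pi=0$, so the variance is driven by the pairwise (shared-neighbor) projections, whose magnitude is governed by $q_n^{\text{post}}$, and a dependency-graph (Stein-type) CLT for locally dependent sums then delivers asymptotic normality. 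Joint normality of the three coordinates follows by the Cramér–Wold device.

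For the variance estimator (ingredient (iii)), the key structural fact is that although the regressors $M_i$ are strongly dependent across $i$, the off-diagonal score covariances $E[X_iX_j^\top u_iu_j]$ with $i\ne j$ are asymptotically negligible relative to the diagonal. The $\varepsilon$-driven cross terms vanish exactly by conditional mean-zero independence; the intercept/$T_i$ cross terms vanish because $u_i=\lambda(w_i)+\varepsilon_i$ are i.i.d.\ across $i$; and the remaining network-induced $\lambda$-driven cross terms in the $M_i$ direction are shown negligible under $q_n^{\text{post}}\prec1$, using the identity $E[r_i^*\lambda(w_i)]=0$ (which is the same mean-independence that delivered $E[M_iu_i]=0$ in Case (b)). Hence $\Omega_n$ reduces asymptotically to its diagonal $\sum_iE[X_iX_i^\top u_i^2]$, which is exactly the population object the heteroskedasticity-consistent $\hat V^{\textsc{ols}}_{\text{num}}$ targets, explaining why the standard HC estimator remains valid despite cross-unit dependence. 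I would then substitute the OLS residual $\hat u_i^{\textsc{ols}}=u_i-X_i^\top(\hat\beta^{\textsc{ols}}-\beta)$, expand the square, and show the estimation-error cross terms are asymptotically negligible relative to $\Omega_n$ using the rates from Theorem \ref{thm:consistency_ratio} and moment bounds on $X_i$.

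I expect the binding difficulty to be the network-dependent CLT for $\sum_ir_i^*\lambda(w_i)$ in Case (b), established uniformly over the admissible range $n^{-1}\preccurlyeq q_n^{\text{post}}\prec1$. One must simultaneously control the concentration of the random denominators $\sum_jA_{ij}^{\text{post}}$ (so the ratio form of $M_i$ linearizes with a uniformly negligible error), identify the correct normalization once the first-order Hájek projections vanish, and bound the sizes of the dependency neighborhoods so that the Stein/higher-order remainders disappear; the threshold $q_n^{\text{post}}\prec1$ is exactly where this local-dependence structure degenerates, which is why the dense regime is excluded. A secondary technical point is verifying that $\Omega_n$ (suitably normalized) stays bounded away from singularity in every regime, so that $\Omega_n^{-1/2}$ is well behaved and the studentization is legitimate; the same off-diagonal-negligibility estimates feed directly into establishing the HC estimator's consistency in Case (b).
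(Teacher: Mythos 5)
Your proposal follows essentially the same route as the paper's proof: the same decomposition $M_i=\xi_i+r_i^*$, a CLT for the score $\sum_iX_iu_i$ (an i.i.d.-type argument in the non-degenerate directions and a Stein/dependency-graph CLT for the locally dependent network part, which is exactly the paper's Theorem \ref{thm:clt}), negligibility of the off-diagonal score covariances to justify the heteroskedasticity-consistent meat, the Gram-matrix limit of Theorem \ref{thm:XX-1}, and the self-normalization/Slutsky step. The only organizational difference is that the paper splits the score CLT by sparsity regime ($nq_n^{\text{post}}\succ1$, where $\tfrac1n\sum_iM_iu_i$ is approximated by an i.i.d.\ average, versus $nq_n^{\text{post}}\asymp1$, where the dependency-graph CLT is invoked directly), whereas you split by Case (a)/(b) and insert a H\'ajek projection.

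One identity in your Case (b) argument is false as written: $E[r_i^*\mid w_i,T_i]$ is not zero. Because $\xi_i$ in \eqref{eq:xi} is the ratio of conditional expectations rather than the conditional expectation of the ratio, the remainder of the Taylor expansion has conditional mean $\mu_{r_1,i}=E(r_{1,i}\mid T_i,w_i)=O_{\mathbb{P}}\bigl(1/(nq_n^{\text{post}})\bigr)$ (see \eqref{eq:mu_r1i}), and the paper accordingly centers a second time, writing $M_i=\xi_i+\mu_{r_1,i}+(r_{0,i}+r_{1,i}-\mu_{r_1,i})$ and carrying $\mu_{r_1,i}$ with the i.i.d.\ part. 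In the regime $nq_n^{\text{post}}\succ1$ your conclusion survives, since $\sum_i\mu_{r_1,i}\lambda(w_i)$ has standard deviation of order $\sqrt{n}/(nq_n^{\text{post}})$, which is of smaller order than the pairwise-projection scale $(q_n^{\text{post}})^{-1/2}$. At the bounded-degree boundary $nq_n^{\text{post}}\asymp1$, however, the two are of the same order $\sqrt{n}$, so the step ``the first-order projections vanish'' would drop a non-negligible contribution to the asymptotic variance. The repair is immediate: retain $\mu_{r_1,i}$ as an additional i.i.d.\ term in the H\'ajek decomposition (or, as the paper does in that regime, apply the dependency-graph CLT to the uncentered sum). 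With that correction the proposal is sound.
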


Theorem \ref{thm:asymnormal_ratio} establishes the asymptotic normality of $\hat{\beta}^{\textsc{ols}}$ and shows that $\hat{V}^{\textsc{ols}}$ closely approximates the asymptotic variance. 
There are two key implications. First, despite the dependence of the regressor $M_i$ across units, the usual HC variance estimator $\hat{V}^{\textsc{ols}}$ in \eqref{eq:HC_ratio} remains valid since the error term $u_i$ is assumed to be i.i.d. 
Second, the normal approximation exhibits a self-normalization property, ensuring that inference based on standard $t$-tests is reliable.
Although the convergence rate of $\hat{\beta}^{\textsc{ols}}$ (potentially) depends on the sparsity parameter $q_n^{\text{post}}$, its asymptotic normality is unaffected by the sparsity level, as the varying order of $\hat{\beta}^{\textsc{ols}}$ is ``canceled out'' by the order of the variance component. 
This result is crucial because the sparsity parameter $q_n^{\text{post}}$ is generally not identified \citep{BickelChenLevina2011}. 
A similar self-normalization property is observed by \cite{Cai2022} in the context of network centrality regression and by \cite{HansenLee2019} in the context of cluster-dependent data.

\section{IV estimation} \label{sec:IV}

In this section, I examine the IV estimation to address the potential endogeneity issue when $\lambda(w_i)\ne 0$.
I study the asymptotic properties of IV estimators using SSIV in Section \ref{sec:SSIV}, showing that they become inconsistent as networks grow denser.
To address this, in Section \ref{sec:IV_PC}, I propose a modification to SSIV to restore consistency in denser networks.

\subsection{Asymptotic properties of SSIV estimators}
\label{sec:SSIV}
Endogeneity issue (potentially) arises when $\lambda(w_i) \ne 0$.
To address this, one needs instruments that are uncorrelated with $u_i$ and shifts $M_i$ enough to identify $\beta_2$. 
I tackle this issue using the shift-share instrumental variable approach.
\cite{BorusyakHull2023} proposed a general approach for constructing SSIVs by combining exogenous shocks with non-exogenous exposure through a known formula, adjusting for expected treatment. 
Applying this approach to my setting, I derive the following linear SSIV for $M_i$: 
\[
Z_i^{\textsc{ssiv}} = \sum_{j=1}^n A_{ij}^\text{pre} (T_j - \pi),
\]
which leverages the treatment assignment of others, $T_j$, as the exogenous shock, weighted by pre-intervention network information as the share.

To estimate $\beta$, I propose using IV estimation with the IV vector ${Z}_i$:
\begin{equation*}
{Z}_i = \left( 1, T_i, \sum_{j=1}^n A_{ij}^\text{pre} (T_j - \pi) \right).
\end{equation*}
Stack ${Z}_i$ to obtain the $n\times 3$ matrix ${Z}$.
Let $\hat{\beta}^{\textsc{iv}}$ denote the vector of the coefficients obtained from the IV fits with ${Z}_i$.

I assess the validity of SSIV, and thus the identification of $\beta$, by verifying two conditions: exogeneity and relevance.
The exogeneity of SSIV holds by construction, due to the random assignment of treatments and the centering of treatment around the assignment probability:
\begin{align*}
E[Z_i^{\textsc{ssiv}} u_i ]
= E\left[ \left(\sum_{j=1}^n A_{ij}^\text{pre} (T_j - \pi) \right) u_i \right]
=& \sum_{j=1}^n E \left( A_{ij}^\text{pre} u_i \right)
E(T_j - \pi) 
= 0.
\end{align*}
The question now turns to the relevance of the SSIV.  
\citet[][Assumption 3]{BorusyakHull2023} assume weak mutual dependence of the SSIV to ensure the convergence of the sample first stage, which, in my model, roughly implies that the network is not too dense.
I thoroughly analyze the relevance condition in my context and characterize the sparsity regimes under which the SSIV provides consistent estimators, as stated in Theorem \ref{thm:IV_consistency}.
In line with \citet{BorusyakHull2023}, I demonstrate that consistency is achieved when both pre- and post-treatment networks are relatively sparse.

\begin{theorem}\label{thm:IV_consistency}
Under Assumptions \ref{asu:network} and \ref{asu:linearY}, 
\begin{enumerate}[(a)]
\item if $\V\left( \xi_i \right) > 0$ with
$\max\{q_n^\text{pre}, q_n^\text{post}\} \prec n^{-1/2}$, 
then
\[
\hat{\beta}^{\textsc{iv}} - \beta 
= O_{\mathbb{P}}\left( \sqrt{n} \max\{q_n^\text{pre}, q_n^\text{post}\} \right);
\]

\item if $\V\left( \xi_i \right) = 0$  
with $ \max\{q_n^\text{pre}, q_n^\text{post} \} \prec \sqrt{n} q_n^\text{post}$, then
\begin{align*}
\hat{\beta}^{\textsc{iv}}_1 - {\beta}_1 
~=~&  O_{\mathbb{P}}\left( \frac{1}{\sqrt{n}} \max\left\{ \frac{ \max\{q_n^\text{pre}, q_n^\text{post} \} }{  \sqrt{ q_n^\text{post}}  }, 1 \right\}  \right).
\end{align*}
Moreover, with $\max\{q_n^\text{pre}, q_n^\text{post}\} \prec n^{-1/2}$, then
\begin{align*}
\hat{\beta}^{\textsc{iv}}_0 - {\beta}_0 
~=~& O_{\mathbb{P}}\left( \sqrt{n} \max\{q_n^\text{pre}, q_n^\text{post}\} \right)
\text{ and }
\hat{\beta}^{\textsc{iv}}_2 - {\beta}_2 
~=~ O_{\mathbb{P}}\left( \sqrt{n} \max\{q_n^\text{pre}, q_n^\text{post}\} \right).
\end{align*}
\end{enumerate}
\end{theorem}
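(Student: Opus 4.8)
The plan is to start from the exact just-identified representation $\hat\beta^{\textsc{iv}} - \beta = (Z^\top X)^{-1} Z^\top u$ and analyze the $3\times 3$ Gram matrix $Z^\top X$ and the score $Z^\top u$ separately. Because the coordinates live on different scales — the constant and $T_i$ are $O_{\mathbb{P}}(1)$ while the shift-share coordinate $Z_i^{\textsc{ssiv}} = \sum_j A_{ij}^\text{pre}(T_j - \pi)$ has standard deviation of order $\sqrt{n q_n^\text{pre}}$ — I would first conjugate by a diagonal normalization so that each entry of the rescaled Gram matrix is $O_{\mathbb{P}}(1)$, and then read off component-wise rates either from Cram\'er's rule or, equivalently, from the Frisch--Waugh--Lovell reduction that partials $(1, T_i)$ out of $M_i$, $Z_i^{\textsc{ssiv}}$ and $u_i$. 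Exogeneity $E[Z_i^{\textsc{ssiv}} u_i] = 0$ is already established in the text, so the work lies entirely in (i) the relevance entry of $Z^\top X$ and (ii) the order of the score.

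For relevance I would show that the only entry carrying signal for $\beta_2$ is $\tfrac1n\sum_i Z_i^{\textsc{ssiv}} M_i$ and compute its leading order. The decomposition $M_i = \xi_i + r^*_i$ from Section~\ref{sec:Mi} is the key tool: since $\xi_i$ is a function of $(T_i, w_i)$ only while $E[Z_i^{\textsc{ssiv}}\mid w, T_i]=0$, the instrument is orthogonal to $\xi_i$, so relevance comes entirely from the pre/post overlap carried by $r^*_i$. Crucially, because $A^\text{pre}$ and $A^\text{post}$ share the disturbance $\eta_{ij}$, one has $A_{ij}^\text{pre} A_{ij}^\text{post} = 1\{\eta_{ij}\le \min(q_n^\text{pre} g^\text{pre}, q_n^\text{post} g^\text{post})\}$, so the expected overlap is of order $\min(q_n^\text{pre}, q_n^\text{post})$ rather than the product. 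Isolating the diagonal ($j=k$) contribution, using $(T_j-\pi)T_j = (1-\pi)T_j$, and concentrating the degree $D_i = \sum_k A_{ik}^\text{post} \asymp n q_n^\text{post}$, yields $\tfrac1n\sum_i Z_i^{\textsc{ssiv}} M_i \asymp \min(q_n^\text{pre}, q_n^\text{post})/q_n^\text{post}$, a quantity that degrades as $q_n^\text{pre}/q_n^\text{post}$ shrinks; this is the analytic source of the relevance failure flagged in the text.

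Next I would bound the score, which is where I expect the main obstacle. Writing $u_i = \lambda(w_i) + \varepsilon_i$ and conditioning on $w$ and the pre-network, the $\varepsilon_i$ part contributes only a diagonal variance of order $n^2 q_n^\text{pre}$ to $\V(\sum_i Z_i^{\textsc{ssiv}} u_i)$. The delicate term is the off-diagonal contribution produced by the shift-share dependence: distinct units $i\ne i'$ load on the same shock $T_j$ through $A_{ij}^\text{pre}$ and $A_{i'j}^\text{pre}$, so $E[Z_i^{\textsc{ssiv}} Z_{i'}^{\textsc{ssiv}} u_i u_{i'}]$ does not vanish once $\lambda(w_i)\ne 0$; summing $n^2$ pairs, each of order $n(q_n^\text{pre})^2\lambda(w_i)\lambda(w_{i'})$, gives variance of order $n^3 (q_n^\text{pre})^2$, which dominates the diagonal whenever $n q_n^\text{pre}\succ 1$. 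Hence $\sum_i Z_i^{\textsc{ssiv}} u_i$ has standard deviation of order $n^{3/2} q_n^\text{pre}$, so the residualized score is $O_{\mathbb{P}}(\sqrt n\, q_n^\text{pre})$. Dividing by the relevance order and using $q_n^\text{pre} q_n^\text{post}/\min(q_n^\text{pre}, q_n^\text{post}) = \max\{q_n^\text{pre},q_n^\text{post}\}$ gives $\hat\beta_2^{\textsc{iv}} - \beta_2 \asymp \sqrt n\, \max\{q_n^\text{pre},q_n^\text{post}\}$; the condition $\max\{q_n^\text{pre},q_n^\text{post}\} \prec n^{-1/2}$ is exactly what makes this $o_{\mathbb{P}}(1)$, and the same rate propagates to $\hat\beta_0$ and $\hat\beta_1$ in Case~(a) through the matrix inverse.

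Finally I would treat Case~(b) separately, where $\xi_i\equiv\pi$ forces the variation of $M_i$ to come entirely from $r^*_i$, whose standard deviation is of order $(n q_n^\text{post})^{-1/2}$. The relevance and score orders are unchanged, so $\hat\beta_2$ and $\hat\beta_0$ keep the rate $\sqrt n\,\max\{q_n^\text{pre},q_n^\text{post}\}$. The improvement for $\hat\beta_1$ comes from the exogeneity of $T_i$: since $E(M_i\mid T_i)=\pi$ in Case~(b), the population covariance of $T_i$ and $M_i$ is zero, so the estimation of $\beta_1$ is contaminated by the error in $\hat\beta_2$ only through the sample covariance $\widehat{\mathrm{Cov}}(T_i, M_i)$, which is $O_{\mathbb{P}}(n^{-1/2}(n q_n^\text{post})^{-1/2})$ because $\mathrm{SD}(M_i)\asymp (n q_n^\text{post})^{-1/2}$. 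Multiplying this by the $\sqrt n\,\max\{q_n^\text{pre},q_n^\text{post}\}$ error in $\hat\beta_2$ and comparing against the parametric $n^{-1/2}$ term from the direct randomized variation yields the stated rate $n^{-1/2}\max\{\max\{q_n^\text{pre},q_n^\text{post}\}/\sqrt{q_n^\text{post}}, 1\}$, with the weaker requirement $\max\{q_n^\text{pre},q_n^\text{post}\} \prec \sqrt n\, q_n^\text{post}$ needed only to keep the relevance denominator from vanishing too fast. The bookkeeping I anticipate requiring the most care is the random ratio denominator $D_i$ and uniform control of the remainder $r^*_i$, both handled through the concentration bounds underlying the $M_i = \xi_i + r^*_i$ decomposition.
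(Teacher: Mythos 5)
Your proposal is correct and follows essentially the same route as the paper: a diagonal rescaling of $Z^\top X$ followed by Cram\'er's rule, the decomposition $M_i = \xi_i + r_i^*$ with $Z_i^{\textsc{ssiv}}$ orthogonal to $\xi_i$, the pre/post overlap through the shared $\eta_{ij}$ giving relevance of order $\min\{q_n^\text{pre},q_n^\text{post}\}/q_n^\text{post}$, and the off-diagonal shift-share variance giving a score of order $\sqrt{n}\,q_n^\text{pre}$, whose ratio yields $\sqrt{n}\max\{q_n^\text{pre},q_n^\text{post}\}$ exactly as in the paper's Theorem \ref{thm:ZX-1} and Lemmas \ref{lemma:Ziphii}--\ref{lem:ZM}. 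Your Case~(b) treatment of $\hat{\beta}_1$ via the sample covariance $\widehat{\operatorname{Cov}}(T_i,M_i)=O_{\mathbb{P}}(n^{-1}(q_n^\text{post})^{-1/2})$ reproduces the paper's $b_{23}$ entry and the resulting rate, so the argument matches in both structure and conclusions.
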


Theorem \ref{thm:IV_consistency} specifies the regimes under which $\hat{\beta}^{\textsc{iv}}$ is consistent for Cases (a) and (b). The consistency regime and convergence rates of $\hat{\beta}^{\textsc{iv}}_0$ and $\hat{\beta}^{\textsc{iv}}_2$ are invariant across these cases, both depending on $q_n^\text{pre}$ and $q_n^\text{post}$, and are (possibly) slower than the usual rate $\sqrt{n}$, scaled by the degree in the denser network, $n \max\{q_n^\text{pre}, q_n^\text{post}\}$.
In Case (b), $\hat{\beta}^{\textsc{iv}}_1$ converges at a (possibly) faster rate compared to Case (a) and is consistent under a less restrictive condition. 
This observation aligns with Theorem \ref{thm:consistency_ratio}, which shows that, in Case (b), the estimation of $\beta_1$ is less affected by the dependency of $M_i$ across $i$ and exhibits a faster convergence rate than the estimator of $\beta_2$.

Theorem \ref{thm:IV_consistency} demonstrates that the consistency of the SSIV estimators breaks down when networks are relatively dense, i.e., $\max\{q_n^\text{pre}, q_n^\text{post}\} \succcurlyeq n^{-1/2}$. 
Here is the intuition:
the SSIV method leverages variation in the number of treated friends for each individual, driven by random fluctuations in treatment assignment. 
Incorporating network information is essential, as the endogenous variable $M_i$ depends on the network structure. 
However, as the network becomes denser, the variation in the number of treated friends across individuals decreases. 
Consider an extreme case where all individuals are fully connected before the intervention. 
In this scenario, the SSIV reduces to $Z_i^{\textsc{ssiv}} = \sum_{j \neq i}(T_j - \pi)$, where the only distinction across units is whether they are treated or not.

\begin{remark}
A commonly used instrument in the peer effects literature is the ``peer-of-peer'' IV \citep{BramoulleDjebbariFortin2009b, DeGiorgiFrederiksenPistaferri2020}. 
When peers of peers are not direct peers, their characteristics influence individual outcomes only through their effect on peers' outcomes, providing valid instruments. 
Identification also requires a non-overlapping peer network, consistent with the findings on SSIV: the network must not be too dense and should exhibit sufficient structural variation.
\end{remark}

\begin{remark}
In Appendix \ref{app:IV_alt}, I discuss the performance of the IV estimators using the normalized SSIV suggested in \cite{BorusyakHullJaravel2022}: 
$\sum_{j=1}^n A_{ij}^\text{pre}T_j / \sum_{j=1}^n A_{ij}^\text{pre}$.
Instead of centering the treatment around the assignment probability, this approach normalizes the total number of treated friends to guarantee the exogeneity.
However, as the network becomes denser, this ratio converges to the assignment probability as the sample size increases.
This convergence can lead to a weak IV problem if the assignment probability is constant.
See the consistency condition of the IV estimators using this normalized SSIV in Theorem \ref{thm:IV_alt_consistency}.
\end{remark}

\begin{remark}
The BLP instrument is widely used in empirical industrial organization to address endogeneity in demand estimation by leveraging the characteristics of competing products as instruments \citep{BerryLevinsohnPakes1995, Armstrong2016}. 
Both the BLP instrument and the shift-share instrument share the fundamental idea of leveraging variation from external, exogenous sources to identify causal effects.
However, both the BLP instrument and the normalized SSIV face challenges in maintaining identifying power in limiting cases.
The BLP instrument has limited identifying power asymptotically in large markets, as market shares approach zero and the dependence of equilibrium markups on other products' characteristics diminishes.
Similarly, the normalized SSIV, $\sum_{j=1}^n A_{ij}^\text{pre} T_j / \sum_{j=1}^n A_{ij}^\text{pre}$, becomes degenerate in denser networks, weakening its relevance.
\end{remark}

\medskip
Corollary \ref{cor:IV_consistency} simplifies the results in Theorem \ref{thm:IV_consistency} to the special case where $q_n^\text{pre} \preccurlyeq q_n^\text{post}$.

\begin{corollary}\label{cor:IV_consistency}
Suppose $q_n^\text{pre} \preccurlyeq q_n^\text{post}$. 
Under Assumptions \ref{asu:network} and \ref{asu:linearY},
\begin{enumerate}[(a)]
\item if $\V\left( \xi_i \right) > 0$ with  $q_n^\text{post} \prec n^{-1/2}$, then
$\hat{\beta}^{\textsc{iv}} - \beta = O_{\mathbb{P}}(\sqrt{n} q_n^\text{post})$;

\item if $\V\left( \xi_i \right) = 0$ it holds that $\hat{\beta}^{\textsc{iv}}_1 - {\beta}_1 
= O_{\mathbb{P}}\left( \frac{1}{\sqrt{n}} \right)$.
Moreover, with $q_n^\text{post} \prec n^{-1/2}$, then
$\hat{\beta}^{\textsc{iv}}_0 - {\beta}_0 
= O_{\mathbb{P}}\left( \sqrt{n} q_n^\text{post} \right)$
and $\hat{\beta}^{\textsc{iv}}_2 - {\beta}_2 
= O_{\mathbb{P}}\left( \sqrt{n} q_n^\text{post} \right)$.

\end{enumerate}
\end{corollary}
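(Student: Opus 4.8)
The plan is to obtain Corollary \ref{cor:IV_consistency} as a direct specialization of Theorem \ref{thm:IV_consistency} under the ordering hypothesis $q_n^\text{pre} \preccurlyeq q_n^\text{post}$. The single observation that drives everything is that $q_n^\text{pre} = O(q_n^\text{post})$ forces the maxima appearing throughout Theorem \ref{thm:IV_consistency} to collapse: since $q_n^\text{post} \leq \max\{q_n^\text{pre}, q_n^\text{post}\} \leq q_n^\text{pre} + q_n^\text{post} = O(q_n^\text{post})$, we have $\max\{q_n^\text{pre}, q_n^\text{post}\} \asymp q_n^\text{post}$. The remainder of the argument is substituting this equivalence into the hypotheses and conclusions of Theorem \ref{thm:IV_consistency} and simplifying, invoking only the boundedness $q_n^\text{post} \preccurlyeq 1$ guaranteed by Assumption \ref{asu:network}.

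For part (a), I would substitute $\max\{q_n^\text{pre}, q_n^\text{post}\} \asymp q_n^\text{post}$ into both the sparsity condition and the rate in Theorem \ref{thm:IV_consistency}(a). The condition $\max\{q_n^\text{pre}, q_n^\text{post}\} \prec n^{-1/2}$ becomes $q_n^\text{post} \prec n^{-1/2}$, and the rate $O_{\mathbb{P}}\left(\sqrt{n}\max\{q_n^\text{pre}, q_n^\text{post}\}\right)$ becomes $O_{\mathbb{P}}(\sqrt{n}\,q_n^\text{post})$, matching the stated claim verbatim.

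For part (b), the same substitution applies, but two simplifications require checking. First, the hypothesis $\max\{q_n^\text{pre}, q_n^\text{post}\} \prec \sqrt{n}\,q_n^\text{post}$ reduces to $q_n^\text{post} \prec \sqrt{n}\,q_n^\text{post}$, which holds unconditionally because $\sqrt{n} \to \infty$; hence the rate for $\hat{\beta}^{\textsc{iv}}_1$ carries no sparsity restriction. Second, substituting into that rate yields
\[
\frac{1}{\sqrt{n}} \max\left\{ \frac{q_n^\text{post}}{\sqrt{q_n^\text{post}}}, 1 \right\} = \frac{1}{\sqrt{n}} \max\left\{ \sqrt{q_n^\text{post}}, 1 \right\},
\]
and since $q_n^\text{post} \preccurlyeq 1$ gives $\sqrt{q_n^\text{post}} = O(1)$, the maximum is $\asymp 1$, leaving $\hat{\beta}^{\textsc{iv}}_1 - \beta_1 = O_{\mathbb{P}}(n^{-1/2})$. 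For $\hat{\beta}^{\textsc{iv}}_0$ and $\hat{\beta}^{\textsc{iv}}_2$, the extra hypothesis $\max\{q_n^\text{pre}, q_n^\text{post}\} \prec n^{-1/2}$ again reduces to $q_n^\text{post} \prec n^{-1/2}$, and the rate $O_{\mathbb{P}}\left(\sqrt{n}\max\{q_n^\text{pre}, q_n^\text{post}\}\right)$ becomes $O_{\mathbb{P}}(\sqrt{n}\,q_n^\text{post})$.

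Since the corollary is purely a bookkeeping reduction of an already-established theorem, there is no substantive obstacle. The only point demanding care is the second simplification in part (b), where one must invoke $q_n^\text{post} \preccurlyeq 1$ to collapse $\max\{\sqrt{q_n^\text{post}}, 1\}$ to constant order; without boundedness the rate for $\hat{\beta}^{\textsc{iv}}_1$ would instead read $O_{\mathbb{P}}(\sqrt{q_n^\text{post}/n})$, so it is precisely Assumption \ref{asu:network} that delivers the clean $\sqrt{n}$-rate claimed for $\hat{\beta}^{\textsc{iv}}_1$.
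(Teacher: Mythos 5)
Your proposal is correct and matches the paper's own proof, which simply states that the corollary follows directly from Theorem \ref{thm:IV_consistency} under $q_n^\text{pre} \preccurlyeq q_n^\text{post}$. Your additional bookkeeping — noting $\max\{q_n^\text{pre}, q_n^\text{post}\} \asymp q_n^\text{post}$, that the condition $q_n^\text{post} \prec \sqrt{n}\,q_n^\text{post}$ is vacuous, and that $q_n^\text{post} \preccurlyeq 1$ collapses $\max\{\sqrt{q_n^\text{post}},1\}$ to constant order — is exactly the substitution the paper leaves implicit.
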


Corollary \ref{cor:IV_consistency} shows that in both Cases (a) and (b), $\hat{\beta}_0^{\textsc{iv}}$ and $\hat{\beta}_2^{\textsc{iv}}$ exhibit the same convergence rate, which depends on $q_n^\text{post}$ and is (possibly) slower than the usual rate $\sqrt{n}$. 
In Case (a), the convergence rate of $\hat{\beta}_1^{\textsc{iv}}$ matches that of $\hat{\beta}_0^{\textsc{iv}}$ and $\hat{\beta}_2^{\textsc{iv}}$.
However, in Case (b), where $A^{\text{post}}$ is conditionally mean-independent of the treatment, $\hat{\beta}^{\textsc{iv}}_1$ is less affected by the dependence of $M_i$ across individuals and retains the usual rate $\sqrt{n}$.

\medskip

Now I show the asymptotic normality of the IV estimators $\hat{\beta}^{\textsc{iv}}$. 
I focus on the regimes where the IV estimators are consistent. 
Define $V^{\textsc{iv}}_{\text{num}} = \V\left(
\sum_{i=1}^n {Z}_i u_i \right)$, the variance of the numerator of the centered estimator $\hat{\beta}^{\textsc{iv}} - \beta$.
It can be shown that
\begin{align}
& V^{\textsc{iv}}_{\text{num}} 
= \begin{pmatrix}
\sum_{i=1}^n E(u_i^2) & \pi \sum_{i=1}^n E(u_i^2) & 0 \\
\pi \sum_{i=1}^n E(u_i^2) & \pi \sum_{i=1}^n E(u_i^2) & \pi(1-\pi)  \sum\limits_{i=1}^n \sum\limits_{j=1}^n E\left(  A_{ij}^\text{pre} u_i u_j \right) \\
0 & \pi(1-\pi) \sum\limits_{i=1}^n \sum\limits_{j=1}^n E\left( A_{ij}^\text{pre} u_i u_j \right) 
& \pi(1-\pi) \sum\limits_{i=1}^n E\left( \left( \sum\limits_{j=1}^n A_{ij}^\text{pre} u_j \right)^2 \right)
\end{pmatrix}.
\label{eq:V_num}
\end{align}
The $(2,3)$, $(3,2)$ and $(3,3)$ elements of $ V^{\textsc{iv}}_{\text{num}}$ in \eqref{eq:V_num} account for the dependence of ${Z}_i^{\textsc{ssiv}}$ across $i$, even though $u_i$ is i.i.d..
Let $\hat{u}_i^{\textsc{iv}}$ denote the residual from the IV fits with ${Z}_i$.
Define $\hat{V}^{\textsc{iv}}_{\text{num}}$ as the plug-in estimator of $V^{\textsc{iv}}_{\text{num}}$:
\begin{align}
\hat{V}^{\textsc{iv}}_{\text{num}}
=& \begin{pmatrix}
\sum\limits_{i=1}^n (\hat{u}_i^{\textsc{iv}})^2 & \pi \sum\limits_{i=1}^n (\hat{u}_i^{\textsc{iv}})^2 & 0 \\
\pi \sum\limits_{i=1}^n (\hat{u}_i^{\textsc{iv}})^2 & \pi \sum\limits_{i=1}^n (\hat{u}_i^{\textsc{iv}})^2 & \pi(1-\pi) \sum\limits_{i=1}^n \sum\limits_{j=1}^n A_{ij}^\text{pre} \hat{u}_j^{\textsc{iv}} \hat{u}_j^{\textsc{iv}} \\
0 & \pi(1-\pi) \sum\limits_{i=1}^n \sum\limits_{j=1}^n 
A_{ij}^\text{pre} \hat{u}_i^{\textsc{iv}} \hat{u}_j^{\textsc{iv}} & \pi(1-\pi)  \sum\limits_{i=1}^n \left( \sum\limits_{j=1}^n A_{ij}^\text{pre} \hat{u}_j^{\textsc{iv}} \right)^2
\end{pmatrix}.
\label{eq:V_hat_num}
\end{align}

Define $\hat{V}^{\textsc{iv}} = ({Z}^\top X)^{-1} \hat{V}_{\text{num}}^{\textsc{iv}} (X^\top {Z})^{-1}$ as the estimator of the asymptotic variance of $\hat{\beta}^{\textsc{iv}}$. 
I demonstrate the asymptotic normality of $\hat{\beta}^{\textsc{iv}}$, specifically focusing on the regimes where the IV estimators are consistent.

\begin{theorem}\label{thm:asymnormal_ratio_IV}
Under Assumptions \ref{asu:network} and \ref{asu:linearY}, and with $\max\{q_n^\text{pre}, q_n^\text{post}\} \prec n^{-1/2}$,
then
\[
\left(\hat{V}^{\textsc{iv}}\right)^{-1/2} 
\left( \hat{\beta}^{\textsc{iv}} - \beta \right) 
\clt \mathcal{N}(0, I_3).
\]
\end{theorem}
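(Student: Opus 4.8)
The plan is to start from the exact representation $\hat\beta^{\textsc{iv}}-\beta=(Z^\top X)^{-1}S_n$ with $S_n=\sum_{i=1}^n Z_i u_i$, and to reduce the self-normalized statement to two ingredients: a central limit theorem for the numerator, $(V^{\textsc{iv}}_{\text{num}})^{-1/2}S_n\clt\mathcal N(0,I_3)$, and ratio-consistency of the plug-in variance, $(V^{\textsc{iv}}_{\text{num}})^{-1/2}\hat V^{\textsc{iv}}_{\text{num}}(V^{\textsc{iv}}_{\text{num}})^{-1/2}\lln I_3$. The reduction rests on a purely algebraic observation about the sandwich form: since $\hat V^{\textsc{iv}}=(Z^\top X)^{-1}\hat V^{\textsc{iv}}_{\text{num}}(X^\top Z)^{-1}$ and $\hat\beta^{\textsc{iv}}-\beta=(Z^\top X)^{-1}S_n$, the matrix $Q:=(\hat V^{\textsc{iv}})^{-1/2}(Z^\top X)^{-1}(\hat V^{\textsc{iv}}_{\text{num}})^{1/2}$ satisfies $QQ^\top=I_3$ identically, so that $(\hat V^{\textsc{iv}})^{-1/2}(\hat\beta^{\textsc{iv}}-\beta)=Q\,(\hat V^{\textsc{iv}}_{\text{num}})^{-1/2}S_n$ with $Q$ orthogonal. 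Because the standard Gaussian is rotation-invariant, it then suffices to show $(\hat V^{\textsc{iv}}_{\text{num}})^{-1/2}S_n\clt\mathcal N(0,I_3)$ together with $Q\lln Q_0$ for a deterministic orthogonal $Q_0$; the latter follows once the appropriately scaled $Z^\top X$ and $\hat V^{\textsc{iv}}_{\text{num}}$ stabilize at deterministic invertible limits, which is guaranteed in the relevance/consistency regime $\max\{q_n^\text{pre},q_n^\text{post}\}\prec n^{-1/2}$ of Theorem \ref{thm:IV_consistency}.

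For the numerator CLT I would invoke the Cram\'er--Wold device and treat an arbitrary linear combination $L_n=\ell^\top S_n$. Writing $\ell_2\sum_i T_iu_i=\ell_2\pi\sum_iu_i+\ell_2\sum_i(T_i-\pi)u_i$ and grouping $L_n$ by unit index gives
\[
L_n=(\ell_1+\ell_2\pi)\sum_{i=1}^n u_i+\sum_{j=1}^n (T_j-\pi)\,d_j,\qquad d_j=\ell_2 u_j+\ell_3\sum_{i=1}^n A_{ij}^\text{pre}u_i.
\]
Let $\mathcal G=\sigma(\{w_i\},\{\varepsilon_i\},\{\eta_{ij}\})$ be the $\sigma$-field generated by everything except the treatments. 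The first term, call it $P_n$, is $\mathcal G$-measurable and is a scaled sum of i.i.d. mean-zero variables, so it obeys an ordinary Lindeberg CLT; the second term, $R_n=\sum_j(T_j-\pi)d_j$, is, conditionally on $\mathcal G$, a sum of independent mean-zero summands with conditional variance $\sigma_R^2(\mathcal G)=\pi(1-\pi)\sum_j d_j^2$. I would glue the pieces with a characteristic-function argument, $E[e^{\mathrm{i}t L_n}]=E\big[e^{\mathrm{i}t P_n}\,E(e^{\mathrm{i}tR_n}\mid\mathcal G)\big]$: a conditional Lindeberg CLT replaces the inner factor by $\exp(-t^2\sigma_R^2(\mathcal G)/2)$, and a law of large numbers making $\sigma_R^2(\mathcal G)$ concentrate at its deterministic limit renders this factor asymptotically constant, after which the outer CLT for $P_n$ yields joint asymptotic normality of $L_n$ with variance matching the corresponding quadratic form in $V^{\textsc{iv}}_{\text{num}}$ in \eqref{eq:V_num}.

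\textbf{The main obstacle} is the analysis of the network-dependent weights $\tilde c_j=\sum_i A_{ij}^\text{pre}u_i$ that enter $d_j$. Two facts must hold under $\max\{q_n^\text{pre},q_n^\text{post}\}\prec n^{-1/2}$: (i) the conditional Lindeberg condition $\max_j d_j^2/\sum_j d_j^2\lln 0$, which requires controlling the maximal degree of $A^\text{pre}$ and the tails of the mean-zero sums $\tilde c_j$; and (ii) concentration of the quadratic form $\sum_j d_j^2$ around its expectation despite the dependence induced by shared edges. Both are where the sparsity rate does the work, by keeping the effective degree of the dependency graph on units (neighbors and neighbors-of-neighbors in $A^\text{pre}$) small enough that a second-moment or martingale-difference bound suffices. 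I expect this concentration step, rather than the gluing argument, to carry the technical weight.

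Finally, I would establish ratio-consistency of $\hat V^{\textsc{iv}}_{\text{num}}$ by substituting $\hat u_i^{\textsc{iv}}=u_i-X_i^\top(\hat\beta^{\textsc{iv}}-\beta)$ into the entries of \eqref{eq:V_hat_num} and showing the cross terms are negligible. The diagonal-type entries reduce to i.i.d. averages, while the off-diagonal and $(3,3)$ entries involve the network sums $\sum_{i,j}A_{ij}^\text{pre}\hat u_i^{\textsc{iv}}\hat u_j^{\textsc{iv}}$ and $\sum_i(\sum_j A_{ij}^\text{pre}\hat u_j^{\textsc{iv}})^2$; these I would control using the consistency rates of $\hat\beta^{\textsc{iv}}$ from Theorem \ref{thm:IV_consistency} together with the same sparsity-based moment bounds developed for the CLT. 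Combining the numerator CLT, the orthogonal-rotation reduction, and this variance consistency yields the stated self-normalized limit.
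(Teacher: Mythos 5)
Your proposal is correct in outline, and your argument for the numerator CLT takes a genuinely different route from the paper's. The paper splits into two sparsity regimes: for $nq_n^{\text{pre}} \succ 1$ it uses Lemma \ref{lemma:Ziphii} to replace the random weight $\sum_j A_{ij}^{\text{pre}} u_j/(n q_n^{\text{pre}})$ by its conditional mean given $w_i$, reducing the SSIV component to an average of i.i.d.\ terms plus an $o_{\mathbb{P}}(n^{-1/2})$ remainder and then applying Lindeberg--L\'evy; for $nq_n^{\text{pre}} \asymp 1$ that remainder is no longer negligible, so the paper instead invokes a Stein's-method CLT for local dependence (Theorem \ref{thm:clt}) over the dependency graph of $A^{\text{pre}}$. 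Your route---conditioning on $\mathcal G=\sigma(\{w_i\},\{\varepsilon_i\},\{\eta_{ij}\})$ so that $R_n=\sum_j (T_j-\pi)d_j$ is \emph{exactly} a sum of conditionally independent mean-zero terms, then gluing via characteristic functions and concentration of $\sigma_R^2(\mathcal G)$---absorbs all of the edge-sharing dependence into the $\mathcal G$-measurable weights $d_j$ and therefore handles both regimes uniformly, with no case split. Your variance bookkeeping also checks out: $(\ell_1+\ell_2\pi)^2\sum_i E(u_i^2)+\pi(1-\pi)\sum_j E(d_j^2)$ reproduces $\ell^\top V^{\textsc{iv}}_{\text{num}}\ell$ exactly as in \eqref{eq:V_num}, and the orthogonal-rotation identity $QQ^\top=I_3$ is a clean (if nonstandard) packaging of what the paper achieves via Theorem \ref{thm:ZX-1} and the continuous mapping theorem. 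What each approach buys: the paper's i.i.d.-projection argument is shorter where it applies but forces the separate bounded-degree analysis; yours front-loads the work into the two facts you correctly flag as the technical core, namely the conditional Lindeberg condition $\max_j d_j^2/\sum_j d_j^2 \lln 0$ and concentration of $\sum_j d_j^2$.

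Two caveats on the part you leave as a sketch. First, the concentration of $n^{-1}(nq_n^{\text{pre}})^{-2}\sum_j\bigl(\sum_i A_{ij}^{\text{pre}}u_i\bigr)^2$ is not something you get for free---but the paper's own Step 2 (the consistency proof for the $(3,3)$ entry of $\hat V^{\textsc{iv}}_{\text{num}}$, equation \eqref{eq:Aijui^2}) proves exactly this bound with an $O_{\mathbb{P}}(n^{-1/2})$ error, so you could cite that computation rather than redo it. Second, controlling $\max_j d_j^2$ requires moments of $u_i$ beyond the second (Assumption \ref{asu:linearY} only states $E(\varepsilon_i^2)<\infty$); note, however, that the paper's Theorem \ref{thm:clt}(b) likewise assumes $\max_i E|H_i|^4=O(1)$, so your approach needs no more than the paper implicitly uses. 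With those two estimates supplied, your argument closes.
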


Theorem \ref{thm:asymnormal_ratio_IV} establishes the asymptotic normality of the IV estimators $\hat{\beta}^{\textsc{iv}}$ using SSIV, along with a consistent variance estimator, thereby validating inference results based on the usual $t$-test. 
Similar to Theorem \ref{thm:asymnormal_ratio}, it exhibits the property of self-normalization, which does not require to know the sparsity parameters.

\begin{remark}
The usual HC variance estimator of IV estimators is given by 
$\hat{V}^{{\textsc{iv}},\text{hc}} = ({Z}^\top X)^{-1} \hat{V}_{\text{num}}^{{\textsc{iv}},\text{hc}} (X^\top {Z})^{-1}$ where
\begin{align*}
\hat{V}^{{\textsc{iv}},\text{hc}}_{\text{num}}
= \begin{pmatrix}
\sum\limits_{i=1}^n (\hat{u}_i^{\textsc{iv}})^2 & \sum\limits_{i=1}^n T_i (\hat{u}_i^{\textsc{iv}})^2 & \sum\limits_{i=1}^n (\hat{u}_i^{\textsc{iv}})^2 \sum\limits_{j=1}^n A_{ij}^\text{pre} (T_j - \pi)  \\
\sum\limits_{i=1}^n T_i (\hat{u}_i^{\textsc{iv}})^2 & \sum\limits_{i=1}^n T_i (\hat{u}_i^{\textsc{iv}})^2 & \sum\limits_{i=1}^n T_i (\hat{u}_i^{\textsc{iv}})^2 \sum\limits_{j=1}^n A_{ij}^\text{pre} (T_j - \pi)  \\
\sum\limits_{i=1}^n (\hat{u}_i^{\textsc{iv}})^2 \sum\limits_{j=1}^n A_{ij}^\text{pre} (T_j - \pi)  & \sum\limits_{i=1}^n T_i (\hat{u}_i^{\textsc{iv}})^2 \sum\limits_{j=1}^n A_{ij}^\text{pre} (T_j - \pi)  & \sum\limits_{i=1}^n (\hat{u}_i^{\textsc{iv}})^2 \left(  \sum\limits_{j=1}^n A_{ij}^\text{pre} (T_j - \pi) \right)^2 
\end{pmatrix}.
\end{align*}
After appropriately scaling the $(3,1)$, $(1,3)$, $(3,2)$, and $(2,3)$ elements of $\hat{V}^{{\textsc{iv}},\text{hc}}_{\text{num}}$, these terms converge to zero. 
Thus, the primary differences between $\hat{V}^{\textsc{iv},\text{hc}}_{\text{num}}$ and $\hat{V}^{\textsc{iv}}_{\text{num}}$ lie in the $(2,3)$, $(3,2)$, and $(3,3)$ terms. 
The variance estimator $\hat{V}^{\textsc{iv},\text{hc}}_{\text{num}}$, designed for i.i.d. data, does not account for dependencies across units induced by the SSIV $Z_i^{\textsc{ssiv}}$. 
In contrast, $\hat{V}^{\textsc{iv}}_{\text{num}}$ captures these dependencies, leading to more accurate variance estimation.
This issue is similar to that documented by \cite{AdaoKolesarMorales2019}, who study inference in shift-share regression designs, where regional outcomes are regressed on a weighted average of sectoral shocks using regional sector shares as weights. 
They show that over-rejection arises when using cluster-robust standard errors because regression residuals are correlated across regions with similar sectoral shares, regardless of their geographic location.
\end{remark}

\subsection{Modification of SSIV} 
\label{sec:IV_PC}

I begin this subsection with a more formal explanation of the failure of relevance, offering insights into how to modify the SSIV.
To simplify the argument, consider the special case where $q_n^\text{pre} = q_n^\text{post} = q_n$. 
The relevance of SSIV is measured by $n^{-1}\sum_{i=1}^n M_i Z_i^{\textsc{ssiv}}$.\footnote{For the sake of a heuristic argument, I focus on $\frac{1}{n}\sum_{i=1}^n M_i Z_i^{\textsc{ssiv}}$ instead of the sample covariance $\frac{1}{n}\sum_{i=1}^n M_i Z_i^{\textsc{ssiv}} - \frac{1}{n}\sum_{i=1}^n M_i \frac{1}{n}\sum_{i=1}^n Z_i^{\textsc{ssiv}}$. Since the SSIV $Z_i^{\textsc{ssiv}}$ is mean zero, this simplification does not change the essence of the argument.}
Recall the decomposition that $M_i = \xi_i + r^*_i$, where $\xi_i$ is defined in \eqref{eq:xi} and $r^*_i$ is the remainder term, capturing information from all other units $j\ne i$.
The relevance of SSIV is then given by:
\[
\frac{1}{n}\sum_{i=1}^n M_i Z_i^{\textsc{ssiv}} = \frac{1}{n}\sum_{i=1}^n \xi_i Z_i^{\textsc{ssiv}} + \frac{1}{n}\sum_{i=1}^n r^*_i Z_i^{\textsc{ssiv}}.
\]
By definition, $\xi_i$ is a function of $T_i$ and $w_i$ and is uncorrelated with the treatments $T_j$ of all other units $j \neq i$, and thus uncorrelated with $Z_i^{\textsc{ssiv}}$. 
Consequently, the first term, $n^{-1} \sum_{i=1}^n \xi_i Z_i^{\textsc{ssiv}}$, has mean zero but a variance of order $O(n q_n^2)$, referred to as the noise term. 
The variance increases with the sparsity parameter as the dependence of $Z_i$ across units strengthens in denser networks. 
Thus, this term does not contribute to the signal in the relevance measure but instead adds noise.
In contrast, the second term, $n^{-1} \sum_{i=1}^n r_i^* Z_i^{\textsc{ssiv}}$, where $r_i^*$ contains information from all other units $j$, contributes to the signal in the relevance measure and is referred to as the signal term. 
It has a non-zero mean and remains of constant scale.
In other words, in $n^{-1} \sum_{i=1}^n M_i Z_i^{\textsc{ssiv}}$, the first term contributes only noise, while the second term carries the signal for the relevance measure. 
As networks become denser, the noise term increasingly dominates the signal term, preventing the first stage from converging to a nonzero constant and resulting in an inconsistent estimator. 
Therefore, restoring the consistency of the IV estimators is feasible if the noise term can be effectively reduced while preserving the signal term.

\begin{remark}
There is a remarkable connection between SSIV and the estimator of the indirect effect in \cite{LiWager2022}, who study asymptotics for treatment effect estimation under network interference, with the network randomly drawn from a graphon.
They consider anonymous interference, where potential outcomes depend on the fraction of treated friends, not their identities.
Unlike my setting, they assume that the network remains unchanged following the intervention.
The potential outcome for individual $i$ is 
$Y_i(t_i, t_{-i}) = f_i\left(t_i, \sum_{j=1}^n A_{ij} t_j/ \sum_{j =1}^n A_{ij} \right)$,
where $f_i \in \mathcal{F}$ allows arbitrary dependence on the latent variable $w_i$. 
They propose an estimator for the indirect effect defined in \eqref{eq:indirect}, 
\[
\hat{\tau}_{\text{IND}}^{\text{U}} = \frac{1}{n}\sum_{i=1}^n Y_i \left( \frac{\sum_{j\ne i} A_{ij} T_j }{\pi} - \frac{\sum_{j\ne i}A_{ij} (1-T_j)}{1-\pi} \right),
\]
which is derived as the difference between the Horvitz--Thompson estimators for direct and total effects, and thus is unbiased.
Rewriting this estimator gives
\[
\hat{\tau}_{\text{IND}}^{\text{U}} = \frac{1}{\pi(1-\pi)} \frac{1}{n}
\sum_{i=1}^n Y_i \left( \sum_{j\ne i}A_{ij} (T_j - \pi) \right),
\]
which implicitly uses SSIV to address endogeneity.
\end{remark}

Here, I explain how to reduce the noise term, drawing inspiration from the PC balancing idea in \cite{LiWager2022}. 
Ideally, the term $\frac{1}{n} \sum_{i=1}^n \xi_i Z_i^{\textsc{ssiv}}$ could be eliminated by adjusting the SSIV to be the residual from the projection of $Z_i^{\textsc{ssiv}}$ onto $\xi_i$, leveraging the orthogonality property.
However, since $\xi_i$ is unknown, the noise term $\frac{1}{n} \sum_{i=1}^n \xi_i Z_i$ can instead be reduced by projecting $w_i$ out of $Z_i$, thereby decreasing the noise while preserving the relevance signal from other units $j$. 
Although $w_i$ is unobserved, its information can be extracted through spectral analysis of the graphon.

To be more specific, let $G_n^\text{pre}$ denote the graphon matrix, where the $(i,j)$ element is given by $G^\text{pre}_{n,ij} = g^\text{pre}(w_i, w_j)$. 
By performing eigenvalue decomposition of $q_n^\text{pre} G_n^\text{pre}$, I express it as $q_n^\text{pre} G_n^\text{pre} = \sum_{k=1}^n \lambda_k^* \psi_k^* \psi_k^{*\top}$, where $\lambda_k^*$ represents the $k$-th largest eigenvalue of $q_n^\text{pre} G_n^\text{pre}$.
I assume that the graphon is approximately low-rank, as formalized in Assumption \ref{ass:low_rank}, meaning it can be well-approximated by the leading $r$ terms: $q_n^\text{pre} \tilde{G}^\text{pre}_n = \sum_{k=1}^r \lambda_k^* \psi_k^* \psi_k^{*\top}$.
I project $Z_i$ onto the first $r$ eigenvectors, $\{{\psi}_k^*(w_i)\}_{k=1}^r$, with the coefficients:
\begin{align*}
{\gamma}^*_k = \sum_{i=1}^n {\psi}_k^*(w_i) Z_i^{\textsc{ssiv}}, \text{ for all } k=1,\cdots,r,
\end{align*}
which follows from the orthogonality of eigenvectors. 
The residual from this projection yields the ``oracle version'' of the modified SSIV: ${Z}_i^{\textsc{de}*} = Z_i^{\textsc{ssiv}} - \sum_{k=1}^r {\gamma}^*_k {\psi}_k^*(w_i)$.
For a feasible estimatior, I approximate ${\psi}_k^*(w_i)$ using the eigenvector of pre-intervention adjacency matrix, $A^{\text{pre}} = \sum_{k=1}^n \hat{\lambda}_k \hat{\psi}_k \hat{\psi}_k^\top$. 
Therefore, the modified SSIV, denoted as ${Z}_i^{\textsc{de}}$ (for ``denoised''), is given by:
\[
{Z}_i^{\textsc{de}} = Z_i^{\textsc{ssiv}} - \sum_{k=1}^r \hat{\gamma}_k \hat{\psi}_k(w_i)
\]
where $\hat{\gamma}_k = \sum_{i=1}^n \hat{\psi}_k(w_i) Z_i^{\textsc{ssiv}}$.
By the properties of projection, the modified SSIV ${Z}_i^{\textsc{de}}$ is orthogonal to the space spanned by $\{\hat{\psi}_k(w_i)\}_{k=1}^r$, which encodes information about $w_i$. This effectively reduces noise without compromising the signal.

In the network literature, particularly in settings involving inference with eigenvectors \citep{Cai2022, LeLi2022, LiWager2022}, it is widely assumed that the graphon is low-rank in terms of its eigenfunctions, such that 
\begin{align}
g(w_i,w_j) = \sum_{k=1}^r \lambda_k \psi_k(w_i) \psi_k(w_j)
\label{eq:low_rank}
\end{align}
where $\mathbb{E}\left[\psi_k^2(w_i)\right]=1$ and $\mathbb{E}\left[\psi_k(w_i) \psi_l(w_i)\right]=0 \text { for } k \neq l$.
This assumption is satisfied by many well-known network models, including the stochastic block model \citep{HollandLaskeyLeinhardt1983a} and the random dot product graph \citep{YoungScheinerman2007, AthreyaFishkindTang2017}. 
Another popular model is the latent space model \citep{HoffRafteryHandcock2002}, which falls within the class of inhomogeneous Erdős--Rényi models, including the homophily model and the beta model.
The latent space model and the more general graphon models typically do not impose a low-rank structure as in \eqref{eq:low_rank}, but instead rely on certain smoothness conditions for the graphon function. 
This paper assumes that the graphon $g^\text{pre}$ can be well-approximated by a low-rank representation based on its eigenvalues, as outlined in Assumption \ref{ass:low_rank}.

\begin{assumption}\label{ass:low_rank}
There exists some constant $r$ $(r \prec n)$ such that
\begin{enumerate}[(a)]
\item $\min\limits_{k \in \{1,\cdots, r-1\}}(\lambda_k -\lambda_{k+1}) \asymp n q_n^\text{pre}$;
\item $\left\|\sum_{k=r+1}^n \lambda_k^* \psi_k^* \psi_k^{*\top} 
\right\|_{\textup{op}}=O_{\mathbb{P}}( q_n^\text{pre} )$.   
\end{enumerate}

\end{assumption}

Assumption \ref{ass:low_rank}(a) requires the minimum eigen-gap, i.e., the spacing between an eigenvalue and the rest of the spectrum, to be sufficiently large. 
Several papers propose eigenvector estimators that are robust to small eigen-gaps. 
For example, \citet{ChengWeiChen2021} tackle eigenvector estimation for low-rank matrices with small eigen-gaps and noisy observations, and \citet{LiCaiPoor2022} focus on estimating linear functionals of unknown eigenvectors under similarly tight eigen-gap conditions.
While these robust estimators could potentially extend my methods to such settings, doing so is beyond the scope of this paper.
Assumption \ref{ass:low_rank}(b) is analogous to the sparsity assumption in high-dimensional analysis, which assumes that only a small subset of features (variables) significantly contribute to the outcome. 
This assumption enables more efficient estimation and prediction in models with a large number of features, possibly exceeding the number of observations \citep{Tibshirani1996}.

To  estimate $\beta$'s, I propose the IV fits with the modified IV vector $\tilde{Z}_i^{\textsc{de}}$:
\[
\tilde{Z}_i^{\textsc{de}} 
= \left( 1, T_i, {Z}_i^{\textsc{de}} \right).
\]
Stack $\tilde{Z}_i^{\textsc{de}}$ to form the $n\times 3$ matrix $\tilde{Z}^{\textsc{de}}$.
Let $\hat{\beta}^{\textsc{de}}$ denote the vector of the coefficients obtained from the above IV fit.

\begin{theorem}\label{thm:consistency_ratio_PC_IV}
Suppose $q_n^\text{pre} \succ {\frac{\log(n)}{\log(\log(n))}} / n$.
Under Assumptions \ref{asu:network}, \ref{asu:linearY} and \ref{ass:low_rank}, 
\begin{enumerate}[(a)]
\item if $\V\left( \xi_i \right) > 0$ with $\max\{q_n^\text{pre}, q_n^\text{post}\} \prec \sqrt{q_n^\text{pre}}$, 
then
\begin{align*}
\hat{\beta}^{\textsc{de}} - {\beta} 
= O_{\mathbb{P}}\left( \frac{ \max\{q_n^\text{pre}, q_n^\text{post} \} }{ \sqrt{q_n^\text{pre} }}  \right);
\end{align*}

\item if $\V\left( \xi_i \right) = 0$, it holds that $\hat{\beta}^{\textsc{de}}_1 - {\beta}_1
= O_{\mathbb{P}}\left( \frac{ 1  }{  \sqrt{n} } \right)$.
Moreover, with $\max\{q_n^\text{pre}, q_n^\text{post}\} \prec \sqrt{q_n^\text{pre}}$, 
then
\begin{align*}
\hat{\beta}^{\textsc{de}}_0 - {\beta}_0
~=~ O_{\mathbb{P}}\left( \frac{ \max\{q_n^\text{pre}, q_n^\text{post} \} }{ \sqrt{q_n^\text{pre} }} \right)
\text{ and }
\hat{\beta}^{\textsc{de}}_2 - {\beta}_2
~=~ O_{\mathbb{P}}\left( \frac{ \max\{q_n^\text{pre}, q_n^\text{post} \} }{ \sqrt{q_n^\text{pre} }} \right).
\end{align*}

\end{enumerate}
\end{theorem}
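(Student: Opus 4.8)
The plan is to start from the exact IV identity
\[
\hat{\beta}^{\textsc{de}} - \beta = \left( \tfrac{1}{n}\,\tilde{Z}^{\textsc{de}\top} X \right)^{-1} \left( \tfrac{1}{n}\,\tilde{Z}^{\textsc{de}\top} u \right),
\]
where $u_i = \lambda(w_i) + \varepsilon_i$, and to reduce the problem to two scalar quantities. Since the first two coordinates of $\tilde{Z}_i^{\textsc{de}}$ and $X_i$ are the exogenous $(1,T_i)$, I partial these out and study (i) the first stage $n^{-1}\sum_i Z_i^{\textsc{de}}\widetilde{M}_i$ and (ii) the reduced form $n^{-1}\sum_i Z_i^{\textsc{de}} u_i$, where $\widetilde{M}_i$ denotes $M_i$ after regressing out $(1,T_i)$. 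Using the decomposition $M_i = \xi_i + r^*_i$ with $\xi_i$ as in \eqref{eq:xi}, the whole argument then turns on how the denoising reshapes the noise term $n^{-1}\sum_i \xi_i Z_i^{\textsc{de}}$ and the relevance signal $n^{-1}\sum_i r^*_i Z_i^{\textsc{de}}$, and simultaneously how it reshapes the reduced form.

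First I would analyze the oracle instrument $Z_i^{\textsc{de}*} = Z_i^{\textsc{ssiv}} - \sum_{k=1}^r \gamma^*_k \psi^*_k(w_i)$, built from the true graphon eigenvectors. For relevance, the signal $n^{-1}\sum_i r^*_i Z_i^{\textsc{de}*}$ is driven by links that persist from $A^\text{pre}$ to $A^\text{post}$ (both sharing $\eta_{ij}$), and a direct moment computation gives a nonzero limit of order $\min\{q_n^\text{pre},q_n^\text{post}\}/q_n^\text{post}$; meanwhile $\xi_i$ is a function of $(T_i,w_i)$ whose $w_i$-part lies, up to the low-rank tail controlled by Assumption \ref{ass:low_rank}(b), in the span of $\{\psi^*_k\}_{k=1}^r$, so the orthogonality of the projection residual annihilates the noise term. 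For exogeneity, treatment randomization gives $\Exp[Z_i^{\textsc{de}*} u_i]=0$ exactly, because every surviving term still carries an independent mean-zero factor $T_l-\pi$. The key gain is in the variance: the plain reduced form $n^{-1}\sum_i Z_i^{\textsc{ssiv}} u_i$ is $O_{\mathbb{P}}(\sqrt{n}\,q_n^\text{pre})$, its size coming from off-diagonal terms in which $\lambda(w_i)$ aligns with the leading graphon eigenspace; projecting that eigenspace out of the instrument removes precisely these terms and collapses the reduced form to $O_{\mathbb{P}}(\sqrt{q_n^\text{pre}})$. Dividing by the relevance signal then yields the oracle rate $\max\{q_n^\text{pre},q_n^\text{post}\}/\sqrt{q_n^\text{pre}}$.

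Next I would pass from the oracle to the feasible instrument $Z_i^{\textsc{de}}$, which replaces $\psi^*_k$ by the empirical eigenvectors $\hat\psi_k$ of $A^\text{pre}$. Here I would invoke operator-norm concentration, $\|A^\text{pre} - q_n^\text{pre} G_n^\text{pre}\|_{\textup{op}} = O_{\mathbb{P}}(\sqrt{n q_n^\text{pre}})$, which is exactly where the sparsity floor $q_n^\text{pre} \succ \log n/(n\log\log n)$ is needed so that the adjacency matrix concentrates around the graphon; combined with the eigen-gap $\asymp n q_n^\text{pre}$ from Assumption \ref{ass:low_rank}(a) and the tail bound from Assumption \ref{ass:low_rank}(b), a Davis--Kahan argument controls the leading eigenspace with subspace error $O_{\mathbb{P}}(1/\sqrt{n q_n^\text{pre}})$. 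The remaining work is to show that the difference $n^{-1}\sum_i (Z_i^{\textsc{de}} - Z_i^{\textsc{de}*})(\,\cdot\,)$ paired against $\widetilde M_i$ and against $u_i$ is of smaller order than the oracle terms it perturbs, so that the feasible estimator inherits the oracle rate.

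I expect the main obstacle to be this last step. Because the instrument uses the eigenvectors pointwise, through $\hat\gamma_k = \sum_i \hat\psi_k(w_i) Z_i^{\textsc{ssiv}}$ and through $\hat\psi_k(w_i)$ in each summand, the $\ell_2$ subspace Davis--Kahan bound is not enough; I would need entrywise (row-wise, i.e. two-to-infinity-norm) perturbation bounds for $\hat\psi_k$, and then control the interaction of these pointwise errors with the cross-unit dependence that $Z^{\textsc{ssiv}}$, $M$, and the network-induced correlation in $u$ already carry. Once these bounds are in place, assembling (i) and (ii) delivers Case (a). Case (b) is lighter: when $\V(\xi_i)=0$ the noise term is absent from the outset (the $\xi_i$ component is the constant $\pi$, absorbed by the intercept and $T_i$ instruments), so the $T_i$ coordinate decouples $\hat\beta_1^{\textsc{de}}$ from the dependent mediator and restores the standard $O_{\mathbb{P}}(n^{-1/2})$ rate, while $\hat\beta_0^{\textsc{de}}$ and $\hat\beta_2^{\textsc{de}}$ inherit the same slow rate as in Case (a) under the extra restriction $\max\{q_n^\text{pre},q_n^\text{post}\}\prec\sqrt{q_n^\text{pre}}$.
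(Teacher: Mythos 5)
Your architecture is right: partialling out $(1,T_i)$, reducing to the first stage and the reduced form, showing the projection cuts the reduced form from $O_{\mathbb{P}}(\sqrt{n}\,q_n^\text{pre})$ to $O_{\mathbb{P}}(\sqrt{q_n^\text{pre}})$ while preserving the relevance signal of order $\min\{q_n^\text{pre},q_n^\text{post}\}/q_n^\text{post}$, and the resulting rate arithmetic all match the paper (which organizes the same computation through the cofactors of $(D_n^{\textsc{de}}(\tilde{Z}^{\textsc{de}})^\top X)^{-1}$ in Theorem \ref{thm:ZX-1_PC_limit} and Lemmas \ref{lemma:Ziphii_PC}--\ref{lemma:ZiMi_PC}). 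One small imprecision: the orthogonality does not annihilate the $\xi_i$ noise term; the paper decomposes $\xi_i$ into its empirical projection on $\{\psi_k^*\}_{k\le r}$ plus a residual $\eta_i^\xi$ satisfying $\sum_i\eta_i^\xi\psi_k^*(w_i)=0$, and it is this exact-zero constraint that kills the dominant cross-covariance in $\V\bigl(n^{-1}\sum_i Z_i^{\textsc{ssiv}}\eta_i^\xi\bigr)$, leaving a residual noise of order $\sqrt{q_n^\text{pre}}$ rather than zero.

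The genuine gap is in how you propose to close the feasible-versus-oracle step. You assert that because $\hat\psi_k(w_i)$ enters pointwise you would need two-to-infinity (entrywise) eigenvector perturbation bounds. That route is both unnecessary and likely unavailable here: the theorem only assumes $q_n^\text{pre}\succ\frac{\log n}{n\log\log n}$, a regime in which $\|A^\text{pre}-q_n^\text{pre}G_n^\text{pre}\|_{\textup{op}}$ concentrates only up to an extra $(\log n/\log\log n)^{1/4}$ factor (Lemma \ref{lemma:25}) and entrywise eigenvector concentration generally requires $nq_n\gtrsim\log n$. The paper never needs pointwise control because every occurrence of $\hat\psi_k$ in the estimator is ultimately an inner product: $\hat\gamma_k$ is $-\hat\psi_k^\top Z^{\textsc{ssiv}}$ (bounded via $\|A^\text{pre}(\hat\psi_k^R-\psi_k^*)\|=O_{\mathbb{P}}(\sqrt{\log n/\log\log n})$, Lemma \ref{lemma:28}), and the remaining error terms are sums of the form $\sum_i\eta_i^\phi(\hat\psi_{ki}^R-\psi_k^*(w_i))$, which the paper handles by expanding $\eta_i^\phi$ in the graphon eigenbasis $\{\psi_l^*\}_{l>r}$ and invoking the uniform-in-$l$ bound $(\psi_l^*)^\top(\hat\psi_k^R-\psi_k^*)=O_{\mathbb{P}}\bigl(\sqrt{\log n/\log\log n}/(nq_n^\text{pre})\bigr)$ of Lemma \ref{lemma:8}, itself derived from the symmetry of $(\hat\Psi^R)^\top\Psi^*$ and the operator-norm bound rather than from any entrywise analysis. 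Without replacing your two-to-infinity step by an argument of this inner-product type, your proof would either stall or force a strictly stronger sparsity condition than the theorem states.
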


To make the modified SSIV work, it requires precise estimation of the eigenvectors, which becomes problematic when the network is too sparse \citep{AltDucatezKnowles2021, Benaych-GeorgesBordenaveKnowles2019, Benaych-GeorgesBordenaveKnowles2020}. 
Despite this added requirement, ${\frac{\log(n)}{\log(\log(n))}} / n$ is much sparser than the threshold at which SSIV fails, $n^{-1/2}$.
As a result, the consistency regime of the modified SSIV overlaps with that of SSIV, while also extending to the regime where SSIV becomes inconsistent.

\begin{remark}
To clarify the condition $\max\{q_n^\text{pre}, q_n^\text{post}\} \prec \sqrt{q_n^\text{pre}}$,
I discuss it in two cases:
\begin{enumerate}[(1)]
\item if $q_n^\text{pre} \prec q_n^\text{post}$, then
$\max\{q_n^\text{pre}, q_n^\text{post}\} \prec \sqrt{q_n^\text{pre}}$ holds under $\sqrt{q_n^\text{pre}} \succ  q_n^\text{post}$, saying that the pre-intervention network can not be too sparse compared with the post-intervention network; 
\item if $q_n^\text{pre} \succcurlyeq q_n^\text{post}$, then 
$\max\{q_n^\text{pre}, q_n^\text{post}\} \prec \sqrt{q_n^\text{pre}}$ holds under $q_n^\text{pre} \prec 1$.
\end{enumerate}
The modified SSIV cannot be extended to the case where both pre- and post-intervention networks are dense. 
In such a case, one potential solution is to use the matching method from \cite{Auerbach2022a} if some regularity assumption holds, e.g., $M_i$ has sufficient variation when controlling for $w_i$.

\end{remark}

Corollary \ref{cor:IV_PC_consistency} simplifies the results in Theorem \ref{thm:consistency_ratio_PC_IV} under the special case $q_n^\text{post} \succcurlyeq q_n^\text{pre}$.

\begin{corollary}
\label{cor:IV_PC_consistency}
Suppose $q_n^\text{post} \succcurlyeq q_n^\text{pre} \succ {\frac{\log(n)}{\log(\log(n))}} / n$.
Under Assumptions \ref{asu:network}, \ref{asu:linearY} and \ref{ass:low_rank}, then
\begin{enumerate}[(a)]
\item if $\V\left( \xi_i \right) > 0$ with $\sqrt{q_n^\text{pre}} \succ  q_n^\text{post}$, then $\hat{\beta}^{\textsc{de}} - \beta = O_{\mathbb{P}}\left( \frac{q_n^\text{post}}{\sqrt{q_n^\text{pre}}} \right)$;
\item if $\V\left( \xi_i \right) = 0$, it holds that $\hat{\beta}^{\textsc{de}}_1 - {\beta}_1 = O_{\mathbb{P}}\left( \frac{ 1  }{  \sqrt{n} } \right)$.
Moreover, with $\sqrt{q_n^\text{pre}} \succ  q_n^\text{post}$,  
then
$\hat{\beta}^{\textsc{de}}_0 - {\beta}_0 = O_{\mathbb{P}}\left( \frac{q_n^\text{post}}{\sqrt{q_n^\text{pre}}} \right)$ and $\hat{\beta}^{\textsc{de}}_2 - \beta_2 = O_{\mathbb{P}}\left( \frac{q_n^\text{post}}{\sqrt{q_n^\text{pre}}} \right)$. 
\end{enumerate}
\end{corollary}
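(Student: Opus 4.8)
The plan is to obtain this corollary as a direct specialization of Theorem \ref{thm:consistency_ratio_PC_IV} to the sparsity ordering $q_n^\text{post} \succcurlyeq q_n^\text{pre}$. The one fact that makes everything collapse is that, under this ordering, $\max\{q_n^\text{pre}, q_n^\text{post}\} \asymp q_n^\text{post}$: by definition $q_n^\text{post} \succcurlyeq q_n^\text{pre}$ means $q_n^\text{pre} = O(q_n^\text{post})$, so $q_n^\text{post} \leq \max\{q_n^\text{pre}, q_n^\text{post}\} \leq q_n^\text{pre} + q_n^\text{post} = O(q_n^\text{post})$ for $n$ large, giving the claimed equivalence of orders.

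With this identification in hand, I would substitute $q_n^\text{post}$ for $\max\{q_n^\text{pre}, q_n^\text{post}\}$ throughout the theorem. On the side conditions, the hypothesis $\max\{q_n^\text{pre}, q_n^\text{post}\} \prec \sqrt{q_n^\text{pre}}$ of part (a) becomes $q_n^\text{post} \prec \sqrt{q_n^\text{pre}}$, which is exactly $\sqrt{q_n^\text{pre}} \succ q_n^\text{post}$ as stated, and the analogous rewriting applies to the extra condition in part (b); the sparsity floor $q_n^\text{pre} \succ \frac{\log(n)}{\log(\log(n))}/n$ is inherited unchanged. On the rates, note that the denominator $\sqrt{q_n^\text{pre}}$ in the theorem already involves $q_n^\text{pre}$ explicitly rather than the maximum, so only the numerator changes: the rate $\frac{\max\{q_n^\text{pre}, q_n^\text{post}\}}{\sqrt{q_n^\text{pre}}}$ collapses to $\frac{q_n^\text{post}}{\sqrt{q_n^\text{pre}}}$, giving the stated bounds for $\hat{\beta}^{\textsc{de}} - \beta$ in part (a) and for $\hat{\beta}^{\textsc{de}}_0 - \beta_0$ and $\hat{\beta}^{\textsc{de}}_2 - \beta_2$ in part (b). The bound $\hat{\beta}^{\textsc{de}}_1 - \beta_1 = O_{\mathbb{P}}(1/\sqrt{n})$ does not involve the maximum and carries over verbatim.

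Because this is a pure specialization, there is no substantive obstacle; every step is a definitional rewrite. The only point deserving a line of care is confirming that $\asymp$-equivalence of the two candidate maxima is strong enough to preserve both the $\prec$-type side conditions and the $O_{\mathbb{P}}$-rates, which holds because multiplicative constants are absorbed both in the asymptotic-order relations $\asymp, \prec, \succ$ and in the stochastic-order notation $O_{\mathbb{P}}(\cdot)$.
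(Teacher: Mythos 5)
Your proposal is correct and matches the paper's own proof, which simply notes that the corollary follows directly from Theorem \ref{thm:consistency_ratio_PC_IV} once $\max\{q_n^\text{pre}, q_n^\text{post}\}$ is replaced by $q_n^\text{post}$ under the ordering $q_n^\text{pre} \preccurlyeq q_n^\text{post}$. Your additional care in verifying that the $\asymp$-equivalence preserves the $\prec$-type side conditions and the $O_{\mathbb{P}}$-rates is a sensible elaboration of the same one-line argument.
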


Aligning with Corollary \ref{cor:IV_consistency}, Corollary \ref{cor:IV_PC_consistency} demonstrates that under Case (a), the convergence rate of $\hat{\beta}$ depends on the sparse parameters and is (possibly) slower than the usual rate $\sqrt{n}$, but faster than the IV estimators using SSIV, which converge at the rate $1/(\sqrt{n} q_n^\text{post})$. 
In Case (b), I also find that when the post-intervention network $A^{\text{post}}$ is conditionally mean independent of the treatment, $\hat{\beta}^{\textsc{de}}_1$ is less affected by the dependency of $M_i$ across $i$ and retains the usual rate $\sqrt{n}$. 
Meanwhile, the convergence rates of $\hat{\beta}^{\textsc{de}}_0$ and $\hat{\beta}^{\textsc{de}}_2$ are (possibly) slower than $\sqrt{n}$.

\begin{remark}
One takeaway from this subsection is that the discussion on the relevance condition of SSIV sheds light on other forms of mediators. Specifically, if a mediator can be decomposed into two parts, $M_i = \xi_i + r_i^*$, where $\xi_i$ contains only unit $i$'s own information, then $\xi_i$ introduces noise that potentially weakens relevance. 
The proposed denoising procedure has the potential to restore consistency.
\end{remark}

\medskip

Now I show the asymptotic normality of $\hat{\beta}^{\textsc{de}}$.  
Define $\mu_k^{u} = \sum_{i=1}^n u_i \psi_k^*(w_i)$ and $\eta_i^{u} = u_i - \sum_{k=1}^r \mu_k^{u} \psi_k^*(w_i)$.
Define $V^{\textsc{de}}_{\text{num}} = \V\left( \sum_{i=1}^n \tilde{Z}^{\textsc{de}}_i u_i \right)$, the variance of the numerator of the estimation bias $(\hat{\beta}^{\textsc{de}} - \beta)$. 
We can show
\begin{align}
& {V}^{\textsc{de}}_{\text{num}}
= \begin{pmatrix}
\sum\limits_{i=1}^n E(u_i^2) & \sum\limits_{i=1}^n \pi E(u_i^2) & 0 \\
\sum\limits_{i=1}^n \pi E(u_i^2) & \sum\limits_{i=1}^n \pi E(u_i^2) & 0 \\
0 & 0 & \pi(1-\pi) \sum\limits_{i=1}^n \sum\limits_{j=1}^n E\left( A_{ij}^{\text{pre}} (\eta_i^{u})^2  \right)
\end{pmatrix}.
\label{eq:V_num_PC}
\end{align}
By comparing ${V}^{\textsc{iv}}_{\text{num}}$ in \eqref{eq:V_num} with ${V}^{\textsc{de}}_{\text{num}}$ in \eqref{eq:V_num_PC}, I observe that the $(3,2)$ and $(2,3)$ terms in \eqref{eq:V_num_PC} are zero, whereas they are non-zero in \eqref{eq:V_num}. Additionally, the $(3,3)$ term in \eqref{eq:V_num} accounts for the covariance component, while the $(3,3)$ term in \eqref{eq:V_num_PC} accounts only for the diagonal component. 
This modification effectively reduces noise by mitigating the dependency of the IV across units due to information from $w_i$.

Let $\hat{u}_i^{\textsc{de}}$ denote the residual from the IV fits with $\tilde{Z}_i^{\textsc{de}}$.
Define $\hat{\mu}_k^{u} = \sum_{i=1}^n \hat{u}_i^{\textsc{de}} \hat{\psi}_{ki}$ and $\hat{\eta}^{u}_i = \hat{u}_i^{\textsc{de}} - \sum_{k=1}^r \hat{\mu}_k^u \hat{\psi}_{ki} $.
Define the plug-in estimator of ${V}_{\text{num}}^{\textsc{de}}$ as
\begin{align*}
\hat{V}_{\text{num}}^{\textsc{de}}
=& \begin{pmatrix}
\sum\limits_{i=1}^n (\hat{u}_i^{\textsc{de}})^2 & \pi \sum\limits_{i=1}^n (\hat{u}_i^{\textsc{de}})^2 & 0 \\
\pi \sum\limits_{i=1}^n (\hat{u}_i^{\textsc{de}})^2 & \pi \sum\limits_{i=1}^n (\hat{u}_i^{\textsc{de}})^2 & 0 \\
0 & 0 & \pi(1-\pi) \sum\limits_{i=1}^n \sum\limits_{j=1}^n  A_{ij}^\text{pre} \left( \hat{\eta}^u_i \right)^2
\end{pmatrix}.
\end{align*}
Define
$\hat{V}^{\textsc{de}} = ((\tilde{Z}^{\textsc{de}})^\top X)^{-1} \hat{V}_{\text{num}}^{\textsc{de}} (X^\top \tilde{Z}^{\textsc{de}})^{-1}$ as the variance estimator of $\hat{\beta}^{\textsc{de}}$.

Theorem \ref{thm:asymnormal_ratio_PC_IV} focuses on the sparsity regimes under which the IV estimators $\hat{\beta}^{\textsc{de}}$ are consistent. 

\begin{theorem}\label{thm:asymnormal_ratio_PC_IV}
Suppose $q_n^\text{pre} \succ {\frac{\log(n)}{\log(\log(n))}} / n$.
Under Assumptions \ref{asu:network}, \ref{asu:linearY} and \ref{ass:low_rank}, and with $\max\{q_n^\text{pre}, q_n^\text{post}\} \prec \sqrt{q_n^\text{pre}}$,
then
\[
\left(\hat{V}^{\textsc{de}}\right)^{-1/2} 
\left( \hat{\beta}^{\textsc{de}} - \beta \right) 
\clt \mathcal{N}(0, I_3).
\]
\end{theorem}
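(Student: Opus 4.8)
The plan is to write $\hat{\beta}^{\textsc{de}}$ as a just-identified IV estimator and exploit the standard decomposition
\[
\hat{\beta}^{\textsc{de}} - \beta = \bigl( (\tilde{Z}^{\textsc{de}})^\top X \bigr)^{-1} (\tilde{Z}^{\textsc{de}})^\top u,
\]
where $u = (u_1,\dots,u_n)^\top$. The result then follows from three ingredients combined by Slutsky's theorem: (i) a central limit theorem for the numerator $\sum_{i=1}^n \tilde{Z}_i^{\textsc{de}} u_i$ after normalization by $(V^{\textsc{de}}_{\text{num}})^{-1/2}$; (ii) convergence of the Gram (first-stage) matrix $(\tilde{Z}^{\textsc{de}})^\top X$ after suitable scaling, whose invertibility is exactly the relevance already established in the course of proving Theorem \ref{thm:consistency_ratio_PC_IV}; and (iii) consistency of the plug-in variance estimator $\hat{V}^{\textsc{de}}_{\text{num}}$ for $V^{\textsc{de}}_{\text{num}}$. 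Because the final statement is self-normalized by $(\hat{V}^{\textsc{de}})^{-1/2}$, the (possibly sparsity-dependent) scale of each coordinate cancels, so I never need to identify $q_n^\text{pre}$ or $q_n^\text{post}$, mirroring the argument behind Theorem \ref{thm:asymnormal_ratio_IV}.

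The crucial preliminary step---and the main obstacle---is to replace the feasible denoised instrument $Z_i^{\textsc{de}}$, built from the estimated eigenvectors $\hat{\psi}_k$ of $A^\text{pre}$, by its oracle counterpart $Z_i^{\textsc{de}*}$, built from the population eigenvectors $\psi_k^*$ for which the variance formula \eqref{eq:V_num_PC} is stated. I would control the eigenvector estimation error $\|\hat{\psi}_k - \psi_k^*\|$ via a Davis--Kahan / $\sin\Theta$ bound, using the eigen-gap lower bound in Assumption \ref{ass:low_rank}(a) together with an operator-norm concentration bound of order $\sqrt{n q_n^\text{pre}}$ for $A^\text{pre}$ around $q_n^\text{pre} G_n^\text{pre}$; the hypothesis $q_n^\text{pre} \succ \frac{\log(n)}{\log(\log(n))}/n$ is precisely what guarantees this spectral concentration and hence a vanishing relative error in the estimated eigenvectors. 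I would then show that the feasible-minus-oracle differences in both the numerator $\sum_i (\tilde{Z}_i^{\textsc{de}} - \tilde{Z}_i^{\textsc{de}*}) u_i$ and the Gram matrix are $o_{\mathbb{P}}$ of the leading order of the respective oracle quantities, so that the feasible estimator is first-order equivalent to the oracle one.

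For the oracle numerator I would establish the joint CLT by the Cram\'er--Wold device. The first two coordinates are driven by $(1, T_i)$ and behave like sums of (conditionally) i.i.d.\ terms, while the third coordinate $\sum_i Z_i^{\textsc{de}*} u_i$ is a network-dependent sum whose dependence enters only through the pre-intervention adjacency $A^\text{pre}$. Conditioning on the latent variables $\{w_i\}$ renders the treatment shocks $\{T_j - \pi\}$ mutually independent, so I would decompose the third coordinate into a martingale-difference (or H\'ajek-type projection) array indexed by the treatment shocks and apply a martingale/dependency-graph CLT, checking a Lindeberg condition and that the conditional variance converges to the $(3,3)$ entry of $V^{\textsc{de}}_{\text{num}}$. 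The denoising is what makes this entry involve the projected residual $\eta_i^{u}$ rather than $u_i$, and what forces the $(2,3)$ and $(3,2)$ entries to vanish, as recorded in \eqref{eq:V_num_PC}.

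Finally I would prove consistency of $\hat{V}^{\textsc{de}}_{\text{num}}$. Using the consistency of $\hat{\beta}^{\textsc{de}}$ from Theorem \ref{thm:consistency_ratio_PC_IV}, the residuals satisfy $\hat{u}_i^{\textsc{de}} = u_i + o_{\mathbb{P}}(1)$ uniformly enough that the empirical second moments replace their population counterparts; combined with the eigenvector bound from the second paragraph, the estimated projected residuals obey $\hat{\eta}_i^{u} \approx \eta_i^{u}$, so the network-weighted sum $\sum_{i,j} A_{ij}^\text{pre} (\hat{\eta}_i^{u})^2$ concentrates on its expectation. Plugging these into the sandwich $\hat{V}^{\textsc{de}} = ((\tilde{Z}^{\textsc{de}})^\top X)^{-1}\hat{V}^{\textsc{de}}_{\text{num}}(X^\top \tilde{Z}^{\textsc{de}})^{-1}$ and invoking Slutsky yields the stated standard-normal limit. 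The most delicate point throughout is propagating the eigenvector estimation error---quantified once via Davis--Kahan---simultaneously through the CLT numerator and the variance estimator without it overwhelming the signal, which is why the sparsity floor on $q_n^\text{pre}$ and the eigen-gap condition are indispensable.
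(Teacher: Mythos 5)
Your proposal is correct and follows essentially the same route as the paper: reduce the feasible denoised instrument to its oracle counterpart via Davis--Kahan-type eigenvector bounds (the paper's Lemmas \ref{lemma:25}--\ref{lemma:29} and \ref{lemma:Ziphii_PC}, which yield $\sum_i \tilde{Z}_i^{\textsc{de}} u_i \approx \sum_{i}\sum_{j\neq i} A_{ij}^{\text{pre}}(T_j-\pi)\eta_i^u$), prove the CLT for that double sum by a dependency-graph/Stein argument, establish consistency of $\hat{V}^{\textsc{de}}_{\text{num}}$ by propagating the residual and eigenvector errors, and conclude by Slutsky with the Gram-matrix limit from Theorem \ref{thm:ZX-1_PC_limit}. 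Your observations about self-normalization and about the denoising being what zeroes the $(2,3)$ entries and substitutes $\eta_i^u$ for $u_i$ in the $(3,3)$ entry match the paper exactly.
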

Theorem \ref{thm:asymnormal_ratio_PC_IV} demonstrates the asymptotic normality of the IV estimators $\hat{\beta}^{\textsc{de}}$ using the modified SSIV, and the consistency of the variance estimator, which together validates the inference results based on the usual $t$ test. 
While comparing the asymptotic variances of $\hat{\beta}^{\textsc{iv}}$ and $\hat{\beta}^{\textsc{de}}$ is not straightforward, simulation results in Section \ref{sec:MC_IV} indicate that, in the regimes where both SSIV and modified SSIV yield consistent estimators, using the modified SSIV does not increase the variance.

\section{Monte Carlo Simulation}\label{sec:MC}
In this section, I provide simulation evidence to support my theory. I draw treatment indicator $T_i \overset{\text{i.i.d.}}{\sim} \text{Bern}(0.5)$.
I consider the following four designs for network generation: 
\paragraph{Design 1: Rank-3 Stochastic Block Model}
\begin{itemize}

\item $g^\text{pre}(w_i, w_j)
= \begin{cases}
3/5 & \text { if } \Phi(w_i) \leq \frac13 \text { and } \Phi(w_j) \le \frac13; \\
1/3 & \text { if } \frac13 < \Phi(w_i) \leq \frac23 \text { and } \frac13 < \Phi(w_j) \leq \frac23; \\
1/2 & \text { if } \frac23 < \Phi(w_i) \leq 1 \text { and }
\frac23 < \Phi(w_j) \leq 1; \\
1/5 & \text { otherwise}.
\end{cases}$

\item $g^\text{post}(w_i, w_j, T_i, T_j)
= \begin{cases}
3/5 & \text { if } \Phi(w_i(1-T_i)) \leq \frac13 \text { and } \Phi(w_j(1-T_j)) \leq \frac13; \\
1/3 & \text { if } \frac13 < \Phi(w_i(1-T_i)) \leq \frac23 \text { and } \frac13 < \Phi(w_j(1-T_j)) \leq \frac23; \\
1/2 & \text { if } \frac23 <\Phi(w_i(1-T_i)) \le 1 \text { and } \frac23 <\Phi(w_j(1-T_j)) \le 1; \\
1/5 & \text { otherwise}.    
\end{cases}$

\end{itemize}

\paragraph{Design 2: Homophily Model}
\begin{itemize}
\item $g^\text{pre}(w_i, w_j) = 1-(\Phi(w_i)-\Phi(w_j))^2$.
\item $g^\text{post}(w_i, w_j, T_j, T_j) = 1- \left(\Phi(w_i)(1-T_i) -\Phi(w_j)(1-T_j) \right)^2$.
\end{itemize}

\paragraph{Design 3: Beta Model}
\begin{itemize}
\item $g^\text{pre}(w_i, w_j) = \frac{\exp(\Phi(w_i) + \Phi(w_j))}{1+ \exp(\Phi(w_i) + \Phi(w_j))}$.
\item $g^\text{post}(w_i, w_j, T_j, T_j) =  \frac{\exp(\Phi(w_i) + \Phi(w_j) +  T_i + T_j + T_iT_j)}{1+ \exp(\Phi(w_i) + \Phi(w_j) + T_i + T_j + T_iT_j)}$.
\end{itemize}

\paragraph{Design 4: Homophily Model}
\begin{itemize}
\item $g^\text{pre}(w_i, w_j) = 1- \left(\Phi(w_i)-\Phi(w_j)\right)^2$.
\item $g^\text{post}(w_i, w_j, T_i, T_j) = 1- \left(\Phi(w_i(1-T_i))-\Phi(w_j (1-T_j))\right)^2 $.
\end{itemize}
For all four models, I set the diagonal elements of pre- and post-intervention networks to zero, i.e., $g^\text{pre}(w_i, w_i) = 0$ and $g^\text{post}(w_i, w_i, T_i, T_i) = 0$. 
The ranks $r$ for these four designs are $3$, $2$, $2$ and $2$, respectively.

I explore the sparsity regime where $q_n^\text{pre} = q_n^\text{post} = q_n$, considering different levels of sparsity.
The adjacency matrices for the pre- and post-intervention networks are generated as follows:
\begin{align*}
A_{ij}^\text{pre}
~=~& 1\left\{\eta_{i j} \leq q_n g^\text{pre}(w_i, w_j)\right\}, \\
A_{ij}^\text{post}
~=~& 1\left\{\eta_{i j} \leq q_n g^\text{post}(w_i, w_j, T_i, T_j)\right\},
\end{align*}
where $w_i \overset{\text{i.i.d.}}{\sim } \mathcal{N}(0,1)$ and $\eta_{ij} \overset{\text{i.i.d.}}{\sim } U[0,1]$ for $i<j$ and $\eta_{ij} = \eta_{ji}$.
Each simulated dataset has $n\in \{200, 800\}$ with $5,000$ replications.

Table \ref{tab:Mi} reports the within-sample mean and standard error of the mediator $M_i$ for varying levels of sparsity and sample sizes over four designs. 
The results for $q_n = n^{-1}$ show that Designs 1--4 maintain stable standard errors as the sample size increases. 
For $q_n = n^{-1/2}$, the standard errors for Designs 1--4 decrease with sample size, with Designs 3 and 4 shrinking at a faster rate. 
Finally, for $q_n = 1$, Designs 1 and 2 maintain stable standard errors as the sample size grows, while Designs 3 and 4 exhibit a reduction in standard errors by half when the sample size increases from 200 to 800.
Moreover, for Designs 3 and 4, when the network is denser, the standard errors of $M_i$ decrease, suggesting insufficient variation within the sample and a potential issue of near collinearity when fitting linear regression.
Consequently, I use Designs 1 and 2 to represent Case (a) with non-degenerate $\xi_i$, and Designs 3 and 4 to represent Case (a) with constant $\xi_i$.
Notably, both Designs 2 and 4 are homophily models, but how treatments enter the network model differs, consequently influencing the properties of $M_i$ and the performance of the estimators.

\begin{table}[ht]
\caption{$M_i$ in different designs}
\centering
\footnotesize
\label{tab:Mi}
\begin{tabular}{cl|ccc|ccc}
\hline \hline
Design & statistics & \multicolumn{6}{c}{$q_n$} \\
\hline
 & & \multicolumn{3}{c}{$n=200$} & \multicolumn{3}{c}{$n=800$}	\\						
\hline	
  &  & $-1$ & $-1/2$ & 1 & $-1$ & $-1/2$ & 1 \\
\hline 
1	&	mean	&	0.173	&	0.533	&	0.534	&	0.174	&	0.536	&	0.535	\\
	&	std	&	0.367	&	0.272	&	0.142	&	0.369	&	0.208	&	0.137	\\
\hline															
2	&	mean	&	0.282	&	0.507	&	0.507	&	0.284	&	0.508	&	0.508	\\
	&	std	&	0.420	&	0.199	&	0.134	&	0.422	&	0.170	&	0.135	\\
\hline															
3	&	mean	&	0.282	&	0.521	&	0.522	&	0.282	&	0.521	&	0.522	\\
	&	std	&	0.419	&	0.155	&	0.020	&	0.421	&	0.108	&	0.012	\\
\hline															
4	&	mean	&	0.312	&	0.523	&	0.523	&	0.313	&	0.523	&	0.523	\\
	&	std	&	0.425	&	0.141	&	0.012	&	0.427	&	0.099	&	0.006	\\
\hline
\hline
\end{tabular}
\caption*{\small Note: Within-sample mean and standard deviation of $M_i$ across four designs with $n \in \{200, 800\}$, $q_n \in \{n^{-1}, n^{-1/2}, 1\}$, and 5,000 replications.}
\end{table}

Section \ref{sec:MC_ols} presents the OLS estimation results without unobserved confounders, while Section \ref{sec:MC_IV} presents the IV estimation results accounting for unobserved confounders.
I omit the results of $\beta_0$ for brevity.

\subsection{OLS estimation }\label{sec:MC_ols}
I generate the outcomes as follows:
\[
Y_i = \beta_0+ \beta_1 T_i  + \beta_2 \frac{\sum_{j=1}^n A_{ij}^\text{post}T_j}{\sum_{j=1}^n A_{ij}^\text{post}}  +\varepsilon_i
\]
with $(\beta_0, \beta_1, \beta_2) = (1,1,0.5)$, $\varepsilon_i \overset{\text{i.i.d.}}{\sim} U[-1,1]$ and $\varepsilon_i \indep w_i$. 
Thus, there is no endogeneity issue.

Table \ref{tab:OLS_pn-1/2} presents the results for $q_n = n^{-1/2}$. 
The results for $q_n = n^{-2/3}$ (sparser networks) and $q_n = n^{-1/3}$ (denser networks) follow similar patterns and are therefore omitted.
The top panel of Table \ref{tab:OLS_pn-1/2} reports the results for $n = 200$, while the bottom panel shows the results for $n = 800$. 
I report the mean and the standard deviation of the OLS estimators across simulation draws under ``$\hat{\beta}^{\textsc{ols}}$'' and ``std($\hat{\beta}^{\textsc{ols}}$)'', using $A^\text{post}$ and $A^\text{pre}$ to construct the fraction of treated friends, respectively, in order to approximate the true value and asymptotic standard deviation of the estimators.
For the OLS fits using $A^\text{post}$, I also report the heteroskedastic consistent standard errors under ``s.e.'', and the corresponding coverage of 95\% Confidence Intervals (CIs) under ``coverage.''
For the OLS fits using $A^\text{pre}$, I report the coverage of 95\% CIs based on the corresponding heteroskedastic consistent standard error under ``coverage.''

In line with Theorem \ref{thm:consistency_ratio}, the estimators $\hat{\beta}^{\textsc{ols}}$ are consistent, with estimators for all four designs concentrated around the true values. 
As the sample size increases from 200 to 800, the standard deviations of $\hat{\beta}_1^{\textsc{ols}}$ across all models decrease by approximately half, consistent with the expected $\sqrt{n}$ convergence rate.
For $\hat{\beta}_2^{\textsc{ols}}$, the standard deviations in Designs 1 and 2 also decrease by half, whereas those in Designs 3 and 4 exhibit a slower convergence rate due to smaller variation in $M_i$ within the sample. 
In line with Theorem \ref{thm:asymnormal_ratio}, the coverage of the 95\% CIs using the heteroskedasticity-consistent standard errors is close to the target level.

However, if I mistakenly use $A^{\text{pre}}$ to calculate the mediator variable $M_i$, the coefficients may be inaccurate, and the coverage can become arbitrarily poor. It is worth noting that, for Designs 3 and 4, where the post-intervention networks have a weaker dependency on the treatment, the estimators are not as far off, and the coverage of the CIs is better than in Designs 1 and 2.

\begin{table}
\caption{Simulation results of OLS estimation with $q_n^\text{pre} = q_n^\text{post}=n^{-1/2}$}
\centering
\footnotesize
\label{tab:OLS_pn-1/2}
\begin{tabular}{cl|cccc|ccc}
\hline \hline
Design & $\beta$ & \multicolumn{4}{c}{OLS with $A^\text{post}$} & \multicolumn{3}{c}{OLS with $A^\text{pre}$} \\
\hline
& &  $\hat{\beta}^{\textsc{ols}}$ &  std($\hat{\beta}^{\textsc{ols}}$) & s.e. & coverage  & $\hat{\beta}^{\textsc{ols}}$ &  std($\hat{\beta}^{\textsc{ols}}$)  & coverage  \\
\hline 
\multicolumn{9}{c}{$n=200$}	\\						
\hline	
1	&	$\beta_1=1$	&	1.000	&	0.088	&	0.087	&	0.941	&	1.097	&	0.085	&	0.789	\\
	&	$\beta_2=0.5$	&	0.500	&	0.161	&	0.159	&	0.940	&	0.326	&	0.167	&	0.819	\\
\hline															
2	&	$\beta_1=1$	&	1.000	&	0.092	&	0.091	&	0.946	&	1.093	&	0.083	&	0.796	\\
	&	$\beta_2=0.5$	&	0.502	&	0.232	&	0.228	&	0.945	&	0.409	&	0.275	&	0.937	\\
\hline															
3	&	$\beta_1=1$	&	1.001	&	0.084	&	0.082	&	0.941	&	0.991	&	0.084	&	0.945	\\
	&	$\beta_2=0.5$	&	0.506	&	0.285	&	0.278	&	0.936	&	0.412	&	0.255	&	0.935	\\
\hline															
4	&	$\beta_1=1$	&	0.999	&	0.082	&	0.082	&	0.943	&	0.998	&	0.083	&	0.948	\\
	&	$\beta_2=0.5$	&	0.500	&	0.292	&	0.288	&	0.940	&	0.435	&	0.277	&	0.943	\\
\hline															\multicolumn{9}{c}{$n=800$}	\\							
\hline	
1	&	$\beta_1=1$	&	0.999	&	0.047	&	0.046	&	0.945	&	1.095	&	0.042	&	0.360	\\
	&	$\beta_2=0.5$	&	0.502	&	0.112	&	0.111	&	0.950	&	0.355	&	0.123	&	0.785	\\
\hline																
2	&	$\beta_1=1$	&	1.000	&	0.049	&	0.049	&	0.950	&	1.092	&	0.042	&	0.396	\\
	&	$\beta_2=0.5$	&	0.500	&	0.146	&	0.143	&	0.945	&	0.415	&	0.198	&	0.929	\\
\hline																
3	&	$\beta_1=1$	&	1.000	&	0.083	&	0.082	&	0.942	&	0.990	&	0.083	&	0.946	\\
	&	$\beta_2=0.5$	&	0.496	&	0.481	&	0.466	&	0.936	&	0.406	&	0.431	&	0.944	\\
\hline																
4	&	$\beta_1=1$	&	1.000	&	0.040	&	0.041	&	0.951	&	0.999	&	0.041	&	0.951	\\
	&	$\beta_2=0.5$	&	0.500	&	0.207	&	0.206	&	0.948	&	0.441	&	0.198	&	0.942	\\
\hline
\hline
\end{tabular}
\caption*{\small Note: Simulation results for the OLS estimators using $A^{\text{post}}$ and $A^{\text{pre}}$ with $n \in \{200, 800\}$, $q=n^{-1/2}$, and 5, 000 replications. }
\end{table}

\subsection{IV estimation with SSIV} \label{sec:MC_IV}
I generate the outcome as follows:
\[
Y_i = \beta_0+ \beta_1 T_i  + \beta_2 \frac{\sum_{j=1}^n A_{ij}^\text{post}T_j}{\sum_{j=1}^n A_{ij}^\text{post}}  
 +  (w_i + \varepsilon_i)/2
\]
with $(\beta_0, \beta_1, \beta_2) = (1,1,0.5)$, $\varepsilon_i \overset{\text{i.i.d.}}{\sim} U[-1,1]$ and $\varepsilon_i \indep w_i$. Hence, the error term is $u_i = (w_i + \varepsilon_i)/2$.

Tables \ref{tab:IV_pn-log} and \ref{tab:IV_pn-1/5} present the estimation results of IV fits using SSIV and modified SSIV with $q_n = {\frac{\log(n)}{\log(\log(n))}} / n$ and $q_n = n^{-1/5}$, respectively.
For each table, the top panel reports the results for $n = 200$, while the bottom panel shows the results for $n = 800$. 
The left panel presents the mean and standard deviation of the IV estimators using SSIV, along with our estimated standard errors and the coverage of 95\% CIs across simulations, reported as ``$\hat{\beta}^{\textsc{iv}}$'', ``std($\hat{\beta}^{\textsc{iv}})$'', ``s.e.'', and ``coverage''. The right panel provides the corresponding statistics for the IV estimators using modified SSIV, reported as ``$\hat{\beta}^{\textsc{de}}$'', ``std($\hat{\beta}^{\textsc{de}})$'', ``s.e.'', and ``coverage''.

I begin with Table \ref{tab:IV_pn-log}, where the network is relatively sparse.
In this regime, the network is relatively sparse, and Theorem \ref{thm:IV_consistency} asserts that the SSIV estimators $\hat{\beta}^{\textsc{iv}}$ are consistent. I observe that the estimators are concentrated around the true values. As the sample size increases from $200$ to $800$, the standard deviations of $\hat{\beta}_1^{\textsc{iv}}$ decrease by half, indicating a convergence rate of $\sqrt{n}$. However, the standard deviation of $\hat{\beta}_2^{\textsc{iv}}$ shows a slower convergence rate.
In line with Theorem \ref{thm:asymnormal_ratio_IV}, the coverage of the 95\% CIs using our variance estimate is close to the target level. 

Table \ref{tab:IV_pn-1/5} presents the results for the IV fits with $q_n = n^{-1/5}$, denser than the threshold where SSIV remains valid.
In Table \ref{tab:IV_pn-1/5}, I also report the results with $n = 1600$.
In line with Theorem \ref{thm:IV_consistency}, I observe that $\hat{\beta}^{\textsc{iv}}$ is no longer consistent for Designs 1 and 2; the estimators can be arbitrarily off, and the standard deviation does not shrink as the sample size increases. 
For Designs 3 and 4, $\hat{\beta}^{\textsc{iv}}_1$ remains consistent with convergence rate $\sqrt{n}$, while $\hat{\beta}^{\textsc{iv}}_2$ is not. 
By applying the modified SSIV, I effectively project out some noise and reduce the estimation error.
The IV estimator $\hat{\beta}^{\textsc{de}}_1$ is consistent with a convergence rate of $\sqrt{n}$, while $\hat{\beta}^{\textsc{de}}_2$ is consistent with a convergence rate of $1/\sqrt{q_n}$; the coverage of the 95\% confidence intervals for both estimators is approximately at the target level.

\begin{table}
\caption{Simulation results of IV estimation with $q_n = {\frac{\log(n)}{\log(\log(n))}} / n$}
\centering
\footnotesize
\label{tab:IV_pn-log}
\begin{tabular}{cl|cccc|ccccc}
\hline \hline
Design  &  $\beta$ & \multicolumn{4}{c}{SSIV} & \multicolumn{4}{c}{modified SSIV}  \\
\hline
&  & $\hat{\beta}^{\textsc{iv}}$ &  std($\hat{\beta}^{\textsc{iv}}$) & s.e. & coverage & $\hat{\beta}^{\textsc{de}}$ &  std($\hat{\beta}^{\textsc{de}}$) & s.e. & coverage \\
\hline 
\multicolumn{10}{c}{$n=200$}	\\					
\hline	
1	&	$\beta_1=1$	&	1.001	&	0.083	&	0.089	&	0.950	&	1.001	&	0.083	&	0.087	&	0.950	\\
	&	$\beta_2=0.5$	&	0.501	&	0.207	&	0.215	&	0.953	&	0.499	&	0.203	&	0.214	&	0.956	\\
\hline	
2	&	$\beta_1=1$	&	0.999	&	0.083	&	0.087	&	0.954	&	1.000	&	0.083	&	0.082	&	0.940	\\
	&	$\beta_2=0.5$	&	0.505	&	0.150	&	0.157	&	0.952	&	0.505	&	0.168	&	0.210	&	0.983	\\
\hline																
3	&	$\beta_1=1$	&	0.999	&	0.083	&	0.083	&	0.944	&	0.999	&	0.083	&	0.083	&	0.944	\\
	&	$\beta_2=0.5$	&	0.500	&	0.142	&	0.146	&	0.953	&	0.500	&	0.143	&	0.149	&	0.949	\\
\hline																
4	&	$\beta_1=1$	&	1.000	&	0.082	&	0.084	&	0.949	&	1.000	&	0.082	&	0.084	&	0.950	\\
	&	$\beta_2=0.5$	&	0.502	&	0.137	&	0.142	&	0.955	&	0.502	&	0.138	&	0.145	&	0.956	\\
\hline															\multicolumn{10}{c}{$n=800$}	\\							\hline	
1	&	$\beta_1=1$	&	1.000	&	0.040	&	0.044	&	0.955	&	1.000	&	0.040	&	0.042	&	0.954	\\
	&	$\beta_2=0.5$	&	0.501	&	0.099	&	0.101	&	0.951	&	0.501	&	0.098	&	0.097	&	0.946	\\
\hline																
2	&	$\beta_1=1$	&	1.000	&	0.040	&	0.042	&	0.957	&	1.000	&	0.040	&	0.042	&	0.956	\\
	&	$\beta_2=0.5$	&	0.500	&	0.076	&	0.076	&	0.948	&	0.501	&	0.078	&	0.081	&	0.958	\\
\hline																
3	&	$\beta_1=1$	&	1.000	&	0.041	&	0.041	&	0.950	&	1.000	&	0.041	&	0.041	&	0.950	\\
	&	$\beta_2=0.5$	&	0.500	&	0.072	&	0.071	&	0.946	&	0.500	&	0.072	&	0.072	&	0.946	\\
\hline																
4	&	$\beta_1=1$	&	1.000	&	0.041	&	0.041	&	0.950	&	1.000	&	0.041	&	0.041	&	0.950	\\
	&	$\beta_2=0.5$	&	0.501	&	0.068	&	0.069	&	0.957	&	0.501	&	0.068	&	0.069	&	0.953	\\
\hline
\hline
\end{tabular}
\caption*{\small Note: Simulation results for the IV estimators using SSIV and the modified SSIV with $n \in \{200, 800\}$, $q_n = {\frac{\log(n)}{\log(\log(n))}} / n$, and 5, 000 replications. }
\end{table}

\begin{table}[ht]
\caption{Simulation results of IV estimation with $q_n = n^{-1/5}$}
\centering
\footnotesize
\label{tab:IV_pn-1/5}
\begin{tabular}{cl|cccc|ccccc}
\hline \hline
Design  & $\beta$ & \multicolumn{4}{c}{SSIV} & \multicolumn{4}{c}{modified SSIV}  \\
\hline
&  & $\hat{\beta}^{\textsc{iv}}$ &  std($\hat{\beta}^{\textsc{iv}}$) & s.e. & coverage & $\hat{\beta}^{\textsc{de}}$ &  std($\hat{\beta}^{\textsc{de}}$) & s.e. & coverage \\
\hline 
\multicolumn{10}{c}{$n=200$}	\\		
\hline	
1	&	$\beta_1=1$	&	1.006	&	0.236	&	0.307	&	0.980	&	0.999	&	0.114	&	0.123	&	0.945	\\
	&	$\beta_2=0.5$	&	0.455	&	1.501	&	1.584	&	0.950	&	0.514	&	0.452	&	0.522	&	0.974	\\
\hline																
2	&	$\beta_1=1$	&	0.946	&	0.330	&	0.369	&	0.972	&	0.997	&	0.144	&	0.167	&	0.952	\\
	&	$\beta_2=0.5$	&	0.944	&	2.257	&	2.041	&	0.959	&	0.537	&	0.750	&	0.846	&	0.978	\\
\hline																
3	&	$\beta_1=1$	&	0.999	&	0.085	&	0.090	&	0.939	&	0.999	&	0.084	&	0.088	&	0.940	\\
	&	$\beta_2=0.5$	&	0.511	&	0.990	&	1.189	&	0.963	&	0.504	&	0.883	&	1.027	&	0.963	\\
\hline																
4	&	$\beta_1=1$	&	1.000	&	0.091	&	0.086	&	0.950	&	1.000	&	0.083	&	0.084	&	0.950	\\
	&	$\beta_2=0.5$	&	0.526	&	1.562	&	1.773	&	0.976	&	0.507	&	0.669	&	0.777	&	0.973	\\
\hline															\multicolumn{10}{c}{$n=800$}	\\								\hline	
1	&	$\beta_1=1$	&	1.023	&	0.379	&	0.401	&	0.989	&	1.001	&	0.077	&	0.082	&	0.947	\\
	&	$\beta_2=0.5$	&	0.376	&	2.167	&	2.106	&	0.947	&	0.498	&	0.348	&	0.380	&	0.966	\\
\hline																
2	&	$\beta_1=1$	&	0.879	&	1.380	&	0.453	&	0.973	&	0.998	&	0.092	&	0.103	&	0.953	\\
	&	$\beta_2=0.5$	&	1.278	&	8.703	&	2.507	&	0.957	&	0.509	&	0.463	&	0.521	&	0.971	\\
\hline
3	&	$\beta_1=1$	&	1.000	&	0.044	&	0.047	&	0.949	&	1.000	&	0.041	&	0.044	&	0.949	\\
	&	$\beta_2=0.5$	&	0.518	&	0.957	&	1.041	&	0.952	&	0.494	&	0.709	&	0.777	&	0.961	\\
\hline																
4	&	$\beta_1=1$	&	1.000	&	0.047	&	0.042	&	0.948	&	1.000	&	0.041	&	0.041	&	0.948	\\
	&	$\beta_2=0.5$	&	0.514	&	1.921	&	1.991	&	0.973	&	0.489	&	0.435	&	0.489	&	0.969	\\
 \hline															\multicolumn{10}{c}{$n=1600$}	\\								\hline	
1	&	$\beta_1=1$	&	1.018	&	0.907	&	0.908	&	0.990	&	1.000	&	0.068	&	0.071	&	0.944	\\
	&	$\beta_2=0.5$	&	0.405	&	5.250	&	5.023	&	0.968	&	0.499	&	0.320	&	0.340	&	0.960	\\
\hline	
2	&	$\beta_1=1$	&	0.800	&	2.976	&	3.023	&	0.979	&	0.999	&	0.077	&	0.085	&	0.953	\\
	&	$\beta_2=0.5$	&	1.685	&	17.444	&	17.494	&	0.916	&	0.506	&	0.395	&	0.438	&	0.967	\\
\hline	
3	&	$\beta_1=1$	&	1.001	&	0.031	&	0.030	&	0.943	&	1.001	&	0.031	&	0.030	&	0.946	\\
	&	$\beta_2=0.5$	&	0.507	&	0.876	&	0.903	&	0.958	&	0.499	&	0.579	&	0.615	&	0.958	\\
\hline	
4	&	$\beta_1=1$	&	1.000	&	0.034	&	0.030	&	0.953	&	1.000	&	0.029	&	0.029	&	0.952	\\
	&	$\beta_2=0.5$	&	0.477	&	2.201	&	2.219	&	0.970	&	0.501	&	0.370	&	0.414	&	0.970	\\
\hline
\hline
\end{tabular}
\caption*{\small Note: Simulation results for the IV estimators using SSIV and the modified SSIV with $n \in \{200, 800, 1600\}$, $q_n = n^{-1/5}$, and 5, 000 replications. }
\end{table}

\section{Empirical Application}\label{sec:app}
In this section, I revisit \cite{Prina2015}, which offered access to formal savings accounts to a random sample of poor households in 19 villages in Nepal.\footnote{\cite{Prina2015} conducted the RCT and collected network data but did not perform any network analysis.
\cite{ComolaPrina2021} studied the treatment effects while accounting for network changes and revisited \cite{Prina2015} as an empirical illustration. \cite{ComolaPrina2023} also revisited the RCT from \cite{Prina2015} and focused on dyadic-level regressions to assess the treatment's impact on financial flows. I integrate the panel network data released by \cite{ComolaPrina2023} with the outcome variables from \cite{Prina2015}. The treatment variable is available in both datasets.}
I apply the method in this paper to make estimations and inference on the causal effects of interest.

I now provide a more detailed description of the empirical setting of the RCT in \cite{Prina2015}, expanding on Example \ref{ex:Prina2015}.
Before the introduction of savings accounts, a baseline survey was conducted in May 2010 across 19 villages in Pokhara. All households with a female head aged 18-55 were surveyed.
Following the baseline survey, half of the female household heads were randomly assigned to the treatment group through a public lottery between late May and early June 2010, offering them savings accounts at the local bank.
The remaining half, in the control group, did not receive this offer. An endline survey was conducted in June 2011, after the intervention.
This study analyzes a sample of 915 households that participated in both survey waves.
Both surveys collected data on household socioeconomic characteristics and informal financial transactions.

Following \cite{ComolaPrina2023}, I define financial links as transfers--loans or gifts, either sent or received--between households, based on survey questions such as, ``With whom did you exchange loans or gifts?''
The resulting networks, both pre- and post-intervention, were sparse, with small groups and minimal clustering.
Despite the total number of links remaining stable (328 at baseline, 329 at endline), the network underwent significant reshuffling. As shown in Example \ref{ex:Prina2015}, 255 links were broken, and 256 new links formed by the endline. 
\cite{ComolaPrina2023} reported high uptake and usage of the savings accounts, with over 84\% of treated households opening accounts and depositing about 8\% of their baseline weekly income nearly every week during the first year.

Table \ref{tab:estimate_app} presents the results of OLS estimates and IV estimators using both SSIV and modified SSIV, across various outcomes, with corresponding standard errors shown in parentheses.
I also report the point estimates using the normalized SSIV, $\frac{\sum_{j=1}^n A_{ij}^\text{pre}T_j}{\sum_{j=1}^n A_{ij}^\text{pre}}$, without standard errors, as I do not conduct an asymptotic distribution analysis.
The top panel reports the results of household expenditures in the 30 days before the endline survey for different categories, measured by Nepalese rupee. 
The expenditure categories include health, education, festivals and ceremonies, meat and fish, fish, and other expenditures.\footnote{Other expenditures include clothes and footwear, personal care items, house cleaning articles, house maintenance, and bus and taxi fares.} 
The bottom panel focuses on education-related expenditures, including school fees, textbooks, uniforms, and school supplies (e.g., pens and pencils). 
For these expenditures, the sample is restricted to households with school-age children (ages 6--16).
For brevity, I omit the results for $\beta_0$ across the various estimation specifications.

The discrepancy between the OLS and the IV estimators using SSIV, in either the sign of the coefficients or their significance, highlights the bias caused by endogeneity due to unobserved confounders.
The IV results using SSIV show that the treatment affects various outcomes through different channels.
In the ``Fish'' expenditure column, the direct effect of access to savings accounts on fish consumption is not statistically significant, while the indirect effect, mediated through the fraction of treated friends, is significantly positive.
A $0.1$ increase in the fraction of friends with access to savings accounts leads to an increase in fish consumption by Rs. $25.29$. Since the fraction of friends with access to savings accounts is $0.062$ higher for the treatment group compared to the control group, this implies an additional Rs. $15.63$ in fish consumption for the treatment group.
The spillover effects, from assigning others to the treated group versus the control group, result in an increase in fish consumption by Rs. $252.91$.

For total education expenditures, as well as expenditures on textbooks and school supplies, the direct effects of access to savings accounts are positive and significant, while the indirect effects through fraction of treated friends are not significant.
Access to free savings accounts leads to an increase of Rs. 574.43, Rs. 223.26, and Rs. 104.47 in total education expenditure, school textbooks, and school supplies, respectively, compared to control units, holding the fraction of treated friends constant.
Our methods effectively disentangle the direct treatment effects from the indirect effects mediated through networks, providing a clearer understanding of the mechanisms by which the intervention influences the outcomes of interest.
Moreover, for outcomes where $\hat{\beta}_2^{\textsc{iv}}$ is not significant, the estimates $\hat{\beta}_1^{\textsc{ols}}$ and $\hat{\beta}_1^{\textsc{iv}}$ are closely aligned, as the treatment is randomly assigned and omitted variable bias is not a concern in these cases.

Furthermore, the point estimators from IV fits using SSIV and normalized SSIV are closely aligned, demonstrating the effectiveness of both versions of SSIV.
Despite the relative sparsity of the pre-intervention network, the results from the modified SSIV show that the point estimators are closely aligned with those from using the SSIV and normalized SSIV.
This coherence further suggests that the modification based on the estimated eigenvectors does not introduce excessive noise into the estimation.

For the sake of comparison, Table \ref{tab:estimate_app} also includes the corresponding results from \citet[Tables 5 and 6]{Prina2015}.\footnote{\citet{Prina2015} did not analyze the expenditure on ``Fish'', so I leave the result blank.} 
\citet{Prina2015} estimated the average effect of being assigned to the treatment group on each outcome variable, with explanatory variables including the treatment, the baseline value of the outcome variable, and a vector of baseline characteristics.\footnote{Baseline characteristics include age, years of education, marital status of the account holder; number of household members; baseline household income; and three dummies for the main source of household income.}
The treatment indicator's coefficient is referred to as the ITT due to the presence of non-compliance.
Similar to the OLS results in the top panel, for outcomes where $\hat{\beta}_2^{\textsc{iv}}$ in the IV fits is not significant, the ITT estimators reported by \citet{Prina2015} and the $\hat{\beta}_1^{\textsc{iv}}$ estimators from the IV fits are closely aligned. 
This alignment suggests that unobserved confounders do not bias the estimation of the (direct) treatment effect in these cases, as the treatment is randomly assigned.

\begin{table}
\caption{Estimation result based on data from \cite{Prina2015}}
\centering
\footnotesize
\label{tab:estimate_app}
\begin{tabular}{ccccccccc}
\hline \hline
Method	&	$\hat{\beta}$	& Total &	Health		&	Festivals 	 	&	Meat	&	 Fish &	Other	\\
	&			& Expend.	&	 &	\& Ceremonies		&	\& Fish & 	&	Expend. \\
\hline
OLS	&	$\hat{\beta}_1^{\textsc{ols}}$	&	141.072	&	-670.366	&	267.112*	&	99.824	&	48.505	&	206.433	\\
	&		&	(1030.252)	&	(644.920)	&	(141.412)	&	(96.476)	&	(39.683)	&	(687.006)	\\
	&	$\hat{\beta}_2^{\textsc{ols}}$	&	-39.811	&	-50.981	&	-157.612	&	151.700	&	-55.696	&	-277.696	\\
	&		&	(1125.085)	&	(773.712)	&	(119.425)	&	(115.674)	&	(35.796)	&	(648.408)	\\
\hline															
SSIV	&	$\hat{\beta}_1^{\textsc{iv}}$	&	-143.459	&	-729.080	&	232.207	&	107.500	&	33.276	&	106.640	\\
	&		&	(1122.582)	&	(681.558)	&	(153.465)	&	(106.339)	&	(42.139)	&	(741.432)	\\
	&	$\hat{\beta}_2^{\textsc{iv}}$	&	5726.414	&	1138.892	&	549.756	&	-3.858	&	252.913*	&	1744.679	\\
	&		&	(6997.446)	&	(4723.695)	&	(964.534)	&	(742.139)	&	(150.832)	&	(3573.983)	\\
\hline															
modified 	&	$\hat{\beta}_1^{\textsc{de}}$	&	-47.271	&	-745.607	&	272.332	&	135.879	&	27.030	&	129.909	\\
SSIV	&		&	(1130.105)	&	(704.263)	&	(155.618)	&	(110.431)	&	(42.708)	&	(738.796)	\\
	&	$\hat{\beta}_2^{\textsc{de}}$	&	3777.096	&	1473.816	&	-263.414	&	-578.965	&	379.510*	&	1273.114	\\
	&		&	(9034.384)	&	(6485.695)	&	(1314.334)	&	(962.715)	&	(206.481)	&	(4549.693)	\\
\hline															
normalized	&	$\hat{\beta}_1$	&	34.099	&	-760.932	&	229.050	&	97.796	&	42.299	&	300.023	\\
SSIV	&	$\hat{\beta}_2$	&	2128.082	&	675.746	&	613.734	&	192.805	&	70.062	&	-2174.355	\\
\hline															
\cite{Prina2015}	& ITT	&	340.780	&	-290.34	&	248.070*	&	153.600	&	--	&	49.120	\\
	&		&	(860.930)	&	(540.710)	&	(123.670)	&	(87.770)	&	--	&	(569.010)	\\
\hline
\hline
 & 	& Total Exp. 	& Exp. on	&	Exp. on & Exp. on &	Exp. on	&   \\
 & 	& 	in Education	& Sch. Fees &	Textbooks	& Sch. Uniforms	& Sch. Supplies	&	 \\
\hline 
OLS	&	$\hat{\beta}_1^{\textsc{ols}}$	&	562.276**	&	137.764	&	246.481**	&	74.264	&	103.767**	&		\\
	&		&	(281.922)	&	(162.391)	&	(105.114)	&	(63.945)	&	(45.880)	&		\\
	&	$\hat{\beta}_2^{\textsc{ols}}$	&	87.445	&	160.412	&	-91.316	&	-3.846	&	22.194	&		\\
	&		&	(336.824)	&	(212.844)	&	(111.511)	&	(78.980)	&	(56.184)	&		\\
\hline															
SSIV	&	$\hat{\beta}_1^{\textsc{iv}}$	&	574.427*	&	183.217	&	223.261*	&	63.479	&	104.469**	&		\\
	&		&	(305.988)	&	(180.058)	&	(117.887)	&	(67.727)	&	(50.387)	&		\\
	&	$\hat{\beta}_2^{\textsc{iv}}$	&	-109.182	&	-575.128	&	284.441	&	170.681	&	10.823	&		\\
	&		&	(1318.538)	&	(633.715)	&	(668.099)	&	(271.842)	&	(223.504)	&		\\
\hline															
modified	&	$\hat{\beta}_1^{\textsc{de}}$	&	576.798*	&	197.519	&	211.983*	&	64.979	&	100.048**	&		\\
SSIV	&		&	(295.529)	&	(175.501)	&	(114.828)	&	(65.675)	&	(48.519)	&		\\
	&	$\hat{\beta}_2^{\textsc{de}}$	&	-147.559	&	-806.571	&	466.935	&	146.417	&	82.376	&		\\
	&		&	(1395.153)	&	(718.228)	&	(644.052)	&	(314.866)	&	(219.500)	&		\\
\hline															
normalized	&	$\hat{\beta}_1$	&	560.511	&	191.865	&	214.801	&	57.291	&	96.554	&		\\
SSIV	&	$\hat{\beta}_2$	&	116.003	&	-715.069	&	421.338	&	270.828	&	138.906	&		\\
\hline															
\cite{Prina2015}	&	ITT	&	667.560**	&	224.720	&	213.740*	&	113.150*	&	115.950**	&		\\
	&		&	(320.630)	&	(193.600)	&	(120.220)	&	(65.400)	&	(49.980)	&		\\
\hline
\hline
\end{tabular}
\caption*{\small Note: Dependent variables are expressed in Nepalese rupees (the exchange rate was roughly Rs. 70 to USD 1 during the study period). The significance level of the estimators is indicated with stars in the usual manner: ``***'' means significant at $\alpha=1\%$, ``**'' means significant at $\alpha=5\%$, and``*'' means significant at $\alpha=10\%$. }
\end{table}

\section{Conclusion} \label{sec:conclusion}

This paper investigates the identification and inference of treatment effects in RCTs with network interference, focusing on two key aspects: (1) unobserved confounders affecting both network formation and outcomes, and (2) network changes induced by the intervention.
The framework demonstrates that treatment affects outcomes in two distinct ways: directly and indirectly through changes in the network mediator. 
Disentangling these channels offers deeper insights into the intervention's mechanisms and informs more effective policy design.
This paper presents methods for the estimation and inference of causal effects in the presence of network interference, addressing endogeneity concerns. 
In the absence of endogeneity, I recommend OLS estimation using post-intervention network data. 
When unobserved confounders are present, I recommend IV estimation with SSIV for relatively sparse networks and propose a modification to SSIV for denser networks. 
For all estimators, I provide consistent variance estimators for normal approximation, ensuring valid inference.
More generally, this paper highlights a concern regarding the relevance condition of SSIV as the network becomes denser. While I focus on the mean impact of peers, the discussion of the SSIV relevance condition and denoising modification also provides insights applicable to other forms of mediators.

This paper opens several avenues for future research. A natural extension would be to incorporate endogenous and correlated peer effects, along with covariates, into the linear model.
An expanding body of research highlights the importance of measuring and accounting for network changes when evaluating the welfare impacts of policies \citep{CarrellSacerdoteWest2013, Jackson2021, BanerjeeBrezaChandrasekhar2023}. While the literature extensively examines optimal treatment assignment under interference \citep{BairdBohrenMcIntosh2018b, CaiPouget-AbadieAiroldi2022, Viviano2023, Viviano2023a}, the challenge of designing effective policies that account for endogenous network evolution remains unresolved.
Furthermore, my methods rely on detailed panel network data, but relaxing this requirement to single-period or aggregated network data would be desirable \citep{AlidaeeAuerbachLeung2020, BrezaChandrasekharMcCormick2020}. 
I leave these questions for future research.

\newpage 

\bibliographystyle{ecta}
\bibliography{endogenous_network.bib}

\newpage

\begin{appendix} 
\appendix

\fontsize{11}{13.2}

\begin{center}
\Large\bfseries{Appendix for ``Endogenous Interference in Randomized Experiments''}
\end{center}

\addcontentsline{toc}{section}{Supplementary Material to ``Regression analysis with endogenous interference pattern"} %

The supplementary materials contain the following sections.

Section \ref{app:identification} gives the proofs for the identification. 

Section \ref{app:OLS} gives the proofs for the results of OLS estimation. 

Section \ref{app:IV} gives the proofs for the results of IV estimation with SSIV. 

Section \ref{app:PC_IV} gives the proofs for the results of IV estimation with the modified SSIV. 

Section \ref{app:IV_alt} gives the proofs for the results of IV estimation with the normalized SSIV. 

Section \ref{app:lemma} collects the proofs for some useful lemmas

\section*{Notation}

Let $N_i$ denote the degree of unit $i$ in network $A$, i.e., $N_i = \sum_{j=1}^n A_{ij}$. 
Define $g_0(i,j) = g^{\text{pre}}(w_i, w_j)$ and $g_1(i,j) = g^{\text{post}}(w_i, w_j, T_i, T_j)$. Define $g_0(k) = E(g_0(j,k) \mid w_k)$.
I use $\Omega(), \Omega_{\mathbb{P}}()$ in the following sense: 
$a_n = \Omega(b_n)$ if $a_n \geq C b_n$ for $n$ large enough, where $C$ is a positive constant; 
$X_n = \Omega_{\mathbb{P}}(b_n)$, if for any $\delta > 0$, there exists $M, N > 0$, s.t. $\mathbb{P}(|X_n| \leq Mb_n) \leq \delta$ for any $n > N$.
Let $\| \cdot \|_{\textup{op}}$ denote the operator norm and $\| \cdot \|_{\textup{F}}$ denote the Frobenious norm.
I use ``LLN'', ``CMT'' and ``CLT'' to denote ``law of large number'', ``continuous mapping theorem'' and ``central limit theorem,'' respectively.

\section{Proof of results in Section \ref{sec:setup}}\label{app:identification}

\begin{proof}[Proof of Theorem \ref{thm:identifiaction}]
The proof is adapted from the standard proof in the mediation analysis \citep{Pearl2001}.
We only prove for the discrete case, as the continuous and mixed variable cases follow by analogous arguments.
Given $T_{-i} = t_{-i}$ and $w_i=w$, we have 
\begin{align*}  
& E\left[Y_{i}\left(t_i, M_i(t_i', T_{-i}) \right) \mid T_{-i} = t_{-i}, w_i=w \right]   \\
=& \sum_{m_i \in \mathcal{M}} E\left(Y_i\left(t_i, M_i(t_i', t_{-i})\right) \mid M_i(t_i', t_{-i}) = m_i, T_{-i} = t_{-i}, w_i=w\right)  
\textup{Pr}\left( M_i(t_i', t_{-i}) = m_i \mid T_{-i} = t_{-i}, w_i=w \right) 
 \\
=& \sum_{m_i \in \mathcal{M}} 
E\left(Y_i(t_i, m_i) \mid M_i(t_i', t_{-i})=m_i, T_{-i} = t_{-i}, w_i=w \right)
\textup{Pr}\left( M_i(t_i', t_{-i}) = m_i \mid T_{-i} = t_{-i}, w_i=w \right)   \\
=& \sum_{m_i \in \mathcal{M}} E\left(Y_i(t_i, m_i) \mid T_{-i} = t_{-i}, w_i=w \right)
\textup{Pr}\left( M_i(t_i', t_{-i}) = m_i \mid T_{-i} = t_{-i}, w_i=w \right)   \\
=& \sum_{m_i \in \mathcal{M}} 
E\left(Y_i(t_i, m_i) \mid w_i=w \right)
\textup{Pr}\left( M_i = m_i \mid T_i = t_i',  T_{-i} = t_{-i}, w_i=w \right)  \\
=& \sum_{m_i \in \mathcal{M}} 
E\left(Y_i \mid T_i = t_i, M_i = m_i, w_i=w \right)
\textup{Pr}\left( M_i = m_i \mid T_i = t_i',  T_{-i} = t_{-i}, w_i=w \right). 
\end{align*}
Taking expectations over $T_{-i} \mid w_i=w$, we have 
\begin{align}  
& E\left[Y_{i}\left(t_i, M_i(t_i', T_{-i}) \right) \mid w_i=w \right] \notag \\
=& \sum_{m_i \in \mathcal{M}} 
\left( 
\begin{array}{c}
E\left(Y_i \mid T_i = t_i, M_i = m_i, w_i=w \right) \\
\cdot 
\sum\limits_{t_{-i}\in\{0,1\}^{n-1}} 
\textup{Pr}\left( M_i = m_i \mid T_i = t_i,  T_{-i} = t_{-i}, w_i=w \right) 
\textup{Pr}\left( T_{-i} = t_{-i} \mid w_i=w \right)   \\
\end{array}
\right)  \notag \\
=& \sum_{m_i \in \mathcal{M}} E\left(Y_i \mid T_i = t_i, M_i = m_i, w_i=w \right)
\textup{Pr}\left( M_i = m_i \mid T_i = t_i', w_i=w \right)  
\notag \\
=& E\left\{ E\left(Y_i \mid M_i, T_i = t_i, w_i=w \right) \mid T_i = t_i', w_i=w \right\}.
\label{eq:Y_t_t'}
\end{align}
If $t_i=t_i'$, we have 
$E\left[Y_{i}\left(t_i, M_i(t_i, T_{-i}) \right) \mid w_i=w \right] = E(Y_i \mid T_i=t_i, w_i=w )$.

By \eqref{eq:Y_t_t'}, with $t_i=t_i'$ and $T_{-i} = t_{n-1}$ and given $w_i=w$, we have 
\begin{align*}
& E\left[Y_{i}\left(t_i, M_i(t_i,t_{n-1} ) \right) \mid w_i=w \right]   \\
=& E\left\{ E\left(Y_i \mid M_i, T_i = t_i, w_i=w \right) \mid T_i = t_i, T_{-i} = t_{n-1}, w_i=w \right\}.
\end{align*}
Now we are ready to show the results. 
Define $\textup{DE}(t_i; w)$ as the conditional direct effect with $T_i=t_i$ and $w_i=w$,
$\textup{IE}(t_i; w)$ as the conditional indirect effect with $T_i=t_i$ and $w_i=w$,
$\textup{ToE}(w)$ as the conditional total effect with $w_i=w$,
and $\textup{SE}(t_i;w)$ as the conditional spillover effect with $T_i=t_i$ and $w_i=w$.
We only show $t_i=0$ since the results of $t_i=1$ follow from analogous arguments.
By \eqref{eq:Y_t_t'}, we can show that
\begin{align*}
\textup{DE}(0; w )
~=~& E\left[Y_i\left(1, M_i(0, T_{-i}) \right)-Y_i\left(0, M_i(0, T_{-i}) \right) \mid w_i=w \right] \\
~=~& E\left\{ E(Y_i \mid M_i, T_i=1, w_i=w ) \mid T_i=0, w_i=w \right\}
- E(Y_i \mid T_i=0, w_i=w ), \\
\textup{IE}(0;w)
~=~& E\left[Y_i\left(0, M_i(1, T_{-i}) \right) - Y_i\left(0, M_i(0, T_{-i}) \right) \mid w_i = w \right] \\
~=~& E\left\{ E(Y_i \mid M_i, T_i=0, w_i=w ) \mid T_i=1, w_i=w \right\}
- E(Y_i \mid T_i=0, w_i = w), \\
\textup{ToE}(w)
~=~& E\left[Y_i\left(1, M_i(1, T_{-i}) \right) - Y_i\left(0, M_i(0, T_{-i}) \right) \mid w_i = w \right] \\
~=~& E(Y_i \mid T_i=1, w_i = w)
- E(Y_i \mid T_i=0, w_i = w), \\
\textup{SE}(t_i; w) 
~=~& E\left[Y_i\left(t_i, M_i(t_i, T_{-i} = 1_{n-1}) \right) - Y_i\left(t_i, M_i(t_i, T_{-i} = 0_{n-1}) \right) \mid w_i = w \right] \\
~=~&   
E\left[ E\left(Y_i \mid M_i, T_i = t_i, w_i=w \right) \mid T_i = t_i, T_{-i} = 1_{n-1}, w_i=w \right] \\
&- E\left[ E\left(Y_i \mid M_i, T_i = t_i, w_i=w \right) \mid T_i = t_i, T_{-i} = 0_{n-1}, w_i=w \right].
\end{align*}
The desired result follows from marginalizing over the distribution of $w$.
\end{proof}

\begin{proof}[Proof of Corollary \ref{cor:identifiaction}]
We only prove for the discrete case, as the continuous and mixed variable cases follow by analogous arguments.
Under Assumption \ref{asu:linearY} and combing with Theorem \ref{thm:identifiaction}, the conditional DE equals
\[
\textup{DE}(w) = 
\sum_{m_i \in \mathcal{M}}
\beta_1 \textup{Pr}(M_i = m_i \mid T_i=0, w_i=w)     
= \beta_1,
\]
and thus $\textup{DE} = \beta_1$ by marginalizing over the distribution of $w$.
The conditional IE equals 
\begin{align*}
\textup{IE}(w)
=& \sum_{m_i \in \mathcal{M}} (\beta_0 + \beta_2 m_i + \lambda(w) )
\left\{ \textup{Pr}(M_i = m_i \mid T_i=1,w_i=w) - \textup{Pr}(M_i = m_i \mid T_i=0,w_i=w) \right\}  \\
=& \beta_2 \cdot
\left\{ E(M_i \mid T_i=1,w_i=w) - E(M_i \mid T_i=0,w_i=w) \right\}.
\end{align*}
Therefore, 
$\textup{IE} = \beta_2 \cdot \left\{ E(M_i \mid T_i=1) - E(M_i \mid T_i=0) \right\}$.
The conditional $\textup{ToE}$ equals 
\begin{align*}
\textup{ToE}(w)
=& E(Y_i \mid T_i=1, w_i = w )
- E(Y_i \mid T_i=0, w_i = w ) \\
=& \sum_{m_i \in \mathcal{M}} \left\{ (\beta_1 + \beta_2 m_i  )\cdot
\textup{Pr}(M_i = m_i \mid T_i=1, w_i=w) - \beta_2 m_i \cdot \textup{Pr}(M_i = m_i \mid T_i=0, w_i=w) 
\right\}  \\
=& \beta_1 + \beta_2 \cdot \left\{ E(M_i \mid T_i=1, w_i=w) - E(M_i \mid T_i=0, w_i=w) \right\}.
\end{align*}
Therefore, $\textup{ToE} = \beta_1 + \beta_2 \cdot \left\{ E(M_i \mid T_i=1) - E(M_i \mid T_i=0) \right\}$.
The conditional SE equals 
\begin{align*}
& \textup{SE}(t_i;w)
= 
\left[ 
\begin{array}{c}
E\left\{ E\left(Y_i \mid M_i, T_i = t_i, w_i=w \right) \mid T_i = t_i, T_{-i} = 1_{n-1}, w_i=w \right\} \\
- E\left\{ E\left(Y_i \mid M_i, T_i = t_i, w_i=w \right) \mid T_i = t_i, T_{-i} = 0_{n-1}, w_i=w \right\} 
\end{array}
\right] \\
&= 
\beta_2 
\sum_{m_i \in \mathcal{M}} m_i
\left[
\textup{Pr}(M_i = m_i \mid T_i=t_i, T_{-i} = 1_{n-1},w_i=w) - \textup{Pr}(M_i = m_i \mid T_i=t_i, T_{-i} = 0_{n-1},w_i=w) 
\right] \\
&= \beta_2 \cdot \left\{ E(M_i \mid T_i=t_i, T_{-i} = 1_{n-1},w_i=w) - E(M_i \mid T_i=t, T_{-i} = 0_{n-1},w_i=w) \right\}.
\end{align*}
Therefore, 
$\textup{SE}(t_i) = \beta_2 \cdot \left\{ E(M_i \mid T_i=t_i, T_{-i} = 1_{n-1}) - E(M_i \mid T_i=t_i, T_{-i} = 0_{n-1}) \right\}$.

\end{proof}

\section{Proof of results in Section \ref{sec:ols}}
\label{app:OLS}

\fontsize{11}{13.2}

Denote by $x_i = \frac{1}{n-1} \sum_{j\ne i} A_{ij}^{\text{post}} T_j$ and $y_i = \frac{1}{n-1} \sum_{j\ne i} A_{ij}^{\text{post}}$.
With the second-order Taylor expansion of $M_i$ at the conditional mean of $(x_i, y_i)$:
\[
\theta_i = 
\left(E(A_{ij}^{\text{post}} T_j \mid T_i, w_i), E(A_{ij}^{\text{post}} \mid T_i, w_i) \right), 
\]
we have $M_i = \xi_i + r_{0,i} + r_{1,i}$ 
where
\begin{align}
\xi_i 
= \frac{E(A_{ij}^{\text{post}} T_j \mid T_i, w_i)}{E(A_{ij}^{\text{post}} \mid T_i, w_i)} 
\label{eq:gi}
\end{align}
and
\begin{align*}
r_{0,i}
=& \frac{E(A_{ij}^{\text{post}} T_j \mid T_i, w_i)}{E(A_{ij}^{\text{post}} \mid T_i, w_i)} 
\frac{1}{n-1} \sum_{j\ne i} \left( \frac{ A_{ij}^{\text{post}}T_j }{E(A_{ij}^{\text{post}} T_j\mid T_i, w_i)} - \frac{A_{ij}^{\text{post}}}{E(A_{ij}^{\text{post}} \mid T_i, w_i)}  \right), \\
r_{1,i} 
=& - \frac{1}{\theta_i^*(y)^2} \frac{1}{(n-1)^2} 
\left( \sum_{j\ne i} \left(A_{ij}^{\text{post}} T_j - E(A_{ij}^{\text{post}} T_j \mid T_i, w_i) \right) 
\sum_{j\ne i} \left(A_{ij}^{\text{post}} - E(A_{ij}^{\text{post}} \mid T_i, w_i) \right)\right) \notag \\
&+ \frac{2\theta_i^*(x)}{\theta_i^*(y)^3} \frac{1}{(n-1)^2} \left(  \sum_{j\ne i} \left(A_{ij}^{\text{post}} - E(A_{ij}^{\text{post}} \mid T_i, w_i) \right) \right)^2 
\end{align*}
for some $\theta_i^* = (\theta_i^*(x), \theta_i^*(y))$ between $\theta_i$ and $(x_i, y_i)$. 
Note that $\xi_i$ is either a function of $T_i$ and $w_i$, making it i.i.d. across units and independent of $n$, or it equals $\pi$ when $A^{\text{post}}$ is conditionally independent of $T$.
To simplify notations, define 
\begin{align*}
R_{ij}
~=~& \frac{ A_{ij}^{\text{post}} T_j }{E(A_{ij}^{\text{post}} T_j\mid T_i, w_i)} - \frac{A_{ij}^{\text{post}}}{E(A_{ij}^{\text{post}} \mid T_i, w_i)}, \\
U_{ij} 
~=~& A_{ij}^{\text{post}} T_j - E(A_{ij}^{\text{post}} T_j \mid T_i, w_i), \\
W_{ij} 
~=~& A_{ij}^{\text{post}} - E(A_{ij}^{\text{post}} \mid T_i, w_i).
\end{align*}
By definition, we have 
\begin{align*}
E(R_{ij} \mid T_i, w_i) = 0, 
E(U_{ij} \mid T_i, w_i) = 0, 
\text{ and } E(W_{ij} \mid T_i, w_i) = 0. 
\end{align*}
Therefore, we can rewrite the remainder terms as 
\begin{align*}
r_{0,i}
~=~& \xi_i \frac{1}{n-1} \sum_{k\ne i} R_{ik}, \\
r_{1,i} 
~=~& \frac{1}{(n-1)^2} \sum_{k\ne i} \left( - \frac{U_{ik} W_{ik}}{\theta_i^*(y)^2} + \frac{2\theta_i^*(x)}{\theta_i^*(y)^3}  W_{ik}^2 \right) 
+ \frac{1}{(n-1)^2} \sum_{\substack{(k,l) \\ k\ne l}} \left( - \frac{U_{ik} W_{il}}{\theta_i^*(y)^2} + \frac{2\theta_i^*(x)}{\theta_i^*(y)^3}  W_{ik} W_{il} \right).
\end{align*}
Define $\mu_{r_1,i} = E(r_{1,i} \mid T_i, w_i)$. 
By direct algebra, we can show that
\begin{align}
\mu_{r_1,i} = \frac{1}{n-1} E\left( - \frac{U_{ik} W_{ik}}{\theta_i^*(y)^2} + \frac{2\theta_i^*(x)}{\theta_i^*(y)^3}  W_{ik}^2 \mid  T_i, w_i \right)
= O_{\mathbb{P}}\left( \frac{1}{n q_n^{\text{post}}} \right).
\label{eq:mu_r1i}
\end{align}
Then we can decompose $M_i$ as $M_i = \xi_i + \mu_{r_1,i} + r_{0,i} + (r_{1,i} - \mu_{r_1,i})$, where the last two terms are mean zero.

\subsection{Auxiliary Lemmas}

\begin{lemma} \label{lemma:consistencyM}
Under Assumptions \ref{asu:network}, we have 
\begin{align}
\frac{1}{n} \sum_{i=1}^n M_i 
=& \frac{1}{n} \sum_{i=1}^n (\xi_i + \mu_{r_1,i}) 
+ O_{\mathbb{P}}\left( \frac{1}{n \sqrt{q_n^{\text{post}}}} \right), 
\label{eq:M_i_con} \\
\frac{1}{n} \sum_{i=1}^n T_i M_i 
=& \frac{1}{n} \sum_{i=1}^n T_i (\xi_i + \mu_{r_1,i}) 
+ O_{\mathbb{P}}\left( \frac{1}{n \sqrt{q_n^{\text{post}}}} \right), 
\label{eq:T_iM_i_con} \\
\frac{1}{n} \sum_{i=1}^n M_i^2 
=& \frac{1}{n} \sum_{i=1}^n (\xi_i + \mu_{r_1,i})^2 
+ E\left((r_{0,i} + r_{1,i} - \mu_{r_1,i})^2 \right) 
+ O_{\mathbb{P}}\left( \frac{1}{n \sqrt{q_n^{\text{post}}}} \right).
\label{eq:M_i^2_con} 
\end{align}
\end{lemma}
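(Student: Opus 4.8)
The plan is to build on the Taylor decomposition already in hand, $M_i = (\xi_i + \mu_{r_1,i}) + r_{0,i} + (r_{1,i} - \mu_{r_1,i})$, and to abbreviate the conditionally centered remainder as $\tilde r_i := r_{0,i} + (r_{1,i}-\mu_{r_1,i})$, which satisfies $E(\tilde r_i \mid T_i, w_i) = 0$ by construction. All three displays then reduce to controlling averages of $\tilde r_i$. For \eqref{eq:M_i_con} and \eqref{eq:T_iM_i_con} I would bound $\frac1n\sum_i \tilde r_i$ and $\frac1n\sum_i T_i\tilde r_i$ through second moments: since each summand is conditionally mean zero, $E[(\frac1n\sum_i \tilde r_i)^2] = \frac1{n^2}\sum_{i,i'} E(\tilde r_i \tilde r_{i'})$, with the bounded weights $T_i\in\{0,1\}$ inserted for the second display. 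The diagonal contribution is dominated by $\V(r_{0,i}\mid T_i,w_i)$, which is $O_{\mathbb{P}}(1/(n q_n^{\text{post}}))$ because $r_{0,i} = \xi_i(n-1)^{-1}\sum_{k\ne i}R_{ik}$ is an average of the conditionally independent, mean-zero terms $R_{ik}$, each a difference of $\Theta(q_n^{\text{post}})$-mean indicators divided by $\Theta(q_n^{\text{post}})$ normalizers and hence of conditional variance $O(1/q_n^{\text{post}})$. Summing this diagonal over the $n$ units and dividing by $n^2$ produces exactly the order $(n\sqrt{q_n^{\text{post}}})^{-2}$, i.e. the claimed rate $O_{\mathbb{P}}(1/(n\sqrt{q_n^{\text{post}}}))$.

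For \eqref{eq:M_i^2_con} I would expand $M_i^2 = (\xi_i+\mu_{r_1,i})^2 + 2(\xi_i+\mu_{r_1,i})\tilde r_i + \tilde r_i^2$ and average term by term. The first block reproduces $\frac1n\sum_i(\xi_i+\mu_{r_1,i})^2$ verbatim. The cross block $\frac2n\sum_i(\xi_i+\mu_{r_1,i})\tilde r_i$ is once more a conditionally centered average with bounded weights $\xi_i+\mu_{r_1,i}=O_{\mathbb{P}}(1)$, so the same second-moment computation places it at $O_{\mathbb{P}}(1/(n\sqrt{q_n^{\text{post}}}))$. The quadratic block $\frac1n\sum_i \tilde r_i^2$ carries a non-vanishing mean $E(\tilde r_i^2) = E((r_{0,i}+r_{1,i}-\mu_{r_1,i})^2)$, which is precisely the deterministic term retained on the right-hand side; I would then establish a law of large numbers for the dependent sequence $\tilde r_i^2$, showing that its deviation from this mean is of smaller order than $1/(n\sqrt{q_n^{\text{post}}})$ by bounding $\V(\frac1n\sum_i \tilde r_i^2)$ through $\V(\tilde r_i^2)$ together with the pairwise covariances.

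The key step throughout, and the one I expect to be the main obstacle, is the control of the off-diagonal covariances $E(\tilde r_i \tilde r_{i'})$ for $i\ne i'$ (and their analogues in the quadratic block), which do not vanish because $r_{0,i}$ and $r_{0,i'}$ are coupled through the shared post-intervention network: they share the common edge $A_{ii'}^{\text{post}}$ and, more consequentially, every common neighbor $k$ enters both $R_{ik}$ and $R_{i'k}$. I would handle these by rewriting $r_{0,i}$ as an edge average $\propto (n-1)^{-1}\sum_{k\ne i} A_{ik}^{\text{post}}(T_k-\xi_i)$, conditioning on all latent variables and treatments so that distinct edges of $A^{\text{post}}$ become independent Bernoulli draws, and then classifying the index overlaps between the two double sums (identical edge, reversed edge, shared neighbor, fully disjoint) and bounding each class separately. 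The shared-neighbor terms are the delicate ones: showing that, after aggregation and normalization, they stay below $1/(n\sqrt{q_n^{\text{post}}})$ rather than producing a slower, globally treatment-driven fluctuation is the crux, and it is here that Assumption \ref{asu:network}'s lower bound $q_n^{\text{post}}\succcurlyeq n^{-1}$ and the conditional-independence structure of the graphon model must be used most carefully. The centered second-order remainder $r_{1,i}-\mu_{r_1,i}$ is treated analogously but is higher order, being smaller than $r_{0,i}$ by a factor of order $(n q_n^{\text{post}})^{-1/2}$ from the additional averaging in $r_{1,i}$, so it never dominates.
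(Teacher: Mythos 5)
Your proposal follows essentially the same route as the paper: the paper's own proof of this lemma is a one-step application of the Taylor decomposition $M_i = (\xi_i+\mu_{r_1,i}) + r_{0,i} + (r_{1,i}-\mu_{r_1,i})$ combined with Lemma \ref{lemma:airi_ri^2}, whose proof is precisely your second-moment argument --- diagonal terms of order $1/(n^2 q_n^{\text{post}})$ via the moment bounds of Lemma \ref{lemma:Eri^c}, and off-diagonal covariances controlled by classifying the index overlaps (shared edge, shared neighbor, disjoint) in Lemma \ref{lemma:Cov_order}. Your identification of the shared-neighbor covariance terms as the crux matches exactly where the paper concentrates its effort.
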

\begin{proof}[Proof of Lemma \ref{lemma:consistencyM}]
See proof in Section \ref{proof:consistencyM}.
\end{proof}

\begin{lemma}\label{lemma:inverselimit}
If $\Delta = O_{\mathbb{P}}(\mu)$ with $\mu = o(1)$, and $\Lambda$ converges in probability to a finite and invertible matrix, then $(\Lambda + \Delta)^{-1} - \Lambda^{-1} = O_{\mathbb{P}}(\mu)$.
\end{lemma}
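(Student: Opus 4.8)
The plan is to combine the standard resolvent (second resolvent) identity with the continuous mapping theorem, so that the whole argument reduces to submultiplicativity of the operator norm. First I would denote by $\Lambda_0$ the finite, invertible matrix to which $\Lambda$ converges, so that $\Lambda \lln \Lambda_0$. Since $\mu = o(1)$ and $\Delta = O_{\mathbb{P}}(\mu)$, it follows that $\Delta = o_{\mathbb{P}}(1)$, and hence $\Lambda + \Delta \lln \Lambda_0$ as well. The point of this step is to place both $\Lambda$ and $\Lambda + \Delta$ in a shrinking neighborhood of the \emph{invertible} matrix $\Lambda_0$.

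Next I would use the fact that matrix inversion is continuous on the open set of invertible matrices, together with the continuous mapping theorem. Because $\Lambda_0$ is invertible, $\Lambda + \Delta$ is invertible with probability approaching one, and both $\Lambda^{-1} \lln \Lambda_0^{-1}$ and $(\Lambda + \Delta)^{-1} \lln \Lambda_0^{-1}$; in particular each inverse is $O_{\mathbb{P}}(1)$. (The vanishing-probability event on which $\Lambda + \Delta$ fails to be invertible can be absorbed into the $O_{\mathbb{P}}$ bookkeeping.) With both inverses under control, I would invoke the resolvent identity
\[
(\Lambda + \Delta)^{-1} - \Lambda^{-1} = -(\Lambda + \Delta)^{-1}\, \Delta\, \Lambda^{-1},
\]
which holds whenever both matrices are invertible, and then take operator norms. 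Submultiplicativity of $\| \cdot \|_{\textup{op}}$ gives
\[
\left\| (\Lambda + \Delta)^{-1} - \Lambda^{-1} \right\|_{\textup{op}} \le \left\| (\Lambda + \Delta)^{-1} \right\|_{\textup{op}} \left\| \Delta \right\|_{\textup{op}} \left\| \Lambda^{-1} \right\|_{\textup{op}} = O_{\mathbb{P}}(1)\cdot O_{\mathbb{P}}(\mu)\cdot O_{\mathbb{P}}(1) = O_{\mathbb{P}}(\mu),
\]
which is exactly the claimed conclusion.

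The only genuine subtlety — and the closest thing to an obstacle — is justifying that $(\Lambda + \Delta)^{-1} = O_{\mathbb{P}}(1)$, i.e.\ that the perturbed matrix does not become nearly singular along some sequence. This is precisely where the hypothesis that $\Lambda_0$ is \emph{invertible} (rather than merely finite) is essential: invertibility provides a neighborhood of $\Lambda_0$ on which inversion is continuous and bounded, and since $\Lambda + \Delta$ concentrates in that neighborhood, its smallest singular value stays bounded away from zero in probability. Once this stochastic boundedness is secured, the remaining steps are routine norm algebra, so I would keep the write-up brief and devote the care to the continuous-mapping step.
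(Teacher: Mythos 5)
Your proposal is correct and follows essentially the same route as the paper: both establish $(\Lambda+\Delta)^{-1}=O_{\mathbb{P}}(1)$ via convergence to the invertible limit, then express $(\Lambda+\Delta)^{-1}-\Lambda^{-1}$ through an algebraic identity sandwiching $\Delta$ between stochastically bounded inverses and conclude by $O_{\mathbb{P}}$ arithmetic. The only cosmetic difference is that you use the standard second resolvent identity $-(\Lambda+\Delta)^{-1}\Delta\Lambda^{-1}$ while the paper uses the equivalent two-term identity $\Lambda^{-1}\Delta(\Lambda+\Delta)^{-1}\Lambda^{-1}-\Lambda^{-1}\Delta\Lambda^{-1}$; both yield the same bound.
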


\begin{proof}[Proof of Lemma \ref{lemma:inverselimit}]
This is Lemma A5 in \cite{SuDing2021}, which is useful for deriving the probability limit of the inverse of a matrix.
With $\Delta = o_{\mathbb{P}}(1)$, we have $\Lambda + \Delta = \Lambda + o_{\mathbb{P}}(1)$. Because $\Lambda$ converges in probability to a finite and invertible matrix, $(\Lambda + \Delta)^{-1} = \Lambda^{-1} + o_{\mathbb{P}}(1)$. \citet[Lemma A10]{LiDing2020} stated that
\[
(\Lambda + \Delta)^{-1} - \Lambda^{-1} = \Lambda^{-1} \Delta (\Lambda + \Delta)^{-1} \Lambda^{-1} - \Lambda^{-1} \Delta \Lambda^{-1},
\]
which implies
\[
(\Lambda + \Delta)^{-1} - \Lambda^{-1} 
= O_{\mathbb{P}}(1) O_{\mathbb{P}}(\mu) O_{\mathbb{P}}(1) = O_{\mathbb{P}}(\mu).
\]        
\end{proof}

\begin{lemma}\label{lemma:airi_ri^2}
Define $a_i$ as an i.i.d. random variable with nonzero mean and constant variance.
Under Assumption \ref{asu:network}, then
\begin{align}
\frac{1}{n}\sum_{i=1}^n a_i (r_{0,i} + r_{1,i} - \mu_{r_1,i})
~=~& O_{\mathbb{P}}\left( \frac{1}{n \sqrt{q_n^{\text{post}}}} \right), 
\label{eq:airi} \\
\frac{1}{n} \sum_{i=1}^n (r_{0,i} + r_{1,i} - \mu_{r_1,i})^2 
~=~& E\left((r_{0,i} + r_{1,i} - \mu_{r_1,i})^2 \right) 
+ O_{\mathbb{P}}\left( \frac{1}{\sqrt{n}  n q_n^{\text{post}}} \right)
+ O_{\mathbb{P}}\left( \frac{1}{n} \right).
\label{eq:ri^2}
\end{align}    
\end{lemma}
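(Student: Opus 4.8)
The plan is to prove both displays by reducing them to second-moment bounds and then applying Markov's inequality. The structural facts I would use are all set up in the preamble to this appendix: $\rho_i := r_{0,i} + r_{1,i} - \mu_{r_1,i}$ equals $M_i - E(M_i \mid T_i, w_i)$, so it is centered conditionally on $(T_i,w_i)$; one has $r_{0,i} = \xi_i \frac{1}{n-1}\sum_{k\ne i} R_{ik}$ where, conditionally on $(T_i,w_i)$, the summands $\{R_{ik}\}_{k\ne i}$ (and likewise $\{U_{ik}\}$, $\{W_{ik}\}$) are independent and mean zero; and a direct computation gives the per-dyad conditional moments $E(R_{ik}^2\mid T_i,w_i)\asymp 1/q_n^{\text{post}}$, $E(W_{ik}^2\mid T_i,w_i)\asymp q_n^{\text{post}}$, $E(U_{ik}W_{ik}\mid T_i,w_i)\asymp q_n^{\text{post}}$, together with $\mu_{r_1,i}=O_{\mathbb{P}}(1/(nq_n^{\text{post}}))$ from \eqref{eq:mu_r1i}. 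These yield the individual orders $E(r_{0,i}^2)\asymp 1/(nq_n^{\text{post}})$ and $r_{1,i}-\mu_{r_1,i}=O_{\mathbb{P}}(1/(nq_n^{\text{post}}))$, hence $E(\rho_i^2)\asymp 1/(nq_n^{\text{post}})$, which anticipates the rates appearing on the right-hand sides.

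For \eqref{eq:airi}, since $a_i$ is a function of $(T_i,w_i)$ only, the tower property and conditional centering give $E(a_i\rho_i)=0$, so $\frac1n\sum_i a_i\rho_i$ has mean zero and it suffices to bound $E[(\frac1n\sum_i a_i\rho_i)^2]=\frac{1}{n^2}\sum_{i,j}E(a_ia_j\rho_i\rho_j)$. The diagonal is immediate: $\frac{1}{n^2}\sum_i E(a_i^2\rho_i^2)\le \frac{C}{n^2}\sum_i E(\rho_i^2)=O(1/(n^2 q_n^{\text{post}}))$, which already matches the target. For the off-diagonal terms I would substitute $\rho_i=\xi_i\frac{1}{n-1}\sum_k R_{ik}+(r_{1,i}-\mu_{r_1,i})$, expand $E(a_ia_j\rho_i\rho_j)$ as a sum over the inner indices, and classify each term by how those indices and the endpoints $i,j$ coincide. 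Terms carrying a ``fresh'' unmatched index vanish by conditional centering; only configurations in which a common third node $(w_k,T_k)$ or the dyadic disturbance $\eta_{ij}=\eta_{ji}$ is shared between $\rho_i$ and $\rho_j$ survive. Each surviving configuration is then bounded using independence of the $\eta$'s across distinct dyads together with the per-dyad moment orders above, and multiplied by the number of index choices; Markov's inequality finally converts the second-moment bound into the stated $O_{\mathbb{P}}$ rate.

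For \eqref{eq:ri^2} I would write $\frac1n\sum_i\rho_i^2 = E(\rho_i^2)+\frac1n\sum_i\big(\rho_i^2-E(\rho_i^2)\big)$ and control the centered average through its variance. Expanding $\rho_i^2 = r_{0,i}^2 + 2 r_{0,i}(r_{1,i}-\mu_{r_1,i}) + (r_{1,i}-\mu_{r_1,i})^2$, the dominant fluctuation comes from $\frac1n\sum_i(r_{0,i}^2-E r_{0,i}^2)$. Its variance is obtained from the fourth-order moment structure of the conditionally independent $R_{ik}$: the $\sum_k E(R_{ik}^4)$ and $\sum_{k\ne l}E(R_{ik}^2)E(R_{il}^2)$ contributions give $\V(r_{0,i}^2)\asymp 1/(nq_n^{\text{post}})^2 + 1/(n^3 (q_n^{\text{post}})^3)$, so that the diagonal part of $\V\big(\frac1n\sum_i r_{0,i}^2\big)$ is $O(1/(n^3 (q_n^{\text{post}})^2))$, producing the $O_{\mathbb{P}}(1/(\sqrt n\, n q_n^{\text{post}}))$ term after taking square roots. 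The remaining cross term $2r_{0,i}(r_{1,i}-\mu_{r_1,i})$ and the term $(r_{1,i}-\mu_{r_1,i})^2$, being of strictly smaller individual order, contribute—after the same index bookkeeping applied to their centered averages—only the residual $O_{\mathbb{P}}(1/n)$ term.

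The step I expect to be the main obstacle, in both parts, is the control of the off-diagonal, cross-unit covariances. Because $\rho_i$ and $\rho_j$ both depend on every common neighbor $k\notin\{i,j\}$ through its shared $(w_k,T_k)$, and share the disturbance $\eta_{ij}$, these covariances are not a priori negligible; the ``shared common neighbor'' configuration in particular is the one that must be shown to be subdominant rather than to inflate the rate. Making this rigorous requires an exhaustive enumeration of the index-coincidence patterns and a careful tally of the resulting powers of $n$ and $q_n^{\text{post}}$, leaning throughout on the conditional-independence-across-$k$ structure to eliminate the terms with an unmatched index. This bookkeeping, rather than any single estimate, is the technical heart of the argument and the place where the claimed sparsity-dependent rates are actually won or lost.
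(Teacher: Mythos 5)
Your overall strategy is the same as the paper's: show the mean is what it should be (zero for \eqref{eq:airi}, $E(\rho_i^2)$ for \eqref{eq:ri^2} with $\rho_i:=r_{0,i}+r_{1,i}-\mu_{r_1,i}$), bound the variance by splitting into a diagonal part controlled by the moment bounds $E(\rho_i^c)=O((nq_n^{\text{post}})^{-c/2})$ and an off-diagonal part controlled by enumerating which index-coincidence configurations survive conditional centering, and finish with Chebyshev. The paper packages exactly this bookkeeping into Lemma \ref{lemma:Eri^c} and Lemma \ref{lemma:Cov_order}, and you correctly identify the cross-unit covariances (shared common neighbors, shared dyad) as the technical heart. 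For \eqref{eq:airi} your outline matches the paper's proof step for step.

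For \eqref{eq:ri^2}, however, your accounting of where the two error terms come from is wrong in a way that matters. You compute only the \emph{diagonal} contribution of $\frac{1}{n}\sum_i\bigl(r_{0,i}^2-E r_{0,i}^2\bigr)$ to the variance, obtain $O(1/(n^3(q_n^{\text{post}})^2))$, and then attribute the residual $O_{\mathbb{P}}(1/n)$ to the within-unit cross terms $2r_{0,i}(r_{1,i}-\mu_{r_1,i})$ and $(r_{1,i}-\mu_{r_1,i})^2$. In the paper those within-unit terms contribute only at the first rate (their cross-unit covariances are $O(1/(n^3(q_n^{\text{post}})^2))$; see parts (c)--(f) of the proof of Lemma \ref{lemma:Cov_order}). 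The $O_{\mathbb{P}}(1/n)$ term actually originates in the \emph{off-diagonal} covariance of the leading terms themselves, $\textup{Cov}(r_{0,i}^2,r_{0,j}^2)$, through the configuration in which units $i$ and $j$ share two common neighbors: the term
\begin{align*}
\frac{1}{(n-1)^4}\sum_{(k,l):k\ne l}\textup{Cov}\bigl(\xi_i^2R_{ik}R_{il},\ \xi_j^2R_{jk}R_{jl}\bigr)
= O\!\left(\frac{n^2(q_n^{\text{post}})^4}{n^4(q_n^{\text{post}})^4}\right)=O\!\left(\frac{1}{n^2}\right),
\end{align*}
which after summing over the $n^2$ pairs, dividing by $n^2$, and taking a square root yields exactly the $O_{\mathbb{P}}(1/n)$ in the statement. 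As written, your argument would conclude that $\frac{1}{n}\sum_i(r_{0,i}^2-Er_{0,i}^2)=O_{\mathbb{P}}(1/(\sqrt{n}\,nq_n^{\text{post}}))$, which is too fast in dense regimes (for $q_n^{\text{post}}\asymp 1$ this would claim $n^{-3/2}$ where the correct order is $n^{-1}$). The fix is simply to carry the index bookkeeping you defer through the off-diagonal sum for $r_{0,i}^2$ as well; once that configuration is counted, your argument lands on the paper's rates.
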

\begin{proof}[Proof of Lemma \ref{lemma:airi_ri^2}]
See proof in Section \ref{proof:airi_ri^2}.
\end{proof}

\subsection{Consistency}
To establish the consistency of the OLS estimates $\hat{\beta}^{\textsc{ols}} = ( X^\top X )^{-1} ( X^\top Y)$, we first show the probability limit of $( X^\top X )^{-1}$.

\begin{theorem}\label{thm:XX-1}
Assume Assumptions \ref{asu:network} and \ref{asu:linearY}.
\paragraph{Case (a):}
Define 
\begin{align*}
\Lambda
= \frac{1}{n} \sum_{i=1}^n 
\begin{pmatrix}
1 & T_i & \xi_i + \mu_{r_1,i} \\
T_i & T_i & T_i(\xi_i + \mu_{r_1,i}) \\
\xi_i + \mu_{r_1,i} & T_i(\xi_i + \mu_{r_1,i}) & (\xi_i + \mu_{r_1,i})^2 + E((r_{0,i} + r_{1,i} - \mu_{r_1,i})^2)  
\end{pmatrix}.
\end{align*}
Then 
$X^\top  X = \Lambda + O_{\mathbb{P}}\left( \frac{1}{n \sqrt{q_n^{\text{post}}}} \right) $, and thus $(X^\top  X)^{-1} = E(\Lambda)^{-1} + O_{\mathbb{P}}\left( \frac{1}{ \sqrt{n}} \right)$.

\paragraph{Case (b):} We show that
\begin{align*}
& \left( X^\top  X \right)^{-1} 
= \frac{\begin{pmatrix}
a_{11}^* + O_{\mathbb{P}}\left( \frac{1}{\sqrt{n} } \right) 
& a_{12}^* + O_{\mathbb{P}}\left( \frac{1}{n \sqrt{q_n^{\text{post}}} } \right) 
& a_{13}^* + O_{\mathbb{P}}\left( \frac{1}{\sqrt{n}} \right) \\
a_{12}^* + O_{\mathbb{P}}\left( \frac{1}{n \sqrt{q_n^{\text{post}}} } \right) & a_{22}^* + O_{\mathbb{P}}\left( \frac{1}{ n \sqrt{n}  q_n^{\text{post}} } \right) & O_{\mathbb{P}}\left( \frac{1}{n\sqrt{q_n^{\text{post}}}} \right) \\
a_{13}^* + O_{\mathbb{P}}\left( \frac{1}{\sqrt{n}} \right) & O_{\mathbb{P}}\left( \frac{1}{n\sqrt{q_n^{\text{post}}}} \right) & a_{33}^* + O_{\mathbb{P}}\left( \frac{1}{\sqrt{n}} \right)
\end{pmatrix}}{ a_{33}^* a_{22}^* - (a_{23}^*)^2
+ O_{\mathbb{P}} \left( \frac{1}{n \sqrt{n} q_n^{\text{post}}} \right) }
\end{align*}
where 
\begin{align*}
a_{11}^*
~=~& \pi E\left((\xi + \mu_{r_1,i})^2 \right)
- \left( E\left(T_i (\xi + \mu_{r_1,i})\right) \right)^2
+ \pi E\left((r_{0,i} + r_{1,i} - \mu_{r_1,i})^2 \right) \\
a_{12}^*
~=~& 
- \pi \V(\mu_{r_1,i} ) 
- \pi E\left( (r_{0,i} + r_{1,i} - \mu_{r_1,i})^2 \right) \\
a_{13}^*
~=~& - \pi (1-\pi) E\left(\xi + \mu_{r_1,i}\right)  \\
a_{22}^*
~=~& \V(\mu_{r_1,i}^2)  
+ E\left((r_{0,i} + r_{1,i} - \mu_{r_1,i})^2\right) \\
a_{33}^* 
~=~& \pi(1-\pi).
\end{align*}

\end{theorem}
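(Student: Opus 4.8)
The plan is to read the entries of the normalized Gram matrix $n^{-1}X^{\top}X$ off Lemma~\ref{lemma:consistencyM} and then invert --- directly in Case~(a), where the probability limit is nonsingular, and through a careful adjugate/Schur-complement computation in Case~(b), where the limit degenerates. Concretely I would write $n^{-1}X^{\top}X = n^{-1}\sum_{i=1}^n X_iX_i^{\top}$ with $X_i=(1,T_i,M_i)$, and use $T_i^2=T_i$ to reduce it to the distinct entries $1$, $\bar T$, $n^{-1}\sum_i M_i$, $n^{-1}\sum_i T_iM_i$, and $n^{-1}\sum_i M_i^2$. Substituting the expansions \eqref{eq:M_i_con}--\eqref{eq:M_i^2_con} replaces each $M$-entry by its analogue in $\xi_i+\mu_{r_1,i}$, with the deterministic remainder-variance term $E((r_{0,i}+r_{1,i}-\mu_{r_1,i})^2)$ added in the $(3,3)$ slot and a common stochastic error $O_{\mathbb{P}}(1/(n\sqrt{q_n^{\text{post}}}))$; this is exactly the matrix $\Lambda$, giving the first assertion.

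For Case~(a), I would note that, with $\V(\xi_i)>0$, $\Lambda$ is a sample average of i.i.d.\ summands (the $\mu_{r_1,i}$ contributions and the remainder-variance term being asymptotically negligible), so the central limit theorem gives $\Lambda - E(\Lambda) = O_{\mathbb{P}}(n^{-1/2})$. Since Assumption~\ref{asu:network} imposes $q_n^{\text{post}}\succcurlyeq n^{-1}$, the error $1/(n\sqrt{q_n^{\text{post}}})\preccurlyeq n^{-1/2}$ is absorbed, so $n^{-1}X^{\top}X = E(\Lambda)+O_{\mathbb{P}}(n^{-1/2})$. The non-degeneracy $\V(\xi_i)>0$ rules out asymptotic collinearity between $M_i$ and $(1,T_i)$, so $E(\Lambda)$ is invertible; Lemma~\ref{lemma:inverselimit} with $\Delta=O_{\mathbb{P}}(n^{-1/2})=o_{\mathbb{P}}(1)$ then yields $(X^{\top}X)^{-1}=E(\Lambda)^{-1}+O_{\mathbb{P}}(n^{-1/2})$.

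Case~(b) is the crux. When $\xi_i=\pi$ the limiting matrix is singular --- its third column tends to $\pi$ times its first, reflecting $M_i=\pi+o_{\mathbb{P}}(1)$ --- so direct inversion is impossible and the inverse is governed entirely by the subleading fluctuations of $M_i$ about $\pi$. I would isolate this degenerate direction by partialling out the constant and demeaning $T_i$, reducing the problem to the $2\times2$ block in the centered variables $(T_i,M_i)$ whose determinant is $D=a_{22}^{*}a_{33}^{*}-(a_{23}^{*})^2$. The key orders are $\V(T_i)=\pi(1-\pi)=O(1)$ against a fluctuation variance $E((r_{0,i}+r_{1,i}-\mu_{r_1,i})^2)\asymp 1/(nq_n^{\text{post}})$, so that $D\asymp 1/(nq_n^{\text{post}})$. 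To obtain the per-entry orders I would expand each centered moment using the refined bounds of Lemma~\ref{lemma:airi_ri^2} for $n^{-1}\sum_i a_i(r_{0,i}+r_{1,i}-\mu_{r_1,i})$ and for $n^{-1}\sum_i(r_{0,i}+r_{1,i}-\mu_{r_1,i})^2$, together with the ordinary central limit theorem for the i.i.d.\ parts, and then apply the adjugate formula, dividing each cofactor by $D$.

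The main obstacle is exactly this amplification in Case~(b): because $D\asymp 1/(nq_n^{\text{post}})$ is small, the error in each cofactor is magnified by $1/D\asymp nq_n^{\text{post}}$ upon inversion, so I must control the stochastic errors of the $(2,3)$ and $(3,3)$ entries of $n^{-1}X^{\top}X$ at a finer resolution than the crude $O_{\mathbb{P}}(1/(n\sqrt{q_n^{\text{post}}}))$ used for the first assertion. Lemma~\ref{lemma:airi_ri^2} is designed to supply these sharper rates; the remaining work is the bookkeeping of tracking, entry by entry, which residual error survives multiplication by $1/D$ and thereby determines the stated orders --- for instance the $O_{\mathbb{P}}(1/(n\sqrt{n}\,q_n^{\text{post}}))$ correction in the $(2,2)$ entry versus the $O_{\mathbb{P}}(n^{-1/2})$ corrections elsewhere.
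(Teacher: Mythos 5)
Your proposal follows essentially the same route as the paper: the entries of $n^{-1}X^\top X$ are read off Lemma \ref{lemma:consistencyM}, Case (a) is handled by the i.i.d.\ CLT plus Lemma \ref{lemma:inverselimit} (the error $1/(n\sqrt{q_n^{\text{post}}})$ being absorbed into $n^{-1/2}$ under $q_n^{\text{post}}\succcurlyeq n^{-1}$), and Case (b) is handled by the explicit adjugate formula, whose denominator $a_{33}a_{22}-a_{23}^2$ is exactly the centered $2\times 2$ determinant you describe, with the sharper entrywise rates supplied by Lemma \ref{lemma:airi_ri^2}. The only cosmetic difference is that you phrase the Case (b) reduction as a Schur-complement/partialling-out step while the paper writes out the cofactors directly; the bookkeeping you flag as the remaining work is precisely the term-by-term computation the paper carries out.
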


\begin{proof}[Proof of Theorem \ref{thm:XX-1}]
We prove the results under Cases (a) and (b), respectively.

\paragraph{Case (a):}
In this case, $\xi_i$ is a function of $T_i$ and $w_i$ with constant variance. It suffices to show that $X^\top X$ converges in probability to a finite and invertible matrix. 
As we show in Lemma \ref{lemma:consistencyM}, 
\[
X^\top X = \Lambda + O_{\mathbb{P}}\left( \frac{1}{n \sqrt{q_n^{\text{post}}}} \right), 
\]
where $\Lambda$ converges in probability to a finite and invertible matrix $E(\Lambda)$.
By Lemma \ref{lemma:inverselimit} and CMT, we have 
$(X^\top X)^{-1}
= (E(\Lambda))^{-1} + o_{\mathbb{P}}(1)$.

\paragraph{Case (b):}
In this case, we have constant $\xi_i = \xi$. The probability limit of $X^\top X$, denoted by $E(\Lambda)$, is not invertible, and therefore Lemma \ref{lemma:inverselimit} cannot be applied. 
Thus, we first derive the closed form of $(X^\top X)^{-1}$, then show the probability limit. 
By direct algebra, the closed form of $( X^\top X )^{-1}$ is 
\begin{align}
& \left( X^\top X \right)^{-1} 
= \frac{1}{ \det(X^\top X) } 
\begin{pmatrix}
a_{11} & a_{12} & a_{13} \\
a_{12} & a_{22} & a_{23} \\
a_{13} & a_{23} & a_{33}
\end{pmatrix}
\label{eq:XX-1}
\end{align}  
where 
\begin{align*}
a_{11}
~=~& \left(\frac{1}{n}\sum_{i=1}^n T_i \right) 
\left(\frac{1}{n}\sum_{i=1}^n M_i^2 \right) 
- \left(\frac{1}{n}\sum_{i=1}^n T_iM_i \right)^2 \\
a_{12}
~=~& \left( \frac{1}{n}\sum_{i=1}^n M_i \right) 
\left( \frac{1}{n}\sum_{i=1}^n T_i M_i \right) -
\left( \frac{1}{n}\sum_{i=1}^n T_i \right) 
\left( \frac{1}{n}\sum_{i=1}^n M_i^2 \right) \\
a_{13}
~=~& - \left( \frac{1}{n}\sum_{i=1}^n T_i \right) \left( \frac{1}{n}\sum_{i=1}^n (1-T_i)M_i \right) \\
a_{22}
~=~& \left( \frac{1}{n}\sum_{i=1}^n M_i^2 \right) - \left( \frac{1}{n}\sum_{i=1}^n M_i \right)^2 \\
a_{23}
~=~& \left( \frac{1}{n}\sum_{i=1}^n T_i \right) 
\left( \frac{1}{n}\sum_{i=1}^n M_i \right)
- \left( \frac{1}{n}\sum_{i=1}^n T_iM_i \right) \\
a_{33}
~=~& \left(\frac{1}{n}\sum_{i=1}^n T_i \right) - \left(\frac{1}{n}\sum_{i=1}^n T_i \right)^2
\end{align*}
and 
\begin{equation*}
\det(X^\top X) 
= a_{33} a_{22} - a_{23}^2.
\end{equation*}
We show the probability limits of the terms in \eqref{eq:XX-1} one by one.  \\
\underline{(1) $a_{11}$:}
With constant $\xi$ and by Lemma \ref{lemma:consistencyM}, we expand $a_{11}$ into:
\begin{align*}
a_{11}
=& \left(\frac{1}{n}\sum_{i=1}^n T_i \right) 
\left(\frac{1}{n}\sum_{i=1}^n M_i^2 \right) 
- \left( \frac{1}{n}\sum_{i=1}^n T_iM_i \right)^2 \\
=& \left( \pi + O_{\mathbb{P}}\left( \frac{1}{\sqrt{n} } \right) \right)
\left( E\left((\xi + \mu_{r_1,i})^2 \right)
+ E\left((r_{0,i} + r_{1,i} - \mu_{r_1,i})^2 \right) 
+ O_{\mathbb{P}}\left( \frac{1}{n \sqrt{q_n^{\text{post}}}} \right) \right) \\
&- \left( E(T_i (\xi + \mu_{r_1,i})) 
+ O_{\mathbb{P}}\left( \frac{1}{\sqrt{n}} \right)
+ O_{\mathbb{P}}\left( \frac{1}{n \sqrt{q_n^{\text{post}}}} \right) \right)^2 \\
=& \left( \pi + O_{\mathbb{P}}\left( \frac{1}{\sqrt{n} } \right) \right)
\left( E\left((\xi + \mu_{r_1,i})^2 \right)
+ E\left((r_{0,i} + r_{1,i} - \mu_{r_1,i})^2 \right) 
+ O_{\mathbb{P}}\left( \frac{1}{n \sqrt{q_n^{\text{post}}}} \right) \right) \\
&- \left( E\left(T_i (\xi + \mu_{r_1,i})\right)^2  
+ O_{\mathbb{P}}\left( \frac{1}{\sqrt{n}} \right) \right) \\
=& 
a_{11}^*
+ O_{\mathbb{P}}\left( \frac{1}{\sqrt{n}} \right).
\end{align*}
\underline{(2) $a_{12}$:}
With constant $\xi$ and by Lemma \ref{lemma:consistencyM}, we have
\begin{align}
& a_{12}
= \left( \frac{1}{n}\sum_{i=1}^n M_i \right) 
\left( \frac{1}{n}\sum_{i=1}^n T_i M_i \right) -
\left( \frac{1}{n}\sum_{i=1}^n T_i \right) 
\left( \frac{1}{n}\sum_{i=1}^n M_i^2 \right) \notag \\
=& \xi \left( \frac{1}{n}\sum_{i=1}^n T_i \mu_{r_1,i} - \bar{T} \frac{1}{n}\sum_{i=1}^n \mu_{r_1,i} \right)
+ \frac{1}{n}\sum_{i=1}^n \mu_{r_1,i} \frac{1}{n}\sum_{i=1}^n T_i \mu_{r_1,i} 
- \bar{T} \frac{1}{n}\sum_{i=1}^n \mu_{r_1,i}^2 
- \bar{T} \frac{1}{n}\sum_{i=1}^n (r_{0,i} + r_{1,i} - \mu_{r_1,i})^2 \notag \\
&+ \xi \left( \frac{1}{n}\sum_{i=1}^n T_i (r_{0,i} + r_{1,i} - \mu_{r_1,i}) - \frac{1}{n}\sum_{i=1}^n T_i \frac{1}{n}\sum_{i=1}^n (r_{0,i} + r_{1,i} - \mu_{r_1,i}) \right) \notag \\
&+ \frac{1}{n}\sum_{i=1}^n \mu_{r_1,i} \frac{1}{n}\sum_{i=1}^n T_i (r_{0,i} + r_{1,i} - \mu_{r_1,i})
+ \frac{1}{n}\sum_{i=1}^n (r_{0,i} + r_{1,i} - \mu_{r_1,i})
\frac{1}{n}\sum_{i=1}^n T_i \mu_{r_1,i} \notag \\
&+ \frac{1}{n}\sum_{i=1}^n (r_{0,i} + r_{1,i} - \mu_{r_1,i})
\frac{1}{n}\sum_{i=1}^n T_i (r_{0,i} + r_{1,i} - \mu_{r_1,i})
- 2 \bar{T}\frac{1}{n}\sum_{i=1}^n \mu_{r_1,i} (r_{0,i} + r_{1,i} - \mu_{r_1,i}) \notag \\
=& a_{12}^*
+ O_{\mathbb{P}}\left( \frac{1}{n \sqrt{q_n^{\text{post}}} } \right).
\label{eq:a_12_limit}
\end{align}
\underline{(3) $a_{13}$.}
By Lemma \ref{lemma:consistencyM}, we have 
\begin{align}
a_{13}
=& - \left( \pi + O_{\mathbb{P}}\left( \frac{1}{\sqrt{n} } \right) \right) \left( E\left((1-T_i) (\xi + \mu_{r_1,i})\right) 
+ O_{\mathbb{P}}\left( \frac{1}{\sqrt{n}} \right) \right) 
= a_{13}^* + O_{\mathbb{P}}\left( \frac{1}{\sqrt{n} } \right).
\label{eq:a_{13}_limit}
\end{align}
\underline{(4) $a_{22}$.}
With constant $\xi$ and by Lemma \ref{lemma:consistencyM}, we have
\begin{align}
& a_{22}
= \left( \frac{1}{n}\sum_{i=1}^n M_i^2 \right) - \left( \frac{1}{n}\sum_{i=1}^n M_i \right)^2 \notag \\
&= \frac{1}{n}\sum_{i=1}^n \mu_{r_1,i}^2 
- \left( \frac{1}{n}\sum_{i=1}^n \mu_{r_1,i} \right)^2
+ \frac{1}{n}\sum_{i=1}^n (r_{0,i} + r_{1,i} - \mu_{r_1,i})^2 \notag \\
&+ \frac{2}{n}\sum_{i=1}^n \mu_{r_1,i}(r_{0,i} + r_{1,i} - \mu_{r_1,i}) 
- \frac{2}{n}\sum_{i=1}^n \mu_{r_1,i}
\frac{1}{n}\sum_{i=1}^n (r_{0,i} + r_{1,i} - \mu_{r_1,i})
- \left( \frac{1}{n}\sum_{i=1}^n (r_{0,i} + r_{1,i} - \mu_{r_1,i}) \right)^2 \notag \\
&= \frac{1}{n}\sum_{i=1}^n \mu_{r_1,i}^2 
- \left( \frac{1}{n}\sum_{i=1}^n \mu_{r_1,i} \right)^2
+ \frac{1}{n}\sum_{i=1}^n (r_{0,i} + r_{1,i} - \mu_{r_1,i})^2 
+ O_{\mathbb{P}}\left( \frac{1}{n^2 q_n^{\text{post}} \sqrt{q_n^{\text{post}}}} \right) \notag \\
&= a_{22}^* + O_{\mathbb{P}}\left( \frac{1}{\sqrt{n} n q_n^{\text{post}}} \right). 
\label{eq:a_22_limit}
\end{align}
\underline{(5) $a_{23}$.}
With constant $\xi$ and by Lemma \ref{lemma:consistencyM}, we have
\begin{align}
a_{23}
=& \left( \frac{1}{n}\sum_{i=1}^n T_i \right) 
\left( \frac{1}{n}\sum_{i=1}^n M_i \right)
- \left( \frac{1}{n}\sum_{i=1}^n T_iM_i \right) \notag \\
=& \left( \frac{1}{n}\sum_{i=1}^n T_i \right) 
\left( \frac{1}{n}\sum_{i=1}^n \mu_{r_1,i} \right)
- \left( \frac{1}{n}\sum_{i=1}^n T_i \mu_{r_1,i} \right) \notag \\
&+ \left( \frac{1}{n}\sum_{i=1}^n T_i \right) 
\left( \frac{1}{n}\sum_{i=1}^n (r_{0,i} + r_{1,i} - \mu_{r_1,i}) \right)
- \left( \frac{1}{n}\sum_{i=1}^n T_i (r_{0,i} + r_{1,i} - \mu_{r_1,i}) \right) \notag \\
=& O_{\mathbb{P}}\left( \frac{1}{n\sqrt{q_n^{\text{post}}}} \right).
\label{eq:a_23_limit}
\end{align}
\underline{(6) $a_{33}$.}
By the i.i.d. treatment and CLT, we have $a_{33} = a_{33}^* + O_{\mathbb{P}}\left( \frac{1}{\sqrt{n}} \right)$. \\
\underline{(7) $\det(X^\top X)$.}
By combining \eqref{eq:a_22_limit} and \eqref{eq:a_23_limit}, the denominator of \eqref{eq:XX-1} is 
\begin{align*}
& \det(X^\top X) 
= a_{33} a_{22} - a_{23}^2 
= a_{33}^* a_{22}^*
+ O_{\mathbb{P}} \left( \frac{1}{\sqrt{n} n q_n^{\text{post}}} \right) 
\text{ with }
a_{33}^* a_{22}^*
\asymp \frac{1}{n q_n^{\text{post}}}. 
\end{align*}

\end{proof}

\begin{proof}[Proof of Theorem \ref{thm:consistency_ratio}]
We prove the results under Cases (a) and (b), respectively.
\paragraph{Case (a):} 
By Theorem \ref{thm:XX-1}, $(X^\top  X)^{-1} = (E(\Lambda))^{-1} + o_{\mathbb{P}}(1)$, which is $O_{\mathbb{P}}(1)$. 
By Lemma \ref{lemma:airi_ri^2},
\begin{align*}
\frac{1}{n}\sum_{i=1}^n M_i u_i 
= \frac{1}{n}\sum_{i=1}^n (\xi_i + \mu_{r_1,i})u_i 
+ \frac{1}{n}\sum_{i=1}^n (r_{0,i} + r_{1,i} - \mu_{r_1,i})u_i 
= O_{\mathbb{P}}\left( \frac{1}{\sqrt{n}} \right).
\end{align*}
With i.i.d. data, we have 
\[
\frac{1}{n}\sum_{i=1}^n u_i
= O_{\mathbb{P}}\left( \frac{1}{\sqrt{n}} \right)
\text{ and } 
\frac{1}{n}\sum_{i=1}^n T_i u_i
= O_{\mathbb{P}}\left( \frac{1}{\sqrt{n}} \right).
\]
Together with Theorem \ref{thm:XX-1} and CMT, this completes the proof that $\hat{\beta}^{\textsc{ols}} - \beta = O_{\mathbb{P}}\left( \frac{1}{\sqrt{n}} \right)$.

\paragraph{Case (b):}
We prove by the closed form of $\hat{\beta}^{\textsc{ols}} - \beta$. By direct algebra, we have
\begin{align*}
\hat{\beta}^{\textsc{ols}}_1 - {\beta}_1
~=~& \frac{ a_{22} \frac{1}{n}\sum_{i=1}^n (T_i - \bar{T}) u_i 
+ a_{23} \frac{1}{n}\sum_{i=1}^n (M_i - \bar{M}) u_i}{\det(X^\top X)}, \\
\hat{\beta}^{\textsc{ols}}_2 - {\beta}_2
~=~& \frac{ a_{32} \frac{1}{n}\sum_{i=1}^n (T_i - \bar{T}) u_i 
+ a_{33} \frac{1}{n}\sum_{i=1}^n (M_i - \bar{M}) u_i}{\det(X^\top X)}, \\
\hat{\beta}^{\textsc{ols}}_0 - {\beta}_0
~=~& \bar{u} - (\hat{\beta}^{\textsc{ols}}_1 - \beta_1) \bar{T} - (\hat{\beta}^{\textsc{ols}}_2 - \beta_2) \bar{M}.
\end{align*}
With i.i.d. data, we have 
\[
\frac{1}{n}\sum_{i=1}^n (T_i - \bar{T}) u_i
= O_{\mathbb{P}}\left( \frac{1}{\sqrt{n}} \right).
\]
It remains to show $\frac{1}{n}\sum_{i=1}^n (M_i - \bar{M}) u_i$. 
By \eqref{eq:mu_r1i} and Lemma \ref{lemma:airi_ri^2}, we can show that
\begin{align*}
\frac{1}{n}\sum_{i=1}^n (M_i - \bar{M}) u_i 
=& \frac{1}{n}\sum_{i=1}^n \mu_{r_1,i} u_i
- \frac{1}{n}\sum_{i=1}^n  \mu_{r_1,i} \frac{1}{n}\sum_{i=1}^n u_i \\
&+ \frac{1}{n}\sum_{i=1}^n (r_{0,i} + r_{1,i} - \mu_{r_1,i}) u_i
- \frac{1}{n}\sum_{i=1}^n (r_{0,i} + r_{1,i} - \mu_{r_1,i}) \frac{1}{n}\sum_{i=1}^n u_i \\
=& O_{\mathbb{P}}\left( \frac{1}{n \sqrt{q_n^{\text{post}}}} \right).
\end{align*}
By combining these results and Theorem \ref{thm:XX-1}, we have
\begin{align*}
\hat{\beta}^{\textsc{ols}}_1 - {\beta}_1
=& \frac{ a_{22} \frac{1}{n}\sum_{i=1}^n (T_i - \bar{T}) u_i 
+ a_{23} \frac{1}{n}\sum_{i=1}^n (M_i - \bar{M}) u_i}{\det(X^\top X)} \\
=& \frac{ \left( a_{22}^* + O_{\mathbb{P}}\left( \frac{1}{\sqrt{n} n q_n^{\text{post}}} \right) \right) O_{\mathbb{P}}\left(\frac{1}{\sqrt{n}}\right) +  O_{\mathbb{P}}\left( \frac{1}{n\sqrt{q_n^{\text{post}}}} \right) O_{\mathbb{P}}\left(\frac{1}{n\sqrt{q_n^{\text{post}}}} \right) }{ \pi(1-\pi) a_{22}^*
+ O_{\mathbb{P}} \left( \frac{1}{\sqrt{n} n q_n^{\text{post}}} \right) } \\
=& O_{\mathbb{P}}\left( \frac{1}{\sqrt{n}} \right)
\end{align*}
and 
\begin{align*}
\hat{\beta}^{\textsc{ols}}_2 - {\beta}_2
=& \frac{ a_{23} \frac{1}{n}\sum_{i=1}^n (T_i - \bar{T}) u_i 
+ a_{33} \frac{1}{n}\sum_{i=1}^n (M_i - \bar{M}) u_i}{\det(X^\top X)} \\
=& \frac{ O_{\mathbb{P}}\left( \frac{1}{n\sqrt{q_n^{\text{post}}}} \right) O_{\mathbb{P}}\left(\frac{1}{\sqrt{n}}\right) + \left( \pi(1-\pi) + O_{\mathbb{P}}\left(\frac{1}{\sqrt{n}} \right) \right) O_{\mathbb{P}}\left(\frac{1}{n\sqrt{q_n^{\text{post}}}}\right)  }{ \pi(1-\pi) a_{22}^* 
+ O_{\mathbb{P}} \left( \frac{1}{\sqrt{n} n q_n^{\text{post}}} \right) } \\
=& O_{\mathbb{P}}\left(\sqrt{q_n^{\text{post}}}\right).
\end{align*}
Therefore, it follows that
\begin{align*}
\hat{\beta}^{\textsc{ols}}_0 - {\beta}_0
=& \bar{u} - (\hat{\beta}^{\textsc{ols}}_1 - \beta_1) \bar{T} - (\hat{\beta}^{\textsc{ols}}_2 - \beta_2) \bar{M} 
= O_{\mathbb{P}}\left(\sqrt{q_n^{\text{post}}}\right).
\end{align*}
Thus, we complete the proof.

\end{proof}

\subsection{Asymptotic normality}
To prove the asymptotic normality of the OLS estimators in Theorem \ref{thm:asymnormal_ratio}, we divide the proof into two cases: when $n q_n^{\text{post}} \succ 1$ and when the network is with bounded degree $n q_n^{\text{post}} \asymp 1$.
For the former case, we show that $\frac{1}{n}\sum_{i=1}^n M_i u_i$ can be approximated by the average of i.i.d. variables and then apply the Lindeberg--Lévy CLT.
For the latter case, we establish the CLT with weak dependence in Theorem \ref{thm:clt}, which makes uses of the proof of Theorem 3.5 in \cite{Ross2011} and is the unconditional version of Theorem 4 in \cite{Leung2020}.

Let $H_1, \ldots, H_n$ be real--valued random variables. 
Define the indicator of dependency as
\begin{align}
D_{ij}
= 1\left\{ A_{ij}^{\diamond} + \max_k A_{ik}^{\diamond} A_{kj}^{\diamond} + 1\{i=j\} > 0 \right\}
\label{eq:G}
\end{align}
where ${\diamond} \in \{\text{pre}, \text{post}\}$. 
In words, two units are dependent if they are directly connected, share at least one common friend, or refer to the same index.
An $n \times n$ binary symmetric matrix $D$ with $(i,j)$th element being $D_{ij}$ is a {dependency graph} on $H$ if for any two disjoint subsets $I_1, I_2 \subseteq \{1, \ldots, n\}$, we have $\{H_i : i \in I_1\} \perp\!\!\!\perp \{H_i : i \in I_2\}$ conditional on the event that $D_{ij} = 0$ for all $i \in I_1$ and $j \in I_2$. By construction, $D_{ii} = 1$ for all $i$.

Define $\mathcal{S}_i = \{j: D_{ij} = 1\}$ and $S_i = |\mathcal{S}_i| = \sum_{j=1}^n D_{ij}$.
Define $\sigma_n^2 = \V\left(\sum_{i=1}^n H_i\right)$.
Let $(D^3)_{ij}$ denote the $(i,j)$th entry of the third matrix power of $D$.
\begin{theorem}[CLT for local dependence]\label{thm:clt}
Under the following assumptions, $\sigma_n^{-1} \sum_{i=1}^{n} H_i \overset{d}{\to} \mathcal{N}(0,1)$ as $n \to \infty$:
\begin{itemize}
\item[(a)] $\sigma_n^2 = O(n)$;
\item[(b)] $\max_i E \left[|H_i|^4 \right] = O(1)$;
\item[(c)] $ n^{-1} \sum_{i=1}^{n} S_i^2 = o_{\mathbb{P}}(\sqrt{n})$, $ n^{-1} \sum_{i=1}^{n} S_i^3 = o_{\mathbb{P}}(n)$, and $ n^{-1} \sum_{i=1}^{n} \sum_{j \neq i} (D^3)_{ij} = o_{\mathbb{P}}(n)$.
\end{itemize}
\end{theorem}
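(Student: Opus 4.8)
The plan is to prove Theorem \ref{thm:clt} by Stein's method for local dependence, following the Wasserstein-distance argument behind Theorem 3.5 of \cite{Ross2011} but adapting it so that the dependency neighborhoods $\mathcal{S}_i$ may be random, which is precisely what makes the statement unconditional relative to Theorem 4 of \cite{Leung2020}. First I would center and standardize: writing $\xi_i = (H_i - \Exp H_i)/\sigma_n$ and $W = \sum_{i=1}^n \xi_i$, so that $\Exp W = 0$ and $\V(W)=1$, and recalling that the relevant scale is $\sigma_n^2 \asymp n$ (the upper bound being assumption (a), with nondegeneracy giving the matching lower bound). Since convergence in the Wasserstein metric $d_{\mathrm W}$ implies convergence in distribution, it suffices to show $d_{\mathrm W}(W, Z) \to 0$ for $Z \sim \mathcal N(0,1)$. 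For any $1$-Lipschitz test function $h$, let $f=f_h$ solve the Stein equation $f'(w) - w f(w) = h(w) - \Exp h(Z)$; I would invoke the standard solution bounds $\|f'\|_\infty \le \sqrt{2/\pi}$ and $\|f''\|_\infty \le 2$, reducing the problem to bounding $\bigl|\Exp[f'(W) - W f(W)]\bigr|$.

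The core is a neighborhood decomposition exploiting the dependency-graph property. Set $T_i = \sum_{j \in \mathcal S_i}\xi_j$, so that $\xi_i$ is independent of $W - T_i$ (the dependency-graph definition guarantees $H_i \indep \{H_j : j \notin \mathcal S_i\}$). Because $\Exp \xi_i = 0$ and $\xi_i \indep (W-T_i)$, I would write
\[
\Exp[W f(W)] = \sum_{i=1}^n \Exp\bigl[\xi_i\bigl(f(W) - f(W-T_i)\bigr)\bigr]
\]
and Taylor-expand $f(W)-f(W-T_i) = T_i f'(W-T_i) + \tfrac12 T_i^2 f''(W_i^{\ast})$. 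The first-order piece contributes $\sum_i \Exp[\xi_i T_i f'(W-T_i)]$; replacing $f'(W-T_i)$ by $f'(W)$ and using the identity $\sum_i \Exp[\xi_i T_i] = \sum_{i,j:D_{ij}=1}\Exp[\xi_i\xi_j] = \V(W) = 1$ turns the leading term into $\Exp[f'(W)]$, which cancels against the $f'(W)$ term of the Stein expression. What remains are (i) a second-order Taylor remainder and (ii) a fluctuation term measuring how far $\sum_i \xi_i T_i$ sits from its mean.

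The final step is to bound these two remainders using assumptions (b)--(c). The second-order remainder is controlled by $\|f''\|_\infty \sum_i \Exp[|\xi_i| T_i^2] \lesssim \sigma_n^{-3}\sum_i \Exp\bigl[|H_i|\bigl(\sum_{j\in\mathcal S_i}|H_j|\bigr)^2\bigr]$; bounding $(\sum_{j\in\mathcal S_i}|H_j|)^2 \le S_i \sum_{j\in\mathcal S_i}H_j^2$ and applying the uniform fourth-moment bound (b) with Cauchy--Schwarz gives $\lesssim \sigma_n^{-3}\sum_i S_i^2$, which is $o_{\mathbb P}(1)$ precisely because $\sigma_n^2\asymp n$ and $n^{-1}\sum_i S_i^2 = o_{\mathbb P}(\sqrt n)$. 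The fluctuation term is bounded by $\|f'\|_\infty \sqrt{\V(\sum_i \xi_i T_i)}$; expanding this variance, the covariance $\mathrm{Cov}(\xi_i T_i, \xi_j T_j)$ vanishes unless $i$ and $j$ are joined by a path of length at most three in $D$, so the number of contributing pairs is $\sum_{i,j}(D^3)_{ij}$, and together with the diagonal contributions governed by $\sum_i S_i^3$ and (b) the term is $\lesssim \sigma_n^{-2}\sqrt{\sum_{i,j}(D^3)_{ij}} + \sigma_n^{-2}(\sum_i S_i^3)^{1/2}$, again $o_{\mathbb P}(1)$ under (c). Assumption (a) enters to ensure a residual variance contribution stays $O(\sigma_n^2)=O(n)$, keeping the normalization sharp.

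The hard part will be the randomness of the dependency graph $D$, which distinguishes this from Ross's fixed-neighborhood theorem and from Leung's network-conditional version. Because $\mathcal S_i$, $S_i$, and $(D^3)_{ij}$ are random and correlated with the $H_i$, the Stein bound is itself a random quantity, and assumptions (c) are stated only as $o_{\mathbb P}(\cdot)$ rather than as deterministic rates. I would handle this by deriving the Stein bound conditionally on the latent variables $\{w_i\}$ and the treatments (under which the edges, and hence $D$, are independent), obtaining a conditional bound $B_n$ expressed through the sums in (a)--(c); the uniform fourth-moment bound (b) then supplies the integrability needed so that $B_n = o_{\mathbb P}(1)$ yields $\Exp[\,d_{\mathrm W}(W,Z)\wedge 1\,]\to 0$, hence $d_{\mathrm W}(W,Z)\to 0$ and $W \clt \mathcal N(0,1)$. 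Making the path-counting bound rigorous while tracking how the random neighborhood sizes interact with the moments of $H_i$ is the principal technical burden.
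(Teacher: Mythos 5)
Your proposal is correct and follows essentially the same route as the paper: both apply the Stein's-method Wasserstein bound for dependency graphs (the paper quotes the proof of Theorem 3.5 in Ross (2011), you rederive it from the Stein equation), arriving at the same two remainder terms — the third-moment neighborhood term controlled by $\sigma_n^{-3}\sum_{i=1}^n S_i^2$ and the fluctuation term $\sigma_n^{-2}\sqrt{\V\left(\sum_{i=1}^n\sum_{j\in\mathcal{S}_i}H_iH_j\right)}$ — and then verify that each is $o_{\mathbb{P}}(1)$ under (a)--(c). Your explicit handling of the randomness of $D$ (deriving the bound conditionally and passing to unconditional weak convergence via $\Exp[d_{\mathrm{W}}\wedge 1]\to 0$) addresses a point the paper glosses over and is a tightening of, rather than a departure from, its argument.
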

\begin{proof}[Proof of Theorem \ref{thm:clt}]
For any two random variables $U, V$ with respective probability laws $\mu(\cdot)$ and $\nu(\cdot )$,
define their Wasserstein distance as
\[
\Delta(U, V) = \sup_{f \in \mathcal{F}} \left| \int f(x) d\mu(x) - \int f(x) d\nu(x ) \right|,
\]
where $\mathcal{F} = \{f: \mathbb{R} \to \mathbb{R} : |f(x) - f(y)| \leq |x - y|\}$. 
The convergence of $\Delta(U, V )$ implies convergence of the Kolmogorov metric distance between $U$ and $V$ (see e.g. \citet[Proposition 1.2]{Ross2011}), which, in turn, implies weak convergence. 
Thus, the theorem holds if
\[
\Delta\left( \frac{1}{\sigma_n} \sum_{i=1}^{n} H_i, \mathcal{N}(0, 1)  \right) \overset{p}{\to} 0.
\]
The distance is bounded above by 
\begin{align*}
\Delta\left( \frac{1}{\sigma_n} \sum_{i=1}^{n} H_i, \mathcal{N}(0, 1) \right)
& \le \frac{1}{\sigma_n^3} \sum_{i=1}^{n} E \left[ \left|H_i  \left( \sum_{j\in \mathcal{S}_i} H_j \right)^2 \right| \right] 
+ \frac{\sqrt{2}}{\sqrt{\pi} \sigma_n^2} \sqrt{\text{Var} \left( \sum_{i=1}^{n} \sum_{j \in \mathcal{S}_i} H_i H_j \right)} \\
& \le \frac{1}{\sigma_n^3} \sum_{i=1}^{n} \sum_{j, k \in \mathcal{S}_i} E \left[ |H_i H_j H_k| \right] 
+ \frac{\sqrt{2}}{\sqrt{\pi} \sigma_n^2} \sqrt{\text{Var} \left( \sum_{i=1}^{n} \sum_{j \in \mathcal{S}_i} H_i H_j \right)} \\
& \le \max_i E \left[ |H_i|^3 \right] \frac{n}{\sigma_n^3} \frac{1}{n} \sum_{i=1}^{n} S_i^2 \\
& + \frac{1}{\sqrt{n}} \left( \frac{\sigma_n^2}{n} \right)^{-1} \left( \frac{4}{\pi} \max_i E \left[ H_i^4 \right] \right)^{1/2} \left( \frac{4}{n} \sum_{i=1}^{n} S_i^3 + \frac{3}{n} \sum_{i=1}^{n} \sum_{j=1} (G^3)_{ij} \right)^{1/2},
\end{align*}
which is $o_{\mathbb{P}}(1)$ under Assumptions (a)-(c).

\end{proof}

We verify Assumption (c) in Theorem \ref{thm:clt} when $n q_n^{\diamond} \asymp 1$ for ${\diamond} \in \{\text{pre}, \text{post} \}$. 

\begin{lemma}\label{lemma:Assu_B2}
Suppose $n q_n^{\diamond} \asymp 1$ for ${\diamond} \in \{\text{pre}, \text{post} \}$. Then
\begin{itemize}
\item[] (i) $\frac{1}{n} \sum_{i=1}^{n} S_i^2 = o_{\mathbb{P}}(\sqrt{n})$, (ii) $\frac{1}{n} \sum_{i=1}^{n} S_i^3 = o_{\mathbb{P}}(n)$, and (iii) $\frac{1}{n} \sum_{i=1}^{n} \sum_{j \neq i} (D^3)_{ij} = o_{\mathbb{P}}(n)$.
\end{itemize}

\begin{proof}[Proof of Lemma \ref{lemma:Assu_B2}]
\underline{For (i)}, we bound it in $L_1$ norm:
\begin{align*}
E\left( \frac{1}{n}\sum_{i=1}^n S_i^2 \right)
\le & E\left[ \frac{1}{n}\sum_{i=1}^n \left( \sum_{j=1}^n A_{ij}^{\diamond} + \sum_{j=1}^n \sum_{k \ne j} A_{ik}^{\diamond} A_{kj}^{\diamond} \right)^2 \right] \\
\le & 2 E\left[ \frac{1}{n}\sum_{i=1}^n \left( \sum_{j=1}^n A_{ij}^{\diamond} \right)^2 \right]
+ 2 E\left[ \frac{1}{n}\sum_{i=1}^n \left( \sum_{j=1}^n \sum_{k \ne j} A_{ik}^{\diamond} A_{kj}^{\diamond} \right)^2 \right]
\end{align*}
where 
\begin{align*}
E\left[ \frac{1}{n}\sum_{i=1}^n \left( \sum_{j=1}^n A_{ij}^{\diamond} \right)^2 \right]
= \frac{1}{n}\sum_{i=1}^n 
E \left( \sum_{j=1}^n A_{ij}^{\diamond} + \sum_{j=1}^n \sum_{k\ne j} A_{ij}^{\diamond} A_{kj}^{\diamond} \right)
\le C_1 n q_n^{\diamond} + C_2 n^2 (q_n^{\diamond})^2 \le C
\end{align*}
and by the AM-GM inequality, the second term is bounded above by
\begin{align*}
& E\left[ \frac{1}{n}\sum_{i=1}^n \left( \sum_{j=1}^n \sum_{k\ne j} A_{ik}^{\diamond} A_{kj}^{\diamond} \right)^2 \right]
= \frac{1}{n}\sum_{i=1}^n
E\left[ 
\sum_{(j_1,k_1)} A_{ik_1}^{\diamond} A_{k_1j_1}^{\diamond} 
\sum_{(j_2,k_2)} A_{ik_2}^{\diamond} A_{k_2j_2}^{\diamond}  \right]
\le C n^4 (q_n^{\diamond})^4 \le C.
\end{align*}
\underline{For (ii)}, we bound it in $L_1$ norm:
\begin{align*}
E\left( \frac{1}{n}\sum_{i=1}^n S_i^3 \right)
\le & E\left[ \frac{1}{n}\sum_{i=1}^n \left( \sum_{j=1}^n A_{ij}^{\diamond}  + \sum_{\substack{(j,k)}} A_{ik}^{\diamond} A_{kj}^{\diamond} \right)^3  \right] \\
\le & 4 E\left[ \frac{1}{n}\sum_{i=1}^n \left( \sum_{j=1}^n A_{ij}^{\diamond}  \right)^3 \right]
+ 4 E\left[ \frac{1}{n}\sum_{i=1}^n \left( \sum_{\substack{(j,k)}} A_{ik}^{\diamond} A_{kj}^{\diamond} \right)^3 \right].
\end{align*}
By analogous argument to showing (i), we have for the first term
\begin{align*}
E\left[ \left( \sum_{j=1}^n A_{ij}^{\diamond} \right)^3 \right]
= \frac{1}{n}\sum_{i=1}^n 
E\left[ \sum_{j=1}^n A_{ij}^{\diamond}  
+ \sum_{\substack{(j_1,j_2)\\ j_1\ne j_2} } 
A_{ij_1}^{\diamond} A_{ij_2}^{\diamond}  
+ \sum_{\substack{(j_1,j_2,j_3)\\ \text{all distinct}}} A_{ij_1}^{\diamond} A_{ij_2}^{\diamond} A_{ij_3}^{\diamond} \right]
\le C
\end{align*}
and for the second term, 
\begin{align*}
& \frac{1}{n}\sum_{i=1}^n E\left[ \left( \sum_{\substack{(j,k) \\ j \ne k}} A_{ik}^{\diamond} A_{kj}^{\diamond} \right)^3 \right] 
= \frac{1}{n}\sum_{i=1}^n E\left[ \sum_{\substack{(j_1,k_1)\\ j_1\ne k_1}} A_{ik_1}^{\diamond} A_{k_1j_1} \sum_{\substack{(j_2,k_2)\\ j_2\ne k_2}} A_{ik_2}^{\diamond} A_{k_2j_2}^{\diamond} \sum_{\substack{(j_3,k_3)\\ j_3\ne k_3}} A_{ik_3}^{\diamond} A_{k_3j_3}^{\diamond} \right]
\le C.
\end{align*}
\underline{For (iii)}, by definition, we have
\begin{align*}
& (D^3)_{ij}
= \sum_{k=1}^n (D^2)_{ik} D_{kj}
= \sum_{(k,l)} D_{il} D_{lk} D_{kj} 
\end{align*}
where for $i\ne l$, we have $D_{il} \le A_{il}^{\diamond} +  \max_{h} A_{ih}^{\diamond} A_{hl}^{\diamond} $.
Then 
\begin{align*}
& \frac{1}{n}\sum_{i=1}^n \sum_{j\ne i} (D^3)_{ij}
= \frac{1}{n}\sum_{i=1}^n \sum_{j\ne i} 
\sum_{(k,l)} D_{il} D_{lk}  D_{kj} \\
\le & \frac{1}{n}\sum_{i=1}^n \sum_{j\ne i} \sum_{(k,l)}
\left( A_{il}^{\diamond} + \max_{h} A_{ih}^{\diamond} A_{hl}^{\diamond} \right)
\left( A_{lk}^{\diamond} + \max_{h} A_{lh}^{\diamond} A_{hk}^{\diamond} \right)
\left( A_{kj}^{\diamond} + \max_{h} A_{kh}^{\diamond} A_{hj}^{\diamond} \right).
\end{align*}
Expanding the brackets involves four cases, depending on whether the individuals are directly connected or connected through a common friend. We only show one case as other cases follow from analogous arguments.
For the case where they are all connected through a common friend, we have 
\begin{align*}
& \frac{1}{n}\sum_{i=1}^n \sum_{j\ne i} \sum_{(k,l)}
E\left[  
\left( \max_{h} A_{ih}^{\diamond} A_{hl}^{\diamond} \right)
\left( \max_{h} A_{lh}^{\diamond} A_{hk}^{\diamond} \right)
\left( \max_{h} A_{kh}^{\diamond} A_{hj}^{\diamond} \right)
\right] \\
\le & \frac{1}{n}\sum_{i=1}^n \sum_{j\ne i} \sum_{(k,l)}
\sum_{(h_1,h_2,h_3)}
E\left[  
\left( A_{ih_1}^{\diamond} A_{h_1l}^{\diamond} \right)
\left( A_{lh_2}^{\diamond} A_{h_2k}^{\diamond} \right)
\left( A_{kh_3}^{\diamond} A_{h_3j}^{\diamond} \right)
\right]
\le C n^6 (q_n^{\diamond})^6 \le C. 
\end{align*}
By combining these four cases, we can show that $\frac{1}{n}\sum_{i=1}^n \sum_{j\ne i} (D^3)_{ij} = o_{\mathbb{P}}(\sqrt{n})$.

\end{proof}

\end{lemma}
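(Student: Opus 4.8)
The three claims all take the form ``a nonnegative average is $o_{\mathbb{P}}$ of a given rate,'' so the plan is to bound each quantity in expectation and invoke Markov's inequality; because the bounded-degree calibration $q_n^{\diamond}\asymp n^{-1}$ forces the relevant moments to be $O(1)$, each target rate ($\sqrt n$ or $n$) is comfortably dominated. The only structural nuisance is the $\max_k$ in the definition of $D_{ij}$ in \eqref{eq:G}, which I would immediately linearize via $\max_k A_{ik}^{\diamond}A_{kj}^{\diamond}\le \sum_k A_{ik}^{\diamond}A_{kj}^{\diamond}$, so that $S_i \le \sum_j A_{ij}^{\diamond} + \sum_j\sum_k A_{ik}^{\diamond}A_{kj}^{\diamond}$, i.e.\ degree plus number of length-two walks out of $i$. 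Replacing $\max$ by a sum only over-counts, which is harmless since every summand is nonnegative.

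For (i) and (ii), I would expand $S_i^2$ and $S_i^3$ using $(a+b)^p\le 2^{p-1}(a^p+b^p)$ and then expand each power of a degree/walk count into a sum over index tuples of products of edge indicators. The workhorse estimate is that, since $g^{\diamond}\le 1$ and edges on distinct pairs are conditionally independent given $\{w_i\}$, any product over a set $\mathcal{E}$ of distinct unordered pairs satisfies $E\big[\prod_{\{a,b\}\in\mathcal{E}} A_{ab}^{\diamond}\big]\le (q_n^{\diamond})^{|\mathcal{E}|}$. A term indexed by $v$ free vertices and $e$ distinct edges then contributes at most $n^{v}(q_n^{\diamond})^{e}$; because $S_i^2$ and $S_i^3$ expand into patterns that are connected through $i$, a connected pattern on $v$ vertices has $e\ge v-1$, so with $nq_n^{\diamond}\asymp 1$ we get $n^{v}(q_n^{\diamond})^{e}\le n^{v}(q_n^{\diamond})^{v-1}\asymp n$, and after the $n^{-1}$ prefactor each such term is $O(1)$. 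This yields $E[n^{-1}\sum_i S_i^2]=O(1)$ and $E[n^{-1}\sum_i S_i^3]=O(1)$, whence (i) and (ii) follow from Markov, being $o_{\mathbb{P}}(\sqrt n)$ and $o_{\mathbb{P}}(n)$ respectively.

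For (iii) I would write $(D^3)_{ij}=\sum_{k,l} D_{il}D_{lk}D_{kj}$, bound each factor by $A^{\diamond}+\sum_h A^{\diamond}A^{\diamond}$ as above, and expand the triple product into the four cases according to whether each of the three links is a direct edge or a two-step connection through a common friend. Each case is a sum over index tuples handled by the same counting estimate. The largest case---all three links realized through common friends---involves seven vertices ($i,j,k,l$ together with auxiliary $h_1,h_2,h_3$) and six edges, so after the $n^{-1}$ prefactor it is bounded by $n^{6}(q_n^{\diamond})^{6}=O(1)$, and the remaining cases are no larger. Hence $E[n^{-1}\sum_i\sum_{j\ne i}(D^3)_{ij}]=O(1)=o(n)$, and (iii) follows.

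The main obstacle is purely combinatorial bookkeeping rather than any delicate probabilistic argument: after replacing $\max$ by a sum and expanding the powers, one must make sure every index coincidence is treated as a legitimate over-count and that the free-vertex count is correctly matched against the distinct-edge count in each case, especially for the length-two-walk terms whose squares and cubes generate the largest index sets. Once the principle ``$n^{v}(q_n^{\diamond})^{e}$ with $e\ge v-1$ for connected patterns'' is in place, the bounded-degree scaling $q_n^{\diamond}\asymp n^{-1}$ does all the remaining work.
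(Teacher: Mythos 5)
Your proposal is correct and follows essentially the same route as the paper: bound each average in $L_1$ by linearizing the $\max$ into a sum, expand the powers into sums over index tuples whose expectations are controlled by $n^{v}(q_n^{\diamond})^{e}=O(n)$ under $nq_n^{\diamond}\asymp 1$, and conclude by Markov's inequality. The only cosmetic difference is that you articulate the counting via the general ``connected pattern with $e\ge v-1$'' principle, whereas the paper enumerates the relevant tuple configurations case by case.
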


\begin{proof}[Proof of Theorem \ref{thm:asymnormal_ratio}]
    
We complete the proof in three steps.
\paragraph*{Step 1: asymptotic normality.}
To obtain our desired result, we first show that $ \sum_{i=1}^n X_i u_i$, the numerator of $\hat{\beta}^{\textsc{ols}} - \beta$, converges to a normal distribution with asymptotic variance 
\begin{align}
V_{\text{num}}^{\textsc{ols}} 
\equiv
\V\left( \sum_{i=1}^n X_i u_i \right)
= 
\begin{pmatrix}
\sum_{i=1}^nE(u_i^2) & \pi \sum_{i=1}^nE(u_i^2) & \sum_{i=1}^n E(M_i u_i^2)  \\
\pi \sum_{i=1}^n E(u_i^2) & \pi \sum_{i=1}^n E(u_i^2) & \sum_{i=1}^n E(T_i M_i u_i^2)  \\
\sum_{i=1}^n E(M_i u_i^2)  & \sum_{i=1}^n E(T_i M_i u_i^2)  & \sum_{i=1}^n E(M_i^2 u_i^2) 
\end{pmatrix}. 
\label{eq:asyV_OLS}
\end{align}
We divide the proof into two cases. \\
\underline{Case 1: $n q_n^{\text{post}} \succ 1$.} 
By Taylor expansion and Lemma \ref{lemma:airi_ri^2}, 
\begin{align}
\frac{1}{n}\sum_{i=1}^n M_i u_i
&= \frac{1}{n}\sum_{i=1}^n (\xi_i + \mu_{r_1,i}) u_i
+ \frac{1}{n}\sum_{i=1}^n (r_{0,i} + r_{1,i} - \mu_{r_1,i}) u_i \notag \\
&= \frac{1}{n}\sum_{i=1}^n \xi_i u_i
+ O_{\mathbb{P}}\left(\frac{1}{\sqrt{n}n q_n^{\text{post}}} \right)
+ O_{\mathbb{P}}\left(\frac{1}{\sqrt{n} \sqrt{n q_n^{\text{post}}} } \right) \notag \\
&= \frac{1}{n} \sum_{i=1}^n \xi_i u_i + o_{\mathbb{P}}\left(\frac{1}{\sqrt{n}} \right). 
\label{eq:tilde_b}
\end{align}
It implies that $\frac{1}{n}\sum_{i=1}^n M_i u_i$ can be approximated by $\frac{1}{n}\sum_{i=1}^n \xi_i u_i$, which is the average of i.i.d. random variables, with a small noise $o_{\mathbb{P}}\left(\frac{1}{\sqrt{n}} \right)$. 
Define $\tilde{b}
= \frac{1}{\sqrt{n}} \sum_{i=1}^n (u_i,T_i u_i,\xi_i u_i)^\top$.
By the Cramer--Wold theorem and Lindeberg–Lévy CLT, we can show that
\begin{equation*}
\V(\tilde{b})^{-1/2}
\tilde{b}
~\stackrel{\textup{d}}{\rightarrow}~ 
\mathcal{N}(0, I_3).
\end{equation*}
With $n q_n^{\text{post}} \succ 1$, we have 
$E[ (r_{0,i} + r_{1,i} - \mu_{r_1,i})^2 ] = o(1)$, and thus 
\begin{align*}
\V( \tilde{b})
=& \begin{pmatrix}
E(u_i^2) & \pi E(u_i^2) & E(\xi_i) E(u_i^2) \\
\pi E(u_i^2) & \pi E(u_i^2) & E(T_i \xi_i) E(u_i^2) \\
E(\xi_i) E(u_i^2)  & E(T_i \xi_i) E(u_i^2) & E( \xi_i^2) E(u_i^2)
\end{pmatrix} 
= \frac{1}{n} V_{\text{num}}^{\textsc{ols}}  + o(1).
\end{align*}
Together with \eqref{eq:tilde_b}, the CLT follows: 
\begin{align*}
\left( V_{\text{num}}^{\textsc{ols}} \right)^{-1/2}
\left( \sum_{i=1}^n X_i u_i \right)
~\stackrel{\textup{d}}{\rightarrow} ~
\mathcal{N}(0, I_3).
\end{align*}
\underline{Case 2: $n q_n^{\text{post}} \asymp 1$.}
It suffices to verify the assumptions in Theorem \ref{thm:clt}.
Assumption (a) follows from $V_{\text{num}}^{\textsc{ols}}=O(1)$ and (b) follows from Assumption \ref{asu:linearY} and boundedness of $T_i$ and $M_i$.
Assumption (c) follows from applying Lemma \ref{lemma:Assu_B2} to $A^{\text{post}}$. 
Then by Theorem \ref{thm:clt} and Cramer--Wold device, we can show 
\begin{align*}
\left( V_{\text{num}}^{\textsc{ols}} \right)^{-1/2} 
\left( \sum_{i=1}^n X_i u_i \right)
~\stackrel{\textup{d}}{\rightarrow} ~
\mathcal{N}\left(0, I_3\right).
\end{align*}

\paragraph*{Step 2: consistent variance estimator.} 
We show that $\frac{1}{n}(\hat{V}_{\text{num}}^{\textsc{ols}} - V_{\text{num}}^{\textsc{ols}}) = o_{\mathbb{P}}(1)$.
We only show the convergence of $\frac{1}{n} \sum_{i=1}^n M_i^2 (\hat{u}_i^\textsc{ols})^2$ to $E(M_i^2 u_i^2)$ since the remaining terms follow from similar arguments and we omit them for brevity.
Notice that $\hat{u}_i^\textsc{ols} = u_i - (\hat{\beta}^{\textsc{ols}} - \beta )^\top X_i$ and $\hat{\beta}^{\textsc{ols}} - \beta = o_{\mathbb{P}}(1)$. 
By expanding the square, we have 
\begin{align}
\frac{1}{n} \sum_{i=1}^n M_i^2 (\hat{u}_i^\textsc{ols})^2
=& \frac{1}{n} \sum_{i=1}^n M_i^2 u_i^2 
- \frac{2}{n} \sum_{i=1}^n M_i^2 u_i (\hat{\beta}^{\textsc{ols}} - \beta )^\top X_i
+ \frac{1}{n} \sum_{i=1}^n M_i^2 \left((\hat{\beta}^{\textsc{ols}} - \beta )^\top X_i\right)^2. 
\label{eq:M_i^2hat{e}_i^2}
\end{align}
For the first term of \eqref{eq:M_i^2hat{e}_i^2}, we have 
\begin{align*}
\frac{1}{n} \sum_{i=1}^n M_i^2 u_i^2
=& \frac{1}{n} \sum_{i=1}^n (\xi_i + \mu_{r_1,i})^2  u_i^2
+ \frac{2}{n} \sum_{i=1}^n (\xi_i + \mu_{r_1,i})(r_{0,i} + r_{1,i} - \mu_{r_1,i}) u_i^2
+ \frac{1}{n} \sum_{i=1}^n (r_{0,i} + r_{1,i} - \mu_{r_1,i})^2 u_i^2 \\
=& E\left( (\xi_i + \mu_{r_1,i})^2 u_i^2 \right)
+ E\left((r_{0,i} + r_{1,i} - \mu_{r_1,i})^2 u_i^2 \right) 
+ O_{\mathbb{P}}\left( \frac{1}{\sqrt{n}} \right)
+ O_{\mathbb{P}}\left( \frac{1}{n \sqrt{q_n^{\text{post}}}} \right) \\
=& E(M_i^2 u_i^2) + o_{\mathbb{P}}(1)
\end{align*}
by Lemma \ref{lemma:airi_ri^2}.
For the second term of \eqref{eq:M_i^2hat{e}_i^2}, we have 
\begin{align*}
& \frac{1}{n} \sum_{i=1}^n M_i^2 u_i (\hat{\beta}^{\textsc{ols}} - \beta )^\top X_i 
= (\hat{\beta}^{\textsc{ols}}_0 - \beta_0) \frac{1}{n} \sum_{i=1}^n M_i^2 u_i  
+ (\hat{\beta}^{\textsc{ols}}_1 - \beta_1) \frac{1}{n} \sum_{i=1}^n M_i^2 T_i u_i 
+ (\hat{\beta}^{\textsc{ols}}_2 - \beta_2) \frac{1}{n} \sum_{i=1}^n M_i^3 u_i.
\end{align*}
By the bounded support of $M_i$, we can show that 
\begin{align*}
E\left[ \left( \frac{1}{n} \sum_{i=1}^n M_i^2 u_i \right)^2 \right]
= E\left[ \frac{1}{n^2} \sum_{i=1}^n M_i^4 u_i^2 + \frac{1}{n^2} \sum_{i=1}^n \sum_{j\ne i} M_i^2 M_j^2 u_i u_j \right]
\le \frac{C}{n},
\end{align*}
and it implies that $\frac{1}{n} \sum_{i=1}^n M_i^2 u_i = O_{\mathbb{P}}\left( \frac{1}{\sqrt{n}} \right)$.
By applying analogous arguments to the other terms and with $\hat{\beta}^{\textsc{ols}} - \beta  = o_{\mathbb{P}}(1)$, we can show that 
\begin{align*}
\frac{1}{n} \sum_{i=1}^n M_i^2 u_i (\hat{\beta}^{\textsc{ols}} - \beta )^\top X_i = o_{\mathbb{P}}(1).
\end{align*}
For the last term of \eqref{eq:M_i^2hat{e}_i^2}, we have
\begin{align*}
& \frac{1}{n}\sum_{i=1}^n M_i^2 \left((\hat{\beta}^{\textsc{ols}} - \beta )^\top X_i\right)^2
\le 3 \left[  (\hat{\beta}^{\textsc{ols}}_0 - \beta_0)^2 \frac{1}{n}\sum_{i=1}^n M_i^2 + (\hat{\beta}^{\textsc{ols}}_1 - \beta_1)^2 \frac{1}{n}\sum_{i=1}^n T_i M_i^2 + (\hat{\beta}^{\textsc{ols}}_2 - \beta_2)^2 \frac{1}{n}\sum_{i=1}^n M_i^4 \right], 
\end{align*}
which is $o_{\mathbb{P}}(1)$ by the bounded support of $M_i$ and $\hat{\beta}^{\textsc{ols}} - \beta  = o_{\mathbb{P}}(1)$.
By combining these results, we can show that
\begin{align*}
\frac{1}{n} \sum_{i=1}^n M_i^2 (\hat{u}_i^\textsc{ols})^2
=& E( M_i^2 u_i^2 ) + o_{\mathbb{P}}(1).
\end{align*}

\paragraph*{Step 3:}
Steps 1--2 and CMT together imply the desired result:
\begin{align*}
& \left(  \hat{V}^{\textsc{ols}}_{\text{num}} \right)^{-1/2} 
\left(  \sum_{i=1}^n X_i u_i \right)
\stackrel{\textup{d}}{\rightarrow}  \mathcal{N}(0, I_3).
\end{align*}
By Theorem \ref{thm:XX-1}, we have shown the probability limit of $(X^\top X)^{-1}$. 
Then together with CMT, we can show that 
\[
\left( \hat{V}^{\textsc{ols}} \right)^{-1/2} (\hat{\beta}^{\textsc{ols}}-\beta) \stackrel{\textup{d}}{\rightarrow} \mathcal{N}(0, I_3).
\]

\end{proof}

\section{Proof of results in Section \ref{sec:SSIV}}
\label{app:IV}

\subsection{Auxiliary Lemmas}

Define $\phi_i$ as an i.i.d. random variable with constant variance.
Define the conditional expectation 
$Q_j^{\phi} = E\left[ \frac{ A_{ij}^{\text{pre}} \phi_i }{ q_n^{\text{pre}}  } \mid w_j \right]$. 
\begin{lemma}\label{lemma:Ziphii}
Under Assumptions \ref{asu:network} and \ref{asu:linearY},
\begin{align}
\frac{1}{n^2 q_n^{\text{pre}}} \sum_{i=1}^n Z_i^{\textsc{ssiv}} \phi_i
= \frac{1}{n}\sum_{i=1}^n (T_i - \pi) Q_i^{\phi}
+ O_{\mathbb{P}}\left( \frac{1}{\sqrt{n} \sqrt{n q_n^{\text{pre}}}} \right). 
\label{eq:Zi_alt_phi}
\end{align}
\end{lemma}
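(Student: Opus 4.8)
The plan is to exchange the order of summation and recognize the resulting inner sum as a conditional sample average that concentrates on $Q_j^\phi$. Writing $Z_i^{\textsc{ssiv}} = \sum_{j} A_{ij}^{\text{pre}}(T_j - \pi)$ and swapping indices,
\[
\frac{1}{n^2 q_n^{\text{pre}}} \sum_{i=1}^n Z_i^{\textsc{ssiv}} \phi_i
= \frac{1}{n}\sum_{j=1}^n (T_j - \pi)\, S_j,
\qquad S_j := \frac{1}{n q_n^{\text{pre}}}\sum_{i=1}^n A_{ij}^{\text{pre}}\phi_i.
\]
For each fixed $j$, the summands $A_{ij}^{\text{pre}}\phi_i / q_n^{\text{pre}}$ are, conditionally on $w_j$, i.i.d.\ across $i \ne j$ with conditional mean $Q_j^\phi$, since $A_{ij}^{\text{pre}}$ is a function of $(w_i, w_j, \eta_{ij})$ and $\phi_i$ of unit $i$'s own attributes, and $\{w_i\},\{\eta_{ij}\}$ are independent under Assumption \ref{asu:network}. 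I therefore decompose $S_j = Q_j^\phi + D_j$, where $D_j := \frac{1}{n q_n^{\text{pre}}}\sum_{i=1}^n \tilde B_{ij}$ and $\tilde B_{ij} := A_{ij}^{\text{pre}}\phi_i - q_n^{\text{pre}} Q_j^\phi$ satisfies $E[\tilde B_{ij}\mid w_j]=0$ for $i\ne j$. The leading piece $\frac{1}{n}\sum_j (T_j-\pi)Q_j^\phi$ is exactly the target expression (after relabeling $j\mapsto i$), so it remains to show that the remainder $R := \frac{1}{n}\sum_j (T_j-\pi)D_j$ is $O_{\mathbb{P}}\big( (\sqrt{n}\sqrt{n q_n^{\text{pre}}})^{-1} \big)$.

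The core of the argument is a second-moment bound on $R = \frac{1}{n^2 q_n^{\text{pre}}}\sum_{j}\sum_i (T_j-\pi)\tilde B_{ij}$. Since $T_j-\pi$ has mean zero and is independent of the network disturbances and latent types, $E[R]=0$, and I expand $E[R^2]$ over quadruples $(i_1,j_1,i_2,j_2)$. The key pruning is a conditional-independence argument: given the $w$'s, terms $\tilde B_{i_1 j}$ and $\tilde B_{i_2 j}$ with $i_1\ne i_2$ are conditionally independent and mean zero, and distinct treatment indicators are independent with mean $\pi$; together these kill every cross term except two diagonal families, namely (i) $j_1=j_2,\ i_1=i_2$, and (ii) the swapped configuration $i_1=j_2,\ i_2=j_1$ (the latter arising only when $\phi_i$ is allowed to depend on $T_i$, so that a treatment shock correlates with a share-weighted summand carrying its own index). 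Using $E[A_{ij}^{\text{pre}}\phi_i^2\mid w_j]=O(q_n^{\text{pre}})$ and boundedness of $T_j-\pi$, each surviving term is $O(q_n^{\text{pre}})$ and each family has $O(n^2)$ members, so
\[
E[R^2] = \frac{1}{n^4 (q_n^{\text{pre}})^2}\cdot O(n^2 q_n^{\text{pre}}) = O\!\left(\frac{1}{n^2 q_n^{\text{pre}}}\right).
\]
Chebyshev's inequality then gives $R = O_{\mathbb{P}}\big(1/(n\sqrt{q_n^{\text{pre}}})\big) = O_{\mathbb{P}}\big(1/(\sqrt{n}\sqrt{n q_n^{\text{pre}}})\big)$, which is precisely the claimed rate.

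The main obstacle is the combinatorial bookkeeping in $E[R^2]$: one must verify that only these two diagonal families contribute at leading order and that all mixed configurations either vanish by mean-zero independence or are of strictly smaller order. Extra care is needed for the degenerate diagonal terms $i=j$, where $A_{ii}^{\text{pre}}=0$ forces $\tilde B_{ii} = -q_n^{\text{pre}}Q_i^\phi$ and contributes only an $O(1/n)$ correction, and for tracking whether $\phi_i$ depends on $T_i$, since that is exactly what activates family (ii); if instead $\phi_i \indep T$, family (ii) vanishes and family (i) alone still delivers the same rate. Throughout I rely on Assumption \ref{asu:network} for the conditional independence across $i$ and on the constant-variance hypothesis on $\phi_i$ to bound $E[\tilde B_{ij}^2]$.
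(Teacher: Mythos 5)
Your proposal is correct and follows essentially the same route as the paper: swap the order of summation, recognize $\frac{1}{nq_n^{\text{pre}}}\sum_{i\ne j}A_{ij}^{\text{pre}}\phi_i$ as a conditionally i.i.d.\ average concentrating on $Q_j^{\phi}$ at rate $(nq_n^{\text{pre}})^{-1/2}$, and control the weighted remainder by a second-moment/Chebyshev bound of order $(n^2 q_n^{\text{pre}})^{-1}$. If anything you are more careful than the paper's displayed variance computation, which writes only the diagonal-in-$j$ contribution; your ``swapped configuration'' family (active when $\phi_i$ depends on $T_i$) is a genuine nonzero cross term, but as you show it contributes at the same $O(1/(n^2q_n^{\text{pre}}))$ order, so the stated rate is unaffected.
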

\begin{proof}[Proof of Lemma \ref{lemma:Ziphii}]
By reordering the index, we have
\begin{align*}
\frac{1}{n^2 q_n^{\text{pre}}}\sum_{i=1}^n Z_i^{\textsc{ssiv}} \phi_i
=& \frac{1}{n^2 q_n^{\text{pre}}}\sum_{i=1}^n \sum_{j\ne i} A_{ij}^{\text{pre}} (T_j - \pi)  \phi_i 
= \frac{1}{n}\sum_{j=1}^n (T_j - \pi) 
\frac{1}{n q_n^{\text{pre}}} \sum_{i\ne j} A_{ij}^{\text{pre}} \phi_i. 
\end{align*}   
Note that, for fixed $j$, given $w_j$, $\frac{ A_{ij}^{\text{pre}} \phi_i }{ q_n^{\text{pre}} } $ are i.i.d.. 
Then we have 
\begin{align*}
E\left[ \left( \frac{1}{n-1} \sum_{i\ne j}
\frac{ A_{ij}^{\text{pre}} \phi_i }{ q_n^{\text{pre}} } - Q_j^{\phi} \right)^2 \right]
&= \frac{1}{n-1} 
E\left[ \left( 
\frac{ A_{ij}^{\text{pre}} \phi_i }{ q_n^{\text{pre}} } - Q_j^{\phi} \right)^2 \right] \\
&\le \frac{1}{(n-1) q_n^{\text{pre}}} 
E\left[ \frac{A_{ij}^{\text{pre}} \phi_i^2}{q_n^{\text{pre}}} \right]
\le \frac{C}{(n-1) q_n^{\text{pre}}}. 
\end{align*}
We complete the proof by showing that
\begin{align*}
& E\left[ \left( \frac{1}{n}\sum_{j=1}^n (T_j - \pi) 
\left( \frac{1}{n q_n^{\text{pre}}} \sum_{i\ne j} A_{ij}^{\text{pre}} \phi_i - Q_j^ \phi \right) \right)^2 \right] \\
=& \frac{1}{n^2}\sum_{j=1}^n \pi(1-\pi) E\left[ \left( \frac{1}{n q_n^{\text{pre}}} \sum_{i\ne j} A_{ij}^{\text{pre}} \phi_i - Q_j^ \phi \right)^2 \right] 
\le \frac{C}{n^2 q_n^{\text{pre}}}.
\end{align*}

\end{proof}

\begin{lemma} \label{lemma:Zixx}
Under Assumption \ref{asu:network}, we have
\begin{align}
\frac{1}{n^2 q_n^{\text{pre}}}\sum_{i=1}^n M_i Z_i^{\textsc{ssiv}}
=& \frac{ E\left[ Z_i^{\textsc{ssiv}}(r_{0,i} + r_{1,i} - \mu_{r_1,i}) \right] }{n q_n^{\text{pre}} }
+ O_{\mathbb{P}}\left( \frac{1}{\sqrt{n}} \right), \label{eq:MiZi} 
\end{align}
\end{lemma}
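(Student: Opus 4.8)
The plan is to use the decomposition $M_i = (\xi_i + \mu_{r_1,i}) + \rho_i$, with $\rho_i := r_{0,i}+r_{1,i}-\mu_{r_1,i}$, already established in the preamble of this appendix, and to treat the two pieces by entirely different arguments. The part $\xi_i+\mu_{r_1,i}$ is a measurable function of $(T_i,w_i)$ alone and therefore contributes only noise, which I would control with Lemma \ref{lemma:Ziphii}; the remainder $\rho_i$ has conditional mean zero given $(T_i,w_i)$ but carries the signal $E[Z_i^{\textsc{ssiv}}\rho_i]/(n q_n^{\text{pre}})$, which I would isolate by a second-moment concentration bound. Adding the two pieces yields the stated expression.

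For the first piece, since $\xi_i+\mu_{r_1,i}$ is i.i.d.\ across $i$, applying Lemma \ref{lemma:Ziphii} with $\phi_i=\xi_i+\mu_{r_1,i}$ gives $\frac{1}{n^2 q_n^{\text{pre}}}\sum_{i=1}^n (\xi_i+\mu_{r_1,i})Z_i^{\textsc{ssiv}} = \frac{1}{n}\sum_{i=1}^n (T_i-\pi)Q_i^{\phi} + O_{\mathbb{P}}(1/(\sqrt{n}\sqrt{n q_n^{\text{pre}}}))$. Because $Q_i^{\phi}=E[A_{i'i}^{\text{pre}}\phi_{i'}/q_n^{\text{pre}}\mid w_i]$ is a function of $w_i$ only and $A^{\text{pre}}$ does not depend on treatment, $Q_i^{\phi}$ is independent of $T_i$, so $E[(T_i-\pi)Q_i^{\phi}]=0$; the summands are i.i.d.\ mean-zero with finite variance, whence the leading term is $O_{\mathbb{P}}(1/\sqrt n)$, and the Lemma remainder is also $O_{\mathbb{P}}(1/\sqrt n)$ since $n q_n^{\text{pre}}\succcurlyeq 1$ under Assumption \ref{asu:network}. (If one prefers to keep $\phi_i$ of exactly constant variance, apply the lemma to $\xi_i$ and bound the $\mu_{r_1,i}$ contribution, which is of smaller order, separately.)

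For the second piece I would show $\frac{1}{n^2 q_n^{\text{pre}}}\sum_{i=1}^n \rho_i Z_i^{\textsc{ssiv}} = \frac{E[\rho_i Z_i^{\textsc{ssiv}}]}{n q_n^{\text{pre}}} + O_{\mathbb{P}}(1/\sqrt n)$. Exchangeability of $(w,T,\eta)$ makes $E[\rho_i Z_i^{\textsc{ssiv}}]$ common across $i$, so $\frac{1}{n^2 q_n^{\text{pre}}}\sum_i E[\rho_i Z_i^{\textsc{ssiv}}]$ equals exactly the signal term in the statement; it then remains to control the fluctuation by proving $\V\big(\frac{1}{n^2 q_n^{\text{pre}}}\sum_i \rho_i Z_i^{\textsc{ssiv}}\big)=O(1/n)$, i.e.\ $\V\big(\sum_i \rho_i Z_i^{\textsc{ssiv}}\big)=O(n^3 (q_n^{\text{pre}})^2)$.

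The main obstacle is this variance bound. I would substitute the explicit representation of $\rho_i$ in terms of $R_{ik}$, the diagonal terms $U_{ik}W_{ik}$, $W_{ik}^2$ (centered by $\mu_{r_1,i}$), and the off-diagonal double sums $U_{ik}W_{il}$, $W_{ik}W_{il}$, together with $Z_i^{\textsc{ssiv}}=\sum_j A_{ij}^{\text{pre}}(T_j-\pi)$, then expand $\V(\sum_i \rho_i Z_i^{\textsc{ssiv}})$ as a double sum of covariances and evaluate the expectation index-tuple by index-tuple. The surviving configurations are counted using the mutual independence of the $\eta$-disturbances across distinct pairs and of the $T$'s and $w$'s across distinct units, the conditional-mean-zero identities $E[R_{ik}\mid T_i,w_i]=E[U_{ik}\mid T_i,w_i]=E[W_{ik}\mid T_i,w_i]=0$, and the fact that $A^{\text{pre}}$ and $A^{\text{post}}$ share the disturbance $\eta_{ij}$ (the positive pre/post edge association on a common pair is precisely what makes $E[\rho_i Z_i^{\textsc{ssiv}}]\neq 0$). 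The diagonal terms $\sum_i \V(\rho_i Z_i^{\textsc{ssiv}})$ contribute $O(n^2 q_n^{\text{pre}})$, which is $O(n^3 (q_n^{\text{pre}})^2)$ because $n q_n^{\text{pre}}\succcurlyeq 1$; the delicate part is the cross terms $i\neq i'$, where overlap of the treatment shocks ($j=j'$) and of the latent types governs the order, and careful bookkeeping—accounting for the $q_n^{\text{post}}$ normalizations buried inside $R,U,W$—shows each surviving configuration is $O((q_n^{\text{pre}})^2)$, giving at most $O(n^3 (q_n^{\text{pre}})^2)$ in total. Combining both pieces and invoking $n q_n^{\text{pre}}\succcurlyeq 1$ delivers \eqref{eq:MiZi}.
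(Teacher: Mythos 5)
Your proposal is correct and follows essentially the same route as the paper: the same decomposition $M_i=(\xi_i+\mu_{r_1,i})+(r_{0,i}+r_{1,i}-\mu_{r_1,i})$, Lemma \ref{lemma:Ziphii} to show the own-unit piece contributes only $O_{\mathbb{P}}(1/\sqrt{n})$ noise after the $n q_n^{\text{pre}}$ normalization, and a direct mean--variance expansion of $\sum_i Z_i^{\textsc{ssiv}}(r_{0,i}+r_{1,i}-\mu_{r_1,i})$ to isolate the signal $E[Z_i^{\textsc{ssiv}}(r_{0,i}+r_{1,i}-\mu_{r_1,i})]$ arising from the shared $\eta_{ij}$ disturbances. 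The only (cosmetic) difference is that the paper bounds the $r_{0,i}$ and $r_{1,i}$ contributions separately rather than treating $\rho_i$ as a whole, and your claimed orders for the diagonal and cross terms match what that computation delivers.
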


\begin{proof}[Proof of Lemma \ref{lemma:Zixx}]
See proof in Section \ref{proof_lemma:Zixx}.
\end{proof}

Now we analyze $\frac{1}{n}\sum_{i=1}^n (Z_i^{\textsc{ssiv}} - \bar{Z}^{\textsc{ssiv}})(M_i - \bar{M})$, which measures the relevance of the SSIV.
\begin{lemma}\label{lem:ZM}
Under Assumptions \ref{asu:network} and \ref{asu:linearY}, then
\begin{align*}
\frac{1}{n}\sum_{i=1}^n (Z_i^{\textsc{ssiv}} - \bar{Z}^{\textsc{ssiv}})(M_i - \bar{M})
= \begin{cases}
O_{\mathbb{P}}\left( \frac{ \min\{q_n^{\text{pre}}, q_n^{\text{post}}\} }{ q_n^{\text{post}} } \right)
+ O_{\mathbb{P}}\left( \frac{ q_n^{\text{pre}} }{ \sqrt{n} q_n^{\text{post}} } \right)
& \text{ if } \V(\xi_i) = 0; \\
O_{\mathbb{P}}\left( \frac{ \min\{ q_n^{\text{pre}}, q_n^{\text{post}} \} }{ q_n^{\text{post}} } \right)
+ O_{\mathbb{P}}\left( \sqrt{n} q_n^{\text{pre}} \right)
& \text{ if } \V(\xi_i) > 0.
\end{cases}
\end{align*}
\end{lemma}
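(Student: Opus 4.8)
The plan is to formalize the noise/signal heuristic of Section \ref{sec:IV_PC}. Writing $M_i = \xi_i + \rho_i$ with $\rho_i = r_{0,i} + r_{1,i}$, I would split the sample covariance as
\begin{align*}
\frac{1}{n}\sum_{i=1}^n (Z_i^{\textsc{ssiv}} - \bar{Z}^{\textsc{ssiv}})(M_i - \bar{M})
&= \frac{1}{n}\sum_{i=1}^n (Z_i^{\textsc{ssiv}} - \bar{Z}^{\textsc{ssiv}})(\xi_i - \bar{\xi}) \\
&\quad + \frac{1}{n}\sum_{i=1}^n (Z_i^{\textsc{ssiv}} - \bar{Z}^{\textsc{ssiv}})(\rho_i - \bar{\rho}),
\end{align*}
treating the $\xi$ term as pure noise and the $\rho$ term as carrying the relevance signal. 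This split separates the two regimes: in Case (b) the noise term vanishes identically because $\xi_i = \pi$, while in Case (a) it is precisely the $\xi$ term that governs the second (error) order.

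\textbf{Noise term.} Under $\V(\xi_i) = 0$ we have $\xi_i = \pi$, so $\xi_i - \bar{\xi} = 0$ and this term is exactly zero, which is why Case (b) admits the sharper bound. Under $\V(\xi_i) > 0$ I would apply Lemma \ref{lemma:Ziphii} with $\phi_i = \xi_i$: the leading piece $\frac{1}{n}\sum_i (T_i - \pi) Q_i^{\xi}$ is an average of mean-zero i.i.d. terms (since $Q_i^{\xi}$ is a function of $w_i$ alone while $T_i \indep w_i$), hence $O_{\mathbb{P}}(n^{-1/2})$; multiplying back by the normalization $n q_n^\text{pre}$ yields $\frac{1}{n}\sum_i \xi_i Z_i^{\textsc{ssiv}} = O_{\mathbb{P}}(\sqrt{n}\, q_n^\text{pre})$. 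I would then dispatch the centering cross-term $\bar{Z}^{\textsc{ssiv}}\bar{\xi}$ by writing $\bar{Z}^{\textsc{ssiv}} = \frac{1}{n}\sum_j (T_j - \pi) N_j^\text{pre}$ and bounding its second moment using $N_j^\text{pre} \asymp n q_n^\text{pre}$, so that $\bar{Z}^{\textsc{ssiv}} = O_{\mathbb{P}}(\sqrt{n}\, q_n^\text{pre})$ and $\bar{\xi} = O_{\mathbb{P}}(1)$. This reproduces the $O_{\mathbb{P}}(\sqrt{n}\, q_n^\text{pre})$ term in Case (a).

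\textbf{Signal term.} For the $\rho$ contribution I would separate mean from fluctuation. The mean $E[Z_i^{\textsc{ssiv}}\rho_i]$ is the genuine signal: its dominant part comes from matching the running index $j$ of $Z_i^{\textsc{ssiv}} = \sum_{j\ne i} A_{ij}^\text{pre}(T_j - \pi)$ with the index $k=j$ in $r_{0,i} = \xi_i (n-1)^{-1}\sum_k R_{ik}$, where the same $T_j$ appears. Using that $A_{ij}^\text{pre}$ and $A_{ij}^\text{post}$ are coupled through the shared $\eta_{ij}$, so that $E[A_{ij}^\text{pre} A_{ij}^\text{post}\mid w_i,w_j,T_i,T_j] = \min(q_n^\text{pre} g^\text{pre},\, q_n^\text{post} g^\text{post}) \asymp \min\{q_n^\text{pre}, q_n^\text{post}\}$, together with the $1/q_n^\text{post}$ denominators inside $R_{ij}$ and $E[(T_j-\pi)T_j] = \pi(1-\pi)$, I would obtain $E[Z_i^{\textsc{ssiv}}\rho_i] \asymp \min\{q_n^\text{pre}, q_n^\text{post}\}/q_n^\text{post}$, the common leading term $O_{\mathbb{P}}(\min\{q_n^\text{pre}, q_n^\text{post}\}/q_n^\text{post})$ in both cases. (Replacing $\rho_i$ by $\rho_i-\mu_{r_1,i}$ does not change this, since $E[Z_i^{\textsc{ssiv}}\mu_{r_1,i}]=0$.) The fluctuation of $\frac{1}{n}\sum_i \rho_i Z_i^{\textsc{ssiv}}$ around this mean I would control by a variance computation: in Case (a) it is absorbed into the $O_{\mathbb{P}}(\sqrt{n}\, q_n^\text{pre})$ noise, so Lemma \ref{lemma:Zixx} suffices after rescaling by $n q_n^\text{pre}$; in Case (b) a refined second-moment bound gives variance of order $(q_n^\text{pre})^2/(n (q_n^\text{post})^2)$, i.e. $O_{\mathbb{P}}(q_n^\text{pre}/(\sqrt{n}\, q_n^\text{post}))$, and one checks that $\bar{Z}^{\textsc{ssiv}}\bar{\rho}$ is of no larger order.

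\textbf{Main obstacle.} The delicate part is the signal-term analysis. Pinning down the exact order $\min\{q_n^\text{pre}, q_n^\text{post}\}/q_n^\text{post}$ of the mean requires careful bookkeeping of the joint law of $(A_{ij}^\text{pre}, A_{ij}^\text{post})$ under the shared-$\eta_{ij}$ coupling and of the double role of $T_j$ (it enters $Z_i^{\textsc{ssiv}}$ through $T_j-\pi$ and $R_{ij}$ through both $A_{ij}^\text{post}$ and the numerator $T_j$), while confirming that off-diagonal indices $k\ne j$ and the higher-order term $r_{1,i}$ do not inflate the leading order. The Case (b) fluctuation bound is the other subtle point, since $Z_i^{\textsc{ssiv}}$ and $\rho_i$ are both dependent across $i$ through shared treatments and shared edges, so the covariance sum $n^{-2}\sum_{i,i'}\textup{Cov}(\rho_i Z_i^{\textsc{ssiv}},\, \rho_{i'} Z_{i'}^{\textsc{ssiv}})$ must be organized by the number of shared indices, in the same spirit as the moment bounds behind Lemmas \ref{lemma:consistencyM}--\ref{lemma:airi_ri^2}.
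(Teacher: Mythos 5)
Your proposal is correct and follows essentially the same route as the paper: the same split of $M_i$ into an own-information part and a remainder, Lemma \ref{lemma:Ziphii} for the noise term, and the mean/fluctuation analysis of $\frac{1}{n}\sum_i Z_i^{\textsc{ssiv}}(r_{0,i}+r_{1,i})$ from Lemma \ref{lemma:Zixx} (with the matched-index, shared-$\eta_{ij}$ coupling giving the $\min\{q_n^{\text{pre}},q_n^{\text{post}}\}/q_n^{\text{post}}$ signal) for the relevance term. The only cosmetic difference is that the paper groups $\mu_{r_1,i}$ with $\xi_i$ so the remainder is conditionally mean-zero, whereas you leave it in $\rho_i$; this does not change any of the resulting orders.
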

\begin{proof}[Proof of Lemma \ref{lem:ZM}]
We can rewrite it as 
\begin{align*}
\frac{1}{n}\sum_{i=1}^n Z_i^{\textsc{ssiv}} M_i - \bar{Z}^{\textsc{ssiv}} \bar{M} 
=& \frac{1}{n}\sum_{i=1}^n Z_i^{\textsc{ssiv}} (\xi_i + \mu_{r_1,i})
- \bar{Z}^{\textsc{ssiv}} \frac{1}{n}\sum_{i=1}^n (\xi_i + \mu_{r_1,i}) \\
&+ \frac{1}{n}\sum_{i=1}^n Z_i^{\textsc{ssiv}} (r_{0,i} + r_{1,i} - \mu_{r_1,i})
- \bar{Z}^{\textsc{ssiv}} \frac{1}{n}\sum_{i=1}^n (r_{0,i} + r_{1,i} - \mu_{r_1,i}). 
\end{align*}
By the argument in the proof of Lemma \ref{lemma:Zixx} (see \eqref{eq:Ziri} for details), we show that
\begin{equation*}
\frac{1}{n}\sum_{i=1}^n Z_i^{\textsc{ssiv}}(r_{0,i} + r_{1,i} - \mu_{r_1,i})
= E\left[ Z_i^{\textsc{ssiv}}(r_{0,i} + r_{1,i} - \mu_{r_1,i}) \right]
+ O_{\mathbb{P}}\left( \frac{ q_n^{\text{pre}} }{\sqrt{n} q_n^{\text{post}}} \right)
\end{equation*}
where 
\[
E\left[ Z_i^{\textsc{ssiv}}(r_{0,i} + r_{1,i} - \mu_{r_1,i}) \right]
\asymp \frac{ \min\{q_n^{\text{pre}}, q_n^{\text{post}}\} }{ q_n^{\text{post}} }.
\]
By Lemma \ref{lemma:Ziphii} and Lemma \ref{lemma:airi_ri^2}, we can show that 
\begin{equation*}
\bar{Z}^{\textsc{ssiv}} \frac{1}{n}\sum_{i=1}^n (r_{0,i} + r_{1,i} - \mu_{r_1,i})
= O_{\mathbb{P}}\left( \frac{ n q_n^{\text{pre}} }{ \sqrt{n} } \right) 
O_{\mathbb{P}}\left( \frac{ 1 }{ n \sqrt{q_n^{\text{post}}} } \right)
= O_{\mathbb{P}}\left( \frac{ q_n^{\text{pre}} }{  \sqrt{ n q_n^{\text{post}}} } \right).
\end{equation*}
For the sum of the first two terms, we consider Cases (a) and (b), respectively.
\paragraph{Case (a)} 
By Lemma \ref{lemma:Ziphii}, we have 
\begin{align*}
\frac{1}{n}\sum_{i=1}^n Z_i^{\textsc{ssiv}} (\xi_i + \mu_{r_1,i})
- \bar{Z}^{\textsc{ssiv}} \frac{1}{n}\sum_{i=1}^n (\xi_i + \mu_{r_1,i})
= O_{\mathbb{P}}\left( \sqrt{n} q_n^{\text{pre}} \right).
\end{align*}

\paragraph{Case (b)} 
By Lemma \ref{lemma:Ziphii} and recall \eqref{eq:mu_r1i}, we have 
\begin{align*}
\frac{1}{n}\sum_{i=1}^n Z_i^{\textsc{ssiv}}(\xi_i + \mu_{r_1,i}) 
- \bar{Z}^{\textsc{ssiv}} \frac{1}{n}\sum_{i=1}^n (\xi_i + \mu_{r_1,i}) 
= \frac{1}{n}\sum_{i=1}^n Z_i^{\textsc{ssiv}} \mu_{r_1,i}
- \bar{Z}^{\textsc{ssiv}} \frac{1}{n}\sum_{i=1}^n \mu_{r_1,i} 
= O_{\mathbb{P}}\left( \frac{q_n^{\text{pre}}}{\sqrt{n} q_n^{\text{post}}} \right).
\end{align*}
To conclude,
\begin{align*}
\frac{1}{n}\sum_{i=1}^n ( Z_i^{\textsc{ssiv}} - \bar{Z}^{\textsc{ssiv}})( M_i - \bar{M})
= \begin{cases}
O_{\mathbb{P}}\left( \frac{ \min\{q_n^{\text{pre}}, q_n^{\text{post}}\} }{ q_n^{\text{post}} } \right)
+ O_{\mathbb{P}}\left( \frac{ q_n^{\text{pre}} }{ \sqrt{n} q_n^{\text{post}} } \right)
& \text{ if }  \V(\xi_i) = 0; \\
O_{\mathbb{P}}\left( \frac{ \min\{q_n^{\text{pre}}, q_n^{\text{post}}\} }{ q_n^{\text{post}} } \right)
+ O_{\mathbb{P}}\left( \sqrt{n} q_n^{\text{pre}} \right)
& \text{ if }  \V(\xi_i)>0.
\end{cases} 
\end{align*}

\end{proof}

\subsection{Consistency}
Define the matrix
\begin{align*}
D_n^{\textsc{iv}} = \begin{pmatrix}
1 & 0 & 0 \\
0 & 1 & 0 \\
0 & 0 & \frac{1}{n q_n^{\text{pre}} } 
\end{pmatrix}.
\end{align*}
The closed form of $( D_n^{\textsc{iv}} {Z}^\top X )^{-1}$ is 
\begin{align}
& \left( D_n^{\textsc{iv}} {Z}^\top X \right)^{-1} 
= \frac{ 1 }{ \det(D_n^{\textsc{iv}} {Z}^\top X) } \begin{pmatrix}
b_{11} & b_{12} & b_{13} \\
b_{21} & b_{22} & b_{23} \\
b_{31} & b_{32} & b_{33}
\end{pmatrix}
\label{eq:ZX-1}
\end{align}  
where 
\begin{align*}
b_{11}
~=~& \left( \frac{1}{n}\sum_{i=1}^n T_i \right) \left( \frac{1}{n^2 q_n^{\text{pre}}}\sum_{i=1}^n M_i Z_i^{\textsc{ssiv}} \right) - \left( \frac{1}{n }\sum_{i=1}^n T_i M_i \right) \left( \frac{1}{n^2 q_n^{\text{pre}}}\sum_{i=1}^n T_i Z_i^{\textsc{ssiv}} \right) \\
b_{12}
~=~& - \left( \frac{1}{n}\sum_{i=1}^n T_i \right) \left( \frac{1}{n^2 q_n^{\text{pre}}}\sum_{i=1}^n M_i Z_i^{\textsc{ssiv}} \right) + \left( \frac{1}{n^2 q_n^{\text{pre}}}\sum_{i=1}^n T_i Z_i^{\textsc{ssiv}} \right) \left( \frac{1}{n}\sum_{i=1}^n M_i \right)  \\
b_{13}
~=~& a_{13} \\
b_{21}
~=~& - \left( \frac{1}{n}\sum_{i=1}^n T_i \right) \left( \frac{1}{n^2 q_n^{\text{pre}}}\sum_{i=1}^n M_i Z_i^{\textsc{ssiv}} \right) 
+ \left( \frac{1}{n^2 q_n^{\text{pre}}}\sum_{i=1}^n Z_i^{\textsc{ssiv}} \right) 
\left( \frac{1}{n}\sum_{i=1}^n T_i M_i \right) \\
b_{22}
~=~& \left( \frac{1}{n^2 q_n^{\text{pre}}}\sum_{i=1}^n M_i Z_i^{\textsc{ssiv}} \right) - \left( \frac{1}{n}\sum_{i=1}^n M_i \right) \left( \frac{1}{n^2 q_n^{\text{pre}}}\sum_{i=1}^n Z_i^{\textsc{ssiv}} \right) \\
b_{23}
~=~& a_{23} \\
b_{31}
~=~& \left( \frac{1}{n}\sum_{i=1}^n T_i \right) \left( \frac{1}{n^2 q_n^{\text{pre}}}\sum_{i=1}^n T_i Z_i^{\textsc{ssiv}} \right) - \left( \frac{1}{n}\sum_{i=1}^n T_i \right) \left( \frac{1}{n^2 q_n^{\text{pre}}}\sum_{i=1}^n Z_i^{\textsc{ssiv}} \right) \\
b_{32}
~=~& - \left( \frac{1}{n^2 q_n^{\text{pre}}}\sum_{i=1}^n T_i Z_i^{\textsc{ssiv}} \right) + \left( \frac{1}{n^2 q_n^{\text{pre}}}\sum_{i=1}^n Z_i^{\textsc{ssiv}} \right) \left( \frac{1}{n}\sum_{i=1}^n T_i \right) \\
b_{33} 
~=~& a_{33}
\end{align*}
and 
\begin{align*}
\det({Z}^\top X)
= b_{33} b_{22} - b_{32} b_{23}.
\end{align*}
Define $b_{22}^* = \frac{E\left[ Z_i^{\textsc{ssiv}}(r_{0,i} + r_{1,i} ) \right]}{n q_n^{\text{pre}}} \asymp \frac{1}{n \max\{q_n^{\text{pre}},q_n^{\text{post}}\}} $.
Below, we show the probability limit of $\left( D_n^{\textsc{iv}} {Z}^\top X \right)^{-1}$.
\begin{theorem}\label{thm:ZX-1}
Under Assumption \ref{asu:network}, then
\paragraph{Case (a):}
\begin{align*}
& \left( D_n^{\textsc{iv}} {Z}^\top X \right)^{-1} 
= \frac{ \begin{pmatrix}
\pi b_{22}^* & - \pi b_{22}^* & a_{13}^* \\
- \pi b_{22}^* & b_{22}^*  & - \textup{Cov}(T_i, \xi_i + \mu_{r_1,i}) \\
0 & 0 & \pi (1-\pi) 
\end{pmatrix} + O_{\mathbb{P}}\left( \frac{1}{ \sqrt{n} } \right) }{ \pi (1-\pi) b_{22}^* + O_{\mathbb{P}}\left( \frac{1}{ \sqrt{n} } \right) };
\end{align*}

\paragraph{Case (b):}
\begin{align*}
& \left( D_n^{\textsc{iv}} {Z}^\top X \right)^{-1} 
= \frac{ \begin{pmatrix}
\pi b_{22}^* + O_{\mathbb{P}}\left( \frac{1}{ \sqrt{n} } \right) & - \pi b_{22}^*  + O_{\mathbb{P}}\left( \frac{1}{ \sqrt{n} } \right)  & a_{13}^* + O_{\mathbb{P}}\left( \frac{1}{ \sqrt{n} } \right) \\
- \pi b_{22}^* + O_{\mathbb{P}}\left( \frac{1}{ \sqrt{n} } \right) & b_{22}^* + O_{\mathbb{P}}\left( \frac{1}{\sqrt{n} nq_n^{\text{post}}} \right)  &  O_{\mathbb{P}}\left( \frac{1}{ n \sqrt{q_n^{\text{post}}} } \right)  \\
O_{\mathbb{P}}\left( \frac{1}{ \sqrt{n} } \right) & O_{\mathbb{P}}\left( \frac{1}{ \sqrt{n} \sqrt{n q_n^{\text{pre}}} } \right) & \pi (1-\pi) + O_{\mathbb{P}}\left( \frac{1}{ \sqrt{n} } \right) 
\end{pmatrix} }{ \pi (1-\pi) b_{22}^* + O_{\mathbb{P}}\left( \frac{ 1 }{ \sqrt{n} n q_n^{\text{post}}} \right) }. 
\end{align*}

\end{theorem}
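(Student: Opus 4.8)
The plan is to reduce the inversion to the explicit adjugate representation \eqref{eq:ZX-1} and then pin down the stochastic order of every scalar entry $b_{ij}$ together with the determinant $\det(Z^\top X) = b_{33}b_{22} - b_{32}b_{23}$. Since each $b_{ij}$ is a fixed polynomial in a short list of sample moments, it suffices to determine the probability limit and error rate of each of those moments and then propagate the rates through the finite algebraic combinations. I would therefore first catalogue the building blocks, attach a rate to each, and only afterwards assemble the entries case by case.

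For the component moments I would invoke the auxiliary lemmas already established. Lemma \ref{lemma:consistencyM} controls $\frac1n\sum_i M_i$ and $\frac1n\sum_i T_iM_i$ up to $O_{\mathbb{P}}(n^{-1}(q_n^{\text{post}})^{-1/2})$, while the elementary iid CLT gives $\frac1n\sum_i T_i = \pi + O_{\mathbb{P}}(n^{-1/2})$, hence $b_{33} = a_{33} = \pi(1-\pi) + O_{\mathbb{P}}(n^{-1/2})$. The three moments involving the instrument are handled by Lemmas \ref{lemma:Ziphii} and \ref{lemma:Zixx}: taking $\phi_i = 1$ and $\phi_i = T_i$ in Lemma \ref{lemma:Ziphii} shows that $\frac{1}{n^2 q_n^{\text{pre}}}\sum_i Z_i^{\textsc{ssiv}}$ and $\frac{1}{n^2 q_n^{\text{pre}}}\sum_i T_i Z_i^{\textsc{ssiv}}$ each equal a mean-zero iid average plus an $O_{\mathbb{P}}((\sqrt{n}\sqrt{nq_n^{\text{pre}}})^{-1})$ remainder, and hence are $O_{\mathbb{P}}(n^{-1/2})$ under Assumption \ref{asu:network} (using $q_n^{\text{pre}}\succcurlyeq n^{-1}$); Lemma \ref{lemma:Zixx} supplies the one moment carrying signal, $\frac{1}{n^2 q_n^{\text{pre}}}\sum_i M_i Z_i^{\textsc{ssiv}} = b_{22}^* + O_{\mathbb{P}}(n^{-1/2})$ with $b_{22}^* \asymp (n\max\{q_n^{\text{pre}},q_n^{\text{post}}\})^{-1}$.

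Then I would assemble each adjugate entry. The structural fact to exploit is that $b_{13} = a_{13}$, $b_{23} = a_{23}$, and $b_{33} = a_{33}$ are exactly the quantities whose Case (a)/Case (b) orders were already pinned down in Theorem \ref{thm:XX-1}, so those limits transfer verbatim; the remaining entries $b_{11}, b_{12}, b_{21}, b_{22}, b_{31}, b_{32}$ are products of the instrument moments above with the $T$- and $M$-moments, and substituting the rates while retaining leading terms produces the displayed entries $\pm\pi b_{22}^*$, $b_{22}^*$, $-\textup{Cov}(T_i,\xi_i+\mu_{r_1,i})$, and so on. The delicate point—and the step I expect to be the main obstacle—is the denominator: one must show $\det(Z^\top X) = b_{33}b_{22} - b_{32}b_{23}$ is governed by $\pi(1-\pi)b_{22}^*$, i.e. that the cross term $b_{32}b_{23}$ is of strictly smaller order than $b_{33}b_{22}$. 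This is genuinely delicate in Case (b), where the unscaled limiting matrix is singular, so Lemma \ref{lemma:inverselimit} is unavailable and one cannot argue by continuity of matrix inversion; instead every entry must be resolved to its exact order and the comparatively large remainder $O_{\mathbb{P}}((\sqrt{n}\,n q_n^{\text{post}})^{-1})$ sitting inside $b_{22}$ must be carried through so that the leading behavior of the determinant, and of the off-diagonal $(2,3)$ and $(3,2)$ entries, is correctly identified across the full sparsity range $n^{-1}\preccurlyeq q_n^{\diamond}\preccurlyeq 1$.
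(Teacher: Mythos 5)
Your proposal follows essentially the same route as the paper's proof: write $(D_n^{\textsc{iv}}Z^\top X)^{-1}$ in adjugate form, pin down each entry's order via Lemma \ref{lemma:Ziphii} (with $\phi_i=1$ and $\phi_i=T_i$), Lemma \ref{lemma:consistencyM}, and Lemma \ref{lemma:Zixx}, reuse the $a_{13}$, $a_{23}$, $a_{33}$ limits from Theorem \ref{thm:XX-1}, and resolve the determinant directly since the limiting matrix is singular in Case (b). The only detail worth flagging is that the sharper Case (b) remainder $O_{\mathbb{P}}\bigl(1/(\sqrt{n}\,nq_n^{\text{post}})\bigr)$ for $b_{22}$ comes from the centered relevance analysis in Lemma \ref{lem:ZM} rather than from Lemma \ref{lemma:Zixx} alone, but you correctly identify that this refinement is needed and must be propagated into the determinant.
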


\begin{proof}[Proof of Theorem \ref{thm:ZX-1}]
By Lemma \ref{lemma:Ziphii}, we can show that 
\begin{align}
\frac{1}{n^2 q_n^{\text{pre}} }\sum_{i=1}^n Z_i^{\textsc{ssiv}}
=& \frac{1}{n}\sum_{i=1}^n (T_i - \pi) E\left[ \frac{ A_{ij}^{\text{pre}} }{ q_n^{\text{pre}}  } \mid w_i \right]
+ O_{\mathbb{P}}\left( \frac{1}{\sqrt{n} \sqrt{nq_n^{\text{pre}} }} \right)
= O_{\mathbb{P}}\left( \frac{1}{\sqrt{n}} \right), 
\label{eq:Zi} \\
\frac{1}{n^2 q_n^{\text{pre}}}\sum_{i=1}^n T_i Z_i^{\textsc{ssiv}}
=& \frac{1}{n}\sum_{i=1}^n (T_i - \pi) E\left[ \frac{ A_{ij}^{\text{pre}} T_j }{ q_n^{\text{pre}}  } \mid w_i \right]
+ O_{\mathbb{P}}\left( \frac{1}{\sqrt{n} \sqrt{nq_n^{\text{pre}}}} \right)
= O_{\mathbb{P}}\left( \frac{1}{\sqrt{n}} \right).
\label{eq:TiZi}
\end{align}
Then we analyze terms in \eqref{eq:ZX-1} individually. \\
\underline{(1): $b_{11}$.}
By Lemma \ref{lemma:consistencyM} and Lemma \ref{lemma:Zixx}, we have
\begin{align*}
& b_{11}
= \left( \frac{1}{n}\sum_{i=1}^n T_i \right) \left( \frac{1}{n^2 q_n}\sum_{i=1}^n M_i Z_i^{\textsc{ssiv}} \right) - \left( \frac{1}{n}\sum_{i=1}^n T_i M_i \right) \left( \frac{1}{n^2 q_n}\sum_{i=1}^n T_i Z_i^{\textsc{ssiv}} \right) \\
&= \left( \pi + O_{\mathbb{P}}\left( \frac{1}{\sqrt{n}} \right) \right) \left( b_{22}^*  + O_{\mathbb{P}}\left( \frac{1}{\sqrt{n}} \right) \right) 
- \left( E[T_i (\xi_i + \mu_{r_1,i})] 
+ O_{\mathbb{P}}\left( \frac{1}{\sqrt{n}} \right) \right) 
O_{\mathbb{P}}\left( \frac{1}{\sqrt{n}} \right) \\
&= \pi b_{22}^* + O_{\mathbb{P}}\left( \frac{1}{\sqrt{n}} \right).
\end{align*}
\underline{(2): $b_{12}$.}
By Lemma \ref{lemma:consistencyM} and Lemma \ref{lemma:Zixx}, we have
\begin{align*}
& b_{12}
= \left( \frac{1}{n^2 q_n^{\text{pre}}}\sum_{i=1}^n T_i Z_i^{\textsc{ssiv}} \right) 
\left( \frac{1}{n}\sum_{i=1}^n M_i \right)
- \left( \frac{1}{n}\sum_{i=1}^n T_i \right) 
\left( \frac{1}{n^2 q_n^{\text{pre}}}\sum_{i=1}^n M_i Z_i^{\textsc{ssiv}} \right)  \\
&= \left( \frac{1}{n^2 q_n^{\text{pre}}}\sum_{i=1}^n T_i Z_i^{\textsc{ssiv}} \right) \left( \frac{1}{n}\sum_{i=1}^n (\xi_i + \mu_{r_1,i}) \right)
- \left( \frac{1}{n}\sum_{i=1}^n T_i \right) \left( \frac{1}{n^2 q_n^{\text{pre}}}\sum_{i=1}^n (\xi_i + \mu_{r_1,i}) Z_i^{\textsc{ssiv}} \right) \\
&+ \left( \frac{1}{n^2 q_n^{\text{pre}}}\sum_{i=1}^n T_i Z_i^{\textsc{ssiv}} \right) \left( \frac{1}{n}\sum_{i=1}^n (r_{0,i} + r_{1,i}- \mu_{r_1,i}) \right)
- \left( \frac{1}{n}\sum_{i=1}^n T_i \right) \left( \frac{1}{n^2 q_n^{\text{pre}}}\sum_{i=1}^n (r_{0,i} + r_{1,i} - \mu_{r_1,i}) Z_i^{\textsc{ssiv}} \right) \\
&= \left( E( \xi_i + \mu_{r_1,i} ) 
+ O_{\mathbb{P}}\left( \frac{1}{\sqrt{n}} \right) \right) 
O_{\mathbb{P}}\left( \frac{1}{\sqrt{n}} \right)
- \left( \pi + O_{\mathbb{P}}\left( \frac{1}{\sqrt{n}} \right)  \right) \left( b_{22}^* + O_{\mathbb{P}}\left( \frac{1}{\sqrt{n}} \right) \right) \\
&= - \pi b_{22}^* + O_{\mathbb{P}}\left( \frac{1}{\sqrt{n}} \right).
\end{align*}
\underline{(3): $b_{13}$.}
By \eqref{eq:a_{13}_limit}, we have shown 
$b_{13}
= a_{13}^* + O_{\mathbb{P}}\left( \frac{1}{\sqrt{n}} \right)$. \\
\underline{(4): $b_{21}$.}
By Lemma \ref{lemma:consistencyM} and Lemma \ref{lemma:Zixx}, we can show
\begin{align*}
b_{21}
=& - \left( \frac{1}{n}\sum_{i=1}^n T_i \right) 
\left( \frac{1}{n^2 q_n^{\text{pre}}}\sum_{i=1}^n M_i Z_i^{\textsc{ssiv}} \right) 
+ \left( \frac{1}{n^2 q_n^{\text{pre}}}\sum_{i=1}^n Z_i^{\textsc{ssiv}} \right) 
\left( \frac{1}{n}\sum_{i=1}^n T_i M_i \right) \\
=& - \left( \frac{1}{n}\sum_{i=1}^n T_i \right) 
\left( \frac{1}{n^2 q_n^{\text{pre}}}\sum_{i=1}^n \xi_i Z_i^{\textsc{ssiv}} \right) 
+ \left( \frac{1}{n^2 q_n^{\text{pre}}}\sum_{i=1}^n Z_i^{\textsc{ssiv}} \right) 
\left( \frac{1}{n}\sum_{i=1}^n T_i \xi_i \right) \\
&- \left( \frac{1}{n}\sum_{i=1}^n T_i \right) 
\left( \frac{1}{n^2 q_n^{\text{pre}}}\sum_{i=1}^n \mu_{r_1,i} Z_i^{\textsc{ssiv}} \right) 
+ \left( \frac{1}{n^2 q_n^{\text{pre}}}\sum_{i=1}^n Z_i^{\textsc{ssiv}} \right) 
\left( \frac{1}{n}\sum_{i=1}^n T_i \mu_{r_1,i} \right) \\
&- \left( \frac{1}{n}\sum_{i=1}^n T_i \right) \left( \frac{1}{n^2 q_n^{\text{pre}}}\sum_{i=1}^n (r_{0,i} + r_{1,i} - \mu_{r_1,i}) Z_i^{\textsc{ssiv}} \right) + \left( \frac{1}{n^2 q_n^{\text{pre}}}\sum_{i=1}^n Z_i^{\textsc{ssiv}} \right) \left( \frac{1}{n}\sum_{i=1}^n T_i (r_{0,i} + r_{1,i} - \mu_{r_1,i}) \right) \\
=& - \left( \pi + O_{\mathbb{P}}\left( \frac{1}{\sqrt{n}} \right) \right) \left( \frac{E\left[ Z_i^{\textsc{ssiv}}(r_{0,i} + r_{1,i} - \mu_{r_1,i}) \right]}{n q_n^{\text{pre}}} + O_{\mathbb{P}}\left( \frac{1}{\sqrt{n}} \right) \right) 
+ O_{\mathbb{P}}\left( \frac{1}{n \sqrt{q_n^{\text{post}}}} \right)
O_{\mathbb{P}}\left( \frac{1}{\sqrt{n}} \right) \\
=& - \pi b_{22}^* + O_{\mathbb{P}}\left( \frac{1}{\sqrt{n}} \right).
\end{align*}
\underline{(5): $b_{31}$.}
By analogous argument to \eqref{eq:TiZi}, we can show that 
\begin{align*}
b_{31}
=& \left( \frac{1}{n}\sum_{i=1}^n T_i \right)
\left( \frac{1}{n^2 q_n^{\text{pre}}}\sum_{i=1}^n T_i Z_i^{\textsc{ssiv}} \right) 
- \left( \frac{1}{n}\sum_{i=1}^n T_i \right) 
\left( \frac{1}{n^2 q_n^{\text{pre}}}\sum_{i=1}^n Z_i^{\textsc{ssiv}} \right) \\
=& - \left( \frac{1}{n}\sum_{i=1}^n T_i \right)
\left( \frac{1}{n^2 q_n^{\text{pre}}}\sum_{i=1}^n (1 - T_i) Z_i^{\textsc{ssiv}} \right)
= O_{\mathbb{P}}\left( \frac{1}{\sqrt{n}} \right).
\end{align*}
\underline{(6): $b_{32}$.}
By Lemma \ref{lemma:Ziphii}, we have
\begin{align}
b_{32} 
=& - \left( \frac{1}{n^2 q_n^{\text{pre}}}\sum_{i=1}^n T_i Z_i^{\textsc{ssiv}} \right) + \left( \frac{1}{n^2 q_n^{\text{pre}}}\sum_{i=1}^n Z_i^{\textsc{ssiv}} \right) \left( \frac{1}{n}\sum_{i=1}^n T_i \right) \notag \\
=& - \left( \frac{1}{n^2 q_n^{\text{pre}}}\sum_{i=1}^n (T_i - \pi) Z_i^{\textsc{ssiv}} \right) + \left( \frac{1}{n^2 q_n^{\text{pre}}}\sum_{i=1}^n Z_i^{\textsc{ssiv}} \right) \left( \frac{1}{n}\sum_{i=1}^n (T_i-\pi) \right) \notag \\
=& O_{\mathbb{P}}\left( \frac{1}{\sqrt{n} \sqrt{n q_n^{\text{pre}}}} \right)
+ O_{\mathbb{P}}\left( \frac{1}{\sqrt{n} } \right)
O_{\mathbb{P}}\left( \frac{1}{\sqrt{n} } \right)
= O_{\mathbb{P}}\left( \frac{1}{\sqrt{n} \sqrt{n q_n^{\text{pre}}}} \right).
\label{eq:b_33_limit}
\end{align}
\underline{(7): $b_{33}$.} $b_{33} = \pi (1-\pi) + O_{\mathbb{P}}\left( \frac{1}{\sqrt{n}} \right)$.

We consider Cases (a) and (b) for the remaining terms, respectively.

\paragraph{Case (a)}

\underline{(1): $b_{22}$.}
By Lemma \ref{lem:ZM}, we have 
\begin{align}
b_{22}
= b_{22}^* + O_{\mathbb{P}}\left( \frac{1}{\sqrt{n} } \right).
\label{eq:b_22_limit_a}
\end{align}
\underline{(2): $b_{23}$.}
By Lemma \ref{lemma:consistencyM}, we can show 
\begin{align}
b_{23}
= - \textup{Cov}(T_i, \xi_i + \mu_{r_1,i}) 
+ O_{\mathbb{P}}\left( \frac{1}{ \sqrt{n} } \right).
\label{eq:b_23_limit_a}
\end{align}
\underline{(3): $\det({Z}^\top X)$.}
By combining \eqref{eq:b_33_limit}, \eqref{eq:b_22_limit_a} and \eqref{eq:b_23_limit_a}, we have
\begin{align*}
& \det({Z}^\top X)
= b_{33} b_{22} - b_{32} b_{23} \\
&= \left( \pi (1-\pi) + O_{\mathbb{P}}\left( \frac{1}{\sqrt{n}} \right) \right)
\left( b_{22}^* + O_{\mathbb{P}}\left( \frac{ 1 }{ \sqrt{n} } \right) \right) 
- \left( - \textup{Cov}(T_i, \xi_i + \mu_{r_1,i}) 
+ O_{\mathbb{P}}\left( \frac{1}{ \sqrt{n} } \right) \right)
O_{\mathbb{P}}\left(\frac{1}{n\sqrt{q_n^{\text{pre}}}}\right) \\
&= \pi (1-\pi) b_{22}^* + O_{\mathbb{P}}\left( \frac{1}{ \sqrt{n} } \right).
\end{align*}

\paragraph{Case (b)}

\underline{(1): $b_{22}$.}
By Lemma \ref{lem:ZM}, we have
\begin{align}
b_{22}
= b_{22}^* + O_{\mathbb{P}}\left( \frac{ 1 }{ \sqrt{n} n q_n^{\text{post}}} \right).
\label{eq:b_22_limit_b}
\end{align}
\underline{(2): $b_{23}$.}
By \eqref{eq:a_23_limit}, we have shown 
$b_{23} = O_{\mathbb{P}}\left( \frac{1}{n\sqrt{q_n^{\text{post}}}} \right)$. \\
\underline{(3): $\det({Z}^\top X)$.}
By combining \eqref{eq:b_33_limit}, \eqref{eq:b_22_limit_b} and \eqref{eq:a_23_limit}, we have
\begin{align*}
& \det({Z}^\top X)
= b_{33} b_{22} - b_{32} b_{23} \\
&= \left( \pi (1-\pi) + O_{\mathbb{P}}\left( \frac{1}{\sqrt{n}} \right) \right)
\left( b_{22}^* + O_{\mathbb{P}}\left( \frac{ 1 }{ \sqrt{n} n q_n^{\text{post}}} \right) \right) 
- O_{\mathbb{P}}\left( \frac{1}{n\sqrt{q_n^{\text{post}}}} \right) 
O_{\mathbb{P}}\left(\frac{1}{n\sqrt{q_n^{\text{pre}}}}\right) \\
&= \pi (1-\pi) b_{22}^* + O_{\mathbb{P}}\left( \frac{ 1 }{ \sqrt{n} n q_n^{\text{post}}} \right).
\end{align*}
    
\end{proof}

\begin{proof}[Proof of Theorem \ref{thm:IV_consistency}]
By direct algebra, the closed form of the IV estimators $\hat{\beta}^{\textsc{iv}}$ is 
\begin{align*}
\hat{\beta}^{\textsc{iv}}_1 - {\beta}_1
~=~& \frac{b_{22} \frac{1}{n}\sum_{i=1}^n (T_i - \bar{T}) u_i + b_{23} \frac{1}{n^2 q_n^{\text{pre}}}\sum_{i=1}^n (Z_i^{\textsc{ssiv}} - \bar{Z}^{\textsc{ssiv}}) u_i }{ b_{33} b_{22} - b_{32} b_{23} }, \\
\hat{\beta}^{\textsc{iv}}_2 - {\beta}_2
~=~& \frac{ b_{33} \frac{1}{n^2 q_n^{\text{pre}}}\sum_{i=1}^n (Z_i^{\textsc{ssiv}} - \bar{Z}^{\textsc{ssiv}}) u_i + b_{32} \frac{1}{n}\sum_{i=1}^n (T_i - \bar{T}) u_i }{ b_{33} b_{22} - b_{32} b_{23} }, \\
\hat{\beta}^{\textsc{iv}}_0 - {\beta}_0
~=~& \frac{1}{n}\sum_{i=1}^n u_i - (\hat{\beta}^{\textsc{iv}}_1 - {\beta}_1) \bar{T} - (\hat{\beta}^{\textsc{iv}}_2 - {\beta}_2) \bar{M}.
\end{align*}
By Lemma \ref{lemma:Ziphii}, we can show that 
\begin{align}
\frac{1}{n^2 q_n^{\text{pre}}}\sum_{i=1}^n Z_i^{\textsc{ssiv}} u_i
= \frac{1}{n }\sum_{i=1}^n (T_i-\pi) E\left[ \frac{ A_{ij}^{\text{pre}} u_j }{ q_n^{\text{pre}}  } \mid w_i \right] 
+ O_{\mathbb{P}}\left( \frac{1}{\sqrt{n}\sqrt{n q_n^{\text{pre}}}} \right)
= O_{\mathbb{P}}\left( \frac{1}{\sqrt{n}} \right).
\label{eq:Ziui}
\end{align}
By combining \eqref{eq:Zi} and \eqref{eq:Ziui}, we can show that 
\begin{align*}
\frac{1}{n^2 q_n^{\text{pre}}}\sum_{i=1}^n (Z_i^{\textsc{ssiv}} -\bar{Z}^{\textsc{ssiv}})u_i
= O_{\mathbb{P}}\left( \frac{1}{\sqrt{n}} \right).
\end{align*}
With i.i.d. data, we also have 
\begin{equation*}
\frac{1}{n}\sum_{i=1}^n u_i = O_{\mathbb{P}}\left( \frac{1}{ \sqrt{n} } \right) 
\text{ and }
\frac{1}{n}\sum_{i=1}^n (T_i - \bar{T}) u_i = O_{\mathbb{P}}\left( \frac{1}{ \sqrt{n} } \right). \label{eq:ui}
\end{equation*}

\paragraph{Case (a)}

By Theorem \ref{thm:ZX-1}, we can conclude that
\begin{align*}
\hat{\beta}^{\textsc{iv}}_1 - {\beta}_1
~=~& \frac{b_{22} \frac{1}{n}\sum_{i=1}^n (T_i - \bar{T}) u_i + b_{23} \frac{1}{n^2 q_n^{\text{pre}}}\sum_{i=1}^n (Z_i^{\textsc{ssiv}} - \bar{Z}^{\textsc{ssiv}}) u_i }{ b_{33} b_{22} - b_{32} b_{23} }  \\
~=~& \frac{ \left( b_{22}^* + O_{\mathbb{P}}\left( \frac{1}{\sqrt{n} } \right) \right) O_{\mathbb{P}}\left( \frac{1}{\sqrt{n} } \right) + \left( - \textup{Cov}(T_i, \xi_i + \mu_{r_1,i}) 
+ O_{\mathbb{P}}\left( \frac{1}{ \sqrt{n} } \right) \right) O_{\mathbb{P}}\left( \frac{1}{\sqrt{n}} \right)  }{ \pi (1-\pi) b_{22}^* + O_{\mathbb{P}}\left( \frac{1}{ \sqrt{n} } \right) }  \\
~=~& \frac{ b_{22}^*  O_{\mathbb{P}}\left( \frac{1}{\sqrt{n} } \right) + \left( - \textup{Cov}(T_i, \xi_i + \mu_{r_1,i})  \right) O_{\mathbb{P}}\left( \frac{1}{\sqrt{n}} \right) +  O_{\mathbb{P}}\left( \frac{1}{n} \right) }{ \pi (1-\pi) b_{22}^* + O_{\mathbb{P}}\left( \frac{1}{ \sqrt{n} } \right) } 
\end{align*}
and 
\begin{align*}
\hat{\beta}^{\textsc{iv}}_2 - {\beta}_2
~=~& \frac{ b_{33} \frac{1}{n^2 q_n^{\text{pre}}}\sum_{i=1}^n (Z_i^{\textsc{ssiv}} - \bar{Z}^{\textsc{ssiv}}) u_i + b_{32} \frac{1}{n}\sum_{i=1}^n (T_i - \bar{T}) u_i }{ b_{33} b_{22} - b_{32} b_{23} } \\
~=~& \frac{ \left(\pi(1-\pi) + O_{\mathbb{P}}\left( \frac{1}{ \sqrt{n} } \right) \right)O_{\mathbb{P}}\left( \frac{1}{ \sqrt{n} } \right) +  O_{\mathbb{P}}\left( \frac{1}{ \sqrt{n} \sqrt{n q_n^{\text{pre}}} } \right) O_{\mathbb{P}}\left( \frac{1}{ \sqrt{n} } \right)}{ \pi (1-\pi) b_{22}^* + O_{\mathbb{P}}\left( \frac{1}{ \sqrt{n} } \right) } \\
~=~& \frac{ \pi(1-\pi) O_{\mathbb{P}}\left( \frac{1}{ \sqrt{n} } \right) +  O_{\mathbb{P}}\left( \frac{1}{ n } \right) }{ \pi (1-\pi) b_{22}^* + O_{\mathbb{P}}\left( \frac{1}{ \sqrt{n} } \right) }.
\end{align*}
Recall $b_{22}^* \asymp \frac{1}{n \max\{q_n^{\text{pre}},q_n^{\text{post}}\}} $.
Therefore, $\hat{\beta}^{\textsc{iv}}_1$ and $\hat{\beta}^{\textsc{iv}}_2$ are consistent when $\max\{q_n^{\text{pre}}, q_n^{\text{post}}\} \prec \frac{1}{\sqrt{n}}$
 with
\begin{align*}
\hat{\beta}^{\textsc{iv}}_1 - {\beta}_1 
~=~& O_{\mathbb{P}}\left( \sqrt{n} \max\{q_n^{\text{pre}}, q_n^{\text{post}}\} \right)
\text{ and }
\hat{\beta}^{\textsc{iv}}_2 - {\beta}_2 
~=~ O_{\mathbb{P}}\left( \sqrt{n} \max\{q_n^{\text{pre}}, q_n^{\text{post}}\} \right).
\end{align*}
The consistency of $\hat{\beta}^{\textsc{iv}}_0$ follows when $\max\{q_n^{\text{pre}}, q_n^{\text{post}}\} \prec \frac{1}{\sqrt{n}}$ with 
\begin{align*}
\hat{\beta}^{\textsc{iv}}_0 - {\beta}_0 
~=~& O_{\mathbb{P}}\left( \sqrt{n} \max\{q_n^{\text{pre}}, q_n^{\text{post}}\} \right).
\end{align*}

\paragraph{Case (b):}
By Theorem \ref{thm:ZX-1}, 
for $\hat{\beta}^{\textsc{iv}}_1 - {\beta}_1$, we have
\begin{align*}
\hat{\beta}^{\textsc{iv}}_1 - {\beta}_1
~=~& \frac{b_{22} \frac{1}{n}\sum_{i=1}^n (T_i - \bar{T}) u_i + b_{23} \frac{1}{n^2 q_n^{\text{pre}}}\sum_{i=1}^n (Z_i^{\textsc{ssiv}} - \bar{Z}^{\textsc{ssiv}}) u_i }{ b_{33} b_{22} - b_{32} b_{23} }  \\
~=~& \frac{ \left( b_{22}^* 
+ O_{\mathbb{P}}\left( \frac{ 1 }{ \sqrt{n} n q_n^{\text{post}}} \right) \right) O_{\mathbb{P}}\left( \frac{ 1 }{ \sqrt{n} } \right)  + O_{\mathbb{P}}\left( \frac{1}{n\sqrt{q_n^{\text{post}}}} \right) O_{\mathbb{P}}\left( \frac{ 1 }{ \sqrt{n} } \right) }{ \pi (1-\pi) b_{22}^* + O_{\mathbb{P}}\left( \frac{ 1 }{ \sqrt{n} n q_n^{\text{post}}} \right) } \\
~=~& \frac{ b_{22}^* O_{\mathbb{P}}\left( \frac{ 1 }{ \sqrt{n} } \right)  + O_{\mathbb{P}}\left( \frac{1}{n\sqrt{q_n^{\text{post}}}} \right) O_{\mathbb{P}}\left( \frac{ 1 }{ \sqrt{n} } \right) }{ \pi (1-\pi) b_{22}^* + O_{\mathbb{P}}\left( \frac{ 1 }{ \sqrt{n} n q_n^{\text{post}}} \right) }.
\end{align*}
Therefore, $\hat{\beta}^{\textsc{iv}}_1$ is consistent when $ \max\{q_n^{\text{pre}}, q_n^{\text{post}}\} \prec \sqrt{n} q_n^{\text{post}} $ with 
\begin{align*}
\hat{\beta}^{\textsc{iv}}_1 - {\beta}_1 
~=~& O_{\mathbb{P}}\left( \frac{1}{\sqrt{n}} \max\left\{ \frac{ \max\{q_n^\text{pre}, q_n^\text{post} \} }{  \sqrt{ q_n^\text{post}}  }, 1 \right\}  \right).
\end{align*}
For $\hat{\beta}^{\textsc{iv}}_2 - {\beta}_2$, we have
\begin{align*}
\hat{\beta}^{\textsc{iv}}_2 - {\beta}_2
~=~& \frac{ b_{33} \frac{1}{n^2 q_n^{\text{pre}}}\sum_{i=1}^n (Z_i^{\textsc{ssiv}} - \bar{Z}^{\textsc{ssiv}}) u_i + b_{32} \frac{1}{n}\sum_{i=1}^n (T_i - \bar{T}) u_i }{ b_{33} b_{22} - b_{32} b_{23} } \\
~=~& \frac{ \left( \pi (1-\pi) + O_{\mathbb{P}}\left( \frac{1}{\sqrt{n}} \right) \right) O_{\mathbb{P}}\left( \frac{1}{\sqrt{n}} \right) + O_{\mathbb{P}}\left( \frac{1}{\sqrt{n} \sqrt{n q_n^{\text{pre}}} } \right) O_{\mathbb{P}}\left( \frac{1}{\sqrt{n}} \right) }{ \pi (1-\pi) b_{22}^* + O_{\mathbb{P}}\left( \frac{ 1 }{ \sqrt{n} n q_n^{\text{post}}} \right)}  \\
~=~& \frac{ \pi (1-\pi) O_{\mathbb{P}}\left( \frac{1}{\sqrt{n}} \right) + O_{\mathbb{P}}\left( \frac{1}{n} \right) }{ \pi (1-\pi) b_{22}^* + O_{\mathbb{P}}\left( \frac{ 1 }{ \sqrt{n} n q_n^{\text{post}}} \right) }.
\end{align*}
Therefore, $\hat{\beta}^{\textsc{iv}}_2$ is consistent when $\max\{q_n^{\text{pre}}, q_n^{\text{post}}\} \prec n^{-1/2}$, with estimation error
\begin{align*}
\hat{\beta}^{\textsc{iv}}_2 - {\beta}_2 
~=~ O_{\mathbb{P}}\left( \sqrt{n} \max\{q_n^{\text{pre}}, q_n^{\text{post}}\} \right).
\end{align*}
The consistency of $\hat{\beta}^{\textsc{iv}}_0$ follows when $\max\{q_n^{\text{pre}}, q_n^{\text{post}}\} \prec n^{-1/2}$ with 
\begin{align*}
\hat{\beta}^{\textsc{iv}}_0 - {\beta}_0 
~=~& O_{\mathbb{P}}\left( \sqrt{n} \max\{q_n^{\text{pre}}, q_n^{\text{post}}\} \right).
\end{align*}
\end{proof}

\begin{proof}[Proof of Corollary \ref{cor:IV_consistency}]
Corollary \ref{cor:IV_consistency} is a direct result of Theorem \ref{thm:IV_consistency} under 
$q_n^{\text{pre}} \preccurlyeq q_n^{\text{post}}$.    
\end{proof}

\subsection{Asymptotic normality}
Below we prove the asymptotic distribution when the estimators are consistent.

\begin{proof}[Proof of Theorem \ref{thm:asymnormal_ratio_IV}]
We complete the proof in three steps.
\paragraph{Step 1: asymptotic normality.}
We prove the asymptotic normality of the numerator:
\begin{align*}
D_n^{\textsc{iv}} \frac{1}{n}\sum_{i=1}^n {Z}_i u_i 
= 
\frac{1}{n}\sum_{i=1}^n 
\begin{pmatrix}
u_i \\
T_i u_i \\
Z_i^{\textsc{ssiv}} u_i / (nq_n^{\text{pre}})
\end{pmatrix}.
\end{align*}
Analogous to Theorem \ref{thm:asymnormal_ratio}, we divide the proof into two cases.

\underline{Case 1: $n q_n^{\text{pre}} \succ 1$}.
By \eqref{eq:Ziui}, we have shown that
\begin{align*}
\frac{1}{n^2 q_n^{\text{pre}}}\sum_{i=1}^n Z_i^{\textsc{ssiv}} u_i
= \frac{1}{n }\sum_{i=1}^n (T_i-\pi) E\left[ \frac{ A_{ij}^{\text{pre}} u_j }{ q_n^{\text{pre}}  } \mid w_i \right] 
+ O_{\mathbb{P}}\left( \frac{1}{\sqrt{n}\sqrt{nq_n^{\text{pre}}}} \right),
\end{align*}
which can be approximated by the average of i.i.d. random variable with a small error $o_{\mathbb{P}}\left(\frac{1}{\sqrt{n}}\right)$. Then the asymptotic normality of $D_n^{\textsc{iv}} \frac{1}{n}\sum_{i=1}^n {Z}_i u_i $ follows from standard argument using Cramer--Wold theorem and Lindeberg--Lévy CLT.

\underline{Case 2: $n q_n^{\text{pre}} \asymp 1$}.
It suffices to verify the assumptions in Theorem \ref{thm:clt}.
Assumption (a) follows from  $V_{\text{num}}^{\textsc{iv}} = O(n)$.
Assumption (b) follows from Lemma \ref{lemma:consistencyM}.
Assumption (c) follows from applying Lemma \ref{lemma:Assu_B2} to $A^{\text{pre}}$ with $n q_n^{\text{pre}} \asymp 1$. 
Then by Theorem \ref{thm:clt} and Cramer--Wold device, we can show 
\begin{align*}
\left( V_{\text{num}}^{\textsc{iv}} \right)^{-1/2} 
\left(  \sum_{i=1}^n {Z}_i u_i \right)
~\overset{d}{\to}~
\mathcal{N}\left(0, I_3 \right).
\end{align*}

\paragraph{Step 2: consistent variance estimator.}
We show that 
\begin{align*}
D_n^{\textsc{iv}} 
\left( \hat{V}^{\textsc{iv}}_{\text{num}} - {V}^{\textsc{iv}}_{\text{num}} \right) 
D_n^{\textsc{iv}}
= o_{\mathbb{P}}(1).
\end{align*}
Recall that $\hat{u}_i^{\textsc{iv}} = u_i - ( \hat{\beta}^{\textsc{iv}} - \beta)^\top X_i$.
We show the consistency of the variance estimators one by one. \\
\underline{(1): The consistency of $(1,1)$th element of $\hat{V}^{\textsc{iv}}_{\text{num}}$.} 
By expanding the square,
\begin{align*}
& \frac{1}{n} \sum_{i=1}^n (\hat{u}_i^{\textsc{iv}})^2
= \frac{1}{n} \sum_{i=1}^n u_i^2
- \frac{2}{n} \sum_{i=1}^n u_i ( \hat{\beta}^{\textsc{iv}} - \beta)^\top X_i
+ \frac{1}{n} \sum_{i=1}^n \left(( \hat{\beta}^{\textsc{iv}} - \beta)^\top X_i\right)^2. 
\end{align*}
First, by LLN, we have $\frac{1}{n} \sum_{i=1}^n u_i^2 = E(u_i^2) + o_{\mathbb{P}}(1)$.
Second, with $\hat{\beta}^{\textsc{iv}} - \beta = o_{\mathbb{P}}(1)$, we have
\begin{align*}
\frac{1}{n} \sum_{i=1}^n u_i ( \hat{\beta}^{\textsc{iv}} - \beta)^\top X_i
=& ( \hat{\beta}^{\textsc{iv}}_0 - \beta_0) \frac{1}{n} \sum_{i=1}^n u_i + ( \hat{\beta}^{\textsc{iv}}_1 - \beta_1) \frac{1}{n} \sum_{i=1}^n T_i u_i + ( \hat{\beta}^{\textsc{iv}}_2 - \beta_2) \frac{1}{n} \sum_{i=1}^n M_i u_i 
= o_{\mathbb{P}}(1).
\end{align*}
Third, 
\begin{align*}
\frac{1}{n} \sum_{i=1}^n \left(( \hat{\beta}^{\textsc{iv}} - \beta)^\top X_i \right)^2
=& \frac{1}{n} \sum_{i=1}^n \left( ( \hat{\beta}^{\textsc{iv}}_0 - \beta_0) + ( \hat{\beta}^{\textsc{iv}}_1 - \beta_1) T_i + ( \hat{\beta}^{\textsc{iv}}_2 - \beta_2) M_i \right)^2 \\
\le & 3 \left[ ( \hat{\beta}^{\textsc{iv}}_0 - \beta_0)^2 
+ ( \hat{\beta}^{\textsc{iv}}_1 - \beta_1)^2 \frac{1}{n} \sum_{i=1}^n T_i 
+ ( \hat{\beta}^{\textsc{iv}}_2 - \beta_2)^2 \frac{1}{n} \sum_{i=1}^n M_i^2 \right]
= o_{\mathbb{P}}(1),
\end{align*}
where the last equality follows from the bounded support of $T_i$ and $M_i$, and $ \hat{\beta}^{\textsc{iv}} - \beta = o_{\mathbb{P}}(1)$.
Therefore, it implies that
\begin{align*}
\frac{1}{n} \sum_{i=1}^n (\hat{u}_i^{\textsc{iv}})^2 
= \frac{1}{n} \sum_{i=1}^n u_i^2 + o_{\mathbb{P}}(1)
= E(u_i^2) + o_{\mathbb{P}}(1).    
\end{align*}
\underline{(2): The consistency of $(3,3)$th element of  $\hat{V}^{\textsc{iv}}_{\text{num}}$.} 
By definition,
\begin{align}
& \frac{1}{n} \sum_{j=1}^n \left( \sum_{i\ne j} A_{ij}^\text{pre} \hat{u}_i^{\textsc{iv}} \right)^2
= \frac{1}{n} \sum_{j=1}^n \left( \sum_{i\ne j} A_{ij}^\text{pre} \left(u_i -  ( \hat{\beta}^{\textsc{iv}} - \beta)^\top X_i \right) \right)^2 \notag \\
=& \frac{1}{n} \sum_{j=1}^n \left( \sum_{i\ne j} A_{ij}^\text{pre} u_i \right)^2
- \frac{2}{n} \sum_{j=1}^n  
\left( \sum_{i\ne j} A_{ij}^\text{pre} u_i \right)
\left( \sum_{i\ne j} A_{ij}^\text{pre} ( \hat{\beta}^{\textsc{iv}} - \beta)^\top X_i \right)
+ \frac{1}{n} \sum_{j=1}^n \left( \sum_{i\ne j} A_{ij}^\text{pre} ( \hat{\beta}^{\textsc{iv}} - \beta)^\top X_i \right)^2. 
\label{eq:var_Aij_ui}
\end{align}
For the first term of \eqref{eq:var_Aij_ui}, the expectation is
\begin{align}
E\left[ \frac{1}{n} \sum_{j=1}^n \left( \sum_{i\ne j} A_{ij}^\text{pre} u_i \right)^2 \right]
= E\left[ \frac{1}{n } \sum_{j=1}^n \left( \sum_{i\ne j} A_{ij}^\text{pre} u_i^2 
+ \sum_{(i,k)} A_{ij}^\text{pre} A_{kj}^\text{pre} u_i u_k \right) \right]
\asymp (n q_n^\text{pre})^2.
\label{eq:Aijui}
\end{align}
The variance is 
\begin{align*}
& \V\left( \frac{1}{n} \sum_{j=1}^n \left( \sum_{i\ne j} A_{ij}^\text{pre} u_i \right)^2 \right) 
= \V\left( \frac{1}{n} \sum_{j=1}^n \left( \sum_{i\ne j} A_{ij}^\text{pre} u_i^2 + \sum_{(i,k)} A_{ij}^\text{pre}  A_{kj}^\text{pre} u_i u_k \right) \right) \\
=& \V\left( \frac{1}{n} \sum_{j=1}^n \left( \sum_{i\ne j} A_{ij}^\text{pre} u_i^2 \right) \right) 
+ \V\left( \frac{1}{n} \sum_{j=1}^n \left( \sum_{(i,k)} A_{ij}^\text{pre} A_{kj}^\text{pre} u_i u_k \right) \right) \\
&+ 2 \textup{Cov}\left( \frac{1}{n} \sum_{j=1}^n \left( \sum_{i\ne j} A_{ij}^\text{pre} u_i^2 \right), \frac{1}{n} \sum_{j=1}^n \left( \sum_{(i,k)} A_{ij}^\text{pre} A_{kj}^\text{pre} u_i u_k \right) \right) \\
=& \frac{1}{n^2} \sum_{j=1}^n \sum_{i\ne j} 
\V\left( A_{ij}^\text{pre} u_i^2 \right) 
+ \frac{1}{n^2} \sum_{j=1}^n \sum_{(i,k)}
\textup{Cov}\left( A_{ij}^\text{pre} u_i^2, A_{kj}^\text{pre} u_k^2 \right) 
+ \frac{1}{n^2} \sum_{(j,k)} 
\textup{Cov}\left( \sum_{i\ne j} A_{ij}^\text{pre} u_i^2, \sum_{l\ne k} A_{lk}^\text{pre} u_l^2 \right)  \\
&+ \frac{1}{n^2} \sum_{j=1}^n \sum_{(i,k)} \V\left( A_{ij}^\text{pre} A_{kj}^\text{pre} u_i u_k \right)
+ \frac{1}{n^2} \sum_{j=1}^n \sum_{(i,k) \ne (h,m)} \textup{Cov}\left( A_{ij}^\text{pre} A_{kj}^\text{pre} u_i u_k, A_{hj}^\text{pre} A_{mj}^\text{pre} u_h u_m \right) \\
&+ \frac{1}{n^2} \sum_{(j_1, j_2)} \textup{Cov}\left( \sum_{(i,k)} A_{ij_1}^\text{pre} A_{kj_1}^\text{pre} u_i u_k, \sum_{(i,k)} A_{ij_2}^\text{pre} A_{kj_2}^\text{pre} u_i u_k \right) 
+ 2 \frac{1}{n^2} \textup{Cov}\left( \sum_{j=1}^n \sum_{i\ne j} A_{ij}^\text{pre} u_i^2, \sum_{j=1}^n \sum_{(i,k)} A_{ij}^\text{pre} A_{kj}^\text{pre} u_i u_k \right) \\
=& \frac{1}{n^2} \sum_{j=1}^n \sum_{i\ne j} 
\V\left( A_{ij}^\text{pre} u_i^2 \right) 
+ \frac{1}{n^2} \sum_{j=1}^n \sum_{(i,k)}
\textup{Cov}\left( A_{ij}^\text{pre} u_i^2, A_{kj}^\text{pre} u_k^2 \right) 
+ \frac{1}{n^2} \sum_{(j,k)} \sum_{i\ne j,k}
\textup{Cov}\left( A_{ij}^\text{pre} u_i^2, A_{ik}^\text{pre} u_i^2 \right)  \\
&+ \frac{1}{n^2} \sum_{j=1}^n \sum_{(i,k)} \V\left( A_{ij}^\text{pre} A_{kj}^\text{pre} u_i u_k \right)
+ \frac{1}{n^2} \sum_{j=1}^n \sum_{(i,k) \ne (h,m)} \textup{Cov}\left( A_{ij}^\text{pre} A_{kj}^\text{pre} u_i u_k, A_{hj}^\text{pre} A_{mj}^\text{pre} u_h u_m \right) \\
&+ \frac{1}{n^2} \sum_{(j_1, j_2)} \sum_{(i_1,i_2,k)} \textup{Cov}\left( A_{i_1j_1}^\text{pre} A_{kj_1}^\text{pre} u_i u_k, A_{i_2j_2}^\text{pre} A_{kj_2}^\text{pre} u_i u_k \right) 
+ 2 \frac{1}{n^2} \sum_{(i,j,k,l)} \textup{Cov}\left( A_{ij}^\text{pre} u_i^2, A_{il}^\text{pre} A_{kl}^\text{pre} u_i u_k \right) \\
\asymp & 
n^3 \left(q_n^\text{pre} \right)^4.
\end{align*}
Therefore, 
\begin{align}
\frac{1}{n (n q_n^{\text{pre}})^2} \sum_{j=1}^n \left( \sum_{i\ne j} A_{ij}^\text{pre} u_i \right)^2
= E\left[ \frac{1}{n (n q_n^{\text{pre}})^2} \sum_{j=1}^n \left( \sum_{i\ne j} A_{ij}^\text{pre} u_i \right)^2 \right]
+ O_{\mathbb{P}}\left( \frac{ 1 }{ \sqrt{n} } \right).
\label{eq:Aijui^2}
\end{align}
For the second term of \eqref{eq:var_Aij_ui}, 
\begin{align*}
& \frac{1}{n} \sum_{j=1}^n  
\left( \sum_{i\ne j} A_{ij}^\text{pre} u_i \right)
\left( \sum_{i\ne j} A_{ij}^\text{pre} ( \hat{\beta}^{\textsc{iv}} - \beta)^\top X_i \right) \\
=& ( \hat{\beta}^{\textsc{iv}}_0 - \beta_0) \frac{1}{n} \sum_{j=1}^n  
\left( \sum_{i\ne j} A_{ij}^\text{pre} u_i 
 \sum_{i\ne j} A_{ij}^\text{pre} \right)
+ ( \hat{\beta}^{\textsc{iv}}_1 - \beta_1) \frac{1}{n} \sum_{j=1}^n  
\left( \sum_{i\ne j} A_{ij}^\text{pre} u_i 
 \sum_{i\ne j} A_{ij}^\text{pre} T_i \right) \\
&+ ( \hat{\beta}^{\textsc{iv}}_2 - \beta_2) \frac{1}{n} \sum_{j=1}^n  
\left( \sum_{i\ne j} A_{ij}^\text{pre} u_i  \sum_{i\ne j} A_{ij}^\text{pre} M_i \right) \\
\le & | \hat{\beta}^{\textsc{iv}}_0 - \beta_0| \frac{1}{n} \sum_{j=1}^n  
\left[ \left( \sum_{i\ne j} A_{ij}^\text{pre} u_i \right)^2
+ \left( \sum_{i\ne j} A_{ij}^\text{pre} \right)^2 \right] 
+ | \hat{\beta}^{\textsc{iv}}_1 - \beta_1| \frac{1}{n} \sum_{j=1}^n  
\left[ \left( \sum_{i\ne j} A_{ij}^\text{pre} u_i \right)^2
+ \left( \sum_{i\ne j} A_{ij}^\text{pre} T_i \right)^2 \right] \\
&+ | \hat{\beta}^{\textsc{iv}}_2 - \beta_2| \frac{1}{n} \sum_{j=1}^n  
\left[ \left( \sum_{i\ne j} A_{ij}^\text{pre} u_i \right)^2
+ \left( \sum_{i\ne j} A_{ij}^\text{pre} M_i \right)^2 \right].
\end{align*}
By analogous arguments to \eqref{eq:Aijui}, we can show that 
\begin{align}
& \frac{1}{n} \sum_{j=1}^n \left( \sum_{i\ne j} A_{ij}^\text{pre} \right)^2 
= O_{\mathbb{P}}\left( (n q_n^{\text{pre}})^2\right)
\text{ and }
\frac{1}{n} \sum_{j=1}^n \left( \sum_{i\ne j} A_{ij}^\text{pre} T_i \right)^2 
= O_{\mathbb{P}}\left( (n q_n^{\text{pre}})^2\right)
\label{eq:Aijai}
\end{align}
by bounding them in $L_1$ norm.
Also, by the bounded support of $M_i$, we have
\begin{align*}
E\left[ \frac{1}{n } \sum_{j=1}^n \left( \sum_{i\ne j} A_{ij}^\text{pre} M_i \right)^2 \right]
\le E\left[ \frac{1}{n } \sum_{j=1}^n \left( \sum_{i\ne j} A_{ij}^\text{pre} \right)^2 \right]
\le C (n q_n^{\text{pre}})^2,
\end{align*}
which implies that
\begin{align}
\frac{1}{n} \sum_{j=1}^n \left( \sum_{i\ne j} A_{ij}^\text{pre} M_i \right)^2
=& O_{\mathbb{P}}\left( (n q_n^{\text{pre}})^2 \right).
\label{eq:AijMi^2}
\end{align}
With consistent estimators that $ \hat{\beta}^{\textsc{iv}} - \beta = o_{\mathbb{P}}(1)$, we can show that 
\begin{equation}
\frac{1}{n (n q_n^{\text{pre}})^2} \sum_{j=1}^n  
\left( \sum_{i\ne j} A_{ij}^\text{pre} u_i \right)
\left( \sum_{i\ne j} A_{ij}^\text{pre} ( \hat{\beta}^{\textsc{iv}} - \beta)^\top X_i \right) = o_{\mathbb{P}}(1).
\label{eq:AijuiAijXi}
\end{equation}
For the third term of \eqref{eq:var_Aij_ui}, 
\begin{align*}
\frac{1}{3}\frac{1}{n} \sum_{j=1}^n 
\left( \sum_{i\ne j} A_{ij}^\text{pre}
(\hat{\beta}^{\textsc{iv}} - \beta)^\top X_i \right)^2 
&\le ( \hat{\beta}^{\textsc{iv}}_0 - \beta_0)^2 \frac{1}{n} \sum_{j=1}^n \left( \sum_{i\ne j} A_{ij}^\text{pre} \right)^2 \notag \\
+  ( \hat{\beta}^{\textsc{iv}}_1 - \beta_1)^2 & \frac{1}{n} \sum_{j=1}^n \left( \sum_{i\ne j} A_{ij}^\text{pre} T_i \right)^2 
+ ( \hat{\beta}^{\textsc{iv}}_2 - \beta_2)^2 \frac{1}{n} \sum_{j=1}^n \left( \sum_{i\ne j} A_{ij}^\text{pre} M_i \right)^2.
\end{align*}
Together with \eqref{eq:Aijai}, \eqref{eq:AijMi^2} and $ \hat{\beta}^{\textsc{iv}} - \beta = o_{\mathbb{P}}(1)$, we have 
\begin{equation}
\frac{1}{n (n q_n^{\text{pre}})^2} \sum_{j=1}^n \left( \sum_{i\ne j} A_{ij}^\text{pre} ( \hat{\beta}^{\textsc{iv}} - \beta)^\top X_i \right)^2 
= O_{\mathbb{P}}(1)o_{\mathbb{P}}(1) = o_{\mathbb{P}}(1).
\label{eq:AijXi^2}
\end{equation}
Thus, \eqref{eq:Aijui^2}, \eqref{eq:AijuiAijXi} and \eqref{eq:AijXi^2} together imply that
\begin{align*}
\frac{1}{n (n q_n^{\text{pre}})^2} \sum_{j=1}^n \left( \sum_{i\ne j} A_{ij}^\text{pre} \hat{u}_i^{\textsc{iv}} \right)^2
= \frac{1}{n (n q_n^{\text{pre}})^2} \sum_{j=1}^n E\left[ \left( \sum_{i\ne j} A_{ij}^\text{pre} u_i \right)^2 \right]
+ o_{\mathbb{P}}(1).
\end{align*}
\underline{(3): The consistency of $(2,3)$ element of $\hat{V}^{\textsc{iv}}_{\text{num}}$.}
By decomposition, 
\begin{align}
& \frac{1}{n} \sum_{i=1}^n \sum_{j\ne i} A_{ij}^\text{pre} \hat{u}_i^{\textsc{iv}} \hat{u}_j^{\textsc{iv}}
= \frac{1}{n} \sum_{i=1}^n \sum_{j\ne i} A_{ij}^\text{pre} \left( u_i - ( \hat{\beta}^{\textsc{iv}} - \beta)^\top X_i \right) \left( u_j - ( \hat{\beta}^{\textsc{iv}} - \beta)^\top X_j \right) \notag \\
&= \frac{1}{n} \sum_{i=1}^n \sum_{j\ne i} A_{ij}^\text{pre} u_i u_j 
- \frac{2}{n} \sum_{i=1}^n \sum_{j\ne i} A_{ij}^\text{pre} u_i ( \hat{\beta}^{\textsc{iv}} - \beta)^\top X_j  
+ \frac{1}{n} \sum_{i=1}^n \sum_{j\ne i} A_{ij}^\text{pre} ( \hat{\beta}^{\textsc{iv}} - \beta)^\top X_i ( \hat{\beta}^{\textsc{iv}} - \beta)^\top X_j. 
\label{eq:var_Aijuiuj}
\end{align}
For the first term of \eqref{eq:var_Aijuiuj}, the expectation is 
\begin{align*}
E\left[\frac{1}{n} \sum_{i=1}^n \sum_{j\ne i} A_{ij}^\text{pre} u_i u_j \right] 
= O\left(nq_n^\text{pre} \right), 
\end{align*}
and the variance is 
\begin{align*}
& \V\left(\frac{1}{n} \sum_{i=1}^n \sum_{j\ne i} A_{ij}^\text{pre} u_i u_j \right)
= \frac{1}{n^2} \sum_{i=1}^n  \V\left(\sum_{j\ne i} A_{ij}^\text{pre} u_i u_j \right)
+ \frac{1}{n^2} \sum_{(i,k)} \textup{Cov}\left(\sum_{j\ne i} A_{ij}^\text{pre} u_i u_j, \sum_{j\ne k} A_{kj}^\text{pre} u_k u_j \right) \\
&= \frac{1}{n^2} \sum_{i=1}^n  \sum_{j\ne i} \V\left( A_{ij}^\text{pre} u_i u_j \right)
+ \frac{1}{n^2} \sum_{i=1}^n \sum_{(j,k)} \textup{Cov}\left( A_{ij}^\text{pre} u_i u_j, A_{ik}^\text{pre} u_i u_k \right) \\
&+ \frac{1}{n^2} \sum_{(i,k)} \sum_{j\ne i,k} \textup{Cov}\left(A_{ij}^\text{pre} u_i u_j,  A_{kj}^\text{pre} u_k u_j \right)
+ \frac{1}{n^2} \sum_{(i,j) \ne (k,l)} 
\textup{Cov}\left(A_{ij}^\text{pre} u_i u_j,  A_{kl}^\text{pre} u_k u_l \right) \\
&= \frac{1}{n^2} \sum_{i=1}^n  \sum_{j\ne i} \V\left( A_{ij}^\text{pre} u_i u_j \right)
+ \frac{1}{n^2} \sum_{i=1}^n \sum_{(j,k)} \textup{Cov}\left( A_{ij}^\text{pre} u_i u_j, A_{ik}^\text{pre} u_i u_k \right) 
+ \frac{1}{n^2} \sum_{(i,k)} \sum_{j\ne i,k} \textup{Cov}\left(A_{ij}^\text{pre} u_i u_j,  A_{kj}^\text{pre} u_k u_j \right) \\
&\le C_1 q_n^\text{pre} + C_2 (n q_n^\text{pre}) q_n^\text{pre}
= O\left( n (q_n^\text{pre})^2 \right).
\end{align*}
Therefore, 
\begin{align}
\frac{1}{n^2 q_n^\text{pre}} \sum_{i=1}^n \sum_{j\ne i} A_{ij}^\text{pre} u_i u_j
= \frac{1}{n^2 q_n^\text{pre}} \sum_{i=1}^n \sum_{j\ne i} 
E\left[ A_{ij}^\text{pre} u_i u_j \right]  
+ O_{\mathbb{P}}\left(\frac{1}{\sqrt{n}}\right).
\label{eq:Aijuiuj}
\end{align}
For the second term of \eqref{eq:var_Aijuiuj}, 
\begin{align*}
& \frac{1}{n} \sum_{i=1}^n \sum_{j\ne i} A_{ij}^\text{pre} u_i ( \hat{\beta}^{\textsc{iv}} - \beta)^\top X_j \\
~=~& ( \hat{\beta}^{\textsc{iv}}_0 - \beta_0) \frac{1}{n} \sum_{i=1}^n \sum_{j\ne i} A_{ij}^\text{pre} u_i 
+ ( \hat{\beta}^{\textsc{iv}}_1 - \beta_1) \frac{1}{n} \sum_{i=1}^n \sum_{j\ne i} u_i A_{ij}^\text{pre} T_j
+ ( \hat{\beta}^{\textsc{iv}}_2 - \beta_2) \frac{1}{n} \sum_{i=1}^n \sum_{j\ne i} u_i A_{ij}^\text{pre} M_j.
\end{align*}
It suffices to show that $\frac{1}{n} \sum_{i=1}^n \sum_{j\ne i} A_{ij}^\text{pre} u_i X_j = O_{\mathbb{P}}(nq_n^\text{pre})$.
We bound these terms in $L_2$ norm. First,
\begin{align*}
& E\left[ \left( \frac{1}{n} \sum_{i=1}^n \sum_{j\ne i} A_{ij}^\text{pre} u_i  \right)^2 \right]
= \frac{1}{n^2} \sum_{i=1}^n E\left[ \left( \sum_{j\ne i} A_{ij}^\text{pre} u_i  \right)^2 \right]
+ \frac{1}{n^2} \sum_{(i,j)} 
E\left[ \left( \sum_{k\ne i} A_{ik}^\text{pre} u_i  \right)\left( \sum_{k\ne j} A_{jk}^\text{pre} u_j  \right) \right] \\
&= \frac{1}{n^2} \sum_{i=1}^n E\left[ \left( \sum_{j\ne i} A_{ij}^\text{pre} u_i^2 + \sum_{(j_1,j_2)} A_{ij_1}^\text{pre} A_{ij_2}^\text{pre} u_i^2 \right) \right]
+ \frac{1}{n^2} \sum_{(i,j)} 
E\left[ \sum_{k_1\ne i} \sum_{k_2\ne j} A_{ik_1}^\text{pre} u_i  A_{jk_2}^\text{pre} u_j \right] \\
& \le C_1 q_n^\text{pre} 
+ C_2 n (q_n^\text{pre})^2  
+ C_3 (n q_n^\text{pre})^2,
\end{align*}
which implies that 
\begin{align*}
\frac{1}{n} \sum_{i=1}^n \sum_{j\ne i} A_{ij}^\text{pre} u_i
= O_{\mathbb{P}}\left( nq_n^\text{pre} \right).
\end{align*}
Second, 
\begin{align*}
& E\left[ \left( \frac{1}{n} \sum_{i=1}^n \sum_{j\ne i} A_{ij}^\text{pre} u_i M_j \right)^2 \right]
= \frac{1}{n^2} \sum_{i=1}^n E\left[ \left( \sum_{j\ne i} A_{ij}^\text{pre} u_i M_j \right)^2 \right]
+ \frac{1}{n^2} \sum_{(i,j)} 
E\left[ \left( \sum_{k\ne i} A_{ik}^\text{pre} u_i M_k  \right)\left( \sum_{k\ne j} A_{jk}^\text{pre} u_j M_k \right) \right] \\
&= \frac{1}{n^2} \sum_{i=1}^n E\left[ \left( \sum_{j\ne i} A_{ij}^\text{pre} u_i^2 M_j^2 + \sum_{(j_1,j_2)} A_{ij_1}^\text{pre} A_{ij_2}^\text{pre} u_i^2 M_{j_1} M_{j_2} \right) \right]
+ \frac{1}{n^2} \sum_{(i,j)} 
E\left[ \sum_{k_1\ne i} \sum_{k_2\ne j} A_{ik_1}^\text{pre} u_i M_{k_1} A_{jk_2}^\text{pre} u_j M_{k_2} \right] \\
& \le \frac{1}{n^2} \sum_{i=1}^n E\left[ \left( \sum_{j\ne i} A_{ij}^\text{pre} u_i^2 + \sum_{(j_1,j_2)} A_{ij_1}^\text{pre} A_{ij_2}^\text{pre} u_i^2 \right) \right]
+ \frac{1}{2} \frac{1}{n^2} \sum_{(i,j)} 
E\left[ \sum_{k_1\ne i} \sum_{k_2\ne j} A_{ik_1}^\text{pre} M_{k_1} A_{jk_2}^\text{pre} M_{k_2} (u_i^2 + u_j^2) \right] \\
& \le C_1 q_n^\text{pre} + C_2 (n q_n^\text{pre}) q_n^\text{pre} + C_3 (n q_n^\text{pre})^2,
\end{align*}
which implies that 
\begin{align*}
\frac{1}{n} \sum_{i=1}^n \sum_{j\ne i} A_{ij}^\text{pre} u_i M_j
= O_{\mathbb{P}}\left( nq_n^\text{pre} \right).
\end{align*}
By analogous argument, we can show that 
\begin{align*}
\frac{1}{n} \sum_{i=1}^n \sum_{j\ne i} A_{ij}^\text{pre} u_i T_j
= O_{\mathbb{P}}\left( nq_n^\text{pre} \right).
\end{align*}
In particular, this implies that with consistent estimates, we have
\begin{align}
& \frac{1}{n^2 q_n^\text{pre}} \sum_{i=1}^n \sum_{j\ne i} A_{ij}^\text{pre} u_i ( \hat{\beta}^{\textsc{iv}} - \beta)^\top X_j 
= o_{\mathbb{P}}(1).
\label{eq:AijuiXj}
\end{align}
For the last term of \eqref{eq:Aijuiuj}, it is bounded above by 
\begin{align}
& \frac{1}{n} \sum_{i=1}^n \sum_{j\ne i} A_{ij}^\text{pre} \left( ( \hat{\beta}^{\textsc{iv}} - \beta)^\top X_i \right) \left( ( \hat{\beta}^{\textsc{iv}} - \beta)^\top X_j \right) \notag \\
\le & C \frac{1}{n} \sum_{i=1}^n \sum_{j\ne i} A_{ij}^\text{pre} \left[ ( \hat{\beta}^{\textsc{iv}}_0 - \beta_0)^2 + ( \hat{\beta}^{\textsc{iv}}_1 - \beta_1)^2T_i + ( \hat{\beta}^{\textsc{iv}}_2 - \beta_2)^2 M_i^2 \right]
\label{eq:AijXiXj}
\end{align}
by symmetric.
We bound each term in $L_1$ norm:
\begin{align*}
E\left[ \frac{1}{n} \sum_{i=1}^n \sum_{j\ne i} A_{ij}^\text{pre}  \right] = O\left( n q_n^\text{pre} \right)
\text{ and }
E\left[ \frac{1}{n} \sum_{i=1}^n \sum_{j\ne i} A_{ij}^\text{pre} T_i \right] = O\left( n q_n^\text{pre} \right).
\end{align*}
Also, by the boundedness of $M_i$, we have
\begin{align*}
E\left[ \frac{1}{n} \sum_{i=1}^n \sum_{j\ne i} A_{ij}^\text{pre} M_i^2 \right] 
\le E\left[ \frac{1}{n} \sum_{i=1}^n \sum_{j\ne i} A_{ij}^\text{pre} \right] 
= O\left( n q_n^\text{pre} \right).
\end{align*}
Therefore,
\begin{equation}
\frac{1}{n} \sum_{i=1}^n \sum_{j\ne i} A_{ij}^\text{pre} \left( ( \hat{\beta}^{\textsc{iv}} - \beta)^\top X_i \right) \left( ( \hat{\beta}^{\textsc{iv}} - \beta)^\top X_j \right)
= O_{\mathbb{P}}\left( n q_n^\text{pre} \right). 
\label{eq:AijXiXj}
\end{equation}
Thus, \eqref{eq:Aijuiuj}, \eqref{eq:AijuiXj} and \eqref{eq:AijXiXj} together imply that with consistent estimators, we have 
\begin{align*}
\frac{1}{n^2 q_n^\text{pre}} \sum_{i=1}^n \sum_{j\ne i} A_{ij}^\text{pre} \hat{u}_i^{\textsc{iv}} \hat{u}_j^{\textsc{iv}}
= \frac{1}{n^2 q_n^\text{pre}} \sum_{i=1}^n \sum_{j\ne i} E \left[ A_{ij}^\text{pre} u_i u_j \right] + o_{\mathbb{P}}(1).
\end{align*}

\paragraph{Step 3}
Steps 1--2 and CMT together imply the desired result:
\begin{align*}
& \left(\hat{V}^{\textsc{iv}}_{\text{num}} \right)^{-1/2}
\left( \sum_{i=1}^n {Z}_i u_i \right)
\overset{d}{\to} \mathcal{N}(0, I_3).
\end{align*}
By Theorem \ref{thm:ZX-1}, we have shown the probability limit of ${Z}^\top X$. By CMT again, we can show 
\[
\left( \hat{V}^{\textsc{iv}} \right)^{-1/2} 
 \left( \hat{\beta}^{\textsc{iv}}-\beta \right) 
\stackrel{\textup{d}}{\rightarrow} \mathcal{N}(0, I_3).
\]

\end{proof}

\section{Proof of results in Section \ref{sec:IV_PC}} 
\label{app:PC_IV}

\subsection{Useful lemmas}
By eigenvalue decomposition, we have $q_n^\text{pre} G_{n}^\text{pre} = \sum_{k=1}^n \lambda_k^* \psi_{k}^{*\top} \psi_{k}^*$.
In Assumption \ref{ass:low_rank}, we assume that the graphon is approximately low rank, i.e., can be approximated by the $r$ leading terms $q_n^\text{pre} \tilde{G}_{n}^\text{pre} = \sum_{k=1}^r \lambda_k^* \psi_{k}^{*\top} \psi_{k}^*$. 
We show that eigenvectors of $A^\text{pre}$, $(\hat{\psi}_{k})_{k=1}^n $, is close to 
$( {\psi}_{k}^*)_{k=1}^n $.

\begin{lemma} %
\label{lemma:25}
Suppose $q_n \succ {\frac{\log n}{\log \log n}} / n$.
Under Assumptions \ref{asu:network} and \ref{ass:low_rank}, we have
\begin{enumerate}[(1)]
\item 
$\left\|A^{\text{pre}} - q_n^{\text{pre}} G_{n}^\text{pre}\right\|_{\textup{op}}
= O_{\mathbb{P}}\left( \sqrt{n q_n^{\text{pre}}}  \left( {\frac{\log n}{\log \log n}} \right)^{1/4} \right)$;
\item 
$\left\| A^{\text{pre}} - q_n^\text{pre} \tilde{G}_{n}^\text{pre} \right\|_{\textup{op}}
= O_{\mathbb{P}}\left( \sqrt{n q_n^{\text{pre}}}  \left( {\frac{\log n}{\log \log n}} \right)^{1/4} \right)$.
\end{enumerate}
\end{lemma}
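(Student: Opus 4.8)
The plan is to establish part (1) by conditioning on the latent variables $\{w_i\}_{i=1}^n$, which reduces the statement to a spectral concentration bound for the centered adjacency matrix of an inhomogeneous Erdős–Rényi graph in the sparse regime, and then to deduce part (2) from part (1) together with Assumption \ref{ass:low_rank}(b) via the triangle inequality.

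For part (1), observe that conditional on $\{w_i\}_{i=1}^n$ the upper-triangular entries $\{A_{ij}^{\text{pre}}\}_{i<j}$ are mutually independent with $A_{ij}^{\text{pre}} \sim \text{Bernoulli}\big(q_n^{\text{pre}} g^{\text{pre}}(w_i,w_j)\big)$, the matrix is symmetric, and the diagonal vanishes. Hence $E[A^{\text{pre}} \mid \{w_i\}]$ equals $q_n^{\text{pre}} G_n^{\text{pre}}$ up to its diagonal, and the omitted diagonal contributes at most $q_n^{\text{pre}} \sup g^{\text{pre}} = O(q_n^{\text{pre}})$ to the operator norm, which is negligible at the target rate. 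Because $g^{\text{pre}} \le 1$, every connection probability is at most $q_n^{\text{pre}}$, so the maximal expected degree is of order $d := n q_n^{\text{pre}}$. I would then invoke the sparse operator-norm concentration bound for inhomogeneous Erdős–Rényi matrices of \cite{Benaych-GeorgesBordenaveKnowles2019, Benaych-GeorgesBordenaveKnowles2020, AltDucatezKnowles2021}: in the regime $d \succ \tfrac{\log n}{\log\log n}$ the centered matrix satisfies, conditional on $\{w_i\}$ and with conditional probability tending to one,
\[
\left\|A^{\text{pre}} - q_n^{\text{pre}} G_n^{\text{pre}}\right\|_{\textup{op}} = O\!\left(\sqrt{n q_n^{\text{pre}}}\,\Big(\tfrac{\log n}{\log\log n}\Big)^{1/4}\right).
\]
The hypothesis $q_n^{\text{pre}} \succ \tfrac{\log n}{\log\log n}/n$ is precisely $d \succ \tfrac{\log n}{\log\log n}$, which places us in the regime where this estimate is available. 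Since the right-hand side depends on $\{w_i\}$ only through the uniform envelope $g^{\text{pre}} \le 1$, the conditional high-probability statement holds uniformly in $\{w_i\}$, and averaging over the latent variables upgrades it to the unconditional $O_{\mathbb{P}}$ conclusion of part (1).

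For part (2), I would use the triangle inequality
\[
\left\|A^{\text{pre}} - q_n^{\text{pre}} \tilde{G}_n^{\text{pre}}\right\|_{\textup{op}} \le \left\|A^{\text{pre}} - q_n^{\text{pre}} G_n^{\text{pre}}\right\|_{\textup{op}} + \left\|q_n^{\text{pre}} G_n^{\text{pre}} - q_n^{\text{pre}} \tilde{G}_n^{\text{pre}}\right\|_{\textup{op}},
\]
bounding the first summand by part (1). The second summand equals $\big\|\sum_{k=r+1}^n \lambda_k^* \psi_k^* \psi_k^{*\top}\big\|_{\textup{op}} = O_{\mathbb{P}}(q_n^{\text{pre}})$ by Assumption \ref{ass:low_rank}(b). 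As $\sqrt{n q_n^{\text{pre}}}/q_n^{\text{pre}} = \sqrt{n/q_n^{\text{pre}}} \to \infty$, the truncation error is of strictly smaller order than the part-(1) rate, so the two bounds combine to the stated rate.

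The main obstacle is part (1): everything rests on having the sharp operator-norm bound in the genuinely sparse regime $n q_n^{\text{pre}} \asymp \tfrac{\log n}{\log\log n}$, where the elementary matrix-Bernstein arguments that yield a clean $O(\sqrt{n q_n^{\text{pre}}})$ bound fail because a small number of atypically high-degree vertices inflate the spectral norm. The extra factor $\big(\tfrac{\log n}{\log\log n}\big)^{1/4}$ is exactly the price of these degree fluctuations, and extracting it requires the delicate path-counting and resolvent estimates of the cited works rather than a self-contained computation; the remaining ingredients — the diagonal correction, the triangle inequality, and the rate comparison — are routine.
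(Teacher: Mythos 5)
Your proposal is correct and follows essentially the same route as the paper: part (1) is obtained by citing the sparse inhomogeneous Erd\H{o}s--R\'enyi operator-norm concentration results (the paper splits into $q_n^{\text{pre}} \succcurlyeq \log(n)/n$, handled via Lei--Rinaldo with the cleaner $O_{\mathbb{P}}(\sqrt{nq_n^{\text{pre}}})$ bound, and the intermediate regime, handled via Corollary 3.3 of Benaych-Georges--Bordenave--Knowles with the extra $(\log n/\log\log n)^{1/4}$ factor, but this two-regime split is subsumed by your single invocation), and part (2) is the same triangle-inequality argument using Assumption \ref{ass:low_rank}(b). Your explicit treatment of the diagonal and of the conditional-to-unconditional upgrade are details the paper leaves implicit, but they do not change the argument.
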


\begin{proof}[Proof of Lemma \ref{lemma:25}]
We first show result (1).
This is analogous to Lemma 25 in \cite{LiWager2022}.
Suppose $q_n^\text{pre} \succcurlyeq \log(n)/n$. 
By Theorem 5.2 in \cite{LeiRinaldo2015} with $d = n q_n^\text{pre}$, there exists some constant $C$ such that 
$\|A^{\text{pre}} - q_n^\text{pre} {G}_{n}^\text{pre}\|_{\textup{op}} \leq C \sqrt{n q_n^\text{pre}}$ with probability approaching to 1.
Suppose instead ${\frac{\log n}{\log \log n}} / n \prec q_n^\text{pre} \prec \log(n)/n$.
By Corollary 3.3 in \cite{Benaych-GeorgesBordenaveKnowles2020}, we can show that by setting their 
\[
\varepsilon^2 = \sqrt{\frac{\log n}{\log \log n}} / (n q_n^{\text{pre}}),
\]
we have that with probability approaching to 1, 
\[
\left\|A^{\text{pre}} - q_n^{\text{pre}} G_{n}^\text{pre}\right\|_{\textup{op}} 
\leq k \sqrt{n q_n^{\text{pre}}}  \left( {\frac{\log n}{\log \log n}} \right)^{1/4},
\]
where $k$ is a universal constant.
For result (2), by triangle inequality, we have
\begin{align*}
& \left\| A^{\text{pre}} - q_n^\text{pre} \tilde{G}_{n}^\text{pre}  \right\|_{\textup{op}}
\le \left\|A^{\text{pre}} - q_n^\text{pre} G_{n}^\text{pre} \right\|_{\textup{op}}
+ q_n^\text{pre} \left\| G_{n}^\text{pre} - \tilde{G}_{n}^\text{pre} \right\|_{\textup{op}} \\
& = \left\|A^{\text{pre}} - q_n^\text{pre} G_{n}^\text{pre} \right\|_{\textup{op}}
+ \left\| \sum_{k=r+1}^n \lambda_k^* \psi_{k}^{*\top} \psi_{k}^* \right\|_{\textup{op}} 
= O_{\mathbb{P}}\left( \sqrt{n q_n^{\text{pre}}}  \left( {\frac{\log n}{\log \log n}} \right)^{1/4}  \right).
\end{align*}
\end{proof}

\begin{lemma}\label{lemma:27}
Suppose $q_n^\text{pre} \succ {\frac{\log n}{\log \log n}} / n$.
Under Assumptions \ref{asu:network} and \ref{ass:low_rank}, for $k \in \{1,\cdots,r\}$, 
then 
\begin{enumerate}[(1)]
\item $\left| \hat{\lambda}_k - {\lambda}_k^* \right| = O_{\mathbb{P}}\left( \sqrt{n q_n^{\text{pre}}}  \left( {\frac{\log n}{\log \log n}} \right)^{1/4} \right)$;
\item $ \hat{\lambda}_k = \Omega_{\mathbb{P}}\left( n q_n^\text{pre} \right)$.
\end{enumerate}
\end{lemma}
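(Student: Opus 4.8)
The plan is to handle the two parts in sequence, using part (1) as the main input to part (2).

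For part (1), the key observation is that both $A^{\text{pre}}$ and $q_n^{\text{pre}} G_n^{\text{pre}}$ are real symmetric matrices, so $\hat{\lambda}_k$ and $\lambda_k^*$ are their respective $k$-th largest eigenvalues. Weyl's inequality then gives the eigenvalue-wise perturbation bound
\[
\bigl| \hat{\lambda}_k - \lambda_k^* \bigr| \le \bigl\| A^{\text{pre}} - q_n^{\text{pre}} G_n^{\text{pre}} \bigr\|_{\textup{op}},
\]
uniformly in $k$. The right-hand side is precisely the quantity controlled by Lemma \ref{lemma:25}(1), which under $q_n^{\text{pre}} \succ (\log n/\log\log n)/n$ equals $O_{\mathbb{P}}\!\left(\sqrt{n q_n^{\text{pre}}}\,(\log n/\log\log n)^{1/4}\right)$. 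This delivers part (1) with no further calculation.

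For part (2), I would write $\hat{\lambda}_k = \lambda_k^* + (\hat{\lambda}_k - \lambda_k^*)$ and reduce the claim to showing (i) $\lambda_k^* = \Omega(n q_n^{\text{pre}})$ for every $k \le r$, and (ii) the perturbation from part (1) is of strictly smaller order than $n q_n^{\text{pre}}$. For (i), I would invoke Assumption \ref{ass:low_rank}: part (b) caps the residual spectrum at $\max_{k>r}|\lambda_k^*| = O_{\mathbb{P}}(q_n^{\text{pre}}) = o(n q_n^{\text{pre}})$, while testing the Rayleigh quotient of $q_n^{\text{pre}} G_n^{\text{pre}}$ against the all-ones vector gives $\lambda_1^* \ge q_n^{\text{pre}} n^{-1}\sum_{i,j} g^{\text{pre}}(w_i,w_j) = \Omega(n q_n^{\text{pre}})$ (using that $g^{\text{pre}}$ is $[0,1]$-valued and nondegenerate). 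The two-sided eigen-gap control in part (a), at scale $n q_n^{\text{pre}}$ across the finitely many ($r$ constant) leading indices, then places all of $\lambda_1^*,\dots,\lambda_r^*$ at the common scale $n q_n^{\text{pre}}$ and separates them from the $O_{\mathbb{P}}(q_n^{\text{pre}})$ bulk, so in particular $\lambda_r^* = \Omega(n q_n^{\text{pre}})$ and hence $\lambda_k^* \ge \lambda_r^* = \Omega(n q_n^{\text{pre}})$ for $k \le r$. For (ii), I would check the order comparison directly: the perturbation is negligible relative to $n q_n^{\text{pre}}$ exactly when $\sqrt{n q_n^{\text{pre}}}\,(\log n/\log\log n)^{1/4} \prec n q_n^{\text{pre}}$, i.e. when $n q_n^{\text{pre}} \succ (\log n/\log\log n)^{1/2}$; since the hypothesis gives $n q_n^{\text{pre}} \succ \log n/\log\log n$, which dominates its own square root, (ii) holds. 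Combining (i) and (ii) yields $\hat{\lambda}_k = \lambda_k^*\bigl(1 + o_{\mathbb{P}}(1)\bigr) = \Omega_{\mathbb{P}}(n q_n^{\text{pre}})$.

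The main obstacle is step (i): extracting a clean lower bound $\lambda_r^* = \Omega(n q_n^{\text{pre}})$ from Assumption \ref{ass:low_rank}, since the stated gap condition only controls consecutive differences up to index $r-1$ and does not by itself separate $\lambda_r^*$ from the rest of the spectrum. The Rayleigh-quotient lower bound on $\lambda_1^*$ together with the $O_{\mathbb{P}}(q_n^{\text{pre}})$ control of the tail from part (b) are what make the argument close, and the lower-bound half of the $\asymp$ in part (a) is essential to guarantee that subtracting the finitely many leading gaps cannot drive $\lambda_r^*$ below order $n q_n^{\text{pre}}$. Once this population-scale bound is secured, everything else is routine order-of-magnitude bookkeeping.
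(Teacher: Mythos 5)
Part (1) of your proposal is exactly the paper's argument: Weyl's inequality reduces the eigenvalue perturbation to $\|A^{\text{pre}} - q_n^{\text{pre}} G_n^{\text{pre}}\|_{\textup{op}}$, which Lemma \ref{lemma:25} controls. Your order comparison in step (ii) of part (2) is also correct, and the overall skeleton of part (2) (population eigenvalue of order $n q_n^{\text{pre}}$ plus a lower-order perturbation) is what the paper intends when it says the result ``follows from (1) and Assumption \ref{ass:low_rank}(a).''

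The problem is your step (i). Lower-bounding $\lambda_1^*$ by a Rayleigh quotient and then ``propagating down'' through the gaps of Assumption \ref{ass:low_rank}(a) is logically backwards: the gap condition forces $\lambda_k^* - \lambda_{k+1}^* \geq k_1 n q_n^{\text{pre}}$ for $k \leq r-1$, so it yields the \emph{upper} bound $\lambda_r^* \leq \lambda_1^* - (r-1)k_1 n q_n^{\text{pre}}$; since $\lambda_1^* \leq \|q_n^{\text{pre}} G_n^{\text{pre}}\|_{\textup{op}} \leq n q_n^{\text{pre}}$, nothing in your chain prevents $\lambda_r^*$ from being $o(n q_n^{\text{pre}})$ or even nonpositive (take $r=2$, $\lambda_1^* = c\, n q_n^{\text{pre}}$, $\lambda_2^* = 0$: the single gap is $\asymp n q_n^{\text{pre}}$ and your Rayleigh bound on $\lambda_1^*$ holds, yet $\hat{\lambda}_2 = \Omega_{\mathbb{P}}(n q_n^{\text{pre}})$ fails). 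Large consecutive gaps cannot convert a lower bound at the top of the spectrum into a lower bound at index $r$. The argument that actually closes is the opposite direction: the eigen-gap separating the leading block from the bulk, combined with Assumption \ref{ass:low_rank}(b)'s bound $\max_{k>r}|\lambda_k^*| = O_{\mathbb{P}}(q_n^{\text{pre}})$, gives $\lambda_r^* \gtrsim n q_n^{\text{pre}} - O_{\mathbb{P}}(q_n^{\text{pre}}) = \Omega_{\mathbb{P}}(n q_n^{\text{pre}})$, and then $\lambda_k^* \geq \lambda_r^*$ for all $k \leq r$ by the ordering. As you yourself observe, the assumption as literally written only indexes the gaps up to $k = r-1$ and so does not cover the gap $\lambda_r^* - \lambda_{r+1}^*$; this is a genuine looseness in the paper's statement of Assumption \ref{ass:low_rank}(a), but the fix is to read the gap condition as extending to the edge of the leading block (the standard convention in this literature), not to substitute the top-down chain you propose, which does not establish the claim under any reading.
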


\begin{proof}[Proof of Lemma \ref{lemma:27}]
This is analogous to Lemma 27 in \cite{LiWager2022}.
For (1), by Weyl's Inequality,
\[
\left| \hat{\lambda}_k - {\lambda}_k^* \right|
\le \left\|A^{\text{pre}} - q_n^\text{pre} G_{n}^\text{pre} \right\|_{\textup{op}} 
= O_{\mathbb{P}}\left( \sqrt{n q_n^{\text{pre}}}  \left( {\frac{\log n}{\log \log n}} \right)^{1/4} \right) 
\]
by Lemma \ref{lemma:25}. The result (2) follows from (1) and Assumption \ref{ass:low_rank}(a).
\end{proof}

Below we recall the statement of the Davis--Kahan theorem as given in \cite{YuWangSamworth2015}.
\begin{lemma}[Theorem 2 in \cite{YuWangSamworth2015}] \label{thm:Davis-Kahan}
Let $\Sigma $ and $\hat{\Sigma} \in \mathbb{R}^{p \times p}$ be symmetric, with eigenvalues $\lambda_1 \geq \ldots \geq \lambda_p$ and $\hat{\lambda}_1 \geq \ldots \geq \hat{\lambda}_p$, respectively. Fix $1 \leq t \leq s \leq p$ and assume that
$\min \left(\lambda_{t-1}-\lambda_t, \lambda_s-\lambda_{s+1}\right)>0$ where we define $\lambda_0=\infty$ and $\lambda_{p+1}=-\infty$. Let $d=s-t+1$, and let $V=\left(v_t, v_{t+1}, \ldots, v_s\right) \in \mathbb{R}^{p \times d}$ and $\hat{V}=\left(\hat{v}_t, \hat{v}_{t+1}, \ldots, \hat{v}_s\right) \in \mathbb{R}^{p \times d}$ have orthonormal columns satisfying $\Sigma v_j=\lambda_j v_j$ and $\hat{\Sigma} \hat{v}_j=\hat{\lambda}_j \hat{v}_j$ for $j = t, t+1, \ldots, \mathrm{s}$. Then there exists an orthogonal matrix $\hat{O} \in \mathbb{R}^{d \times d}$ such that
\begin{align}
\|\hat{V} \hat{O} - V\|_{\textsc{f}} \leq \frac{2^{3 / 2} \min \left(d^{1 / 2}\|\hat{\Sigma}-\Sigma\|_{\textup{op}},
\|\hat{\Sigma} - \Sigma\|_{\textsc{f}}\right)}{\min \left(\lambda_{t-1} - \lambda_t, \lambda_s - \lambda_{s+1}\right)}.
\label{eq:VO-V}
\end{align}    
\end{lemma}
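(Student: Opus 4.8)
The plan is to prove this user-friendly Davis--Kahan bound in two independent reductions, neither of which involves the underlying data-generating process: first convert the left-hand side $\|\hat V\hat O - V\|_{\textsc f}$ into a bound on the principal-angle matrix $\sin\Theta$ between the column spaces of $V$ and $\hat V$, and then establish the $\sin$-theta inequality with the \emph{population} eigengap $\delta := \min(\lambda_{t-1}-\lambda_t,\lambda_s-\lambda_{s+1}) > 0$ in the denominator. Recall $\Theta=\operatorname{diag}(\theta_1,\dots,\theta_d)$ where $\cos\theta_1,\dots,\cos\theta_d$ are the singular values of $V^\top\hat V$, so that $\|\sin\Theta\|_{\textsc f}=\|V_\perp^\top\hat V\|_{\textsc f}$, with $V_\perp$ completing $V$ to an orthonormal eigenbasis of $\Sigma$.

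For the first reduction I would take the singular value decomposition $\hat V^\top V = U_1 D U_2^\top$ with $D=\operatorname{diag}(\cos\theta_1,\dots,\cos\theta_d)$ and set $\hat O = U_1 U_2^\top$, which is orthogonal. Then $\operatorname{tr}(\hat O^\top\hat V^\top V)=\operatorname{tr} D=\sum_j\cos\theta_j$, and since $V,\hat V$ have orthonormal columns,
\[
\|\hat V\hat O - V\|_{\textsc f}^2 = 2d - 2\operatorname{tr}(\hat O^\top\hat V^\top V) = 2\sum_{j=1}^d(1-\cos\theta_j)\le 2\sum_{j=1}^d(1-\cos^2\theta_j)=2\|\sin\Theta\|_{\textsc f}^2,
\]
using $0\le\cos\theta_j\le 1$. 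This yields the factor $\sqrt2$ (hence the eventual $2^{3/2}$) and shows it suffices to bound $\|\sin\Theta\|_{\textsc f}$ by $2\min(d^{1/2}\|\hat\Sigma-\Sigma\|_{\textup{op}},\|\hat\Sigma-\Sigma\|_{\textsc f})/\delta$.

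For the second reduction I would write the Sylvester identity obtained from $\hat\Sigma\hat V=\hat V\hat\Lambda_1$ and $V_\perp^\top\Sigma=\Lambda_\perp V_\perp^\top$, namely, with $S:=V_\perp^\top\hat V$,
\[
V_\perp^\top(\hat\Sigma-\Sigma)\hat V = S\hat\Lambda_1 - \Lambda_\perp S ,\qquad\text{whose }(i,j)\text{ entry is } S_{ij}\bigl(\hat\lambda^{(1)}_j-\lambda^{(\perp)}_i\bigr),
\]
where $\hat\lambda^{(1)}_j\in\{\hat\lambda_t,\dots,\hat\lambda_s\}$ and $\lambda^{(\perp)}_i$ is a population eigenvalue with index outside $[t,s]$, so $\lambda^{(\perp)}_i\ge\lambda_{t-1}$ or $\lambda^{(\perp)}_i\le\lambda_{s+1}$. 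The left-hand side has Frobenius norm at most $\|\hat\Sigma-\Sigma\|_{\textsc f}$ and, since $\hat V$ has only $d$ columns, rank at most $d$, hence Frobenius norm at most $d^{1/2}\|\hat\Sigma-\Sigma\|_{\textup{op}}$. I would then lower-bound the entrywise multipliers: Weyl's inequality gives $|\hat\lambda_j-\lambda_j|\le\|\hat\Sigma-\Sigma\|_{\textup{op}}$, so when $\|\hat\Sigma-\Sigma\|_{\textup{op}}<\delta/2$ every multiplier satisfies $|\hat\lambda^{(1)}_j-\lambda^{(\perp)}_i|>\delta/2$, and squaring and summing gives $\|\sin\Theta\|_{\textsc f}=\|S\|_{\textsc f}\le \tfrac{2}{\delta}\min(d^{1/2}\|\hat\Sigma-\Sigma\|_{\textup{op}},\|\hat\Sigma-\Sigma\|_{\textsc f})$. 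In the complementary case $\|\hat\Sigma-\Sigma\|_{\textup{op}}\ge\delta/2$ I would fall back on the trivial bound $\|\sin\Theta\|_{\textsc f}\le d^{1/2}\le 2d^{1/2}\|\hat\Sigma-\Sigma\|_{\textup{op}}/\delta$. Combining the two reductions delivers the stated inequality.

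The main obstacle is precisely this conversion of the naive $\sin$-theta gap into the clean population gap $\delta$: the Sylvester argument only gives a denominator $\delta - \|\hat\Sigma-\Sigma\|_{\textup{op}}$ (a \emph{sample}-versus-population separation), and the factor $2$ in the numerator is exactly what the case split on $\|\hat\Sigma-\Sigma\|_{\textup{op}}$ versus $\delta/2$ buys to absorb the Weyl slack. A secondary care point is justifying the two numerator forms simultaneously under the single minimum, for which the rank-$\le d$ observation on the projected perturbation $V_\perp^\top(\hat\Sigma-\Sigma)\hat V$ is the essential ingredient; I would verify that the operator-norm branch remains valid uniformly while the Frobenius branch is controlled in the small-perturbation regime, so that their minimum is a legitimate upper bound in all cases.
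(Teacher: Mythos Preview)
The paper does not prove this lemma; it is quoted verbatim as Theorem~2 of Yu, Wang and Samworth (2015) and only the construction of $\hat O$ (via the SVD of $\hat V^\top V$) is recorded, so there is no ``paper's own proof'' to compare against. Your proposal goes well beyond what the paper does by actually sketching a proof, and your argument is essentially the one given in Yu--Wang--Samworth: the Procrustes choice $\hat O = U_1U_2^\top$ giving $\|\hat V\hat O - V\|_{\textsc f}^2 \le 2\|\sin\Theta\|_{\textsc f}^2$, followed by the Sylvester identity $V_\perp^\top(\hat\Sigma-\Sigma)\hat V = S\hat\Lambda_1 - \Lambda_\perp S$ and a Weyl-based case split on $\|\hat\Sigma-\Sigma\|_{\textup{op}}$ versus $\delta/2$ to convert the mixed gap into the pure population gap $\delta$.

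One genuine care point you correctly flag but do not fully resolve: in the complementary regime $\|\hat\Sigma-\Sigma\|_{\textup{op}}\ge\delta/2$, your trivial fallback $\|\sin\Theta\|_{\textsc f}\le d^{1/2}$ immediately yields the operator-norm branch $2d^{1/2}\|\hat\Sigma-\Sigma\|_{\textup{op}}/\delta\ge d^{1/2}$, but for the Frobenius branch you only get $2\|\hat\Sigma-\Sigma\|_{\textsc f}/\delta\ge 1$, which is not obviously $\ge d^{1/2}$. Since the stated bound involves the \emph{minimum} of the two numerators, it suffices to establish the operator-norm branch in this regime (which you have), so the minimum is automatically controlled; you should make this explicit rather than leaving it as a ``secondary care point.''
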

Specifically, let $\hat{V}^\top V=O_1 D O_2^\top$ be the singular value decomposition of $\hat{V}^\top V$, then $\hat{O}$ is constructed by taking $\hat{O}=O_1 O_2^\top$.

\begin{lemma} %
\label{lemma:8}
Suppose $q_n^\text{pre} \succ {\frac{\log n}{\log \log n}} / n$.
Let $\psi_{k}^*$ denote the vector of $\psi_{k}^*(w_i)$. There exists an $r \times r$ orthogonal matrix $\hat{R}$, where $r$ is defined in Assumption \ref{ass:low_rank}, such that if we write $\hat{\Psi}^R=\hat{\Psi} \hat{R}$, and let $\hat{\psi}_{k}^R$ be the $k$-th column of $\hat{\Psi}^R$ for $k\le r$, then under Assumptions \ref{asu:network} and \ref{ass:low_rank}, 
we have
\begin{align}
\left( {\psi}_{l}^{*}\right)^\top
\left(\hat{\psi}_{k}^R-{\psi}_{k}^{*}\right)
=& O_{\mathbb{P}}\left( \sqrt{ \frac{\log n}{\log \log n}} / (n q_n^{\text{pre}}) \right), \text{ for } l = 1,\cdots, n
\label{eq:phi_hat_psi_l} 
\end{align}
\end{lemma}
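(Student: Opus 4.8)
The plan is to exploit the eigen-equation relating $A^{\text{pre}}$ and $q_n^{\text{pre}} G_n^{\text{pre}}$ so as to convert the inner product $(\psi_l^*)^\top(\hat\psi_k^R-\psi_k^*)$ into a ratio whose numerator is a quadratic form in the noise matrix $E:=A^{\text{pre}}-q_n^{\text{pre}} G_n^{\text{pre}}$ and whose denominator is an eigengap of order $nq_n^{\text{pre}}$. Since Assumption~\ref{ass:low_rank}(a) forces the leading $r$ eigenvalues to be distinct and separated by gaps $\asymp nq_n^{\text{pre}}$, while Assumption~\ref{ass:low_rank}(b) separates them from the $O_{\mathbb{P}}(q_n^{\text{pre}})$ tail, each leading eigenvector is individually identified up to sign. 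I would therefore take $\hat R$ to be the diagonal sign matrix $\operatorname{diag}(s_1,\dots,s_r)$ with $s_k=\operatorname{sign}\big((\psi_k^*)^\top\hat\psi_k\big)$ and write $\tilde\psi_k:=\hat\psi_k^R=s_k\hat\psi_k$, a unit eigenvector of $A^{\text{pre}}$ with eigenvalue $\hat\lambda_k$ aligned with $\psi_k^*$. Applying Lemma~\ref{thm:Davis-Kahan} to the single eigenvector $\psi_k^*$ (a $d=1$ block), together with $\|E\|_{\textup{op}}=O_{\mathbb{P}}\big(\sqrt{nq_n^{\text{pre}}}\,(\log n/\log\log n)^{1/4}\big)$ from Lemma~\ref{lemma:25} and a gap $\asymp nq_n^{\text{pre}}$, gives $\|\tilde\psi_k-\psi_k^*\|_2=O_{\mathbb{P}}\big((\log n/\log\log n)^{1/4}/\sqrt{nq_n^{\text{pre}}}\big)$, the single perturbation bound I would carry throughout.

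I would then split into two cases. For $l=k$, orthonormality of the two unit vectors and the alignment $s_k$ give $(\psi_k^*)^\top(\tilde\psi_k-\psi_k^*)=(\psi_k^*)^\top\tilde\psi_k-1=-\tfrac12\|\tilde\psi_k-\psi_k^*\|_2^2$, so the squared Davis--Kahan bound yields exactly $O_{\mathbb{P}}\big(\sqrt{\log n/\log\log n}/(nq_n^{\text{pre}})\big)$. For $l\neq k$ (including $l>r$, where $\psi_l^*$ sits in the tail), I would combine $(\psi_l^*)^\top q_n^{\text{pre}} G_n^{\text{pre}}=\lambda_l^*(\psi_l^*)^\top$, $A^{\text{pre}}\tilde\psi_k=\hat\lambda_k\tilde\psi_k$, and $A^{\text{pre}}=q_n^{\text{pre}} G_n^{\text{pre}}+E$ to obtain $(\hat\lambda_k-\lambda_l^*)\,(\psi_l^*)^\top\tilde\psi_k=(\psi_l^*)^\top E\tilde\psi_k$; since $(\psi_l^*)^\top\psi_k^*=0$ for $l\neq k$, this rearranges to $(\psi_l^*)^\top(\tilde\psi_k-\psi_k^*)=(\psi_l^*)^\top E\tilde\psi_k/(\hat\lambda_k-\lambda_l^*)$. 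By Lemma~\ref{lemma:27} and Assumption~\ref{ass:low_rank} the denominator satisfies $|\hat\lambda_k-\lambda_l^*|\asymp nq_n^{\text{pre}}$ for every $l\neq k$, because the leading gaps are $\asymp nq_n^{\text{pre}}$ and $\lambda_l^*=O_{\mathbb{P}}(q_n^{\text{pre}})$ when $l>r$.

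The crux is bounding the numerator $(\psi_l^*)^\top E\tilde\psi_k$ at rate $O_{\mathbb{P}}\big(\sqrt{\log n/\log\log n}\big)$, which I would do by writing $(\psi_l^*)^\top E\tilde\psi_k=(\psi_l^*)^\top E\psi_k^*+(\psi_l^*)^\top E(\tilde\psi_k-\psi_k^*)$. The first summand is a quadratic form in the independent, mean-zero, variance-$O(q_n^{\text{pre}})$ off-diagonal entries of $E$ evaluated at the fixed unit vectors $\psi_l^*,\psi_k^*$, so a conditional variance computation (using $\|\psi_l^*\|_2=\|\psi_k^*\|_2=1$ and $(\psi_l^*)^\top\psi_k^*=0$) gives $O_{\mathbb{P}}(\sqrt{q_n^{\text{pre}}})$, negligible relative to the target. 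The second summand I would bound crudely by $\|E\|_{\textup{op}}\,\|\tilde\psi_k-\psi_k^*\|_2=O_{\mathbb{P}}\big(\sqrt{nq_n^{\text{pre}}}\,(\log n/\log\log n)^{1/4}\big)\cdot O_{\mathbb{P}}\big((\log n/\log\log n)^{1/4}/\sqrt{nq_n^{\text{pre}}}\big)=O_{\mathbb{P}}\big(\sqrt{\log n/\log\log n}\big)$, and dividing by $nq_n^{\text{pre}}$ gives the stated rate.

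The main obstacle is exactly this second summand: because $\tilde\psi_k$ is itself a function of $E$, the quantity $(\psi_l^*)^\top E(\tilde\psi_k-\psi_k^*)$ is not a genuine bilinear form in independent variables and cannot be handled by a direct variance calculation. The key observation that avoids the usual leave-one-out decoupling is that, for a projection onto a single fixed direction, the deterministic operator-norm factorization $\|E\|_{\textup{op}}\|\tilde\psi_k-\psi_k^*\|_2$ already attains the advertised (and deliberately non-sharp) rate $\sqrt{\log n/\log\log n}/(nq_n^{\text{pre}})$, so the only inputs needed are the operator-norm bound and the single-eigenvector perturbation bound already established. Finally I would remark that the $l\neq k$ argument applies verbatim to $l>r$, so the bound holds uniformly over $l=1,\dots,n$, completing the proof.
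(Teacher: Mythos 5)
Your proof is correct, and for the leading block it takes a genuinely different route from the paper's. The paper chooses $\hat{R}$ to be the orthogonal Procrustes rotation supplied by the block Davis--Kahan theorem applied to all $r$ leading eigenvectors at once; for $l\le r$ it then exploits the symmetry of $(\hat{\Psi}^R)^\top\Psi^{*}$ together with the trace identity $\|\hat{\Psi}\hat{R}-\Psi^{*}\|_{\textsc{f}}^2=2\operatorname{tr}(I_r-D_\Psi)$ to control all $r\times r$ inner products simultaneously, and only for $l>r$ does it pass through the eigen-equation $A^{\text{pre}}\hat{\Psi}^R_r=\hat{\Psi}_r\hat{\Lambda}_r\hat{R}$. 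You instead use the consecutive eigen-gaps of Assumption \ref{ass:low_rank}(a) to align each leading eigenvector individually by a sign (the $d=1$ case of Lemma \ref{thm:Davis-Kahan}), handle $l=k$ via the unit-norm identity $(\psi_k^{*})^\top\tilde{\psi}_k-1=-\tfrac12\|\tilde{\psi}_k-\psi_k^{*}\|^2$, and handle every $l\ne k$ (including $l>r$) through the single resolvent identity $(\hat{\lambda}_k-\lambda_l^{*})(\psi_l^{*})^\top\tilde{\psi}_k=(\psi_l^{*})^\top E\tilde{\psi}_k$. Your numerator decomposition $(\psi_l^{*})^\top E\psi_k^{*}+(\psi_l^{*})^\top E(\tilde{\psi}_k-\psi_k^{*})$, with the conditional-variance bound $O_{\mathbb{P}}(\sqrt{q_n^{\text{pre}}})$ on the fixed bilinear form and the crude $\|E\|_{\textup{op}}\|\tilde{\psi}_k-\psi_k^{*}\|$ bound on the data-dependent part, is exactly the mechanism the paper deploys in its $l>r$ case, so the rates come out identically. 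What your version buys is a uniform, more transparent treatment of all $l\ne k$ and an explicit, elementary $\hat{R}$; what it costs is genuine reliance on the individual gaps of Assumption \ref{ass:low_rank}(a) (the paper's block Procrustes step in principle needs only separation between $\lambda_r$ and the tail), plus the small bookkeeping that $|\hat{\lambda}_k-\lambda_l^{*}|=\Omega_{\mathbb{P}}(nq_n^{\text{pre}})$ for $l\ne k\le r$ requires $\sqrt{nq_n^{\text{pre}}}\,(\log n/\log\log n)^{1/4}=o(nq_n^{\text{pre}})$, which your sparsity condition does guarantee. Since the lemma is existential in $\hat{R}$ and the downstream results (Lemmas \ref{lemma:28} and \ref{lemma:29}) use only the bounds established here together with the column-wise $\ell_2$ bound, your diagonal sign matrix is an acceptable substitute for the paper's rotation.
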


\begin{proof}[Proof of Lemma \ref{lemma:8}]
We largely follow the proof of Lemma 8 in \cite{LiWager2022}.
By applying \eqref{eq:VO-V} to $\hat{\Psi}$ and $\Psi^*$, and together with Lemma \ref{lemma:25} and Lemma \ref{thm:Davis-Kahan}, we get that there exists an $r \times r$ orthogonal matrix $\hat{R}$ such that 
\begin{align}
\left\|\hat{\Psi} \hat{R} - \Psi^* \right\|_{\textsc{f}} 
&
= O_{\mathbb{P}}\left( \frac{1}{\sqrt{n q_n^{\text{pre}}}} \left( {\frac{\log n}{\log \log n}} \right)^{1/4} \right).
\label{eq:hat_star}
\end{align}
Define $\hat{\Psi}^R = \hat{\Psi} \hat{R}$. 
Let $\hat{\psi}_{k}^R$ be the $k$-th column of $\hat{\Psi}^R$ and $ \psi_{k}^{*}$ be the $k$-th column of $\Psi^{*}$. 
Then we have
\begin{align}
\left\|\hat{\psi}_{k}^R - \psi_{k}^{*}\right\|
= O_{\mathbb{P}}\left( \frac{1}{\sqrt{n q_n^{\text{pre}}}} \left( {\frac{\log n}{\log \log n}} \right)^{1/4} \right).
\label{eq:68}
\end{align}
For $l \le r$, we will show that $\left\|\left({\Psi}^{*}\right)^\top\left(\hat{\Psi}^R-{\Psi}^{*}\right)\right\|_{\textsc{f}}$ is small. 
Note that the construction of $\hat{O}$ in \eqref{eq:VO-V} ensures that $(\hat{V} \hat{O})^\top {V}$ is symmetric, i.e., 
$(\hat{V} \hat{O})^\top V = O_2 O_1^\top O_1 D O_2^\top=O_2 D O_2^\top$. This implies that $\left(\hat{\Psi}^R\right)^\top {\Psi}^{*}$ is symmetric.
By definition,
\[
\|\hat{V} \hat{O}-V\|_{\textsc{f}}^2 
= \operatorname{tr}\left( \hat{O}^\top \hat{V}^\top \hat{V} \hat{O} - (\hat{V} \hat{O})^\top V - V^\top \hat{V} \hat{O} + V^\top V  \right)
= 2 \operatorname{tr}(I_r-D).
\]
If we write $\left(\hat{\Psi}^R\right)^\top {\Psi}^{*} = O_{\Psi} D_{\Psi} O_{\Psi}^\top$, then by \eqref{eq:hat_star}, we have
\[
\left\|\hat{\Psi} \hat{R} - \Psi^* \right\|_{\textsc{f}}^2 
= \operatorname{tr}\left(I_r-D_{\Psi}\right)
= O_{\mathbb{P}}\left( \frac{1}{n q_n^{\text{pre}}} \sqrt{{\frac{\log n}{\log \log n}}}  \right).
\]
Note that 
\[
\left({\Psi}^{*}\right)^\top\left(\hat{\Psi}^R-{\Psi}^{*}\right)
= \left({\Psi}^{*}\right)^\top \hat{\Psi}^R - I_r
= \left(O_{\Psi} D_{\Psi} O_{\Psi}^\top-I_r\right).
\]
Therefore,
$$
\left\|\left({\Psi}^{*}\right)^\top
\left(\hat{\Psi}^R-{\Psi}^{*}\right)\right\|_{\textsc{f}}^2
= \operatorname{tr}\left(\left(O_{\Psi} D_{\Psi} O_{\Psi}^\top-I_r \right)^2\right)
= \operatorname{tr}\left(\left(D_{\Psi}-I_r \right)^2\right)
= O_{\mathbb{P}}\left( \left( \sqrt{\frac{\log n}{\log \log n}} / (n q_n^\text{pre}) \right)^{2} \right).
$$
It implies that for any $k, l \in\{1, \ldots, r\}$, we have
\begin{align*}
\left( {\psi}_{l}^{*}\right)^\top
\left(\hat{\psi}_{k}^R-{\psi}_{k}^{*}\right)
= O_{\mathbb{P}}\left( \frac{1}{n q_n^{\text{pre}}} \sqrt{{\frac{\log n}{\log \log n}}} \right).
\end{align*}
For $l > r$, we have 
$\left( {\psi}_{l}^{*}\right)^\top \left(\hat{\Psi}^{R}-{\Psi}^{*}\right) 
= \left({\psi}_{l}^{*}\right)^\top \hat{\Psi}^R$ by orthogonality. Note that 
\begin{align*}
\left\|\left( {\psi}_{l}^{*}\right)^\top \hat{\Psi}_r^R\right\|
= \left\|\left( {\psi}_{l}^{*}\right)^\top \hat{\Psi}_r \hat{R} \right\|
= \left\|\left( {\psi}_{l}^{*}\right)^\top \hat{\Psi}_r \hat{\Lambda}_r \hat{R} \hat{R}^\top \hat{\Lambda}_r^{-1} \hat{R}\right\|
\end{align*}
where $\hat{\Lambda}_r$ is the $r \times r$ diagonal matrix with $\hat{\lambda}_1, \ldots, \hat{\lambda}_r$ on its diagonal.
We complete the proof by bounding $\left\|\left( {\psi}_{l}^{*}\right)^\top \hat{\Psi}_r \hat{\Lambda}_r \hat{R} \right\|$. 
Note that
\begin{align*}
\left( {\psi}_{l}^{*}\right)^\top \hat{\Psi}_r \hat{\Lambda}_r \hat{R}
=& \left({\psi}_{l}^{*}\right)^\top {A}^{\text{pre}} \hat{\Psi} \hat{R}
= \left({\psi}_{l}^{*}\right)^\top {A}^{\text{pre}} \hat{\Psi}^R_r \\
=& \left({\psi}_{l}^{*}\right)^\top \left( {A}^{\text{pre}} - q_n^\text{pre} \tilde{G}_{n}^\text{pre} \right) \hat{\Psi}^R_r + \left({\psi}_{l}^{*}\right)^\top q_n^\text{pre} \tilde{G}_{n}^\text{pre} \hat{\Psi}^R_r \\
=& \left({\psi}_{l}^{*}\right)^\top \left( {A}^{\text{pre}} - q_n^\text{pre} \tilde{G}_{n}^\text{pre} \right) (\hat{\Psi}_r^R - {\Psi}_r^{*})
+ \left({\psi}_{l}^{*}\right)^\top \left( {A}^{\text{pre}} - q_n^\text{pre} \tilde{G}_{n}^\text{pre} \right) {\Psi}_r^{*}.
\end{align*}
The first term can be easily bounded by
\begin{align}
\left\|\left( {\psi}_{l}^{*}\right)^\top
\left( {A}^{\text{pre}} - q_n^\text{pre} \tilde{G}_{n}^\text{pre} \right)
\left(\hat{\Psi}_r^{R} - {\Psi}_r^{*}\right)\right\| 
&\leq 
\left\| {\psi}_{l}^{*}\right\| \left\| {A}^{\text{pre}} - q_n^\text{pre} \tilde{G}_{n}^\text{pre} \right\|_{\textup{op}}
\left\|\hat{\Psi}_r^R-{\Psi}_r^{*}\right\| \notag \\ 
&= O_{\mathbb{P}}\left( \sqrt{ \frac{\log n}{\log \log n}} \right),
\label{eq:lem8_term1}
\end{align}
where the last inequality follows from Lemma \ref{lemma:25} and \eqref{eq:hat_star}. 
For the second term, we rewrite it in the summation form: 
\begin{align*}
 \left( {\psi}_{l}^{*}\right)^\top 
 \left( {A}^{\text{pre}} - q_n^{\text{pre}} \tilde{G}_{n}^{\text{pre}} \right) {\psi}_{k}^{*}
=& \left( {\psi}_{l}^{*}\right)^\top 
\left( {A}^{\text{pre}} - q_n^\text{pre} {G}_{n}^\text{pre} \right) {\psi}_{k}^{*} 
+ \left( {\psi}_{l}^{*}\right)^\top \left( \sum_{h=r+1}^n \lambda_h^* \psi_{h}^{*\top} \psi_{h}^* \right) {\psi}_{k}^{*} \\
=& \sum_{(i, j), i \neq j} {\psi}_{l i}^{*} {\psi}_{k j}^{*} \left( {A}^{\text{pre}}_{i j} - q_n^\text{pre} g^{\text{pre}}(w_i,w_j)\right). 
\end{align*}
As $A^{\text{pre}}_{ij}$ 's are independent given $w$, we have
\begin{align}
& E\left[\left(\sum_{(i, j), i \neq j} {\psi}_{l i}^{*} {\psi}_{k j}^{*} \left( {A}^{\text{pre}}_{i j} - q_n^\text{pre} {g}_0(i,j)\right) \right)^2\right] \notag \\
= & 2 \sum_{(i, j), i \neq j} 
\mathbb{E}\left[\left( {\psi}_{l i}^{*}\right)^2
\left( {\psi}_{k j}^{*} \right)^2
\left(A^{\text{pre}}_{i j} - q_n^\text{pre} g_0(i,j)\right)^2 \right]
\le 2 \sum_{(i, j), i \neq j} \mathbb{E}\left[\left( {\psi}_{l i}^{*}\right)^2\left( {\psi}_{k j}^{*} \right)^2 q_n^\text{pre} g_0(i,j) \right] \notag \\
\leq & 2 q_n^\text{pre} \mathbb{E}\left[ \sum_{(i, j), i \neq j}  \left( {\psi}_{l i}^{*}\right)^2\left( {\psi}_{k j}^{*} \right)^2\right] 
\leq C q_n^\text{pre}.
\label{eq:lem8_term2}
\end{align}
Combining \eqref{eq:lem8_term1} and \eqref{eq:lem8_term2}, we can show that
$\left\|\left( {\psi}_{l}^{*} \right)^\top \hat{\Psi}_r \hat{\Lambda}_r \hat{R}\right\|
= O_{\mathbb{P}}\left( \sqrt{ \frac{\log n}{\log \log n}} \right)$.
Lemma \ref{lemma:27} shows that 
$\hat{\lambda}_k = \Omega_{\mathbb{P}}\left( n q_n^\text{pre} \right)$ for $k \leq r$. 
Thus
\begin{align*}
\left\|\left( {\psi}_{l}^{*}\right)^\top \hat{\Psi}_r^R\right\|
= \left\|\left( {\psi}_{l}^{*}\right)^\top \hat{\Psi}_r \hat{R} \right\|
= \left\|\left( {\psi}_{l}^{*}\right)^\top \hat{\Psi}_r \hat{\Lambda}_r \hat{R} \hat{R}^\top \hat{\Lambda}_r^{-1} \hat{R}\right\|
= O_{\mathbb{P}}\left( \sqrt{ \frac{\log n}{\log \log n}} / (n q_n^{\text{pre}}) \right).
\end{align*}
It implies that 
for any $k \in\{1, \ldots, r\}$ and $l > r$, we have
\begin{align}
\left( {\psi}_{l}^{*}\right)^\top
\left(\hat{\psi}_{k}^R-{\psi}_{k}^{*}\right)
= O_{\mathbb{P}}\left( \sqrt{ \frac{\log n}{\log \log n}} / (n q_n^{\text{pre}}) \right).
\end{align}

\end{proof}

\begin{lemma}\label{lemma:28}
Suppose $q_n^\text{pre} \succ {\frac{\log n}{\log \log n}} / n$.
Define $\hat{\psi}_k^R$ as in Lemma \ref{lemma:8}.
Under Assumptions \ref{asu:network} and \ref{ass:low_rank},
\begin{align*}
\left\|A^{\text{pre}} \left(\hat{\psi}_{k}^R-\psi_{k}^* \right)\right\|
= O_{\mathbb{P}}\left( \sqrt{ \frac{\log n}{\log \log n}} \right).   
\end{align*}
\end{lemma}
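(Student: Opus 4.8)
The plan is to split $A^{\text{pre}}$ into its noise part and its low-rank signal part and bound each contribution separately. I would write
\[
A^{\text{pre}}\left(\hat{\psi}_k^R - \psi_k^*\right)
= \left(A^{\text{pre}} - q_n^{\text{pre}}\tilde{G}_n^{\text{pre}}\right)\left(\hat{\psi}_k^R - \psi_k^*\right)
+ q_n^{\text{pre}}\tilde{G}_n^{\text{pre}}\left(\hat{\psi}_k^R - \psi_k^*\right),
\]
and treat the two terms in turn. For the first term I would simply pass to the operator norm, bounding it by $\left\|A^{\text{pre}} - q_n^{\text{pre}}\tilde{G}_n^{\text{pre}}\right\|_{\textup{op}}\left\|\hat{\psi}_k^R - \psi_k^*\right\|$, and then invoke Lemma \ref{lemma:25}(2) for the operator norm, which is $O_{\mathbb{P}}\!\left(\sqrt{n q_n^{\text{pre}}}\,(\log n/\log\log n)^{1/4}\right)$, together with the eigenvector consistency bound \eqref{eq:68}, which is $O_{\mathbb{P}}\!\left((n q_n^{\text{pre}})^{-1/2}(\log n/\log\log n)^{1/4}\right)$. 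The two factors of $\sqrt{n q_n^{\text{pre}}}$ cancel, leaving exactly $O_{\mathbb{P}}\!\left(\sqrt{\log n/\log\log n}\right)$.

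The second term is the delicate one, because the naive operator-norm bound $\left\|q_n^{\text{pre}}\tilde{G}_n^{\text{pre}}\right\|_{\textup{op}}\left\|\hat{\psi}_k^R-\psi_k^*\right\| = O_{\mathbb{P}}\!\left(\sqrt{n q_n^{\text{pre}}}\,(\log n/\log\log n)^{1/4}\right)$ is far too loose. Instead I would exploit the low-rank structure by expanding $q_n^{\text{pre}}\tilde{G}_n^{\text{pre}} = \sum_{l=1}^r \lambda_l^* \psi_l^* \psi_l^{*\top}$, so that, using the orthonormality of the $\psi_l^*$,
\[
\left\| q_n^{\text{pre}}\tilde{G}_n^{\text{pre}}\left(\hat{\psi}_k^R-\psi_k^*\right) \right\|^2
= \sum_{l=1}^r (\lambda_l^*)^2 \left( \left(\psi_l^*\right)^\top\left(\hat{\psi}_k^R-\psi_k^*\right)\right)^2 .
\]
Here the gain comes from \eqref{eq:phi_hat_psi_l} in Lemma \ref{lemma:8}, which shows each projection $(\psi_l^*)^\top(\hat{\psi}_k^R-\psi_k^*)$ is of order $\sqrt{\log n/\log\log n}/(n q_n^{\text{pre}})$ — one full factor of $(n q_n^{\text{pre}})^{-1/2}$ smaller than the crude bound would suggest, precisely because the perturbation of the eigenvector is nearly orthogonal to the signal eigenspace. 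Combining this with the upper bound $\lambda_l^* = O(n q_n^{\text{pre}})$ (which follows from $\lambda_1^* \le \left\|q_n^{\text{pre}} G_n^{\text{pre}}\right\|_{\textup{op}}$, cf.\ Lemma \ref{lemma:27}), each summand is $O_{\mathbb{P}}\!\left((n q_n^{\text{pre}})^2 \cdot (\log n/\log\log n)/(n q_n^{\text{pre}})^2\right) = O_{\mathbb{P}}(\log n/\log\log n)$; since $r$ is fixed, the whole sum has the same order, and taking the square root again gives $O_{\mathbb{P}}\!\left(\sqrt{\log n/\log\log n}\right)$.

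Adding the two contributions by the triangle inequality yields the claimed rate. The main obstacle is exactly the second term: one must resist applying the operator-norm bound and instead invoke the sharper coordinatewise projection estimate \eqref{eq:phi_hat_psi_l} — which is the whole reason Lemma \ref{lemma:8} was stated for all $l$ rather than merely controlling $\left\|\hat{\psi}_k^R - \psi_k^*\right\|$. A secondary point to check is that the rotation $\hat{R}$ and the passage between $\tilde{G}_n^{\text{pre}}$ and $G_n^{\text{pre}}$ are handled consistently, but these are already absorbed into the cited lemmas.
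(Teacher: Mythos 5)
Your proof is correct and follows essentially the same route as the paper: the identical decomposition into $q_n^{\text{pre}}\tilde{G}_n^{\text{pre}}(\hat{\psi}_k^R-\psi_k^*)$ plus the noise term, the operator-norm bound from Lemma \ref{lemma:25} combined with \eqref{eq:68} for the latter, and the rank-$r$ expansion with the projection estimates of Lemma \ref{lemma:8} for the former. Your discussion of why the crude operator-norm bound must be avoided on the signal part is a correct reading of what the paper's (terser) argument implicitly relies on.
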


\begin{proof}[Proof of Lemma \ref{lemma:28}]
By decomposing the target expression:
\begin{align*}
\left\| A^{\text{pre}} \left(\hat{\psi}_{k}^R - \psi_{k}^*\right)\right\| 
& \leq \left\|q_n^\text{pre} \tilde{G}_{n}^\text{pre} \left(\hat{\psi}_{k}^R-\psi_{k}^*\right)\right\|
+ \left\|\left( A^{\text{pre}} - q_n^\text{pre} \tilde{G}_{n}^\text{pre} \right)\left(\hat{\psi}_{k}^R-\psi_{k}^* \right)\right\| \\
& \leq 
\left\|\sum_{l=1}^r {\lambda}_l^* {\psi}_{l}^* {\psi}_{l}^{*\top}
\left(\hat{\psi}_{k}^R-\psi_{k}^*\right) \right\|
+ \left\|A^{\text{pre}} - q_n^\text{pre} \tilde{G}_{n}^\text{pre}\right\|_{\textup{op}}
\left\|\hat{\psi}_{k}^R-\psi_{k}^* \right\| \\
&= O_{\mathbb{P}}\left( \sqrt{ \frac{\log n}{\log \log n}} \right) 
\end{align*}
by Lemma \ref{lemma:8}, Lemma \ref{lemma:25} and \eqref{eq:68}.    
\end{proof}

\begin{lemma}\label{lemma:29}
Suppose $q_n^\text{pre} \succ {\frac{\log n}{\log \log n}} / n$.
Define $\hat{R}$ and $\hat{\Psi}^R$ as in Lemma \ref{lemma:8} and $\hat{\gamma}^R=\hat{R}^\top \hat{\gamma}$. 
Under Assumptions \ref{asu:network} and \ref{ass:low_rank}, we have
\begin{equation*}
\sum_{k=1}^r \hat{\gamma}_k \hat{\psi}_{k i}
= \sum_{k=1}^r \hat{\gamma}_k^R \hat{\psi}_{k i}^R 
\text { and } 
\hat{\gamma}_k^R
= O_{\mathbb{P}}\left( n q_n^\text{pre} \right).
\end{equation*}
\end{lemma}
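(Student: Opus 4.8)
The plan is to prove the two claims separately, starting with the algebraic projection identity and then the stochastic order bound, since the first is exact linear algebra and the second is where all the probabilistic content sits.

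For the identity, I would collect the SSIV values into the vector $Z^{\textsc{ssiv}} = A^{\text{pre}}(T - \pi 1_n)$, where $1_n$ is the $n$-vector of ones, using $A_{ii}^{\text{pre}}=0$ so that $Z_i^{\textsc{ssiv}} = \sum_{j} A_{ij}^{\text{pre}}(T_j-\pi)$. Writing $\hat\Psi = (\hat\psi_1,\ldots,\hat\psi_r)$ we have $\hat\gamma = \hat\Psi^\top Z^{\textsc{ssiv}}$, so the fitted projection is $\sum_{k=1}^r \hat\gamma_k \hat\psi_{ki} = (\hat\Psi\hat\Psi^\top Z^{\textsc{ssiv}})_i$. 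Because $\hat\gamma^R = \hat R^\top \hat\gamma = \hat R^\top \hat\Psi^\top Z^{\textsc{ssiv}} = (\hat\Psi^R)^\top Z^{\textsc{ssiv}}$, the rotated projection equals $\sum_{k=1}^r \hat\gamma_k^R \hat\psi_{ki}^R = (\hat\Psi^R(\hat\Psi^R)^\top Z^{\textsc{ssiv}})_i = (\hat\Psi \hat R \hat R^\top \hat\Psi^\top Z^{\textsc{ssiv}})_i$. The first claim then follows immediately from $\hat R\hat R^\top = I_r$, i.e.\ the orthogonal projector $\hat\Psi\hat\Psi^\top$ onto the span of the leading eigenvectors is invariant under the change of basis $\hat R$. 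This step is routine.

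For the order bound I would exploit the eigenrelation. Since $A^{\text{pre}}$ is symmetric with $\hat\psi_k$ its $k$-th eigenvector, $\hat\psi_k^\top A^{\text{pre}} = \hat\lambda_k \hat\psi_k^\top$, so $\hat\gamma_k = \hat\psi_k^\top Z^{\textsc{ssiv}} = \hat\psi_k^\top A^{\text{pre}}(T - \pi 1_n) = \hat\lambda_k\, \hat\psi_k^\top (T - \pi 1_n)$, and I would bound the two factors. Lemma \ref{lemma:27} gives $\hat\lambda_k = O_{\mathbb{P}}(n q_n^{\text{pre}})$ for $k\le r$, combining its deviation bound with $\lambda_k^* \asymp n q_n^{\text{pre}}$ under the eigen-gap condition in Assumption \ref{ass:low_rank}(a). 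For the remaining factor $\hat\psi_k^\top(T - \pi 1_n) = \sum_i \hat\psi_{ki}(T_i-\pi)$, the crucial observation is that the pre-intervention adjacency matrix in \eqref{eq:pre} depends only on $\{w_i\}$ and $\{\eta_{ij}\}$ and hence is independent of the treatment vector $T$, so $\hat\psi_k$ is independent of $T$. Conditioning on $A^{\text{pre}}$, $\sum_i \hat\psi_{ki}(T_i-\pi)$ is a weighted sum of independent, bounded, mean-zero variables with conditional variance $\pi(1-\pi)\sum_i \hat\psi_{ki}^2 = \pi(1-\pi)$ by the unit normalization $\|\hat\psi_k\|=1$; a conditional Chebyshev argument then yields $O_{\mathbb{P}}(1)$ unconditionally. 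Multiplying gives $\hat\gamma_k = O_{\mathbb{P}}(n q_n^{\text{pre}})$.

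Finally, since $\hat\gamma^R = \hat R^\top \hat\gamma$ and $\hat R$ is an $r\times r$ orthogonal matrix with entries bounded by one, each $\hat\gamma_k^R$ is a fixed finite linear combination of the $\hat\gamma_l$'s, so $\hat\gamma_k^R = O_{\mathbb{P}}(n q_n^{\text{pre}})$ as well. I expect the only genuinely substantive step to be the $O_{\mathbb{P}}(1)$ control of $\hat\psi_k^\top(T - \pi 1_n)$: the subtlety is that $\hat\psi_k$ is itself data-dependent, so one must be careful to condition on $A^{\text{pre}}$ and invoke its independence from $T$ before applying the conditional second-moment bound, rather than treating $\hat\psi_k$ as deterministic. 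Everything else is bookkeeping with the eigenrelation and orthogonality of $\hat R$.
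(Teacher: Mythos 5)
Your proof is correct, but it takes a genuinely different route from the paper's. For the order bound, the paper decomposes $\hat{\gamma}_k^R = -\sum_i \psi_{k}^*(w_i) Z_i^{\textsc{ssiv}} - \sum_i (\hat{\psi}_{ki}^R - \psi_{k}^*(w_i)) Z_i^{\textsc{ssiv}}$, controls the first (population-eigenvector) term by an explicit second-moment computation that uses the spectral expansion of the graphon (the dominant contribution being $(\lambda_k^*)^2 \asymp (nq_n^\text{pre})^2$), and controls the second term via the eigenvector perturbation bound $\|A^{\text{pre}}(\hat{\psi}_k^R - \psi_k^*)\| = O_{\mathbb{P}}(\sqrt{\log n/\log\log n})$ from Lemma \ref{lemma:28}. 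You instead exploit the exact eigenrelation $\hat{\psi}_k^\top A^{\text{pre}} = \hat{\lambda}_k \hat{\psi}_k^\top$ to write $\hat{\gamma}_k = \hat{\lambda}_k\, \hat{\psi}_k^\top(T-\pi 1_n)$, bound $\hat{\lambda}_k = O_{\mathbb{P}}(nq_n^\text{pre})$ from Lemma \ref{lemma:27}(1) together with $\lambda_k^* \le \|q_n^\text{pre} G_n^\text{pre}\|_{\textup{op}} \le nq_n^\text{pre}$, and get $\hat{\psi}_k^\top(T-\pi 1_n) = O_{\mathbb{P}}(1)$ from $E[(\hat{\psi}_k^\top(T-\pi 1_n))^2 \mid A^{\text{pre}}] = \pi(1-\pi)\|\hat{\psi}_k\|^2 = \pi(1-\pi)$, using that $A^{\text{pre}}$ is a function of $(w,\eta)$ alone and hence independent of $T$ (implicit throughout the paper, e.g.\ in the SSIV exogeneity derivation). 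Your argument is shorter, exact rather than perturbative, and bypasses Lemma \ref{lemma:28} and the Davis--Kahan machinery entirely for this lemma; you also correctly identify the one substantive point, namely that $\hat{\psi}_k$ is data-dependent and must be handled by conditioning on $A^{\text{pre}}$. What the paper's decomposition buys is reuse: the intermediate bounds on $\sum_i \psi_k^*(w_i)Z_i^{\textsc{ssiv}}$ and $\sum_i(\hat{\psi}_{ki}^R-\psi_k^*(w_i))Z_i^{\textsc{ssiv}}$ are exactly the objects needed again in the proofs of Lemmas \ref{lemma:Ziphii_PC} and \ref{lemma:ZiMi_PC}, so the authors pay the perturbation cost once and amortize it. One cosmetic remark: the upper bound on $\hat{\lambda}_k$ needs only $\lambda_k^* = O(nq_n^\text{pre})$ from the operator norm, not the two-sided $\lambda_k^* \asymp nq_n^\text{pre}$ you cite from the eigen-gap condition, but this does not affect the validity of your argument.
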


\begin{proof}[Proof of Lemma \ref{lemma:29}]
We follow the proof of Lemma 29 in \citet{LiWager2022}.
The first result holds by construction.
Recall the closed form of $\hat{\gamma}_k$:
$$
\hat{\gamma}_k
= - \frac{ \sum_{i=1}^n \hat{\psi}_{k i} Z_i^{\textsc{ssiv}} }{ \sum_{i=1}^n \hat{\psi}_{k i}^2 } 
= - \sum_{i=1}^n \hat{\psi}_{k i} Z_i^{\textsc{ssiv}}.
$$
Multiply $\hat{R}$ on both hand side, we can obtain
$$
\hat{\gamma}_k^R
= - \sum_{i=1}^n \hat{\psi}_{k i}^R Z_i^{\textsc{ssiv}}
= - \sum_{i=1}^n \psi_{k}^*(w_i)Z_i^{\textsc{ssiv}}
- \sum_{i=1}^n \left(\hat{\psi}_{k i}^R-\psi_{k}^*(w_i)\right) Z_i^{\textsc{ssiv}}.
$$
For the first term of $\hat{\gamma}_k^R$,
we bound it in $L_2$ norm:
\begin{align*}
& \operatorname{E}\left[ \left( \sum_{i=1}^n \psi_{k}^*(w_i) Z_i^{\textsc{ssiv}} \right)^2 \right]
= \operatorname{E}\left[ \left( \sum_{(i, j), i \neq j} \psi_{k}^*(w_i) A_{i j}^{\text{pre}} \left(T_j - \pi\right) \right)^2 \right] \\
&= \operatorname{E}\left[ \sum_{j=1}^n \left(T_j - \pi\right)^2 
\left( \sum_{i=1}^n  \psi_{k}^*(w_i)^2  A_{i j}^{\text{pre}} + \sum_{(i_1,i_2)}  \psi_{k}^*(w_{i_1}) \psi_{k}^*(w_{i_2}) A_{i_1 j}^{\text{pre}} A_{i_2 j}^{\text{pre}} \right) \right] \\
&= \operatorname{E}\left[ \sum_{j=1}^n \left(T_j - \pi\right)^2 
\left( \sum_{i=1}^n  \psi_{k}^*(w_i)^2  q_n^\text{pre} g_0(i,j) + \sum_{(i_1,i_2)}  \psi_{k}^*(w_{i_1}) \psi_{k}^*(w_{i_2}) q_n^\text{pre} g_0({i_1},j) q_n^\text{pre} g_0({i_2},j) \right) \right] \\
&= \operatorname{E}\left[ \sum_{j=1}^n \left(T_j - \pi\right)^2 
\left( \sum_{i=1}^n  \psi_{k}^*(w_i)^2  q_n^\text{pre} g_0(i,j) + \sum_{(i_1,i_2)}  \psi_{k}^*(w_{i_1}) \psi_{k}^*(w_{i_2}) \lambda_k^* \psi_{k}^*(w_{i_1}) \psi_{k}^*(w_{j}) \lambda_k^* \psi_{k}^*(w_{i_2}) \psi_{k}^*(w_{j}) \right) \right] \\
&= \pi(1-\pi) \sum_{(i, j), i \neq j}
\mathbb{E}\left[\psi_{k}^*(w_i)^2 q_n^\text{pre} g_0(i,j) \right] 
+ \pi(1-\pi) \left( \lambda_k^* \right)^2 \\
& \leq C_1 n q_n^\text{pre} + C_2  (n q_n^\text{pre})^2.
\end{align*}
This implies that
\begin{equation}
\sum_{i=1}^n \psi_{k}^*(w_i) Z_i^{\textsc{ssiv}} 
= O_{\mathbb{P}}\left(n q_n^\text{pre} \right).  
\label{eq:psi_Zi}
\end{equation}
For the second term of $\hat{\gamma}_k$, we bound it in $L_2$ norm:
\begin{align*}
& E\left[ \left( \sum_{j=1}^n (T_j - \pi) \sum_{i\ne j} A_{ij}^{\text{pre}} \left(\hat{\psi}_{k i}^R-\psi_{k}^*(w_i)\right)  \right)^2 \right] \\
=& \pi (1-\pi) E\left[ \sum_{j=1}^n  \left(  \sum_{i\ne j} A_{ij}^{\text{pre}} \left(\hat{\psi}_{k i}^R-\psi_{k}^*(w_i)\right)  \right)^2 \right] 
= \pi (1-\pi) E\left[ \left\| A^{\text{pre}} \left(\hat{\psi}_{k}^R-\psi_{k}^*(w)\right)  \right\|^2 \right] \\
=& O_{\mathbb{P}}\left( \frac{\log n}{\log \log n} \right)
\end{align*}
by Lemma \ref{lemma:28}. 
Therefore,
\begin{equation}
\sum_{i=1}^n \left(\hat{\psi}_{k i}^R-\psi_{k}^*(w_i)\right) Z_i^{\textsc{ssiv}} 
= O_{\mathbb{P}}\left( \sqrt{ \frac{\log n}{\log \log n}} \right).
\label{eq:psi_hat_Zi}
\end{equation}
Combining \eqref{eq:psi_Zi} and \eqref{eq:psi_hat_Zi}, we get $\hat{\gamma}_k^R
= O_{\mathbb{P}}\left( n q_n^\text{pre} \right)$.
\end{proof}

Define $\phi_i$ as any i.i.d. random variable with constant variance.
Define $\mu_{n,k}^{\phi} = \sum_{i=1}^n \phi_i \psi_{k}^*(w_i)$ and $\eta_i^{\phi} = \phi_i - \sum_{k=1}^r \mu_{n,k}^{\phi} \psi_{k}^*(w_i)$. By definition, we have 
\begin{align*}
\sum_{i=1}^n \eta_i^{\phi} \psi_{k}^*(w_i)
= \sum_{i=1}^n \left( \phi_i - \sum_{l=1}^r \mu_{n,l}^{\phi} \psi_{l}^*(w_i) \right) \psi_{k}^*(w_i)
= 0.
\end{align*}
Also, $\mu^{\phi}_{n,l} = O_{\mathbb{P}}(\sqrt{n})$ which follows by 
\begin{align*}
E\left( (\mu^{\phi}_{n,l})^2 \right)
=& E\left( \left(\sum_{i=1}^n \phi_i \psi_{li}^* \right)^2 \right)
= \sum_{i=1}^n E\left( \phi_i^2 (\psi_{li}^*)^2 \right)
+ \sum_{(i,j)} E\left( \phi_i \psi_{li}^* \phi_j \psi_{lj}^* \right) 
= O(n).
\end{align*}

\begin{lemma} \label{lemma:Ziphii_PC}
Suppose $q_n^{\text{pre}} \succ \frac{\log n}{\log \log n} / n$.
Under Assumptions \ref{asu:network} and \ref{ass:low_rank}, we have
\begin{align}
\frac{1}{n}\sum_{i=1}^n (Z_i^{\textsc{ssiv}} + \hat{\delta}_i) \phi_i
=& \frac{1}{n}\sum_{i=1}^n Z_i^{\textsc{ssiv}} \eta_i^{\phi} 
+ o_{\mathbb{P}}\left( \sqrt{q_n^{\text{pre}}} \right).
\label{eq:Ziphii_PC} 
\end{align}

\end{lemma}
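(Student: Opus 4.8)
The plan is to read the left-hand side as the \emph{feasible} denoised instrument paired with $\phi$ and the right-hand side as its \emph{oracle} counterpart, and then to control the error created by replacing the population eigenvectors $\psi_k^*$ with the sample eigenvectors $\hat\psi_k$. Writing $\hat\delta_i = -\sum_{k=1}^r \hat\gamma_k \hat\psi_k(w_i)$ so that $Z_i^{\textsc{ssiv}} + \hat\delta_i = Z_i^{\textsc{de}}$, I would first record the exact projection identity
\[
\frac{1}{n}\sum_{i=1}^n Z_i^{\textsc{de}*}\phi_i = \frac{1}{n}\sum_{i=1}^n Z_i^{\textsc{ssiv}}\eta_i^{\phi},
\]
where $Z_i^{\textsc{de}*} = Z_i^{\textsc{ssiv}} - \sum_{k=1}^r \gamma_k^* \psi_k^*(w_i)$ is the oracle modified SSIV and $\gamma_k^* = \sum_{i=1}^n \psi_k^*(w_i) Z_i^{\textsc{ssiv}}$. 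This identity is simply the self-adjointness of the orthogonal projection $\Pi$ onto $\mathrm{span}\{\psi_1^*,\dots,\psi_r^*\}$, i.e. $\langle (I-\Pi)Z^{\textsc{ssiv}},\phi\rangle = \langle Z^{\textsc{ssiv}},(I-\Pi)\phi\rangle$, so the right-hand side of the lemma is exactly the oracle instrument dotted with $\phi$. The lemma therefore reduces to showing that the feasible and oracle inner products agree up to $o_{\mathbb{P}}(\sqrt{q_n^{\text{pre}}})$.

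Using Lemma \ref{lemma:29} to pass from $\sum_k \hat\gamma_k \hat\psi_{ki}$ to its rotation-invariant form $\sum_k \hat\gamma_k^R \hat\psi_{ki}^R$, I would write the difference as
\[
\frac{1}{n}\sum_{i=1}^n (Z_i^{\textsc{de}} - Z_i^{\textsc{de}*})\phi_i
= \frac{1}{n}\sum_{k=1}^r \Big[ (\gamma_k^* - \hat\gamma_k^R)\langle \psi_k^*,\phi\rangle + \hat\gamma_k^R \langle \psi_k^* - \hat\psi_k^R,\phi\rangle \Big],
\]
and treat the two families of terms separately. For the coefficient-error term, $\gamma_k^* - \hat\gamma_k^R = \sum_i (\psi_k^*(w_i)-\hat\psi_{ki}^R) Z_i^{\textsc{ssiv}}$ is precisely the quantity bounded in the proof of Lemma \ref{lemma:29} at \eqref{eq:psi_hat_Zi}, namely $O_{\mathbb{P}}(\sqrt{\log n/\log\log n})$, obtained by transferring $A^{\text{pre}}$ onto the eigenvector difference and invoking Lemma \ref{lemma:28}. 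Combined with $\langle \psi_k^*,\phi\rangle = \mu_{n,k}^{\phi} = O_{\mathbb{P}}(\sqrt n)$ and the prefactor $1/n$, this contributes $O_{\mathbb{P}}(n^{-1/2}\sqrt{\log n/\log\log n})$, which is $o_{\mathbb{P}}(\sqrt{q_n^{\text{pre}}})$ exactly because $q_n^{\text{pre}} \succ \frac{\log n}{\log\log n}/n$.

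The decisive term is $\frac{1}{n}\hat\gamma_k^R \langle \hat\psi_k^R - \psi_k^*,\phi\rangle$, where $\hat\gamma_k^R = O_{\mathbb{P}}(n q_n^{\text{pre}})$ by Lemma \ref{lemma:29}; here a blunt Cauchy--Schwarz bound through $\|\hat\psi_k^R - \psi_k^*\|$ from \eqref{eq:68} leaks a spurious $(\log n/\log\log n)^{1/4}$ and is not sharp enough. The plan is to split $\phi$ into its component inside $\mathrm{span}\{\psi_l^*\}$ and an orthogonal remainder. For the aligned component I would expand in the $\psi_l^*$ and apply Lemma \ref{lemma:8} direction by direction, giving order $\sqrt{\log n/\log\log n}/(n q_n^{\text{pre}})$ per direction, which is comfortably negligible after multiplication by $\hat\gamma_k^R/n = O_{\mathbb{P}}(q_n^{\text{pre}})$. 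For the orthogonal remainder I would further peel off the idiosyncratic part of $\phi$ (mean zero given $w$), for which conditioning on $(w, A^{\text{pre}})$ and a second-moment computation yields a bound of order $\|\hat\psi_k^R - \psi_k^*\|$, and the $w$-measurable part, which I would route through the eigenvector relation $\hat\psi_k = A^{\text{pre}}\hat\psi_k/\hat\lambda_k$ so that the remainder is struck by $A^{\text{pre}}$; Assumption \ref{ass:low_rank}(b), Lemma \ref{lemma:25}, and the eigenvalue floor $\hat\lambda_k = \Omega_{\mathbb{P}}(n q_n^{\text{pre}})$ from Lemma \ref{lemma:27} then tame it. I expect this last step --- showing that the sample eigenvector does not spuriously correlate with the part of $\phi$ living outside the leading graphon eigenspace --- to be the main obstacle, since it is exactly where the sparsity floor $q_n^{\text{pre}} \succ \frac{\log n}{\log\log n}/n$ binds and where the naive norm bound fails.
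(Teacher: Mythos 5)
Your overall architecture is sound and, after expanding $\phi=\sum_{l\le r}\mu_{n,l}^{\phi}\psi_l^*+\eta^{\phi}$ inside your second error family, your decomposition is term-by-term the same as the paper's $S_{11}+S_{12}+S_{21}+S_{22}$: your projection-duality identity $\langle Z^{\textsc{de}*},\phi\rangle=\langle Z^{\textsc{ssiv}},\eta^{\phi}\rangle$ is a clean way of seeing why the right-hand side is the oracle term, your coefficient-error term is exactly $S_{21}$ and is correctly dispatched via \eqref{eq:psi_hat_Zi}, and you correctly isolate $\frac{1}{n}\hat\gamma_k^R\langle\hat\psi_k^R-\psi_k^*,\eta^{\phi}\rangle$ as the binding term. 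The gap is in how you close that term. For the $w$-measurable part $\zeta$ of the orthogonal remainder, routing through $\hat\lambda_k\langle\hat\psi_k^R,\zeta\rangle=\langle\hat\psi_k^R,A^{\text{pre}}\zeta\rangle$ and then invoking Lemma \ref{lemma:25} gives $\|A^{\text{pre}}\zeta\|\le\|A^{\text{pre}}-q_n^{\text{pre}}\tilde G_n^{\text{pre}}\|_{\textup{op}}\|\zeta\|=O_{\mathbb{P}}\bigl(\sqrt{n q_n^{\text{pre}}}\,(\tfrac{\log n}{\log\log n})^{1/4}\cdot\sqrt{n}\bigr)$ (the $q_n^{\text{pre}}\tilde G_n^{\text{pre}}\zeta$ piece vanishes by orthogonality and the Assumption \ref{ass:low_rank}(b) tail is harmless), and after dividing by $\hat\lambda_k=\Omega_{\mathbb{P}}(nq_n^{\text{pre}})$ and multiplying by $\hat\gamma_k^R/n=O_{\mathbb{P}}(q_n^{\text{pre}})$ you land at $O_{\mathbb{P}}\bigl(\sqrt{q_n^{\text{pre}}}\,(\tfrac{\log n}{\log\log n})^{1/4}\bigr)$ --- which is \emph{not} $o_{\mathbb{P}}(\sqrt{q_n^{\text{pre}}})$. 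This is the same quarter-power of $\log n/\log\log n$ that defeats the blunt Cauchy--Schwarz bound you already rejected; substituting one norm bound for another does not remove it.

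To actually close this you need one more split. Either (i) do what the paper does for $S_{12}$: expand $\eta^{\phi}=\sum_{l>r}\alpha_l\psi_l^*$ with $\sum_l\alpha_l^2=O(n)$, apply the per-direction bound $(\psi_l^*)^\top(\hat\psi_k^R-\psi_k^*)=O_{\mathbb{P}}\bigl(\sqrt{\tfrac{\log n}{\log\log n}}/(nq_n^{\text{pre}})\bigr)$ from Lemma \ref{lemma:8}, and control the sum over $l$ by a second-moment computation rather than termwise absolute values; or (ii) stay on your $A^{\text{pre}}$ route but write $\hat\psi_k^R=\psi_k^*+(\hat\psi_k^R-\psi_k^*)$ inside $\langle\hat\psi_k^R,(A^{\text{pre}}-q_n^{\text{pre}}\tilde G_n^{\text{pre}})\zeta\rangle$: the leading piece $\zeta^\top(A^{\text{pre}}-q_n^{\text{pre}}\tilde G_n^{\text{pre}})\psi_k^*$ involves only fixed (given $w$) vectors and a direct variance calculation as in \eqref{eq:lem8_term2} gives $O_{\mathbb{P}}(\sqrt{nq_n^{\text{pre}}})$, while the remainder piece picks up the product $\|A^{\text{pre}}-q_n^{\text{pre}}\tilde G_n^{\text{pre}}\|_{\textup{op}}\|\hat\psi_k^R-\psi_k^*\|=O_{\mathbb{P}}\bigl(\sqrt{\tfrac{\log n}{\log\log n}}\bigr)$, and both, after the $\hat\lambda_k^{-1}$ and $\hat\gamma_k^R/n$ factors, come in at $O_{\mathbb{P}}\bigl(\sqrt{q_n^{\text{pre}}/n}\bigr)$ and $O_{\mathbb{P}}\bigl(\sqrt{\tfrac{\log n}{\log\log n}/n}\bigr)$ respectively, which are $o_{\mathbb{P}}(\sqrt{q_n^{\text{pre}}})$ precisely under $q_n^{\text{pre}}\succ\tfrac{\log n}{\log\log n}/n$. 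Your idiosyncratic-part argument (conditioning on $(w,A^{\text{pre}})$ and a conditional second moment) is fine as stated. One smaller technical point: the relation $\hat\psi_k^R=A^{\text{pre}}\hat\psi_k^R/\hat\lambda_k$ does not hold for the rotated eigenvectors since $\hat\Lambda_r$ and $\hat R$ need not commute; you need the sandwich $\hat\Psi_r\hat R=\hat\Psi_r\hat\Lambda_r\hat R\,(\hat R^\top\hat\Lambda_r^{-1}\hat R)$ used in the proof of Lemma \ref{lemma:8}.
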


\begin{proof}[Proof of Lemma \ref{lemma:Ziphii_PC}]
We rewrite it as 
\begin{align*}
& \frac{1}{n}\sum_{i=1}^n (Z_i^{\textsc{ssiv}} + \hat{\delta}_i) \phi_i
= \frac{1}{n}\sum_{i=1}^n \left( Z_i^{\textsc{ssiv}} + \sum_{k=1}^r \hat{\gamma}_k \hat{\psi}_{ki} \right) \phi_i \\
=& \frac{1}{n}\sum_{i=1}^n 
\left( Z_i^{\textsc{ssiv}} + \sum_{k=1}^r \hat{\gamma}_k \hat{\psi}_{ki} \right) 
\left( \sum_{l=1}^r \mu_{n,l}^\phi \psi_{l}^*(w_i) + \eta_i^\phi \right) \\
=& \frac{1}{n}\sum_{i=1}^n \left( Z_i^{\textsc{ssiv}} + \sum_{k=1}^r \hat{\gamma}_k^R \hat{\psi}_{ki}^R \right) \eta_i^\phi
+ \frac{1}{n} \sum_{l=1}^r \mu_{n,l}^\phi \sum_{i=1}^n 
\left( Z_i^{\textsc{ssiv}} + \sum_{k=1}^r \hat{\gamma}_k^R \hat{\psi}_{ki}^R \right) \left( \psi_{l}^*(w_i) - \hat{\psi}_{li}^R  \right)  \\
=& S_1 + S_2,
\end{align*}
where the last to second inequality holds by the construction of $\hat{\gamma}_k$. 
For the first part of $S_1$, $S_{11} = \frac{1}{n} \sum_{i=1}^n Z_i^{\textsc{ssiv}} \eta_i^\phi$, the expectation is zero, and the variance is
\begin{align*}
& \V(S_{11})
= \V\left( \frac{1}{n}\sum_{i=1}^n \sum_{j\ne i}A_{ij}^{\text{pre}} (T_j - \pi) \eta_i^\phi \right)  \\
=& \frac{1}{n^2} \sum_{(i,j),i\ne j} 
\V\left( A_{ij}^{\text{pre}} (T_j - \pi) \eta_i^\phi \right)
+ \frac{1}{n^2} \sum_{(i,j),i\ne j}
\textup{Cov}\left(A_{ij}^{\text{pre}} (T_j - \pi) \eta_i^\phi, A_{ji}^{\text{pre}} (T_i - \pi) \eta_j^\phi \right)  \\
+& \frac{1}{n^2} \sum_{\substack{(i,j,k) \\ \text{all distinct}}}
\textup{Cov}\left(A_{ij}^{\text{pre}} (T_j - \pi) \eta_i^\phi, A_{ik}^{\text{pre}} (T_k - p) \eta_i^\phi \right) 
+ \frac{1}{n^2} \sum_{\substack{(i,j,k) \\ \text{all distinct}}}
\textup{Cov}\left(A_{ij}^{\text{pre}} (T_j - \pi) \eta_i^\phi, A_{kj}^{\text{pre}} (T_j - \pi) \eta_k^\phi \right)  \\
+& \frac{1}{n^2} \sum_{\substack{(i,j,k,l) \\ \text{all distinct}}} 
\textup{Cov}\left(A_{ij}^{\text{pre}} (T_j - \pi) \eta_i^\phi, A_{kl}^{\text{pre}} (T_l - p) \eta_k^\phi \right) \\
=&  \frac{1}{n^2} \sum_{(i,j),i\ne j} 
\V\left( A_{ij}^{\text{pre}} (T_j - \pi) \eta_i^\phi \right)  
= O\left( q_n^\text{pre} \right),
\end{align*}
where by construction, we have
\begin{align*}
& \frac{1}{n^2} \sum_{\substack{(i,j,k) \\ \text{all distinct}}}
\textup{Cov}\left(A_{ij}^{\text{pre}} (T_j - \pi) \eta_i^\phi, A_{kj}^{\text{pre}} (T_j - \pi) \eta_k^\phi \right) \\
=& \pi (1-\pi) \frac{1}{n^2} \sum_{j=1}^n
(q_n^\text{pre})^2 \sum_{l_1=1}^n \sum_{l_2=1}^n \lambda_{l_1}^{\phi} \lambda_{l_2}^{\phi} 
E\left(  \psi_{l_1}^*(w_j) \psi_{l_2}^*(w_j) \sum_{i=1}^n \psi_{l_1}^*(w_i) \sum_{k=1}^n \eta_i^\phi \psi_{l_2}^*(w_k) \eta_k^\phi \right)
= 0.
\end{align*}
Therefore, $S_{11} = O_{\mathbb{P}}\left( \sqrt{q_n^\text{pre}} \right)$.

For $S_{12} = \frac{1}{n}\sum_{i=1}^n \left( \sum_{k=1}^r \hat{\gamma}_k^R \hat{\psi}_{ki}^R \right) \eta_i^\phi$, we span $\eta_i^\phi$ onto the subspace of $\{\psi_k \}_{k=1}^n$ as $\eta_i^\phi = \sum_{l=1}^n \alpha_l {\psi}_{l}^*(w_i)$ with $\sum_{l=1}^n \alpha_l^2 = n$. 
By the property of $\sum_{i=1}^n \eta_i^\phi {\psi}_{k}^*(w_i) = 0$ for $k\le r$, we have $\alpha_k = \mu^{\phi}_{n,k}$ for $k\le r$. Thus, $\eta_i^\phi = \sum_{l=r+1 }^n \alpha_l {\psi}_{l}^*(w_i)$. 
We bound the term in $L_2$ norm:
\begin{align*}
& E\left[ \left( \sum_{i=1}^n \eta_i^\phi (\hat{\psi}_{ki}^R - {\psi}_{k}^*(w_i)) \right)^2 \right] 
= E\left[ \left( \sum_{l=r+1 }^n \alpha_l {\psi}_{l}^* (\hat{\psi}_{k}^R - {\psi}_{k}^*) \right)^2 \right] \\
=& E\left[ \sum_{l=r+1 }^n \alpha_l^2 \left( {\psi}_{l}^* (\hat{\psi}_{k}^R - {\psi}_{k}^*) \right)^2 \right]
+ E\left[ \sum_{(l_1,l_2)>r } \alpha_{l_1} \alpha_{l_2} \left( {\psi}_{l_1}^* (\hat{\psi}_{k}^R - {\psi}_{k}^*) \right) \left( {\psi}_{l_2}^* (\hat{\psi}_{k}^R - {\psi}_{k}^*) \right) \right] \\
=& O_{\mathbb{P}}\left( n \left( \sqrt{ \frac{\log n}{\log \log n}} / (n q_n^{\text{pre}}) \right)^{2} \right)
\end{align*}
and thus 
\begin{align*}
\sum_{i=1}^n \eta_i^\phi (\hat{\psi}_{ki}^R - {\psi}_{k}^*(w_i)) 
= O_{\mathbb{P}}\left( \frac{1}{\sqrt{n} q_n^{\text{pre}}} \sqrt{ \frac{\log n}{\log \log n}}  \right).
\end{align*}
Therefore, 
\begin{align*}
S_{12}
=& \sum_{k=1}^r \frac{\hat{\gamma}_k}{n} \sum_{i=1}^n \eta_i^\phi \hat{\psi}_{ki} 
= \sum_{k=1}^r  \frac{\hat{\gamma}_k^R}{n} \sum_{i=1}^n \eta_i^\phi \hat{\psi}_{ki}^R 
= \sum_{k=1}^r  \frac{\hat{\gamma}_k^R}{n} \left[ \sum_{i=1}^n \eta_i^\phi (\hat{\psi}_{ki}^R - {\psi}_{k}^*(w_i)) \right] 
= O_{\mathbb{P}}\left( \sqrt{ \frac{\log n}{\log \log n} / n} \right)
\end{align*}
by Lemma \ref{lemma:8} and Lemma \ref{lemma:29}.
For $S_2$, we have
\begin{align*}
S_2 
= \frac{1}{n} \sum_{l=1}^r \mu_{n,l}^\phi \sum_{i=1}^n 
\left( \psi_{l}^*(w_i) - \hat{\psi}_{li}^R  \right)
\left( Z_i^{\textsc{ssiv}} + \sum_{k=1}^r \hat{\gamma}_k^R \hat{\psi}_{ki}^R \right) 
= S_{21} + S_{22}.
\end{align*}
For $S_{21}$, we have
\begin{align*}
S_{21} 
=& \frac{1}{n} \sum_{l=1}^r \mu_{n,l}^\phi \sum_{i=1}^n 
\left( \psi_{l}^*(w_i) - \hat{\psi}_{li}^R  \right)
Z_i^{\textsc{ssiv}} 
= \frac{1}{n} \sum_{l=1}^r \mu_{n,l}^\phi
\sum_{j=1}^n (T_j - \pi) 
\sum_{i=1}^n 
\left( \psi_{l}^*(w_i) - \hat{\psi}_{li}^R  \right)
 A_{ij}^{\text{pre}}  \\
=& O_{\mathbb{P}}\left( \sqrt{ \frac{\log n}{\log \log n} / n} \right).
\end{align*}
For $S_{22}$, we have
\begin{align*}
S_{22}
=& \frac{1}{n} \sum_{l=1}^r \mu_{n,l}^\phi \sum_{i=1}^n 
\left( \psi_{l}^*(w_i) - \hat{\psi}_{li}^R  \right)
\left( \sum_{k=1}^r \hat{\gamma}_k^R \hat{\psi}_{ki}^R \right)  = \frac{1}{n} \sum_{l=1}^r \sum_{k=1}^r 
\mu_{n,l}^\phi \hat{\gamma}_k^R
\sum_{i=1}^n 
\left( \psi_{l}^*(w_i) - \hat{\psi}_{li}^R  \right)
\hat{\psi}_{ki}^R.
\end{align*}
For each fixed pair of $(k,l)$, we have 
\begin{align*}
& \left| \sum_{i=1}^n 
\left( \psi_{l}^*(w_i) - \hat{\psi}_{li}^R  \right)
\hat{\psi}_{ki}^R \right|
= \left| \left( \psi_{l}^* - \hat{\psi}_{l}^R  \right)^\top 
\hat{\psi}_{k}^R \right| 
= \left| \left( \psi_{l}^* - \hat{\psi}_{l}^R  \right)^\top 
\left( \hat{\psi}_{k}^R - {\psi}_{k}^* \right) 
+ \left( \psi_{l}^* - \hat{\psi}_{l}^R  \right)^\top 
{\psi}_{k}^* \right| \\
\le & \left\| \psi_{l}^* - \hat{\psi}_{l}^R \right\|
\left\| \psi_{k}^* - \hat{\psi}_{k}^R \right\|
+ \left| \left( \psi_{l}^* - \hat{\psi}_{l}^R  \right)^\top 
{\psi}_{k}^* \right|
= O_{\mathbb{P}}\left( \sqrt{ \frac{\log n}{\log \log n}} / (n q_n^{\text{pre}}) \right)
\end{align*}
by \eqref{eq:68} and Lemma \ref{lemma:8}.
Therefore,
\[
S_{22} = 
O_{\mathbb{P}}\left( \frac{\sqrt{n}}{n} n q_n^{\text{pre}} \right)
O_{\mathbb{P}}\left(  \sqrt{ \frac{\log n}{\log \log n}} / (n q_n^{\text{pre}}) \right)
= O_{\mathbb{P}}\left( \sqrt{ \frac{\log n}{\log \log n} / n} \right).
\]
Thus, $S_{11}$ is the leading term 
$O_{\mathbb{P}}(\sqrt{q_n^{\text{pre}}})$, 
and 
\begin{align*}
\frac{1}{n}\sum_{i=1}^n (Z_i^{\textsc{ssiv}} + \hat{\delta}_i) \phi_i
= \frac{1}{n}\sum_{i=1}^n Z_i^{\textsc{ssiv}} \eta_i^\phi + o_{\mathbb{P}}\left( \sqrt{q_n^{\text{pre}}} \right). 
\end{align*}
\end{proof}

Define $\mu_{n,k}^{r} 
= \sum_{i=1}^n (r_{0,i} + r_{1,i} - \mu_{r_1,i}) \psi_{k}^*(w_i)$ and $\eta_i^{r} = (r_{0,i} + r_{1,i} - \mu_{r_1,i}) - \sum_{k=1}^r \mu_{n,k}^{r} \psi_{k}^*(w_i)$.

\begin{lemma}\label{lemma:ZiMi_PC}
Suppose $q_n^\text{pre} \succ {\frac{\log n}{\log \log n}} / n$.
Under Assumptions \ref{asu:network} and \ref{ass:low_rank},
then 
\begin{align}
\frac{1}{n}\sum_{i=1}^n (Z_i^{\textsc{ssiv}} + \hat{\delta}_i) M_i
=& \frac{1}{n}\sum_{i=1}^n Z_i^{\textsc{ssiv}} \eta_i^{r} 
+ O_{\mathbb{P}}\left( \sqrt{q_n^{\text{pre}}} \right). 
\label{eq:ZiMi_PC}
\end{align}   
\end{lemma}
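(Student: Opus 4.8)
The plan is to split $M_i$ into its own-unit component and its remainder, and treat each piece with a version of the argument already carried out for Lemma \ref{lemma:Ziphii_PC}. First I would write $M_i = (\xi_i + \mu_{r_1,i}) + (r_{0,i} + r_{1,i} - \mu_{r_1,i})$, so that $\frac{1}{n}\sum_{i=1}^n (Z_i^{\textsc{ssiv}} + \hat{\delta}_i)M_i$ breaks into a term driven by $\phi_i := \xi_i + \mu_{r_1,i}$ and a term driven by the remainder $\rho_i := r_{0,i} + r_{1,i} - \mu_{r_1,i}$. The piece $\phi_i$ is a function of $(T_i, w_i)$ alone, hence i.i.d.\ across $i$ with bounded variance (the contribution of $\mu_{r_1,i}$ is $O_{\mathbb{P}}(1/(nq_n^{\text{post}}))$ by \eqref{eq:mu_r1i} and only shrinks the variance, so the hypotheses of Lemma \ref{lemma:Ziphii_PC} hold, a constant $\xi_i$ in Case (b) being a harmless degenerate case). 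Thus Lemma \ref{lemma:Ziphii_PC} applies and gives $\frac{1}{n}\sum_{i=1}^n(Z_i^{\textsc{ssiv}} + \hat{\delta}_i)\phi_i = \frac{1}{n}\sum_{i=1}^n Z_i^{\textsc{ssiv}}\eta_i^{\phi} + o_{\mathbb{P}}(\sqrt{q_n^{\text{pre}}})$; since the leading ($S_{11}$-type) term there is $O_{\mathbb{P}}(\sqrt{q_n^{\text{pre}}})$, this whole contribution is absorbed into the stated error.

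Second, for the remainder I would re-run the decomposition from the proof of Lemma \ref{lemma:Ziphii_PC}, now with $\rho_i$ in place of the i.i.d.\ variable. Writing $\rho_i = \sum_{l=1}^r \mu_{n,l}^{r}\psi_l^*(w_i) + \eta_i^{r}$ and using the orthogonality identity $\sum_{i=1}^n (Z_i^{\textsc{ssiv}} + \hat{\delta}_i)\hat{\psi}_{li}^R = 0$ (which holds by construction of $\hat{\gamma}_k$, with the rotation $\hat{\psi}^R$ justified by Lemma \ref{lemma:29}), I obtain
\[
\frac{1}{n}\sum_{i=1}^n (Z_i^{\textsc{ssiv}} + \hat{\delta}_i)\rho_i = \frac{1}{n}\sum_{i=1}^n (Z_i^{\textsc{ssiv}} + \hat{\delta}_i)\eta_i^{r} + \frac{1}{n}\sum_{l=1}^r \mu_{n,l}^{r}\sum_{i=1}^n (Z_i^{\textsc{ssiv}} + \hat{\delta}_i)\left(\psi_l^*(w_i) - \hat{\psi}_{li}^R\right).
\]
I would then split the first sum as $\frac{1}{n}\sum_{i=1}^n Z_i^{\textsc{ssiv}}\eta_i^{r} + \frac{1}{n}\sum_{i=1}^n \hat{\delta}_i\eta_i^{r}$, \emph{retaining} the first term --- this is exactly the signal kept in the statement, which does not vanish because the cross-unit correlation between $Z_i^{\textsc{ssiv}}$ and $\rho_i$ (carried through shared treatments $T_j$ and the networks) survives projecting out functions of $w_i$ --- and bounding $\frac{1}{n}\sum_{i=1}^n \hat{\delta}_i\eta_i^{r} = \frac{1}{n}\sum_{k=1}^r \hat{\gamma}_k^R\sum_{i=1}^n \hat{\psi}_{ki}^R\eta_i^{r}$ via $\sum_{i=1}^n \eta_i^{r}\psi_k^*(w_i) = 0$, the bound $\hat{\gamma}_k^R = O_{\mathbb{P}}(nq_n^{\text{pre}})$ of Lemma \ref{lemma:29}, and the eigenvector approximation of Lemma \ref{lemma:8} and \eqref{eq:68}. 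The residual term involving $\psi_l^*(w_i) - \hat{\psi}_{li}^R$ would be controlled exactly as $S_{21}$ and $S_{22}$ were, using Lemmas \ref{lemma:28}--\ref{lemma:29}, and is smaller here since $\mu_{n,l}^{r}$ is of strictly smaller order than the corresponding $\mu_{n,l}^{\phi}$.

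The main obstacle is that, unlike in Lemma \ref{lemma:Ziphii_PC}, the summands built from $\rho_i$ are not independent across $i$: $r_{0,i}$ and $r_{1,i}$ depend on the treatments and post-intervention links of other units, so $\rho_i$ and $\rho_j$ are correlated whenever $i$ and $j$ are connected or share a common neighbor. Consequently the second-moment estimates that deliver the orders of $\mu_{n,k}^{r} = \sum_{i=1}^n \rho_i\psi_k^*(w_i)$ and of $\sum_{i=1}^n \eta_i^{r}(\hat{\psi}_{ki}^R - \psi_k^*(w_i))$ cannot be read off from independence. I would establish them by exploiting $E(\rho_i \mid T_i, w_i) = 0$ together with an explicit enumeration of the dependency patterns --- isolated indices, connected pairs, and index triples sharing a neighbor --- and showing that the cross terms contribute only lower-order powers of $q_n^{\text{post}}$ and $q_n^{\text{pre}}$. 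Confirming that every such error contribution is dominated by $\sqrt{q_n^{\text{pre}}}$ under the sparsity condition $q_n^{\text{pre}} \succ \frac{\log n}{\log\log n}/n$ is the delicate bookkeeping, but it parallels the variance computation already performed for $S_{11}$ in Lemma \ref{lemma:Ziphii_PC}.
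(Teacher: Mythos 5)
Your proposal is correct and follows essentially the same route as the paper's proof: the same split of $M_i$ into $\xi_i+\mu_{r_1,i}$ (handled by Lemma \ref{lemma:Ziphii_PC}) and $r_{0,i}+r_{1,i}-\mu_{r_1,i}$, the same projection decomposition with the signal retained in $\frac{1}{n}\sum_i Z_i^{\textsc{ssiv}}\eta_i^{r}$, and the same use of Lemmas \ref{lemma:8}, \ref{lemma:28}, and \ref{lemma:29} together with $\mu_{n,k}^{r}=O_{\mathbb{P}}(1/\sqrt{nq_n^{\text{post}}})$ to control the remaining pieces. You also correctly flag the one genuine technical issue --- the cross-unit dependence of $r_{0,i}+r_{1,i}-\mu_{r_1,i}$ --- which the paper indeed resolves by the same explicit mean/variance enumeration over dependency patterns that you describe.
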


\begin{proof}[Proof of Lemma \ref{lemma:ZiMi_PC}]

By expansion, 
\begin{align*}
\frac{1}{n}\sum_{i=1}^n (Z_i^{\textsc{ssiv}} + \hat{\delta}_i)M_i
= \frac{1}{n}\sum_{i=1}^n (Z_i^{\textsc{ssiv}} + \hat{\delta}_i)(\xi_i + \mu_{r_1,i})
+ \frac{1}{n}\sum_{i=1}^n (Z_i^{\textsc{ssiv}} + \hat{\delta}_i) (r_{0,i} + r_{1,i} - \mu_{r_1,i}).
\end{align*}
\underline{First}, 
by Lemma \ref{lemma:Ziphii_PC}, we can show 
\begin{align*}
\frac{1}{n}\sum_{i=1}^n (Z_i^{\textsc{ssiv}} + \hat{\delta}_i)(\xi_i + \mu_{r_1,i})
= \frac{1}{n}\sum_{i=1}^n Z_i^{\textsc{ssiv}} \eta_i^{(\xi_i + \mu_{r_1,i})} + O_{\mathbb{P}}\left( q_n^{\text{pre}} \right)
= O_{\mathbb{P}}\left( \sqrt{q_n^{\text{pre}}} \right).
\end{align*}
\underline{Second}, we have
\begin{align*}
& \frac{1}{n}\sum_{i=1}^n (Z_i^{\textsc{ssiv}} + \hat{\delta}_i) (r_{0,i} + r_{1,i} - \mu_{r_1,i})
= \frac{1}{n}\sum_{i=1}^n \left( Z_i^{\textsc{ssiv}} + \sum_{k=1}^r \hat{\gamma}_k \hat{\psi}_{ki} \right) (r_{0,i} + r_{1,i} - \mu_{r_1,i}) \\
=& \frac{1}{n}\sum_{i=1}^n 
\left( Z_i^{\textsc{ssiv}} + \sum_{k=1}^r \hat{\gamma}_k \hat{\psi}_{ki} \right) 
\left( \sum_{l=1}^r \mu_{n,l}^r \psi_{l}^*(w_i) + \eta_i^r \right) \\
=& \frac{1}{n}\sum_{i=1}^n \left( Z_i^{\textsc{ssiv}} + \sum_{k=1}^r \hat{\gamma}_k \hat{\psi}_{ki} \right) \eta_i^r
+ \frac{1}{n} \sum_{l=1}^r \mu_{n,l}^r \sum_{i=1}^n 
\left( Z_i^{\textsc{ssiv}} + \sum_{k=1}^r \hat{\gamma}_k \hat{\psi}_{ki} \right) \left( \psi_{l}^*(w_i) - \hat{\psi}_{li}^R  \right)  \\
=& S_1 + S_2.
\end{align*}
For $S_1$, we have $S_1 = S_{11} + S_{12}$.
For $S_{11}$, we have 
\begin{align*}
E(S_{11})
= E\left( \frac{1}{n}\sum_{i=1}^n \sum_{j\ne i}A_{ij}^{\text{pre}} (T_j - \pi) \eta_i^r \right) 
\asymp \frac{\min\{q_n^{\text{pre}}, q_n^\text{post}\}}{q_n^\text{post}}
\end{align*}
and 
\begin{align*}
& \V(S_{11})
= \V\left( \frac{1}{n}\sum_{i=1}^n \sum_{j\ne i}A_{ij}^{\text{pre}} (T_j - \pi) \eta_i^r \right)  \\
=& \frac{1}{n^2} \sum_{(i,j),i\ne j} 
\V\left( A_{ij}^{\text{pre}} (T_j - \pi) \eta_i^r \right)
+ \frac{1}{n^2} \sum_{(i,j),i\ne j}
\textup{Cov}\left(A_{ij}^{\text{pre}} (T_j - \pi) \eta_i^r, A_{ji}^{\text{pre}} (T_i - \pi) \eta_j^r \right)  \\
+& \frac{1}{n^2} \sum_{\substack{(i,j,k) \\ \text{all distinct}}}
\textup{Cov}\left(A_{ij}^{\text{pre}} (T_j - \pi) \eta_i^r, A_{ik}^{\text{pre}} (T_k - p) \eta_i^r \right) 
+ \frac{1}{n^2} \sum_{\substack{(i,j,k) \\ \text{all distinct}}}
\textup{Cov}\left(A_{ij}^{\text{pre}} (T_j - \pi) \eta_i^r, A_{kj}^{\text{pre}} (T_j - \pi) \eta_k^r \right)  \\
+& \frac{1}{n^2} \sum_{\substack{(i,j,k,l) \\ \text{all distinct}}} 
\textup{Cov}\left(A_{ij}^{\text{pre}} (T_j - \pi) \eta_i^r, A_{kl}^{\text{pre}} (T_l - p) \eta_k^r \right) \\
=&  \frac{1}{n^2} \sum_{(i,j),i\ne j} 
\V\left( A_{ij}^{\text{pre}} (T_j - \pi) \eta_i^r \right)  
= O\left( \frac{q_n^{\text{pre}}}{n q_n^\text{post}} \right).
\end{align*}
Therefore, 
\[
S_{11} = O_{\mathbb{P}}\left( \frac{\min\{q_n^{\text{pre}}, q_n^\text{post}\}}{q_n^\text{post}} \right) 
+ O_{\mathbb{P}}\left( \frac{ \sqrt{q_n^{\text{pre}}} }{ \sqrt{n q_n^{\text{post}}} }  \right).
\]
For $S_{12}$, we can show that
\begin{align*}
\frac{1}{n}\sum_{i=1}^n \left( \sum_{k=1}^r \hat{\gamma}_k \hat{\psi}_{ki} \right) \eta_i^r
=& \sum_{k=1}^r \frac{\hat{\gamma}_k}{n} \sum_{i=1}^n \eta_i^r \hat{\psi}_{ki} 
= \sum_{k=1}^r \frac{\hat{\gamma}_k}{n} \left( \sum_{i=1}^n \eta_i^r (\hat{\psi}_{ki} - \psi_{k}^*(w_i))  \right) \\
=& O_{\mathbb{P}}\left( \frac{1}{n\sqrt{q_n^\text{post}}} \sqrt{ \frac{\log n}{\log \log n} / n} \right)
\end{align*}
by Lemma \ref{lemma:8} and Lemma \ref{lemma:29}.
Therefore,
\begin{align*}
S_1
= \frac{1}{n}\sum_{i=1}^n \left( Z_i^{\textsc{ssiv}} + \sum_{k=1}^r \hat{\gamma}_k \hat{\psi}_{ki} \right) \eta_i^r
= O_{\mathbb{P}}\left( \frac{\min\{q_n^\text{pre}, q_n^\text{post}\}}{q_n^\text{post}} \right)
+ O_{\mathbb{P}}\left( \frac{ \sqrt{q_n^{\text{pre}}} }{ \sqrt{n q_n^{\text{post}}} }  \right).
\end{align*}
For $S_2$, we apply the same proof in Lemma \ref{lemma:Ziphii_PC}. By definition and Lemma \ref{lemma:airi_ri^2}, 
\begin{equation*}
\mu_{n,k}^r 
= \sum_{i=1}^n (r_{0,i} + r_{1,i} - \mu_{r_1,i}) \psi_{k}^*(w_i)
= O_{\mathbb{P}}\left( \frac{1}{\sqrt{n q_n^\text{post}}} \right),
\end{equation*}
and this implies that $S_2 = O_{\mathbb{P}}\left( \frac{1}{n \sqrt{ q_n^\text{post}}}  \sqrt{ \frac{\log n}{\log \log n} / n} \right) $.
To conclude,
\begin{align*}
\frac{1}{n}\sum_{i=1}^n (Z_i^{\textsc{ssiv}} + \hat{\delta}_i) (r_{0,i} + r_{1,i} - \mu_{r_1,i})
= E\left( \frac{1}{n}\sum_{i=1}^n Z_i^{\textsc{ssiv}} \eta_i^r \right)
+ O_{\mathbb{P}}\left( \frac{ \sqrt{q_n^{\text{pre}}} }{ \sqrt{n q_n^{\text{post}}} }  \right).
\end{align*}
\end{proof}

\subsection{Consistency}

Define 
\begin{align*}
D_n^{\textsc{de}} = 
\begin{pmatrix}
1 & 0 & 0 \\
0 & 1 & 0 \\
0 & 0 & \frac{1}{\sqrt{nq_n^\text{pre}}} 
\end{pmatrix}.
\end{align*}
We first show the probability limit of $\left( D_n^{\textsc{de}} (\tilde{Z}^{\textsc{de}})^\top X \right)^{-1}$ whose closed form is 
\begin{align}
& \left( D_n^{\textsc{de}} (\tilde{Z}^{\textsc{de}})^\top X \right)^{-1} 
= \frac{ 1 }{ \det( D_n^{\textsc{de}} (\tilde{Z}^{\textsc{de}})^\top X) } \begin{pmatrix}
c_{11} & c_{12} & c_{13} \\
c_{21} & c_{22} & c_{23} \\
c_{31} & c_{32} & c_{33}
\end{pmatrix}
\label{eq:PC_ZX-1}
\end{align}  
where 
\begin{align*}
c_{11}
~=~& \left( \frac{1}{n}\sum_{i=1}^n T_i \right) \left( \frac{1}{n \sqrt{n q_n^\text{pre}} }\sum_{i=1}^n M_i (Z_i^{\textsc{ssiv}} + \hat{\delta}_i) \right) - \left( \frac{1}{n }\sum_{i=1}^n T_i M_i \right) \left( \frac{1}{n \sqrt{n q_n^\text{pre}}  }\sum_{i=1}^n T_i (Z_i^{\textsc{ssiv}} + \hat{\delta}_i) \right) \\
c_{12}
~=~& - \left( \frac{1}{n}\sum_{i=1}^n T_i \right) \left( \frac{1}{n \sqrt{n q_n^\text{pre}}  }\sum_{i=1}^n M_i (Z_i^{\textsc{ssiv}} + \hat{\delta}_i) \right) + \left( \frac{1}{n \sqrt{n q_n^\text{pre}} }\sum_{i=1}^n T_i (Z_i^{\textsc{ssiv}} + \hat{\delta}_i) \right) \left( \frac{1}{n}\sum_{i=1}^n M_i \right)  \\
c_{13}
~=~& a_{13} \\
c_{21}
~=~& - \left( \frac{1}{n}\sum_{i=1}^n T_i \right) \left( \frac{1}{n \sqrt{n q_n^\text{pre}}  }\sum_{i=1}^n M_i (Z_i^{\textsc{ssiv}} + \hat{\delta}_i) \right) 
+ \left( \frac{1}{n \sqrt{n q_n^\text{pre}} }\sum_{i=1}^n (Z_i^{\textsc{ssiv}} + \hat{\delta}_i) \right) 
\left( \frac{1}{n}\sum_{i=1}^n T_i M_i \right) \\
c_{22}
~=~& \left( \frac{1}{n \sqrt{n q_n^\text{pre}} }\sum_{i=1}^n M_i (Z_i^{\textsc{ssiv}} + \hat{\delta}_i) \right) - \left( \frac{1}{n}\sum_{i=1}^n M_i \right) \left( \frac{1}{n \sqrt{n q_n^\text{pre}} }\sum_{i=1}^n (Z_i^{\textsc{ssiv}} + \hat{\delta}_i) \right) \\
c_{23}
~=~& a_{23} \\
c_{31}
~=~& 
- \left( \frac{1}{n}\sum_{i=1}^n T_i \right) \left( \frac{1}{n \sqrt{n q_n^\text{pre}} }\sum_{i=1}^n (1-T_i) (Z_i^{\textsc{ssiv}} + \hat{\delta}_i) \right) \\
c_{32}
~=~& - \left( \frac{1}{n \sqrt{n q_n^\text{pre}} }\sum_{i=1}^n T_i (Z_i^{\textsc{ssiv}} + \hat{\delta}_i) \right) + \left( \frac{1}{n \sqrt{n q_n^\text{pre}} }\sum_{i=1}^n (Z_i^{\textsc{ssiv}} + \hat{\delta}_i) \right) \left( \frac{1}{n}\sum_{i=1}^n T_i \right) \\
c_{33} 
~=~& a_{33}
\end{align*}
and
\begin{align*}
\det( D_n^{\textsc{de}} (\tilde{Z}^{\textsc{de}})^\top X) = c_{33} c_{22} - c_{32} c_{23}.
\end{align*}
Define $c_{22}^* = \frac{E\left[ Z_i^{\textsc{ssiv}}(r_{0,i} + r_{1,i} ) \right]}{\sqrt{n q_n^{\text{pre}}}} $ 
where $c_{22}^* \asymp \frac{\min\{q_n^{\text{pre}}, q_n^{\text{post}}\}}{q_n^{\text{post}} \sqrt{n q_n^{\text{pre}}}}$.

\begin{theorem}\label{thm:ZX-1_PC_limit}
Suppose $q_n^\text{pre} \succ \frac{\log(n)}{\log(\log(n))}/n$.
Under Assumptions \ref{asu:network} and \ref{ass:low_rank}, 
then
\paragraph{Case (a): }
\begin{align*}
 \left( D_n^{\textsc{de}} (\tilde{Z}^{\textsc{de}})^\top X \right)^{-1} 
=& \frac{ \begin{pmatrix}
\pi c_{22}^* & -\pi c_{22}^* & a_{13}^* \\
- \pi c_{22}^*  & c_{22}^* & 
- \textup{Cov}(T_i, \xi_i + \mu_{r_1,i}) \\
0 & 0 & \pi (1-\pi)
\end{pmatrix} + O_{\mathbb{P}}\left( \frac{1}{\sqrt{n}} \right) }{ \pi (1-\pi) c_{22}^* + O_{\mathbb{P}}\left( \frac{1}{\sqrt{n}} \right) }.
\end{align*}

\paragraph{Case (b):}

\begin{align*}
& \left( D_n^{\textsc{de}} (\tilde{Z}^{\textsc{de}})^\top X \right)^{-1}  
= \frac{ \begin{pmatrix}
\pi c_{22}^* + O_{\mathbb{P}}\left( \frac{1}{\sqrt{n}} \right) & - \pi c_{22}^* + O_{\mathbb{P}}\left( \frac{1}{\sqrt{n}} \right) & a_{13}^* + O_{\mathbb{P}}\left( \frac{1}{\sqrt{n}} \right) \\
- \pi c_{22}^* + O_{\mathbb{P}}\left( \frac{1}{\sqrt{n}} \right) & c_{22}^* + O_{\mathbb{P}}\left( \frac{ 1 }{ n \sqrt{ q_n^{\text{post}}} }  \right) & 
O_{\mathbb{P}}\left( \frac{1}{n\sqrt{q_n^\text{post}}} \right) \\
O_{\mathbb{P}}\left( \frac{1}{\sqrt{n}} \right) & O_{\mathbb{P}}\left( \frac{1}{\sqrt{n}} \right) & \pi (1-\pi) + O_{\mathbb{P}}\left( \frac{1}{\sqrt{n}} \right)
\end{pmatrix}  }{ \pi (1-\pi) c_{22}^* + O_{\mathbb{P}}\left( \frac{ 1 }{ n \sqrt{ q_n^{\text{post}}} }  \right) }. 
\end{align*}

\end{theorem}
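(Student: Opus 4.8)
The plan is to follow the template of the proof of Theorem \ref{thm:ZX-1}, computing the probability limits of the individual entries $c_{ij}$ in the closed form \eqref{eq:PC_ZX-1} of $\left( D_n^{\textsc{de}} (\tilde{Z}^{\textsc{de}})^\top X \right)^{-1}$ one at a time, and then assembling them together with the limit of the determinant $\det(D_n^{\textsc{de}}(\tilde{Z}^{\textsc{de}})^\top X) = c_{33}c_{22} - c_{32}c_{23}$. The denoising machinery enters only through Lemmas \ref{lemma:Ziphii_PC} and \ref{lemma:ZiMi_PC}, which play the role that Lemmas \ref{lemma:Ziphii}, \ref{lemma:Zixx}, and \ref{lem:ZM} played for the plain SSIV; everything else is the same bookkeeping carried out with the new row-scaling $1/\sqrt{nq_n^{\text{pre}}}$ in $D_n^{\textsc{de}}$.

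First I would establish the three basic moment quantities that recur across the $c_{ij}$. Taking $\phi_i = 1$ and $\phi_i = T_i$ in Lemma \ref{lemma:Ziphii_PC} gives $\frac{1}{n}\sum_i (Z_i^{\textsc{ssiv}}+\hat{\delta}_i) = O_{\mathbb{P}}(\sqrt{q_n^{\text{pre}}})$ and $\frac{1}{n}\sum_i T_i(Z_i^{\textsc{ssiv}}+\hat{\delta}_i) = O_{\mathbb{P}}(\sqrt{q_n^{\text{pre}}})$, so after dividing by $\sqrt{nq_n^{\text{pre}}}$ both $\frac{1}{n\sqrt{nq_n^{\text{pre}}}}\sum_i (Z_i^{\textsc{ssiv}}+\hat{\delta}_i)$ and its $T_i$-weighted analogue are $O_{\mathbb{P}}(n^{-1/2})$. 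The relevance term $\frac{1}{n\sqrt{nq_n^{\text{pre}}}}\sum_i M_i(Z_i^{\textsc{ssiv}}+\hat{\delta}_i)$ is handled by Lemma \ref{lemma:ZiMi_PC}, which produces the leading signal $c_{22}^* \asymp \frac{\min\{q_n^{\text{pre}},q_n^{\text{post}}\}}{q_n^{\text{post}}\sqrt{nq_n^{\text{pre}}}}$ plus a remainder. Combining these with the averages $\frac{1}{n}\sum_i M_i$, $\frac{1}{n}\sum_i T_iM_i$, $\frac{1}{n}\sum_i T_i$ from Lemma \ref{lemma:consistencyM} and the usual i.i.d. CLT feeds directly into $c_{11}, c_{12}, c_{21}, c_{22}, c_{31}, c_{32}$. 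The entries $c_{13}=a_{13}$, $c_{23}=a_{23}$, $c_{33}=a_{33}$ are inherited verbatim from Theorems \ref{thm:XX-1} and \ref{thm:ZX-1}, so their limits $a_{13}^*$, $-\textup{Cov}(T_i,\xi_i+\mu_{r_1,i})$ (degenerate order in Case (b)), and $\pi(1-\pi)$ carry over without new work.

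Then I would split into the two cases as in Theorem \ref{thm:ZX-1}. In Case (a), where $\xi_i$ is non-degenerate, $c_{23}\to -\textup{Cov}(T_i,\xi_i+\mu_{r_1,i})$ and $c_{32}=O_{\mathbb{P}}(n^{-1/2})$, so $\det = \pi(1-\pi)c_{22}^* + O_{\mathbb{P}}(n^{-1/2})$ with the cross term $c_{32}c_{23}$ negligible. In Case (b), degeneracy forces $c_{23}=a_{23}=O_{\mathbb{P}}(1/(n\sqrt{q_n^{\text{post}}}))$ and sharpens the remainder of $c_{22}$ to $O_{\mathbb{P}}(1/(n\sqrt{q_n^{\text{post}}}))$, giving $\det = \pi(1-\pi)c_{22}^* + O_{\mathbb{P}}(1/(n\sqrt{q_n^{\text{post}}}))$. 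Dividing each $c_{ij}$ by the determinant yields the two displayed matrices. The main obstacle will be the error-order bookkeeping in Case (b): since $\xi_i$ is degenerate the signal $c_{22}^*$ is small and the determinant sits at a delicate degenerate order, so one must verify that every remainder — in particular the product $c_{32}c_{23}$ and the remainder inside $c_{22}$ coming from Lemma \ref{lemma:ZiMi_PC} — is genuinely of smaller order than $\pi(1-\pi)c_{22}^*$, tracking carefully the interplay between $q_n^{\text{pre}}$ and $q_n^{\text{post}}$ and confirming that the condition $q_n^{\text{pre}}\succ \frac{\log n}{\log\log n}/n$ keeps the eigenvector-approximation errors embedded in Lemmas \ref{lemma:Ziphii_PC}--\ref{lemma:ZiMi_PC} lower order; the relevance ordering itself is only invoked later, in Theorem \ref{thm:consistency_ratio_PC_IV}, to convert these limits into consistency.
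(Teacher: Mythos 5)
Your proposal is correct and follows essentially the same route as the paper, whose own proof of this theorem is simply the remark that it is ``analogous to Theorem \ref{thm:ZX-1} by applying Lemma \ref{lemma:Ziphii_PC}'' — i.e., exactly the entrywise bookkeeping of the $c_{ij}$ and the determinant that you describe, with Lemmas \ref{lemma:Ziphii_PC} and \ref{lemma:ZiMi_PC} substituting for the SSIV lemmas under the new scaling $1/\sqrt{nq_n^{\text{pre}}}$. Your order accounting for $c_{22}$, $c_{23}$, $c_{32}$ and the cross term $c_{32}c_{23}$ in both cases matches what the stated limits require.
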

\begin{proof}[Proof of Theorem \ref{thm:ZX-1_PC_limit}]
The proof is analogous to Theorem \ref{thm:ZX-1} by applying Lemma \ref{lemma:Ziphii_PC}.

\end{proof}

\begin{proof}[Proof of Theorem \ref{thm:consistency_ratio_PC_IV}]
We show the consistency by Theorem \ref{thm:ZX-1_PC_limit}.
By Lemma \ref{lemma:Ziphii_PC}, we can show that 
\begin{align*}
\frac{1}{n\sqrt{n q_n^\text{pre}}}\sum_{i=1}^n (Z_i^{\textsc{ssiv}} + \hat{\delta}_i)
=& \frac{1}{n\sqrt{n q_n^\text{pre}}}\sum_{i=1}^n Z_i^{\textsc{ssiv}} \eta_i^1 
+ O_{\mathbb{P}}\left( \frac{\sqrt{q_n^\text{pre}}}{\sqrt{n}} \right)
= O_{\mathbb{P}}\left( \frac{1}{\sqrt{n}} \right)
\end{align*}
and thus 
\begin{align*}
\frac{1}{n \sqrt{n q_n^\text{pre}}}\sum_{i=1}^n \left((Z_i^{\textsc{ssiv}} + \hat{\delta}_i) - \frac{1}{n}\sum_{i=1}^n (Z_i^{\textsc{ssiv}} + \hat{\delta}_i) \right) u_i
= O_{\mathbb{P}}\left( \frac{1}{\sqrt{n}} \right).
\end{align*}
\paragraph{Case (a):}   
By combining the results from Theorem \ref{thm:ZX-1_PC_limit}, we have
\begin{align*}
\hat{\beta}^{\textsc{de}}_1 - {\beta}_1
~=~ & \frac{c_{22} \frac{1}{n}\sum_{i=1}^n (T_i - \bar{T}) u_i + c_{23} \frac{1}{n \sqrt{n q_n^\text{pre}}}\sum_{i=1}^n \left((Z_i^{\textsc{ssiv}} + \hat{\delta}_i) - \frac{1}{n}\sum_{i=1}^n (Z_i^{\textsc{ssiv}} + \hat{\delta}_i) \right) u_i }{ c_{33} c_{22} - c_{32} c_{23} }  \\
~=~& \frac{ \left( c_{22}^* 
+ O_{\mathbb{P}}\left( \frac{1}{\sqrt{n}} \right) \right) O_{\mathbb{P}}\left( \frac{1}{\sqrt{n}} \right) 
+ \left( - \textup{Cov}(T_i, \xi_i + \mu_{r_1,i}) 
+ O_{\mathbb{P}}\left( \frac{1}{ \sqrt{n} } \right) \right) O_{\mathbb{P}}\left( \frac{1}{\sqrt{n}} \right)}{ \pi (1-\pi) c_{22}^* + O_{\mathbb{P}}\left( \frac{1}{\sqrt{n}} \right) } \\
~=~& \frac{  O_{\mathbb{P}}\left( \frac{1}{\sqrt{n}} \right) }{ \pi (1-\pi) c_{22}^* + O_{\mathbb{P}}\left( \frac{1}{\sqrt{n}} \right) }, 
\end{align*}
and
\begin{align*}
\hat{\beta}^{\textsc{de}}_2 - {\beta}_2
~=~& \frac{ c_{33} \frac{1}{n \sqrt{n q_n^\text{pre}}}\sum_{i=1}^n \left((Z_i^{\textsc{ssiv}} + \hat{\delta}_i) - \frac{1}{n}\sum_{i=1}^n (Z_i^{\textsc{ssiv}} + \hat{\delta}_i) \right) u_i + c_{32} \frac{1}{n}\sum_{i=1}^n (T_i - \bar{T}) u_i }{ c_{33} c_{22} - c_{32} c_{23} } \\
~=~& \frac{ \left( \pi(1-\pi) + O_{\mathbb{P}}\left( \frac{1}{\sqrt{n}} \right) \right) O_{\mathbb{P}}\left( \frac{1}{\sqrt{n}} \right) + O_{\mathbb{P}}\left( \frac{1}{\sqrt{n}} \right) O_{\mathbb{P}}\left( \frac{1}{\sqrt{n}} \right) }{ \pi (1-\pi) c_{22}^* + O_{\mathbb{P}}\left( \frac{1}{\sqrt{n}} \right)  } \\
~=~& \frac{ O_{\mathbb{P}}\left( \frac{1}{\sqrt{n}} \right) }{ \pi (1-\pi) c_{22}^* + O_{\mathbb{P}}\left( \frac{1}{\sqrt{n}} \right)  }.
\end{align*}
Therefore, $\hat{\beta}^{\textsc{de}}_1$ and $\hat{\beta}^{\textsc{de}}_2$ are consistent when $ \max\{q_n^\text{pre}, q_n^\text{post}\} \prec \sqrt{q_n^\text{pre}} $
with 
\begin{align*}
\hat{\beta}^{\textsc{de}}_1 - \beta_1 
= O_{\mathbb{P}}\left( \frac{ \max\{ q_n^\text{pre}, q_n^\text{post}\}  }{  \sqrt{q_n^\text{pre}} } \right) 
\text{ and }
\hat{\beta}^{\textsc{de}}_2 - \beta_2 
= O_{\mathbb{P}}\left( \frac{ \max\{ q_n^\text{pre}, q_n^\text{post}\}  }{  \sqrt{q_n^\text{pre}} } \right).
\end{align*}
It follows that $\hat{\beta}^{\textsc{de}}_0$ is also consistent 
when $ \max\{q_n^\text{pre}, q_n^\text{post}\} \prec \sqrt{q_n^\text{pre}} $
with 
\[
\hat{\beta}^{\textsc{de}}_0 - {\beta}_0
~=~ O_{\mathbb{P}}\left( \frac{ \max\{ q_n^\text{pre}, q_n^\text{post}\}  }{  \sqrt{q_n^\text{pre}} } \right).
\]

\paragraph{Case (b)}   
By combining the results from Theorem \ref{thm:ZX-1_PC_limit}, we have
\begin{align*}
\hat{\beta}^{\textsc{de}}_1 - {\beta}_1
~=~& \frac{c_{22} \frac{1}{n}\sum_{i=1}^n (T_i - \bar{T}) u_i + c_{23} \frac{1}{n \sqrt{n q_n^\text{pre}}}\sum_{i=1}^n \left((Z_i^{\textsc{ssiv}} + \hat{\delta}_i) - \frac{1}{n}\sum_{i=1}^n (Z_i^{\textsc{ssiv}} + \hat{\delta}_i) \right) u_i }{ c_{33} c_{22} - c_{32} c_{23} }  \\
~=~& \frac{ \left( c_{22}^* + O\left( \frac{1}{n \sqrt{q_n^\text{post}} } \right) \right) O_{\mathbb{P}}\left( \frac{1}{\sqrt{n}} \right) + O_{\mathbb{P}}\left( \frac{1}{n\sqrt{q_n^\text{post}}} \right) O_{\mathbb{P}}\left( \frac{1}{\sqrt{n}} \right)}{ \pi (1-\pi) c_{22}^* + O_{\mathbb{P}}\left( \frac{ 1 }{ n \sqrt{ q_n^{\text{post}}} }  \right) }.
\end{align*}
The consistency of $\hat{\beta}^{\textsc{de}}_1$ always holds with 
\begin{align*}
\hat{\beta}^{\textsc{de}}_1 - \beta_1 
= O_{\mathbb{P}}\left( \frac{1}{ \sqrt{n} } \right).
\end{align*}
Then, for $\hat{\beta}^{\textsc{de}}_2$, we have
\begin{align*}
\hat{\beta}^{\textsc{de}}_2 - {\beta}_2
~=~& \frac{ c_{33} \frac{1}{n \sqrt{n q_n^\text{pre}}}\sum_{i=1}^n \left((Z_i^{\textsc{ssiv}} + \hat{\delta}_i) - \frac{1}{n}\sum_{i=1}^n (Z_i^{\textsc{ssiv}} + \hat{\delta}_i) \right) u_i + c_{32} \frac{1}{n}\sum_{i=1}^n (T_i - \bar{T}) u_i }{ c_{33} c_{22} - c_{32} c_{23} } \\
~=~& \frac{ \left( \pi(1-\pi) + O_{\mathbb{P}}\left( \frac{1}{\sqrt{n}} \right) \right) O_{\mathbb{P}}\left( \frac{1}{\sqrt{n}} \right) + O_{\mathbb{P}}\left( \frac{1}{\sqrt{n}} \right) O_{\mathbb{P}}\left( \frac{1}{\sqrt{n}} \right) }{ \pi (1-\pi) c_{22}^* + O\left( \frac{1}{n \sqrt{q_n^\text{post}} } \right)  }.
\end{align*}
We can conclude that with $ \max\{q_n^\text{pre}, q_n^\text{post}\} \prec \sqrt{q_n^\text{pre}} $, then 
\begin{align*}
\hat{\beta}^{\textsc{de}}_2 - {\beta}_2
= O_{\mathbb{P}}\left( \frac{ \max\{ q_n^\text{pre}, q_n^\text{post}\}  }{  \sqrt{q_n^\text{pre}} } \right).
\end{align*}
Furthermore, with $ \max\{q_n^\text{pre}, q_n^\text{post}\} \prec \sqrt{q_n^\text{pre}} $, then
\[
\hat{\beta}^{\textsc{de}}_0 - {\beta}_0
~=~ O_{\mathbb{P}}\left( \frac{ \max\{ q_n^\text{pre}, q_n^\text{post}\}  }{  \sqrt{q_n^\text{pre}} } \right).
\]
 
\end{proof}

\begin{proof}[Proof of Corollary \ref{cor:IV_PC_consistency}]
Corollary \ref{cor:IV_PC_consistency} is a direct result of Theorem \ref{thm:consistency_ratio_PC_IV} under $q_n^\text{pre} \preccurlyeq q_n^\text{post}$.    
\end{proof}

\subsection{Asymptotic normality}
\begin{proof}[Proof of Theorem \ref{thm:asymnormal_ratio_PC_IV}]
We complete the proof in three steps.
\paragraph{Step 1: asymptotic normality.}
We prove the asymptotic normality of the numerator.
By Lemma \ref{lemma:Ziphii_PC},
we have
\begin{align}
\frac{1}{n \sqrt{n q_n^\text{pre}}}\sum_{i=1}^n (Z_i^{\textsc{ssiv}} + \hat{\delta}_i) u_i
= \frac{1}{n \sqrt{n q_n^\text{pre}}}\sum\limits_{i=1}^n \sum_{j\ne i} A_{ij}^{\text{pre}} (T_j - \pi) \eta_i^u 
+ o_{\mathbb{P}}\left( \frac{1}{\sqrt{n}} \right).
\label{eq:Zi_PC_ui}
\end{align}
Define $H_{ij} = A_{ij}^{\text{pre}} (T_j - \pi) \eta_i^u$ and $\sigma_n^2 = \V\left( \sum_{(i,j)} H_{ij} \right)$.
By Theorem \ref{thm:clt}, we have
\begin{align}
\Delta\left( \frac{1}{\sigma_n} 
 \sum_{(i,j)} H_{ij}, \mathcal{N}(0, 1)  \right)
& \le \frac{1}{\sigma_n^3} \sum_{i,j}  
\mathbb{E} \left[ \left|H_{ij} \left( \sum_{(k,l) \in \mathcal{S}_{(i,j)}} H_{kl} \right)^2 \right| \right] 
+ \frac{\sqrt{2}}{\sqrt{\pi} \sigma_n^2} 
\sqrt{\text{Var} \left( \sum_{(i,j)} H_{ij} \sum_{ (k,l) \in \mathcal{S}_{(i,j)}}  H_{kl} \right)}.
\label{eq:HijNormal}
\end{align}
By definition of $\mathcal{S}_{(i,j)}$, we have
\begin{align}
\sum_{(k,l) \in \mathcal{S}_{(i,j)}} H_{(k,l)}
= H_{ij} + H_{ji} + \sum_{k\ne i,j} H_{ik} 
+ \sum_{k\ne i,j} H_{ki}
+ \sum_{k\ne i,j} H_{jk}
+ \sum_{k\ne i,j} H_{kj}.
\label{eq:Hkl}
\end{align}
For the first term in \eqref{eq:HijNormal}, 
\begin{align*}
\mathbb{E} \left[ \left|H_{ij} \left( \sum_{(k,l) \in \mathcal{S}_{(i,j)}} H_{kl} \right)^2 \right| \right]
\le C \cdot 
E \left[ A_{ij}^{\text{pre}} \left( \sum_{(k,l) \in \mathcal{S}_{(i,j)}} H_{kl} \right)^2 \right].
\end{align*}
Then we consider the cases in \eqref{eq:Hkl}, respectively. \\
\underline{(1): $H_{ij} + H_{ji}$.}
By definition, 
\begin{align*}
E \left[ A_{ij}^{\text{pre}} \left( H_{ij} + H_{ji} \right)^2 \right]
= E \left[ A_{ij}^{\text{pre}} \left( (T_j - \pi) \eta_i^u + (T_i - \pi) \eta_j^u \right)^2 \right]
\le C q_n^\text{pre}.
\end{align*}
\underline{(2): $\sum_{k\ne i,j} H_{ik}$.}
It can be bounded above by 
\begin{align*}
& E \left[ A_{ij}^{\text{pre}} \left( \sum_{k\ne i,j} H_{ik} \right)^2 \right]
= E \left[ A_{ij}^{\text{pre}} \left( \sum_{k\ne i,j} H_{ik}^2 + \sum_{(k_1,k_2)} H_{ik_1} H_{ik_2} \right) \right] \\
=& E \left[ A_{ij}^{\text{pre}} \left( \sum_{k\ne i,j} A_{ik}^{\text{pre}} (T_k - \pi)^2 (\eta_i^u)^2 + \sum_{(k_1,k_2)} A_{ik_1}^{\text{pre}} (T_{k_1} - \pi) \eta_i^u A_{ik_2}^{\text{pre}} (T_{k_2} - \pi) \eta_i^u \right) \right] \\
=& E \left[ A_{ij}^{\text{pre}} \sum_{k\ne i,j} A_{ik}^{\text{pre}} (T_k - \pi)^2 (\eta_i^u)^2 \right]
\le C n (q_n^\text{pre})^2
\end{align*}
by independence between $(T_{k_1} - \pi)$ and $(T_{k_2} - \pi)$.
By symmetry, we have 
\begin{align*}
& E \left[ A_{ij}^{\text{pre}} \left( \sum_{k\ne i,j} H_{jk} \right)^2 \right]
\le C n (q_n^\text{pre})^2.
\end{align*}
\underline{(3): $\sum_{k\ne i,j} H_{ki}$.}
It can be bounded above by 
\begin{align*}
& E \left[ A_{ij}^{\text{pre}} \left( \sum_{k\ne i,j} H_{ki} \right)^2 \right]
= E \left[ A_{ij}^{\text{pre}} \left( \sum_{k\ne i,j} H_{ki}^2 + \sum_{(k_1,k_2)} H_{k_1i} H_{k_2i} \right) \right] \\
=& E \left[ A_{ij}^{\text{pre}} \left( \sum_{k\ne i,j} A_{ki}^{\text{pre}} (T_i - \pi)^2 (\eta_k^u)^2 
+ (T_i - \pi)^2 \sum_{(k_1,k_2)} A_{k_1i}^{\text{pre}} \eta_{k_1}^u A_{k_2i}^{\text{pre}}  \eta_{k_2}^u \right) \right] \\
=& E \left[ A_{ij}^{\text{pre}} (T_i - \pi)^2 \sum_{k\ne i,j} A_{ki}^{\text{pre}}  (\eta_k^u)^2 \right]
\le C n (q_n^\text{pre})^2,
\end{align*}
where the last line follows from 
\begin{align*}
& E \left[ A_{ij}^{\text{pre}} (T_i - \pi)^2 \sum_{(k_1,k_2)} A_{k_1i}^{\text{pre}} \eta_{k_1}^u A_{k_2i}^{\text{pre}}  \eta_{k_2}^u  \right] \\
=& \pi(1-\pi) (q_n^\text{pre})^3 E \left[ f_0(i,j) \sum_{k_1} \sum_{l_1=1}^n \lambda_{l_1}^* \psi_{l_1}^*(w_{k_1}) \psi_{l_1}^*(w_{i}) \eta_{k_1}^u \sum_{k_2} \sum_{l_2=1}^n \lambda_{l_2}^* \psi_{l_2}^*(w_{k_2}) \psi_{l_2}^*(w_{i}) \eta_{k_2}^u  \right] 
= 0.
\end{align*}
By combining these three cases, we have 
\begin{align*}
\frac{1}{\sigma_n^3} \sum_{i,j}  
\mathbb{E} \left[ \left|H_{ij} \left( \sum_{(k,l) \in \mathcal{S}_{(i,j)}} H_{kl} \right)^2 \right| \right]
\le C \frac{n^2 n (q_n^\text{pre})^2}{ n^3 (q_n^\text{pre})^{3/2} } 
= C \sqrt{q_n^\text{pre}}.
\end{align*}
For the second term in \eqref{eq:HijNormal}, we decompose it into several pieces in terms of \eqref{eq:Hkl}, and consider these terms one by one. \\
\underline{(1): $H_{ij} + H_{ji}$.}
By definition,
\begin{align*}
& \text{Var} \left( \sum_{(i,j)} H_{ij} \left( H_{ij} + H_{ji} \right) \right)
= \text{Var} \left( \sum_{(i,j)} A_{ij}^{\text{pre}} (T_j - \pi) \eta_i^u \left(  (T_j - \pi) \eta_i^u + (T_i - \pi) \eta_j^u \right) \right) \\
=& \sum_{(i,j)} \text{Var} \left(  A_{ij}^{\text{pre}} (T_j - \pi) \eta_i^u \left(  (T_j - \pi) \eta_i^u + (T_i - \pi) \eta_j^u \right) \right) \\
&+ \sum_{\substack{(i,j,k) \\ \text{all distinct}}} \text{Cov} \left( A_{ij}^{\text{pre}} (T_j - \pi) \eta_i^u \left(  (T_j - \pi) \eta_i^u + (T_i - \pi) \eta_j^u \right), A_{ik}^{\text{pre}} (T_k - \pi) \eta_i^u \left(  (T_k - \pi) \eta_i^u + (T_i - \pi) \eta_k^u \right) \right) \\
\le & C n^3 (q_n^{\text{pre}})^2.
\end{align*}
\underline{(2): $\sum_{k\ne i,j} H_{ik}$.}
By definition,
\begin{align*}
& \text{Var} \left( \sum_{(i,j)} H_{ij} \sum_{k\ne i,j} H_{ik} \right)
\le \text{Var} \left( \sum_{\substack{(i,j,k) \\ \text{all distinct}}} A_{ij}^{\text{pre}} (T_j - \pi) A_{ik}^{\text{pre}} (T_k - \pi) \left( \eta_i^u \right)^2 \right) \\
&= \sum_{\substack{(i,j,k) \\ \text{all distinct}}}  \text{Var} \left( (T_j - \pi) (T_k - \pi)  A_{ij}^{\text{pre}} A_{ik}^{\text{pre}} \left( \eta_i^u \right)^2 \right) \\
&+ \sum_{\substack{(i_1,i_2,j,k) \\ \text{all distinct}}} E \left( (T_j - \pi)^2 (T_k - \pi)^2  A_{i_1j}^{\text{pre}} A_{i_1k}^{\text{pre}} \left( \eta_{i_1}^u \right)^2 A_{i_2j}^{\text{pre}} A_{i_2k}^{\text{pre}} \left( \eta_{i_2}^u \right)^2 \right) \\
& \le  C_1 n^3 (q_n^{\text{pre}})^2 + C_2 n^4 ( q_n^{\text{pre}})^4.
\end{align*}
\underline{(3): $\sum_{k\ne i,j} H_{jk}$.}
By analogous argument, we have
\[
\text{Var} \left( \sum_{(i,j)} H_{ij} \sum_{k\ne i,j} H_{jk} \right)
\le C_1 n^3 (q_n^{\text{pre}})^2 
+ C_2 n^4 (q_n^{\text{pre}})^3.
\]
\underline{(4): $\sum_{k\ne i,j} H_{ki}$.}
By definition,
\begin{align*}
& \text{Var} \left( \sum_{(i,j)} H_{ij} \sum_{k\ne i,j} H_{ki} \right)
= \text{Var} \left( \sum_{\substack{ (i,j,k) \\ \text{all distinct}}} A_{ij}^{\text{pre}} (T_j - \pi) \eta_i^u A_{ki}^{\text{pre}} (T_i - \pi) \eta_k^u \right) \\
&= \sum_{\substack{ (i,j,k) \\ \text{all distinct}}} \text{Var} \left( A_{ij}^{\text{pre}} (T_j - \pi) \eta_i^u A_{ki}^{\text{pre}} (T_i - \pi) \eta_k^u \right) \\
&+ \sum_{ \substack{ (i,j,k_1, k_2) \\ \text{all distinct}}} E\left( (T_j - \pi)^2 (T_i - \pi)^2 A_{ij}^{\text{pre}}  A_{k_1i}^{\text{pre}}  \eta_{k_1}^u A_{ij}^{\text{pre}} (\eta_i^u)^2 A_{k_2i}^{\text{pre}}  \eta_{k_2}^u \right) 
\le C_1 n^3 (q_n^{\text{pre}})^2 .
\end{align*}
\underline{(5): $\sum_{k\ne i,j} H_{kj}$.}
Recall that $\sum_{i=1} \eta_i \psi_k^*(w_i) = 0$.
\begin{align*}
& \text{Var} \left( \sum_{(i,j)} H_{ij} \sum_{k\ne i,j} H_{kj} \right)
\le E\left[ \left( \sum_{\substack{ (i,j,k) \\ \text{all distinct}}} A_{ij}^{\text{pre}} (T_j - \pi)^2 \eta_i^u A_{kj}^{\text{pre}} \eta_k^u \right)^2 \right] \\
=& \sum_{\substack{(i_1,j_1,k_1) \\ (i_2,j_2,k_2)}} 
E\left[ (T_{j_1} - \pi)^2 (T_{j_2} - \pi)^2 
A_{i_1 j_1}^{\text{pre}} \eta_{i_1}^u A_{k_1 j_1}^{\text{pre}} \eta_{k_1}^u 
A_{i_2j}^{\text{pre}} \eta_{i_2}^u A_{k_2j_2}^{\text{pre}} \eta_{k_2}^u \right] \\
=& \pi^2 (1-\pi)^2
\sum_{ \substack{ (i,j_1,j_2,k) \\ \text{all distinct}}} 
E\left[ 
A_{i j_1}^{\text{pre}}  A_{k j_1}^{\text{pre}}  
 A_{k j_2}^{\text{pre}} 
 (\eta_{i}^u)^2 (\eta_{k}^u)^2
 \right]
+  \pi^2 (1-\pi)^2
\sum_{ \substack{ (i,j_1,j_2) \\ \text{all distinct}}} 
E\left[ 
A_{i j_1}^{\text{pre}}   A_{i j_2}^{\text{pre}} 
 (\eta_{i}^u)^4 
 \right] \\
\le & C_1 n^4 (q_n^\text{pre})^3 + C_2 n^3 (q_n^\text{pre})^2.
\end{align*}
By combining these results, we can show that 
\begin{align*}
\frac{\sqrt{2}}{\sqrt{\pi} \sigma_n^2} \sqrt{\text{Var} \left( \sum_{i,j} H_{ij} \sum_{ (k,l) \in \mathcal{S}_{(i,j)}}  H_{kl} \right)}
\le C \frac{1}{n^2 q_n^\text{pre}} \sqrt{n^4 (q_n^\text{pre})^3}
= O\left( \sqrt{q_n^\text{pre}} \right).
\end{align*}
Then by \eqref{eq:HijNormal}, we have that 
\begin{equation*}
\frac{1}{\sigma_n}\sum\limits_{i=1}^n \sum_{j\ne i} A_{ij}^{\text{pre}} (T_j - \pi) \eta_i^u
\overset{d}{\to}
\mathcal{N}\left(0, 1 \right).
\end{equation*}
Together with Cramér--Wold Theorem and \eqref{eq:Zi_PC_ui}, we can show that 
\begin{equation*}
( V_\text{num}^{\textsc{de}} )^{-1/2}
\left( \sum_{i=1}^n \tilde{{Z}}^{\textsc{de}}_i u_i \right)
\overset{d}{\to}
\mathcal{N}\left(0, I_3 \right).
\end{equation*}

\paragraph{Step 2: consistent variance estimator.}
Recall that $\hat{u}_i^{\textsc{de}} - u_i = (\hat{\beta}^{\textsc{de}} - \beta)^\top X_i$.
By definition and Lemma \ref{lemma:8}, we have
\begin{align*}
& \hat{\mu}_k^R - {\mu}_k
= \sum_{i=1}^n (\hat{u}_i^{\textsc{de}} - u_i) (\hat{\psi}_{ki}^R - {\psi}_{k}^*(w_i) ) 
+ \sum_{i=1}^n (\hat{u}_i^{\textsc{de}} - u_i) {\psi}_{k}^*(w_i) 
+ \sum_{i=1}^n u_i (\hat{\psi}_{ki}^R - {\psi}_{k}^*(w_i) ) 
= o_{\mathbb{P}}\left( \sqrt{n} \right),
\end{align*}
where the last line follows from
\begin{align*}
& \sum_{i=1}^n (\hat{u}_i^{\textsc{de}} - u_i) (\hat{\psi}_{ki}^R - {\psi}_{k}^*(w_i) ) \\
=& (\hat{\beta}^{\textsc{de}}_0 - \beta_0) \sum_{i=1}^n  (\hat{\psi}_{ki}^R - {\psi}_{k}^*(w_i) ) 
+ (\hat{\beta}^{\textsc{de}}_1 - \beta_1) \sum_{i=1}^n T_i (\hat{\psi}_{ki}^R - {\psi}_{k}^*(w_i) ) 
+ (\hat{\beta}^{\textsc{de}}_2 - \beta_2) \sum_{i=1}^n M_i (\hat{\psi}_{ki}^R - {\psi}_{k}^*(w_i) )  \\
=& O_{\mathbb{P}}\left( \sqrt{ \frac{\log n}{\log \log n}} / (\sqrt{n} q_n^{\text{pre}}) \right)
= o_{\mathbb{P}}(\sqrt{n}), \\
& \sum_{i=1}^n (\hat{u}_i^{\textsc{de}} - u_i) {\psi}_{k}^*(w_i) 
= (\hat{\beta}^{\textsc{de}}_0 - \beta_0) \sum_{i=1}^n  {\psi}_{k}^*(w_i) 
+ (\hat{\beta}^{\textsc{de}}_1 - \beta_1) \sum_{i=1}^n T_i {\psi}_{k}^*(w_i) 
+ (\hat{\beta}^{\textsc{de}}_2 - \beta_2) \sum_{i=1}^n M_i {\psi}_{k}^*(w_i) 
= o_{\mathbb{P}}\left(\sqrt{n} \right)
\end{align*}
and 
\begin{align*}
& \sum_{i=1}^n u_i (\hat{\psi}_{ki}^R - {\psi}_{k}^*(w_i) ) 
\le \sqrt{ \sum_{i=1}^n u_i^2 \sum_{i=1}^n (\hat{\psi}_{ki}^R - {\psi}_{k}^*(w_i) )^2 } 
= O_{\mathbb{P}}\left( \frac{1}{\sqrt{q_n^{\text{pre}}}} \left( {\frac{\log n}{\log \log n}} \right)^{1/4} \right)
= o_{\mathbb{P}}\left(\sqrt{n} \right).
\end{align*}
Now we show the consistency of the variance estimators one by one. \\
\underline{(1): The consistency of the $(1,1)$ element of $\hat{V}_\text{num}^{\textsc{de}}$.} 
The proof is analogous to that of Theorem \ref{thm:asymnormal_ratio_IV} and follows from $\hat{\beta}^{\textsc{de}} - \beta = o_{\mathbb{P}}(1)$. \\
\underline{(2): The consistency of the $(3,3)$ element of $\hat{V}_\text{num}^{\textsc{de}}$.} 
By definition,
\begin{align*}
\hat{\eta}_i^u
= \hat{u}_i^{\textsc{de}} - \sum_{k=1}^r \hat{\mu}_k \hat{\psi}_{ki} 
=& \hat{u}_i^{\textsc{de}} - u_i + u_i - \sum_{k=1}^r {\mu}_k {\psi}_{k}^*(w_i)
- \sum_{k=1}^r \hat{\mu}_k \hat{\psi}_{ki}
+ \sum_{k=1}^r {\mu}_k {\psi}_{k}^*(w_i) \\
=& \eta_i^u + \hat{u}_i^{\textsc{de}} - u_i - \Delta_i   
\end{align*}
where $\Delta_i = \sum_{k=1}^r \left( \hat{\mu}_k^R \hat{\psi}_{ki}^R - {\mu}_k {\psi}_{k}^*(w_i) \right)$.
The variance estimator can be decomposed as
\begin{align}
& \frac{1}{n^2 q_n^\text{pre}} \sum_{i=1}^n \sum_{j=1}^n A_{ij}^\text{pre} (\hat{\eta}_i^u)^2
= \frac{1}{n^2 q_n^\text{pre}} \sum_{i=1}^n \sum_{j=1}^n A_{ij}^\text{pre} (\eta_i^u + \hat{u}_i^{\textsc{de}} - u_i - \Delta_i)^2 \notag \\
=& \frac{1}{n^2 q_n^\text{pre}} \sum_{i=1}^n \sum_{j=1}^n A_{ij}^\text{pre}({\eta}_i^u)^2 + \frac{1}{n^2 q_n^\text{pre}} \sum_{i=1}^n \sum_{j=1}^n A_{ij}^\text{pre} (\hat{u}_i^{\textsc{de}} - u_i - \Delta_i)^2 
+ \frac{2}{n^2 q_n^\text{pre}} \sum_{i=1}^n \sum_{j=1}^n A_{ij}^\text{pre}{\eta}_i^u (\hat{u}_i^{\textsc{de}} - u_i - \Delta_i) \notag \\
\le & \frac{1}{n^2 q_n^\text{pre}} \sum_{i=1}^n \sum_{j=1}^n A_{ij}^\text{pre}({\eta}_i^u)^2 
+ \frac{2}{n^2 q_n^\text{pre}} \sum_{i=1}^n \sum_{j=1}^n A_{ij}^\text{pre}(\hat{u}_i^{\textsc{de}} - u_i )^2 
+ \frac{2}{n^2 q_n^\text{pre}} \sum_{i=1}^n \sum_{j=1}^n A_{ij}^\text{pre} \Delta_i^2 \notag \\
&+ \frac{2}{n^2 q_n^\text{pre}} \sum_{i=1}^n \sum_{j=1}^n A_{ij}^\text{pre}{\eta}_i^u (\hat{u}_i^{\textsc{de}} - u_i - \Delta_i) \notag \\
=& S_1 + S_2 + S_3 + S_4.
\label{eq:var_Aij_etai}
\end{align}
For $\underline{\underline{S_1}}$, the expectation is
\begin{align*}
E\left[ \frac{1}{n^2 q_n^\text{pre}} \sum_{i=1}^n \sum_{j=1}^n A_{ij}^\text{pre}({\eta}_i^u)^2 \right]
= O(1).
\end{align*}
The variance is 
\begin{align*}
& \V\left( \frac{1}{n^2 q_n^\text{pre}} \sum_{i=1}^n ({\eta}_i^u)^2 \sum_{j=1}^n A_{ij}^\text{pre} \right) 
= \frac{1}{n^4 (q_n^\text{pre})^2} 
\left[ 
\sum_{i=1}^n \V\left( ({\eta}_i^u)^2 \sum_{j=1}^n A_{ij}^\text{pre} \right)
+ \sum_{(i,k)} 
\textup{Cov}\left( ({\eta}_i^u)^2 \sum_{j=1}^n A_{ij}^\text{pre}, ({\eta}_k^u)^2 \sum_{j=1}^n A_{kj}^\text{pre} \right) 
\right] \\
&= \frac{1}{n^4 (q_n^\text{pre})^2} \sum_{i=1}^n \sum_{j=1}^n \V\left( ({\eta}_i^u)^2 A_{ij}^\text{pre} \right)
+ \frac{1}{n^4 (q_n^\text{pre})^2} \sum_{i=1}^n \sum_{(j_1,j_2)} 
\textup{Cov}\left( ({\eta}_i^u)^2 A_{ij_1}^\text{pre}, ({\eta}_i^u)^2 A_{ij_2}^\text{pre} \right) \\
&+ \frac{1}{n^4 (q_n^\text{pre})^2} \sum_{(i,k)} \sum_{j=1}^n \textup{Cov}\left( ({\eta}_i^u)^2  A_{ij}^\text{pre}, ({\eta}_k^u)^2 A_{kj}^\text{pre} \right) 
+ \frac{1}{n^4 (q_n^\text{pre})^2} \sum_{(i,k)} \sum_{(j_1, j_2)} 
\textup{Cov}\left(({\eta}_i^u)^2 A_{ij_1}^\text{pre}, ({\eta}_k^u)^2 A_{kj_2}^\text{pre} \right) \\
&= \frac{1}{n^4 (q_n^\text{pre})^2} 
\left[ \sum_{i=1}^n \sum_{j=1}^n \V\left( ({\eta}_i^u)^2 A_{ij}^\text{pre} \right)
+ \sum_{i=1}^n \sum_{(j_1,j_2)} \textup{Cov}\left( ({\eta}_i^u)^2 A_{ij_1}^\text{pre}, ({\eta}_i^u)^2 A_{ij_2}^\text{pre} \right) 
+ \sum_{(i,k)} \sum_{j=1}^n 
\textup{Cov}\left( ({\eta}_i^u)^2  A_{ij}^\text{pre}, ({\eta}_k^u)^2 A_{kj}^\text{pre} \right) \right]  \\
&= O\left( \frac{1}{n} \right).
\end{align*}
Therefore, 
\begin{align}
\frac{1}{n^2 q_n^\text{pre}} \sum_{i=1}^n \sum_{j=1}^n A_{ij}^\text{pre}({\eta}_i^u)^2
= E\left[ \frac{1}{n^2 q_n^{\text{pre}}} \sum_{i=1}^n \sum_{j=1}^n A_{ij}^\text{pre}({\eta}_i^u)^2 \right]
+ O_{\mathbb{P}}\left( \frac{ 1 }{ \sqrt{n} } \right).
\end{align}
For $\underline{\underline{S_2}}$, with $\hat{\beta}^{\textsc{de}} - \beta = o_{\mathbb{P}}(1)$, we can show that
\begin{align*}
& \frac{1}{n^2 q_n^\text{pre}} \sum_{i=1}^n \sum_{j=1}^n A_{ij}^\text{pre}(\hat{u}_i^{\textsc{de}} - u_i )^2 
= \frac{1}{n^2 q_n^\text{pre}} \sum_{i=1}^n \sum_{j=1}^n A_{ij}^\text{pre}
\left( (\hat{\beta}^{\textsc{de}} - \beta)^\top X_i \right)^2 \\
\le & 3 \frac{1}{n^2 q_n^\text{pre}} \left[ (\hat{\beta}^{\textsc{de}}_0 - \beta_0)^2 
\sum_{i=1}^n \sum_{j=1}^n A_{ij}^\text{pre} 
+ (\hat{\beta}^{\textsc{de}}_1 - \beta_1)^2 
\sum_{i=1}^n \sum_{j=1}^n A_{ij}^\text{pre} T_i + (\hat{\beta}^{\textsc{de}}_2 - \beta_2)^2 
\sum_{i=1}^n \sum_{j=1}^n A_{ij}^\text{pre} M_i^2 \right] \\
=& o_{\mathbb{P}}(1).
\end{align*}
For $\underline{\underline{S_3}}$, by definition, we have
\begin{align*}
& \frac{1}{n^2 q_n^\text{pre}} \sum_{i=1}^n \sum_{j=1}^n A_{ij}^\text{pre} \Delta_i^2
= \frac{1}{n^2 q_n^\text{pre}} \sum_{i=1}^n \left( \sum_{k=1}^r \left( \hat{\mu}_k^R \hat{\psi}_{ki}^R - {\mu}_k {\psi}_{k}^*(w_i) \right) \right)^2 \sum_{j=1}^n A_{ij}^\text{pre} \\
&= \frac{1}{n^2 q_n^\text{pre}} \sum_{k=1}^r \sum_{i=1}^n \left( \hat{\mu}_k^R \hat{\psi}_{ki}^R - {\mu}_k {\psi}_{k}^*(w_i) \right)^2 \sum_{j=1}^n A_{ij}^\text{pre} \\
&+ \frac{1}{n^2 q_n^\text{pre}} \sum_{(k,l)} \sum_{i=1}^n \left( \hat{\mu}_k^R \hat{\psi}_{ki}^R - {\mu}_k {\psi}_{k}^*(w_i) \right) \left( \hat{\mu}_l^R \hat{\psi}_{li}^R - {\mu}_l {\psi}_{l}^*(w_i) \right) \sum_{j=1}^n A_{ij}^\text{pre}. 
\end{align*}
By definition,
\begin{align*}
\hat{\mu}_k^R \hat{\psi}_{ki}^R - {\mu}_k {\psi}_{k}^*(w_i)
=& \hat{\mu}_k^R \hat{\psi}_{ki}^R - {\mu}_k \hat{\psi}_{ki}^R + {\mu}_k \hat{\psi}_{ki}^R - {\mu}_k {\psi}_{k}^*(w_i) \\
=& (\hat{\mu}_k^R - {\mu}_k) (\hat{\psi}_{ki}^R - {\psi}_{k}^*(w_i) ) 
+ (\hat{\mu}_k^R - {\mu}_k) {\psi}_{k}^*(w_i)
+ {\mu}_k (\hat{\psi}_{ki}^R - {\psi}_{k}^*(w_i) ).
\end{align*}
For the first term of $S_3$, we have
\begin{align*}
& \frac{1}{3} \frac{1}{n^2 q_n^\text{pre}} \sum_{k=1}^r \sum_{i=1}^n \left( \hat{\mu}_k^R \hat{\psi}_{ki}^R - {\mu}_k {\psi}_{k}^*(w_i) \right)^2 \sum_{j=1}^n A_{ij}^\text{pre} \\
\le & \frac{1}{n^2 q_n^\text{pre}} \sum_{k=1}^r (\hat{\mu}_k^R - {\mu}_k)^2 \sum_{i=1}^n  (\hat{\psi}_{ki}^R - {\psi}_{k}^*(w_i) ) ^2 \sum_{j=1}^n A_{ij}^\text{pre}
+ \frac{1}{n^2 q_n^\text{pre}} \sum_{k=1}^r (\hat{\mu}_k^R - {\mu}_k)^2
\sum_{i=1}^n  {\psi}_{k}^*(w_i)^2 \sum_{j=1}^n A_{ij}^\text{pre} \\
&+ \frac{1}{n^2 q_n^\text{pre}} \sum_{k=1}^r {\mu}_k^2 \sum_{i=1}^n (\hat{\psi}_{ki}^R - {\psi}_{k}^*(w_i) )^2 \sum_{j=1}^n A_{ij}^\text{pre} \\
=& o_{\mathbb{P}}(1)
\end{align*}
where the last equality follows from 
\begin{align*}
E\left[ \sum_{i=1}^n  {\psi}_{k}^*(w_i)^2 \sum_{j=1}^n A_{ij}^\text{pre} \right]
= E\left[ \sum_{i=1}^n  {\psi}_{k}^*(w_i)^2 \sum_{j=1}^n q_n^\text{pre} g_0(i,j) \right]
\le C n q_n^\text{pre}
\end{align*}
and 
\begin{align*}
\sum_{i=1}^n (\hat{\psi}_{ki}^R - {\psi}_{k}^*(w_i) )^2 \sum_{j=1}^n A_{ij}^\text{pre}
\le \max_i \sum_{j=1}^n A_{ij}^\text{pre}
\sum_{i=1}^n (\hat{\psi}_{ki}^R - {\psi}_{k}^*(w_i) )^2
= O_{\mathbb{P}}\left( \sqrt{ {\frac{\log n}{\log \log n}}}  \right).
\end{align*}
The second term follows from an analogous argument, and thus $S_3 = o_{\mathbb{P}}(1)$. \\
For $\underline{\underline{S_4}}$:
\begin{align*}
\frac{1}{n^2 q_n^\text{pre}} \sum_{i=1}^n \sum_{j=1}^n A_{ij}^\text{pre}{\eta}_i^u (\hat{u}_i^{\textsc{de}} - u_i - \Delta_i)
= \frac{1}{n^2 q_n^\text{pre}} \sum_{i=1}^n \sum_{j=1}^n A_{ij}^\text{pre} {\eta}_i^u (\hat{u}_i^{\textsc{de}} - u_i )
- \frac{1}{n^2 q_n^\text{pre}} \sum_{i=1}^n \sum_{j=1}^n A_{ij}^\text{pre}{\eta}_i^u \Delta_i.
\end{align*}
For the first term,
\begin{align*}
& \frac{1}{n^2 q_n^\text{pre}} \sum_{i=1}^n \sum_{j=1}^n A_{ij}^\text{pre}{\eta}_i (\hat{u}_i^{\textsc{de}} - u_i ) \\
=& (\hat{\beta}_0^{\textsc{de}} - \beta_0) \frac{1}{n^2 q_n^\text{pre}} \sum_{i=1}^n \sum_{j=1}^n A_{ij}^\text{pre}{\eta}_i^u
+ (\hat{\beta}_1^{\textsc{de}} - \beta_1) \frac{1}{n^2 q_n^\text{pre}} \sum_{i=1}^n \sum_{j=1}^n A_{ij}^\text{pre}{\eta}_i^u T_i
+ (\hat{\beta}_2^{\textsc{de}} - \beta_2) \frac{1}{n^2 q_n^\text{pre}} \sum_{i=1}^n \sum_{j=1}^n A_{ij}^\text{pre}{\eta}_i^u M_i.
\end{align*}
By bounding the term in $L_2$ norm:
\begin{align*}
& E\left[ \left( \sum_{i=1}^n \sum_{j=1}^n A_{ij}^\text{pre}{\eta}_i M_i \right)^2 \right] \\
=& E\left[ \sum_{i=1}^n \left( \sum_{j=1}^n A_{ij}^\text{pre}{\eta}_i^2 M_i^2 + \sum_{(j_1,j_2)}^n A_{ij_1}^\text{pre} A_{ij_2}^\text{pre} {\eta}_i^2 M_i^2 \right) + \sum_{(i_1, i_2)} {\eta}_{i_1} M_{i_1} {\eta}_{i_2} M_{i_2}  \left( \sum_{j=1}^n A_{i_1j}^\text{pre} \right) \left( \sum_{j=1}^n A_{i_2j}^\text{pre}\right) \right] \\
\le & E\left[ \sum_{i=1}^n \left( \sum_{j=1}^n A_{ij}^\text{pre}{\eta}_i^2 M_i^2 + \sum_{(j_1,j_2)}^n A_{ij_1}^\text{pre} A_{ij_2}^\text{pre} {\eta}_i^2 M_i^2 \right) + \sum_{(i_1, i_2)} \left( {\eta}_{i_1}^2 M_{i_1}^2 + {\eta}_{i_2}^2 M_{i_2}^2 \right)  \left( \sum_{j=1}^n A_{i_1j}^\text{pre} \right) \left( \sum_{j=1}^n A_{i_2j}^\text{pre}\right) \right] \\
\le & C_1 n^2 q_n^\text{pre} + C_2 n^3 (q_n^\text{pre})^2 + C_3 n^4 (q_n^\text{pre})^2,
\end{align*}
which implies that 
\begin{align*}
\sum_{i=1}^n \sum_{j=1}^n A_{ij}^\text{pre}{\eta}_i^u M_i  
= O_{\mathbb{P}}\left( n^2 q_n^\text{pre} \right).
\end{align*}
With analogous arguments and $\hat{\beta}^{\textsc{de}} - \beta = o_{\mathbb{P}}(1)$, we can show that 
\begin{align*}
\frac{1}{n^2 q_n^\text{pre}} \sum_{i=1}^n \sum_{j=1}^n A_{ij}^\text{pre} {\eta}_i^u (\hat{u}_i^{\textsc{de}} - u_i )
= o_{\mathbb{P}}(1).
\end{align*}
For the second term,
\begin{align*}
& \frac{1}{n^2 q_n^\text{pre}} \sum_{i=1}^n \sum_{j=1}^n A_{ij}^\text{pre}{\eta}_i^u \Delta_i
= \frac{1}{n^2 q_n^\text{pre}} \sum_{i=1}^n \sum_{j=1}^n A_{ij}^\text{pre}{\eta}_i^u \sum_{k=1}^r \left( \hat{\mu}_k^R \hat{\psi}_{ki}^R - {\mu}_k {\psi}_{k}^*(w_i) \right) \\
=& \frac{1}{n^2 q_n^\text{pre}} \sum_{k=1}^r (\hat{\mu}_k^R - {\mu}_k)  \sum_{i=1}^n {\eta}_i^u 
(\hat{\psi}_{ki}^R - {\psi}_{k}^*(w_i) ) 
\sum_{j=1}^n A_{ij}^\text{pre}
+ \frac{1}{n^2 q_n^\text{pre}} \sum_{k=1}^r (\hat{\mu}_k^R - {\mu}_k) \sum_{i=1}^n {\eta}_i^u  {\psi}_{k}^*(w_i) 
\sum_{j=1}^n A_{ij}^\text{pre} \\
&+ \frac{1}{n^2 q_n^\text{pre}} \sum_{k=1}^r {\mu}_k \sum_{i=1}^n  \sum_{j=1}^n 
A_{ij}^\text{pre} {\eta}_i^u
 (\hat{\psi}_{ki}^R - {\psi}_{k}^*(w_i) ) 
= o_{\mathbb{P}}(1)
\end{align*}
where the last equality follows from 
\begin{align*}
 \sum_{i=1}^n \sum_{j=1}^n A_{ij}^\text{pre}{\eta}_i^u (\hat{\psi}_{ki}^R - {\psi}_{k}^*(w_i) )
\le \left\| {\eta}^u \right\|_2 
\left\| A^\text{pre} (\hat{\psi}_{k}^R - {\psi}_{k}^* )\right\|_2 
= O_{\mathbb{P}}\left(\sqrt{n} \right)
O_{\mathbb{P}}\left( \sqrt{ \frac{\log n}{\log \log n}} \right)
= O_{\mathbb{P}}(\sqrt{n q_n^\text{pre}})
\end{align*}
and 
\begin{align*}
& E\left[ \left( \sum_{i=1}^n \sum_{j=1}^n A_{ij}^\text{pre}{\eta}_i^u  {\psi}_{k}^*(w_i) \right)^2 \right]  \\ 
=& E\left[ \sum_{i=1}^n ({\eta}_i^2)^2  {\psi}_{k}^*(w_i)^2 \left(  \sum_{j=1}^n A_{ij}^\text{pre} \right)^2 \right] 
+ E\left[ \sum_{(i_1,i_2)} {\eta}_{i_1}^u {\psi}_{k}^*(w_{i_1}) {\eta}_{i_2}^u  {\psi}_{k}^*(w_{i_2})\left(  \sum_{j=1}^n A_{i_1j}^\text{pre} \right)  
\left(  \sum_{j=1}^n A_{i_2j}^\text{pre} \right) \right] 
\le n^3 (q_n^\text{pre})^2.
\end{align*}
Thus, $S_4 = o_{\mathbb{P}}(1)$. 
Therefore, we show that
\[
\frac{1}{n^2 q_n^\text{pre}} \sum_{i=1}^n \sum_{j=1}^n A_{ij}^\text{pre} (\hat{\eta}_i^u)^2 = E\left[ \frac{1}{n^2 q_n^\text{pre}} \sum_{i=1}^n \sum_{j=1}^n A_{ij}^\text{pre}({\eta}_i^u)^2 \right] + o_{\mathbb{P}}(1).
\]

\paragraph{Step 3}
Steps 1--2 and CMT together imply the desired result:
\begin{align*}
& \left(\hat{V}_{\text{num}}^{\textsc{de}}  \right)^{-1/2} 
\left(  (\tilde{Z}^{\textsc{de}})^\top u \right)
\overset{d}{\to} \mathcal{N}(0, I_3).
\end{align*}
By Theorem \ref{thm:ZX-1_PC_limit}, we have shown the probability limit of $({\tilde{Z}^{\textsc{de}}})^\top X$. By combining with CMT, we can show 
\[
\left( \hat{V}^{\textsc{de}} \right)^{-1/2} 
\left(\hat{\beta}^{\textsc{de}}-\beta \right) 
\stackrel{\textup{d}}{\rightarrow} \mathcal{N}(0, I_3).
\]

\end{proof}

\section{Normalized SSIV}
\label{app:IV_alt}

\cite{BorusyakHullJaravel2022} suggest the following normalized SSIV for $M_i$:
\begin{equation}
Z_i^{\text{alt}} = \frac{\sum_{j= 1}^n A_{ij}^\text{pre} T_j}{\sum_{j= 1}^n A_{ij}^\text{pre}}. 
\label{eq:SSIV_ratio}  
\end{equation}
The exclusion restriction holds by the randomness of the shocks and the normalization of the shares:
\begin{align*}
E\left[ Z_i^{\text{alt}} u_i \right]
= E\left[ \frac{ \sum_{j= 1}^n A_{ij}^{\text{pre}} T_j }{ \sum_{j= 1}^n A_{ij}^{\text{pre}} } u_i \right]
= \pi E\left[ \frac{ \sum_{j= 1}^n A_{ij}^{\text{pre}} }{ \sum_{j= 1}^n A_{ij}^{\text{pre}} } u_i \right]
= 0.
\end{align*}
\cite{BorusyakHullJaravel2022} state that the relevance condition holds 
when individual units are mostly exposed to only a small number of shocks. 
In this section, we quantify the regime where the normalized SSIV in \eqref{eq:SSIV_ratio} yields consistent estimators. 

\subsection{Consistency}

For estimation, we consider the IV estimation with the IV vector:
\[
\Tilde{Z}_i^{\text{alt}} = (1,T_i, Z_i^{\text{alt}}).
\]
Let $\hat{\beta}^{\text{alt}}$ denote the vector of coefficients obtained from the above IV regression.
Define $\tilde{Z}^{\text{alt}}$ as the $n\times 3$ matrix obtained by stacking $\Tilde{Z}_i^{\text{alt}}$.

\begin{theorem}\label{thm:IV_alt_consistency}
Under Assumptions \ref{asu:network} and \ref{asu:linearY}, 
\begin{enumerate}[(a)]
\item if $\V(\xi_i)>0$ with $\max\{q_n^{\text{pre}}, q_n^{\text{post}}\} \prec \frac{1}{\sqrt{n}}$,
then
\begin{align*}
\hat{\beta}^{\text{alt}}- {\beta} = 
O_{\mathbb{P}}\left( \sqrt{n } \max\{q_n^{\text{pre}}, q_n^{\text{post}}\}  \right);
\end{align*}
\item if $\V(\xi_i)=0$ with
$\max\{q_n^{\text{pre}}, q_n^{\text{post}}\}
\prec \sqrt{n} q_n^{\text{post}}$, then
\begin{align*}
\hat{\beta}_1^{\text{alt}}- {\beta}_1 
= O_{\mathbb{P}}\left( \frac{1}{\sqrt{n}} \max\left\{ \frac{ \max\{q_n^\text{pre}, q_n^\text{post} \} }{  \sqrt{ q_n^\text{post}}  }, 1 \right\}  \right).
\end{align*}
Moreover, with $\max\{q_n^{\text{pre}}, q_n^{\text{post}}\} \prec \frac{1}{\sqrt{n}}$, then 
\begin{align*}
\hat{\beta}_0^{\text{alt}} - {\beta}_0 = 
O_{\mathbb{P}}\left( \sqrt{n } \max\{q_n^{\text{pre}}, q_n^{\text{post}}\}  \right)
\text{ and }
\hat{\beta}_2^{\text{alt}} - {\beta}_2 = 
O_{\mathbb{P}}\left( \sqrt{n } \max\{q_n^{\text{pre}}, q_n^{\text{post}}\}  \right).
\end{align*}

\end{enumerate}
\end{theorem}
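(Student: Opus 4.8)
The plan is to exploit the observation that the normalized SSIV in \eqref{eq:SSIV_ratio} is exactly the fraction mediator evaluated on the pre-intervention network, i.e.\ $Z_i^{\text{alt}} = M_i^{\text{pre}}$ in the notation of Section \ref{sec:ovb}. Since $A^{\text{pre}}$ does not depend on the treatment vector under Assumption \ref{asu:network}, the conditional probability that a pre-intervention neighbor is treated is $P(T_j=1 \mid A_{ij}^{\text{pre}}=1, T_i, w_i)=\pi$, so the analogue of $\xi_i$ in \eqref{eq:xi} built from the pre-network is the constant $\pi$. Consequently $Z_i^{\text{alt}}$ is a \emph{degenerate} fraction mediator and admits the decomposition $Z_i^{\text{alt}} = \pi + r_{0,i}^{\text{pre}} + r_{1,i}^{\text{pre}}$, where $r_{0,i}^{\text{pre}}, r_{1,i}^{\text{pre}}$ are the pre-network counterparts of the Taylor remainder terms used in the OLS analysis. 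This reduction lets me re-use the concentration machinery of Lemma \ref{lemma:consistencyM} and Lemma \ref{lemma:airi_ri^2} applied to $A^{\text{pre}}$ to control the sample averages $\frac1n\sum_i Z_i^{\text{alt}}$, $\frac1n\sum_i T_i Z_i^{\text{alt}}$ and $\frac1n\sum_i (Z_i^{\text{alt}})^2$.

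Next I would verify the two ingredients that determine the IV rate. For exogeneity, the centering already gives $E[Z_i^{\text{alt}} u_i]=0$ by the row-normalization argument displayed below \eqref{eq:SSIV_ratio}; reordering the double sum and conditioning on the independent treatments $T_j$ then yields $\frac1n\sum_i (Z_i^{\text{alt}} - \bar{Z}^{\text{alt}}) u_i = O_{\mathbb{P}}(n^{-1/2})$, matching the noise rate of the linear SSIV. For relevance, the key quantity is $\frac1n\sum_i (Z_i^{\text{alt}} - \bar{Z}^{\text{alt}})(M_i - \bar{M})$, whose mean is governed by the covariance of the pre- and post-network fractions. Because both adjacency matrices are generated from the common disturbances $\eta_{ij}$, the expected number of shared neighbors is of order $n\min\{q_n^{\text{pre}}, q_n^{\text{post}}\}$, so conditioning on the independent $T_j$ gives a conditional covariance of order $\min\{q_n^{\text{pre}}, q_n^{\text{post}}\}/(n q_n^{\text{pre}} q_n^{\text{post}}) \asymp 1/(n\max\{q_n^{\text{pre}}, q_n^{\text{post}}\})$. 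This is exactly the order of $b_{22}^*$ in the proof of Theorem \ref{thm:ZX-1}, which is why the final rates coincide with those of Theorem \ref{thm:IV_consistency}. I would also record that $\xi_i$ (the post-network signal present in Case (a)) is uncorrelated with $Z_i^{\text{alt}}$, since $E[Z_i^{\text{alt}} \mid T_i, w_i]=\pi$; this forces the relevance to come entirely from the remainder overlap in both cases.

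I would then form $(\tilde{Z}^{\text{alt}})^\top X$ and, exactly as in Theorem \ref{thm:XX-1} and Theorem \ref{thm:ZX-1}, write out its inverse in closed form. The matrix is asymptotically near-singular because $Z_i^{\text{alt}} \to \pi$ makes its third row nearly proportional to the first, so the determinant is controlled by the small relevance term; I would track its probability limit and dominant error terms using Lemma \ref{lemma:inverselimit} where applicable and an explicit cofactor expansion otherwise, distinguishing Case (a) (where $\V(\xi_i)>0$ contributes the $\textup{Cov}(T_i, \xi_i)$ entries) from Case (b) (where those entries vanish and $\hat{\beta}_1^{\text{alt}}$ decouples, giving the faster rate). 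Combining the noise rate $O_{\mathbb{P}}(n^{-1/2})$ with the relevance order $1/(n\max\{q_n^{\text{pre}}, q_n^{\text{post}}\})$ produces $\hat{\beta}^{\text{alt}} - \beta = O_{\mathbb{P}}(\sqrt{n}\max\{q_n^{\text{pre}}, q_n^{\text{post}}\})$ in Case (a) and the stated mixed rate for $\hat{\beta}_1^{\text{alt}}$ in Case (b).

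The main obstacle will be the relevance computation together with the near-singular inversion. Establishing the exact order $1/(n\max\{q_n^{\text{pre}}, q_n^{\text{post}}\})$ requires carefully accounting for the dependence between $A^{\text{pre}}$ and $A^{\text{post}}$ induced by the shared $\eta_{ij}$ and $w_i$, and then showing that the fluctuation of $\frac1n\sum_i (Z_i^{\text{alt}} - \bar{Z}^{\text{alt}})(M_i - \bar{M})$ around this mean is of strictly smaller order under the stated sparsity conditions; otherwise the determinant in the denominator is swamped by noise and consistency fails. Once the relevance order and its concentration are pinned down, the remaining steps are routine adaptations of the SSIV argument.
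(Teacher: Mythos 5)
Your proposal is correct in its key structural observations and would lead to the stated rates, but it organizes the linearization differently from the paper. You expand the instrument itself, writing $Z_i^{\text{alt}} = \pi + r_{0,i}^{\text{pre}} + r_{1,i}^{\text{pre}}$ via the same second-order Taylor expansion used for $M_i$, and then re-use the post-network concentration lemmas on $A^{\text{pre}}$. The paper instead keeps $Z_i^{\text{alt}}$ intact, reorders the double sum to isolate the independent shocks $(T_j-\pi)$, and linearizes only the denominator $1/N_i$ around $1/((n-1)q_n^{\text{pre}}g_0(i))$, controlling the correction with exact binomial moments of $N_i$ (Lemmas \ref{lemma:Zi_alt_phii} and \ref{lemma:ZiMi_alt}, following the \cite{LiWager2022} device). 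The two linearizations are equivalent: your first-order term $r_{0,i}^{\text{pre}}=\frac{1}{n-1}\sum_k A_{ik}^{\text{pre}}(T_k-\pi)/(q_n^{\text{pre}}g_0(i))$ is exactly the paper's leading term $S_1$, and your second-order remainder plays the role of its $S_2$. Your approach buys uniformity (one piece of machinery serves both the mediator and the instrument); the paper's buys sharper, more explicit control of the random denominator through the closed-form binomial identities in \eqref{eq:inverBij} and \eqref{eq:inverBij^2}. Your identification of the relevance order is exactly right: the signal comes solely from the edge overlap, $E[A_{ij}^{\text{pre}}A_{ij}^{\text{post}}]\asymp \min\{q_n^{\text{pre}},q_n^{\text{post}}\}$, giving $\min\{q_n^{\text{pre}},q_n^{\text{post}}\}/(nq_n^{\text{pre}}q_n^{\text{post}})=1/(n\max\{q_n^{\text{pre}},q_n^{\text{post}}\})$, which matches the paper's $E(S_1^0)$; and your remark that $\xi_i$ contributes no signal because $E[Z_i^{\text{alt}}\mid T_i,w_i]$ is (essentially) constant is the reason the Case (a) and Case (b) rates for $\hat\beta_0^{\text{alt}}$ and $\hat\beta_2^{\text{alt}}$ coincide.

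The one caveat is that your write-up is a plan rather than a proof: essentially all of the labor in the paper's argument sits in verifying that the fluctuation of $\frac{1}{n}\sum_i(Z_i^{\text{alt}}-\pi)(r_{0,i}+r_{1,i})$ around $E(S_1^0)$ is of strictly smaller order (the long variance and cross-covariance computations in the proof of Lemma \ref{lemma:ZiMi_alt}), and in showing that the second-order denominator correction is negligible under $q_n^{\text{pre}}\succcurlyeq n^{-1}$. You correctly flag this as the main obstacle, but you should be aware that these moment calculations are where the dependence between $A^{\text{pre}}$ and $A^{\text{post}}$ through the shared $\eta_{ij}$ enters nontrivially and cannot be dispatched by citing the existing lemmas verbatim, since Lemmas \ref{lemma:consistencyM} and \ref{lemma:airi_ri^2} bound single-network quantities and do not cover cross-network products.
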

Theorem \ref{thm:IV_alt_consistency} establishes the consistency of $\hat{\beta}^{\text{alt}}$ under both Cases (a) and (b). The consistency regime and convergence rates of $\hat{\beta}^{\text{alt}}_0$ and $\hat{\beta}^{\text{alt}}_2$ remain unchanged across both cases. 
However, in Case (b), where $A^{\text{post}}$ is conditionally mean-independent of the treatment, $\hat{\beta}^{\text{alt}}_1$ exhibits a faster convergence rate compared to Case (a). 
Additionally, $\hat{\beta}^{\text{alt}}_1$ is consistent under a less restrictive condition in Case (b). 

\begin{remark}
The consistency regime for the IV estimatos using the normalized SSIV in \eqref{eq:SSIV_ratio} is the same as that for that of SSIV $Z_i^{\textsc{ssiv}}$ in Theorem \ref{thm:IV_consistency}. 
This similarity arises because the primary component of both IVs is $\sum_{j= 1}^n A_{ij}^{\text{post}} T_j$, which is endogenous due to its correlation with the error term $u_i$. 
Either normalization or centering around the expected number of treated friends helps to control this endogenous component.
\end{remark}

Corollary \ref{cor:alt_IV_consistency} simplifies the results in Theorem \ref{thm:IV_alt_consistency} to the special case where $q_n^{\text{pre}} \preccurlyeq q_n^{\text{post}}$.
\begin{corollary}
\label{cor:alt_IV_consistency}
Suppose $q_n^{\text{pre}} \preccurlyeq q_n^{\text{post}}$.
Under Assumptions \ref{asu:network} and \ref{asu:linearY}, 
\begin{enumerate}[(a)]
\item if $\V(\xi_i)>0$ with $q_n^{\text{post}} \prec \frac{1}{\sqrt{n}}$, then $\hat{\beta}^{\text{alt}}- {\beta} = O_{\mathbb{P}}(\sqrt{n} q_n^{\text{post}})$;
\item if $\V(\xi_i)=0$ with $q_n^{\text{post}} \prec \frac{1}{\sqrt{n}}$, then 
$\hat{\beta}_0^{\text{alt}}- {\beta}_0 = O_{\mathbb{P}}(\sqrt{n} q_n^{\text{post}})$ and $\hat{\beta}_2^{\text{alt}}- {\beta}_2 = O_{\mathbb{P}}(\sqrt{n} q_n^{\text{post}})$.
Moreover, it always holds that $\hat{\beta}^{\text{alt}}_1 - {\beta}_1 = O_{\mathbb{P}}\left( \frac{ 1 }{ \sqrt{n} } \right)$.
\end{enumerate}
\end{corollary}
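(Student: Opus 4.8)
The plan is to derive Corollary \ref{cor:alt_IV_consistency} as a direct specialization of Theorem \ref{thm:IV_alt_consistency} under the additional ordering $q_n^{\text{pre}} \preccurlyeq q_n^{\text{post}}$. The single structural fact I would establish at the outset is that this ordering collapses the maximum appearing throughout Theorem \ref{thm:IV_alt_consistency}. Since $q_n^{\text{pre}} \preccurlyeq q_n^{\text{post}}$ means $q_n^{\text{pre}} = O(q_n^{\text{post}})$, there is a constant $C$ with $q_n^{\text{pre}} \le C q_n^{\text{post}}$ for $n$ large, so $q_n^{\text{post}} \le \max\{q_n^{\text{pre}}, q_n^{\text{post}}\} \le \max\{C,1\}\, q_n^{\text{post}}$, i.e. $\max\{q_n^{\text{pre}}, q_n^{\text{post}}\} \asymp q_n^{\text{post}}$. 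Because every hypothesis and every rate in Theorem \ref{thm:IV_alt_consistency} is phrased in terms of this maximum, and $O_{\mathbb{P}}$ bounds are invariant under replacing a sequence by an asymptotically equivalent one, the entire proof reduces to substituting $q_n^{\text{post}}$ for $\max\{q_n^{\text{pre}}, q_n^{\text{post}}\}$.

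For part (a), I would simply carry out this replacement: the hypothesis $\max\{q_n^{\text{pre}}, q_n^{\text{post}}\} \prec n^{-1/2}$ becomes $q_n^{\text{post}} \prec n^{-1/2}$, and the conclusion $\hat{\beta}^{\text{alt}} - \beta = O_{\mathbb{P}}(\sqrt{n}\,\max\{q_n^{\text{pre}}, q_n^{\text{post}}\})$ becomes $O_{\mathbb{P}}(\sqrt{n}\, q_n^{\text{post}})$. The same substitution immediately yields the statements for $\hat{\beta}_0^{\text{alt}}$ and $\hat{\beta}_2^{\text{alt}}$ in part (b), whose hypothesis $\max\{q_n^{\text{pre}}, q_n^{\text{post}}\} \prec n^{-1/2}$ and rate $O_{\mathbb{P}}(\sqrt{n}\,\max\{q_n^{\text{pre}}, q_n^{\text{post}}\})$ reduce to $q_n^{\text{post}} \prec n^{-1/2}$ and $O_{\mathbb{P}}(\sqrt{n}\, q_n^{\text{post}})$ respectively.

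The only step requiring a moment's care is the claim in part (b) that $\hat{\beta}_1^{\text{alt}} - \beta_1 = O_{\mathbb{P}}(n^{-1/2})$ holds with no sparsity restriction. Here I would verify two points. First, the theorem's hypothesis $\max\{q_n^{\text{pre}}, q_n^{\text{post}}\} \prec \sqrt{n}\, q_n^{\text{post}}$ becomes $q_n^{\text{post}} \prec \sqrt{n}\, q_n^{\text{post}}$, which is automatic since $\sqrt{n} \to \infty$; hence the bound indeed holds for every admissible $q_n^{\text{post}}$, justifying the phrase ``always holds.'' Second, the rate $O_{\mathbb{P}}\left( n^{-1/2}\max\{ \max\{q_n^{\text{pre}}, q_n^{\text{post}}\}/\sqrt{q_n^{\text{post}}},\, 1\}\right)$ simplifies: using $\max\{q_n^{\text{pre}}, q_n^{\text{post}}\} \asymp q_n^{\text{post}}$ gives $\max\{q_n^{\text{pre}}, q_n^{\text{post}}\}/\sqrt{q_n^{\text{post}}} \asymp \sqrt{q_n^{\text{post}}}$, and since Assumption \ref{asu:network} forces $q_n^{\text{post}} \preccurlyeq 1$ we have $\sqrt{q_n^{\text{post}}} \preccurlyeq 1$, so the inner maximum equals $1$ and the rate collapses to $O_{\mathbb{P}}(n^{-1/2})$. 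There is no genuine obstacle in this argument; the whole proof is bookkeeping with the asymptotic-order symbols, and the one point worth stating explicitly is the use of $q_n^{\text{post}} \preccurlyeq 1$ to absorb the $\sqrt{q_n^{\text{post}}}$ factor.
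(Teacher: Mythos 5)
Your proposal is correct and matches the paper's proof, which simply observes that the corollary follows from Theorem \ref{thm:IV_alt_consistency} by specializing to $q_n^{\text{pre}} \preccurlyeq q_n^{\text{post}}$; your substitution $\max\{q_n^{\text{pre}}, q_n^{\text{post}}\} \asymp q_n^{\text{post}}$ and the use of $q_n^{\text{post}} \preccurlyeq 1$ to collapse the rate for $\hat{\beta}_1^{\text{alt}}$ are exactly the bookkeeping the paper leaves implicit.
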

Corollary \ref{cor:alt_IV_consistency} establishes the consistency of $\hat{\beta}^{\text{alt}}$ under both Cases (a) and (b). The consistency regime and convergence rates of $\hat{\beta}^{\text{alt}}_0$ and $\hat{\beta}^{\text{alt}}_2$ remain unchanged across both cases. 
However, in Case (b), where $A^{\text{post}}$ is conditionally mean-independent of the treatment, $\hat{\beta}^{\text{alt}}_1$ converges faster than in Case (a) and maintains the standard rate $\sqrt{n}$.
Additionally, $\hat{\beta}^{\text{alt}}_1$ is consistent under a less restrictive condition in Case (b). 

\subsection{Proof of Theorem \ref{thm:IV_alt_consistency}}

We start by providing some useful lemmas.

\begin{lemma}[Lemma 15 in \cite{LiWager2022}]\label{lemma:15}
Consider an RCT under network interference satisfying Assumption \ref{asu:network}, with treatment assigned independently as $T_i \sim \text{Bernoulli}(\pi)$ for some $0 < \pi < 1$.

\begin{enumerate}
\item Suppose furthermore that if we define $g(w) = \int_0^1 \min(1, g(w, t))dF(t)$, then the function
$g$ is bounded away from 0, i.e.,
\begin{equation}
g(w_i) \geq c_l \text{ for any } w_i. 
\end{equation}
Then for any $k \in \mathbb{N}, k \geq 1$, there exists some constant $C_k$, depending on $k$, s.t.
\begin{enumerate}[(a)]
\item $E\left[ \left( \frac{1}{N_i} \right)^k {1}{\{N_i>0\}} \mid w_i \right] \leq \frac{C_k}{(n q_n c_l)^k}$,
\item $E\left[ \left( \frac{1}{N_i} \right)^k {1}{\{N_i>0\}} \right] \leq \frac{C_k}{(n q_n c_l)^k}$, 
\end{enumerate}

\item Assume the graphon has a finite $K^{th}$ moment, i.e.
\begin{equation*}
E [g(w_1, w_2)^k] \leq c^k_u, \text{ for } k = 1, 2, \ldots, K.
\end{equation*}
Then 
\[
E [(N_i - nq_n g(i))^{2k}] \leq C_k (n q_n c_u)^k
\]
for $k=1,\cdots, K$, where $C_k$ is some constant depending on $k$.
\end{enumerate}
\end{lemma}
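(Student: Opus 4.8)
The plan is to reduce both parts to standard facts about sums of conditionally independent Bernoulli variables; although this is quoted as Lemma~15 of \cite{LiWager2022}, I would reconstruct its proof along the following lines. Under Assumption~\ref{asu:network} the indicators $\{A_{ij}\}_{j\neq i}$ are mutually independent conditional on the latent vector $w=(w_1,\dots,w_n)$, and $N_i=\sum_{j\neq i}A_{ij}$. A useful preliminary observation is that, conditioning only on $w_i$ and marginalizing over the i.i.d.\ $\{w_j\}_{j\neq i}$, the $A_{ij}$ are i.i.d.\ Bernoulli with parameter $p_i=E[A_{ij}\mid w_i]=\int \min(1,q_n g(w_i,t))\,dF(t)$; since $q_n\leq 1$ gives $\min(1,q_n g)\geq q_n\min(1,g)$, we have $p_i\geq q_n g(w_i)\geq q_n c_l$, so $N_i\mid w_i\sim \mathrm{Bin}(n-1,p_i)$ with mean $(n-1)p_i\gtrsim n q_n c_l$. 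This is the scale appearing on the right-hand sides.

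For Part~1 I would bound the inverse moments of this binomial by a truncation/concentration argument. Writing $m_i=(n-1)p_i$, split on $\{N_i\geq m_i/2\}$ and on its complement intersected with $\{N_i\geq 1\}$. On the good event $(1/N_i)^k\leq (2/m_i)^k\leq C_k/(nq_nc_l)^k$; on the bad event use the trivial bound $(1/N_i)^k\leq 1$ together with the multiplicative Chernoff bound $P(N_i<m_i/2\mid w_i)\leq e^{-m_i/8}$. The two regimes are reconciled uniformly in $m_i$: when $m_i$ stays bounded (the bounded-degree case $nq_n\asymp 1$) the target $C_k/(nq_nc_l)^k$ is itself bounded below, so $E[(1/N_i)^k\mathbb{1}\{N_i>0\}\mid w_i]\leq 1$ already suffices after enlarging $C_k$; when $m_i\to\infty$ the exponential $e^{-m_i/8}$ is eventually smaller than any $C_k/m_i^k$ and is absorbed into the concentration term. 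Part~1(a) depends on $w_i$ only through the uniform lower bound $c_l$, so Part~1(b) follows by taking expectations over $w_i$.

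For Part~2 I would decompose the centred degree into a conditional-fluctuation term and a graphon-fluctuation term,
\begin{equation*}
N_i - n q_n g(i) = \bigl(N_i - E[N_i\mid w]\bigr) + \bigl(E[N_i\mid w] - n q_n g(i)\bigr),
\end{equation*}
and control the $2k$-th moment of each using $(a+b)^{2k}\leq 2^{2k-1}(a^{2k}+b^{2k})$. For the first term, conditional on $w$ it is a sum of independent centred bounded variables with total conditional variance $\sum_{j\neq i}\mathrm{Var}(A_{ij}\mid w)\leq q_n\sum_{j\neq i} g(w_i,w_j)$; Rosenthal's inequality (or a Bernstein-type moment bound) gives $E[(N_i-E[N_i\mid w])^{2k}\mid w]\leq C_k\bigl((q_n\sum_{j\neq i} g(w_i,w_j))^k + q_n\sum_{j\neq i} g(w_i,w_j)\bigr)$, and taking expectations while invoking the finite $K$-th moment assumption $E[g(w_1,w_2)^k]\leq c_u^k$ bounds this by $C_k (nq_n c_u)^k$. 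For the second term, $E[N_i\mid w]-nq_n g(i)=q_n\sum_{j\neq i}\bigl(g(w_i,w_j)-g(i)\bigr)-q_n g(i)$ is, given $w_i$, a centred sum over the i.i.d.\ $\{w_j\}_{j\neq i}$ whose $2k$-th moment is at most $C_k(q_n^2 n)^k$ by the same moment inequality; since $q_n\leq 1$ this is dominated by $(nq_n)^k$ and hence by the bound already obtained for the first term.

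The main obstacle is Part~1: producing a single constant $C_k$ valid uniformly across the full sparsity range allowed by Assumption~\ref{asu:network}, from the bounded-degree regime $nq_n\asymp 1$ (where $N_i$ is genuinely likely to be small and only the trivial bound is available) up to the dense regime (where concentration is extremely sharp). The delicate point is verifying that the Chernoff contribution $e^{-m_i/8}$ is always dominated by $C_k/m_i^k$ for a suitable $C_k$, and that the crossover between ``trivial bound suffices'' and ``concentration suffices'' leaves no gap in $m_i$. Part~2 is comparatively routine once Rosenthal's inequality is in hand, the only care being the bookkeeping that isolates the graphon-fluctuation term and confirms it is lower order.
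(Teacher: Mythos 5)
The paper does not actually prove this lemma: it is imported verbatim as Lemma~15 of \cite{LiWager2022} and used as a black box, so there is no in-paper proof to compare against. Your reconstruction is nonetheless sound and follows the standard route one would expect. For Part~1, the reduction to $N_i\mid w_i\sim\mathrm{Bin}(n-1,p_i)$ with $p_i\ge q_nc_l$ is correct (the inequality $\min(1,q_ng)\ge q_n\min(1,g)$ holds for $q_n\le 1$, and is in fact superfluous here since Assumption~\ref{asu:network} already forces $g^{\text{pre}}\in[0,1]$), and the truncation-plus-Chernoff argument closes without any gap: the worry you raise about the crossover in $m_i$ is resolved simply by $\sup_{x>0}x^k e^{-x/8}=(8k/e)^k<\infty$, which makes $e^{-m_i/8}\le C_k/m_i^k$ hold uniformly over all $m_i>0$, covering the bounded-degree and dense regimes with a single constant. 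Part~1(b) then follows because the bound depends on $w_i$ only through $c_l$. For Part~2 your decomposition into the conditional-fluctuation and graphon-fluctuation terms with Rosenthal's inequality is the right idea; two small bookkeeping points deserve attention: (i) the linear Rosenthal term $q_n\sum_j g(w_i,w_j)$ is dominated by $(nq_nc_u)^k$ only because Assumption~\ref{asu:network} guarantees $nq_n\succcurlyeq 1$, and (ii) bounding the $2k$-th moment of the graphon-fluctuation term via $(\mathrm{Var}(g\mid w_i))^k$ formally calls on moments of $g$ up to order $2k$ rather than $k$, which is harmless here since $g$ is $[0,1]$-valued but would need to be stated carefully under the lemma's own moment hypothesis.
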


\begin{lemma}\label{lemma:Zi_alt_phii}
Define $\phi_i$ as an i.i.d. random variable with constant variance.
Define the conditional expectation $Q_j^{\phi} = E\left[ \frac{ A_{ij}^{\text{pre}} \phi_i }{ q_n^{\text{pre}} g_0(i) } \mid w_j \right]$. 
Under Assumptions \ref{asu:network} and \ref{asu:linearY}, 
then
\begin{align}
\frac{1}{n}\sum_{i=1}^n (Z_i^{\text{alt}} - \pi)\phi_i
= \frac{1}{n}\sum_{i=1}^n (T_i - \pi) Q_i^{\phi}
+ O_{\mathbb{P}}\left( \frac{1}{\sqrt{n} \sqrt{n q_n^{\text{pre}}}} \right). 
\label{eq:Ziphi}
\end{align}
\end{lemma}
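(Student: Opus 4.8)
The plan is to follow the template of the proof of Lemma~\ref{lemma:Ziphii} for the non-normalized SSIV, the only genuinely new ingredient being the linearization of the random denominator $N_i^{\text{pre}}=\sum_{j\neq i}A_{ij}^{\text{pre}}$. Writing $\bar N_i=(n-1)q_n^{\text{pre}}g_0(i)$ for its conditional mean given $w_i$ (recall $E(A_{ij}^{\text{pre}}\mid w_i,w_j)=q_n^{\text{pre}}g_0(i,j)$), I would first record, using the convention $0/0=0$, the centered representation
\[
Z_i^{\text{alt}}-\pi=\frac{\sum_{j\neq i}A_{ij}^{\text{pre}}(T_j-\pi)}{N_i^{\text{pre}}}\,\mathbf{1}\{N_i^{\text{pre}}>0\}-\pi\,\mathbf{1}\{N_i^{\text{pre}}=0\},
\]
which isolates the ratio of interest from an isolated-node correction. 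The argument then splits into an oracle step, in which the random denominator is replaced by $\bar N_i$, and a remainder step controlling that replacement together with the isolated-node term.

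For the oracle step I replace $1/N_i^{\text{pre}}$ by $1/\bar N_i$ and reorder the double sum,
\[
\frac{1}{n}\sum_{i=1}^n\frac{\sum_{j\neq i}A_{ij}^{\text{pre}}(T_j-\pi)}{\bar N_i}\,\phi_i=\frac{1}{n}\sum_{j=1}^n(T_j-\pi)\,\frac{1}{n-1}\sum_{i\neq j}\frac{A_{ij}^{\text{pre}}\phi_i}{q_n^{\text{pre}}g_0(i)}.
\]
For fixed $j$, conditional on $w_j$ the summands $A_{ij}^{\text{pre}}\phi_i/(q_n^{\text{pre}}g_0(i))$ are i.i.d.\ across $i$ with conditional mean $Q_j^{\phi}$ and second moment of order $1/q_n^{\text{pre}}$, so the inner average equals $Q_j^{\phi}$ up to an $L_2$ deviation of order $1/\sqrt{(n-1)q_n^{\text{pre}}}$. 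Combining this with the mean-zero, mutually independent factors $(T_j-\pi)$ exactly as in the proof of Lemma~\ref{lemma:Ziphii} yields the leading term $\tfrac1n\sum_j(T_j-\pi)Q_j^{\phi}$ with remainder $O_{\mathbb{P}}\!\big(1/(\sqrt{n}\sqrt{nq_n^{\text{pre}}})\big)$, which is precisely the target order.

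The remainder step is to show that the denominator replacement and the isolated-node term are negligible. Using the elementary bound
\[
\left|\frac{\mathbf{1}\{N_i^{\text{pre}}>0\}}{N_i^{\text{pre}}}-\frac{1}{\bar N_i}\right|\le\frac{|N_i^{\text{pre}}-\bar N_i|}{N_i^{\text{pre}}\bar N_i}\,\mathbf{1}\{N_i^{\text{pre}}>0\}+\frac{\mathbf{1}\{N_i^{\text{pre}}=0\}}{\bar N_i},
\]
I would estimate the induced error by Cauchy--Schwarz together with the moment bounds of Lemma~\ref{lemma:15}, namely $E[(1/N_i^{\text{pre}})^k\mathbf{1}\{N_i^{\text{pre}}>0\}]=O((nq_n^{\text{pre}})^{-k})$ and $E[(N_i^{\text{pre}}-\bar N_i)^{2k}]=O((nq_n^{\text{pre}})^{k})$, while the numerator $\sum_{j\neq i}A_{ij}^{\text{pre}}(T_j-\pi)$ has conditional mean zero and conditional variance $\pi(1-\pi)N_i^{\text{pre}}$ given the network and $w$. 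I expect this step to be the main obstacle: the factors $\sum_{j\neq i}A_{ij}^{\text{pre}}(T_j-\pi)$ are \emph{not} independent across $i$, since neighbouring nodes share the same shocks $T_j$, so a careful second-moment expansion tracking overlapping indices is needed to confirm that the linearization error is of smaller order than $1/(\sqrt{n}\sqrt{nq_n^{\text{pre}}})$. Equally delicate is the isolated-node term $-\pi n^{-1}\sum_i\mathbf{1}\{N_i^{\text{pre}}=0\}\phi_i$; its control relies on the graphon being bounded away from zero, as assumed in Lemma~\ref{lemma:15}, so that $P(N_i^{\text{pre}}=0)\le E[\exp(-(n-1)q_n^{\text{pre}}g_0(i))]$ decays and this term cannot contaminate the mean-zero leading term.
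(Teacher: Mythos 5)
Your oracle step coincides with the paper's proof of this lemma: after reordering the double sum so that the shock index $j$ is outside, the inner average $\frac{1}{n-1}\sum_{i\neq j}A_{ij}^{\text{pre}}\phi_i/\bigl(q_n^{\text{pre}}g_0(i)\bigr)$ concentrates on $Q_j^{\phi}$ at rate $\bigl((n-1)q_n^{\text{pre}}\bigr)^{-1/2}$ conditional on $w_j$, and the independent mean-zero factors $(T_j-\pi)$ turn this into the stated $O_{\mathbb{P}}\bigl(1/(\sqrt{n}\sqrt{nq_n^{\text{pre}}})\bigr)$ error. The gap is in your remainder step. The displayed bound $\bigl|\mathbf{1}\{N_i>0\}/N_i-1/\bar N_i\bigr|\le |N_i-\bar N_i|/(N_i\bar N_i)+\cdots$ followed by Cauchy--Schwarz and the moment bounds of Lemma \ref{lemma:15} cannot reach the claimed rate: typically $|N_i-\bar N_i|/(N_i\bar N_i)\asymp (nq_n^{\text{pre}})^{-3/2}$ while $\sum_{j\neq i}A_{ij}^{\text{pre}}(T_j-\pi)\asymp (nq_n^{\text{pre}})^{1/2}$, so termwise absolute-value bounds give a linearization error of order $(nq_n^{\text{pre}})^{-1}$, which exceeds the target $1/(\sqrt{n}\sqrt{nq_n^{\text{pre}}})=1/(n\sqrt{q_n^{\text{pre}}})$ by a factor $1/\sqrt{q_n^{\text{pre}}}\to\infty$ whenever the network is not dense. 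The same loss occurs in the second-moment expansion you gesture at: applying Cauchy--Schwarz to the $O(n^2)$ off-diagonal covariances again yields only $O_{\mathbb{P}}\bigl((nq_n^{\text{pre}})^{-1}\bigr)$.

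What closes the argument in the paper is a cancellation that your absolute-value bound destroys. After reordering, the remainder is $\frac{1}{n}\sum_j(T_j-\pi)\sum_{i\neq j}B_{ij}$ with $B_{ij}=A_{ij}^{\text{pre}}\phi_i(N_i-\bar N_i)/(\bar N_iN_i)$; the independent mean-zero shocks kill the cross-$j$ terms, and for the off-diagonal pairs inside $E\bigl[(\sum_{i\neq j}B_{ij})^2\bigr]$ the paper replaces $N_i$ in the denominator by $N_i-A_{ij}^{\text{pre}}+1$ (legitimate because $B_{ij}\neq 0$ forces $A_{ij}^{\text{pre}}=1$), which conditional on $w$ is a shifted Binomial independent of $A_{ij}^{\text{pre}}$. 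The exact identity $E[1/(X+1)]=\bigl(1-(1-p)^{n+1}\bigr)/\bigl((n+1)p\bigr)$ then gives $E\bigl[(N_i-\bar N_i)/(\bar N_iN_i)\mid w\bigr]=O\bigl((nq_n^{\text{pre}})^{-2}\bigr)$, as in \eqref{eq:inverBij} --- a full order smaller than any Cauchy--Schwarz bound, precisely because $N_i-\bar N_i$ has conditional mean zero. This is what makes $E\bigl[(\sum_{i\neq j}B_{ij})^2\bigr]=O\bigl((nq_n^{\text{pre}})^{-2}\bigr)$ and the remainder $O_{\mathbb{P}}\bigl(n^{-1/2}(nq_n^{\text{pre}})^{-1}\bigr)$, comfortably inside the target. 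Your isolated-node bookkeeping is actually more explicit than the paper's, but without an exploitation of $E[N_i-\bar N_i\mid w]=0$ of the above kind, the linearization step of your proof does not deliver the stated order.
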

\begin{proof}[Proof of Lemma \ref{lemma:Zi_alt_phii}]
We make use of the proof of \citet[Theorem 4]{LiWager2022}.
By reordering the index, we have
\begin{align*}
\frac{1}{n}\sum_{i=1}^n (Z_i^{\text{alt}} - \pi) \phi_i
=& \frac{1}{n}\sum_{i=1}^n \frac{ \sum_{j=1}^n A_{ij}^{\text{pre}} (T_j - \pi) }{ \sum_{j=1}^n A_{ij}^{\text{pre}} } \phi_i 
= \frac{1}{n}\sum_{j=1}^n (T_j - \pi) 
\sum_{i\ne j}
\frac{ A_{ij}^{\text{pre}} \phi_i }{ \sum_{k=1}^n A_{ik}^{\text{pre}} }  
\end{align*}   
where 
\begin{align*}
\sum_{i\ne j}
\frac{ A_{ij}^{\text{pre}} \phi_i }{ \sum_{k=1}^n A_{ik}^{\text{pre}} }  
=& \sum_{i\ne j}
\frac{ A_{ij}^{\text{pre}} \phi_i }{ N_i } 
= \sum_{i\ne j}
\frac{ A_{ij}^{\text{pre}} \phi_i }{ (n-1) q_n^{\text{pre}} g_0(i) } 
- \sum_{i\ne j}
\frac{ A_{ij}^{\text{pre}} \phi_i (N_i - (n-1)q_n^{\text{pre}} g_0(i)) }{ (n-1) q_n^{\text{pre}} g_0(i) N_i }. 
\end{align*}
For the first term, for fixed $j$, given $w_j$, $\frac{ A_{ij}^{\text{pre}} \phi_i }{ q_n^{\text{pre}} g_0(i) } $ are i.i.d.. 
Then we have 
\begin{align}
E\left[ \left( \frac{1}{n-1} \sum_{i\ne j}
\frac{ A_{ij}^{\text{pre}} \phi_i }{ q_n^{\text{pre}} g_0(i) } - Q_j^{\phi} \right)^2 \right]
&= \frac{1}{n-1} 
E\left[ \left( 
\frac{ A_{ij}^{\text{pre}} \phi_i }{ q_n^{\text{pre}} g_0(i) } - Q_j^{\phi} \right)^2 \right] \notag \\
&\le \frac{1}{n-1} 
E\left[ \left( 
\frac{ A_{ij}^{\text{pre}} \phi_i^2 }{ (q_n^{\text{pre}} g_0(i))^2 } \right) \right]
\le \frac{C}{(n-1) q_n^{\text{pre}}}. 
\label{eq:term1}
\end{align}
This implies that $\frac{1}{n-1} \sum_{i\ne j}
\frac{ A_{ij}^{\text{pre}} \phi_i }{ q_n^{\text{pre}} g_0(i) }$ can be well approximated by $Q_j^{\phi}$ with a small error of $O_{\mathbb{P}}\left( \frac{1}{nq_n^{\text{pre}}} \right)$.
For the second term, we start by noting that 
\begin{align}
\frac{ A_{ij}^{\text{pre}} \phi_i (N_i - (n-1)q_n^{\text{pre}} g_0(i)) }{ q_n^{\text{pre}} g_0(i) N_i }
= \frac{ A_{ij}^{\text{pre}} \phi_i ((N_i - A_{ij}^{\text{pre}} + 1) - (n-1) q_n^{\text{pre}} g_0(i))  }{ q_n^{\text{pre}} g_0(i) (N_i - A_{ij}^{\text{pre}} + 1) }.   
\label{eq:AijNi}
\end{align}
Conditional on $w$, $(N_i - A_{ij}^{\text{pre}})$ is distributed as a $\text{Binomial}(n-1, q_n^{\text{pre}} g_0(i))$. 
By the property of binomial distribution, for a random variable $X\sim \text{Binomial}(n,p)$, we have 
\begin{align*}
E\left( \frac{1}{X+1} \right) 
=& \frac{1-(1-p)^{n+1}}{(n+1)p}.
\end{align*}
Thus, we have
\begin{align}
& E\left[ \frac{ (N_i - A_{ij}^{\text{pre}} + 1) - (n-1) q_n^{\text{pre}} g_0(i)  }{ (n-1)q_n^{\text{pre}} g_0(i)(N_i - A_{ij}^{\text{pre}} + 1) } \mid w \right]
= \frac{ 1 }{ (n-1) q_n^{\text{pre}} g_0(i)} - E\left[ \frac{ 1  }{ N_i - A_{ij}^{\text{pre}} + 1 } \mid w \right] \notag \\
~=~& \frac{1}{(n-1) q_n^{\text{pre}} g_0(i)} - \frac{1-(1-q_n^{\text{pre}} g_0(i))^{n}}{n q_n^{\text{pre}} g_0(i)} 
= \frac{ 1 }{ n(n-1) q_n^{\text{pre}} g_0(i)} 
+ \frac{1}{n} \frac{ (1-q_n^{\text{pre}} g_0(i))^{n} }{ q_n^{\text{pre}} g_0(i) } \notag \\
~\le~ & \frac{ 1 }{ n(n-1) q_n^{\text{pre}} g_0(i)} 
+ \frac{ e^{-C n q_n^{\text{pre}}} }{ n q_n^{\text{pre}} g_0(i) }
\le C \frac{ 1 }{ \left(n q_n^{\text{pre}}\right)^2 }.
\label{eq:inverBij}
\end{align}
By Cauchy--Schwarz inequality and Lemma \ref{lemma:15},
\begin{align}
& E\left[ \left( \frac{ (N_i - A_{ij}^{\text{pre}} + 1) - (n-1) q_n^{\text{pre}} g_0(i)  }{ (n-1) q_n^{\text{pre}} g_0(i) (N_i - A_{ij}^{\text{pre}} + 1) } \right)^2 \mid w \right] \notag \\
\le & \frac{1}{(n-1)^2 (q_n^{\text{pre}} g_0(i))^2} 
\sqrt{E\left( \left( (N_i - A_{ij}^{\text{pre}} + 1) - (n-1) q_n^{\text{pre}} g_0(i)\right)^4 \right)}
\sqrt{E\left( \frac{1}{(N_i - A_{ij}^{\text{pre}} + 1)^4}  \right)} \notag \\
\le & C \frac{1}{\left((n-1) q_n^{\text{pre}}\right)^2} \frac{n q_n^{\text{pre}}}{\left(n q_n^{\text{pre}}\right)^2}
\le C \frac{1}{\left(n q_n^{\text{pre}}\right)^3}. \label{eq:inverBij^2}
\end{align}
Let $B_{ij} = \frac{ A_{ij}^{\text{pre}} \phi_i (N_i - (n-1)q_n^{\text{pre}} g_0(i)) }{ (n-1) q_n^{\text{pre}} g_0(i) N_i }$. 
Then we bound the second moment:
\begin{align*}
E\left[ \left( \sum_{i\ne j} B_{ij} \right)^2 \right]
= \sum_{i\ne j} E\left[ B_{ij}^2 \right]
+ \sum_{(i,k)} E\left[ B_{ij} B_{kj} \right].
\end{align*}
For the diagonal term, we have 
\begin{align*}
& E\left[ B_{ij}^2 \right]
= E\left[ A_{ij}^{\text{pre}} \phi_i^2 \left( \frac{ (N_i - A_{ij}^{\text{pre}} + 1) - (n-1) q_n^{\text{pre}} g_0(i)  }{ (n-1) q_n^{\text{pre}} g_0(i) (N_i - A_{ij}^{\text{pre}} + 1) } \right)^2 \right] 
\le C \frac{1}{ n (n q_n^{\text{pre}})^2 }
\end{align*}
by \eqref{eq:inverBij^2}. 
For the off-diagonal terms, conditional on $w$, $A_{ij}^{\text{pre}}$, $A_{jk}^{\text{pre}}$, $\frac{N_j - (n-1)q_n^{\text{pre}} g_0(w_j)}{N_j}$ and $\frac{N_k - (n-1)q_n^{\text{pre}} g_0(w_k)}{N_k}$ are all independent. Thus
\begin{align*}
 E [B_{ij} B_{ik}] 
=& E \left[ (q_n^{\text{pre}})^2 g_0(i, j) g_0(i, k) \phi_j \phi_k 
E \left[ \frac{ N_j - (n - 1) q_n^{\text{pre}} g_0(w_j) }{ (n-1) q_n^{\text{pre}} g_0(w_j) N_j} \mid w \right] 
E \left[ \frac{ N_k - (n - 1) q_n^{\text{pre}} g_0(w_k) }{ (n-1) q_n^{\text{pre}} g_0(w_k) N_k} \mid w \right] \right] \\
\le & C \frac{1}{n^2 (n q_n^{\text{pre}})^2},
\end{align*}
where the last inequality follows from \eqref{eq:inverBij}. Combing the diagonal and off-diagonal terms, we have
\begin{equation}
E\left[ \left( \sum_{i\ne j}
\frac{ A_{ij}^{\text{pre}} \phi_i (N_i - (n-1)q_n^{\text{pre}} g_0(i)) }{ (n-1) q_n^{\text{pre}} g_0(i) N_i } \right)^2 \right]
\leq \frac{C }{(n q_n^{\text{pre}})^2}.
\label{eq:term2}
\end{equation}
Combining \eqref{eq:term1} and \eqref{eq:term2}, we get
\[
E \left[ \left( \sum_{i\ne j}
\frac{ A_{ij}^{\text{pre}}  }{ N_i } \phi_i - Q_{j}^{\phi} \right)^2 \right] \leq \frac{C}{n q_n^{\text{pre}}}.
\]
This implies that 
\begin{align*}
& \frac{1}{n}\sum_{j=1}^n (T_j - \pi) \left( \sum_{i\ne j}
\frac{ A_{ij}^{\text{pre}} \phi_i }{ \sum\limits_{k\ne i}A_{ik}^{\text{pre}} } - Q_j^{\phi} \right) 
= O_{\mathbb{P}}\left( \frac{1}{\sqrt{n} \sqrt{nq_n^{\text{pre}}}} \right) 
\end{align*} 
and thus 
\begin{align*}
\frac{1}{n}\sum_{i=1}^n (Z_i^{\text{alt}} - \pi)\phi_i
= \frac{1}{n}\sum_{i=1}^n (T_i - \pi) Q_i^{\phi}
+ O_{\mathbb{P}}\left( \frac{1}{\sqrt{n} \sqrt{nq_n^{\text{pre}}}} \right).
\end{align*}
\end{proof}

Define
\[
S_1^0
= \frac{1}{n}\sum_{j=1}^n (T_j - \pi) \sum_{i \ne j} \frac{ A_{ij}^{\text{pre}} r_{0,i} }{ (n-1)q_n^{\text{pre}} g_0(i) }
\]
with  $E(S_1^0) \asymp \frac{ 1 }{ n \max\{q_n^{\text{pre}}, q_n^{\text{post}}\} }$.
In Lemma \ref{lemma:ZiMi_alt} below, we analyze the order of $\frac{1}{n} \sum_{i=1}^n (Z_i^{\text{alt}} - \pi) M_i$.

\begin{lemma}\label{lemma:ZiMi_alt} 
Under Assumptions \ref{asu:network} and \ref{asu:linearY}, we have
\begin{align}
\frac{1}{n} \sum_{i=1}^n (Z_i^{\text{alt}} - \pi) M_i
=& 
E(S_1^0)
+ o_{\mathbb{P}}\left( \frac{ 1 }{ n \max\{q_n^{\text{pre}}, q_n^{\text{post}}\} } \right)
+ O_{\mathbb{P}}\left( \frac{1}{ \sqrt{n}  } \right).
\label{eq:ZiMi_alt}
\end{align}
\end{lemma}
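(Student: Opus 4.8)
The plan is to reuse the additive decomposition $M_i = (\xi_i + \mu_{r_1,i}) + r_{0,i} + (r_{1,i}-\mu_{r_1,i})$ established at the start of this appendix, in which the first bracket is i.i.d.\ in $(T_i,w_i)$ while the last two terms are mean zero given $(T_i,w_i)$, and to split the relevance measure into three pieces:
\[
\frac{1}{n}\sum_{i=1}^n (Z_i^{\text{alt}}-\pi)M_i = I + II + III,
\]
with $I = \frac{1}{n}\sum_i (Z_i^{\text{alt}}-\pi)(\xi_i+\mu_{r_1,i})$, $II = \frac{1}{n}\sum_i (Z_i^{\text{alt}}-\pi)r_{0,i}$, and $III = \frac{1}{n}\sum_i (Z_i^{\text{alt}}-\pi)(r_{1,i}-\mu_{r_1,i})$. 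The claim is that $II$ supplies the signal $E(S_1^0)$ while $I$ and $III$ are absorbed into the noise bounds.

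For $I$, I would invoke Lemma~\ref{lemma:Zi_alt_phii} with $\phi_i = \xi_i + \mu_{r_1,i}$, which is i.i.d.\ with bounded (and, in Case~(b), vanishing) variance, obtaining $I = \frac{1}{n}\sum_i (T_i-\pi)Q_i^{\phi} + O_{\mathbb{P}}\big(1/(\sqrt{n}\sqrt{n q_n^{\text{pre}}})\big)$. Since $Q_i^\phi$ is a function of $w_i$ alone and $T_i-\pi$ is mean zero and independent across $i$, the variables $(T_i-\pi)Q_i^\phi$ are i.i.d.\ mean zero, so the leading average is $O_{\mathbb{P}}(n^{-1/2})$; with $n q_n^{\text{pre}}\succcurlyeq 1$ the residual is also $O_{\mathbb{P}}(n^{-1/2})$, and $I$ lands inside the stated $O_{\mathbb{P}}(n^{-1/2})$ term. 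For $III$, I would bound the sum in $L_2$: writing $Z_i^{\text{alt}}-\pi = \sum_{j\ne i}A_{ij}^{\text{pre}}(T_j-\pi)/N_i$ and using that $(r_{1,i}-\mu_{r_1,i})$ is mean zero given $(T_i,w_i)$ and of order $O_{\mathbb{P}}(1/(nq_n^{\text{post}}))$ (cf.\ the computations behind Lemma~\ref{lemma:airi_ri^2}), the cross-moments with the shocks should render $III$ lower order than both error terms.

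The core of the argument is $II$. Reordering the double sum gives $II = \frac{1}{n}\sum_j (T_j-\pi)\sum_{i\ne j} A_{ij}^{\text{pre}} r_{0,i}/N_i$, and I would replace the random degree $N_i$ by its conditional mean $(n-1)q_n^{\text{pre}} g_0(i)$ exactly as in the proof of Lemma~\ref{lemma:Zi_alt_phii}, controlling the substitution error with the moment bounds of Lemma~\ref{lemma:15} for $1/N_i$ together with the concentration of $N_i$. This reduces $II$ to $S_1^0$ up to an approximation error that I expect to be $o_{\mathbb{P}}\big(1/(n\max\{q_n^{\text{pre}},q_n^{\text{post}}\})\big)$. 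It then remains to show $S_1^0 = E(S_1^0) + \text{(fluctuation)}$, where $E(S_1^0)\asymp 1/(n\max\{q_n^{\text{pre}},q_n^{\text{post}}\})$ is the signal and the fluctuation is controlled by a second-moment bound on $S_1^0$.

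The hard part will be this last variance computation. Unlike in Lemma~\ref{lemma:Zi_alt_phii}, the summand $r_{0,i}=\xi_i\frac{1}{n-1}\sum_{k\ne i}R_{ik}$ is not a function of $(T_i,w_i)$ alone: it is correlated with the treatments of $i$'s neighbors, and these same treatments reappear through the outer factor $(T_j-\pi)$ and the pre-intervention edges $A_{ij}^{\text{pre}}$. Isolating $E(S_1^0)$, which arises precisely from the overlap between a unit's post-intervention and pre-intervention neighbors (hence the $\min\{q_n^{\text{pre}},q_n^{\text{post}}\}/q_n^{\text{post}}$ scaling), and bounding the remaining fluctuation requires enumerating the index-coincidence patterns (shared edges, shared common neighbors) much as in the variance bookkeeping in the proof of Theorem~\ref{thm:asymnormal_ratio_IV}. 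Verifying that each such pattern contributes at an order dominated by either $1/(n\max\{q_n^{\text{pre}},q_n^{\text{post}}\})$ (signal) or $n^{-1/2}$ (noise) is the main obstacle.
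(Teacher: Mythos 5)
Your decomposition and overall architecture match the paper's proof: split $M_i$ into the i.i.d.\ part, $r_{0,i}$, and the $r_{1,i}$ remainder; handle the first via Lemma~\ref{lemma:Zi_alt_phii}; reduce the $r_{0,i}$ term to $S_1^0$ by replacing $N_i$ with $(n-1)q_n^{\text{pre}}g_0(i)$ and controlling the substitution error through the conditional-binomial moment bounds; and extract $E(S_1^0)$ as the signal with a second-moment fluctuation bound over index-coincidence patterns. You have also correctly identified where the real work lies.

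One step in your plan is too quick, however: the treatment of $III=\frac{1}{n}\sum_i(Z_i^{\text{alt}}-\pi)(r_{1,i}-\mu_{r_1,i})$. You argue that because $r_{1,i}-\mu_{r_1,i}$ is mean zero given $(T_i,w_i)$, an $L_2$ bound will make $III$ lower order. But conditional mean-zeroness given $(T_i,w_i)$ does not make the product mean zero: $Z_i^{\text{alt}}-\pi$ is built from the neighbors' treatments $T_j$ and the pre-intervention edges $A_{ij}^{\text{pre}}$, and these same $T_j$'s enter $r_{1,i}$ through the terms $U_{ik}=A_{ik}^{\text{post}}T_k-E(A_{ik}^{\text{post}}T_k\mid T_i,w_i)$. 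Consequently $E[(Z_i^{\text{alt}}-\pi)r_{1,i}]\neq 0$; the paper computes this expectation explicitly (it is of order $\min\{q_n^{\text{pre}},q_n^{\text{post}}\}/(n^2 q_n^{\text{pre}}(q_n^{\text{post}})^2)$, i.e.\ a factor $1/(nq_n^{\text{post}})$ smaller than $E(S_1^0)$) and then runs the same $S_1^1-S_2^1$ decomposition with a full variance enumeration, exactly as for the $r_{0,i}$ term. Your $II$ analysis already recognizes that the neighbors' treatments reappear in both factors, so you should carry that same observation over to $III$ and compute, rather than assume away, its mean.
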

\begin{proof}
See proof in Section \ref{proof:ZiMi_alt}.    
\end{proof}

Now we are ready to prove Theorem \ref{thm:IV_alt_consistency}.

\begin{proof}[Proof of Theorem \ref{thm:IV_alt_consistency}]
We start with deriving the probability limit of $\left( \tilde{Z}^{\text{alt} \top} X \right)^{-1}$:
\begin{align*}
& \left( \tilde{Z}^{\text{alt} \top} X \right)^{-1} 
= \frac{ 1 }{ \det( \tilde{Z}^{\text{alt} \top}X) } \begin{pmatrix}
* & * & * \\
* & d_{22} & d_{23} \\
* & d_{32} & d_{33}
\end{pmatrix}
\end{align*}  
where 
\begin{align*}
d_{22}
~=~& \left( \frac{1}{n}\sum_{i=1}^n M_i Z_i^{\text{alt}}  \right) - \left( \frac{1}{n}\sum_{i=1}^n M_i \right) \left( \frac{1}{n }\sum_{i=1}^n Z_i^{\text{alt}} \right) \\
d_{23}
~=~& a_{23} \\
d_{32}
~=~& - \left( \frac{1}{n}\sum_{i=1}^n T_i Z_i^{\text{alt}} \right) 
+ \left( \frac{1}{n}\sum_{i=1}^n T_i \right)\left( \frac{1}{n }\sum_{i=1}^n Z_i^{\text{alt}} \right)  \\
d_{33} 
~=~& a_{33}
\end{align*}
and
\begin{align*}
\det( \tilde{Z}^{\text{alt} \top}X) = d_{33} d_{22} - d_{32} d_{23}.
\end{align*}
Let $*$ denote the terms that are not relevant to show Theorem \ref{thm:IV_alt_consistency}.
We analyze the terms one by one. \\
\underline{(1): $d_{32}$.}
By Lemma \ref{lemma:Zi_alt_phii}, we have
\begin{align*}
d_{32}
=& - \left( \frac{1}{n}\sum_{i=1}^n (T_i - \pi) (Z_i^{\text{alt}} - \pi) \right) + \left( \frac{1}{n }\sum_{i=1}^n (Z_i^{\text{alt}} - \pi) \right) \left( \frac{1}{n}\sum_{i=1}^n (T_i - \pi) \right) \\
=& - \left( \frac{1}{n}\sum_{i=1}^n (T_i - \pi) E\left[ \frac{ A_{ij}^{\text{pre}} (T_j - \pi) }{ q_n^{\text{pre}} g_0(j) } \mid w_i \right] + O_{\mathbb{P}}\left( \frac{1}{\sqrt{n}\sqrt{n q_n^{\text{pre}}}} \right) \right)  \\
&+ \left( \frac{1}{n}\sum_{i=1}^n (T_i - \pi) E\left[ \frac{ A_{ij}^{\text{pre}} }{ q_n^{\text{pre}} g_0(j) } \mid w_i \right] + O_{\mathbb{P}}\left( \frac{1}{\sqrt{n}\sqrt{n q_n^{\text{pre}}}} \right) \right) O_{\mathbb{P}}\left( \frac{1}{\sqrt{n} } \right) \\
=& O_{\mathbb{P}}\left( \frac{1}{ n \sqrt{q_n^{\text{pre}}}} \right).
\end{align*}
\underline{(2): $\frac{1}{n}\sum_{i=1}^n (Z_i^{\text{alt}} - \bar{Z}) u_i$.}
Similarly, by Lemma \ref{lemma:Zi_alt_phii}, we have
\begin{align*}
& \frac{1}{n}\sum_{i=1}^n (Z_i^{\text{alt}} - \bar{Z}^{\text{alt}}) u_i
= \frac{1}{n}\sum_{i=1}^n (Z_i^{\text{alt}} - \pi) u_i - (\bar{Z}^{\text{alt}} - \pi) \frac{1}{n}\sum_{i=1}^n u_i
= O_{\mathbb{P}}\left( \frac{1}{\sqrt{n}} \right).
\end{align*}
We consider the remaining terms under Cases (a) and (b), respectively.
\paragraph{Case (a)}
\underline{(1): $d_{22}$.} 
By Lemma \ref{lemma:ZiMi_alt}, we have
\begin{align*}
d_{22}
=& \left( \frac{1}{n}\sum_{i=1}^n M_i (Z_i^{\text{alt}} - \pi)  \right) - \left( \frac{1}{n}\sum_{i=1}^n M_i \right) \left( \frac{1}{n }\sum_{i=1}^n (Z_i^{\text{alt}} - \pi)\right) \\
=& \left( \frac{1}{n}\sum_{i=1}^n (\xi_i + \mu_{r_1,i}) (Z_i^{\text{alt}} - \pi)  \right) - \left( \frac{1}{n}\sum_{i=1}^n (\xi_i + \mu_{r_1,i}) \right) \left( \frac{1}{n }\sum_{i=1}^n (Z_i^{\text{alt}} - \pi) \right) \\
&+ \left( \frac{1}{n}\sum_{i=1}^n (r_{0,i} + r_{1,i} - \mu_{r_1,i}) (Z_i^{\text{alt}} - \pi)  \right) - \left( \frac{1}{n}\sum_{i=1}^n (r_{0,i} + r_{1,i} - \mu_{r_1,i}) \right) \left( \frac{1}{n }\sum_{i=1}^n (Z_i^{\text{alt}} - \pi) \right) \\
=& \left( \frac{1}{n}\sum_{i=1}^n (T_i - \pi) E\left[ \frac{ A_{ij}^{\text{pre}} (\xi_j + \mu_{r_1,j}) }{ q_n^{\text{pre}} g_0(j) } \mid w_i \right] + O_{\mathbb{P}}\left( \frac{ 1 }{\sqrt{n} \sqrt{n q_n^{\text{pre}}} } \right) \right) \\
&- \left( E(\xi_i + \mu_{r_1,i}) + O_{\mathbb{P}}\left( \frac{ 1 }{\sqrt{n} } \right) \right) 
\left( \frac{1}{n}\sum_{i=1}^n (T_i - \pi) E\left[ \frac{ A_{ij}^{\text{pre}} }{ q_n^{\text{pre}} g_0(j) } \mid w_i \right] + O_{\mathbb{P}}\left( \frac{ 1 }{\sqrt{n} \sqrt{n q_n^{\text{pre}}} } \right) \right) \\
&+ \left( \frac{1}{n}\sum_{i=1}^n (r_{0,i} + r_{1,i} - \mu_{r_1,i}) (Z_i^{\text{alt}} - \pi)  \right) 
- O_{\mathbb{P}}\left( \frac{ 1 }{ n\sqrt{q_n^{\text{post}}} } \right)
\left( \frac{1}{n }\sum_{i=1}^n (Z_i^{\text{alt}} - \pi) \right) \\
=& E(S_1^0)
+ O_{\mathbb{P}}\left( \frac{ 1 }{\sqrt{n} } \right) 
+ o_{\mathbb{P}}\left( \frac{ 1 }{ n \max\{q_n^{\text{pre}}, q_n^{\text{post}}\} } \right).
\end{align*}
\underline{(2): $d_{23}$.}
By \eqref{eq:a_23_limit}, we can show $d_{23}
= - \textup{Cov}(T_i, \xi_i + \mu_{r_1,i})
+ O_{\mathbb{P}}\left( \frac{1}{ \sqrt{n} } \right)$. \\
\underline{(3): $\det( \tilde{Z}^{\text{alt} \top}X)$.}
By combining the results above, we have 
\begin{align*}
& \det( \tilde{Z}^{\text{alt} \top}X)
= d_{33} d_{22} - d_{32} d_{23} \\
=& \left( \pi(1-\pi) + O_{\mathbb{P}}\left( \frac{ 1 }{\sqrt{n} } \right) 
\right)
\left( E(S_1^0)
+ O_{\mathbb{P}}\left( \frac{ 1 }{\sqrt{n} } \right) 
+ o_{\mathbb{P}}\left( \frac{ 1 }{ n \max\{q_n^{\text{pre}}, q_n^{\text{post}}\} } \right) \right) \\
&- \left( - \textup{Cov}(T_i, \xi_i + \mu_{r_1,i})
+ O_{\mathbb{P}}\left( \frac{1}{ \sqrt{n} } \right) \right)
O_{\mathbb{P}}\left( \frac{1}{ n \sqrt{q_n^{\text{pre}}}} \right) \\
=& \pi(1-\pi) E(S_1^0) + O_{\mathbb{P}}\left( \frac{ 1 }{\sqrt{n} } \right)
+ o_{\mathbb{P}}\left( \frac{ 1 }{ n \max\{q_n^{\text{pre}}, q_n^{\text{post}}\} } \right).
\end{align*}
Therefore, by the closed form of IV estimates, we can show that
\begin{align*}
\hat{\beta}^\text{alt}_1 - {\beta}_1
~=~& \frac{d_{22} \frac{1}{n}\sum_{i=1}^n (T_i - \bar{T}) u_i + d_{23} \frac{1}{n}\sum_{i=1}^n (Z_i^{\text{alt}} - \bar{Z}^{\text{alt}}) u_i }{ d_{33} d_{22} - d_{32} d_{23} }  \\
~=~& \frac{ \left( E(S_1^0) + O_{\mathbb{P}}\left( \frac{ 1 }{\sqrt{n} } \right) \right) O_{\mathbb{P}}\left( \frac{ 1 }{\sqrt{n} } \right) + \left( - \textup{Cov}(T_i, \xi_i + \mu_{r_1,i})
+ O_{\mathbb{P}}\left( \frac{1}{ \sqrt{n} } \right) \right) O_{\mathbb{P}}\left( \frac{ 1 }{\sqrt{n} } \right) }{ \pi(1-\pi) E(S_1^0) + O_{\mathbb{P}}\left( \frac{ 1 }{\sqrt{n} } \right) } \\
~=~& 
\frac{ O_{\mathbb{P}}\left( \frac{1}{\sqrt{n} } \right) }{ \pi(1-\pi) E(S_1^0) + O_{\mathbb{P}}\left( \frac{1}{\sqrt{n} } \right) }, 
\end{align*}
and
\begin{align*}
\hat{\beta}^\text{alt}_2 - {\beta}_2
~=~& \frac{ d_{33} \frac{1}{n}\sum_{i=1}^n (Z_i^{\text{alt}} - \bar{Z}^{\text{alt}}) u_i + d_{32} \frac{1}{n}\sum_{i=1}^n (T_i - \bar{T}) u_i }{ d_{33} d_{22} - d_{32} d_{23} } \\
~=~& \frac{ \left( \pi(1-\pi) + O_{\mathbb{P}}\left( \frac{ 1 }{\sqrt{n} } \right) \right) O_{\mathbb{P}}\left( \frac{ 1 }{\sqrt{n} } \right) + O_{\mathbb{P}}\left( \frac{ 1 }{ n \sqrt{q_n^{\text{pre}}} } \right) O_{\mathbb{P}}\left( \frac{ 1 }{\sqrt{n} } \right) }{ \pi(1-\pi) E(S_1^0) + O_{\mathbb{P}}\left( \frac{ 1 }{\sqrt{n} } \right) } \\
~=~& \frac{ O_{\mathbb{P}}\left( \frac{1}{\sqrt{n} } \right) }{ \pi(1-\pi) E(S_1^0) + O_{\mathbb{P}}\left( \frac{1}{\sqrt{n} } \right)  }.
\end{align*}
Recall that $E(S_1^0) \asymp \frac{1}{ n \max\{q_n^{\text{pre}}, q_n^{\text{post}}\} } $.
We can conclude that with $\max\{q_n^{\text{pre}}, q_n^{\text{post}}\} \prec \frac{1}{\sqrt{n}}$,
then
\[
\hat{\beta}^{\text{alt}} - {\beta} =
O_{\mathbb{P}}\left( \sqrt{n} \max\{q_n^{\text{pre}}, q_n^{\text{post}}\} \right).
\]

\paragraph{Case (b)}
\underline{(1) $d_{22}$.}
By Lemma \ref{lemma:ZiMi_alt}, we have
\begin{align*}
& d_{22}
= \left( \frac{1}{n}\sum_{i=1}^n M_i Z_i^{\text{alt}}  \right) - \left( \frac{1}{n}\sum_{i=1}^n M_i \right) \left( \frac{1}{n }\sum_{i=1}^n Z_i^{\text{alt}} \right) 
= \left( \frac{1}{n}\sum_{i=1}^n M_i (Z_i^{\text{alt}} - \pi)  \right) - \left( \frac{1}{n}\sum_{i=1}^n M_i \right) \left( \frac{1}{n }\sum_{i=1}^n (Z_i^{\text{alt}} - \pi)\right) \\
&= \left( \frac{1}{n}\sum_{i=1}^n \mu_{r_1,i} (Z_i^{\text{alt}} - \pi)  \right) - \left( \frac{1}{n}\sum_{i=1}^n \mu_{r_1,i} \right) \left( \frac{1}{n }\sum_{i=1}^n (Z_i^{\text{alt}} - \pi) \right) \\
&+ \left( \frac{1}{n}\sum_{i=1}^n (r_{0,i} + r_{1,i} - \mu_{r_1,i}) (Z_i^{\text{alt}} - \pi)  \right) - \left( \frac{1}{n}\sum_{i=1}^n (r_{0,i} + r_{1,i} - \mu_{r_1,i}) \right) \left( \frac{1}{n }\sum_{i=1}^n (Z_i^{\text{alt}} - \pi) \right) \\
&= \left( \frac{1}{n}\sum_{i=1}^n (T_i - \pi) E\left[ \frac{ A_{ij}^{\text{pre}} \mu_{r_1,j} }{ q_n^{\text{pre}} g_0(j) } \mid w_i \right] + O_{\mathbb{P}}\left( \frac{ 1 }{\sqrt{n} \sqrt{n q_n^{\text{pre}}} } \right) \right) 
- \left( E(\mu_{r_1,i}) + O_{\mathbb{P}}\left( \frac{ 1 }{\sqrt{n} n q_n^{\text{post}} } \right) \right) 
\left( \frac{1}{n }\sum_{i=1}^n (Z_i^{\text{alt}} - \pi) \right) \\
&+ \left( \frac{1}{n}\sum_{i=1}^n (r_{0,i} + r_{1,i} - \mu_{r_1,i}) (Z_i^{\text{alt}} - \pi)  \right) 
- O_{\mathbb{P}}\left( \frac{ 1 }{ n\sqrt{q_n^{\text{post}}} } \right)
\left( \frac{1}{n }\sum_{i=1}^n (Z_i^{\text{alt}} - \pi) \right) \\
&= E(S_1^0)
+ O_{\mathbb{P}}\left( \frac{ 1 }{\sqrt{n} n q_n^{\text{post}} } \right)
+ o_{\mathbb{P}}\left( \frac{ 1 }{ n \max\{q_n^{\text{pre}}, q_n^{\text{post}}\} } \right).
\end{align*}
\underline{(2) $d_{23}$.}
By \eqref{eq:a_23_limit}, we can show $d_{23} = O_{\mathbb{P}}\left( \frac{1}{n\sqrt{q_n^{\text{post}}}} \right)$. \\
\underline{(3) $\det( \tilde{Z}^{\text{alt} \top}X)$.}
By combining the results above, we have 
\begin{align*}
& \det( \tilde{Z}^{\text{alt} \top}X)
= d_{33} d_{22} - d_{32} d_{23} \\
&= \left( \pi(1-\pi) + O_{\mathbb{P}}\left( \frac{ 1 }{\sqrt{n} } \right) 
\right)
\left( E(S_1^0)
+ O_{\mathbb{P}}\left( \frac{ 1 }{\sqrt{n} n q_n^{\text{post}} } \right) 
+ o_{\mathbb{P}}\left( \frac{ 1 }{ n \max\{q_n^{\text{pre}}, q_n^{\text{post}}\} } \right) \right) \\
&- O_{\mathbb{P}}\left( \frac{1}{ n \sqrt{q_n^{\text{post}}} } \right)
O_{\mathbb{P}}\left( \frac{1}{ n \sqrt{q_n^{\text{pre}}}} \right) \\
&= \pi(1-\pi) E(S_1^0) 
+ O_{\mathbb{P}}\left( \frac{ 1 }{\sqrt{n} n q_n^{\text{post}}} \right)
+ o_{\mathbb{P}}\left( \frac{ 1 }{ n \max\{q_n^{\text{pre}}, q_n^{\text{post}}\} } \right).
\end{align*}
Therefore, we have 
\begin{align*}
& \hat{\beta}^{\text{alt}}_1 - {\beta}_1
~=~ \frac{d_{22} \frac{1}{n}\sum_{i=1}^n (T_i - \bar{T}) u_i + d_{23} \frac{1}{n}\sum_{i=1}^n (Z_i^{\text{alt}} - \bar{Z}^{\text{alt}}) u_i }{ d_{33} d_{22} - d_{32} d_{23} }  \\
~=~& \frac{ \left( E(S_1^0)
+ O_{\mathbb{P}}\left( \frac{ 1 }{\sqrt{n} n q_n^{\text{post}} } \right)
+ o_{\mathbb{P}}\left( \frac{ 1 }{ n \max\{q_n^{\text{pre}}, q_n^{\text{post}}\} } \right) \right) O_{\mathbb{P}}\left( \frac{1}{\sqrt{n}} \right) + O_{\mathbb{P}}\left( \frac{1}{n\sqrt{q_n^{\text{post}}}} \right) O_{\mathbb{P}}\left( \frac{1}{\sqrt{n} } \right) }{ \pi(1-\pi) E(S_1^0) 
+ O_{\mathbb{P}}\left( \frac{ 1 }{\sqrt{n} n q_n^{\text{post}}} \right)
+ o_{\mathbb{P}}\left( \frac{ 1 }{ n \max\{q_n^{\text{pre}}, q_n^{\text{post}}\} } \right) } \\
~=~& \frac{ O_{\mathbb{P}}\left( \frac{1}{\sqrt{n} n q_n^{\text{post}}} \right)   }{ \pi(1-\pi) E(S_1^0) +  O_{\mathbb{P}}\left( \frac{ 1 }{\sqrt{n} n q_n^{\text{post}} } \right) 
+ o_{\mathbb{P}}\left( \frac{ 1 }{ n \max\{q_n^{\text{pre}}, q_n^{\text{post}}\} } \right) }. 
\end{align*}
Therefore, with $ \max\{q_n^{\text{pre}}, q_n^{\text{post}} \} \prec \sqrt{n}  q_n^{\text{post}} $,
then
\begin{align*}
 \hat{\beta}^{\text{alt}}_1 - {\beta}_1
= O_{\mathbb{P}}\left( \frac{1}{\sqrt{n}} \max\left\{ \frac{ \max\{q_n^\text{pre}, q_n^\text{post} \} }{  \sqrt{ q_n^\text{post}}  }, 1 \right\}  \right).
\end{align*}
Then,
\begin{align*}
\hat{\beta}^{\text{alt}}_2 - {\beta}_2
~=~& \frac{ d_{33} \frac{1}{n}\sum_{i=1}^n (Z_i^{\text{alt}} - \bar{Z}^{\text{alt}} ) u_i + d_{32} \frac{1}{n}\sum_{i=1}^n (T_i - \bar{T}) u_i }{ d_{33} d_{22} - d_{32} d_{23} } \\
~=~& \frac{ \left( \pi(1-\pi) + O_{\mathbb{P}}\left( \frac{1}{\sqrt{n}} \right) \right) O_{\mathbb{P}}\left( \frac{1}{\sqrt{n}} \right) + O_{\mathbb{P}}\left( \frac{1}{ n \sqrt{q_n^{\text{pre}}}} \right) O_{\mathbb{P}}\left( \frac{1}{ \sqrt{n}} \right) }{ \pi(1-\pi) E(S_1^0) 
+ O_{\mathbb{P}}\left( \frac{ 1 }{\sqrt{n} n q_n^{\text{post}}} \right)
+ o_{\mathbb{P}}\left( \frac{ 1 }{ n \max\{q_n^{\text{pre}}, q_n^{\text{post}}\} } \right) }  \\
~=~& \frac{ O_{\mathbb{P}}\left( \frac{1}{\sqrt{n}} \right) }{ \pi(1-\pi) E(S_1^0) 
+ O_{\mathbb{P}}\left( \frac{ 1 }{\sqrt{n} n q_n^{\text{post}}} \right)
+ o_{\mathbb{P}}\left( \frac{ 1 }{ n \max\{q_n^{\text{pre}}, q_n^{\text{post}}\} } \right) }.
\end{align*}
Therefore, with $n \max\{q_n^{\text{pre}}, q_n^{\text{post}}\} \prec \sqrt{n} $,
then
\begin{align*}
\hat{\beta}^{\text{alt}}_2 - {\beta}_2
= O_{\mathbb{P}}\left( \sqrt{n} \max\{q_n^{\text{pre}}, q_n^{\text{post}}\} \right). 
\end{align*}
Therefore, $\hat{\beta}^{\text{alt}}_0$ has the same convergence rate as $\hat{\beta}^{\text{alt}}_2$, such that 
when 
$n \max\{q_n^{\text{pre}}, q_n^{\text{post}}\} \prec \sqrt{n} $, 
\begin{align*}
\hat{\beta}^{\text{alt}}_0 - {\beta}_0
= O_{\mathbb{P}}\left( \sqrt{n} \max\{q_n^{\text{pre}}, q_n^{\text{post}}\} \right). 
\end{align*}

\end{proof}

\begin{proof}[Proof of Corollary \ref{cor:alt_IV_consistency}]
Corollary \ref{cor:alt_IV_consistency} is a direct result of Theorem \ref{thm:IV_alt_consistency} under $q_n^{\text{pre}} \preccurlyeq q_n^{\text{post}}$.    
\end{proof}

\subsection{Simulation results}
In this subsection, we provide simulation results to support Theorem \ref{thm:IV_alt_consistency}.
The networks and outcomes are generated in the same fashion as in Section \ref{sec:MC_IV}.
Table \ref{tab:IV_alt} presents the results for sparsity levels $q_n^{\text{pre}} = q_n^{\text{post}} = q_n$, with $q_n \in \{n^{-2/3}, n^{-1/3}, n^{-1/5}\}$ and sample sizes $n \in \{200, 800\}$.
We report the average of the estimates under ``$\hat{\beta}^{\text{alt}}$,'' along with the standard deviation across the simulations under ``std($\hat{\beta}^{\text{alt}}$).''
It demonstrates that the estimates are consistent if the standard deviation decreases with larger sample sizes.
We observe that the estimates are consistent for all designs when 
$q_n = n^{-2/3}$. 
However, when $q_n$ exceeds $n^{-1/2}$, $\hat{\beta}^{\text{alt}}$ is no longer consistent for Designs 1 and 2. 
For Designs 3 and 4, $\hat{\beta}^{\text{alt}}_1$ remains consistent, while $\hat{\beta}^{\text{alt}}_2$ is not.

\begin{table}[ht]
\caption{Simulation results of normalized SSIV in \eqref{eq:SSIV_ratio}}
\centering
\footnotesize
\label{tab:IV_alt}
\begin{tabular}{cl|cc|cc|cc}
\hline \hline
& & \multicolumn{2}{c}{$q_n = n^{-2/3}$} & \multicolumn{2}{c}{$q_n = n^{-1/3}$}	 & \multicolumn{2}{c}{$q_n = n^{-1/5}$} \\			
\hline	
Design  & $\beta$ & $\hat{\beta}^{\text{alt}}$ &  std($\hat{\beta}^{\text{alt}}$) & $\hat{\beta}^{\text{alt}}$ &  std($\hat{\beta}^{\text{alt}}$) & $\hat{\beta}^{\text{alt}}$ &  std($\hat{\beta}^{\text{alt}}$) \\
\hline 
\multicolumn{8}{c}{$n = 200$} \\
\hline 
1	&	$\beta_1=1$	&	1.003	&	0.082	&	1.001	&	0.099	&	1.004	&	0.209	\\
	&	$\beta_2=0.5$	&	0.476	&	0.192	&	0.486	&	0.639	&	0.467	&	1.367	\\
\hline															
2	&	$\beta_1=1$	&	1.001	&	0.082	&	0.990	&	0.131	&	0.935	&	0.386	\\
	&	$\beta_2=0.5$	&	0.506	&	0.229	&	0.606	&	0.908	&	1.005	&	2.528	\\
\hline															
3	&	$\beta_1=1$	&	1.000	&	0.082	&	0.998	&	0.084	&	0.997	&	0.083	\\
	&	$\beta_2=0.5$	&	0.518	&	0.164	&	0.503	&	0.474	&	0.515	&	0.759	\\
\hline															
4	&	$\beta_1=1$	&	1.000	&	0.082	&	1.001	&	0.087	&	1.001	&	0.092	\\
	&	$\beta_2=0.5$	&	0.496	&	0.211	&	0.494	&	0.797	&	0.509	&	1.701	\\
\hline 
\multicolumn{8}{c}{$n = 800$} \\
\hline 
1	&	$\beta_1=1$	&	1.003	&	0.039	&	1.002	&	0.113	&	1.006	&	0.336	\\
	&	$\beta_2=0.5$	&	0.488	&	0.115	&	0.483	&	0.716	&	0.466	&	1.937	\\
\hline															
2	&	$\beta_1=1$	&	1.000	&	0.040	&	0.986	&	0.135	&	0.841	&	2.325	\\
	&	$\beta_2=0.5$	&	0.499	&	0.149	&	0.605	&	0.888	&	1.479	&	13.681	\\
\hline															
3	&	$\beta_1=1$	&	0.999	&	0.041	&	1.000	&	0.041	&	1.000	&	0.043	\\
	&	$\beta_2=0.5$	&	0.505	&	0.105	&	0.499	&	0.370	&	0.491	&	0.637	\\
\hline															
4	&	$\beta_1=1$	&	1.000	&	0.040	&	0.999	&	0.043	&	1.000	&	0.047	\\
	&	$\beta_2=0.5$	&	0.500	&	0.142	&	0.493	&	0.796	&	0.496	&	2.075	\\
\hline
\hline
\end{tabular}
\caption*{\small Note: Simulation results for the IV estimators using the normalized SSIV with $n \in \{200, 800\}$, $q_n \in \{n^{-2/3}, n^{-1/3}, n^{-1/5}\}$, and 5, 000 replications. }
\end{table}

\section{Proof of Lemmas}
\label{app:lemma}

\subsection{Auxiliary Lemmas}

\begin{lemma}\label{lemma:Eri^c}
Under Assumption \ref{asu:network}, we have for $c\in \{2,4\}$,
\begin{align}
E\left[ (r_{0,i} + r_{1,i} - \mu_{r_1,i})^c \right]
= O\left( \frac{1}{(n q_n^{\text{post}})^{c/2}} \right). 
\label{eq:Er^c}
\end{align}
\end{lemma}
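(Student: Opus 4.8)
The plan is to work with the closed form $M_i-\xi_i=x_i/y_i-\theta_i(x)/\theta_i(y)$, where $x_i=(n-1)^{-1}\sum_{k\neq i}A_{ik}^{\text{post}}T_k$, $y_i=(n-1)^{-1}\sum_{k\neq i}A_{ik}^{\text{post}}$, and $\theta_i=(\theta_i(x),\theta_i(y))=(E(A_{ij}^{\text{post}}T_j\mid T_i,w_i),E(A_{ij}^{\text{post}}\mid T_i,w_i))$ with $\theta_i(x)\asymp\theta_i(y)\asymp q_n^{\text{post}}$. Since $E(r_{0,i}\mid T_i,w_i)=\xi_i\,E\big((n-1)^{-1}\sum_k R_{ik}\mid T_i,w_i\big)=0$, we have $\mu_{r_1,i}=E(M_i-\xi_i\mid T_i,w_i)$, so $r_{0,i}+r_{1,i}-\mu_{r_1,i}$ is exactly the conditionally centered version of $M_i-\xi_i$, and by the $c_r$-inequality $E[(r_{0,i}+r_{1,i}-\mu_{r_1,i})^c]\le 2^{c}\,E\big[E[(M_i-\xi_i)^c\mid T_i,w_i]\big]$. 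It therefore suffices to bound the conditional raw moments of $M_i-\xi_i$. Two reductions simplify matters: first, $|M_i-\xi_i-\mu_{r_1,i}|\le 2$ deterministically (as $M_i,\xi_i\in[0,1]$ and $|\mu_{r_1,i}|\le1$); second, Assumption \ref{asu:network} gives $q_n^{\text{post}}\succcurlyeq n^{-1}$, hence $n q_n^{\text{post}}\succcurlyeq 1$, so the target rate $(nq_n^{\text{post}})^{-c/2}$ is $\asymp 1$ in the bounded-degree regime $n q_n^{\text{post}}\asymp 1$, where the claim is immediate from boundedness. It thus remains only to treat the regime $n q_n^{\text{post}}\to\infty$.

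In that regime I would condition on $(T_i,w_i)$ and introduce the typical-degree event $\mathcal{G}_i=\{N_i\ge\tfrac12(n-1)\theta_i(y)\}$, equivalently $y_i\ge\tfrac12\theta_i(y)$. On $\mathcal{G}_i^{c}$ the contribution is negligible: using $|M_i-\xi_i-\mu_{r_1,i}|\le2$ and the Chernoff bound for the conditionally binomial $N_i$ (with mean $\asymp nq_n^{\text{post}}\to\infty$; cf. the inverse-moment estimates of Lemma \ref{lemma:15}), $\mathbb{P}(\mathcal{G}_i^{c}\mid T_i,w_i)$ is exponentially small in $nq_n^{\text{post}}$, so $E[(\,\cdot\,)^c\mathbbm{1}_{\mathcal{G}_i^{c}}]=o((nq_n^{\text{post}})^{-c/2})$. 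On $\mathcal{G}_i$ the denominator is controlled, since $y_i\theta_i(y)\ge\tfrac12\theta_i(y)^2\asymp (q_n^{\text{post}})^2$, giving the pointwise bound $|M_i-\xi_i|\,\mathbbm{1}_{\mathcal{G}_i}\le \tfrac{2}{\theta_i(y)^2}\,|\mathcal{N}_i|$, where $\mathcal{N}_i=\theta_i(y)(x_i-\theta_i(x))-\theta_i(x)(y_i-\theta_i(y))=(n-1)^{-1}\sum_{k\neq i}\big(\theta_i(y)U_{ik}-\theta_i(x)W_{ik}\big)$ is a sum of conditionally i.i.d.\ mean-zero summands, each bounded by $O(q_n^{\text{post}})$ with conditional variance $\asymp (q_n^{\text{post}})^3$ (because $EU_{ik}^2,\,EW_{ik}^2\asymp q_n^{\text{post}}$ and $\theta_i(x),\theta_i(y)\asymp q_n^{\text{post}}$).

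The engine is then Rosenthal's inequality applied to $\sum_{k}(\theta_i(y)U_{ik}-\theta_i(x)W_{ik})$: for $c\in\{2,4\}$ it yields $E[\mathcal{N}_i^{\,c}\mid T_i,w_i]=O\big(n^{-c/2}(q_n^{\text{post}})^{3c/2}+n^{-(c-1)}(q_n^{\text{post}})^{2c+1}\big)$, and multiplying by $\theta_i(y)^{-2c}\asymp (q_n^{\text{post}})^{-2c}$ produces $O\big((nq_n^{\text{post}})^{-c/2}+(nq_n^{\text{post}})^{-(c-1)}\big)=O((nq_n^{\text{post}})^{-c/2})$, the last step using $nq_n^{\text{post}}\succcurlyeq1$. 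Taking expectation over $(T_i,w_i)$ with constants depending only on $\pi$ and the uniform bound $g^{\text{post}}\le1$ gives \eqref{eq:Er^c}. I expect the main obstacle to be exactly the denominator $1/(y_i\theta_i(y))$ — equivalently the factors $\theta_i^*(y)^{-2},\theta_i^*(y)^{-3}$ in the second-order remainder $r_{1,i}$ — which blows up when the realized degree $N_i$ is atypically small; the entire argument hinges on the good-event decomposition $\{\mathcal{G}_i,\mathcal{G}_i^{c}\}$ together with the exponential smallness of $\mathbb{P}(\mathcal{G}_i^{c})$, which is what lets me replace the random denominator by the deterministic bound $\theta_i(y)^{-2}$ on $\mathcal{G}_i$ and absorb the tail on $\mathcal{G}_i^{c}$ via boundedness.
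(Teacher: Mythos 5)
Your proof is correct, but it takes a genuinely different route from the paper's. The paper works term by term with the Lagrange-form Taylor remainders: it expands $r_{0,i}^c$ and $r_{1,i}^c$ via the multinomial theorem, kills every index configuration in which some $R_{ij}$, $U_{ij}$ or $W_{ij}$ appears with multiplicity one (using their conditional mean-zero property), and counts powers of $q_n^{\text{post}}$ over the surviving configurations; throughout, the random intermediate point $\theta_i^*(y)$ is treated as deterministically of order $q_n^{\text{post}}$. You instead bypass the Taylor remainder via the exact identity $M_i-\xi_i=\mathcal{N}_i/(y_i\theta_i(y))$, control the random denominator with the good event $\mathcal{G}_i$ plus a Chernoff tail, and get the numerator moments from Rosenthal's inequality. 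What each buys: your argument is shorter and, importantly, rigorously handles the small-degree event on which $1/\theta_i^*(y)$ (equivalently $1/y_i$) blows up -- a point the paper's proof glosses over -- while the paper's explicit decomposition into $r_{0,i}$ and $r_{1,i}$ is reused heavily in later lemmas (cross-$i$ covariances, products with $Z_i^{\textsc{ssiv}}$), which your aggregate moment bound would not by itself deliver. Two small remarks: the intermediate Rosenthal exponent $(q_n^{\text{post}})^{2c+1}$ should read $(q_n^{\text{post}})^{c+1}$ (each summand of $\mathcal{N}_i$ is bounded by $Cq_n^{\text{post}}$ with second moment $O((q_n^{\text{post}})^{3})$, giving $E|\zeta_k|^c=O((q_n^{\text{post}})^{c+1})$); your stated final rate $(nq_n^{\text{post}})^{-c/2}+(nq_n^{\text{post}})^{-(c-1)}$ is the one implied by the corrected exponent, so the conclusion is unaffected. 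Also, both your Chernoff step and the paper's $\theta_i(y)\asymp q_n^{\text{post}}$ bookkeeping implicitly require $E(g^{\text{post}}\mid T_i,w_i)$ bounded away from zero, which Assumption \ref{asu:network} does not state explicitly but which the paper assumes throughout (cf.\ the quoted Lemma \ref{lemma:15}).
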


\begin{proof}[Proof of Lemma \ref{lemma:Eri^c}]
We only show it for $c=4$ since the argument for $c=2$ is analogous and we omit it for brevity.
By $C_r$ inequality, we have
\begin{align}
E\left[ (r_{0,i} + r_{1,i} - \mu_{r_1,i})^4 \right]  
\le 3^{3} \cdot \left[  
E\left( r_{0,i}^4 \right) 
+ E\left( r_{1,i}^4 \right) 
+ E\left( \mu_{r_1,i}^4 \right) 
\right]. 
\label{eq:M_i^c}
\end{align}   
For the first term, by the Multinomial Theorem, we have 
\begin{align*}
E\left[ r_{0,i}^4 \mid T_i, w_i \right]
=& \frac{\xi_i^4}{(n-1)^4}  E\left[ \left( \sum_{j\ne i} R_{ij} \right)^4 \mid T_i, w_i \right] 
= \frac{\xi_i^4}{(n-1)^4} 
E\left[ \sum_{(j_1,j_2):j_1 \ne j_2} R_{ij_1}^{2} R_{i_2}^{2} \mid T_i, w_i \right].
\end{align*}
where the last equality follows from $E\left( R_{ij} \mid T_i, w_i \right) = 0$ such that any product involving $R_{ij}$ with a multiplicity of 1 has a mean of zero.
Recall that
\[
R_{i{j_k}}
= \frac{ A_{i{j_k}}^{\text{post}} T_{j_{j_k}} }{E(A_{i{j_k}}^{\text{post}} T_{j_k}\mid T_i, w_i)} - \frac{A_{i{j_k}}^{\text{post}}}{E(A_{i{j_k}} \mid T_i, w_i)}
= \frac{A_{i{j_k}}^{\text{post}} }{q_n^{\text{post}}}  \left( \frac{ T_{j_k} }{E(f_1(i,{j_k}) T_{j_k} \mid T_i, w_i)} - \frac{1}{f_1(i)} \right).
\]
Therefore, $E\left[ r_{0,i}^4 \right]
\le \frac{C}{(n q_n^{\text{post}})^2}.$
For the second term in \eqref{eq:M_i^c}, 
define $X_{ij} = U_{ij} W_{ij}$ and $V_{ij} = W_{ij}^2$.
By AM--GM inequality, we have
\begin{align}
& r_{1,i}^4
= \frac{1}{(n-1)^{8}} 
\left( - \sum_{j\ne i} \frac{X_{ij}}{\theta_i^*(y)^2}   
+ \frac{2\theta_i^*(x)}{\theta_i^*(y)^3} \sum_{j\ne i} V_{ij} 
- \sum_{\substack{(j,l): j\ne l}}  \frac{U_{ij} W_{ik} }{\theta_i^*(y)^2} 
+ \frac{2\theta_i^*(x)}{\theta_i^*(y)^3}   
\sum_{\substack{(j,l): j\ne l}} W_{ij} W_{ik} \right)^4 \notag \\
\le & \frac{ 4^{3} }{(n-1)^{8}} 
\left[ \left( \sum_{j\ne i} \frac{X_{ij}}{\theta_i^*(y)^2} 
 \right)^4
+ \left( \frac{2\theta_i^*(x)}{\theta_i^*(y)^3} \sum_{j\ne i} V_{ij} \right)^4
+ \left(  
\sum_{\substack{(j,l): j\ne l}} \frac{U_{ij} W_{il}}{\theta_i^*(y)^2} \right)^4
+ \left( \frac{2\theta_i^*(x)}{\theta_i^*(y)^3}   
\sum_{\substack{(j,l): j\ne l}} W_{ij} W_{il} \right)^4 \right]. \label{eq:upper}
\end{align}
For the first two terms in \eqref{eq:upper}, the order depends on the number of indexes. We take $ X_{ij}$ for example:
\begin{align*}
E\left[ \left( \sum_{j\ne i} X_{ij} \right)^4 \mid T_i, w_i \right] 
=  E\left[ \sum_{\substack{K\in\{1,\cdots,4\} \\ a_1+\cdots+a_K=4}} \sum_{\substack{(j_1,\cdots,j_K)\\ \text{all distinct}}} X_{ij_1}^{a_{1}} \cdots X_{ij_K}^{a_{K}} \mid T_i, w_i \right].
\end{align*}
Then for any $K$,
\begin{align*}
E\left[ \sum_{\substack{(j_1,\cdots,j_K)\\ \text{all distinct}}} X_{ij_1}^{a_1} \cdots X_{ij_K}^{a_K} \mid T_i, w_i \right]
= \sum_{\substack{(j_1,\cdots,j_K)\\ \text{all distinct}}}
E\left[ X_{ij_1}^{a_1} \mid T_i, w_i \right]
\cdots
E\left[ X_{ij_K}^{a_K} \mid T_i, w_i \right],
\end{align*}
where for any $k$,
\begin{align*}
& E\left[ X_{ij_k}^{a_k} \mid T_i, w_i \right] 
\le 3^{a_k -1} \cdot 
\left[ 
\begin{array}{c}
p_n E(f_1(i,j_k)T_{j_k} \mid T_i, w_i) \\
+ p_n^{a_k+1} f_1(i) \left\{E(f_1(i,j_k)T_{j_k} \mid T_i, w_i)
+ T_{j_k} f_1(i) \right\}^{a_k} \\
+ p_n^{2a_k} E(f_1(i,j_k)T_{j_k} \mid T_i, w_i)^{a_k} f_1(i)^{a_k}
\end{array}
\right].
\end{align*}
Taking expectation over $T_i$ and $w_i$, we have
\begin{align*}
E\left[ \sum_{\substack{(j_1,\cdots,j_K)\\ \text{all distinct}}} X_{ij_1}^{a_1} \cdots X_{ij_K}^{a_K} \right]
\le  C(n-1)^K p_n^K.
\end{align*}
Therefore, the upper bound depends on the largest $K$, and thus we have
\begin{align*}
& E\left[ \left(  \sum_{j\ne i} X_{ij} \right)^4 \right] 
\le  C n^4 (q_n^{\text{post}})^4.  
\end{align*}
For the third term in \eqref{eq:upper}, 
\begin{align*}
E\left[ \left( \sum_{(j,l): j\ne l} U_{ij} W_{il} \right)^c \mid T_i, w_i \right] 
=& E\left[ \sum_{\substack{(j_1,l_1)\\ \text{all distinct}}} \cdots \sum_{\substack{(j_c,l_c)\\ \text{all distinct}}}
U_{ij_1} W_{il_1} \cdots U_{ij_c}W_{il_c} \mid T_i, w_i \right] \\
=& E\left[ 
\sum_{\substack{ K\in\{2,\cdots, 2c\} \\ a_1+\cdots+a_{K}=c \\ b_1+\cdots+b_{K}=c}} 
\sum_{\substack{(j_1,\cdots, j_K)\\ \text{all distinct}}} 
(U_{ij_1}^{a_{1}} W_{ij_1}^{b_{1}}) 
\cdots (U_{ij_K}^{a_{K}}W_{ij_K}^{b_{K}}) \mid T_i, w_i \right].
\end{align*}
For any $K$,
\begin{align*}
E\left[ \sum_{\substack{(j_1,\cdots,j_K)\\ \text{all distinct}}} U_{ij_1}^{a_1} \cdots U_{ij_K}^{a_K}
W_{ij_1}^{b_1} \cdots W_{ij_K}^{b_K}\mid T_i, w_i \right]
= \sum_{\substack{(j_1,\cdots,j_K)\\ \text{all distinct}}}
E\left[ U_{ij_1}^{a_1} W_{ij_1}^{b_1} \mid T_i, w_i \right]
\cdots
E\left[ U_{ij_K}^{a_K} W_{ij_K}^{b_K} \mid T_i, w_i \right].
\end{align*}
If $a_k + b_k = 1$, then 
\[
E\left[ U_{ij_k}^{a_k} W_{ij_k}^{b_k} \mid T_i, w_i \right] = 0.
\]
If $a_k + b_k > 1$, then 
\begin{align*}
& E\left[ U_{ij_k}^{a_k} W_{ij_k}^{b_k} \mid T_i, w_i \right] 
\le  2^{a_k + b_k - 2} E\left[ 
\left( A_{ij}T_j - E(A_{ij}T_j | T_i, w_i)^{a_k} \right)^{1(a_k>0)} 
\left( A_{ij} - E(A_{ij} | T_i, w_i)^{b_k} \right)^{1(b_k>0)} | T_i, w_i \right].
\end{align*}
The order depends on the largest $K$ such that for any $k\le K$ we have $a_k + b_k >1$.
Therefore, we have
\begin{align*}
& E\left[ \left(  \sum_{j\ne k} U_{ij} W_{ik} \right)^4 \right] 
\le C (n q_n^{\text{post}})^4.
\end{align*}
The argument of the last term is analogous to that of the third term, so we omit it here. 
By combining these results, we have $E\left[r_{1,i}^4 \right]
\le \frac{C}{(n q_n^{\text{post}})^4}$.
Also, $E\left[ \mu_{r_1,i}^4 \right] = O\left( \frac{1}{(n q_n^{\text{post}})^4} \right)$ by definition in \eqref{eq:mu_r1i}.
Therefore, we conclude that 
\begin{align*}
E\left[ (r_{0,i} + r_{1,i} - \mu_{r_1,i})^4 \right]
= O\left( \frac{1}{ (n q_n^{\text{post}})^2 } \right).
\end{align*}

\end{proof}

\begin{lemma}\label{lemma:Cov_order}
Define $a_i$ and $b_i$ as any functions of $T_i$ and $w_i$ with constant variance. 
Under Assumption \ref{asu:network}, we have for any $i\ne j$,
\begin{align}
& \textup{Cov}\left( a_i (r_{0,i} + r_{1,i} - \mu_{r_1,i}), b_j (r_{0,j} + r_{1,j} - \mu_{r_1,j}) \right) 
= O\left( \frac{1}{n} \right), 
\label{eq:Cov_r_{0,i} + r_{1,i}_r_{0,j} + r_{1,j}} \\
& \textup{Cov}\left( a_i (r_{0,i} + r_{1,i} - \mu_{r_1,i})^2, b_j (r_{0,j} + r_{1,j} - \mu_{r_1,j})^2 \right) 
= O\left( \frac{1}{n^3 (q_n^{\text{post}})^2} \right)
+ O\left( \frac{1}{n^2} \right).
\label{eq:Cov_r_{0,i} + r_{1,i}^2r_{0,j} + r_{1,j}^2}
\end{align}    
\end{lemma}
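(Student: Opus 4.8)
The plan is to exploit that $r_{0,i}+r_{1,i}-\mu_{r_1,i}$ is mean zero given $(T_i,w_i)$. Since $a_i$ is $\sigma(T_i,w_i)$-measurable, $E[a_i(r_{0,i}+r_{1,i}-\mu_{r_1,i})]=E[a_i\,E(r_{0,i}+r_{1,i}-\mu_{r_1,i}\mid T_i,w_i)]=0$, and likewise for the $j$ factor; hence the first covariance reduces to the single cross-expectation $E[a_ib_j\,\tilde r_i\,\tilde r_j]$ and the second to $E[a_ib_j\tilde r_i^2\tilde r_j^2]-E[a_i\tilde r_i^2]E[b_j\tilde r_j^2]$, writing $\tilde r_i=r_{0,i}+r_{1,i}-\mu_{r_1,i}$. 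The workhorse is a conditioning step: given the whole latent vector $w=(w_1,\dots,w_n)$ and the whole treatment vector $T=(T_1,\dots,T_n)$, Assumption \ref{asu:network} makes the edges $\{A^{\text{post}}_{k\ell}\}_{k<\ell}$ mutually independent Bernoulli variables, so two building blocks $R_{ik},U_{ik},W_{ik}$ (attached to $i$) and $R_{j\ell},U_{j\ell},W_{j\ell}$ (attached to $j$) are conditionally independent unless they share an edge, i.e. unless $\{i,k\}=\{j,\ell\}$, which for $i\neq j$ forces $k=j,\ell=i$ (the single common edge). I will also repeatedly use the ``private index'' cancellation: $E(R_{ik}\mid T_i,w_i)=0$ means $E_{(T_k,w_k)}E[R_{ik}\mid w,T]=0$, so any configuration in which some index $k\notin\{i,j\}$ appears in exactly one building block (and whose $(T_k,w_k)$ is therefore free) integrates to zero.

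For the first bound I expand the leading piece $r_{0,i}r_{0,j}=\xi_i\xi_j(n-1)^{-2}\sum_{k\neq i}\sum_{\ell\neq j}R_{ik}R_{j\ell}$ and classify $(k,\ell)$. When $k,\ell,i,j$ are all distinct with $k\neq\ell$ the edges are disjoint and $(T_k,w_k)$ is private to $R_{ik}$, so the term vanishes; the configurations $k=j,\ell\ne i$ and $\ell=i,k\ne j$ vanish for the same reason. The surviving cases are the common neighbor $k=\ell\notin\{i,j\}$, where $E[R_{ik}R_{jk}\mid w,T]$ is a bounded $O(1)$ function of the shared $(T_k,w_k)$, giving after the factor $(n-1)^{-2}$ and summation over $\approx n$ choices of $k$ a contribution of order $n(n-1)^{-2}=O(1/n)$; and the shared edge $k=j,\ell=i$, where $E[R_{ij}R_{ji}\mid w,T]=O(1/q_n^{\text{post}})$ contributes the single term $(n-1)^{-2}O(1/q_n^{\text{post}})=O(1/(n^2q_n^{\text{post}}))$, which is $O(1/n)$ in every admissible regime because $q_n^{\text{post}}\succcurlyeq n^{-1}$. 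The cross terms $r_{0,i}(r_{1,j}-\mu_{r_1,j})$, $(r_{1,i}-\mu_{r_1,i})r_{0,j}$ and $(r_{1,i}-\mu_{r_1,i})(r_{1,j}-\mu_{r_1,j})$ carry extra factors of $(n-1)^{-1}$ and extra edges, and by the same count (each additional distinct edge brings a factor $q_n^{\text{post}}$, each coincidence costs a summation index) they are of order at most $O(1/n)$, using the moment estimates behind Lemma \ref{lemma:Eri^c}. Collecting the pieces gives the stated $O(1/n)$.

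For the second bound the analysis is identical but with two building blocks per side, and here $E[a_i\tilde r_i^2]E[b_j\tilde r_j^2]$ serves precisely to cancel the ``no cross-talk'' configurations ($i$-cluster and $j$-cluster sharing nothing), leaving only genuinely coupled patterns. Expanding $r_{0,i}^2r_{0,j}^2=\xi_i^2\xi_j^2(n-1)^{-4}\sum_{k_1,k_2,\ell_1,\ell_2}R_{ik_1}R_{ik_2}R_{j\ell_1}R_{j\ell_2}$, the two leading surviving patterns are: (i) the two distinct common neighbors $k_1=\ell_1=m$, $k_2=\ell_2=m'$ with $m\ne m'$, an off-diagonal pattern absent from the subtracted marginals, where each factor is $O(1)$ and there are $\approx n^2$ pairs, giving $n^2(n-1)^{-4}=O(1/n^2)$; and (ii) the single tied diagonal $k_1=k_2=\ell_1=\ell_2=m$ with $m\notin\{i,j\}$, where $E[R_{im}^2R_{jm}^2\mid w,T]=O(1/(q_n^{\text{post}})^2)$ and the covariance through the shared $(T_m,w_m)$ over $\approx n$ choices gives $n(n-1)^{-4}O(1/(q_n^{\text{post}})^2)=O(1/(n^3(q_n^{\text{post}})^2))$. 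All other coupled patterns, and all mixed terms involving $r_1$, are smaller by the same counting, so summing (i) and (ii) yields $O(1/n^2)+O(1/(n^3(q_n^{\text{post}})^2))$.

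The main obstacle is the combinatorial bookkeeping: for each of $\tilde r_i\tilde r_j$ and $\tilde r_i^2\tilde r_j^2$ one must enumerate all coincidence-and-edge-sharing patterns of the summation indices across the $R,U,W$ building blocks, decide for each whether the private-index cancellation kills it (and, in the squared case, whether it is absorbed by the subtracted product of marginals), and track the competing factors $(n-1)^{-\cdot}$, the number of free summation indices, and one power of $q_n^{\text{post}}$ per distinct edge. The delicate point is verifying that no pattern—especially the $r_1$-laden ones, whose denominators $\theta_i^{*}(y)$ are of order $q_n^{\text{post}}$—exceeds the two leading orders identified above; this is exactly where the moment estimates of Lemma \ref{lemma:Eri^c} and the lower bound $q_n^{\text{post}}\succcurlyeq n^{-1}$ from Assumption \ref{asu:network} are needed to keep every negative power of $q_n^{\text{post}}$ under control.
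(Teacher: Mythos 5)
Your proposal is correct and follows essentially the same route as the paper: both reduce the covariance to an expansion over index-coincidence patterns of the building blocks $R_{ik},U_{ik},W_{ik}$, kill the decoupled configurations via the conditional mean-zero property $E(\tilde r_i\mid T_i,w_i)=0$ (equivalently the private-index cancellation), and identify the same two surviving leading patterns — the common-neighbor configuration giving $O(1/n)$ (resp. $O(1/n^2)$) and the fully tied configuration giving $O(1/(n^2q_n^{\text{post}}))$ (resp. $O(1/(n^3(q_n^{\text{post}})^2))$), controlled by $q_n^{\text{post}}\succcurlyeq n^{-1}$. The only difference is that you assert, rather than carry out, the exhaustive enumeration showing that every remaining $r_1$-laden pattern is dominated; that enumeration is the bulk of the paper's proof, and its intermediate orders (e.g.\ $1/(n^3q_n^{\text{post}})$, $1/(n^4(q_n^{\text{post}})^3)$) do all fall under your two leading terms, so the plan is sound as stated.
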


We write down the closed form of $r_{0,i} r_{1,i}$, $r_{1,i}^2$ and $r_{1,i_1} r_{1,i_2}$ for reference:
\begin{align*}
r_{0,i}r_{1,i}
= \frac{1}{(n-1)^3} 
\left[
\begin{array}{c}
\xi_i \sum\limits_{k\ne i} R_{ik}
\sum\limits_{l\ne i} 
\left( 
- \frac{U_{il} W_{il} }{\theta_{i}^*(y)^2} 
+ \frac{2 \theta_{i}^*(x)}{\theta_{i}^*(y)^3} W_{il}^2
\right)  \\
+ \xi_i \sum\limits_{k\ne i} R_{ik}
\sum\limits_{(l,h):l\ne h} 
\left( 
- \frac{U_{il} W_{ih} }{\theta_{i}^*(y)^2} 
+ \frac{2 \theta_{i}^*(x)}{\theta_{i}^*(y)^3} U_{il} W_{ih}
\right) 
\end{array}
\right],
\end{align*}
where $r_{0,i}r_{1,i}$ has at most three indexes.
Next,
\begin{align}
r_{1,i}^2
= \frac{1}{(n-1)^4} 
\left[ 
\begin{array}{c}
\sum\limits_{k\ne i} 
\left( 
- \frac{U_{ik} W_{ik} }{\theta_i^*(y)^2} 
+ \frac{2 \theta_i^*(x)}{\theta_i^*(y)^3} W_{ik}^2
\right)^2 \\
+ \sum\limits_{(k,l):k\ne l}
\left( 
- \frac{U_{ik} W_{ik} }{\theta_i^*(y)^2} 
+ \frac{2 \theta_i^*(x)}{\theta_i^*(y)^3} W_{ik}^2
\right)
\left( 
- \frac{U_{il} W_{il} }{\theta_i^*(y)^2} 
+ \frac{2 \theta_i^*(x)}{\theta_i^*(y)^3} W_{il}^2
\right) \\
+ \sum\limits_{(k,l):k\ne l} 
\left(
- \frac{U_{ik} W_{il} }{\theta_i^*(y)^2} 
+ \frac{2 \theta_i^*(x)}{\theta_i^*(y)^3} W_{ik} W_{il}
\right)^2 \\
+ \sum\limits_{\substack{(k,l,h) \\ \text{all distinct}}} 
\left(
- \frac{U_{ik} W_{il} }{\theta_i^*(y)^2} 
+ \frac{2 \theta_i^*(x)}{\theta_i^*(y)^3} W_{ik} W_{il}
\right)
\left(
- \frac{U_{ik} W_{ih} }{\theta_i^*(y)^2} 
+ \frac{2 \theta_i^*(x)}{\theta_i^*(y)^3} W_{ik} W_{ih}
\right) \\
+ \sum\limits_{\substack{(k,l,h,m) \\ \text{all distinct}}} 
\left(
- \frac{U_{ik} W_{il} }{\theta_i^*(y)^2} 
+ \frac{2 \theta_i^*(x)}{\theta_i^*(y)^3} W_{ik} W_{il}
\right)
\left(
- \frac{U_{ih} W_{im} }{\theta_i^*(y)^2} 
+ \frac{2 \theta_i^*(x)}{\theta_i^*(y)^3} W_{ih} W_{im}
\right)\\
+ 2 \sum\limits_{k\ne i} 
\left( 
- \frac{U_{ik} W_{ik} }{\theta_i^*(y)^2} 
+ \frac{2 \theta_i^*(x)}{\theta_i^*(y)^3} W_{ik}^2
\right) 
\sum\limits_{(k,l):k\ne l} 
\left( 
- \frac{U_{ik} W_{il} }{\theta_i^*(y)^2} 
+ \frac{2 \theta_i^*(x)}{\theta_i^*(y)^3} 
W_{ik} W_{il}
\right)
\end{array}
\right],
\label{eq:r1i^2}
\end{align}
where $r_{1,i}^2$ has at most 4 different indexes.
Last,
\begin{align*}
r_{1,i_1} r_{1,i_2}
= \frac{1}{(n-1)^4} 
\left[
\begin{array}{c}
\sum\limits_{l\ne i_1} 
\left( 
- \frac{U_{i_1l} W_{i_1l} }{\theta_{i_1}^*(y)^2} 
+ \frac{2 \theta_{i_1}^*(x)}{\theta_{i_1}^*(y)^3} W_{i_1l}^2
\right) 
\sum\limits_{l\ne i_2} 
\left( 
- \frac{U_{i_2l} W_{i_2l} }{\theta_{i_2}^*(y)^2} 
+ \frac{2 \theta_{i_2}^*(x)}{\theta_{i_2}^*(y)^3} W_{i_2l}^2
\right) \\
+ \sum\limits_{l\ne i_1} 
\left( 
- \frac{U_{i_1l} W_{i_1l}}{\theta_{i_1}^*(y)^2}  
+ \frac{2 \theta_{i_1}^*(x)}{\theta_{i_1}^*(y)^3} W_{i_1l}^2
\right) 
\sum\limits_{(l,h):l\ne h} \left( 
- \frac{U_{i_2l} W_{i_2h} }{\theta_{i_2}^*(y)^2} 
+ \frac{2 \theta_{i_2}^*(x)}{\theta_{i_2}^*(y)^3} W_{i_2l} W_{i_2h}
\right) \\
+ \sum\limits_{(l,h):l\ne h} 
\left( 
- \frac{U_{i_1l} W_{i_1h} }{\theta_{i_1}^*(y)^2} 
+ \frac{2 \theta_{i_1}^*(x)}{\theta_{i_1}^*(y)^3} U_{il} W_{ih}
\right) 
\sum\limits_{l\ne i_2} 
\left( 
- \frac{U_{i_2l} W_{i_2l}}{\theta_{i_2}^*(y)^2}  
+ \frac{2 \theta_{i_2}^*(x)}{\theta_{i_2}^*(y)^3} W_{i_2l}^2
\right) \\
+ \sum\limits_{(l,h):l\ne h} 
\left( 
- \frac{U_{i_1l} W_{i_1h}}{\theta_{i_1}^*(y)^2}  
+ \frac{2 \theta_{i_1}^*(x)}{\theta_{i_1}^*(y)^3} 
W_{i_1l} W_{i_1h}
\right) 
\sum\limits_{(l,h):l\ne h} 
\left( 
- \frac{U_{i_2l} W_{i_2h} }{\theta_{i_2}^*(y)^2} 
+ \frac{2 \theta_{i_2}^*(x)}{\theta_{i_2}^*(y)^3} 
W_{i_2l} W_{i_2h}
\right)
\end{array}
\right],
\end{align*}
where $r_{1,i_1} r_{1,i_2}$ has at most 4 different indexes.

\begin{proof}[Proof of Lemma \ref{lemma:Cov_order}]
To show \eqref{eq:Cov_r_{0,i} + r_{1,i}_r_{0,j} + r_{1,j}}, we can rewrite it as 
\begin{align*}
& \textup{Cov}\left( a_i (r_{0,i} + r_{1,i} - \mu_{r_1,i}), b_j (r_{0,j} + r_{1,j} - \mu_{r_1,j}) \right) \\
=& \textup{Cov}\left( a_ir_{0,i}, b_j r_{0,j} \right) 
+ \textup{Cov}\left( a_i r_{0,i}, b_j ( r_{1,j} - \mu_{r_1,j}) \right)
+ \textup{Cov}\left( a_i (r_{1,i} - \mu_{r_1,i}), b_j r_{0,j} \right) \\
&+ \textup{Cov}\left( a_i ( r_{1,i} - \mu_{r_1,i}), b_j ( r_{1,j} - \mu_{r_1,j}) \right).
\end{align*}
We analyze each term one by one.
\begin{enumerate}[(a)]
\item The covariance is nonzero when $R_{ik} $ and $ R_{jl}$ share common second index:
\begin{align*}
\textup{Cov}\left( a_ir_{0,i}, b_j r_{0,j} \right)
= \frac{1}{(n-1)^2} 
\sum_{k\ne i,j} 
\textup{Cov}\left( a_i \xi_i R_{ik}, b_j \xi_j R_{jk} \right) 
\le C \frac{n (q_n^{\text{post}})^2}{n^2 (q_n^{\text{post}})^2} = O\left( \frac{1}{n} \right).
\end{align*}
\item The covariance is nonzero when both sides share common index:
\begin{align*}
& \textup{Cov}\left( a_i r_{0,i}, b_j ( r_{1,j} - \mu_{r_1,j}) \right) \\
=& \frac{1}{(n-1)^3}  
\left[ 
\begin{array}{c}
\sum\limits_{k\ne i,j}
\textup{Cov}\left( a_i \xi_i R_{ik}, b_j 
\left( - \frac{U_{jk} W_{jk}}{\theta_j^*(y)^2} + \frac{2\theta_j^*(x)}{\theta_j^*(y)^3}  W_{jk}^2 \right) 
\right) \\
+ \sum\limits_{k\ne i,j}
\textup{Cov}\left( a_i \xi_i R_{ik}, b_j 
\left( - \frac{U_{jk} W_{ji}}{\theta_j^*(y)^2} + \frac{2\theta_j^*(x)}{\theta_j^*(y)^3}  W_{jk} W_{ji} \right)
\right)
\end{array}
\right] 
- \frac{1}{n-1} 
\textup{Cov}\left( a_i \xi_i R_{ij}, b_j \mu_{r_1,j} \right) \\
\le & 
C_1 \frac{n (q_n^{\text{post}})^2}{(n q_n^{\text{post}})^3} + C_2 \frac{n (q_n^{\text{post}})^3}{(n q_n^{\text{post}})^3}
+ C_3 \frac{1}{n} \frac{1}{n q_n^{\text{post}}}
= O\left( \frac{1}{n^2 q_n^{\text{post}}} \right).
\end{align*}
\item The covariance is nonzero when both sides share a common index:
\begin{align}
& \textup{Cov}\left( a_i ( r_{1,i} - \mu_{r_1,i}), b_j ( r_{1,j} - \mu_{r_1,j}) \right) \notag \\
&= \textup{Cov}\left( a_i r_{1,i}, b_j r_{1,j} \right) 
- \textup{Cov}\left( a_i r_{1,i}, b_j \mu_{r_1,j} \right)
- \textup{Cov}\left( a_i \mu_{r_1,i}, b_j r_{1,j} \right)
+ \textup{Cov}\left( a_i \mu_{r_1,i}, b_j \mu_{r_1,j} \right) \notag \\
&= \frac{1}{(n-1)^4} 
\left[ 
\begin{array}{c}
\sum\limits_{k\ne i,j} 
\textup{Cov}\left( a_i 
\left( - \frac{U_{ik} W_{ik}}{\theta_i^*(y)^2} + \frac{2\theta_i^*(x)}{\theta_i^*(y)^3}  W_{ik}^2 \right),
b_j \left( - \frac{U_{jk} W_{jk}}{\theta_j^*(y)^2} + \frac{2\theta_j^*(x)}{\theta_j^*(y)^3}  W_{jk}^2 \right) \right) \\
+ \sum\limits_{k\ne i,j} 
\textup{Cov}\left( a_i
\left( - \frac{U_{ik} W_{ik}}{\theta_i^*(y)^2} + \frac{2\theta_i^*(x)}{\theta_i^*(y)^3}  W_{ik}^2 \right), 
b_j \left( - \frac{U_{jk} W_{ji}}{\theta_j^*(y)^2} + \frac{2\theta_j^*(x)}{\theta_j^*(y)^3}  W_{jk} W_{ji} \right) \right) \\
+ \sum\limits_{k\ne i,j} 
\textup{Cov}\left( a_i 
\left( - \frac{U_{ik} W_{ij}}{\theta_i^*(y)^2} + \frac{2\theta_i^*(x)}{\theta_i^*(y)^3}  W_{ik} W_{ij} \right),
b_j \left( - \frac{U_{jk} W_{jk}}{\theta_j^*(y)^2} + \frac{2\theta_j^*(x)}{\theta_j^*(y)^3}  W_{jk}^2 \right) \right) \\
+ \sum\limits_{(k,l):k\ne l}
\textup{Cov}\left( a_i 
\left( - \frac{U_{ik} W_{il}}{\theta_i^*(y)^2} + \frac{2\theta_i^*(x)}{\theta_i^*(y)^3}  W_{ik} W_{il} \right), b_j\left( - \frac{U_{jk} W_{jl}}{\theta_j^*(y)^2} + \frac{2\theta_j^*(x)}{\theta_j^*(y)^3}  W_{jk} W_{jl} \right) \right)
\end{array}
\right] \notag \\
&- \frac{1}{(n-1)^2}
\left[ \textup{Cov}\left( a_i  
\left( - \frac{U_{ij} W_{ij}}{\theta_i^*(y)^2} + \frac{2\theta_i^*(x)}{\theta_i^*(y)^3}  W_{ij}^2 \right), b_j \mu_{r_1,j} \right) 
+ \textup{Cov}\left( a_i \mu_{r_1,i}, 
b_j \left( - \frac{U_{ji} W_{ji}}{\theta_j^*(y)^2} + \frac{2\theta_j^*(x)}{\theta_j^*(y)^3}  W_{ji}^2 \right) \right)
\right] \notag \\
& \le C_1 \frac{n (q_n^{\text{post}})^2}{n^4 (q_n^{\text{post}})^4} 
+ C_2 \frac{n (q_n^{\text{post}})^3}{n^4 (q_n^{\text{post}})^4}
+ C_3 \frac{n^2 (q_n^{\text{post}})^4}{n^4 (q_n^{\text{post}})^4}
+ C_4 \frac{q_n^{\text{post}}}{n^2 (q_n^{\text{post}})^2 np_n}
= O\left( \frac{1}{n^2} \right)
+ O\left( \frac{1}{n^3 (q_n^{\text{post}})^2} \right).
\label{eq:Cov_r1ir1j}
\end{align}
\end{enumerate}
Therefore, by combining results in (a)-(c), we can conclude that
\begin{align*}
& \textup{Cov}\left( a_i (r_{0,i} + r_{1,i} - \mu_{r_1,i}), b_j (r_{0,j} + r_{1,j} - \mu_{r_1,j}) \right) 
= O\left( \frac{1}{n} \right).
\end{align*}
To show \eqref{eq:Cov_r_{0,i} + r_{1,i}^2r_{0,j} + r_{1,j}^2}, we decompose it as 
\begin{align*}
& \textup{Cov}\left( a_i (r_{0,i} + r_{1,i} - \mu_{r_1,i})^2, b_j (r_{0,j} + r_{1,j} - \mu_{r_1,j})^2 \right) \\
&= \textup{Cov}\left( a_i r_{0,i}^2, b_j r_{0,j}^2 \right) 
+ \textup{Cov}\left( a_i r_{0,i}^2, b_j (r_{1,j} - \mu_{r_1,j})^2 \right)
+ \textup{Cov}\left( a_i (r_{1,i} - \mu_{r_1,i})^2, b_j r_{0,j}^2 \right) \\
&+ 2 \textup{Cov}\left( a_i r_{0,i}^2, b_j r_{0,j}(r_{1,j} - \mu_{r_1,j}) \right)  
+ 2 \textup{Cov}\left( a_i r_{0,i}(r_{1,i} - \mu_{r_1,i}), b_j r_{0,j}^2 \right) \\
&+ 2 \textup{Cov}\left( a_i (r_{1,i} - \mu_{r_1,i})^2, b_j r_{0,j}(r_{1,j} - \mu_{r_1,j}) \right) 
+ 2 \textup{Cov}\left( a_i r_{0,i}(r_{1,i} - \mu_{r_1,i}), b_j (r_{1,j} - \mu_{r_1,j})^2 \right) \\
&+ 4 \textup{Cov}\left( a_i r_{0,i}(r_{1,i} - \mu_{r_1,i}), b_j r_{0,j}(r_{1,j} - \mu_{r_1,j}) \right)
+ \textup{Cov}\left( a_i (r_{1,i} - \mu_{r_1,i})^2, b_j (r_{1,j} - \mu_{r_1,j})^2 \right)
\end{align*}
We analyze each term one by one.
\begin{enumerate}[(a)]
\item 
By definition of $R_{ik}$, any index pair that appears once would have zero covariance:
\begin{align*}
& \textup{Cov}\left( a_i r_{0,i}^2, b_j r_{0,j} ^2 \right) 
= \frac{1}{(n-1)^4} \textup{Cov}\left( a_i \xi_i^2 \left(  \sum_{k\ne i} R_{ik}^2 + \sum_{(k,l):k\ne l} R_{ik}R_{il}  \right), b_j \xi_j^2 \left(  \sum_{k\ne j} R_{jk}^2 + \sum_{(k,l):k\ne l} R_{jk}R_{jl} \right) \right) \\
&= \frac{1}{(n-1)^4} \left[
\begin{array}{c}
\sum\limits_{k\ne i,j} \textup{Cov}\left( a_i \xi_i^2 R_{ik}^2, b_j \xi_j^2 R_{jk}^2 \right) 
+ \sum\limits_{k\ne i,j} \textup{Cov}\left( a_i \xi_i^2 R_{ik}^2, b_j \xi_j^2 R_{jk}R_{ji} \right) \\
+ \sum\limits_{k\ne i,j} \textup{Cov}\left( a_i \xi_i^2 R_{ik}R_{ij}, b_j \xi_j^2 R_{jk}^2 \right)
+ \sum\limits_{(k,l):k\ne l} \textup{Cov}\left( a_i \xi_i^2 R_{ik}R_{il}, b_j \xi_j^2 R_{jk} R_{jl} \right)
\end{array}
\right] \\
&\le \frac{C_1 n (q_n^{\text{post}})^2 + C_2 n (q_n^{\text{post}})^3 + C_3 n^2 (q_n^{\text{post}})^4}{n^4 (q_n^{\text{post}})^4}
= O\left( \frac{1}{n^3 (q_n^{\text{post}})^2} \right) + O\left( \frac{1}{n^2} \right).
\end{align*}

\item 
By decomposition, we have
\begin{align*}
\textup{Cov}\left( a_i r_{0,i}^2, b_j (r_{1,j} - \mu_{r_1,j})^2 \right) 
= \textup{Cov}\left( a_i r_{0,i}^2, b_j r_{1,j}^2 \right) 
- 2 \textup{Cov}\left( a_i r_{0,i}^2, b_j r_{1,j} \mu_{r_1,j} \right) 
+ \textup{Cov}\left( a_i r_{0,i}^2, b_j \mu_{r_1,j}^2 \right).
\end{align*}
For the first term, $r_{0,i}^2$ has at most two indexes and $r_{1,i}^2$ has at most four indexes. To have nonzero covariance, both sides should share a common index or for each side, the multiplicity of each pair index should be larger than 1.
Below we discuss each possibility: 
\begin{enumerate}[(1)]
\item if there is only one (second) index, the covariance is bounded above by $C \frac{n (q_n^{\text{post}})^2}{(n q_n^{\text{post}})^6 }$;
\item if there are two (second) indexes, the covariance is bounded above by $C \frac{n^2 (q_n^{\text{post}})^4}{(n q_n^{\text{post}})^6 }$;
\item if there are three (second) indexes, the covariance is bounded above by $C \frac{n^3 ( q_n^{\text{post}})^5}{(n q_n^{\text{post}})^6 }$.
\end{enumerate}
Therefore, $ \textup{Cov}\left( a_i r_{0,i}^2, b_j r_{1,j}^2 \right) = 
O\left( \frac{1}{n^5 (q_n^{\text{post}})^4 } \right) + O\left( \frac{1}{n^3 q_n^{\text{post}} } \right)$. For the second term, 
\begin{align*}
& \textup{Cov}\left( a_i r_{0,i}^2, b_j r_{1,j} \mu_{r_1,j} \right) \\
=& \frac{1}{(n-1)^4}
\left[ 
\begin{array}{c}
\sum\limits_{k\ne i,j}
\textup{Cov}\left( a_i \xi_i^2 R_{ik}^2, 
b_j \mu_{r_1,j} \left( - \frac{U_{jk} W_{jk}}{\theta_j^*(y)^2}   
+ \frac{2\theta_j^*(x)}{\theta_j^*(y)^3}  W_{jk}^2
\right) \right) \\
+ \sum\limits_{(k,l):k\ne l}
\textup{Cov}\left( a_i \xi_i^2 R_{ik} R_{il}, 
b_j \mu_{r_1,j} \left( - \frac{U_{jk} W_{jl}}{\theta_j^*(y)^2}   
+ \frac{2\theta_j^*(x)}{\theta_j^*(y)^3}  W_{jk} W_{jl}
\right) \right)
\end{array}
\right] \\
\le & C_1 \frac{ n (q_n^{\text{post}})^2 }{n^4 (q_n^{\text{post}})^4} \frac{1}{nq_n^{\text{post}}}
+ C_2 \frac{ n^2 (q_n^{\text{post}})^4}{n^4 (q_n^{\text{post}})^4} \frac{1}{nq_n^{\text{post}}}
= O\left( \frac{ 1 }{n^4 (q_n^{\text{post}})^3} \right)
+ O\left( \frac{ 1 }{n^3 q_n^{\text{post}}} \right).
\end{align*}
For the last term,
\begin{align*}
\textup{Cov}\left( a_i r_{0,i}^2, b_j \mu_{r_1,j}^2 \right)
= \frac{1}{(n-1)^2}
\textup{Cov}\left( a_i \xi_i^2 R_{ij}^2, b_j \mu_{r_1,j}^2 \right)
= O\left( \frac{ 1 }{n^4 (q_n^{\text{post}})^3 } \right).
\end{align*}
To conclude, 
\begin{align*}
\textup{Cov}\left( a_i r_{0,i}^2, b_j (r_{1,j} - \mu_{r_1,j})^2 \right) 
= O\left( \frac{ 1 }{n^4 (q_n^{\text{post}})^3 } \right) + O\left( \frac{ 1 }{n^3 q_n^{\text{post}}} \right).
\end{align*}

\item 
By definition of $R_{ik}$, any index pair that appears once would have zero covariance:
\begin{align*}
& \textup{Cov}\left( a_i r_{0,i}^2, b_j r_{0,j} (r_{1,j} - \mu_{r_1,j}) \right)  
= \textup{Cov}\left( a_i r_{0,i}^2, b_j r_{0,j} r_{1,j} \right) - \textup{Cov}\left( a_i r_{0,i}^2, b_j r_{0,j} \mu_{r_1,j} \right) \\
&= \frac{1}{(n-1)^5}
\left[
\begin{array}{c}
\sum\limits_{(k,l):k\ne l} 
\textup{Cov}\left( a_i \xi_i^2 R_{ik}^2, 
b_j \xi_j R_{jk} \left( - \frac{U_{jl} W_{jl}}{\theta_j^*(y)^2}   
+ \frac{2\theta_j^*(x)}{\theta_j^*(y)^3}  W_{jl}^2
\right) \right) \\
+ \sum\limits_{(k,l):k\ne l} 
\textup{Cov}\left( a_i \xi_i^2 R_{ik}^2, 
b_j \xi_j R_{jl} 
\left(  - \frac{U_{jl} W_{jk}}{\theta_j^*(y)^2} 
+ \frac{2\theta_j^*(x)}{\theta_j^*(y)^3} W_{jl} W_{jk}  \right)  \right) \\
+ \sum\limits_{(k,l):k\ne l} 
\textup{Cov}\left( a_i \xi_i^2 R_{ik} R_{il}, 
b_j \xi_j R_{jk} \left( - \frac{U_{jl} W_{jl}}{\theta_j^*(y)^2}   
+ \frac{2\theta_j^*(x)}{\theta_j^*(y)^3}  W_{jl}^2
\right) \right) \\
+ \sum\limits_{(k,l):k\ne l} 
\textup{Cov}\left( a_i \xi_i^2 R_{ik} R_{il}, 
b_j \xi_j R_{jk} 
\left(  - \frac{U_{jk} W_{jl}}{\theta_j^*(y)^2} 
+ \frac{2\theta_j^*(x)}{\theta_j^*(y)^3} W_{jk} W_{jl}  \right)  \right)
\end{array}
\right] \\
&- \frac{1}{(n-1)^3} 
\left[ \sum_{k\ne i,j}  \textup{Cov}\left( a_i \xi_i^2 R_{ik}^2, b_j \xi_j  \mu_{r_1,j} R_{jk} \right) 
+ \sum_{k\ne i,j} \textup{Cov}\left( a_i \xi_i^2 R_{ik}R_{ij}, b_j \xi_j  \mu_{r_1,j} R_{jk} \right) \right] \\
&\le \frac{C_1 n^2 (q_n^{\text{post}})^3 }{(n q_n^{\text{post}})^5} + C_2 \frac{n (q_n^{\text{post}})^2}{(n q_n^{\text{post}})^3}\frac{1}{nq_n^{\text{post}}}
= O\left( \frac{1}{n^3 (q_n^{\text{post}})^2} \right).
\end{align*}

\item 
By decomposition, we have:
\begin{align*}
& \textup{Cov}\left( a_i (r_{1,i} - \mu_{r_1,i})^2, b_j r_{0,j} (r_{1,j} - \mu_{r_1,j}) \right)  
= \textup{Cov}\left( a_i r_{1,i}^2, b_j r_{0,j} r_{1,j} \right) \\
& - \textup{Cov}\left( a_i r_{1,i}^2, b_j \mu_{r_1,j} r_{0,j} \right)
- 2 \textup{Cov}\left( a_i r_{1,i} \mu_{r_1,i}, b_j r_{0,j} (r_{1,j} - \mu_{r_1,j})  \right)
+ \textup{Cov}\left( a_i \mu_{r_1,i}^2, b_j r_{0,j} (r_{1,j} -\mu_{r_1,j} ) \right).
\end{align*}
For the first term $\textup{Cov}\left( a_i r_{1,i}^2, b_j r_{0,j} r_{1,j} \right)$, $r_{1,i}^2$ has at most 4 different indexes, and $r_{0,j} r_{1,j}$ has at most 3 different indexes. We discuss all possibilities:
\begin{enumerate}[(1)]
\item if there is only one (second) index, the covariance is bounded above by $C\frac{n (q_n^{\text{post}})^2}{(n q_n^{\text{post}})^7}$;
\item if there are only two (second) indexes, the covariance is bounded above by $C\frac{n^2 (q_n^{\text{post}})^3}{(n q_n^{\text{post}})^7}$;
\item if there are only three (second) indexes, the covariance is bounded above by $C\frac{n^3 (q_n^{\text{post}})^4}{(n q_n^{\text{post}})^7}$.
\end{enumerate}
Therefore, $\textup{Cov}\left( a_i r_{1,i}^2, b_j r_{0,j} r_{1,j} \right) = O\left( \frac{1}{n^4 (q_n^{\text{post}})^3} \right)$.
For the second term, $r_{0,j}$ has 1 index, and thus 
\begin{align*}
\textup{Cov}\left( a_i r_{1,i}^2, b_j \mu_{r_1,j} r_{0,j} \right)
\le C \frac{n^2 (q_n^{\text{post}})^3}{(n q_n^{\text{post}})^5} \frac{1}{n q_n^{\text{post}}}
= O\left( \frac{1}{n^4 (q_n^{\text{post}})^3} \right).
\end{align*}
For the third term, $r_{1,i}$ has at most 2 different indexes, and thus there are two possibilities for $\textup{Cov}\left( a_i \mu_{r_1,i} r_{1,i}, b_j r_{0,j} r_{1,j}  \right)$:
\begin{enumerate}[(1)]
\item if there is only one (second) index, the covariance is bounded above by $C\frac{n (q_n^{\text{post}})^2}{(n q_n^{\text{post}})^6}$;
\item if there are two (second) indexes, the covariance is bounded above by $C\frac{n^2 (q_n^{\text{post}})^3}{(n q_n^{\text{post}})^6}$.
\end{enumerate}
Therefore, $\textup{Cov}\left( a_i \mu_{r_1,i} r_{1,i}, b_j r_{0,j} r_{1,j}  \right) = O\left( \frac{1}{n^4 (q_n^{\text{post}})^3} \right)$. Moreover, we have 
\begin{align*}
& \textup{Cov}\left( a_i \mu_{r_1,i} r_{1,i}, b_j \mu_{r_1,j} r_{0,j}  \right) \\
=& \frac{1}{(n-1)^3} 
\left[
\begin{array}{c}
\sum\limits_{k\ne i,j} 
\textup{Cov}\left( a_i \mu_{r_1,i} 
\left( 
- \frac{U_{ik} W_{ik} }{\theta_{i}^*(y)^2} 
+ \frac{2 \theta_{i}^*(x)}{\theta_{i}^*(y)^3} W_{il}^2
\right), b_j \mu_{r_1,j} \xi_j R_{jk}  \right) \\
+ \sum\limits_{k\ne i,j} 
\textup{Cov}\left( a_i \mu_{r_1,i} 
\left( 
- \frac{U_{ik} W_{ij} }{\theta_{i}^*(y)^2} 
+ \frac{2 \theta_{i}^*(x)}{\theta_{i}^*(y)^3} U_{ik} W_{ij}
\right), b_j \mu_{r_1,j} \xi_j R_{jk} \right)
\end{array}
\right]
= O\left( \frac{1}{n^4 (q_n^{\text{post}})^3} \right).
\end{align*}
Therefore, we can conclude that 
\begin{align*}
\textup{Cov}\left( a_i (r_{1,i} - \mu_{r_1,i})^2, b_j r_{0,j} (r_{1,j} - \mu_{r_1,j}) \right) 
= O\left( \frac{1}{n^4 (q_n^{\text{post}})^3} \right).
\end{align*}

\item 
By decomposition, we have
\begin{align*}
& \textup{Cov}\left( a_i r_{0,i} (r_{1,i} - \mu_{r_1,i}), b_j r_{0,j} (r_{1,j} - \mu_{r_1,j}) \right) 
= \textup{Cov}\left( a_i r_{0,i} r_{1,i}, b_j r_{0,j} r_{1,j} \right) \\
&- \textup{Cov}\left( a_i r_{0,i} r_{1,i}, b_j r_{0,j} \mu_{r_1,j} \right) - \textup{Cov}\left( a_i r_{0,i} \mu_{r_1,i}, b_j r_{1,j} \right)
+ \textup{Cov}\left( a_i r_{0,i} \mu_{r_1,i}, b_j r_{0,j} \mu_{r_1,j} \right) \\
=& \frac{1}{(n-1)^6} 
\left[ 
\begin{array}{c}
\sum\limits_{(k,l,h)}
\textup{Cov}\left( a_i \xi_i R_{ik}
\left( 
- \frac{U_{il} W_{il} }{\theta_{i}^*(y)^2} 
+ \frac{2 \theta_{i}^*(x)}{\theta_{i}^*(y)^3} W_{il}^2
\right), 
b_j \xi_j R_{jk}
\left( 
- \frac{U_{jh} W_{jh} }{\theta_{j}^*(y)^2} 
+ \frac{2 \theta_{j}^*(x)}{\theta_{j}^*(y)^3} W_{jh}^2
\right)
\right) \\
+ \sum\limits_{(k,l,h)}
\textup{Cov}\left( 
a_i \xi_i R_{ik}
\left( 
- \frac{U_{il} W_{il} }{\theta_{i}^*(y)^2} 
+ \frac{2 \theta_{i}^*(x)}{\theta_{i}^*(y)^3} W_{il}^2
\right),
b_j \xi_j R_{jh}
\left( 
- \frac{U_{jk} W_{jh} }{\theta_{j}^*(y)^2} 
+ \frac{2 \theta_{j}^*(x)}{\theta_{j}^*(y)^3} U_{jk} W_{jh}
\right)
\right) \\
+ \sum\limits_{(k,l,h)}
\textup{Cov}\left( a_i \xi_i R_{ik}
\left( 
- \frac{U_{ik} W_{il} }{\theta_{i}^*(y)^2} 
+ \frac{2 \theta_{i}^*(x)}{\theta_{i}^*(y)^3} U_{ik} W_{il}
\right), 
b_j \xi_j R_{jl}
\left( 
- \frac{U_{jh} W_{jh} }{\theta_{j}^*(y)^2} 
+ \frac{2 \theta_{j}^*(x)}{\theta_{j}^*(y)^3} W_{jh}^2
\right)
\right) \\
+ \sum\limits_{(k,l,h)}
\textup{Cov}\left( 
a_i \xi_i R_{ik}
\left( 
- \frac{U_{il} W_{ih} }{\theta_{i}^*(y)^2} 
+ \frac{2 \theta_{i}^*(x)}{\theta_{i}^*(y)^3} U_{il} W_{ih}
\right),
\xi_j R_{jk}
\left( 
- \frac{U_{jl} W_{jh} }{\theta_{j}^*(y)^2} 
+ \frac{2 \theta_{j}^*(x)}{\theta_{j}^*(y)^3} U_{jl} W_{jh}
\right)
\right)
\end{array}
\right] \\
&- \frac{1}{(n-1)^4} 
\left[ \begin{array}{c}
\sum\limits_{(k,l)} 
\textup{Cov} \left(
a_i \xi_i R_{ik}
\left( 
- \frac{U_{il} W_{il} }{\theta_{i}^*(y)^2} 
+ \frac{2 \theta_{i}^*(x)}{\theta_{i}^*(y)^3} W_{il}^2
\right),
b_j \mu_{r_1,j} \xi_j R_{jl} \right) \\
+ \sum\limits_{(k,l)} 
\textup{Cov} \left( a_i \xi_i R_{il}
\left( 
- \frac{U_{ik} W_{il} }{\theta_{i}^*(y)^2} 
+ \frac{2 \theta_{i}^*(x)}{\theta_{i}^*(y)^3} U_{ik} W_{il}
\right),
b_j \mu_{r_1,j} \xi_j R_{jk}
\right)
\end{array} \right] \\
&- \frac{1}{(n-1)^4} 
\left[ 
\begin{array}{c}
\sum\limits_{(k,l)} 
\textup{Cov}\left( a_i \mu_{r_1,i} \xi_i R_{ik},
b_j \xi_j R_{jk}
\left( 
- \frac{U_{jl} W_{jl} }{\theta_{j}^*(y)^2} 
+ \frac{2 \theta_{j}^*(x)}{\theta_{j}^*(y)^3} W_{jl}^2
\right) \right) \\
+ \sum\limits_{(k,l)} 
\textup{Cov}\left( a_i \mu_{r_1,i} \xi_i R_{ik},
b_j \xi_j R_{jl}
\left( 
- \frac{U_{jk} W_{jl} }{\theta_{j}^*(y)^2} 
+ \frac{2 \theta_{j}^*(x)}{\theta_{j}^*(y)^3} U_{jk} W_{jl}
\right) \right)
\end{array} \right] \\
&+ \frac{1}{(n-1)^2} \sum\limits_{k\ne i,j}  \textup{Cov}\left( a_i \mu_{r_1,i} \xi_i R_{ik}, b_j \mu_{r_1,j} \xi_j R_{jk} \right) \\
=& O\left( \frac{n^3 (q_n^{\text{post}})^4}{n^6 p_n^6} \right)
+ O\left( \frac{n^2 (q_n^{\text{post}})^3}{n^4 (q_n^{\text{post}})^4} \frac{1}{np_n} \right)
+ O\left( \frac{n (q_n^{\text{post}})^2}{n^2 (q_n^{\text{post}})^2} \frac{1}{n^2 (q_n^{\text{post}})^2} \right)
= O\left( \frac{1}{n^3 (q_n^{\text{post}})^2} \right).
\end{align*}

\item 
By definition of $R_{ik}$, any index pair that appears once would have zero covariance.
\begin{align*}
& \textup{Cov}\left( a_i (r_{1,i} - \mu_{r_1,i})^2, b_j (r_{1,j} - \mu_{r_1,j})^2 \right)  \\
=& \textup{Cov}\left( a_i r_{1,i}^2, b_j r_{1,j}^2 \right)
- 2 \textup{Cov}\left( a_i r_{1,i}^2, b_j r_{1,j} \mu_{r_1,j} \right)
-2 \textup{Cov}\left( a_i r_{1,i} \mu_{r_1,i}, b_j r_{1,j}^2 \right) \\
&+ \textup{Cov}\left( a_i r_{1,i}^2, b_j \mu_{r_1,j}^2 \right) 
+ \textup{Cov}\left( a_i \mu_{r_1,i}^2, b_j r_{1,j}^2 \right)
+4 \textup{Cov}\left( a_i r_{1,i} \mu_{r_1,i}, b_j r_{1,j} \mu_{r_1,j} \right) \\
&-2 \textup{Cov}\left( a_i \mu_{r_1,i} r_{1,i}, b_j \mu_{r_1,j}^2 \right)
- 2 \textup{Cov}\left( a_i \mu_{r_1,i}^2, b_j \mu_{r_1,j} r_{1,j} \right).
\end{align*}
For $\textup{Cov}\left( a_i r_{1,i}^2, b_j r_{1,j}^2 \right)$, by the definition of in $r_{1,i}^2$ in \eqref{eq:r1i^2}, the order depends on how many indexes we have. We consider each possible case:
\begin{enumerate}[(1)]
\item for the case with 1 index, the covariance term is bounded above by $C\frac{1}{n(n q_n^{\text{post}})^6}$;
\item for the case with 2 indexes, the covariance term is bounded above by $C\frac{1}{n^6 (q_n^{\text{post}})^4}$;
\item for the case with 3 indexes, the covariance term is bounded above by $C\frac{1}{n^5 (q_n^{\text{post}})^2}$;
\item for the case with 4 indexes, the covariance term is bounded above by $C\frac{1}{n^4}$.
\end{enumerate}
For the second term $\textup{Cov}\left( a_i r_{1,i}^2, b_j r_{1,j} \mu_{r_1,j} \right)$, $r_{1,j}$ has at most 2 different indexes, and the order depends on how many indexes. We consider each possible case:
\begin{enumerate}[(1)]
\item for the case with 1 index, the covariance term is bounded above by $C\frac{n (q_n^{\text{post}})^2}{(n q_n^{\text{post}})^7}$;
\item for the case with 2 indexes, the covariance term is bounded above by $C\frac{n^2 (q_n^{\text{post}})^3}{(n q_n^{\text{post}})^7}$;
\item for the case with 3 indexes, the covariance term is bounded above by $C\frac{n^3 p_n^5}{(n q_n^{\text{post}})^7}$.
\end{enumerate}
Therefore $\textup{Cov}\left( a_i r_{1,i}^2, b_j r_{1,j}^2 \right) = O\left( \frac{1}{n^5 (q_n^{\text{post}})^4} \right) + O\left( \frac{1}{n^4 (q_n^{\text{post}})^2} \right)$.
For the third term $\textup{Cov}\left( a_i r_{1,i}^2, b_j \mu_{r_1,j}^2 \right)$, by the definition in \eqref{eq:r1i^2}, we have
\begin{align*}
& \textup{Cov}\left( a_i r_{1,i}^2, b_j \mu_{r_1,j}^2 \right) \\
=& \frac{1}{(n-1)^4} 
\left[ 
\begin{array}{c}
\textup{Cov}\left( a_i \left( 
- \frac{U_{ij} W_{ij} }{\theta_i^*(y)^2} 
+ \frac{2 \theta_i^*(x)}{\theta_i^*(y)^3} W_{ij}^2
\right)^2, b_j \mu_{r_1,j}^2 \right) \\
+ \sum\limits_{k\ne i,j}
\textup{Cov}\left( a_i 
\left( 
- \frac{U_{ik} W_{ik} }{\theta_i^*(y)^2} 
+ \frac{2 \theta_i^*(x)}{\theta_i^*(y)^3} W_{ik}^2
\right)
\left( 
- \frac{U_{ij} W_{ij} }{\theta_i^*(y)^2} 
+ \frac{2 \theta_i^*(x)}{\theta_i^*(y)^3} W_{ij}^2
\right), b_j \mu_{r_1,j}^2 \right) \\
+ \sum\limits_{k\ne i,j}
\textup{Cov}\left( a_i  
\left(
- \frac{U_{ik} W_{ij} }{\theta_i^*(y)^2} 
+ \frac{2 \theta_i^*(x)}{\theta_i^*(y)^3} W_{ik} W_{ij}
\right)^2, b_j \mu_{r_1,j}^2 \right) \\
+ 2 \sum\limits_{k\ne i,j} 
\textup{Cov}\left( a_i 
\left( 
- \frac{U_{ik} W_{ik} }{\theta_i^*(y)^2} 
+ \frac{2 \theta_i^*(x)}{\theta_i^*(y)^3} W_{ik}^2
\right) 
\left( 
- \frac{U_{ik} W_{ij} }{\theta_i^*(y)^2} 
+ \frac{2 \theta_i^*(x)}{\theta_i^*(y)^3} 
W_{ik} W_{ij}
\right), b_j \mu_{r_1,j}^2 \right)
\end{array}
\right] \\
=& O\left( \frac{1}{n^6 p_n^5} \right)
+ O\left( \frac{1}{n^5 (q_n^{\text{post}})^4} \right)
= O\left( \frac{1}{n^5 (q_n^{\text{post}})^4} \right).
\end{align*}

For the fourth term $\textup{Cov}\left( a_i r_{1,i} \mu_{r_1,i}, b_j r_{1,j} \mu_{r_1,j} \right)$, it is analogous to \eqref{eq:Cov_r1ir1j}, and we can show that 
\begin{align*}
\textup{Cov}\left( a_i r_{1,i} \mu_{r_1,i}, b_j r_{1,j} \mu_{r_1,j} \right)
= O\left( \frac{1}{n^5 (q_n^{\text{post}})^4 } \right)
+ O\left( \frac{1}{n^4 (q_n^{\text{post}})^2 } \right).
\end{align*}

For the fifth term, we have
\begin{align*}
\textup{Cov}\left( a_i \mu_{r_1,i} r_{1,i}, b_j \mu_{r_1,j}^2 \right)
= \frac{1}{(n-1)^2} 
\textup{Cov}\left(
a_i \mu_{r_1,i} 
\left( 
- \frac{U_{ij} W_{ij} }{\theta_{i}^*(y)^2} 
+ \frac{2 \theta_{i}^*(x)}{\theta_{i}^*(y)^3} W_{ij}^2
\right), b_j \mu_{r_1,j}^2  
\right)
= O\left( \frac{1}{n^5 (q_n^{\text{post}})^4 } \right).
\end{align*}
By combining the results, we can conclude that 
\begin{align*}
& \textup{Cov}\left( a_i (r_{1,i} - \mu_{r_1,i})^2, b_j (r_{1,j} - \mu_{r_1,j})^2 \right) 
= O\left( \frac{1}{n^5 (q_n^{\text{post}})^4 } \right)
+ O\left( \frac{1}{n^4 (q_n^{\text{post}})^2 } \right).
\end{align*}
\end{enumerate}
Therefore, we can conclude that 
\begin{align*}
\textup{Cov}\left( a_i  (r_{0,i} + r_{1,i} - \mu_{r_1,i})^2, b_j (r_{0,j} + r_{1,j} - \mu_{r_1,j})^2 \right) 
= O\left( \frac{1}{n^3 (q_n^{\text{post}})^2} \right) 
+ O\left( \frac{1}{n^2} \right). 
\end{align*}

\end{proof}

\subsection{Proof of Lemma \ref{lemma:consistencyM}}\label{proof:consistencyM}

We only show \eqref{eq:M_i^2_con} since the remaining terms can be shown with analogous arguments.
By Taylor expansion, we have
\begin{equation*}
M_i^2
= (\xi_i + \mu_{r_1,i})^2 
+ (r_{0,i} + r_{1,i} - \mu_{r_1,i})^2 
+ 2 (\xi_i + \mu_{r_1,i}) (r_{0,i} + r_{1,i} - \mu_{r_1,i}).  \label{eq:Mi^2}
\end{equation*}
The first term of $M_i^2$ is a function of $T_i$ and $w_i$, which is i.i.d. across $i$. 
The second and the third terms follow from Lemma \ref{lemma:airi_ri^2}. Therefore, we have
\begin{align*}
\frac{1}{n}\sum_{i=1}^n M_i^2
= \frac{1}{n}\sum_{i=1}^n  (\xi_i + \mu_{r_1,i})^2 
+ E\left((r_{0,i} + r_{1,i} - \mu_{r_1,i})^2 \right) 
+ O_{\mathbb{P}}\left( \frac{1}{n \sqrt{q_n^{\text{post}}}} \right).
\end{align*}

\subsection{Proof of Lemma \ref{lemma:airi_ri^2}}
\label{proof:airi_ri^2}
To show \eqref{eq:airi}, the expectation is zero and the variance is 
\begin{align*}
\V\left( \frac{1}{n}\sum_{i=1}^n a_i (r_{0,i} + r_{1,i} - \mu_{r_1,i}) \right) 
&= \frac{1}{n^2}\sum_{i=1}^n E\left( a_i^2 (r_{0,i} + r_{1,i} - \mu_{r_1,i} )^2 \right) \\
+ \frac{1}{n^2}\sum_{i=1}^n \sum_{j\ne i} & \textup{Cov}\left( a_i(r_{0,i} + r_{1,i} - \mu_{r_1,i}), a_j(r_{0,j} + r_{1,j} - \mu_{r_1,j}) \right) 
= O\left( \frac{1}{n^2 q_n^{\text{post}}} \right)
\end{align*}
by Lemma \ref{lemma:Eri^c}  and Lemma \ref{lemma:Cov_order}.
In particular, this implies 
\begin{align*}
\frac{1}{n}\sum_{i=1}^n a_i (r_{0,i} + r_{1,i} - \mu_{r_1,i})
= O_{\mathbb{P}}\left( \frac{1}{n \sqrt{q_n^{\text{post}}}} \right). 
\end{align*}
To show \eqref{eq:ri^2}, the expectation is
\begin{align*}
& E\left[ \frac{1}{n}\sum_{i=1}^n (r_{0,i} + r_{1,i} - \mu_{r_1,i})^2 \right] 
= O\left( \frac{1}{n q_n^{\text{post}}} \right)
\end{align*}
and the variance is 
\begin{align*}
\V\left( \frac{1}{n}\sum_{i=1}^n (r_{0,i} + r_{1,i} - \mu_{r_1,i})^2 \right) 
=& \frac{1}{n^2}\sum_{i=1}^n \V\left( (r_{0,i} + r_{1,i} - \mu_{r_1,i})^2 \right) \\
+ \frac{1}{n^2}\sum_{i=1}^n \sum_{j\ne i} & \textup{Cov}\left( (r_{0,i} + r_{1,i} - \mu_{r_1,i})^2, (r_{0,j} + r_{1,j} - \mu_{r_1,j})^2 \right) \\
=& O\left( \frac{1}{n^3 (q_n^{\text{post}})^2} \right) 
+ O\left( \frac{1}{n^2} \right)
\end{align*}
by Lemma \ref{lemma:Eri^c}  and Lemma \ref{lemma:Cov_order}.
Therefore, 
\begin{equation}
\frac{1}{n} \sum_{i=1}^n (r_{0,i} + r_{1,i} - \mu_{r_1,i})^2 
= E\left((r_{0,i} + r_{1,i} - \mu_{r_1,i})^2 \right) 
+ O_{\mathbb{P}}\left( \frac{1}{\sqrt{n} n q_n^{\text{post}}} \right)
+ O_{\mathbb{P}}\left( \frac{1}{n} \right).
\label{eq:ri^2}
\end{equation}

\subsection{Proof of Lemma \ref{lemma:Zixx}} \label{proof_lemma:Zixx}
To show \eqref{eq:MiZi}, we decompose it into
\begin{align*}
\frac{1}{n}\sum_{i=1}^n M_i Z_i^{\textsc{ssiv}}
= \frac{1}{n}\sum_{i=1}^n (\xi_i + \mu_{r_1,i}) Z_i^{\textsc{ssiv}}
+ \frac{1}{n}\sum_{i=1}^n (Z_i^{\textsc{ssiv}} r_{0,i} + Z_i^{\textsc{ssiv}} r_{1,i} - Z_i^{\textsc{ssiv}} \mu_{r_1,i} ).
\end{align*}
For the first term, we can show by Lemma \ref{lemma:Ziphii} and \eqref{eq:mu_r1i} that 
\begin{align*}
\frac{1}{n}\sum_{i=1}^n Z_i^{\textsc{ssiv}} \xi_i 
=& O_{\mathbb{P}}\left( \sqrt{n} q_n^{\text{pre}} \right)
\text{ and }
\frac{1}{n}\sum_{i=1}^n Z_i^{\textsc{ssiv}} \mu_{r_1,i} 
= O_{\mathbb{P}}\left( \frac{ q_n^{\text{pre}} }{\sqrt{n} q_n^{\text{post}}} \right).
\end{align*}
For the remaining terms, we show them one by one.
For $\frac{1}{n}\sum_{i=1}^n r_{0,i} Z_i^{\textsc{ssiv}}$, the expectation is 
\begin{align*}
E\left[ \frac{1}{n}\sum_{i=1}^n Z_i^{\textsc{ssiv}} r_{0,i} \right] 
~=~& \frac{1}{n(n-1)}\sum_{i=1}^n 
E\left[ \left( \sum_{k\ne i} A_{ik}^\text{pre} (T_k-\pi) \right) \left( \xi_i \sum_{k\ne i} R_{ik} \right) \right] \\
~=~& \frac{1}{n(n-1)}\sum_{i=1}^n 
E\left[ \xi_i  \sum_{k\ne i} A_{ik}^\text{pre} (T_k-\pi) R_{ik}  \right] 
~\le~ C \frac{\min\{q_n^{\text{pre}}, q_n^{\text{post}}\}}{q_n^{\text{post}}}.
\end{align*}
The variance is 
\begin{align*}
\V\left( \frac{1}{n}\sum_{i=1}^n Z_i^{\textsc{ssiv}} r_{0,i} \right)
&= \frac{1}{n^2}\sum_{i=1}^n \V\left( Z_i^{\textsc{ssiv}} r_{0,i} \right) 
+ \frac{1}{n^2}\sum_{i=1}^n \sum_{j\ne i} \textup{Cov}\left( Z_i^{\textsc{ssiv}}r_{0,i}, Z_j^{\textsc{ssiv}} r_{0,j} \right).
\end{align*}
For the diagonal term, we have 
\begin{align*}
& \V\left( Z_i^{\textsc{ssiv}} r_{0,i} \right)
\le E\left[ (Z_i^{\textsc{ssiv}})^2 r_{0,i}^2 \right] 
= E\left[ \left( \sum_{j\ne i} A_{ij}^{\text{pre}} (T_j -\pi)  \right)^2 \xi_i^2 \left( \frac{1}{n-1}\sum_{j\ne i} R_{ij} \right)^2  \right] \\
&= \frac{\xi_i^2}{(n-1)^2}  
E \left[
\left( \sum\limits_{j\ne i} A_{ij}^{\text{pre}} (T_j -\pi)^2 
+ \sum\limits_{(j,k)} A_{ij}^{\text{pre}} (T_j -\pi)   A_{ik}^{\text{pre}} (T_k -\pi) \right) 
\left( 
\sum\limits_{j\ne i} R_{ij}^2 
+ \sum\limits_{(j,k)} R_{ij} R_{ik}   
\right)
\mid T_i, w_i \right] \\
&= \frac{\xi_i^2}{ (n-1)^2}  
E\left[
\begin{array}{c}
\sum\limits_{j\ne i} A_{ij}^{\text{pre}} (T_j -\pi)^2 
\sum\limits_{j\ne i} R_{ij}^2 
+ \sum\limits_{(j,k)} A_{ij}^{\text{pre}} (T_j -\pi)   A_{ik}^{\text{pre}} (T_k -\pi)
\sum\limits_{j\ne i} R_{ij}^2 \\
+ \sum\limits_{j\ne i} A_{ij}^{\text{pre}} (T_j -\pi)^2 
\sum\limits_{(j,k)} R_{ij} R_{ik} 
+ \sum\limits_{(j,k)} A_{ij}^{\text{pre}} A_{ik}^{\text{pre}} (T_j -\pi) (T_k -\pi)
\sum\limits_{(j,k)} R_{ij} R_{ik}
\end{array}
\mid T_i, w_i \right] \\
&= \frac{\xi_i^2 }{ (n-1)^2} 
E\left[
\sum\limits_{(j,k)} A_{ij}^{\text{pre}} (T_j -\pi)^2 R_{ik}^2 
+ \sum\limits_{(j,k)} A_{ij}^{\text{pre}} A_{ik}^{\text{pre}} 
(T_j -\pi) (T_k -\pi)
R_{ij} R_{ik}
\mid T_i, w_i \right] \\
&\le \frac{ \xi_i^2 }{ (n-1)^2}  
\left(
C_1 \frac{n^2 q_n^{\text{pre}} q_n^{\text{post}} }{(q_n^{\text{post}})^2}  
+ C_2 \frac{n^2 \min\{q_n^{\text{pre}}, q_n^{\text{post}} \}^2 }{(q_n^{\text{post}})^2}
\right)
\le C \frac{ q_n^{\text{pre}} }{q_n^{\text{post}}}.
\end{align*}
For the cross terms, we have 
\begin{align*}
& \textup{Cov}\left( Z_i^{\textsc{ssiv}} r_{0,i}, Z_j^{\textsc{ssiv}} r_{0,j} \right) 
= \frac{1}{(n-1)^2} \textup{Cov} \left( \xi_i \sum\limits_{k\ne i} A_{ik}^\text{pre} (T_k-\pi)  \sum\limits_{k\ne i} R_{ik}, \xi_j \sum\limits_{k\ne j} A_{jk}^\text{pre} (T_k-\pi) \sum\limits_{k\ne j} R_{jk} \right) \\
&= \frac{1}{(n-1)^2}
\left[
\begin{array}{c}
\sum\limits_{k\ne i,j} 
\textup{Cov} \left( \xi_i A_{ik}^\text{pre} (T_k-\pi)  R_{ik}, 
\xi_j A_{jk}^\text{pre} (T_k-\pi) R_{jk} \right) \\
+ \sum\limits_{k \ne i,j} 
\textup{Cov} \left( \xi_i A_{ik}^\text{pre} (T_k-\pi) R_{ik}, 
\xi_j A_{jk}^\text{pre} (T_k-\pi) R_{ji} \right) \\
+ \sum\limits_{k\ne i,j}
\textup{Cov} \left( \xi_i A_{ik}^\text{pre} (T_k-\pi) R_{ij}, 
\xi_j A_{jk}^\text{pre} (T_k-\pi) R_{jk} \right) \\
+ \sum\limits_{(k,l)}
\textup{Cov} \left( \xi_i A_{ik}^\text{pre} (T_k-\pi) R_{il}, 
\xi_j A_{jk}^\text{pre} (T_k-\pi) R_{jl} \right)
\end{array}
\right] \\
\le & C_1\frac{n \min\{q_n^{\text{pre}}, q_n^{\text{post}}\}^2 }{n^2 (q_n^{\text{post}})^2}
+ C_2 \frac{ n \min\{q_n^{\text{pre}}, q_n^{\text{post}}\} q_n^{\text{pre}} q_n^{\text{post}} }{n^2 (q_n^{\text{post}})^2}
+ C_3 \frac{ n^2 (q_n^{\text{pre}})^2 (q_n^{\text{post}})^2 }{n^2 (q_n^{\text{post}})^2}
= O\left( \frac{ \min\{q_n^{\text{pre}},q_n^{\text{post}}\}^2  }{ n (q_n^{\text{post}})^2 } \right).
\end{align*}
Therefore, 
\begin{align*}
\V\left( \frac{1}{n}\sum_{i=1}^n Z_i^{\textsc{ssiv}} r_{0,i} \right)
&\le C_1 \frac{ q_n^{\text{pre}} }{ n q_n^{\text{post}} }   
+ C_2 \frac{ \min\{q_n^{\text{pre}}, q_n^{\text{post}}\}^2  }{ n (q_n^{\text{post}})^2 }
\end{align*}
and thus
\begin{equation}
\frac{1}{n}\sum_{i=1}^n Z_i^{\textsc{ssiv}} r_{0,i}
= O_{\mathbb{P}}\left( \frac{\min\{q_n^{\text{pre}}, q_n^{\text{post}}\}}{q_n^{\text{post}}} \right)
+ O_{\mathbb{P}}\left( \frac{ \sqrt{q_n^{\text{pre}}} }{ \sqrt{n q_n^{\text{post}}} }  \right).
\label{eq:Zir0i}
\end{equation}
For $\frac{1}{n}\sum_{i=1}^n Z_i^{\textsc{ssiv}} r_{1,i}$, the expectation is 
\begin{align*}
E\left[ \frac{1}{n}\sum_{i=1}^n Z_i^{\textsc{ssiv}} r_{1,i} \right] 
=& \frac{1}{n(n-1)^2}\sum_{i=1}^n 
E\left[ \left( \sum_{k\ne i} A_{ik}^\text{pre} (T_k-\pi) \right) \left( 
\begin{array}{c}
\sum\limits_{k\ne i} \left( - \frac{U_{ik} W_{ik}}{\theta_i^*(y)^2} + \frac{2\theta_i^*(x)}{\theta_i^*(y)^3}  W_{ik}^2 \right) \\
+ \sum\limits_{(k,l)} \left( - \frac{U_{ik} W_{il}}{\theta_i^*(y)^2} + \frac{2\theta_i^*(x)}{\theta_i^*(y)^3}  W_{ik} W_{il} \right)
\end{array}
\right) \right] \\
=& 
\frac{1}{n(n-1)^2}\sum_{i=1}^n 
E\left[ \sum_{k\ne i} A_{ik}^\text{pre} (T_k-\pi) \left( - \frac{U_{ik} W_{ik}}{\theta_i^*(y)^2} + \frac{2\theta_i^*(x)}{\theta_i^*(y)^3}  W_{ik}^2 \right) 
\right] \\
\le & 
C \frac{\min\{q_n^{\text{pre}}, q_n^{\text{post}}\}}{n (q_n^{\text{post}})^2}.
\end{align*}
The variance is 
\begin{align*}
\V\left( \frac{1}{n}\sum_{i=1}^n Z_i^{\textsc{ssiv}} r_{1,i} \right)
= \frac{1}{n^2}\sum_{i=1}^n \V\left( Z_i^{\textsc{ssiv}} r_{1,i} \right) 
+ \frac{1}{n^2}\sum_{i=1}^n \sum_{j\ne i} \textup{Cov}\left( Z_i^{\textsc{ssiv}} r_{1,i}, Z_j^{\textsc{ssiv}} r_{1,j} \right).
\end{align*}
For the diagonal term, we have
\begin{align*}
& \V\left( Z_i^{\textsc{ssiv}} r_{1,i} \right) 
\le E\left[(Z_i^{\textsc{ssiv}})^2 r_{1,i}^2 \right]
= E
\left[ 
\left( \sum_{j\ne i} A_{ij}^{\text{pre}}(T_j - \pi)^2 
+ \sum_{(j,k)} A_{ij}^{\text{pre}} A_{ik}^{\text{pre}} (T_j - \pi) (T_k - \pi) \right) r_{1,i}^2
 \right] \\
&= \frac{1}{(n-1)^4} 
E\left\{ 
\begin{array}{c}
\sum\limits_{j\ne i} 
A_{ij}^{\text{pre}}(T_j - \pi)^2 
\left[ 
\begin{array}{c}
\left( 
\sum\limits_{j\ne i} 
\left(
- \frac{U_{ij} W_{ij}}{\theta_i^*(y)^2} 
+ \frac{2 \theta_i^*(x)}{\theta_i^*(y)^3} 
W_{ij}^2
\right)
\right)^2 
+ \left( 
\sum\limits_{(j,k)} 
\left(
- \frac{U_{ij} W_{ik}}{\theta_i^*(y)^2} 
+ \frac{2 \theta_i^*(x)}{\theta_i^*(y)^3} 
W_{ij} W_{ik}
\right)
\right)^2 \\
+ 2 \sum\limits_{j\ne i}  
\left( 
- \frac{U_{ij} W_{ij}}{\theta_i^*(y)^2} 
+ \frac{2 \theta_i^*(x)}{\theta_i^*(y)^3} 
W_{ij}^2
\right) 
\sum\limits_{(j,k)} 
\left( 
- \frac{U_{ij} W_{ik}}{\theta_i^*(y)^2} 
+ \frac{2 \theta_i^*(x)}{\theta_i^*(y)^3} 
W_{ij} W_{ik}
\right)
\end{array}
\right]
\end{array}
\right\} \\
&+ \frac{1}{(n-1)^4} 
E\left\{ 
\begin{array}{c}
\sum\limits_{(j,k)} 
A_{ij}^{\text{pre}}(T_j - \pi)
A_{ik}^{\text{pre}}(T_k - \pi)
\left[ 
\begin{array}{c}
\left( 
\sum\limits_{j\ne i} 
\left(
- \frac{U_{ij} W_{ij}}{\theta_i^*(y)^2} 
+ \frac{2 \theta_i^*(x)}{\theta_i^*(y)^3} 
W_{ij}^2
\right)
\right)^2 \\
+ \left( 
\sum\limits_{(j,k)} 
\left(
- \frac{U_{ij} W_{ik}}{\theta_i^*(y)^2} 
+ \frac{2 \theta_i^*(x)}{\theta_i^*(y)^3} 
W_{ij} W_{ik}
\right)
\right)^2 \\
+ 2 \sum\limits_{j\ne i}  
\left( 
- \frac{U_{ij} W_{ij}}{\theta_i^*(y)^2} 
+ \frac{2 \theta_i^*(x)}{\theta_i^*(y)^3} 
W_{ij}^2
\right) 
\sum\limits_{(j,k)} 
\left( 
- \frac{U_{ij} W_{ik}}{\theta_i^*(y)^2} 
+ \frac{2 \theta_i^*(x)}{\theta_i^*(y)^3} 
W_{ij} W_{ik}
\right)
\end{array}
\right]
\end{array}
\right\} \\
&= \frac{1}{(n-1)^4} 
E\left\{ 
\begin{array}{c}
\sum\limits_{j\ne i} 
A_{ij}^{\text{pre}}(T_j - \pi)^2 
\left[ 
\begin{array}{c}
\sum\limits_{j\ne i} 
\left(
- \frac{U_{ij} W_{ij}}{\theta_i^*(y)^2} 
+ \frac{2 \theta_i^*(x)}{\theta_i^*(y)^3} W_{ij}^2
\right)^2 \\
+ \sum\limits_{(j,k)} 
\left(
- \frac{U_{ij} W_{ij}}{\theta_i^*(y)^2} 
+ \frac{2 \theta_i^*(x)}{\theta_i^*(y)^3} W_{ij}^2
\right)
\left(
- \frac{U_{ik} W_{ik}}{\theta_i^*(y)^2} 
+ \frac{2 \theta_i^*(x)}{\theta_i^*(y)^3} W_{ik}^2
\right)\\
+ \sum\limits_{(j,k)} 
\left(
- \frac{U_{ij} W_{ik}}{\theta_i^*(y)^2} 
+ \frac{2 \theta_i^*(x)}{\theta_i^*(y)^3} 
W_{ij} W_{ik}
\right)^2 \\
+ 2 \sum\limits_{k\ne j}  
\left( 
- \frac{U_{ik} W_{ik}}{\theta_i^*(y)^2} 
+ \frac{2 \theta_i^*(x)}{\theta_i^*(y)^3} W_{ik}^2
\right) 
\left( 
- \frac{U_{ij} W_{ik}}{\theta_i^*(y)^2} 
+ \frac{2 \theta_i^*(x)}{\theta_i^*(y)^3} 
W_{ij} W_{ik}
\right)
\end{array}
\right]
\end{array}
\right\} \\
&+ \frac{1}{(n-1)^4} 
E\left\{ 
\begin{array}{c}
\sum\limits_{(j,k)} 
A_{ij}^{\text{pre}}(T_j - \pi)
A_{ik}^{\text{pre}}(T_k - \pi)
\left[ 
\begin{array}{c}
\left(
- \frac{U_{ij} W_{ij}}{\theta_i^*(y)^2} 
+ \frac{2 \theta_i^*(x)}{\theta_i^*(y)^3} W_{ij}^2
\right)
\left(
- \frac{U_{ik} W_{ik}}{\theta_i^*(y)^2} 
+ \frac{2 \theta_i^*(x)}{\theta_i^*(y)^3} W_{ik}^2
\right)\\
+ \left(
- \frac{U_{ij} W_{ik}}{\theta_i^*(y)^2} 
+ \frac{2 \theta_i^*(x)}{\theta_i^*(y)^3} W_{ij} W_{ik}
\right)^2 \\
+ \sum\limits_{l \ne j,k} 
\left(
- \frac{U_{il} W_{ij}}{\theta_i^*(y)^2} 
+ \frac{2 \theta_i^*(x)}{\theta_i^*(y)^3} W_{il} W_{ij}
\right)
\left(
- \frac{U_{il} W_{ik}}{\theta_i^*(y)^2} 
+ \frac{2 \theta_i^*(x)}{\theta_i^*(y)^3} W_{il} W_{ik}
\right)\\
+ 2 \sum\limits_{l\ne i}  
\left( 
- \frac{U_{il} W_{il}}{\theta_i^*(y)^2} 
+ \frac{2 \theta_i^*(x)}{\theta_i^*(y)^3} W_{il}^2
\right) 
\left( 
- \frac{U_{ij} W_{ik}}{\theta_i^*(y)^2} 
+ \frac{2 \theta_i^*(x)}{\theta_i^*(y)^3} 
W_{ij} W_{ik}
\right)
\end{array}
\right]
\end{array}
\right\} \\
&\le C_1 \frac{n^3 q_n^\text{pre} (q_n^{\text{post}})^2}{n^4 (q_n^{\text{post}})^4} + C_2 \frac{n^3 \min\{q_n^{\text{post}}, q_n^\text{pre}\}^2 q_n^{\text{post}} }{n^4 (q_n^{\text{post}})^4} 
= O\left( \frac{ q_n^\text{pre} }{n (q_n^{\text{post}})^2} \right). 
\end{align*}
For the cross terms, we have 
\begin{align*}
& \textup{Cov}\left( Z_i^{\textsc{ssiv}} r_{1,i}, Z_j^{\textsc{ssiv}} r_{1,j} \right) \\
&= \frac{1}{(n-1)^4} \\
& \left[
\begin{array}{c}
\textup{Cov}\left( \sum\limits_{k\ne i} A_{ik}^\text{pre} (T_k-\pi)  \sum\limits_{k\ne i} \left( - \frac{U_{ik} W_{ik}}{\theta_i^*(y)^2} + \frac{2\theta_i^*(x)}{\theta_i^*(y)^3}  W_{ik}^2 \right), \sum\limits_{k\ne j} A_{jk}^\text{pre} (T_k-\pi) \sum\limits_{k\ne j} \left( - \frac{U_{jk} W_{jk}}{\theta_j^*(y)^2} + \frac{2\theta_j^*(x)}{\theta_j^*(y)^3}  W_{jk}^2 \right) \right) \\
+ \textup{Cov}\left( \sum\limits_{k\ne i} A_{ik}^\text{pre} (T_k-\pi)  \sum\limits_{k\ne i} \left( - \frac{U_{ik} W_{ik}}{\theta_i^*(y)^2} + \frac{2\theta_i^*(x)}{\theta_i^*(y)^3}  W_{ik}^2 \right), \sum\limits_{k\ne j} A_{jk}^\text{pre} (T_k-\pi) \sum\limits_{(k,l)} \left( - \frac{U_{jk} W_{jl}}{\theta_j^*(y)^2}  + \frac{2\theta_j^*(x)}{\theta_j^*(y)^3}  W_{jk} W_{jl} \right) \right) \\
+ \textup{Cov}\left( \sum\limits_{k\ne i} A_{ik}^\text{pre} (T_k-\pi)  \sum\limits_{(k,l)} \left( - \frac{U_{ik} W_{il}}{\theta_i^*(y)^2}  + \frac{2\theta_i^*(x)}{\theta_i^*(y)^3}  W_{ik} W_{il} \right), \sum\limits_{k\ne j} A_{jk}^\text{pre} (T_k-\pi) \sum\limits_{k\ne j} \left( - \frac{U_{jk} W_{jk}}{\theta_j^*(y)^2} + \frac{2\theta_j^*(x)}{\theta_j^*(y)^3}  W_{jk}^2 \right) \right) \\
+ \textup{Cov}\left( \sum\limits_{k\ne i} A_{ik}^\text{pre} (T_k-\pi)  \sum\limits_{(k,l)} \left( - \frac{U_{ik} W_{il}}{\theta_i^*(y)^2}  + \frac{2\theta_i^*(x)}{\theta_i^*(y)^3}  W_{ik} W_{il} \right), \sum\limits_{k\ne j} A_{jk}^\text{pre} (T_k-\pi) \sum\limits_{(k,l)} \left( - \frac{U_{jk} W_{jl}}{\theta_j^*(y)^2}  + \frac{2\theta_j^*(x)}{\theta_j^*(y)^3}  W_{jk} W_{jl} \right) \right)
\end{array}
\right] \\
&= \frac{1}{(n-1)^4} 
\left[
\begin{array}{c}
\sum\limits_{(k,l,h)} 
\textup{Cov}\left( A_{ik}^\text{pre} (T_k-\pi) \left( - \frac{U_{il} W_{il}}{\theta_i^*(y)^2} + \frac{2\theta_i^*(x)}{\theta_i^*(y)^3}  W_{ik}^2 \right), 
A_{jk}^\text{pre} (T_k-\pi) 
\left( - \frac{U_{jh} W_{jh}}{\theta_j^*(y)^2} + \frac{2\theta_j^*(x)}{\theta_j^*(y)^3}  W_{jh}^2 \right) \right) \\
+ \sum\limits_{(k,l,h)} 
\textup{Cov}\left( A_{ik}^\text{pre} (T_k-\pi) \left( - \frac{U_{il} W_{il}}{\theta_i^*(y)^2} + \frac{2\theta_i^*(x)}{\theta_i^*(y)^3}  W_{il}^2 \right), 
A_{jh}^\text{pre} (T_h-\pi) 
\left( - \frac{U_{jk} W_{jh}}{\theta_j^*(y)^2}  + \frac{2\theta_j^*(x)}{\theta_j^*(y)^3}  W_{jk} W_{jh} \right) \right) \\
+ \sum\limits_{(k,l,h)} 
\textup{Cov}\left( A_{ik}^\text{pre} (T_k-\pi) \left( - \frac{U_{ik} W_{ih}}{\theta_i^*(y)^2}  + \frac{2\theta_i^*(x)}{\theta_i^*(y)^3}  W_{ik} W_{ih} \right), 
A_{jh}^\text{pre} (T_h-\pi) \left( - \frac{U_{jl} W_{jl}}{\theta_j^*(y)^2} + \frac{2\theta_j^*(x)}{\theta_j^*(y)^3}  W_{jl}^2 \right) \right) \\
+ \sum\limits_{(k,l,h)} 
\textup{Cov}\left( A_{ik}^\text{pre} (T_k-\pi) \left( - \frac{U_{il} W_{ih}}{\theta_i^*(y)^2}  + \frac{2\theta_i^*(x)}{\theta_i^*(y)^3}  W_{il} W_{ih} \right), 
A_{jk}^\text{pre} (T_k-\pi) 
\left( - \frac{U_{jl} W_{jh}}{\theta_j^*(y)^2}  + \frac{2\theta_j^*(x)}{\theta_j^*(y)^3}  W_{jl} W_{jh} \right) \right)
\end{array}
\right] \\
&\le C_1 \frac{n^3 (q_n^{\text{pre}})^2 (q_n^{\text{post}})^2}{n^4 (q_n^{\text{post}})^4}
+ C_2 \frac{n^3 q_n^{\text{pre}} \min\{q_n^{\text{pre}},q_n^{\text{post}}\} (q_n^{\text{post}})^2 }{n^4 (q_n^{\text{post}})^4}
+ C_3 \frac{n^3 (q_n^{\text{pre}})^2 (q_n^{\text{post}})^4}{n^4 (q_n^{\text{post}})^4}
= O\left( \frac{ (q_n^{\text{pre}})^2 }{n (q_n^{\text{post}})^2} \right).
\end{align*}
Therefore,
\begin{equation*}
\V\left( \frac{1}{n}\sum_{i=1}^n Z_i^{\textsc{ssiv}} r_{1,i} \right)
\le C_1 \frac{ q_n^{\text{pre}} }{n^2 (q_n^{\text{post}})^2} + C_2 \frac{ (q_n^{\text{pre}})^2 }{n (q_n^{\text{post}})^2}
\end{equation*}
and thus 
\begin{equation}
\frac{1}{n}\sum_{i=1}^n Z_i^{\textsc{ssiv}} r_{1,i}
= O_{\mathbb{P}}\left( \frac{\min\{q_n^{\text{pre}}, q_n^{\text{post}}\}}{n (q_n^{\text{post}})^2} \right)
+ O_{\mathbb{P}}\left( \frac{ \sqrt{q_n^{\text{pre}} } }{n q_n^{\text{post}}} \right)
+ O_{\mathbb{P}}\left( \frac{ q_n^{\text{pre}} }{\sqrt{n} q_n^{\text{post}}} \right).
\label{eq:Zir1i}
\end{equation}
By combining with \eqref{eq:Zir0i} and \eqref{eq:Zir1i},
\begin{align}
\frac{1}{n}\sum_{i=1}^n Z_i^{\textsc{ssiv}}(r_{0,i} + r_{1,i} )
&= O_{\mathbb{P}}\left( \frac{\min\{q_n^{\text{pre}}, q_n^{\text{post}}\}}{q_n^{\text{post}}} \right)
+ O_{\mathbb{P}}\left( \frac{ q_n^{\text{pre}} }{\sqrt{n} q_n^{\text{post}}} \right)
+ O_{\mathbb{P}}\left( \frac{ \sqrt{q_n^{\text{pre}}} }{ \sqrt{n q_n^{\text{post}}} }  \right).
\label{eq:Ziri}
\end{align}
In particular, this implies that 
\begin{equation*}
\frac{1}{n}\sum_{i=1}^n M_i Z_i^{\textsc{ssiv}}
= E\left[ Z_i^{\textsc{ssiv}}(r_{0,i} + r_{1,i} - \mu_{r_1,i}) \right] + O_{\mathbb{P}}\left( \sqrt{n} q_n^{\text{pre}} \right). 
\end{equation*}

\subsection{Proof of Lemma \ref{lemma:ZiMi_alt}}\label{proof:ZiMi_alt}

\fontsize{11}{13.2}
By the decomposition, we have
\[
\frac{1}{n}\sum_{i=1}^n (Z_i^{\text{alt}} - \pi) M_i 
= \frac{1}{n}\sum_{i=1}^n (Z_i^{\text{alt}} - \pi) \xi_i
+ \frac{1}{n}\sum_{i=1}^n (Z_i^{\text{alt}} - \pi) r_{0,i}
+ \frac{1}{n}\sum_{i=1}^n (Z_i^{\text{alt}} - \pi) r_{1,i}.
\]
\underline{First}, by Lemma \ref{lemma:Zi_alt_phii}, we have 
\begin{align}
\frac{1}{n}\sum_{i=1}^n (Z_i^{\text{alt}} - \pi) \xi_i 
= \frac{1}{n}\sum_{i=1}^n (T_i - \pi) E\left[ \frac{ A_{ij}^{\text{pre}} \xi_j }{ q_n^{\text{pre}} g_0(j) } \mid w_i \right]
+ O_{\mathbb{P}}\left( \frac{1}{\sqrt{n} \sqrt{nq_n^{\text{pre}}}} \right)
= O_{\mathbb{P}}\left( \frac{1}{\sqrt{n}} \right).
\label{eq:Zigi}
\end{align}
\underline{Next}, for $\frac{1}{n} \sum_{i=1}^n (Z_i^{\text{alt}} - \pi) r_{0,i}$, we rewrite it as 
\begin{align*}
& \frac{1}{n}\sum_{j=1}^n (T_j - \pi) \sum_{i \ne j}^n \frac{ A_{ij}^{\text{pre}} r_{0,i} }{ N_i } 
= S_1^0 - S_2^0
\end{align*}
where 
\begin{align}
S_1^0
=& \frac{1}{n}\sum_{j=1}^n (T_j - \pi) \sum_{i \ne j}^n \frac{ A_{ij}^{\text{pre}} r_{0,i} }{ (n-1)q_n^{\text{pre}} g_0(i) }, 
\label{eq:S10} \\
S_2^0 
=& \frac{1}{n}\sum_{j=1}^n (T_j - \pi) \sum_{i \ne j}^n \frac{ A_{ij}^{\text{pre}} r_{0,i} \left(N_i - (n-1)q_n^{\text{pre}} g_0(i) \right) }{ (n-1)q_n^{\text{pre}} g_0(i) N_i }. 
\label{eq:S20}
\end{align}
For $S_1^0$, the expectation is 
\begin{align*}
& E\left[ S_1^0 \right]
= E\left[ \frac{1}{n}\sum_{j=1}^n (T_j - \pi) \sum_{i \ne j}^n \frac{ A_{ij}^{\text{pre}} r_{0,i} }{ (n-1)q_n^{\text{pre}} g_0(i) } \right]
= \frac{1}{n}\sum_{j=1}^n E\left[ (T_j - \pi) \sum_{i \ne j}^n \frac{ A_{ij}^{\text{pre}} \xi_i \frac{1}{n-1} R_{ij} }{ (n-1)q_n^{\text{pre}} g_0(i) } \right]
\asymp  \frac{ \min\{q_n^{\text{pre}}, q_n^{\text{post}} \} }{n q_n^{\text{pre}} q_n^{\text{post}}},
\end{align*}
and the variance is 
\begin{align*}
& \V\left[ S_1^0 \right]
= \V\left[ \frac{1}{n}\sum_{j=1}^n (T_j - \pi) \sum_{i \ne j}^n \frac{ A_{ij}^{\text{pre}} r_{0,i} }{ (n-1)q_n^{\text{pre}} g_0(i) } \right] \\
=& \frac{1}{(n-1)^2 (n q_n^{\text{pre}})^2  } 
\left[ 
\begin{array}{c}
\sum\limits_{j=1}^n \sum\limits_{i \ne j}^n 
\V\left( \frac{ (T_j - \pi)  A_{ij}^{\text{pre}} r_{0,i} }{  g_0(i) } \right)
+ \sum\limits_{j=1}^n \sum\limits_{(i_1,i_2)}  
\textup{Cov} \left(  \frac{ (T_j - \pi) A_{i_1j}^{\text{pre}} r_{0,i_1} }{ g_0(i_1) }, 
\frac{ (T_j - \pi) A_{i_2 j}^{\text{pre}} r_{0,i_2} }{ g_0(i_2) } \right) \\
+ \sum\limits_{(j_1,j_2)} 
\sum\limits_{i\ne j_1,j_2}  
\textup{Cov} \left(  \frac{ (T_{j_1} - \pi) A_{ij_1}^{\text{pre}} r_{0,i} }{ g_0(i) }, 
\frac{ (T_{j_2} - \pi) A_{ij_2}^{\text{pre}} r_{0,i} }{ g_0(i) } \right) \\
+ \sum\limits_{(j_1,j_2)} \sum\limits_{(i_1,i_2)}  
\textup{Cov} \left(  \frac{ (T_{j_1} - \pi) A_{i_1j_1}^{\text{pre}} r_{0,i_1} }{ g_0(i_1) }, \frac{ (T_{j_2} - \pi) A_{i_2j_2}^{\text{pre}} r_{0,i_2} }{ g_0(i_2) } \right)
\end{array}
\right]
\end{align*}
where we calculate each term:
\begin{align}
& \V\left( (T_j - \pi) \frac{ A_{ij}^{\text{pre}} r_{0,i} }{  g_0(i) } \right)
\le \frac{1}{(n-1)^2} E\left[ \frac{ (T_j - \pi)^2 }{ g_0(i)^2 } A_{ij}^{\text{pre}} \xi_i^2 \sum\limits_{l\ne i} R_{il}^2 \right]
\asymp 
\frac{ q_n^{\text{post}} }{ n q_n^{\text{post}} }, \notag \\
& \textup{Cov} \left(  \frac{ (T_j - \pi) A_{i_1j}^{\text{pre}} r_{0,i_1} }{ g_0(i_1) }, 
\frac{ (T_j - \pi) A_{i_2 j}^{\text{pre}} r_{0,i_2} }{ g_0(i_2) } \right) \notag \\
=& E\left[ (T_j - \pi)^2  \frac{ A_{i_1j}^{\text{pre}} r_{0,i_1} }{ g_0(i_1) } \frac{ A_{i_2j}^{\text{pre}} r_{0,i_2} }{ g_0(i_2) } \right]
- E\left[ \frac{ (T_j - \pi) A_{i_1j}^{\text{pre}} r_{0,i_1} }{ g_0(i_1) } \right]
E\left[ \frac{ (T_j - \pi) A_{i_2 j}^{\text{pre}} r_{0,i_2} }{ g_0(i_2) }  \right] \notag \\
=& \frac{1}{ (n-1)^2}  
\left( 
E\left[ \frac{(T_j - \pi)^2 }{ g_0(i_1) g_0(i_2)}   
\sum\limits_{l\ne i_1,i_2,j}
\xi_{i_1} \xi_{i_2} A_{i_1j}^\text{pre} A_{i_2j}^\text{pre} R_{i_1l}R_{i_2l}
\right]
- E\left[ \frac{ (T_j - \pi) A_{i_1j}^{\text{pre}} \xi_{i_1} R_{i_1 j} }{ g_0(i_1) }  \right]
E\left[ \frac{ (T_j - \pi) A_{i_2 j}^{\text{pre}} \xi_{i_2} R_{i_2 j} }{ g_0(i_2) }  \right]
\right) \notag \\
\asymp & \frac{(q_n^{\text{pre}})^2}{n} 
- \frac{\min\{q_n^{\text{pre}}, q_n^{\text{post}}\}^2}{n^2 (q_n^{\text{post}})^2}, \notag \\
& \textup{Cov} \left(  \frac{ (T_{j_1} - \pi) A_{ij_1}^{\text{pre}} r_{0,i} }{ g_0(i) }, 
\frac{ (T_{j_2} - \pi) A_{ij_2}^{\text{pre}} r_{0,i} }{ g_0(i) } \right) \notag \\
=& E\left[ (T_{j_1} - \pi) (T_{j_2} - \pi) 
\frac{ A_{ij_1}^{\text{pre}} A_{ij_2}^{\text{pre}} r_{0,i}^2 }{ g_0(i)^2 }  \right]
- E\left[ \frac{ (T_{j_1} - \pi) A_{ij_1}^{\text{pre}} r_{0,i} }{ g_0(i) } \right]
E\left[ \frac{ (T_{j_2} - \pi) A_{ij_2}^{\text{pre}} r_{0,i} }{ g_0(i) } \right] \notag \\
=& \frac{1}{(n-1)^2}
\left(
E\left[ (T_{j_1} - \pi) (T_{j_2} - \pi) 
\frac{ A_{ij_1}^{\text{pre}} A_{ij_2}^{\text{pre}} \xi_i^2 R_{ij_1} R_{ij_2} }{ g_0(i)^2 }  \right]
- E\left[ \frac{ (T_{j_1} - \pi) A_{i j_1}^{\text{pre}} \xi_i R_{ij_1} }{ g_0(i) } \right]
E\left[ \frac{ (T_{j_2} - \pi) A_{i j_2}^{\text{pre}} \xi_i R_{ij_2} }{ g_0(i) } \right]
\right) \notag \\
\asymp & \frac{\min\{q_n^{\text{pre}}, q_n^{\text{post}}\}^2}{n^2 (q_n^{\text{post}})^2}, \notag \\
& \textup{Cov} \left(  \frac{ (T_{j_1} - \pi) A_{i_1j_1}^{\text{pre}} r_{0,i_1} }{ g_0(i_1) }, \frac{ (T_{j_2} - \pi) A_{i_2j_2}^{\text{pre}} r_{0,i_2} }{ g_0(i_2) } \right) \notag \\
=& E\left[ (T_{j_1} - \pi) (T_{j_2} - \pi) 
\frac{ A_{i_1j_1}^{\text{pre}} r_{0,i_1} }{ g_0(i_1) } \frac{ A_{i_2j_2}^{\text{pre}} r_{0,i_2} }{ g_0(i_2) } \right]
- E\left[ \frac{ (T_{j_1} - \pi) A_{i_1j_1}^{\text{pre}} r_{0,i_1} }{ g_0(i_1) } \right]
E\left[ \frac{ (T_{j_2} - \pi) A_{i_2j_2}^{\text{pre}} r_{0,i_2} }{ g_0(i_2) } \right] \notag \\
=& \frac{1}{(n-1)^2}
E\left[ \frac{(T_{j_1} - \pi) (T_{j_2} - \pi) }{ g_0(i_1) g_0(i_2) } \xi_{i_1} \xi_{i_2} A_{i_1j_1}^{\text{pre}} R_{i_1 j_2} A_{i_2j_2}^{\text{pre}} R_{i_2 j_1} \right]
\asymp \frac{ (q_n^{\text{pre}})^2 }{n^2 }.
\label{eq:Aijr02}
\end{align}
By combining these terms in \eqref{eq:Aijr02}, we have
\begin{align}
S_1^0 
= E(S_1^0)
+ o_{\mathbb{P}}\left( \frac{ 1 }{n \max\{q_n^{\text{pre}}, q_n^{\text{post}} \} } \right).
\label{eq:Zir0_S1}
\end{align}
Recall \eqref{eq:AijNi} that
\begin{align*}
\frac{ A_{ij}^{\text{pre}} r_{0,i} (N_i - (n-1)q_n^{\text{pre}} g_0(i)) }{ q_n^{\text{pre}} g_0(i) N_i }
= \frac{ A_{ij}^{\text{pre}} r_{0,i} ((N_i - A_{ij}^{\text{pre}} + 1) - (n-1) q_n^{\text{pre}} g_0(i)) }{ q_n^{\text{pre}} g_0(i) (N_i - A_{ij}^{\text{pre}} + 1) }.
\end{align*}
Define 
\[
B_{ij}
= \frac{ A_{ij}^{\text{pre}} r_{0,i} \left(N_i - (n-1)q_n^{\text{pre}} g_0(i) \right) }{ (n-1) q_n^{\text{pre}} g_0(i) N_i }
\]
and thus 
\[
S_2^0 
= \frac{1}{n}\sum_{j=1}^n (T_j - \pi) \sum_{i \ne j}^n B_{ij}.
\]
We bound it in $L_2$ norm:
\begin{align*}
E\left[ (S_2^0)^2 \right]
= E\left[ \left( \frac{1}{n}\sum_{j=1}^n (T_j - \pi) \sum_{i \ne j}^n B_{ij} \right)^2 \right] 
&= \frac{1}{n^2} \sum_{j=1}^n E\left[ (T_j - \pi)^2 \left(  \sum_{i \ne j}^n B_{ij}^2 + \sum_{(i_1,i_2)} B_{i_1j} B_{i_2j} \right) \right] \\
+ \frac{1}{n^2} \sum_{(j,k)} & E\left[ (T_j - \pi)(T_k - \pi) \left(  \sum_{i \ne j,k}^n B_{ij} B_{ik} + \sum_{(i_1,i_2)}^n B_{i_1j} B_{i_2k} \right)  \right].
\end{align*}
We analyze these terms one by one.
Conditional on $w$, $A_{ij}^{\text{pre}}$ and $\frac{N_i - (n-1)q_n^{\text{pre}} g_0(i)}{N_i}$ are independent.
For $E\left( (T_j - \pi)^2 B_{ij}^2 \right)$, we have 
\begin{align*}
E\left((T_j - \pi)^2 B_{ij}^2\right)
=& E\left[  E\left[ (T_j - \pi)^2 A_{ij}^{\text{pre}} r_{0,i}^2  \mid w\right] 
E\left[ \left( \frac{ (N_i - A_{ij}^{\text{pre}} + 1) - (n-1) q_n^{\text{pre}} g_0(i) }{ (n-1) q_n^{\text{pre}} g_0(i) (N_i - A_{ij}^{\text{pre}} + 1) } \right)^2 \mid w \right] \right] \\
\le & C \frac{1}{(n q_n^{\text{pre}})^3} \frac{1}{(n-1)^2}
E\left[ (T_j - \pi)^2 A_{ij}^\text{pre}  \xi_i^2 \sum_{k\ne i} R_{ik}^2 \right]  
\asymp \frac{ 1 }{n^4 (q_n^{\text{pre}})^2 q_n^{\text{post}} }
\end{align*}
by \eqref{eq:inverBij^2} and analogous argument to \eqref{eq:Aijr02}.
Conditional on $w$, $A_{ij}^{\text{pre}}$, $A_{jk}^{\text{pre}}$, $\frac{N_j - (n-1)q_n^{\text{pre}} g_0(j)}{N_j}$ and $\frac{N_k - (n-1)q_n^{\text{pre}} g_0(k)}{N_k}$ are independent.
For $E( (T_j - \pi)^2 B_{i_1 j} B_{i_2 j})$, we have 
\begin{align*}
& E\left((T_j - \pi)^2 B_{i_1 j} B_{i_2 j} \right) \\
=& E\left[  
E\left[ (T_j - \pi)^2 A_{i_1j}^{\text{pre}} A_{i_2j}^{\text{pre}} r_{0,i_1} r_{0,i_2} \mid w \right]
E\left[ \frac{ N_{i_1} - (n-1) q_n^{\text{pre}} g_0(i_1) }{ (n-1) q_n^{\text{pre}} g_0(i_1) N_{i_1} } \mid w\right]
E\left[ \frac{ N_{i_2} - (n-1) q_n^{\text{pre}} g_0(i_2) }{ (n-1) q_n^{\text{pre}} g_0(i_2) N_{i_2} }\mid w\right]
\right] \\
\le & 
C \frac{\min\{q_n^{\text{pre}}, q_n^{\text{post}} \}^2}{ (n q_n^{\text{post}})^2}
\frac{ 1 }{(n q_n^{\text{pre}})^4 }
\asymp \frac{\min\{q_n^{\text{pre}}, q_n^{\text{post}}\}^2}{ n^6 (q_n^{\text{pre}})^4 (q_n^{\text{post}})^2}
\end{align*}
where by \eqref{eq:inverBij} and analogous argument to \eqref{eq:Aijr02}.
For $E( (T_j - \pi)(T_k-\pi) B_{i j} B_{i k})$, we have 
\begin{align*}
& E( (T_j - \pi)(T_k-\pi) B_{i j} B_{i k}) \\
=& E\left[ E\left[ (T_j - \pi)(T_k-\pi)  A_{ij}^{\text{pre}} A_{ik}^{\text{pre}} r_{0,i}^2 \mid w \right] 
E\left[ \left( \frac{ (N_i - (n-1) q_n^{\text{pre}} g_0(i)) }{ (n-1) q_n^{\text{pre}} g_0(i) N_i } \right)^2  \mid w \right]
\right] \\
\le & C \frac{\min\{q_n^{\text{pre}},q_n^{\text{post}}\}^2}{(n q_n^{\text{post}})^2 } \frac{1}{(n q_n^{\text{pre}})^3}
= C \frac{\min\{q_n^{\text{pre}}, q_n^{\text{post}}\}^2}{n^5 (q_n^{\text{pre}})^3 (q_n^{\text{post}})^2}
\end{align*}
by \eqref{eq:inverBij^2} and analogous argument to \eqref{eq:Aijr02}.
For $E( (T_j - \pi)(T_k-\pi) B_{i_1 j} B_{i_2 k})$, we have 
\begin{align*}
& E( (T_j - \pi)(T_k-\pi) B_{i_1 j} B_{i_2 k}) \\
=& E\left[  
E\left[ (T_j - \pi)(T_k-\pi) A_{i_1j}^{\text{pre}} A_{i_2k}^{\text{pre}} r_{0,i_1} r_{0,i_2}  \mid w \right] 
E\left[ \tfrac{   (N_{i_1} - (n-1) q_n^{\text{pre}} g_0(i_1)) }{ (n-1) q_n^{\text{pre}} g_0(i_1) N_{i_1} } \mid w \right]
E\left[ \tfrac{  (N_{i_2} - (n-1) q_n^{\text{pre}} g_0(i_2)) }{ (n-1) q_n^{\text{pre}} g_0(i_2) N_{i_2} } \mid w \right] \right] \\
=& E\left[ 
E\left[ \frac{(T_j - \pi)(T_k-\pi) }{(n-1)^2} \xi_{i_1} \xi_{i_2} A_{i_1j}^{\text{pre}} A_{i_2k}^{\text{pre}} R_{i_1j} R_{i_2k} \mid T,w \right] 
E\left[ \tfrac{   (N_{i_1} - (n-1) q_n^{\text{pre}} g_0(i_1)) }{ (n-1) q_n^{\text{pre}} g_0(i_1) N_{i_1} } \mid w \right]
E\left[ \tfrac{  (N_{i_2} - (n-1) q_n^{\text{pre}} g_0(i_2)) }{ (n-1) q_n^{\text{pre}} g_0(i_2) N_{i_2} } \mid w \right] \right] \\
\le & C \frac{1}{(n q_n^{\text{pre}})^4} \frac{\min\{q_n^{\text{pre}}, q_n^{\text{post}}\}^2}{n^2 (q_n^{\text{post}})^2}
= C  \frac{\min\{q_n^{\text{pre}}, q_n^{\text{post}}\}^2}{n^6 (q_n^{\text{pre}})^4 (q_n^{\text{post}})^2 }
\end{align*}
by \eqref{eq:inverBij} and analogous argument to \eqref{eq:Aijr02}.
By combining these results, we have 
\begin{align*}
E\left[(S_2^0)^2\right]
\le & 
C_1 \frac{ 1 }{n^4 (q_n^{\text{pre}})^2 q_n^{\text{post}}}
+ C_2 \frac{n \min\{q_n^{\text{pre}}, q_n^{\text{post}}\}^2}{ n^6 (q_n^{\text{pre}})^4 (q_n^{\text{post}})^2}
+ C_3 \frac{n \min\{q_n^{\text{pre}}, q_n^{\text{post}}\}^2}{n^5 (q_n^{\text{pre}})^3 (q_n^{\text{post}})^2} 
+ C_4 \frac{n^2 \min\{q_n^{\text{pre}}, q_n^{\text{post}}\}^2}{n^6 (q_n^{\text{pre}})^4 (q_n^{\text{post}})^2} \\
=& 
O\left( \frac{1}{n^4 (q_n^{\text{pre}})^2 q_n^{\text{post}}} \right) 
+ O\left( \frac{\min\{q_n^{\text{pre}}, q_n^{\text{post}}\}^2}{(n q_n^{\text{pre}})^4 (q_n^{\text{post}})^2} \right)
\end{align*}
and thus
\begin{align}
S_2^0
= O_{\mathbb{P}}\left( \frac{ 1 }{ n^2 q_n^{\text{pre}} \sqrt{q_n^{\text{post}}} } \right) 
+ O_{\mathbb{P}}\left( \frac{\min\{q_n^{\text{pre}}, q_n^{\text{post}}\} }{ n^2 (q_n^{\text{pre}})^2 q_n^{\text{post}}} \right).
\label{eq:Zir0_S2}
\end{align}
Therefore, by combining \eqref{eq:Zir0_S1} and \eqref{eq:Zir0_S2}, we have
\begin{align}
\frac{1}{n} \sum_{i=1}^n (Z_i^{\text{alt}} - \pi) r_{0,i}
= E(S_1^0)
+ o_{\mathbb{P}}\left( \frac{ 1 }{n \max\{q_n^{\text{pre}}, q_n^{\text{post}}\} } \right)
= O_{\mathbb{P}}\left( \frac{ 1 }{n \max\{q_n^{\text{pre}}, q_n^{\text{post}}\} } \right).
\label{eq:Ziroi_alt}
\end{align}
\underline{Next}, for $\frac{1}{n} \sum_{i=1}^n (Z_i^{\text{alt}} - \pi) r_{1,i}$, we have
\begin{align*}
\frac{1}{n} \sum_{i=1}^n (Z_i^{\text{alt}} - \pi) r_{1,i}
= \frac{1}{n}\sum_{j=1}^n (T_j - \pi) \sum_{i \ne j}^n \frac{ A_{ij}^{\text{pre}} r_{1,i} }{ N_i } 
= S_1^1 - S_2^1
\end{align*}
where 
\begin{align*}
S_1^1
=& \frac{1}{n }\sum_{j=1}^n (T_j - \pi) \sum_{i \ne j}^n \frac{ A_{ij}^{\text{pre}} r_{1,i} }{ (n-1) q_n^{\text{pre}} g_0(i) }, \\
S_2^1 
=& \frac{1}{n }\sum_{j=1}^n (T_j - \pi) \sum_{i \ne j}^n \frac{ A_{ij}^{\text{pre}} r_{1,i} (N_i - (n-1)q_n^{\text{pre}} g_0(i)) }{ (n-1) q_n^{\text{pre}} g_0(i) N_i }. 
\end{align*}
Again, for $S_1^1$, the expectation is 
\begin{align*}
E\left[ S_1^1 \right]
=& E\left[ \frac{1}{n }\sum_{j=1}^n (T_j - \pi) \sum_{i \ne j}^n \frac{ A_{ij}^{\text{pre}} r_{1,i} }{ (n-1) q_n^{\text{pre}} g_0(i) } \right] \\
=& \frac{1}{n (n-1)^3 q_n^{\text{pre}}} \sum_{j=1}^n 
E\left[ (T_j - \pi) \sum_{i \ne j}^n \frac{ A_{ij}^{\text{pre}}  }{   g_0(i) } \left( - \frac{U_{ij} W_{ij}}{\theta_i^*(y)^2} + \frac{2\theta_i^*(x)}{\theta_i^*(y)^3}  W_{ij}^2 \right) \right] 
\asymp \frac{ \min\{q_n^{\text{pre}}, q_n^{\text{post}}\} }{n^2 q_n^{\text{pre}} (q_n^{\text{post}})^2},
\end{align*}
and the variance is 
\begin{align*}
& \V(S_1^1)
= \V\left( \frac{1}{n }\sum_{j=1}^n (T_j - \pi) \sum_{i \ne j}^n \frac{ A_{ij}^{\text{pre}} r_{1,i} }{ (n-1) q_n^{\text{pre}} g_0(i) } \right) \\
=& \frac{1}{ (n-1)^2 (n q_n^{\text{pre}})^2 }
\left[ 
\begin{array}{c}
\sum\limits_{j=1}^n \sum\limits_{i \ne j}^n  
\V\left( \frac{ (T_j - \pi) 
 A_{ij}^{\text{pre}} r_{1,i} }{  g_0(i) } \right)
+ \sum\limits_{j=1}^n \sum\limits_{(i_1,i_2)}
\textup{Cov}\left( \frac{ (T_j - \pi) 
 A_{i_1j}^{\text{pre}} r_{1,i_1} }{  g_0(i_1) }, \frac{ (T_j - \pi) 
 A_{i_2j}^{\text{pre}} r_{1,i_2} }{  g_0(i_2) } \right) \\
 + \sum\limits_{(j_1,j_2)} \sum\limits_{i\ne j_1,j_2}
\textup{Cov}\left( \frac{ (T_{j_1} - \pi) 
 A_{ij_1}^{\text{pre}} r_{1,i} }{  g_0(i) }, \frac{ (T_{j_2} - \pi) 
 A_{ij_2}^{\text{pre}} r_{1,i} }{  g_0(i) } \right) \\
 + \sum\limits_{(j_1,j_2)} \sum\limits_{(i_1,i_2)}
\textup{Cov}\left( \frac{ (T_{j_1} - \pi) 
 A_{i_1j_1}^{\text{pre}} r_{1,i_1} }{  g_0(i_1) }, \frac{ (T_{j_2} - \pi) 
 A_{i_2j_2}^{\text{pre}} r_{1,i_2} }{  g_0(i_2) } \right)
\end{array}
\right] \\
=& \frac{1}{ (n-1)^2 (n q_n^{\text{pre}})^2 }
\left( S_{11}^1 + S_{12}^1 + S_{13}^1 + S_{14}^1 \right).
\end{align*}
For $S_{11}^1$, we have 
\begin{align}
& \V\left( \frac{ (T_j - \pi) 
 A_{ij}^{\text{pre}} r_{1,i} }{  g_0(i) } \right)
\le E\left[ \frac{ (T_j - \pi)^2 }{g_0(i)^2} A_{ij}^\text{pre} r_{1,i}^2 \right] \notag \\
&= \frac{1}{(n-1)^4} 
E\left[ 
\begin{array}{c}
\frac{ (T_j - \pi)^2 }{g_0(i)^2}
\left(
\begin{array}{c}
A_{ij}^\text{pre}  
\sum\limits_{k\ne i} 
\left( 
- \frac{U_{ik} W_{ik} }{\theta_i^*(y)^2} 
+ \frac{2 \theta_i^*(x)}{\theta_i^*(y)^3} W_{ik}^2
\right)^2 \\
+ A_{ij}^\text{pre}  
\sum\limits_{(k,l)}
\left( 
- \frac{U_{ik} W_{ik} }{\theta_i^*(y)^2} 
+ \frac{2 \theta_i^*(x)}{\theta_i^*(y)^3} W_{ik}^2
\right)
\left( 
- \frac{U_{il} W_{il} }{\theta_i^*(y)^2} 
+ \frac{2 \theta_i^*(x)}{\theta_i^*(y)^3} W_{il}^2
\right) \\
+ A_{ij}^\text{pre} 
\sum\limits_{(k,l)} 
\left(
- \frac{U_{ik} W_{il} }{\theta_i^*(y)^2} 
+ \frac{2 \theta_i^*(x)}{\theta_i^*(y)^3} W_{ik} W_{il}
\right)^2 \\
+ 2 A_{ij}^\text{pre} 
\sum\limits_{k\ne i,j} 
\left( 
- \frac{U_{ik} W_{ik} }{\theta_i^*(y)^2} 
+ \frac{2 \theta_i^*(x)}{\theta_i^*(y)^3} W_{ik}^2
\right) 
\left( 
- \frac{U_{ik} W_{ij} }{\theta_i^*(y)^2} 
+ \frac{2 \theta_i^*(x)}{\theta_i^*(y)^3} 
W_{ik} W_{ij}
\right)
\end{array}
\right)
\end{array}
\right] \notag \\
& \asymp \frac{q_n^{\text{pre}}}{(n q_n^{\text{post}})^2}.
\label{eq:A_ijr1}
\end{align}
For $S_{12}^1$, we have 
\begin{align}
& \textup{Cov}\left( \frac{ (T_j - \pi) 
A_{i_1j}^{\text{pre}} r_{1,i_1} }{  g_0(i_1) }, \frac{ (T_j - \pi) 
A_{i_2j}^{\text{pre}} r_{1,i_2} }{  g_0(i_2) } \right) \notag \\
=& E\left[ \frac{(T_j-\pi)^2}{g_0(i_1) g_0(i_2)} A_{i_1j}^{\text{pre}} A_{i_2 j}^{\text{pre}} r_{1,i_1} r_{1,i_2}
\right]
- E\left[ \frac{ (T_j - \pi) 
A_{i_1j}^{\text{pre}} r_{1,i_1} }{  g_0(i_1) } \right]
E\left[ \frac{ (T_j - \pi) 
A_{i_2j}^{\text{pre}} r_{1,i_2} }{  g_0(i_2) } \right] \notag \\
=& \frac{1}{(n-1)^4} 
E\left[ 
\begin{array}{c}
\frac{(T_j-\pi)^2}{g_0(i_1) g_0(i_2)}
\left(
\begin{array}{c}
A_{i_1j}^\text{pre} A_{i_2j}^\text{pre}
\sum\limits_{(l,h)} 
\left( 
- \frac{U_{i_1l} W_{i_1l} }{\theta_{i_1}^*(y)^2} 
+ \frac{2 \theta_{i_1}^*(x)}{\theta_{i_1}^*(y)^3} W_{i_1 l}^2
\right) 
\left( 
- \frac{U_{i_2h} W_{i_2h} }{\theta_{i_2}^*(y)^2} 
+ \frac{2 \theta_{i_2}^*(x)}{\theta_{i_2}^*(y)^3} W_{i_2h}^2
\right) \\
+ A_{i_1j}^\text{pre} A_{i_2j}^\text{pre}
\sum\limits_{(l,h)} 
\left( 
- \frac{U_{i_1l} W_{il} }{\theta_{i_1}^*(y)^2} 
+ \frac{2 \theta_{i_1}^*(x)}{\theta_{i_1}^*(y)^3} W_{i_2 l}^2
\right) 
\left( 
- \frac{U_{{i_2}h} W_{{i_2}j} }{\theta_{i_2}^*(y)^2} 
+ \frac{2 \theta_{i_2}^*(x)}{\theta_{i_2}^*(y)^3} W_{{i_2}h} W_{{i_2}j}
\right) \\
+ A_{i_1j}^\text{pre} A_{{i_2}j}^\text{pre} 
\sum\limits_{l\ne i_1,{i_2}} 
\left( 
- \frac{U_{i_1l} W_{i_1l}}{\theta_{i_1}^*(y)^2}  
+ \frac{2 \theta_{i_1}^*(x)}{\theta_{i_1}^*(y)^3} W_{i_1l}^2
\right) 
\left( 
- \frac{U_{{i_2}l} W_{{i_2}j} }{\theta_{i_2}^*(y)^2} 
+ \frac{2 \theta_{i_2}^*(x)}{\theta_{i_2}^*(y)^3} W_{{i_2}l} W_{{i_2}j}
\right) \\
+ A_{ij}^\text{pre} A_{{i_2}j}^\text{pre}
\sum\limits_{l\ne i_1,{i_2}} 
\left( 
- \frac{U_{i_1l} W_{i_1j} }{\theta_{i_1}^*(y)^2} 
+ \frac{2 \theta_{i_1}^*(x)}{\theta_{i_1}^*(y)^3} U_{i_1l} W_{i_1j}
\right) 
\left( 
- \frac{U_{{i_2}l} W_{{i_2}l}}{\theta_{i_2}^*(y)^2}  
+ \frac{2 \theta_{i_2}^*(x)}{\theta_{i_2}^*(y)^3} W_{{i_2}l}^2
\right) \\
+ A_{i_1j}^\text{pre} A_{{i_2}j}^\text{pre}
\sum\limits_{(l,h)} 
\left( 
- \frac{U_{i_1l} W_{i_1h}}{\theta_{i_1}^*(y)^2}  
+ \frac{2 \theta_{i_1}^*(x)}{\theta_{i_1}^*(y)^3} 
W_{il} W_{ih}
\right) 
\left( 
- \frac{U_{{i_2}l} W_{{i_2}h} }{\theta_{i_2}^*(y)^2} 
+ \frac{2 \theta_{i_2}^*(x)}{\theta_{i_2}^*(y)^3} 
W_{{i_2}l} W_{{i_2}h}
\right) \\
+ A_{i_1j}^\text{pre} A_{{i_2}j}^\text{pre}
\sum\limits_{l\ne i_1,j} 
\left( 
- \frac{U_{i_1l} W_{i_1j}}{\theta_{i_1}^*(y)^2}  
+ \frac{2 \theta_{i_1}^*(x)}{\theta_{i_1}^*(y)^3} 
W_{i_1l} W_{i_1j}
\right) 
\left( 
- \frac{U_{{i_2}l} W_{{i_2}j} }{\theta_{i_2}^*(y)^2} 
+ \frac{2 \theta_{i_2}^*(x)}{\theta_{i_2}*(y)^3} 
W_{{i_2}l} W_{{i_2}j}
\right)
\end{array}
\right)
\end{array}
\right] \notag \\
& - E\left[ \frac{ (T_j - \pi) 
A_{i_1j}^{\text{pre}} r_{1,i_1} }{  g_0(i_1) } \right]
E\left[ \frac{ (T_j - \pi) 
A_{i_2j}^{\text{pre}} r_{1,i_2} }{  g_0(i_2) } \right] \notag \\
&= O\left( \frac{ \min\{q_n^{\text{pre}},q_n^{\text{post}}\}^2 }{n^3 (q_n^{\text{post}})^2} \right) + O\left( \frac{ (q_n^{\text{pre}})^2 }{(n q_n^{\text{post}})^2} \right)
- O\left( \frac{ \min\{q_n^{\text{pre}}, q_n^{\text{post}}\}^2 }{n^4 (q_n^{\text{post}})^4} \right)
= O\left( \frac{ (q_n^{\text{pre}})^2 }{(n q_n^{\text{post}})^2} \right).
\label{eq:AijAkjrirk}
\end{align}
For $S_{13}^1$, we have 
\begin{align}
& \textup{Cov}\left( \frac{ (T_{j_1} - \pi) 
 A_{ij_1}^{\text{pre}} r_{1,i} }{  g_0(i) }, \frac{ (T_{j_2} - \pi) 
 A_{ij_2}^{\text{pre}} r_{1,i} }{  g_0(i) } \right) \notag \\
 =& E\left[ \frac{(T_{j_1} - \pi) (T_{j_2} - \pi) }{ g_0(i)^2 } A_{i{j_1}}^{\text{pre}} A_{i{j_2}}^{\text{pre}} r_{1,i}^2 \right] 
 - E\left[ \frac{ (T_{j_1} - \pi) 
 A_{ij_1}^{\text{pre}} r_{1,i} }{  g_0(i) } \right]
 E\left[ \frac{ (T_{j_2} - \pi) 
 A_{ij_2}^{\text{pre}} r_{1,i} }{  g_0(i) } \right]
 \notag \\
=& \frac{1}{(n-1)^4} 
E\left[ 
\begin{array}{c}
\frac{(T_{j_1} - \pi) (T_{j_2} - \pi)}{ g_0(i)^2 } 
A_{i{j_1}}^{\text{pre}} A_{i{j_2}}^{\text{pre}} \left( 
\begin{array}{c}
\left( - \frac{U_{i{j_1}} W_{i{j_1}}}{\theta_i^*(y)^2} + \frac{2\theta_i^*(x)}{\theta_i^*(y)^3}  W_{i{j_1}}^2 \right)
\left( - \frac{U_{i{j_2}} W_{i{j_2}}}{\theta_i^*(y)^2} + \frac{2\theta_i^*(x)}{\theta_i^*(y)^3}  W_{i{j_2}}^2 \right) \\
+ 2 \sum\limits_{l\ne i} 
\left( - \frac{U_{i{j_1}} W_{il}}{\theta_i^*(y)^2} + \frac{2\theta_i^*(x)}{\theta_i^*(y)^3}  W_{il}^2 \right) 
\left( - \frac{U_{i{j_1}} W_{i{j_2}}}{\theta_i^*(y)^2} + \frac{2\theta_i^*(x)}{\theta_i^*(y)^3}  W_{i{j_1}} W_{i{j_2}} \right) \\
+ \left( - \frac{U_{i{j_1}} W_{i{j_2}}}{\theta_i^*(y)^2} + \frac{2\theta_i^*(x)}{\theta_i^*(y)^3}  W_{i{j_1}} W_{i{j_2}} \right)^2 \\
+ \sum\limits_{l\ne {j_1},{j_2}} 
\left( - \frac{U_{il} W_{i{j_1}}}{\theta_i^*(y)^2} + \frac{2\theta_i^*(x)}{\theta_i^*(y)^3}  W_{il} W_{i{j_1}} \right) 
\left( - \frac{U_{il} W_{i{j_2}}}{\theta_i^*(y)^2} + \frac{2\theta_i^*(x)}{\theta_i^*(y)^3}  W_{il} W_{i{j_2}} \right) 
\end{array}
\right) 
\end{array}
\right] \notag  \\
& - E\left[ \frac{ (T_{j_1} - \pi) 
 A_{ij_1}^{\text{pre}} r_{1,i} }{  g_0(i) } \right]
 E\left[ \frac{ (T_{j_2} - \pi) 
 A_{ij_2}^{\text{pre}} r_{1,i} }{  g_0(i) } \right]
 \notag \\
\asymp 
& \frac{ \min\{q_n^{\text{pre}}, q_n^{\text{post}}\}^2 }{( n q_n^{\text{post}})^3}. 
\label{eq:AijAikr1i}
\end{align}
For $S_{14}^1$, we have
\begin{align}
& \textup{Cov}\left( \frac{ (T_{j} - \pi) 
 A_{i_1j}^{\text{pre}} r_{1,i_1} }{  g_0(i_1) }, \frac{ (T_{k} - \pi) 
 A_{i_2 k}^{\text{pre}} r_{1,i_2} }{  g_0(i_2) } \right)
 \notag \\
&= E\left[ (T_j - \pi) (T_k - \pi) \frac{ A_{i_1j}^{\text{pre}} A_{i_2k}^{\text{pre}} r_{1,i_1} r_{1,i_2} }{  g_0(i_1) g_0(i_2) } \right] 
- E\left[ \frac{ (T_{j} - \pi) 
 A_{i_1j}^{\text{pre}} r_{1,i_1} }{  g_0(i_1) } \right]
 E\left[ \frac{ (T_{k} - \pi) 
 A_{i_2 k}^{\text{pre}} r_{1,i_2} }{  g_0(i_2) } \right]
 \notag \\
&= \frac{1}{(n-1)^4} 
E\left[ 
\begin{array}{c}
\frac{(T_j - \pi) (T_k - \pi)}{ g_0(i_1) g_0(i_2) } A_{i_1j}^{\text{pre}} A_{i_2k}^{\text{pre}} 
\left( 
\begin{array}{c}
\sum\limits_{l\ne i_1, i_2}
\left( - \frac{U_{i_1 j} W_{i_1l}}{\theta_{i_1}^*(y)^2} + \frac{2\theta_{i_1}^*(x)}{\theta_{i_1}^*(y)^3}  W_{i_1 j} W_{i_1l} \right)
\left( - \frac{U_{i_2k} W_{i_2l}}{\theta_{i_2}^*(y)^2} + \frac{2\theta_{i_2}^*(x)}{\theta_{i_2}^*(y)^3}  W_{i_2 k} W_{i_2l} \right) \\
+ \sum\limits_{k_1\ne i_1} 
\left( - \frac{U_{i_1k_1} W_{ik_1}}{\theta_{i_1}^*(y)^2} + \frac{2\theta_{i_1}^*(x)}{\theta_{i_1}^*(y)^3}  W_{i_1k_1}^2 \right)
\left( - \frac{U_{i_2 j} W_{i_2 k}}{\theta_{i_2}^*(y)^2} + \frac{2\theta_{i_2}^*(x)}{\theta_{i_2}^*(y)^3}  W_{i_2 j} W_{i_2 k} \right) \\
+ \left( - \frac{U_{i_1 j} W_{i_1 k}}{\theta_{i_1}^*(y)^2} + \frac{2\theta_{i_1}^*(x)}{\theta_{i_1}^*(y)^3}  W_{i_1 j} W_{i_1 k} \right)
\sum\limits_{k_2\ne i_2} 
\left( - \frac{U_{i_2k_2} W_{i_2k_2}}{\theta_{i_2}^*(y)^2} + \frac{2\theta_{i_2}^*(x)}{\theta_{i_2}^*(y)^3}  W_{i_2k_2}^2 \right) 
\end{array}
\right) 
\end{array}
\right] \notag \\
& \asymp 
\frac{\min\{q_n^{\text{pre}}, q_n^{\text{post}}\}^2 n (q_n^{\text{post}})^2 }{n^4 (q_n^{\text{post}})^4}
+ \frac{\min\{q_n^{\text{pre}}, q_n^{\text{post}}\} n q_n^{\text{pre}} (q_n^{\text{post}})^2 }{n^4 (q_n^{\text{post}})^4} \notag \\
& \asymp
\frac{\min\{q_n^{\text{pre}}, q_n^{\text{post}}\}  q_n^{\text{pre}}  }{n^3 (q_n^{\text{post}})^2}.
\label{eq:AijAikr1r1}
\end{align}
By combining these results, we have 
\begin{align*}
V(S_1^1)
\asymp &
\frac{1}{ (n-1)^2 (n q_n^{\text{pre}})^2 }
\left( \frac{q_n^{\text{pre}}}{(q_n^{\text{post}})^2} 
+  \frac{ n (q_n^{\text{pre}})^2 }{(q_n^{\text{post}})^2} 
+ \frac{ \min\{q_n^{\text{pre}}, q_n^{\text{post}}\}^2 }{( q_n^{\text{post}})^3}
+ \frac{\min\{q_n^{\text{pre}}, q_n^{\text{post}}\} n q_n^{\text{pre}}  }{ ( q_n^{\text{post}})^2 }
\right)  \\
\asymp & 
\frac{ 1 }{ n^3  ( q_n^{\text{post}})^2 }.
\end{align*}
In particular, this implies that 
\begin{equation}
 S_1^1 = O_{\mathbb{P}}\left( \frac{ 1 }{\sqrt{n} n q_n^{\text{post}}}  \right)
+ O_{\mathbb{P}}\left( \frac{\min\{q_n^{\text{pre}}, q_n^{\text{post}} \} }{ n^2 q_n^{\text{pre}} (q_n^{\text{post}})^2 } \right).   
\label{eq:Zir1_S1}
\end{equation}
Next, for $S_2$,
define $B_{ij}
= \frac{ A_{ij}^{\text{pre}} r_{1,i} (N_i - (n-1)q_n^{\text{pre}} g_0(i)) }{ (n-1) q_n^{\text{pre}} g_0(i) N_i }$.
Then we bound $S_2$ in $L_2$ norm:
\begin{align*}
E[(S_2^1)^2]
=& \frac{1}{n^2  }\sum_{j=1}^n E\left[ (T_j - \pi)^2 \left( \sum_{i \ne j}^n B_{ij}^2 + \sum_{(i_1, i_2)} B_{i_1j} B_{i_2j} \right) \right] \\
&+ \frac{1}{n^2 } \sum_{(j,k)} E\left[ (T_j - \pi)(T_k - \pi) 
\left(  \sum_{i \ne j}^n B_{ij} B_{ik} + \sum_{(i_1,i_2)} B_{i_1j} B_{i_2k} \right) \right].
\end{align*}
For the first term, by \eqref{eq:inverBij^2} and analogous argument to \eqref{eq:A_ijr1}, we have
\begin{align*}
E\left[ (T_j - \pi)^2 B_{ij}^2  \right]
=& E\left[ E\left[ (T_j - \pi)^2 A_{ij}^{\text{pre}} r_{1,i}^2 \mid w\right] E\left[ \left( \frac{ (N_i - (n-1)q_n^{\text{pre}} g_0(i)) }{ (n-1) q_n^{\text{pre}} N_i } \right)^2 \mid w\right] \right] \\
\le & \frac{1}{(n q_n^{\text{pre}})^3}  E\left[ (T_j - \pi)^2 A_{ij}^{\text{pre}} r_{1,i}^2 \right] 
\le C \frac{ q_n^{\text{pre}} }{ n^2 (q_n^{\text{post}})^2} \frac{1}{(n q_n^{\text{pre}})3} 
= C \frac{ 1 }{ n^5 (q_n^{\text{post}})^2 (q_n^{\text{pre}})^2}.
\end{align*}
For the second term, by \eqref{eq:inverBij} and analogous argument to \eqref{eq:AijAkjrirk}, we have
\begin{align*}
& E\left[ (T_j - \pi)^2 B_{i_1j} B_{i_2j} \right] \\ 
=& E\left[   
E\left[ (T_j - \pi)^2 A_{i_1j}^{\text{pre}} A_{i_2j}^{\text{pre}} r_{1,i_1} r_{1,i_2} \mid w \right] E\left[ \frac{ (N_{i_1} - (n-1)q_n^{\text{pre}} g_0(i_1)) }{ (n-1) q_n^{\text{pre}} g_0(i_1) N_{i_1} } \mid w \right] 
E\left[\frac{  (N_{i_2} - (n-1)q_n^{\text{pre}} g_0(i_2)) }{ (n-1) q_n^{\text{pre}} g_0(i_2) N_{i_2} } \mid w \right]  \right] \\
\le & \frac{1}{(n q_n^{\text{pre}})^4} \left( C_1 \frac{\min\{q_n^{\text{pre}}, q_n^{\text{post}}\}^2}{n^3 (q_n^{\text{post}})^2} + C_2 \frac{(q_n^{\text{pre}})^2}{n^2 (q_n^{\text{post}})^2} \right)
\le C \frac{1}{n^6 (q_n^{\text{post}})^2 (q_n^{\text{pre}})^2}.
\end{align*}
For the third term, by \eqref{eq:inverBij^2} and analogous argument to \eqref{eq:AijAikr1i}, we have
\begin{align*}
& E\left[ (T_j - \pi)(T_k-\pi) B_{ij} B_{ik} \right] 
= E\left[ 
E\left[ (T_j - \pi)(T_k-\pi) A_{ij}^{\text{pre}} A_{ik}^{\text{pre}} r_{1,i}^2 \mid w \right]
E\left[ \left( \frac{ N_{i} - (n-1)q_n^{\text{pre}} g_0(i) }{ (n-1) q_n^\text{pre} g_0(i) N_{i} } \right)^2 \mid w \right] 
\right] \\
\le & C \frac{\min\{q_n^{\text{pre}}, q_n^{\text{post}}\}^2  }{(n q_n^{\text{post}})^3} 
\frac{1}{n^3 (q_n^{\text{pre}})^3}
= C \frac{\min\{q_n^{\text{pre}}, q_n^{\text{post}}\}^2  }{n^6 (q_n^{\text{pre}})^3 (q_n^{\text{post}})^3}.
\end{align*}
For the fourth term, by \eqref{eq:inverBij} and analogous argument to \eqref{eq:AijAikr1r1}, we have
\begin{align*}
& E\left[ (T_j - \pi) (T_k - \pi) B_{i_1j} B_{i_2k} \right] \\
=& E\left[ 
E\left[ (T_j - \pi) (T_k - \pi) A_{i_1j}^{\text{pre}} A_{i_2k}^{\text{pre}} r_{1,i_1} r_{1,i_2} \mid w \right] 
E\left[ \frac{ N_{i_1} - (n-1)q_n^{\text{pre}} g_0(i_1) }{ (n-1)q_n^{\text{pre}} g_0(i_1) N_{i_1} } \mid w \right] 
E\left[ \frac{ N_{i_2} - (n-1)q_n^{\text{pre}} g_0(i_2) }{ (n-1)q_n^{\text{pre}} g_0(i_2) N_{i_2} } \mid w \right] \right] \\
\le & \left( C_1 \frac{ \min\{q_n^{\text{pre}}, q_n^{\text{post}}\}^2 }{ n^4 (q_n^{\text{post}})^4 } 
+ C_2 \frac{ \min\{q_n^{\text{pre}}, q_n^{\text{post}}\} n q_n^{\text{pre}} (q_n^{\text{post}})^2 }{ n^4 (q_n^{\text{post}})^4 } \right) \frac{1}{(n q_n^{\text{pre}})^4} \\
=& C_1 \frac{ \min\{q_n^{\text{pre}}, q_n^{\text{post}}\}^2 }{ n^8 q_n^{\text{post}} (q_n^{\text{pre}})^4} 
+ C_2 \frac{ \min\{q_n^{\text{pre}}, q_n^{\text{post}}\} }{ n^7 (q_n^{\text{post}})^2 (q_n^{\text{pre}})^3}.
\end{align*}
By combining these results, we have 
\begin{align*}
E\left[(S_2^1)^2\right]
=& O\left( \frac{1 }{n^5 (q_n^{\text{post}})^2 (q_n^{\text{pre}})^2} \right)
+ O\left( \frac{ \min\{q_n^{\text{pre}}, q_n^{\text{post}}\}^2 }{ n^6 q_n^{\text{post}} (q_n^{\text{pre}})^4 } \right)
\end{align*}
and thus 
\begin{align}
S_2^1
= O_{\mathbb{P}}\left( \frac{ 1 }{ \sqrt{n} n^2 q_n^{\text{pre}} q_n^{\text{post}}} \right)
+ O_{\mathbb{P}}\left( \frac{ \min\{q_n^{\text{pre}}, q_n^{\text{post}}\} }{ n^3 (q_n^{\text{post}})^2 (q_n^{\text{pre}})^2 }  \right).
\label{eq:Zir1_S2}
\end{align}
In particular, combining \eqref{eq:Zir1_S1} and \eqref{eq:Zir1_S2} implies that 
\begin{align}
\frac{1}{n} \sum_{i=1}^n (Z_i^{\text{alt}} - \pi) r_{1,i} 
=& O_{\mathbb{P}}\left( \frac{ 1 }{\sqrt{n} n q_n^{\text{post}}}  \right)
+ O_{\mathbb{P}}\left( \frac{\min\{q_n^{\text{pre}}, q_n^{\text{post}}\} }{ n^2 q_n^{\text{pre}} (q_n^{\text{post}})^2 } \right).
\label{eq:Zir1_alt}
\end{align}
By combining \eqref{eq:Ziroi_alt} and \eqref{eq:Zir1_alt}, we have 
\begin{align}
\frac{1}{n} \sum_{i=1}^n (Z_i^{\text{alt}} - \pi) (r_{0,i} + r_{1,i}) 
=& O_{\mathbb{P}}\left( \frac{ 1 }{n \max\{q_n^{\text{pre}}, q_n^{\text{post}}\} } \right)
+ O_{\mathbb{P}}\left( \frac{ 1 }{\sqrt{n} n q_n^{\text{post}}}  \right).
\label{eq:eq:Zir0ir1i}
\end{align}
Combining \eqref{eq:Zigi} and \eqref{eq:eq:Zir0ir1i} together completes the proof.

Also, we can show that by Lemma \ref{lemma:Zi_alt_phii} and \eqref{eq:mu_r1i}, 
\begin{align*}
& \frac{1}{n} \sum_{i=1}^n (Z_i^{\text{alt}} - \pi) \mu_{r_1,i} 
= \frac{1}{n} \sum_{i=1}^n (T_i - \pi) E\left[ \frac{A_{ij}^{\text{pre}} \mu_{r_1,j} }{q_n^{\text{pre}} g_0(j)} \mid w_i\right]
+ O_{\mathbb{P}}\left( \frac{ 1 }{\sqrt{n} \sqrt{nq_n^{\text{pre}}} n q_n^{\text{post}}}  \right) 
\end{align*}
and thus 
\begin{align*}
\frac{1}{n} \sum_{i=1}^n (Z_i^{\text{alt}} - \pi) (r_{0,i} + r_{1,i} - \mu_{r_1,i})
= O_{\mathbb{P}}\left( \frac{ 1 }{n \max\{q_n^{\text{pre}}, q_n^{\text{post}}\} } \right)
+ O_{\mathbb{P}}\left( \frac{ 1 }{\sqrt{n} n q_n^{\text{post}}}  \right).
\end{align*}

\end{appendix}

\end{document}